\documentclass[10pt,twocolumn,superscriptaddress,aps,amsmath,amssymb,citeautoscript,floatfix,prx]{revtex4-2}
\pdfoutput=1
\usepackage[english]{babel}
 
\usepackage{letltxmacro}
\usepackage{latexsym}
\usepackage{booktabs}
\usepackage{siunitx}
\usepackage{multirow}
\usepackage{algpseudocode}
\usepackage{array}
\usepackage{tabularx}

\LetLtxMacro{\ORIGselectlanguage}{\selectlanguage}
\makeatletter
\DeclareRobustCommand{\selectlanguage}[1]{%
  \@ifundefined{alias@\string#1}
    {\ORIGselectlanguage{#1}}
    {\begingroup\edef\x{\endgroup
       \noexpand\ORIGselectlanguage{\@nameuse{alias@#1}}}\x}%
}
\newcommand{\definelanguagealias}[2]{%
  \@namedef{alias@#1}{#2}%
}
\makeatother
 
\makeatletter
\def\maketitle{
\@author@finish
\title@column\titleblock@produce
\suppressfloats[t]}
\makeatother
 
\definelanguagealias{en}{english}
\definelanguagealias{English}{english}
\usepackage{graphicx}
\usepackage{amsmath}
\usepackage{amsfonts}
\usepackage{amssymb}
\usepackage{amsthm}
\usepackage{cancel}
\usepackage{mathtools}
\usepackage{bm}

\usepackage[percent]{overpic}
\usepackage[dvipsnames]{xcolor}
\usepackage{soul}
\usepackage{wasysym}
\usepackage{dsfont}
\usepackage{physics}
\AtBeginDocument{\RenewCommandCopy\qty\SI} 
\usepackage{hyperref}
\hypersetup{colorlinks,allcolors=black}
\usepackage{comment}
\usepackage{enumitem}
\usepackage{textgreek}
\usepackage{tcolorbox}
\usepackage{xcite}
 
\newcounter{alg}
 
\hypersetup{
  colorlinks   = true,
  urlcolor     = blue,
  linkcolor    = blue,
  citecolor    = red
}
\usepackage[mathscr]{euscript}
 
\usepackage{varwidth}
\usepackage[lined,boxed,ruled,norelsize,linesnumbered]{algorithm2e}
\usepackage{verbatim}
\usepackage[normalem]{ulem}
\usepackage{cleveref}

\newtheorem{theorem}{Theorem}

\newtheorem*{lemma*}{Lemma}
\newtheorem{corollary}{Corollary}
\newtheorem*{theorem*}{Theorem}

\theoremstyle{plain}

\theoremstyle{plain}

\theoremstyle{plain}
\newtheorem*{lem*}{\protect\lemmaname}
\theoremstyle{plain}
\newtheorem*{thm*}{\protect\theoremname}
\theoremstyle{plain}

\theoremstyle{plain}

\renewcommand{\thealg}{\arabic{alg}}
\newtcolorbox[use counter=alg,
              crefname={algorithm}{algorithms},
              Crefname={Algorithm}{Algorithms}]
{alg}[2][]{%
  floatplacement=#1,
  float,
  colback=cyan!5!white,
  colframe=cyan!50!black,
  colbacktitle=cyan!85!black,
  fonttitle=\bfseries,
  title=Algorithm~\thealg: #2
}

\definecolor{mycrimson}{HTML}{A60808}

\usepackage{bbm}
\usepackage{tikz-cd}

\newcommand{\C}{\mathbb{C}}
\newcommand{\R}{\mathbb{R}}

\newcommand{\End}{\mathrm{End}}
\newcommand{\Hom}{\mathrm{Hom}}
\newcommand{\Lie}{\mathrm{Lie}}

\newcommand{\id}{\mathrm{id}}
\newcommand{\gl}{\mathfrak{gl}}
\newcommand{\su}{\mathfrak{su}}
\newcommand{\uu}{\mathfrak{u}}
\newcommand{\slc}{\mathfrak{sl}}
\newcommand{\HH}{\mathcal{H}}
\newcommand{\I}{\mathbbm{1}}

\theoremstyle{plain}
\newtheorem{smtheorem}{Theorem}[section]
\newtheorem{smproposition}[smtheorem]{Proposition}
\newtheorem{smlemma}[smtheorem]{Lemma}

\theoremstyle{definition}
\newtheorem{smdefinition}[smtheorem]{Definition}
\newtheorem{smremark}[smtheorem]{Remark}
\newtheorem{smexample}[smtheorem]{Example}
\theoremstyle{plain}

\begin{document}
 
\title{
Covariant Approximate Quantum Codes for Protected Analog Computation
}
 
\author{Mariia Elovenkova}
\altaffiliation{\href{mailto:melovenkova@g.harvard.edu}{melovenkova@g.harvard.edu}}
\affiliation{Department of Physics, Harvard University, Cambridge, MA 02138, USA}

\author{Hong-Ye Hu}
\altaffiliation{\href{mailto:hongyehu.physics@gmail.com}{hongyehu.physics@gmail.com}}
\affiliation{Department of Physics, Harvard University, Cambridge, MA 02138, USA}

\author{Susanne F. Yelin}
\altaffiliation{\href{mailto:syelin@g.harvard.edu}{syelin@g.harvard.edu}}
\affiliation{Department of Physics, Harvard University, Cambridge, MA 02138, USA}
 
\begin{abstract}

Quantum error correction compatible with continuous symmetries is a fundamental problem in quantum information and a possible route to robust analog quantum simulation. 
Because the Eastin-Knill theorem forbids exact codes with continuous transversal symmetries, we construct explicit \(SU(d)\)-covariant approximate codes that exploit permutation symmetry to 
spread logical information uniformly across all physical subsystems. For one-, two-, and three-qudit erasures at known locations, we prove worst-case purified-distance scaling \(O(1/N)\), 
which matches known approximate Eastin-Knill lower bounds up to constants, and we extend the reduced-state analysis to general flagged local noise. Exact permutation symmetry of this kind is available only up to three erased sites; 
for an arbitrary fixed number $k$ of erased sites, we instead show that Haar-random and efficiently sampled encodings achieve worst-case distance $O(\sqrt{k}/N)$ with probability exponentially close to one. 
For single-qudit erasure, we construct an explicit near-optimal decoder from the Petz recovery map. We then use these codes as building blocks for encoded analog dynamics. 
Symmetry-preserving Hamiltonians generate block-structured dynamical Lie algebras implementable transversally, while controlled symmetry-breaking terms serve as non-transversal resources for universal dynamics. 
These results provide explicit non-Abelian covariant codes and a framework for robust analog quantum simulation.

\end{abstract}
\maketitle 
\addtocontents{toc}{\protect\setcounter{tocdepth}{-10}}

\begin{figure*}[htbp]
    \centering
    \includegraphics[width=1\linewidth]{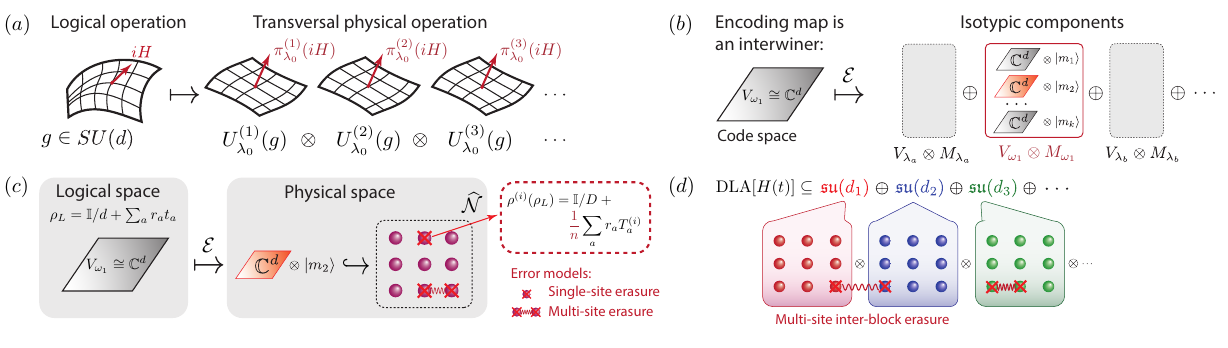}
\caption{
Overview of the covariant encoding construction and its applications.
(a) Covariance relates the logical $SU(d)$ action to a physical transversal action. The logical generator $iH$ is represented physically as a sum of local generators acting on the $n$ physical subsystems, each carrying the same single-site representation $\lambda_0$.
(b) The encoding isometry embeds the logical space $V_{\omega_1}\cong\mathbb C^d$ into the physical Hilbert space. After decomposing the physical space into isotypic components $V_\lambda\otimes M_\lambda$, covariance forces the code to lie inside the component $V_{\omega_1}\otimes M_{\omega_1}$. Thus, choosing the covariant code reduces to choosing a multiplicity vector $\ket m\in M_{\omega_1}$, which selects one embedded copy of the logical representation.
(c) Error protection is analyzed through the complementary channel, which describes the information leaked to the environment. For a single-site erasure, the environment sees the reduced state $\rho^{(i)}(\rho_L)$ of the erased subsystem. In our codes, the logical-state-dependent part of this reduced state is suppressed as the number of physical subsystems grows, so the environment output becomes close to a constant channel.
(d) The same codes can be used block-wise for symmetry-constrained analog simulation. The invariant part of the dynamical Lie algebra acts transversally within each encoded block, while multi-site erasures, including erasures involving different blocks, are treated using the same reduced-state and complementary-channel method.
}
\label{fig:theme}

\end{figure*}

\let\oldaddcontentsline\addcontentsline
\renewcommand{\addcontentsline}[3]{}

\section{Introduction}\label{sec:main_intro}

Digital quantum computing has made substantial progress. Programmable processors have reached regimes that are hard to simulate classically, 
with growing evidence for useful pre-fault-tolerant computations in specific settings~\cite{Arute2019,Kim2023}. 
These experiments do not yet use full quantum error correction, but digital quantum computing has a clear fault-tolerance paradigm: 
encode logical qubits into many physical qubits, extract error information and correct errors repeatedly, and implement protected logical gates. 
Although the overheads remain large~\cite{Fowler2019, Gidney2021}, the conceptual route to long fault-tolerant digital algorithms is in place. 
Analog platforms offer a complementary route to quantum simulation and computation~\cite{Bloch2012,Browaeys2020,Daley2022,Schaefer2020}, with recent experiments preparing strongly correlated states, 
including Fermi-Hubbard systems, in regimes challenging for classical methods~\cite{Xu2025,Mazurenko2017,Shao2024,Chalopin2024}.
 More generally, analog devices implement continuously generated Hamiltonian dynamics, and universal analog computation can be formulated by asking whether the available Hamiltonians generate a 
 sufficiently large dynamical Lie algebra (DLA)~\cite{Tavis1968,DAlessandro2021,Chen2020,Albertini2018,Albertini2020,Wang2016,Wang2012,Zanardi2004,AlbertiniDAlessandro2025,Hu2025}. 
 The promise of analog computation is therefore clear, but the corresponding framework for protecting native continuous-time dynamics is much less developed than in the digital circuit model.

Noise control is central in both settings. Digital fault tolerance is supported by a mature theory and by rapid experimental progress in logical memories and small logical processors, including break-even demonstrations where increasing the number of physical qubits reduces 
logical error rates~\cite{Krinner2022,AcharyaGoogle2025,RyanAnderson2021,daSilvaPaetznick2024,Reichardt2024Tesseract,Bluvstein2024,Reichardt2024NeutralAtom}. 
For analog simulation, one can in principle digitize the target evolution and run it in a protected circuit, but this replaces continuous Hamiltonian evolution by long gate sequences and adds Trotterization
 or phase-estimation costs to the fault-tolerance overhead~\cite{Reiher2017ReactionMechanisms,Kivlichan2020Trotter}. 
 This motivates noise-control strategies that are native to analog dynamics. Recent work has identified noise-stable regimes with quantum advantage~\cite{Trivedi2024AnalogStability}, 
 provided accuracy and hardness guarantees for noisy open-system simulations~\cite{Kashyap2025AnalogOpenSimulation}, and proposed direct error-reduction methods based on Hamiltonian 
 reshaping or rescaling~\cite{Guo2025HamiltonianReshaping}, penalty-Hamiltonian encodings~\cite{Cao2024RobustAnalog}, and continuous syndrome monitoring with feedback~\cite{Atalaya2021ContinuousQEC}. 
 These ideas show that analog noise control need not simply reproduce digital fault tolerance. 
 What is still missing is a general encoding principle that suppresses local noise while preserving the continuous encoded dynamics that make analog platforms natural.

A fundamental obstruction is the Eastin--Knill theorem~\cite{Eastin2009}. In an encoded analog device, the natural fault-tolerant way to implement a continuous logical operation is to realize its generator transversally, 
as a sum of local physical generators across the code block. 
Equivalently, the desired logical dynamics can be viewed as a continuous group of logical transformations, and a transversal implementation realizes the same group on the physical subsystems by a product action. 
In this sense, the transversal implementation defines a continuous symmetry of the encoded system: applying the group action before encoding should be equivalent to encoding 
first and then applying the corresponding physical action. This is the \textit{covariance} condition. 
However, exact finite-dimensional quantum error-correcting codes that correct local errors cannot support nontrivial continuous transversal symmetries, 
as stated in the Eastin--Knill theorem~\cite{Eastin2009}. Thus continuous analog dynamics and exact transversal quantum error correction are in tension. 
One can relax this obstruction by using infinite-dimensional covariant codes~\cite{Hayden2021}, by keeping exact finite-dimensional codes while restricting the correctable 
errors~\cite{Gupta2024}, or by allowing a controlled recovery error, leading to approximate quantum error correction (AQEC)~\cite{schumacher2002approximate, Beny2010}. 
We follow the third route: finite-dimensional covariant codes with approximate correctability.

Covariant codes can support continuous groups of transversal logical operations at the cost of controlled approximation error~\cite{Faist2020}. 
Quantitative trade-offs between covariance and correctability are now known, including information-theoretic lower bounds for covariant codes under erasure noise~\cite{Faist2020,Zhou2021,Faist2023,Kong2022,Yang2022}. 
Near-optimal scaling has been obtained from symmetric random constructions, and explicit examples have been developed for $U(1)$ covariance, W-state-type codes, and related questions in encoding complexity~\cite{Lin2025,Faist2020,Alexander2025,Yi2024}. 
These works establish the basic principles and limits of covariant AQEC, but many constructions are randomized, use reference frames or ancilla systems~\cite{Kong2022,Yang2022}, or address specific code families. 
Covariant approximate quantum error-correcting codes (AQECCs) are also relevant beyond computation. They model the protection of quantum information carrying a physical symmetry charge; 
they arise naturally in holographic quantum error correction and in discussions of approximate global symmetries in quantum gravity~\cite{Almheiri2015,Pastawski2015,HarlowOoguri2019}; 
and they are closely connected to quantum reference frames and metrology, where the same constraints can be expressed through reference-frame accuracy or quantum Fisher information~\cite{Hayden2021,Woods2020,Kubica2021}. 
Explicit non-Abelian covariant AQECCs therefore provide concrete models for studying how symmetry, locality, and approximate correctability can coexist.

Covariant AQEC is therefore a natural candidate for protecting analog dynamics. Rather than digitizing the evolution into a long protected circuit, one chooses an encoded subspace 
that carries the required continuous evolution and suppresses the information that local noise leaks to the environment. 
Approximate correctability is what makes this possible in finite dimensions: it evades the Eastin--Knill obstruction while keeping transversal 
implementations of the continuous symmetry generators. This matters beyond analog computation, because the same trade-off between symmetry and correctability constrains any fault-tolerant scheme that has to respect a continuous symmetry. 
Explicit and flexible constructions, however, remain scarce. Many existing approaches assume Abelian or otherwise restricted symmetries, rely on randomization, or treat only single-site noise, whereas physical settings usually involve non-Abelian 
symmetries and errors acting on several subsystems or with a more general local structure. This leads to the question we address here: can one construct finite-dimensional covariant AQECCs for non-Abelian symmetries such as $SU(d)$, with controllable performance under general multi-qudit noise models?

We answer this question positively. As summarized in Section~\ref{sec:main_summary}, we construct explicit finite-dimensional $SU(d)$-covariant AQECCs in which the logical system transforms as the fundamental representation and the physical system consists of many local $SU(d)$ degrees of freedom. 
The main idea is to use the freedom available inside the physical representation space to impose permutation symmetries on the code. These symmetries spread the logical information nearly uniformly across the physical subsystems, so that the reduced states seen by local noise depend only weakly on the logical input. 
We first illustrate this mechanism for $SU(2)$ in Section~\ref{sec:main_su2}, and then develop the general $SU(d)$ construction in Section~\ref{sec:main_sud}. 
This gives $O(1/n)$ worst-case purified-distance scaling for one-, two-, and three-qudit flagged erasures. For single-qudit erasure this scaling is near-optimal, matching known covariant-code lower bounds up to constants~\cite{Faist2020,Zhou2021}. 
Making all $k$-site reduced states exactly equal requires a permutation symmetry that can carry any $k$ sites onto any other $k$ sites, and no such symmetry is available once $k$ exceeds three, so this route stops at three erased sites. In Section~\ref{subsec:main_multisite_erasure_noise} we reach an arbitrary fixed number $k$ of erased sites by replacing exact symmetry with typicality. 
Covariance alone leaves a large family of admissible encodings, and we show that a Haar-random member of this family is, with probability exponentially close to one in $n$, simultaneously good for all $\binom nk$ erased sets, with worst-case distance $O(\sqrt{k(d^2-1)}/n)$.
Read as a statement about covariant codes as a whole, this says that the encodings leaking an atypically large amount of logical information to some erased subsystem form an exponentially small fraction of the family. 
Typicality by itself does not produce a code one can write down, because specifying a Haar-random encoding takes exponentially many parameters, and sampling one costs as many random bits. We therefore narrow the search to a finite subfamily of encodings, each generated from a short seed.
Sampling from this subfamily uses $O(n\log d)$ random bits, every parameter of the sampled encoding is computable in $\operatorname{poly}(n,\log d)$ time, and the resulting code meets the same $O(\sqrt{k(d^2-1)}/n)$ bound with probability exponentially close to one. The same reduced-state 
method also gives general $O_{d,k}(n^{-1/2})$ bounds for arbitrary flagged local noise. For single-qudit erasure, we additionally give an explicit near-optimal decoder based on the Petz recovery map~\cite{Petz1988,Barnum2002,Ng2010}. In Section~\ref{sec:main_covar_analog} we use these codes as building blocks for analog quantum simulation. 
A simulator does not need every unitary on its Hilbert space, since the evolutions it can reach form the dynamical Lie algebra generated by its drift and control Hamiltonians, which is typically much smaller than the algebra of all Hamiltonians on that system. We therefore build an encoding covariant only with respect to this smaller algebra, so that every Hamiltonian the simulator can reach is implemented transversally 
on the physical qudits. When the available Hamiltonians share a site-permutation symmetry, the algebra splits into blocks, and we encode each block separately, reusing the codes above together with known $U(1)$-covariant constructions~\cite{Lin2025}. The same $1/n$ protection persists for the two-qudit erasure models analyzed here, including erasures that involve qudits from different blocks.
Finally, in Section~\ref{sec:main_univ_analog}, we explain how controlled symmetry-breaking Hamiltonians can be used as non-transversal resource operations to enlarge the protected symmetry-preserving dynamics to universal analog control.

\section{Summary of Main Results}\label{sec:main_summary}
 
The following sections derive the main results using a representation-theoretical framework. 
This section summarizes them with minimal use of that formalism, for readers interested in the formulas and bounds rather than the derivations. The subsections are presented in the same order as the corresponding sections of the paper.
 
\subsection{Approximate error correction framework}\label{subsec:summary_framework}
 
A code is an isometry $\mathcal{V}$ embedding of a small logical Hilbert space $\mathcal H_L$ into a large physical Hilbert space $\mathcal H_P=A_1\otimes\cdots\otimes A_n$ of $n$ qudits, $\mathcal E(\rho_L):=\mathcal{V}\rho_L\mathcal{V}^\dagger$. 
We say the code $\varepsilon$-approximately protects against a noise channel $\mathcal N$ if some recovery channel $\mathcal R$ can undo the noise approximately,
$$
d(\mathcal R\circ\mathcal N\circ\mathcal E,\mathrm{id}_L)\le \varepsilon,
$$
where $d$ is a fidelity-based distance between channels, Eq.~\eqref{eq:distance_def}. For a family of codes indexed by $n$ (usually being the number of physical qudits), the question is how fast $\varepsilon$ shrinks as $n$ grows. 
The main noise model is erasure at known locations, i. e. with probability $p_S$, the qudits in a set $S$ are discarded and replaced by a blank state, with a flag recording which qudits were lost, Eq.~\eqref{eq:erasure_noise}; we also treat the more general case of an arbitrary, possibly unknown error acting on a known set of qudits, Eq.~\eqref{eq:arbitrary_multiqudit_noise}.
 
A convenient way to think about correctability, due to \citet{Beny2010}, is to ask what an environment having access only to the discarded qudits could learn about the logical state. If that environment's state is essentially the same no matter what was encoded, the logical information was never really leaked, and a good recovery map is guaranteed to exist, Eq.~\eqref{eq:beny}. 
The relevant object is therefore the reduced state of the encoded qudits seen by this observer,
$$
\rho^{(S)}(\rho_L):=\operatorname{Tr}_{\hat S}\!\left(\mathcal{V}\rho_L\mathcal{V}^\dagger\right),
$$
and most of the technical work in the paper is a calculation of $\rho^{(S)}(\rho_L)$ for small sets $S$ of one, two, or three qudits, followed by an estimate of how strongly it depends on $\rho_L$.
 
The approximate codes we build are also required to satisfy a precise covariance condition with respect to a continuous symmetry $SU(d)$ (or a product of such groups). 
Let $U_L(g)$ denote the action of $g\in SU(d)$ on the logical space, and let $U_P(g)$ denote the transversal physical action obtained by applying one fixed local rotation to every qudit at once, 
Eq.~\eqref{eq:group_phys_space_rep}. Covariance means that the encoding isometry $\mathcal{V}$ satisfies
$$
\mathcal{V}U_L(g)=U_P(g)\mathcal{V},\qquad \forall g\in SU(d),
$$
Eq.~\eqref{eq:group_covar_cond}. This means that rotating the logical state and then encoding it gives exactly the same physical state as encoding it first and then rotating every qudit transversally. 
Thus, an approximate quantum error-correcting code satisfying the covariance condition provides a code space on which the logical $SU(d)$ symmetry is implemented by continuous transversal physical operations.

Equation~\eqref{eq:group_covar_cond} is a strong constraint, because it forces $\mathcal{V}$ to map the logical space onto one specific copy of itself sitting inside the physical Hilbert space, among the several equivalent copies that occur there once $n$ is large enough. 
What this constraint does not fix is \emph{which} copy to choose when more than one is available, and this is exactly the freedom we exploit. We pick the copy so that it is also left invariant by permuting the physical qudits, which forces the logical information to be shared equally among them; an asymmetric choice, 
by contrast, is exactly what would let a handful of erased qudits give away a disproportionate share of the logical information.
 
\subsection{A first example: one logical qubit shared among many spins}\label{subsec:summary_su2}
 
The simplest version of the construction encodes a logical spin-$\tfrac12$ qubit into $n$ physical spin-$j$ systems ($n$ odd, $j$ half-integer), with logical $SU(2)$ rotations implemented as the total spin $J_a=\sum_k J_a^{(k)}, a=x, y, z$. 
Requiring the code to be invariant under cyclic permutations of the $n$ spins forces the logical information to be shared exactly equally among them. The reduced state of any single spin is then
$$
\rho^{(i)}(\rho_L)=\frac{I_{2j+1}}{2j+1}+\frac{3}{2nj(j+1)(2j+1)}\sum_a r_aJ_a^{(i)},
$$
for $\rho_L=\tfrac12(I+\mathbf r\cdot\boldsymbol\sigma)$. The first term, maximally mixed, carries no information about $\rho_L$; the second term, which does, is suppressed by $1/n$. 
Translating this into a recovery bound for flagged single-spin erasure gives
$$
d(\mathcal R\mathcal N\mathcal E,\mathrm{id})\le \frac{3}{4\sqrt{2j(j+1)}}\,\frac1n+O_{j}(n^{-2}).
$$
Known lower bounds for any $SU(2)$-covariant code under erasure scale the same way in $n$~\cite{Faist2020}, so this code is near-optimal in its $n$-scaling.
 
\subsection{The general \texorpdfstring{$SU(d)$}{SU(d)} code, a decoder, and general noise}\label{subsec:summary_sud}
 
The same mechanism works for a $d$-dimensional logical system encoded into $n$ copies of a fixed local representation of $SU(d)$ (each physical qudit carries an antisymmetric tensor power $V_{\omega_r}=\bigwedge^r\mathbb C^d$ of the fundamental representation, 
with generators acting as $T_a^{(i)}$; Eq.~\eqref{eq:su_d_gens}). The cyclically symmetric code again gives a one-qudit reduced state of "maximally mixed plus a $1/n$-suppressed, logical-dependent piece,"
$$
\rho^{(i)}(\rho_L)=\frac{I_{V_{\omega_r}}}{\dim V_{\omega_r}}+\frac{1}{2\kappa_{\omega_r}n}\sum_a r_aT_a^{(i)},\quad \kappa_{\omega_r}=\frac12\binom{d-2}{r-1},
$$
and a single-qudit erasure bound
$$
d(\mathcal R\mathcal N\mathcal E,\mathrm{id})\le \frac{d-1}{2\sqrt2}\sqrt{\frac{d+1}{r(d-r)}}\,\frac1n+O_{d}(n^{-2}),
$$
which again matches the known $SU(d)$ lower bound~\cite{Faist2020} in its $1/n$ scaling, up to a constant depending on $d$ and $r$.
 
We then provide an explicit recovery map, the Petz (transpose) map built directly from the encoding isometry $\mathcal{V}$, and show that it achieves the same $1/n$ scaling, though we do not pin down its optimal constant.
 
The same reduced state also controls protection against non-erasure noise. If a single qudit undergoes an arbitrary unknown local noise channel, then the environment does not see $\rho^{(i)}(\rho_L)$ directly, 
but only its image under the corresponding local complementary channel. Thus, all information about the logical state still passes through the one-site reduced state $\rho^{(i)}(\rho_L)$ above. Since the logical-state-dependent part of this reduced state is suppressed with $n$, the code remains approximately correctable, with the weaker scaling $O_{d}(1/\sqrt n)$. The loss from $1/n$ to $1/\sqrt n$ comes from using a fully general, but less tight, proof technique. The same argument covers noise on several qudits at once, so we treat both cases together in Section~\ref{subsec:main_arb_3_noise}.
 
\subsection{Protecting against multi-qudit noise}\label{subsec:summary_multierasure}
 
If two or three qudits can be erased simultaneously, at unknown but flagged locations, cyclic symmetry is no longer enough: every pair, respectively every triple, of qudits must look alike to the environment. 
This requires the code to be invariant under a subgroup of the permutation group that can move any pair (or triple) of sites to any other pair (or triple) -- a $2$- or $3$-transitive subgroup of all permutations of the $n$ sites. 
Such subgroups are rare; we use the affine group $\mathrm{AGL}(1,n)$, Eq.~\eqref{eq:AGL_def}, for two-qudit erasures and the projective linear group $\mathrm{PGL}(2,n-1)$, Eq.~\eqref{eq:PGL_def}, 
for three-qudit erasures, both of which admit the needed invariant codes once $n$ (respectively $n-1$) is a suitable prime power. 
(The full permutation group $S_n$ would also work in principle, but it is too restrictive, as we demonstrate in the corresponding section.)
 
With this extra symmetry, the three-qudit reduced state splits into a logical-independent piece and a logical-dependent piece of size $\Theta(1/n)$, $\rho^{(ijk)}(\rho_L)=\tau^{(ijk)}+\Delta^{(ijk)}(\rho_L)$, Eq.~\eqref{eq:3q_reduced_state}, 
and the resulting bound for flagged three-qudit erasure is
$$
d(\mathcal R\mathcal N\mathcal E,\mathrm{id})\le \frac{\sqrt{3(d^2-1)}}{2\sqrt2}\,\frac1n+O_{d}(n^{-2}).
$$
As in the single-qudit case, the same reduced state controls protection against arbitrary (non-erasure) noise on up to three known qudits, at the weaker rate $O_{d}(1/\sqrt n)$.

\subsection{Erasure on an arbitrary fixed number of qudits}\label{subsec:summary_general_k_erasure}

The construction above stops at three erased sites because the $k$-transitive subgroups of $S_n$ needed to force all $k$-site reduced states to coincide exactly are not available for $k>3$. 
Apart from the alternating group, which is so large that no covariant code is invariant under it, the classification of finite multiply transitive permutation groups leaves only the Mathieu groups, 
and these occur at four values of $n$ and so support no construction with $n\to\infty$. To go beyond three sites, we drop exact equality of the $k$-site reduced states and ask only that they be close to one another, for a well-chosen code.

Closeness has to be measured against something, and the natural reference is an average over a whole family of codes. 
We start from a simple seed encoding that places the logical qudit at one site and fills the remaining sites with independent $SU(d)$-singlets, 
Eq.~\eqref{eq:seed-isometry}. The covariance condition leaves a whole family of admissible encodings, Eq.~\eqref{eq:Vm}, and we average over all of them.

This average, taken with respect to the Haar measure on that family, defines the mean $k$-site channel. 
This channel belongs to no single encoding. It is the uniform mixture of all covariant encodings, and Proposition~\ref{prop:orbit-twirl} identifies that mixture with the seed encoding twirled over the full permutation group $S_n$. 
This is what averaging buys. No single member of the family is invariant under all of $S_n$, for the same reason that ruled out the alternating group above, but the mixture is. 
Its $k$-site reduced state is therefore the same for every set of $k$ sites, and it stays close to maximally mixed up to a logical-dependent correction of order $1/n$, 
just as in the one-, two-, and three-site cases above. The mean channel is thus the common point around which well-chosen codes should cluster.

A concentration argument (Theorem~\ref{thrm:main_haar_typical}) then shows that a single Haar-random draw from this family is
simultaneously close to the mean channel for every one of the $\binom nk$ possible erased sets, with probability approaching one exponentially fast in $n$. 
Drawing exactly from the family would require exponentially many random bits, since specifying one of its members takes exponentially many parameters; 
we instead give an explicit randomness-efficient sampling procedure (Theorem~\ref{thrm:main_efficient_sampling}) that reproduces the same guarantee from a seed of only $O(n\log d)$ random bits.

A Haar-random code already achieves the resulting erasure bound (Theorem~\ref{thrm:main_multisite_erasure_haar}), and combining it with the efficient sampling gives 
a family of $SU(d)$-covariant codes, correctable with high probability over the random seed against flagged erasure on any fixed number $k$ of sites, uniformly over the distribution of erased sets 
(Corollary~\ref{thrm:main_multisite_erasure}):
$$
d(\mathcal R_p\mathcal N\mathcal E,\mathrm{id})\le \frac{\sqrt{k(d^2-1)}}{2\sqrt2}\,\frac1n+O_{d,k}(n^{-2}).
$$
The $1/n$ protection against flagged erasure is therefore not limited to at most three sites, and it extends, with the same scaling in $n$ and an explicit $\sqrt k$ dependence on the number of erased sites, to erasure on any fixed number of physical sites. We make no optimality claim for $k>1$, 
since the covariant lower bounds we use~\cite{Faist2020,Zhou2021} are stated for single-qudit erasure, so neither the $1/n$ constant nor the $\sqrt k$ dependence is matched by a converse for $k$-site erasure.

This also changes how the construction should be read. Covariance by itself leaves the whole family of admissible encodings, and what the concentration argument says is that the members for which some erased subsystem reveals an atypically large amount of 
logical information form an exponentially small fraction of it. Poorly performing encodings do exist, the seed encoding itself among them, since erasing the one site that carries the logical qudit reveals the logical state completely. But they are non-generic, so the protection does not come from 
fine-tuning one special code, it becomes clear that approximate decoupling from any fixed-size erased subsystem is a typical property of covariant codes.
The efficient sampler then shows that such an encoding can also be built. Full Haar randomness is not what the error-correcting property needs: the concentration argument uses only second and fourth moments, so four-wise independent phases generated from a short seed do the same work.

\subsection{Using these codes to protect analog quantum dynamics}\label{subsec:summary_analog}
 
We also use this machinery to protect continuous-time (analog) quantum simulation under a symmetry. If the Hamiltonians driving an analog simulator are invariant under some finite group $G$ of qubit permutations, 
every reachable evolution is confined to a block-diagonal invariant algebra
$$
\mathfrak{su}(\mathcal H)^G\cong \mathfrak{su}(d_1)\oplus\cdots\oplus\mathfrak{su}(d_k)\oplus\mathfrak u(1)^{\oplus(r-1)},
$$
Eq.~\eqref{eq:inv_alg_decomp}, where the block sizes $d_i$ are fixed by $G$ and are typically far smaller than the full Hilbert-space dimension. 
We then build a \emph{block encoding}, Eq.~\eqref{eq:block_enc_def}: each non-Abelian block $\mathfrak{su}(d_i)$ is encoded separately, using one of the $SU(d_i)$-covariant codes above, 
so that every symmetry-respecting Hamiltonian acts transversally, block by block, on the encoded system.
 
This block code still tolerates noise that mixes different blocks. For a two-block code (block sizes $n_1,n_2$) under flagged two-qudit erasure -- whether both erased 
qudits lie in the same block or one in each -- the recovery error again scales as
$$
d(\mathcal R\mathcal N\mathcal E,\mathrm{id})\le\frac{\sqrt{C_1(d_1^2-1)+C_2(d_2^2-1)}}{n}+O_{d_1,d_2}(n^{-2})
$$
for $n_1\approx n_2\approx n$, Eq.~\eqref{eq:fidelity_2}, with constants $C_1,C_2$ independent of the block dimensions. 
Stacking covariant codes block by block therefore preserves the same $1/n$ protection, even for erasures affecting two blocks.
 
\subsection{Towards universal analog computation}\label{subsec:summary_universal}
 
Symmetric Hamiltonians alone can only ever generate the smaller, block-diagonal invariant algebra above, which in general doesn't 
coincide with the full algebra needed for universal control. We give a sufficient condition, Theorem~\ref{thm:breakers_suff_cond}, 
under which adding a modest set of \emph{symmetry-breaking} Hamiltonians restores universality: if the resulting "coupling graph" 
linking the different symmetry blocks is connected, and each link is "rich enough" in a precise sense 
(the Full First-Factor Span condition of Appendix~\ref{sec:supple_Universal_comp}), then the symmetric Hamiltonians together with the 
symmetry-breaking ones generate the full algebra $\mathfrak{su}(\mathcal H)$ on the whole Hilbert space. 
For instance, three qubits with $S_3$ permutation symmetry become fully controllable once a single extra term, $Z_1+X_2$, is added to the symmetric Hamiltonians.
 
This suggests a possible route toward robust universal analog computation, rather than a complete architecture by itself. 
The block encoding protects the encoded code space, within which symmetry-preserving Hamiltonians can be 
implemented transversally on the corresponding covariant code blocks. Universality then requires additional symmetry-breaking Hamiltonians, 
which are not transversal under the block encoding and therefore act as resource operations. 
The usefulness of this approach depends on choosing a symmetry that gives a favorable tradeoff: 
the invariant block structure should reduce the required physical resources, such as the number of physical qudits or their local dimensions, 
while the required symmetry-breaking resource Hamiltonians should remain sufficiently simple to implement.

\section{Preliminaries}\label{sec:main_prelim}

\subsection{Approximate Quantum Error Correction}\label{subsec:main_AQEC}

We first recall the approximate error-correction framework used throughout the paper. 
An exact quantum error-correcting code for a given noise model is an encoding for which there exists a recovery 
channel that perfectly restores the logical information after the noise has acted. This requirement is too restrictive 
for the setting considered here. In particular, the Eastin--Knill theorem and its variants imply that finite-dimensional codes 
correcting local errors exactly cannot support nontrivial continuous logical symmetries transversally~\cite{Eastin2009}.

Approximate quantum error correction relaxes this condition. For a fixed noise channel, 
an encoding is $\varepsilon$-approximately correctable if there exists a recovery channel that reverses the noisy encoded evolution 
up to error at most $\varepsilon$, calculated by a chosen distance between quantum channels~\cite{Beny2010}. 
For an asymptotic family of codes, the central requirement is that this error decreases as the number $n$ of physical subsystems grows.

The noise model has to be specified with some care. If the noise is artificially weak, approximate correctability can arise for trivial reasons.
 For example, if exactly one physical qubit is erased uniformly at random among $n$ qubits, then the trivial encoding
$$
\ket\psi
\longmapsto
\ket\psi\ket 0^{\otimes(n-1)}
$$
has vanishing error as $n\to\infty$, simply because the logical qubit is erased only with probability $1/n$~\cite{Yi2024}. 
This does not constitute a robust error-correction mechanism. Thus, throughout the paper, we consider asymptotic code families 
for which the error parameter $\varepsilon$ decreases with the system size, and we work with the most general noise models for which 
the performance can be analyzed explicitly.

For a fixed code and noise model, one could in principle optimize over all recovery channels. 
This optimization is generally difficult~\cite{PhysRevA.75.012338}. We instead use the complementary-channel 
characterization of approximate error correction. The complementary channel describes the information leaked to the environment. 
Approximate correctability is equivalent to the statement that this environment output is close to independent of the logical input, 
or equivalently that the complementary channel is close to a constant channel, i.e., a quantum channel that maps all states to some fixed 
state~\cite{Beny2010, Faist2020}.

\paragraph{Correctability and distance.}

We now define the channel distance used in the paper. For two quantum channels $\mathcal N$ and $\mathcal M$, their worst-case entanglement fidelity is
$$
\begin{aligned}
&F(\mathcal N,\mathcal M)
\\
&:=
\min_\rho
f\Big(
(\mathcal N\otimes\mathrm{id})
(\ket{\psi_\rho}\bra{\psi_\rho}),
(\mathcal M\otimes\mathrm{id})
(\ket{\psi_\rho}\bra{\psi_\rho})
\Big),
\end{aligned}
$$
where $\ket{\psi_\rho}$ is a purification of $\rho$, and
$$
f(\rho,\sigma)
:=
\operatorname{Tr}
\sqrt{\sqrt\rho\sigma\sqrt\rho}
$$
is the root fidelity between states. We use the corresponding fidelity-induced channel distance
\begin{equation}\label{eq:distance_def}
d(\mathcal N,\mathcal M)
:=
\sqrt{1-F(\mathcal N,\mathcal M)} .
\end{equation}

Let
$$
\mathcal E:
\mathcal D(\mathcal H_L)
\to
\mathcal D(\mathcal H_P)
$$
be an encoding channel, where $\mathcal H_L$ and $\mathcal H_P$ are the logical and physical Hilbert spaces. 
For a fixed noise channel $\mathcal N$, we say that $\mathcal E$ is $\varepsilon$-correctable under $\mathcal N$~\cite{Beny2010} 
if there exists a recovery channel $\mathcal R$ such that
$$
d(\mathcal R\circ\mathcal N\circ\mathcal E,\mathrm{id}_L)
\le
\varepsilon .
$$

\paragraph{Flagged local noise.}

Throughout the paper, the physical Hilbert space of $n$ qudits is decomposed as
$$
\mathcal H_P
=
A_1\otimes\cdots\otimes A_n,
$$
where $A_i$ is the Hilbert space of the $i$th qudit. For a subset $S\subseteq{1,\ldots,n}$, we write
$$
A_S
:=
\bigotimes_{i\in S}A_i,
$$
and $\hat S$ denotes the complementary subset. The hat denotes complementation throughout: $\widehat{ijk}$ for the complement of $\{i,j,k\}$, $\widehat j$ for the retained tensor factors other than $j$, 
and $\widehat{\mathcal N}$ for the complementary channel of $\mathcal N$. 
To simplify notation, when $S={i,j,k}$, we write $A_{ijk}$, $\operatorname{Tr}_{ijk}$, $p_{ijk}$, and $\ket{ijk}$ instead of $A_S$, $\operatorname{Tr}_{A_S}$, $p_S$, and $\ket S$. 
Similarly, for a single site $S={i}$, we often write $i$ instead of $A_i$.

The main noise model considered in the paper is subsystem erasure at known locations, or equivalently flagged subsystem erasure. In its general form, it is
\begin{equation}\label{eq:erasure_noise}
\mathcal N(\sigma)
=
\sum_S
p_S
\ket S\bra S_F
\otimes
\ket e\bra e_{A_S}
\otimes
\operatorname{Tr}_S(\sigma),
\end{equation}
where $F$ is a classical flag register recording which subsystem $A_S$ was erased~\cite{Faist2020}. 
We also consider an arbitrary flagged multi-qudit noise of the form
\begin{equation}\label{eq:arbitrary_multiqudit_noise}
\mathcal N(\sigma)
=
\sum_{S\in\mathcal S}
p_S
\ket S\bra S_F
\otimes
\left(
\mathcal N_S
\otimes
\mathrm{id}_{\hat S}
\right)(\sigma),
\end{equation}
where $\mathcal N_S$ is an arbitrary noise channel acting on the subsystem $A_S=\bigotimes_{i\in S}A_i$, extending beyond erasure errors.

\paragraph{Complementary-channel criterion.}

Let
$$
\Phi:\operatorname{End}(\mathcal H_A)\to\operatorname{End}(\mathcal H_B)
$$
be a quantum channel with Stinespring dilation
$$
\Phi(\sigma)
=
\operatorname{Tr}_E(W\sigma W^\dagger),
\qquad
W:\mathcal H_A\to\mathcal H_B\otimes\mathcal H_E .
$$
The complementary channel is obtained by tracing out the channel output:
$$
\widehat\Phi(\sigma)
:=
\operatorname{Tr}_B(W\sigma W^\dagger).
$$
Thus $\Phi$ describes the system output, while $\widehat\Phi$ describes the corresponding environment output.

For a state $\rho_0$ on the environment output space, the corresponding constant channel is
\begin{equation}\label{eq:constant_channel_def}
\Lambda_0(\rho)
:=
\operatorname{Tr}(\rho)\rho_0 .
\end{equation}
The following theorem relates approximate correctability to decoupling from the environment.

\begin{theorem}[\citet{Beny2010}]\label{thm:beny}
Let $\mathcal E$ be an encoding channel, let $\mathcal N$ be a noise channel, and let $\widehat{\mathcal N\circ\mathcal E}$ be a complementary channel to $\mathcal N\circ\mathcal E$. Then
\begin{equation}\label{eq:beny}
    \min_{\mathcal R}
    d\left(
        \mathcal R\circ\mathcal N\circ\mathcal E,
        \mathrm{id}_L
    \right)
    =
    \min_{\Lambda_0\,\mathrm{constant}}
    d\left(
        \widehat{\mathcal N\circ\mathcal E},
        \Lambda_0
    \right).
\end{equation}
\end{theorem}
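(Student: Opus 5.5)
The plan is to prove the two inequalities separately, using Uhlmann's theorem and the duality between a Stinespring dilation and its complement. Fix a Stinespring isometry $W:\mathcal H_L\to\mathcal H_B\otimes\mathcal H_E$ for $\Phi:=\mathcal N\circ\mathcal E$, so that $\Phi=\operatorname{Tr}_E W\cdot W^\dagger$ and $\widehat\Phi=\operatorname{Tr}_B W\cdot W^\dagger$. Since $d=\sqrt{1-F}$ is monotone in $F$, it suffices to show
$$
\max_{\mathcal R}F(\mathcal R\circ\Phi,\mathrm{id}_L)=\max_{\rho_0}F(\widehat\Phi,\Lambda_0),
$$
with both maxima attained by compactness.

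\emph{Step 1 (recovery gives a constant channel: ``$\ge$'').} Take a recovery $\mathcal R$ with Stinespring isometry $R:\mathcal H_B\to\mathcal H_L\otimes\mathcal H_{E'}$. For a purification $\ket{\psi_\rho}\in\mathcal H_L\otimes\mathcal H_{L'}$, the state $(R\otimes I_E)W\ket{\psi_\rho}$ purifies $(\mathcal R\Phi\otimes\mathrm{id})(\psi_\rho)$. Because the target $\psi_\rho$ is pure, $f$ reduces to an overlap, and Uhlmann's theorem lets us write it as an overlap of purifications. Any purification of a state close to $\psi_\rho$ on $LL'$ is close to a product $\ket{\psi_\rho}\otimes\ket{\xi}_{E'E}$. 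Tracing out $L$ and $E'$ then shows that $(\widehat\Phi\otimes\mathrm{id})(\psi_\rho)$ is at least as close, by monotonicity of fidelity under partial trace, to $\rho_{L'}\otimes\rho_0$. The catch is that $\xi$, and hence $\rho_0$, might depend on $\rho$. To handle this, I will use the minimax form $F(\mathcal A,\mathcal B)=\min_\rho\max_{\text{purif. overlaps}}$ and exploit concavity: the fidelity is jointly concave in the relevant variables after squaring through Uhlmann, the state set is convex, and the set of choices is compact. A minimax theorem (Sion) then lets us choose $\rho_0$ uniformly. This reduces to the standard argument of B\'eny--Oreshkov.

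\emph{Step 2 (constant channel gives a recovery: ``$\le$'').} Fix $\rho_0$ with purification $\ket{\phi_0}_{EE''}$. For each $\rho$, the state $W\ket{\psi_\rho}$ purifies $(\widehat\Phi\otimes\mathrm{id})(\psi_\rho)$ on $EL'$, while $\ket{\psi_\rho}_{LL'}\otimes\ket{\phi_0}$ purifies $\rho_{L'}\otimes\rho_0$. By Uhlmann there is an isometry or partial isometry $V_\rho:\mathcal H_B\to\mathcal H_L\otimes\mathcal H_{E''}$ achieving the fidelity between these purifications. Completing $V_\rho$ to a channel $\mathcal R_\rho=\operatorname{Tr}_{E''}V_\rho\cdot V_\rho^\dagger$ (plus a completion term) gives $f\big((\mathcal R_\rho\Phi\otimes\mathrm{id})\psi_\rho,\psi_\rho\big)\ge f\big((\widehat\Phi\otimes\mathrm{id})\psi_\rho,\rho_{L'}\otimes\rho_0\big)$. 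Removing the $\rho$-dependence of the recovery is again a minimax exchange. The map $\mathcal R\mapsto f\big((\mathcal R\Phi\otimes\mathrm{id})\psi_\rho,\psi_\rho\big)^2=\bra{\psi_\rho}(\mathcal R\Phi\otimes\mathrm{id})(\psi_\rho)\ket{\psi_\rho}$ is linear in $\mathcal R$. As a function of $\rho$, through $\ket{\psi_\rho}=(\sqrt\rho\otimes I)\ket{\Omega}$, it equals $\operatorname{Tr}$ of a quadratic form in $\sqrt\rho$, which is concave in $\rho$ in the appropriate sense. The recovery channels form a compact convex set, so Sion's minimax theorem lets us exchange $\min_\rho\max_{\mathcal R}$ with $\max_{\mathcal R}\min_\rho$.

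\emph{Main obstacle.} The delicate step in both directions is the minimax exchange, which requires verifying the quasi-concavity and quasi-convexity needed for Sion's theorem in the fidelity-squared variables. I would first try to prove the linear-in-$\mathcal R$ and concave-in-$\rho$ property directly. If that fails, I would fall back on the Bures-angle formulation used by B\'eny and Oreshkov, where the theorem is stated exactly in this form. Since the statement is quoted from \citet{Beny2010}, the cleanest option is to cite their proof after the Uhlmann reductions above.
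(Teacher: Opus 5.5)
The paper gives no proof of this statement; it takes it from \citet{Beny2010}. Your reconstruction therefore goes beyond the text, and its skeleton is right. The two pointwise-in-$\rho$ comparisons are correct: in Step~1, Uhlmann with the pure target plus monotonicity under $\operatorname{Tr}_{LE'}$; in Step~2, Uhlmann plus completion of the partial isometry $V_\rho$.

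The gap is the min--max exchange you flag, and its hypotheses are backwards. In Sion's theorem the minimized variable, here $\rho$, must enter (quasi-)convexly, and the maximized variable (quasi-)concavely. You plan to prove that $\rho\mapsto F_\rho(\mathcal R\circ\Phi,\mathrm{id})^2$ is concave, but it is not. Since $\langle\psi_\rho|(K\otimes I)|\psi_\rho\rangle=\operatorname{Tr}(\rho K)$ for every purification, $F_\rho(\mathcal M,\mathrm{id})^2=\sum_k|\operatorname{Tr}(\rho K_k)|^2$ for Kraus operators $K_k$ of $\mathcal M$. This is a positive semidefinite quadratic form in $\rho$ (not in $\sqrt\rho$), hence convex. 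For example, with $\mathcal M(X)=ZXZ$, $Z=\mathrm{diag}(1,-1)$ and $\rho=\mathrm{diag}(p,1-p)$, it equals $(2p-1)^2$. Likewise, in Step~1 ``joint concavity after squaring'' is not the needed property. Squaring the Uhlmann amplitude makes it convex in the purification variable you want to maximize over, which breaks the hypothesis on that side.

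The fix is to use the convexity that does hold.
\begin{itemize}
\item \emph{Step~2.} The function $h(\rho,\mathcal R):=F_\rho(\mathcal R\circ\Phi,\mathrm{id})^2$ is continuous and convex in $\rho$, and affine in $\mathcal R$ on the compact convex set of channels. Sion gives $\max_{\mathcal R}\min_\rho h=\min_\rho\max_{\mathcal R}h\ge F(\widehat\Phi,\Lambda_0)^2$.
\item \emph{Step~1.} Apply Sion not to purification overlaps but to $g(\rho,\rho_0):=F_\rho(\widehat\Phi,\Lambda_{\rho_0})$. It is convex in $\rho$ because Uhlmann writes it as $\max_W|\operatorname{Tr}(\rho X_W)|$; this is exactly the argument in the proof of Lemma~\ref{fidelity_opt}. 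It is concave in $\rho_0$ because root fidelity is jointly concave and $\rho_{L'}\otimes\rho_0$ is affine in $\rho_0$. Hence $\max_{\rho_0}\min_\rho g=\min_\rho\max_{\rho_0}g\ge F(\mathcal R\circ\Phi,\mathrm{id})$.
\end{itemize}
Equivalently, keep the unsquared amplitude $\operatorname{Re}\operatorname{Tr}(\rho X_\xi)$ with $X_\xi=(I_L\otimes\langle\xi|)(R\otimes I_E)W$ and $\xi$ in the unit ball; it is affine in both variables, which is the Kretschmann--Schlingemann--Werner form of the argument. With these two exchanges your pointwise comparisons give Eq.~\eqref{eq:beny}. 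Without them, the proposal, like the paper, rests on the citation.
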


The theorem states that an encoding is approximately correctable precisely when the environment output can be made close to a fixed state that is independent of the logical input. 
Therefore, throughout the paper rather than constructing a recovery map first, we control what the environment can learn.

\paragraph{Reduced states seen by the environment.}

For the erasure channel, defined in Eq.~\eqref{eq:erasure_noise}, suppose that the encoding is isometric:
\begin{equation}\label{eq:enc_isom}
\mathcal E(\rho_L)
=
\mathcal{V}\rho_L\mathcal{V}^\dagger .
\end{equation}
Hence, a complementary channel to the erasure channel, derived in~\cite{Faist2020}, is
\begin{equation}\label{eq:dual_to_erasure}
\widehat{\mathcal N\circ\mathcal E}(\rho_L)
=
\sum_S
p_S
\ket S\bra S_{F_E}
\otimes
\operatorname{Tr}_{\hat S}
\left(
\mathcal{V}\rho_L\mathcal{V}^\dagger
\right).
\end{equation}
Thus, the environment receives the erased subsystem, together with a flag recording which subsystem was erased.

For the arbitrary flagged multi-qudit noise model~\eqref{eq:arbitrary_multiqudit_noise}, let $\widehat{\mathcal N}_S$ be a complementary channel to $\mathcal N_S$. 
Then, as shown in Lemma~\ref{lem:dual_to_arb},
\begin{equation}\label{eq:dual_to_arbitrary_multiqudit}
\widehat{\mathcal N\circ\mathcal E}(\rho_L)
=
\sum_{S\in\mathcal S}
p_S
\ket S\bra S_{F_E}
\otimes
\widehat{\mathcal N}_S
\left(
\operatorname{Tr}_{\hat S}
\left(
\mathcal{V}\rho_L\mathcal{V}^\dagger
\right)
\right).
\end{equation}

Eqs.~\eqref{eq:dual_to_erasure} and~\eqref{eq:dual_to_arbitrary_multiqudit} show that local reduced states of the encoded state are the central objects in the complementary-channel analysis. 
We therefore use the notation
\begin{equation}\label{eq:reduced_state}
\rho^{(S)}(\rho_L)
:=
\operatorname{Tr}_{\hat S}
\left(
\mathcal{V}\rho_L\mathcal{V}^\dagger
\right),
\end{equation}
where $S\subseteq{1,\ldots,n}$. In particular,
$$
\rho^{(i)}(\rho_L),
\qquad
\rho^{(ij)}(\rho_L),
\qquad
\rho^{(ijk)}(\rho_L)
$$
denote the one-, two-, and three-qudit reduced states, respectively. The main calculations below are calculations of these 
reduced states and of how much they depend on the logical input.

\subsection{Covariant encoding}\label{subsec:main_covar_enc}

We now introduce the covariant encoding framework used throughout the paper. Approximate correctability allows us to keep 
continuous logical symmetries implemented locally, while relaxing exact correction to correction with a controlled error. 
The compatibility condition between an encoding and a symmetry is \textit{covariance}~\cite{Hayden2021}. 
In representation-theoretic language, covariance says that the encoding map is an \textit{intertwiner} between the logical and physical representations.

For the codes constructed below, the logical system carries the fundamental representation of $SU(d)$. 
Constructing a covariant code therefore amounts to embedding a copy of this representation into the physical Hilbert space. 
Representation theory identifies where such copies can occur, and the remaining freedom is a choice of vector in the corresponding multiplicity 
space~\cite{Denys2024}. We use this freedom to impose additional permutation symmetries on the code space. 
Specifically, we choose the multiplicity vector to be invariant under a suitable subgroup of the permutation group on $n$ elements $G_P\subseteq S_n$ 
acting by permutations of the physical subsystems. Since this permutation action commutes with the collective $SU(d)$ action, 
it acts only on multiplicity spaces. Thus, a $G_P$-invariant multiplicity vector gives a $G_P$-invariant code space. 
The choice of $G_P$ depends on the noise model and will be specified in the later sections.

\paragraph{Covariance.}

Let $G$ be a compact connected Lie group, and let $\mathfrak g$ be its Lie algebra. Let
$$
\mathcal{V}:\mathcal H_L\to\mathcal H_P
$$
be the encoding isometry associated with the encoding channel in Eq.~\eqref{eq:enc_isom}. 
Let $U_L$ and $U_P$ be unitary representations of $G$ on $\mathcal H_L$ and $\mathcal H_P$, respectively. 
We say that $\mathcal{V}$ is $G$-\textit{covariant with respect to $U_L$ and $U_P$} if
\begin{equation}\label{eq:group_covar_cond}
\mathcal{V}U_L(g)
=
U_P(g)\mathcal{V},
\qquad
\forall g\in G .
\end{equation}
Equivalently, let $\pi_L$ and $\pi_P$ be the corresponding Lie-algebra representations,
$$
\pi_L(X)
:=
\left.\frac{d}{dt}\right|_{t=0}
U_L(e^{tX}),
\qquad
\pi_P(X)
:=
\left.\frac{d}{dt}\right|_{t=0}
U_P(e^{tX}).
$$
Then $\mathcal{V}$ is \textit{$\mathfrak g$-covariant with respect to $\pi_L$ and $\pi_P$} if
\begin{equation}\label{eq:algebra_covar_cond}
\mathcal{V}\pi_L(X)
=
\pi_P(X)\mathcal{V},
\qquad
\forall X\in\mathfrak g .
\end{equation}
Eq.~\eqref{eq:algebra_covar_cond} follows from Eq.~\eqref{eq:group_covar_cond} by differentiation. 
Conversely, for a connected Lie group, Eq.~\eqref{eq:algebra_covar_cond} determines the corresponding covariance condition at the group level.

We will say that the encoding channel $\mathcal E$ is \textit{$G$-covariant}, or \textit{$\mathfrak g$-covariant}, 
when its encoding isometry satisfies Eq.~\eqref{eq:group_covar_cond}, or equivalently Eq.~\eqref{eq:algebra_covar_cond}. 
We will often specify covariance at the Lie-algebra level, because the physical representations used below are most naturally described by their infinitesimal generators. 
This is also the natural language in the analog-simulation setting, where the relevant algebras have the form
$$
\mathfrak{su}(d_1)
\oplus
\mathfrak{su}(d_2)
\oplus
\cdots .
$$
When the representations are clear from context, we simply say that the code is \textit{$G$-covariant}.

\paragraph{Intertwiners.}

The covariance condition strongly restricts the possible encoding maps. It is precisely the condition that $\mathcal{V}$ is an \textit{intertwiner}. 
Recall that if $(V,\rho_{V})$ and $(W,\rho_{W})$ are representations of a group $G$, then a linear map $A:V\to W$ is a \textit{$G$-intertwiner} if
\begin{equation}\label{eq:intertwiner_def}
A\circ\rho_{V}(g)
=
\rho_{W}(g)\circ A,
\qquad
\forall g\in G .
\end{equation}
Here $G$ may be finite or continuous. At the Lie-algebra level, for infinitesimal representations $d\rho_{V}$ and $d\rho_{W}$, this condition becomes
$$
A\circ d\rho_{V}(X)
=
d\rho_{W}(X)\circ A,
\qquad
\forall X\in\mathfrak g .
$$

The structure of intertwiners is controlled by Schur's lemma. If $V_\lambda$ and $V_\mu$ are irreducible complex representations of $G$, then
$$
\operatorname{Hom}_G(V_\lambda,V_\mu)
=
\begin{cases}
0, & \lambda\not\simeq\mu,\\
\mathbb C\cdot I_{V_\lambda}, & \lambda\simeq\mu .
\end{cases}
$$
Thus an intertwiner cannot map an irreducible representation to an inequivalent irreducible representation -- it can only map it to an equivalent copy.

More generally, suppose that two finite-dimensional $G$-representations decompose as
$$
W
\cong
\bigoplus_{\lambda\in\Lambda}
V_\lambda\otimes M_\lambda,
\qquad
V
\cong
\bigoplus_{\lambda\in\Lambda}
V_\lambda\otimes N_\lambda,
$$
where $\Lambda$ is the set of irreducible representations of $G$, and $M_\lambda$ and $N_\lambda$ are multiplicity spaces. Then every intertwiner $A:W\to V$ has the form
\begin{equation}\label{eq:intertwiner_decomp}
A
\cong
\bigoplus_{\lambda\in\Lambda}
I_{V_\lambda}\otimes A_\lambda,
\qquad
A_\lambda\in\operatorname{Hom}(M_\lambda,N_\lambda).
\end{equation}
Thus the intertwiner acts trivially on the irrep factor $V_\lambda$ and only acts nontrivially on the corresponding multiplicity spaces.

This observation will be used repeatedly. Besides the encoding isometry itself, the linear extensions of reduced-state maps and the compression map
$$
X
\longmapsto
\mathcal{V}^\dagger X \mathcal{V}
$$
are also intertwiners. This is why representation theory fixes much of the structure of the reduced states and compressed local operators appearing below. 
Throughout the paper, we use the terms covariant map and intertwiner interchangeably: the term covariant is standard in quantum information theory, 
especially for channels and encodings, while intertwiner is the standard representation-theoretic term.

\paragraph{The $SU(d)$ representation setting.}

The main Lie group considered in this work is $SU(d)$. Since $SU(d)$ is compact, connected, and simply connected, 
its irreducible representations are in one-to-one correspondence with irreducible representations of $\mathfrak{su}(d)$. 
We will therefore freely pass between $SU(d)$ and $\mathfrak{su}(d)$ representations. 
Background on these representation-theoretic facts can be found in Ref.~\cite{Hall2015}.

The physical system consists of $n$ identical subsystems, each carrying the same irreducible representation of $SU(d)$. 
Thus, the physical Hilbert space is
\begin{equation}\label{eq:phys_space_rep}
\mathcal H_P
:=
(V_{\lambda_0})^{\otimes n},
\end{equation}
where $V_{\lambda_0}$ is a fixed irreducible representation of $SU(d)$. The transversal group action is
\begin{equation}\label{eq:group_phys_space_rep}
U_P(g)
=
\bigotimes_{i=1}^n
U_{\lambda_0}^{(i)}(g),
\qquad
g\in SU(d),
\end{equation}
where $U_{\lambda_0}^{(i)}(g)$ acts as $U_{\lambda_0}(g)$ on the $i$th tensor factor. At the Lie-algebra level, the derived representation is
\begin{equation}\label{eq:alg_phys_space_rep}
\pi_P(X)
=
\sum_{i=1}^n
\pi_{\lambda_0}^{(i)}(X),
\qquad
X\in\mathfrak{su}(d),
\end{equation}
where $\pi_{\lambda_0}^{(i)}(X)$ acts as $\pi_{\lambda_0}(X)$ on the $i$th tensor factor and as the identity on all other tensor factors. 
For illustration of Lie group and corresponding Lie algebra actions, see Fig.~\ref{fig:theme}(a).

The logical Hilbert space is the fundamental representation of $SU(d)$:
\begin{equation}\label{eq:log_space_rep}
\mathcal H_L
:=
V_{\omega_1}
\cong
\mathbb C^d .
\end{equation}
The notation $V_{\omega_1}$ is explained in Appendix~\ref{sec:supple_sud}. We denote the corresponding group action by
$$
U_L(g)
=
u(g),
\qquad
g\in SU(d),
$$
and the corresponding Lie-algebra representation by
$$
\pi_L(X)
=
\bar X,
\qquad
X\in\mathfrak{su}(d).
$$
The bar is used only when we want to emphasize that the generator acts on the logical fundamental representation, 
rather than on one of the physical tensor factors.

\paragraph{Multiplicity-vector form of the encoding.}

Decompose the physical Hilbert space into irreducible $SU(d)$-representations:
\begin{equation}\label{eq:isotipic_decomp}
(V_{\lambda_0})^{\otimes n}
\cong
\bigoplus_{\lambda\in\Lambda}
V_\lambda\otimes M_\lambda .
\end{equation}
Here $V_\lambda\otimes M_\lambda$ is the isotypic component associated with the irrep $V_\lambda$, and $M_\lambda$ is its multiplicity space~\cite{Kong2022}. Since the logical space is $V_{\omega_1}$, Eq.~\eqref{eq:intertwiner_decomp} implies that the image of any covariant encoding must lie inside the corresponding isotypic component
$$
V_{\omega_1}\otimes M_{\omega_1}
\subseteq
\mathcal H_P .
$$
Thus a covariant encoding selects one copy of $V_{\omega_1}$ inside the physical Hilbert space. Equivalently, it is specified by a \textit{multiplicity vector} $\ket{m_{\omega_1}}\in M_{\omega_1}$:
\begin{equation}\label{eq:mult_vec_enc}
\mathcal{V}:
\mathcal H_L
\cong
V_{\omega_1}
\longrightarrow
V_{\omega_1}\otimes\ket{m_{\omega_1}}
\subseteq
\mathcal H_P;
\end{equation}
see Fig.~\ref{fig:theme}(b) for schematic illustration. The choice of $\ket{m_{\omega_1}}$ determines the particular covariant code. 
The same multiplicity-space viewpoint was used in Ref.~\cite{Denys2024} to reconstruct known quantum error-correcting codes covariant under 
finite symmetry groups.

This construction is possible only if the fundamental isotypic component is present in the physical representation, that is, only if
$$
\dim M_{\omega_1}>0 .
$$
For each explicit construction below, we will specify conditions ensuring this nonzero multiplicity.

It is useful to describe the encoding in terms of the Schur transform. For a representation $\mathcal H$ of $SU(d)$, 
the Schur transform is a unitary
$$
U_{\mathrm{Schur}}:
\mathcal H
\longrightarrow
\bigoplus_{\lambda\in\Lambda}
V_\lambda\otimes M_\lambda
$$
such that
\begin{equation}\label{eq:schur_transform}
U_{\mathrm{Schur}}
U(g)
U_{\mathrm{Schur}}^\dagger
=
\bigoplus_{\lambda\in\Lambda}
U_\lambda(g)\otimes I_{M_\lambda}.
\end{equation}
The Schur transform is not canonical, because it depends on choices of bases in the irreducible spaces $V_\lambda$ and in 
the multiplicity spaces $M_\lambda$. Its construction in quantum information settings is discussed in Ref.~\cite{Bacon2006}.

With this notation, the encoding associated with the multiplicity vector $\ket{m_{\omega_1}}\in M_{\omega_1}$ is
\begin{equation}\label{eq:schur_transform_enc}
\mathcal{V}
=
U_{\mathrm{Schur}}^\dagger
\left(
I_{V_{\omega_1}}
\otimes
\ket{m_{\omega_1}}
\right).
\end{equation}

\paragraph{Permutation symmetry of the code space.}

In addition to the $SU(d)$ action, we will use finite permutation symmetries of the physical subsystems. Let $G_P\subseteq S_n$ 
act on $\mathcal H_P$ by permuting tensor factors:
\begin{equation}\label{eq:group_act_on_phys_space}
P_g
\left(
\ket{i_1}\otimes\cdots\otimes\ket{i_n}
\right)
:=
\ket{i_{g^{-1}(1)}}\otimes\cdots\otimes\ket{i_{g^{-1}(n)}} .
\end{equation}
This permutation action commutes with the transversal $SU(d)$ action in Eq.~\eqref{eq:group_phys_space_rep}, 
because $SU(d)$ acts identically on each tensor factor. Hence $G_P$ preserves the isotypic decomposition~\eqref{eq:isotipic_decomp}. 
By the double-centralizer structure of the commutant, the action of $G_P$ has the form~\cite{GoodmanWallach2009}
\begin{equation}\label{eq:group_act_on_code}
P_g
=
\bigoplus_{\lambda\in\Lambda}
I_{V_\lambda}\otimes R_\lambda(g),
\end{equation}
where $R_\lambda$ is a representation of $G_P$ on the multiplicity space $M_\lambda$. Thus $G_P$ acts trivially on the 
$SU(d)$-irrep factor and nontrivially only on the multiplicity space.

The purpose of introducing $G_P$ is to impose an additional symmetry on the code space. After covariance fixes the relevant isotypic component,
the remaining freedom is the choice of the multiplicity vector. We choose this vector from the $G_P$-invariant subspace
\begin{equation}\label{eq:g_inv_subspace}
M_{\omega_1}^{G_P}
:=
\left\{
\ket m\in M_{\omega_1}
:
R_{\omega_1}(g)\ket m=\ket m
\ \forall g\in G_P
\right\}.
\end{equation}

If $\ket m\in M_{\omega_1}^{G_P}$, then the code space
$$
V_{\omega_1}\otimes\ket m
$$
is invariant under $G_P$, because $G_P$ acts as the identity on $V_{\omega_1}$ and preserves $\ket m$ in the multiplicity space.

Finally, the same permutation action induces an action on operators by conjugation. For $X\in\operatorname{End}(\mathcal H_P)$ and $g\in G_P$, define
\begin{equation}\label{eq:group_act_oper}
g\cdot X
:=
P_gXP_g^\dagger,
\end{equation}
where $P_g$ is the permutation operator from Eq.~\eqref{eq:group_act_on_phys_space}.

\section{\texorpdfstring{$SU(2)$}{SU(2)}-covariant Approximate Quantum Error Correction Codes}\label{sec:main_su2}

We begin the explicit construction with the familiar case of $SU(2)$. The logical system is a spin-$1/2$ degree of freedom, 
while the physical system consists of many spin-$j$ subsystems. Covariance fixes the symmetry sector in which the code can live, but it still leaves a multiplicity-space freedom. We use this remaining freedom to distribute the logical spin uniformly over the physical sites. For single-site erasure it means that the environment receives a one-site reduced state, and the code is good when this reduced state is nearly independent of the logical input.

\paragraph{Setup and reduced-state strategy.}

Let $V_j$ denote the irreducible spin-$j$ representation of $SU(2)$. We take the logical Hilbert space~\eqref{eq:log_space_rep} to be
\[
    \mathcal H_L = V_{1/2}\cong \mathbb C^2,
\]
and the physical Hilbert space~\eqref{eq:phys_space_rep} to be
\[
    \mathcal H_P = (V_j)^{\otimes n},
\]
where $n$ is odd and $j\in \frac12+\mathbb Z$ is half-integer. These conditions are exactly those ensuring that the spin-$1/2$ representation 
appears in $(V_j)^{\otimes n}$, or equivalently that the multiplicity space $M_{V_{1/2}}$ is nonzero:
\[
    \dim M_{V_{1/2}}>0
    \quad\Longleftrightarrow\quad
    n\equiv 1 \pmod 2
    \quad\text{and}\quad
    j\in \frac12+\mathbb Z .
\]
We use the standard spin-operator notation for the physical representation~\eqref{eq:alg_phys_space_rep} of the generators of $\mathfrak{su}(2)$,
\[
    \pi_j^{(k)}\left(\frac{i\sigma_a}{2}\right)
    \equiv J_a^{(k)},
    \qquad a=x,y,z,
\]
so that the transversal physical action is
\begin{equation}\label{eq:su2_cov_rep}
    \frac{i\sigma_a}{2}
    \longmapsto
    iJ_a
    :=
    \sum_{k=1}^n iJ_a^{(k)},
    \qquad a=x,y,z .
\end{equation}

This $SU(2)$ setting is useful both as a physically transparent example and as a complete illustration of the general method. 
The main object is the one-site reduced state $\rho^{(i)}(\rho_L)$ defined in Eq.~\eqref{eq:reduced_state}. For flagged single-qudit erasure, this is 
precisely the state received by the environment, as follows from the complementary channel in Eq.~\eqref{eq:dual_to_erasure}. 
Thus, the task is to choose the code so that $\rho^{(i)}(\rho_L)$ approaches a fixed state, independent of $\rho_L$, as $n$ grows. 
A detailed derivation of the final expression~\eqref{eq:reduced_state_su2} is given in Proposition~\ref{prop:SU2cov}, here we spell out the representation-theoretic mechanism.

\paragraph{Reduced-state structure from covariance.}
The form of $\rho^{(i)}(\rho_L)$ is strongly constrained by covariance. It satisfies
\begin{equation}\label{eq:reduced-state_cov_cond}
    \rho^{(i)}\!\left(u(g)\rho_Lu(g)^\dagger\right)
    =
    U_j^{(i)}(g)\rho^{(i)}(\rho_L)\bigl(U_j^{(i)}(g)\bigr)^\dagger .
\end{equation}
Equivalently, after extending $\rho^{(i)}$ linearly from density matrices to all operators on the logical space, the resulting map
\[
    \operatorname{End}(V_{1/2})
    \xrightarrow{\;\Phi^{(i)}\;}
    \operatorname{End}(V_j)
\]
is an $SU(2)$-intertwiner~\eqref{eq:intertwiner_def}, where both operator spaces carry the adjoint action. We can therefore determine 
its possible form from representation theory. On the logical space,
\[
    \operatorname{End}(V_{1/2})
    \cong
    V_{1/2}\otimes V_{1/2}^*
    \cong
    V_0\oplus V_1,
\]
where $V_0$ is the trivial representation, spanned by the identity, and $V_1$ is the adjoint representation, spanned by the Pauli generators. On a physical spin-$j$ site,
\[
    \operatorname{End}(V_j)
    \cong
    V_j\otimes V_j^*
    \cong
    \bigoplus_{\ell=0}^{2j} V_\ell .
\]
Thus Eq.~\eqref{eq:intertwiner_decomp} implies that $\Phi^{(i)}$ maps the trivial component of $\operatorname{End}(V_{1/2})$ to the unique trivial component of $\operatorname{End}(V_j)$, 
and maps the adjoint component of $\operatorname{End}(V_{1/2})$ to the unique adjoint component of $\operatorname{End}(V_j)$.

Writing the logical state in Bloch form,
\[
    \rho_L=\frac12(I+\mathbf r\cdot\boldsymbol\sigma),
\]
the identity term belongs to the trivial representation and therefore maps to a multiple of the identity on $V_j$. 
Normalization fixes this multiple to be $1/(2j+1)$. The traceless term belongs to the adjoint representation, 
and hence its image must be proportional to the unique adjoint component on $\operatorname{End}(V_j)$, spanned by the physical spin operators $J_a^{(i)}$. 
Therefore the reduced state has the form
\begin{equation}\label{eq:reduced_state_beta}
    \rho^{(i)}(\rho_L)
    =
    \frac{I_{2j+1}}{2j+1}
    +
    \beta^{(i)}
    \sum_{a=x,y,z} r_a J_a^{(i)} .
\end{equation}
At this stage the code space has not yet been specified; all dependence on the multiplicity vector~\eqref{eq:mult_vec_enc} is contained in the scalar coefficients $\beta^{(i)}$.

To determine these coefficients, we first compress local spin operators to the code space. Since the encoding is covariant, the total physical spin acts as the logical spin,
\begin{equation}\label{eq:total_spin_act}
    \mathcal{V}^\dagger \sum_{i=1}^n J_a^{(i)} \mathcal{V}
    =
    \frac{\sigma_a}{2} .
\end{equation}
Moreover, the compression map $X\mapsto \mathcal{V}^\dagger X \mathcal{V}$ is again an $SU(2)$-intertwiner $\operatorname{End}(V_j)\rightarrow \operatorname{End}(V_{1/2})$. 
Since the operators $J_a^{(i)}$ span an adjoint component on the $i$th site, Eq.~\eqref{eq:intertwiner_decomp} implies that their compression must be proportional 
to the unique adjoint component on the logical spin-$1/2$ space. Hence
\begin{equation}\label{eq:alpha_def}
    \mathcal{V}^\dagger J_a^{(i)} \mathcal{V}
    =
    \alpha^{(i)}\frac{\sigma_a}{2},
\end{equation}
for real scalars $\alpha^{(i)}$ satisfying $\sum_i\alpha^{(i)}=1$, where the latter identity follows from Eq.~\eqref{eq:total_spin_act}. 
Comparing Hilbert--Schmidt inner products in Eqs.~\eqref{eq:reduced_state_beta} and~\eqref{eq:alpha_def} gives
\begin{equation}\label{eq:beta_through_alpha}
    \beta^{(i)}
    =
    \frac{3\alpha^{(i)}}{2j(j+1)(2j+1)} .
\end{equation}

\paragraph{Cyclic invariance.}

It remains to choose the multiplicity vector so that the coefficients $\alpha^{(i)}$ are uniform. 
We do this by choosing $\ket m$ to be an eigenvector of the cyclic shift $s\in C_n\subseteq S_n$. 
Equivalently, the code space is invariant under cyclic permutations of the physical qudits. If $P_s$ denotes the corresponding permutation operator, 
then $P_s J_a^{(i)}P_s^\dagger=J_a^{(i+1)}$, while on the code space $P_s\mathcal{V}=e^{i\phi}\mathcal{V}$. Therefore
\[
    \mathcal{V}^\dagger J_a^{(i+1)}\mathcal{V}
    =
    \mathcal{V}^\dagger P_sJ_a^{(i)}P_s^\dagger \mathcal{V}
    =
    \mathcal{V}^\dagger J_a^{(i)}\mathcal{V} .
\]
All coefficients $\alpha^{(i)}$ are equal. Since they sum to one, $\alpha^{(i)}=1/n$ for every site.

For the cyclically invariant $SU(2)$-covariant code, the one-site reduced state is therefore
\begin{equation}\label{eq:reduced_state_su2}
    \rho^{(i)}(\rho_L)
    =
    \frac{I_{2j+1}}{2j+1}
    +
    \frac{3}{2nj(j+1)(2j+1)}
    \sum_{a=x,y,z} r_a J_a^{(i)},
\end{equation}
where $i=1,\ldots,n$ and $\rho_L=\frac12(I+\mathbf r\cdot\boldsymbol\sigma)$. This expression makes the error-correction mechanism transparent, 
because the logical-state-dependent part of the erased site's density matrix is suppressed by $1/n$, so the environment's state approaches the maximally mixed state as $n$ grows.

\paragraph{Single-site erasure bound.}
We now translate this reduced-state statement into an AQEC bound. Since all one-site reduced states are identical up to the natural identification 
of the physical sites, the complementary channel for flagged single-site erasure becomes
\begin{equation}\label{eq:comp_chan_su2}
    \widehat{\mathcal N\circ\mathcal E}(\rho_L)
    =
    \sum_i p_i\ket i\!\bra i_{F_E}\otimes \rho^{(i)}(\rho_L)
    \cong
    \omega_{\mathrm{flag}}\otimes \rho^{(1)}(\rho_L),
\end{equation}
where
\[
    \omega_{\mathrm{flag}}
    :=
    \sum_i p_i\ket i\!\bra i_{F_E} .
\]
We compare this channel with the constant channel
\[
    \Lambda_0(\rho)
    :=
    \operatorname{Tr}(\rho)\,
    \omega_{\mathrm{flag}}\otimes \frac{I_{2j+1}}{2j+1} .
\]
The explicit fidelity calculation, given in Proposition~\ref{prop:SU2covFidelity}, yields
\begin{equation}\label{eq:fidelity_su2}
    F\!\left(\widehat{\mathcal N\circ\mathcal E},\Lambda_0\right)
    =
    1-
    \frac{9}{32n^2j(j+1)}
    +O_{j}(n^{-3}) .
\end{equation}
Together with the complementary-channel characterization of AQEC, this yields the following result.

\begin{theorem}\label{thrm:su2scale}
Let $\mathcal H_L=V_{1/2}$ and $\mathcal H_P=(V_j)^{\otimes n}$, with $n$ odd and $j\in\frac12+\mathbb Z$. Equip $\mathcal H_L$ with the fundamental spin-$1/2$ 
representation of $\mathfrak{su}(2)$ and $\mathcal H_P$ with the transversal spin-$j$ representation
\begin{equation}
    \frac{i\sigma_a}{2}
    \longmapsto
    iJ_a
    \equiv
    \sum_{k=1}^n iJ_a^{(k)},
    \qquad a=x,y,z .
\end{equation}
Let $\mathcal E$ be an $\mathfrak{su}(2)$-covariant encoding whose code space is invariant under cyclic permutations of the physical qudits. 
Then, for the flagged single-qudit erasure channel
\[
    \mathcal N(\sigma)
    =
    \sum_{i=1}^n p_i\ket i\!\bra i_F
    \otimes \ket e\!\bra e_{A_i}
    \otimes \operatorname{Tr}_i(\sigma),
\]
there exists a recovery channel $\mathcal R$ such that
\[
    d(\mathcal R\mathcal N\mathcal E,\mathrm{id})
    \le
    \frac{3}{4\sqrt{2j(j+1)}}\frac{1}{n}
    +O_{j}(n^{-2}) .
\]
In particular, the code is $O\!\left(1/(\sqrt{j(j+1)}\,n)\right)$-correctable against flagged single-qudit erasure.
\end{theorem}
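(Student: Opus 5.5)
The plan is to combine the reduced-state formula \eqref{eq:reduced_state_su2} with Theorem~\ref{thm:beny}. First, invoke Theorem~\ref{thm:beny}: the optimal recovery error equals the minimal distance between the complementary channel \eqref{eq:comp_chan_su2} and a constant channel. It therefore suffices to exhibit one constant channel, and we take $\Lambda_0(\rho)=\operatorname{Tr}(\rho)\,\omega_{\mathrm{flag}}\otimes I_{2j+1}/(2j+1)$. Cyclic invariance gives $\alpha^{(i)}=1/n$, and through \eqref{eq:beta_through_alpha} every site has the same $\beta=3/(2nj(j+1)(2j+1))$. Hence the flag register factors out exactly, and the problem reduces to comparing the single-site channel $\rho_L\mapsto \rho^{(1)}(\rho_L)$ with the replacement channel onto the maximally mixed state. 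This factorization holds for arbitrary $p_i$.

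Next, compute the worst-case entanglement fidelity. Purify an arbitrary logical input $\rho=\sum_k\lambda_k\ket{k}\bra{k}$ as $\ket{\psi_\rho}$ on $V_{1/2}\otimes R$. The two Choi-type outputs are
\[
\tau_0=\tfrac{1}{2j+1}I\otimes\rho_R,\qquad \tau=\tau_0+\beta\sum_a J_a\otimes X_a,
\]
where $X_a=\operatorname{Tr}_L[(\sigma_a\otimes I)\ket{\psi_\rho}\bra{\psi_\rho}]$. This follows by linearity of $\Phi^{(1)}$, which sends $I/2\mapsto I/(2j+1)$ and $\sigma_a/2\mapsto \beta J_a$. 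Write $\tau=\tau_0+\beta\Delta$, where $\operatorname{Tr}_{V_j}\Delta=0$ because the $J_a$ are traceless. Expanding the root fidelity to second order around $\tau_0$ makes the first-order term vanish: it is proportional to $\operatorname{Tr}\Delta=0$. The leading term is then
\[
f(\tau_0,\tau)=1-\tfrac{\beta^2}{8}\,\langle\Delta,\Delta\rangle_{\tau_0}+O(\beta^3),
\]
where $\langle\cdot,\cdot\rangle_{\tau_0}$ is the Bures/SLD quadratic form at $\tau_0$. Because $\tau_0$ is a product of the maximally mixed state with $\rho_R$, this form is explicit. In the eigenbasis of $\rho_R$ it is $\sum_{kl}\frac{2}{p_k+p_l}|\Delta_{kl}|^2$ with $p_k=\lambda_k/(2j+1)$. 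Using $\operatorname{Tr}(J_aJ_b)=\tfrac{j(j+1)(2j+1)}{3}\delta_{ab}$ and evaluating $X_a$ in the Schmidt basis ($(X_a)_{kl}=\sqrt{\lambda_k\lambda_l}\,(\sigma_a)_{lk}$, up to transpose conventions), the form reduces to
\[
\tfrac{j(j+1)(2j+1)^2}{3}\sum_a\sum_{kl}\frac{2\lambda_k\lambda_l}{\lambda_k+\lambda_l}|(\sigma_a)_{kl}|^2 .
\]

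Finally, minimize over $\rho$. The quantity $\frac{2\lambda_k\lambda_l}{\lambda_k+\lambda_l}$ is the harmonic mean and is maximized by the maximally mixed input, for which $\sum_a\sum_{kl}\tfrac12|(\sigma_a)_{kl}|^2=3\cdot 2\cdot\tfrac12=3$. Inserting $\beta$ should give $1-F=\frac{9}{32n^2j(j+1)}+O_j(n^{-3})$, which is \eqref{eq:fidelity_su2} and is also what Proposition~\ref{prop:SU2covFidelity} provides. Then $d=\sqrt{1-F}=\frac{3}{4\sqrt{2j(j+1)}}\frac1n+O_j(n^{-2})$, and Theorem~\ref{thm:beny} yields the required recovery $\mathcal R$.

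The main obstacle is making the second-order expansion uniform over all inputs $\rho$. When $\rho$ has a small eigenvalue, $\tau_0$ is nearly singular, and a perturbative expansion of $\sqrt{\cdot}$ is delicate. To handle this, I would restrict $\Delta$ to the support of $\tau_0$, which is automatic since $\operatorname{supp}X_a\subseteq\operatorname{supp}\rho_R$. I would also note that $\tau=\tau_0^{1/2}(I+\beta K)\tau_0^{1/2}$, with $\|K\|\le 3(2j+1)\cdot j/\!\cdot$ bounded independently of $\rho$ because $X_a=\sqrt{\rho_R}\,\tilde\sigma_a\sqrt{\rho_R}$. This makes the remainder $O_j(\beta^3)$ uniformly. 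Alternatively, one can simply take $\rho=I/2$, the minimizer by the concavity/harmonic-mean argument, and cite Proposition~\ref{prop:SU2covFidelity} for the uniform control.
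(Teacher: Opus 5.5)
Your argument is correct and has the same skeleton as the paper: Theorem~\ref{thm:beny}, the constant channel onto $I_{2j+1}/(2j+1)$ with the flag factored out, and $1-F=\frac{9}{32n^2j(j+1)}+O_j(n^{-3})$. Your second-order algebra is also right. The real difference is how you handle the worst case over inputs.

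The paper never expands at a general $\rho$. Lemma~\ref{fidelity_opt} uses Uhlmann to show that $\rho\mapsto F_\rho$ is convex and $SU(2)$-invariant, so twirling makes $I/2$ the \emph{exact} minimizer. The only expansion then needed is at $\tau_0\propto I$, where the remainder is trivially controlled. You instead compute the leading coefficient for every input. That gives more information, but it forces you to control the remainder uniformly as $\tau_0$ degenerates, and you assert this rather than prove it.

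Two points need tightening. First, the termwise claim that the harmonic mean is maximized at $I/2$ is false for $k=l$: those terms contribute $\sum_k\lambda_k=1$ for every $\rho$. What is true is that, since $\sum_a|(\sigma_a)_{kl}|^2=2-\delta_{kl}$, the whole sum equals $1+8\lambda_1\lambda_2$, which is maximized at $\lambda_1=\lambda_2=\tfrac12$. Second, in your fallback, the exact minimality of $I/2$ comes from Lemma~\ref{fidelity_opt}, not from the harmonic-mean or concavity argument. That argument only locates the minimizer of the $\beta^2$ coefficient.

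Your relative-perturbation idea does close the uniformity issue quickly, because the theorem only needs a lower bound on $F$. Take the uncompressed $K=(2j+1)\sum_aJ_a\otimes\sigma_a^{T}$, so that $\tau=\tau_0^{1/2}(I+\beta K)\tau_0^{1/2}$ and $\|K\|\le 3j(2j+1)$. Set $B_1=\tau_0^{1/2}$ and $B_2=\tau_0^{1/2}(I+\beta K)^{1/2}$, which satisfy $B_1B_1^\dagger=\tau_0$ and $B_2B_2^\dagger=\tau$. Uhlmann then gives $f(\tau_0,\tau)\ge|\operatorname{Tr}(B_1^\dagger B_2)|=\operatorname{Tr}[\tau_0(I+\beta K)^{1/2}]=1-\tfrac{\beta^2}{8}\operatorname{Tr}(\tau_0K^2)+O_j(\beta^3)$, uniformly in $\rho$. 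The linear term drops because $\operatorname{Tr}(\tau_0K)=0$, and $\operatorname{Tr}(\tau_0K^2)=(2j+1)^2j(j+1)$ for every $\rho$. This route avoids Lemma~\ref{fidelity_opt} entirely, whereas the paper's route yields the exact worst-case asymptotics directly rather than only a bound.
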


\begin{proof}
By Eq.~\eqref{eq:fidelity_su2} and the definition of the channel distance in Eq.~\eqref{eq:distance_def},
\[
    d\!\left(\widehat{\mathcal N\circ\mathcal E},\Lambda_0\right)
    =
    \frac{3}{4\sqrt{2j(j+1)}}\frac{1}{n}
    +O_{j}(n^{-2}) .
\]
The complementary-channel theorem, Eq.~\eqref{eq:beny}, gives
\[
    \min_{\mathcal R}
    d(\mathcal R\mathcal N\mathcal E,\mathrm{id})
    =
    \min_{\Lambda\,\mathrm{constant}}
    d\!\left(\widehat{\mathcal N\circ\mathcal E},\Lambda\right)
    \le
    d\!\left(\widehat{\mathcal N\circ\mathcal E},\Lambda_0\right),
\]
and the stated bound follows.
\end{proof}

Known lower bounds for covariant codes under erasure noise~\cite{Faist2020} give
\[
    \min_{\mathcal R}
    d(\mathcal R\mathcal N\mathcal E,\mathrm{id})
    \ge
    \frac{1}{4\sqrt{2}\,j}\frac{1}{n} .
\]
Thus the construction is optimal in its scaling with $n$ and $j$, up to constant factors.

\section{\texorpdfstring{$SU(d)$}{SU(d)}-covariant Approximate Quantum Error Correction Codes}\label{sec:main_sud}

We now extend the construction to $SU(d)$. The logical system carries the fundamental representation, while each physical subsystem carries an antisymmetric 
tensor-power representation. As in the $SU(2)$ case, covariance fixes the symmetry sector in which the code can live, but leaves a freedom in the corresponding multiplicity space. 
We use this freedom to choose a code space that is invariant under suitable permutations of the physical subsystems. For single-qudit flagged erasure, 
cyclic permutation symmetry is sufficient, since it spreads the logical information uniformly over the physical sites, 
so that each erased-site reduced state becomes nearly independent of the logical input.

\paragraph{Single-site physical representation.}

We first define the single-site physical representation. For $1\le r\le d-1$, let
\begin{equation}\label{eq:vomegar_def}
V_{\omega_r}:=\bigwedge^r \mathbb C^d .
\end{equation}
This space has basis vectors
$$
e_I:=e_{i_1}\wedge\cdots\wedge e_{i_r},
\qquad
1\le i_1<\cdots<i_r\le d,
$$
where $\wedge$ denotes the wedge product. Its dimension is
$$
\dim V_{\omega_r}=\binom{d}{r}.
$$
The action of $X\in\mathfrak{su}(d)$ on $V_{\omega_r}$ is defined by
\begin{equation}\label{eq:sud_vomegar_def}
\pi_{\omega_r}(X)
\left(v_1\wedge\cdots\wedge v_r\right)
:=
\sum_{m=1}^r
v_1\wedge\cdots\wedge Xv_m\wedge\cdots\wedge v_r .
\end{equation}
This is an irreducible representation; see Ref.~\cite{FultonHarris1991}. The notation $V_{\omega_r}$ is explained in Appendix~\ref{sec:supple_sud}.

We use the standard generators of the fundamental representation of $\mathfrak{su}(d)$. Let ${t_a}_{a=1}^{d^2-1}$ be traceless Hermitian $d\times d$ matrices satisfying
\begin{equation}
t_a t_b
=
\frac{1}{2d}\delta_{ab}I_d
+
\frac12
\sum_{c=1}^{d^2-1}
\left(i f_{abc}+d_{abc}\right)t_c ,
\end{equation}
where $f_{abc}$ are completely antisymmetric structure constants and $d_{abc}$ are completely symmetric coefficients. 
Further properties of these generators are collected in Appendix~\ref{sec:supple_sud}.

We take the logical Hilbert space~\eqref{eq:log_space_rep} to be the fundamental representation
$$
\mathcal H_L:=V_{\omega_1}\cong \mathbb C^d ,
$$
and the physical Hilbert space~\eqref{eq:phys_space_rep} to be
$$
\mathcal H_P:=(V_{\omega_r})^{\otimes n},
$$
where $r\neq 0,d$ and $nr\equiv 1\pmod d$. We denote the action of the generator $t_a$ on the $j$th physical subsystem by
\begin{equation}\label{eq:su_d_gens}
    \pi_{\omega_r}^{(j)}(t_a)\equiv T_a^{(j)},
    \qquad
    a=1,\ldots,d^2-1 .
\end{equation}

Thus the transversal physical representation~\eqref{eq:alg_phys_space_rep} is specified on generators by
$$
t_a
\longmapsto
T_a
:=
\sum_{i=1}^n T_a^{(i)},
\qquad
a=1,\ldots,d^2-1 .
$$
In this section the representation $V_{\omega_r}$ is fixed, so we omit the subscript $\omega_r$ from the local generators. 
In later sections and in the main theorems, we will often specify the physical representation by giving the images of the Lie-algebra generators.

The condition $nr\equiv 1\pmod d$ ensures that $\mathcal H_P$ contains a copy of the fundamental representation. 
Equivalently, the $V_{\omega_1}$-isotypic component is nonzero, so it can contain the image of an $SU(d)$-covariant encoding of the logical space. 
Lemma~\ref{lem:fundamental_in_extirior_power} proves that this condition is sufficient for $(V_{\omega_r})^{\otimes n}$ to contain $V_{\omega_1}$. 
It is not necessary in general. For example, the $SU(2)$ construction in Section~\ref{sec:main_su2} used arbitrary half-integer spin-$j$ irreducible representations 
as physical subsystems.

\paragraph{One-site reduced state.}

The main object is again the one-site reduced state $\rho^{(i)}(\rho_L)$ defined in Eq.~\eqref{eq:reduced_state}. 
A detailed derivation of its final form is given in Proposition~\ref{prop:SUdcov}. Here we describe the representation-theoretic structure of the calculation.

Write the logical state in the generalized Bloch form
\begin{equation}\label{eq:general_bloch}
\rho_L
=
\frac{I}{d}
+
\left(\rho_L-\frac{I}{d}\right)
=
\frac{I}{d}
+
\sum_a r_a t_a .
\end{equation}
The reduced-state map extends linearly to a map
$$
    \operatorname{End}(V_{\omega_1})
    \xrightarrow{\;\Phi^{(i)}\;}
    \operatorname{End}(V_{\omega_r}) .
$$
By covariance, this map is an $SU(d)$-intertwiner~\eqref{eq:intertwiner_def}, where both operator spaces carry the adjoint action. 
As shown in Lemma~\ref{lem:adjoint_in_end}, $\operatorname{End}(V_{\omega_r})$ contains unique copies of both the trivial representation and the adjoint representation. 
Therefore, using the intertwiner decomposition~\eqref{eq:intertwiner_decomp}, the identity component of $\operatorname{End}(V_{\omega_1})$ must map 
to the identity component of $\operatorname{End}(V_{\omega_r})$, while the adjoint component must map to the adjoint component. Hence,
\begin{equation}\label{eq:linear_ext_of_red_state}
\begin{gathered}
\Phi^{(i)}\left(\frac{I_d}{d}\right)
=
\frac{I_{V_{\omega_r}}}{\dim V_{\omega_r}},
\\
\Phi^{(i)}\left(\sum_a r_a t_a\right)
=
\beta^{(i)}
\sum_a r_a T_a^{(i)} .
\end{gathered}
\end{equation}
Thus the reduced state has the form
\begin{equation}\label{eq:reduced_state_sud}
\rho^{(i)}(\rho_L)
=
\frac{I_{V_{\omega_r}}}{\dim V_{\omega_r}}
+
\beta^{(i)}
\sum_a r_a T_a^{(i)}.
\end{equation}

The coefficients $\beta^{(i)}$ are determined by the compression of local generators to the code space. Since the encoding is covariant, 
the total physical action restricts to the logical fundamental action. For each local generator, the compression map $X\mapsto \mathcal{V}^\dagger X \mathcal{V}$ is again 
an $SU(d)$-intertwiner $\operatorname{End}(V_{\omega_r})\rightarrow \operatorname{End}(V_{\omega_1})$, and therefore
$$
\mathcal{V}^\dagger T_a^{(i)}\mathcal{V}
=
\alpha^{(i)}\bar t_a ,
$$
where
$$
\sum_i\alpha^{(i)}=1,
$$
and $\bar t_a$ denotes the fundamental generator acting on the logical space. The relation between $\alpha^{(i)}$ and $\beta^{(i)}$ 
is obtained by comparing Hilbert--Schmidt inner products, exactly as in the $SU(2)$ case, giving
$$
\beta^{(i)}
=
\frac{\alpha^{(i)}}{2\kappa_{\omega_r}},
\qquad
\kappa_{\omega_r}
=
\frac12\binom{d-2}{r-1}.
$$

It remains to choose the multiplicity vector. We choose $\ket m$ to be an eigenvector of the cyclic shift operator. 
Equivalently, the code space $V_{\omega_1}\otimes\ket m$ is invariant under cyclic permutations of the physical qudits. 
The same argument as in Section~\ref{sec:main_su2} then implies that all coefficients $\alpha^{(i)}$ are equal. Since they sum to one,
$$
\alpha^{(i)}=\frac1n
\qquad
\text{for every } i .
$$
For this cyclically invariant $SU(d)$-covariant code, the one-site reduced state becomes
$$
\rho^{(i)}(\rho_L)
=
\frac{I_{V_{\omega_r}}}{\dim V_{\omega_r}}
+
\frac{1}{2\kappa_{\omega_r}n}
\sum_a r_a T_a^{(i)} .
$$
The logical-state-dependent part is suppressed by $1/n$, so the erased-site state approaches the maximally mixed state as $n$ grows; 
see Fig.~\ref{fig:theme}(c) for a schematic illustration.

\paragraph{Single-site erasure bound.}

We now translate this reduced-state statement into an AQEC bound. Since the one-site reduced states are identical up to the natural identification of physical sites, 
the complementary channel~\eqref{eq:dual_to_erasure} for flagged single-qudit erasure is
$$
\widehat{\mathcal N\circ\mathcal E}(\rho_L)
\cong
\omega_{\mathrm{flag}}
\otimes
\left(
\frac{I_{V_{\omega_r}}}{\dim V_{\omega_r}}
+
\frac{1}{2\kappa_{\omega_r}n}
\sum_a r_a T_a^{(1)}
\right),
$$
where
$$
\omega_{\mathrm{flag}}
:=
\sum_i p_i\ket i\bra i_{F_E}.
$$
We compare this channel with the constant channel~\eqref{eq:constant_channel_def}
$$
\Lambda_0(\rho)
:=
\operatorname{Tr}(\rho)\;
\omega_{\mathrm{flag}}
\otimes
\frac{I_{V_{\omega_r}}}{\dim V_{\omega_r}} .
$$
The explicit fidelity calculation, given in Proposition~\ref{prop:SUdcovfidelity}, yields
\begin{equation}\label{eq:fidelity_sud}
F\left(\widehat{\mathcal N\circ\mathcal E},\Lambda_0\right)
=
1
-
\frac{(d-1)^2(d+1)}{8r(d-r)}
\frac{1}{n^2}
+
O_{d}(n^{-3}) .
\end{equation}
Together with the complementary-channel characterization of AQEC, this gives the following result.

\begin{theorem}\label{thrm:sudscale}
Let $\mathcal H_L=V_{\omega_1}$ and $\mathcal H_P=(V_{\omega_r})^{\otimes n}$, with $r\neq 0,d$ and $nr\equiv 1\pmod d$. 
Equip $\mathcal H_L$ with the fundamental representation of $\mathfrak{su}(d)$ and $\mathcal H_P$ with the transversal representation
\begin{equation}
t_a
\longmapsto
T_a
\equiv
\sum_{i=1}^n T_a^{(i)},
\qquad
a=1,\ldots,d^2-1,
\end{equation}
where $T_a^{(i)}$ is the representation of $t_a$ on the $i$th physical space $V_{\omega_r}$, acting as in Eq.~\eqref{eq:sud_vomegar_def}. 
Let $\mathcal E$ be an $\mathfrak{su}(d)$-covariant encoding whose code space is invariant under cyclic permutations of the physical qudits. 
Then, for the flagged single-qudit erasure channel
$$
\mathcal N(\sigma)
=
\sum_{i=1}^n
p_i\ket i\bra i_F
\otimes
\ket e\bra e_{A_i}
\otimes
\operatorname{Tr}_i(\sigma),
$$
there exists a recovery channel $\mathcal R$ such that
$$
d(\mathcal R\mathcal N\mathcal E,\mathrm{id})
\le
\frac{d-1}{2\sqrt 2}
\sqrt{\frac{d+1}{r(d-r)}}
\frac1n
+
O_{d}(n^{-2}) .
$$
In particular, the code is
$$
O\left(
(d-1)
\sqrt{\frac{d+1}{r(d-r)}}
\frac1n
\right)
$$
-correctable against flagged single-qudit erasure.
\end{theorem}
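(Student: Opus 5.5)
The plan mirrors the proof of Theorem~\ref{thrm:su2scale}: bound the minimal recovery error by the distance of the complementary channel to a constant channel, using Theorem~\ref{thm:beny}, and then compute that distance from the explicit one-site reduced state. There are three steps.

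\emph{Step 1: reduced state and complementary channel.} By covariance and cyclic invariance, the one-site reduced state is $\rho^{(i)}(\rho_L)=I/D+\frac{1}{2\kappa_{\omega_r}n}\sum_a r_aT_a^{(i)}$ for every $i$, where $D=\binom{d}{r}$. The coefficient $\beta^{(i)}=\alpha^{(i)}/(2\kappa_{\omega_r})$ comes from the Hilbert--Schmidt comparison. This rests on $\operatorname{Tr}(T_aT_b)=\kappa_{\omega_r}\delta_{ab}$ together with the normalization $\operatorname{Tr}(t_at_b)=\tfrac12\delta_{ab}$. Because all sites have the same reduced state, the complementary channel factorizes as $\omega_{\mathrm{flag}}\otimes\rho^{(1)}(\rho_L)$. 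We compare it with $\Lambda_0=\omega_{\mathrm{flag}}\otimes I/D$. The flag factor is common to both channels, so it drops out of the fidelity. It therefore suffices to bound the worst-case entanglement fidelity between $\Phi:=\rho^{(1)}$ and the replacement channel $\rho\mapsto\operatorname{Tr}(\rho)\,I/D$.

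\emph{Step 2: fidelity computation (the main work, Proposition~\ref{prop:SUdcovfidelity}).} Take an input $\rho$ with purification $\ket{\psi_\rho}$ on $L\otimes R$.
\begin{itemize}
\item The output of the constant channel is $\frac{I}{D}\otimes\rho_R$.
\item The output of $\Phi$ is $\frac{I}{D}\otimes\rho_R+\epsilon X$, where $\epsilon=\frac{1}{2\kappa_{\omega_r}n}$ and $X=\sum_a T_a\otimes Y_a$ with $Y_a=\operatorname{Tr}_L[(\bar t_a\otimes I)\ket{\psi_\rho}\bra{\psi_\rho}]$ (up to transposition conventions).
\end{itemize}
The perturbation is traceless, so the root fidelity of $\sigma$ and $\sigma+\epsilon X$ expands as $1-\frac{\epsilon^2}{8}\operatorname{Tr}\!\big(X\,\Omega_\sigma^{-1}(X)\big)+O(\epsilon^3)$, i.e.\ one-eighth of the Bures metric. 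Since $\sigma$ is a product with the maximally mixed first factor, the inverse of the symmetric-logarithmic-derivative superoperator acts only through $\rho_R$. This reduces the quadratic form to $D\sum_{a,b}\operatorname{Tr}(T_aT_b)\,g_{\rho_R}(Y_a,Y_b)=D\kappa_{\omega_r}\sum_a g_{\rho_R}(Y_a,Y_a)$. Here $g$ is the SLD-type metric, and $Y_a=\sqrt{\rho_R}\,\bar t_a^{T}\sqrt{\rho_R}$ in the Schmidt basis.

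\emph{Step 3: worst case over inputs.} We minimize fidelity, i.e.\ maximize the quadratic form over $\rho$. For the SLD metric, $g(\sqrt\rho A\sqrt\rho,\cdot)\le\operatorname{Tr}(\rho A^2)$, with equality at $\rho=I/d$. So the worst case is the maximally entangled input, where the form equals $\frac1d\sum_a\operatorname{Tr}(t_a^2)=\frac{d^2-1}{2d}$. This gives
\[
1-F=\frac{\epsilon^2}{8}D\kappa_{\omega_r}\frac{d^2-1}{2d}+O(n^{-3}).
\]
Substituting $\kappa_{\omega_r}=\frac12\binom{d-2}{r-1}$ and using $\binom{d}{r}/\binom{d-2}{r-1}=\frac{d(d-1)}{r(d-r)}$ yields $\frac{(d-1)^2(d+1)}{8r(d-r)n^2}$, which is Eq.~\eqref{eq:fidelity_sud}. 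The $O(\epsilon^3)$ remainder is uniform in $\rho$: the perturbation $\epsilon X$ is dominated by $\frac{I}{D}\otimes\rho_R$ up to a $\rho_R$-independent factor, so the expansion is controlled even for rank-deficient $\rho_R$ by restricting to its support.

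\emph{Conclusion.} Taking $d=\sqrt{1-F}$ gives
\[
d\big(\widehat{\mathcal N\circ\mathcal E},\Lambda_0\big)=\frac{d-1}{2\sqrt2}\sqrt{\frac{d+1}{r(d-r)}}\,\frac1n+O_d(n^{-2}).
\]
Equation~\eqref{eq:beny} then bounds $\min_{\mathcal R}d(\mathcal R\mathcal N\mathcal E,\mathrm{id})$ by this quantity, which proves the theorem.

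I expect the main obstacle to be Step 3: proving that the maximally entangled input is worst case, and that the second-order expansion holds uniformly near pure inputs. If the uniformity argument fails, I would settle for the inequality direction only, bounding $1-f$ through $\|\cdot\|_1$ and the operator inequality above, which still gives the claimed $1/n$ bound.
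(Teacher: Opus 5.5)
Your Steps 1--2 follow the paper. Your harmonic--arithmetic-mean bound $\sum_{k,l}\frac{2\mu_k\mu_l}{\mu_k+\mu_l}|A_{kl}|^2\le\operatorname{Tr}(\rho A^2)$ does show that the $\epsilon^2$ coefficient is largest at $\rho=I/d$. For that you also need the Casimir identity $\sum_a t_a^2=\frac{d^2-1}{2d}I$, which you never state; it makes $\sum_a\operatorname{Tr}(\rho_R(t_a^{T})^2)$ independent of the input.

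\textbf{The gap.} Comparing quadratic coefficients pointwise does not control $\min_\rho F_\rho$ unless the cubic remainder is bounded uniformly in $\rho$. That uniformity must include inputs whose $\rho_R$ has arbitrarily small nonzero eigenvalues, where a generic estimate such as Lemma~\ref{lem:bures-expansion} (which needs $\lambda_{\min}\ge c>0$) gives nothing. Your domination remark does not supply it: $\sigma+\epsilon X\ge(1-c\epsilon)\sigma$ plus monotonicity of fidelity only gives $F_\rho\ge\sqrt{1-c\epsilon}$, which is first-order control. Your fallback fails for the same reason. Bounding $1-f$ by $\frac12\|\cdot\|_1$ gives $1-F=O(1/n)$, so the recovery distance is only $O(n^{-1/2})$ --- the weaker rate the paper gets for arbitrary noise, not the claimed $1/n$.

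\textbf{The missing idea: the paper's Lemma~\ref{fidelity_opt}.} Both $\widehat{\mathcal N\circ\mathcal E}$ and $\Lambda_0$ are $SU(d)$-covariant. Hence $\rho\mapsto F_\rho$ is convex (by Uhlmann it is a maximum of the functions $|\operatorname{Tr}(\rho X_W)|$) and invariant under $\rho\mapsto u\rho u^\dagger$. Twirling then gives $F_{I/d}\le F_\rho$ exactly. You only need to expand at the full-rank point $I/(Dd)$, and your Step~3 becomes unnecessary.

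\textbf{Repairing your own route.} Use a variational bound instead. In the Schmidt basis, $\sqrt{\sigma}\,(\sigma+\epsilon X)\sqrt{\sigma}=D^{-2}P(I+\delta)P$, with $P=I_D\otimes\rho_R$ and $\delta=2D\epsilon\sum_aT_a\otimes t_a^{T}$. Therefore $F_\rho=D^{-1}\|(I+\delta)^{1/2}P\|_1\ge D^{-1}\operatorname{Tr}[(I+\delta)^{1/2}P]$. Expand the operator square root:
the remainder is $O(\|\delta\|_\infty^3)\operatorname{Tr}P$, uniformly in $\rho$;
the linear term vanishes because $\operatorname{Tr}T_a=0$;
the quadratic term is independent of $\rho$ by the Casimir identity.
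This gives $F_\rho\ge1-\frac{(d-1)^2(d+1)}{8r(d-r)n^2}-O_d(n^{-3})$ for every $\rho$, which is all the theorem needs.

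\textbf{A normalization slip.} Your stated constant does not follow from your own formulas. With $\operatorname{Tr}(t_at_b)=\frac12\delta_{ab}$, the Bloch coefficients are $r_a=2\operatorname{Tr}(\rho_Lt_a)$. So the perturbation is $\epsilon X$ with $X=2\sum_aT_a\otimes Y_a$, not $\sum_aT_a\otimes Y_a$. As written, $\frac{\epsilon^2}{8}D\kappa_{\omega_r}\frac{d^2-1}{2d}$ with $\epsilon=1/(2\kappa_{\omega_r}n)$ equals $\frac{(d-1)^2(d+1)}{32r(d-r)n^2}$, a factor $4$ below Eq.~\eqref{eq:fidelity_sud}. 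Restoring the factor $2$ gives $1-F=\frac{\epsilon^2D\kappa_{\omega_r}(d^2-1)}{4d}$, which is exactly $\frac{(d-1)^2(d+1)}{8r(d-r)n^2}$.
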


\begin{proof}
By Eq.~\eqref{eq:fidelity_sud} and the definition of the channel distance in Eq.~\eqref{eq:distance_def},
$$
d\left(\widehat{\mathcal N\circ\mathcal E},\Lambda_0\right)
=
\frac{d-1}{2\sqrt 2}
\sqrt{\frac{d+1}{r(d-r)}}
\frac1n
+
O_{d}(n^{-2}) .
$$
The complementary-channel theorem, Eq.~\eqref{eq:beny}, gives
$$
\min_{\mathcal R}
d(\mathcal R\mathcal N\mathcal E,\mathrm{id})
=
\min_{\Lambda,\mathrm{constant}}
d\left(\widehat{\mathcal N\circ\mathcal E},\Lambda\right)
\le
d\left(\widehat{\mathcal N\circ\mathcal E},\Lambda_0\right),
$$
and the stated bound follows.
\end{proof}

For $r=1$, the upper bound coincides with the scaling obtained in Ref.~\cite{Kong2022} for a randomized code construction. The lower bound of Ref.~\cite{Faist2020} gives
$$
\min_{\mathcal R}
d(\mathcal R\mathcal N\mathcal E,\mathrm{id})
\ge
\frac{1}{2n}.
$$
Thus the construction is optimal in its scaling with $n$, up to a constant factor depending on $d$ and $r$.

\subsection{Covariant Decoder}\label{subsec:main_decoder}

We now discuss an explicit decoder for the covariant codes constructed above. In approximate quantum error correction, 
the recovery map is not fixed uniquely by the code and the noise model. Instead, one has to choose a recovery channel that gives a small error with 
respect to the chosen distance measure. Determining the optimal recovery, including the sharp constants in its dependence on the physical parameters, can be difficult. 
We therefore use a natural recovery map that is known to be near-optimal.

\paragraph{Petz recovery map.}

We use the transpose channel, also called the Petz recovery map. For a channel $\mathcal N$ and a reference state $\omega$, the Petz map is
$$
\mathcal P_{\omega,\mathcal N}(\sigma)
=
\omega^{1/2}
\mathcal N^*
\left(
\mathcal N(\omega)^{-1/2}
\sigma
\mathcal N(\omega)^{-1/2}
\right)
\omega^{1/2},
$$
where $\mathcal N^*$ is the adjoint of $\mathcal N$ with respect to the Hilbert--Schmidt inner product, and the inverse is taken on the support of $\mathcal N(\omega)$.

\paragraph{Near-optimality.}

The analysis uses two main facts. First, for the reference state $\omega=I_d/d$, the Petz-recovered logical channel
$$
\Phi
:=
\mathcal R_{\mathrm{Petz}}
\circ
\mathcal N
\circ
\mathcal E
$$
is $SU(d)$-covariant. For related uses of channel covariance, see Refs.~\cite{Zhou2021, Alexander2025}. Since the logical representation is irreducible, 
Schur's lemma implies that $\Phi$ is a depolarizing channel:
$$
\Phi(\rho)
=
\lambda_{\mathrm{Petz}}\rho
+
\left(1-\lambda_{\mathrm{Petz}}\right)
\frac{I_d}{d}\operatorname{Tr}\rho .
$$
The depolarizing parameter is
\begin{equation}\label{eq:lambda_P}
\lambda_{\mathrm{Petz}}
=
\frac{2}{d^2-1}
\sum_{i=1}^n p_i
\sum_{a=1}^{d^2-1}
\operatorname{Tr}
\left[
S_i^{-1/2}
\mathcal C_i(t_a)
S_i^{-1/2}
\mathcal C_i(t_a)
\right],
\end{equation}
where
$$
\mathcal C_i(\rho_L)
:=
\operatorname{Tr}_i\left(\mathcal{V}\rho_L\mathcal{V}^\dagger\right),
\qquad
S_i
:=
\operatorname{Tr}_i\left(\mathcal{V}\mathcal{V}^\dagger\right).
$$
The corresponding entanglement fidelity is
$$
F(\Phi,\mathrm{id})
=
\sqrt{
\lambda_{\mathrm{Petz}}
+
\frac{1-\lambda_{\mathrm{Petz}}}{d^2}
} .
$$
Thus, determining the Petz recovery error with sharp constants is equivalent to determining the asymptotics of $\lambda_{\mathrm{Petz}}$.

The exact expression in Eq.~\eqref{eq:lambda_P} is difficult to analyze directly. It requires control of the large-$n$ behavior of 
the operators $S_i^{-1/2}$ and $\mathcal C_i(t_a)$ on the $(n-1)$-site space. These operators have support in representation sectors of $(V_{\omega_r})^{\otimes(n-1)}$ 
where the adjoint representation can appear with large multiplicity. This multiplicity structure makes a direct asymptotic calculation substantially more 
involved than the one-site reduced-state calculation. Similar difficulties also appear in the analysis of multi-qudit erasures below.

For this reason, we use the near-optimality of the transpose channel established in Ref.~\cite{Ng2010}. Combined with the bound in Theorem~\ref{thrm:sudscale}, 
this shows that the Petz recovery achieves the same $1/n$ scaling of the recovery error, up to a constant factor. The explicit form of the Petz decoder and its 
asymptotic performance are stated in Theorem~\ref{thm:petz_decoder_explicit}:

\begin{theorem}
For the encoding $\mathcal E$ and the flagged single-qudit erasure channel $\mathcal N$ defined in Theorem~\ref{thrm:sudscale}, 
the Petz recovery map with reference state $\omega=I_d/d$ has the form
$$
\mathcal R_{\mathrm{Petz}}(X)
=
\sum_{i:,p_i>0}
\mathcal{V}^\dagger
\left(
S_i^{-1/2}
\langle i,e|X|i,e\rangle_{F,A_i}
S_i^{-1/2}
\otimes I_{A_i}
\right)
\mathcal{V},
$$
where $\mathcal{V}$ is the encoding isometry and
$$
S_i
:=
\operatorname{Tr}_i\left(\mathcal{V}\mathcal{V}^\dagger\right).
$$
Moreover, this recovery is near-optimal in the sense that
$$
    d\left(
    \mathcal R_{\mathrm{Petz}}
    \circ
    \mathcal N
    \circ
    \mathcal E,
    \mathrm{id}
    \right)
    =
    O_{d}(n^{-1}) .
$$
\end{theorem}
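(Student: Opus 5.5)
The proof has two parts: an explicit computation of the Petz map for this noise model, and a transfer of the bound of Theorem~\ref{thrm:sudscale} via the near-optimality of the transpose channel~\cite{Ng2010}.

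\emph{Explicit form.} We take $\omega=I_d/d$ and compose the encoding with the flagged erasure, $\mathcal M:=\mathcal N\circ\mathcal E$. Then
\[
\mathcal M(\omega)=\sum_i p_i\ket i\bra i_F\otimes\ket e\bra e_{A_i}\otimes \tfrac1d\operatorname{Tr}_i(\mathcal V\mathcal V^\dagger)
=\sum_i \tfrac{p_i}{d}\ket{i,e}\bra{i,e}\otimes S_i .
\]
This operator is block diagonal in the flag. Its inverse square root, taken on the support, is
\[
\mathcal M(\omega)^{-1/2}=\sum_{i:\,p_i>0}\sqrt{d/p_i}\,\ket{i,e}\bra{i,e}\otimes S_i^{-1/2}.
\]
Next we compute the adjoint. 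We have $\mathcal N^*(Y)=\sum_i p_i\,\langle i,e|Y|i,e\rangle\otimes I_{A_i}$, where the identity is reinserted on the erased factor, and $\mathcal E^*(Z)=\mathcal V^\dagger Z\mathcal V$. Substituting these into the Petz formula, the factors $\omega^{1/2}=d^{-1/2}I$, $\sqrt{d/p_i}$ and $p_i$ cancel. This leaves exactly the displayed $\mathcal R_{\mathrm{Petz}}$, written with the operator ordering in which $S_i^{-1/2}$ acts on the $n-1$ retained sites.

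As a sanity check, trace preservation on the support reduces to
\[
\sum_i \mathcal V^\dagger\bigl(S_i^{-1/2}\mathcal C_i(\cdot)S_i^{-1/2}\otimes I\bigr)\mathcal V ,
\]
which is routine to verify.

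\emph{Error bound.} We use the result of~\cite{Ng2010}. In fidelity form, it states that if the optimal recovery achieves worst-case entanglement fidelity $1-\eta$, then the transpose channel with the maximally mixed reference achieves fidelity at least $1-c\,\eta$ for a dimension-dependent constant; alternatively, the Petz infidelity is at most of order the square root of the optimal one, squared appropriately. I will use the version giving $1-F_{\mathrm{Petz}}\le C_d\,(1-F_{\mathrm{opt}})$, possibly with $C_d$ depending on $d$ only.

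Theorem~\ref{thrm:sudscale} together with Theorem~\ref{thm:beny} gives $1-F_{\mathrm{opt}}=O_d(n^{-2})$. Hence $1-F(\mathcal R_{\mathrm{Petz}}\mathcal N\mathcal E,\mathrm{id})=O_d(n^{-2})$, and taking the square root in~\eqref{eq:distance_def} gives $d=O_d(n^{-1})$.

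I expect the main obstacle to be matching the precise statement of~\cite{Ng2010}. Ng--Mandayam state near-optimality for the average (or entanglement) fidelity with the code projector as reference, which is exactly our $\omega=I_d/d$, rather than for the worst-case fidelity. To close this gap I will use covariance. Since $\omega$ is $SU(d)$-invariant and $\mathcal N$ commutes with the transversal action up to the flag, $\Phi=\mathcal R_{\mathrm{Petz}}\mathcal N\mathcal E$ is $SU(d)$-covariant. By Schur's lemma it is therefore depolarizing with parameter $\lambda_{\mathrm{Petz}}$ from~\eqref{eq:lambda_P}. For a depolarizing channel, the worst-case entanglement fidelity equals the entanglement fidelity at the maximally entangled input, $\sqrt{\lambda+(1-\lambda)/d^2}$, so the average-case guarantee transfers to the worst case. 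The same covariance argument, applied via twirling, shows that the optimal recovery may be taken covariant, so its average and worst-case fidelities also coincide. This makes the comparison in~\cite{Ng2010} directly applicable, and it only costs a $d$-dependent constant.
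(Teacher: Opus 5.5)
Your proposal follows the paper's proof: the same direct evaluation of the Petz map from the flag-block-diagonal $\mathcal M(\omega)$ and the adjoints $\mathcal N^*,\mathcal E^*$, the same covariance argument showing $\mathcal R_{\mathrm{Petz}}\circ\mathcal N\circ\mathcal E$ is depolarizing, and the same transfer of Theorem~\ref{thrm:sudscale} through the Ng--Mandayam bound. One correction: Ng--Mandayam's bound is stated for the worst-case pure-state error $\eta^{\mathrm{st}}=1-\min_\psi f^2(\psi,\Phi(\psi))$, not for average or entanglement fidelity; the conversion you actually need is therefore the elementary $\eta^{\mathrm{st}}_{\mathrm{op}}\le\eta^{\mathrm{ent}}_{\mathrm{op}}$ (with $\eta^{\mathrm{ent}}=1-F^2$) on the optimal side, plus the depolarizing identity $\eta^{\mathrm{ent}}_{\mathrm{Petz}}=\frac{d+1}{d}\eta^{\mathrm{st}}_{\mathrm{Petz}}$ on the Petz side — this is what the paper does, your covariance observation already supplies it, and twirling the optimal recovery is unnecessary.
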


This result gives the asymptotic scaling in $n$, but not the sharp constants depending on the representation-theoretic parameters. 
We therefore do not claim constant-level optimality of the Petz recovery map in this setting. A more detailed analysis of Eq.~\eqref{eq:lambda_P}, 
including its dependence on the relevant multiplicity spaces, is left for future work.

The implementation of the decoder is closely related to the Schur transform in Eq.~\eqref{eq:schur_transform}. 
The Schur transform appears explicitly in the encoding isometry $\mathcal{V}$ in Eq.~\eqref{eq:schur_transform_enc}, and it also enters the Petz map through the operators $S_i$.

\subsection{Flagged two- and three-qudit erasure}\label{subsec:main_muti_erasure_noise}

We now consider flagged erasure errors affecting more than one physical site. In this subsection each physical subsystem carries 
the fundamental representation of $SU(d)$, so that
$$
\mathcal H_P=(V_{\omega_1})^{\otimes n}.
$$
The transversal physical representation is specified on generators by
$$
t_a
\longmapsto
\sum_{i=1}^n t_a^{(i)},
\qquad
a=1,\ldots,d^2-1,
$$
where $t_a^{(i)}$ denotes the fundamental action of $t_a$ on the $i$th physical space $V_{\omega_1}$.

\paragraph{Why higher transitivity is needed.}

For single-site erasure, the main object was the one-site reduced state. For multi-site erasure, the same role is played by the reduced state on the erased set of sites. 
Thus, for $k$-site erasure, the relevant states are
$$
\rho^{(i_1,\ldots,i_k)}(\rho_L).
$$
As before, these reduced-state maps are covariant. Hence their possible form is constrained by representation theory, while the choice of code space, equivalently 
the choice of multiplicity vector, enters through the coefficients in the resulting expansion.

The main new issue is that cyclic symmetry is no longer sufficient. In the one-site case, cyclic invariance forced
$$
\rho^{(i)}(\rho_L)
\cong
\rho^{(j)}(\rho_L)
\qquad
\text{for all } i,j,
$$
and this made the logical-state-dependent part of each one-site reduced state scale as $1/n$. For $k$-site erasure, 
the analogous condition is that all $k$-site reduced states be equal up to the canonical identification of tensor factors:
$$
\rho^{(i_1,\ldots,i_k)}(\rho_L)
\cong
\rho^{(j_1,\ldots,j_k)}(\rho_L)
$$
for all ordered $k$-tuples of distinct physical indices. This requires a larger permutation symmetry than the cyclic group.

We therefore choose the multiplicity vector $\ket m\in M_{\omega_1}$ to be invariant under a subgroup $G_P\subseteq S_n$ 
acting on the physical systems as in Eq.~\eqref{eq:group_act_on_phys_space}. The subgroup has to satisfy two requirements. 
First, it must be sufficiently transitive to identify all relevant reduced states. 
Second, it must be small enough that the invariant subspace $M_{\omega_1}^{G_P}$ in Eq.~\eqref{eq:g_inv_subspace} is nonzero.

The first requirement is met if $G_P$ is $k$-transitive. This means that for any two ordered $k$-tuples of distinct indices $(i_1,\ldots,i_k)$ and $(j_1,\ldots,j_k)$, 
there exists $g\in G_P$ such that
$$
g(i_1,\ldots,i_k)=(j_1,\ldots,j_k).
$$
If the code space is invariant under this action, then $P_g\mathcal{V}\rho_L\mathcal{V}^\dagger P_g^\dagger=\mathcal{V}\rho_L\mathcal{V}^\dagger$, 
and the induced action on local operators~\eqref{eq:group_act_oper} gives
$$
\rho^{(j_1,\ldots,j_k)}(\rho_L)
=
P_g
\rho^{(i_1,\ldots,i_k)}(\rho_L)
P_g^\dagger .
$$

\paragraph{Permutation groups and invariant multiplicities.}

The full symmetric group $S_n$ would enforce this condition for every $k$, but it is too restrictive for the present construction. 
The multiplicity space $M_{\omega_1}$ does not contain a nonzero $S_n$-invariant subspace, since $M_{\omega_1}$ is an irreducible $S_n$-representation; 
see Lemma~\ref{lem:M_module_irreducibility}. Therefore, one has to use a proper subgroup of $S_n$.

For the construction used here, the relevant proper highly transitive subgroups are the affine group
\begin{equation}\label{eq:AGL_def}
\mathrm{AGL}(1,n)
:=
\left\{
x\mapsto ax+b:
a\in\mathbb F_n^\times,;
b\in\mathbb F_n
\right\}
\end{equation}
and the projective linear group
\begin{equation}\label{eq:PGL_def}
\operatorname{PGL}(2,n-1)
:=
\operatorname{GL}\left(2,\mathbb F_{n-1}\right)
/\mathbb F_{n-1}^\times .
\end{equation}
The group $\mathrm{AGL}(1,n)$ is $2$-transitive on $\mathbb F_n$, while $\operatorname{PGL}(2,n-1)$ is $3$-transitive on the projective line
$$
\mathbb P^1(\mathbb F_{n-1})
=
\mathbb F_{n-1}\cup{\infty}.
$$

These groups are the useful proper highly transitive subgroups for our construction. 
Their use is guided by the classification of finite multiply transitive permutation groups, 
which relies on the Classification of Finite Simple Groups~\cite{GorensteinLyonsSolomon1994}. 
In particular, apart from the alternating group $A_n$, there is no infinite family of proper $k$-transitive subgroups of $S_n$ for $k>3$. Beyond $S_n$ and $A_n$, 
the only $4$-transitive groups are the Mathieu groups $M_{11}, M_{12}, M_{23}, M_{24}$, and the only $5$-transitive ones are $M_{12}$ and $M_{24}$; 
no other group is $6$-transitive. The Mathieu groups occur only at $n\in\{11,12,23,24\}$, so they cannot support a construction with $n\to\infty$. 
The alternating group $A_n$ is a proper highly transitive subgroup of $S_n$, but it is almost as large as $S_n$ and is too restrictive for the second requirement, 
since it does not provide the nonzero invariant multiplicity subspaces needed for our code construction. Thus, for our purposes, the useful choices are $\mathrm{AGL}(1,n)$ 
for $k=2$ and $\operatorname{PGL}(2,n-1)$ for $k=3$. For further background on these permutation groups, see Ref.~\cite{DixonMortimer1996}.

The existence of nonzero invariant subspaces
$$
M_{\omega_1}^{\mathrm{AGL}(1,n)}
\neq
0,
\qquad
M_{\omega_1}^{\operatorname{PGL}(2,n-1)}
\neq
0
$$
for sufficiently large $n$ is proved in Lemmas~\ref{lem:2_transitive_subgroup} and~\ref{lem:3_transitive_subgroup}. Therefore, one can choose multiplicity vectors
$$
\ket{v_1}\in M_{\omega_1}^{\mathrm{AGL}(1,n)},
\qquad
\ket{v_2}\in M_{\omega_1}^{\operatorname{PGL}(2,n-1)}.
$$
The corresponding code spaces
$$
V_{\omega_1}\otimes\ket{v_1},
\qquad
V_{\omega_1}\otimes\ket{v_2}
$$
are invariant under $\mathrm{AGL}(1,n)$ and $\operatorname{PGL}(2,n-1)$, respectively.

We summarize the resulting existence statements. The formal proofs are given in Theorems~\ref{thrm:2_transitive_covariant_code} and~\ref{thrm:3_transitive_covariant_code}.

\begin{theorem}\label{thrm:main_2_transitive_covariant_code}
Let $\mathcal H_L=V_{\omega_1}$ and $\mathcal H_P=(V_{\omega_1})^{\otimes n}$, where $n\equiv 1\pmod d$ and $n=p^k$ for a prime $p$ and a positive integer $k$. 
Equip $\mathcal H_L$ with the fundamental representation of $\mathfrak{su}(d)$ and $\mathcal H_P$ with the transversal representation
$$
t_a
\longmapsto
\sum_{i=1}^n t_a^{(i)},
\qquad
a=1,\ldots,d^2-1,
$$
where $t_a^{(i)}$ denotes the fundamental action of $t_a$ on the $i$th physical space $V_{\omega_1}$. 
For sufficiently large $n$, there exists an $\mathfrak{su}(d)$-covariant encoding $\mathcal E$ with respect to these representations whose code space is 
invariant under the action of $\mathrm{AGL}(1,n)$, defined in Eq.~\eqref{eq:AGL_def}.
\end{theorem}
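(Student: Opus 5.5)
The plan is to reduce the statement to showing that the $\mathrm{AGL}(1,n)$-invariant subspace $M_{\omega_1}^{\mathrm{AGL}(1,n)}$ is nonzero. The encoding will then be defined by Eq.~\eqref{eq:schur_transform_enc}, using any unit vector $\ket{v_1}$ in this subspace. Covariance holds for every multiplicity vector. Invariance of the code space under $\mathrm{AGL}(1,n)$ follows from Eq.~\eqref{eq:group_act_on_code}, once we identify the $n$ sites with $\mathbb F_n$; this identification uses $n=p^k$.

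\textbf{Step 1: identify $M_{\omega_1}$ as a Specht module.} By Schur--Weyl duality for $(\mathbb C^d)^{\otimes n}$, $M_{\omega_1}$ is the irreducible $S_n$-module $S^\lambda$. Writing $n=dm+1$, which is possible because $n\equiv 1\pmod d$, the partition is $\lambda=(m+1,m,\dots,m)$ with $d$ rows. This is consistent with Lemma~\ref{lem:M_module_irreducibility}.

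\textbf{Step 2: express the dimension as a character average.} We have
\[
\dim M_{\omega_1}^{\mathrm{AGL}(1,n)}=\frac{1}{n(n-1)}\sum_{g\in \mathrm{AGL}(1,n)}\chi^\lambda(g).
\]
We split the sum according to the cycle types of $\mathrm{AGL}(1,n)$.
\begin{itemize}
\item The identity contributes $\chi^\lambda(1)=\dim S^\lambda$.
\item The $n-1$ nontrivial translations $x\mapsto x+b$ have cycle type $(p^{n/p})$.
\item Each map $x\mapsto ax+b$ with $a\neq 1$ has exactly one fixed point. Its other points lie in cycles of length $o=\mathrm{ord}(a)\ge 2$, so its cycle type is $(o^{(n-1)/o},1)$.
\end{itemize}
It therefore suffices to show
\[
\chi^\lambda(1)>\sum_{g\neq 1}|\chi^\lambda(g)|
\]
for all sufficiently large $n$.

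\textbf{Step 3: bound the identity term from below.} Using the hook length formula, or the Weyl dimension formula for the near-rectangular shape $\lambda$, we get
\[
\dim S^\lambda\ge c_d\,d^{\,n}\,n^{-(d^2-1)/2}.
\]
This is exponential in $n$ with base $d$.

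\textbf{Step 4: bound the non-identity terms from above.} For an element whose cycles, apart from at most one fixed point, all have the same length $o\ge 2$, we use the Murnaghan--Nakayama rule. It bounds $|\chi^\lambda(g)|$ by the number of ways to remove $o$-rim hooks successively from $\lambda$, after first removing one box. This number is at most the number of standard tableaux of the $o$-quotient of $\lambda$ (or of $\lambda$ minus a corner). Equivalently, it is at most a multinomial count of total size $\lceil n/o\rceil$ over at most $d$ rows. This gives
\[
|\chi^\lambda(g)|\le d^{\,n/o+1}\le d^{\,n/2+1}.
\]
Summing over the at most $n^2$ non-identity elements, the total is at most $n^2 d^{n/2+1}$. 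This is $o\!\left(d^n n^{-(d^2-1)/2}\right)$, so the average is positive for large $n$. That proves the existence statement, and the code follows as described above.

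\textbf{Main obstacle.} The delicate step is the uniform Murnaghan--Nakayama bound in Step 4. When $o$ does not divide the relevant sizes, the character may vanish, which only helps. The care is needed in making the rim-hook count precise and uniform in $o$, including large $o$ such as $o=n-1$, where the count is $O(1)$. If the direct count becomes messy, the first fallback is a general character-ratio bound for permutations with few fixed points, such as the Larsen--Shalev estimate $|\chi^\lambda(g)|\le \chi^\lambda(1)^{1/2+o(1)}$ for elements with $O(1)$ fixed points and all cycles of length at least 2. That estimate gives the same conclusion.
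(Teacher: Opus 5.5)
Your proposal is correct, and its skeleton is the same as the paper's. You identify $M_{\omega_1}\cong[q+1,q^{d-1}]$ (Lemma~\ref{lem:M_module_irreducibility}) and write $\dim M_{\omega_1}^{G}$ as the character average over $G=\mathrm{AGL}(1,n)$. You observe that every non-identity element has at most one fixed point. Then the at most $n^2$ non-identity terms are beaten by the exponential lower bound on $\chi^\lambda(1)$ (Lemma~\ref{lem:dimension_bound}).

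Where you differ is the key estimate on $|\chi^\lambda(g)|$ for $g\neq 1$. The paper invokes the Larsen--Shalev bound $|\chi(\sigma)|\le\chi(1)^{1/2+o(1)}$ (Lemma~\ref{lem:character_bound}) as a black box, which uses only the fixed-point count and nothing about $\lambda$. Your primary route instead exploits two things: the uniform cycle structure you computed, and the fact that $\lambda$ has only $d$ rows.
\begin{itemize}
\item Stripping the fixed point first turns $\chi^\lambda(o^r,1)$ into a sum over the two removable corners of $\lambda$.
\item For each resulting partition, the number of $o$-rim-hook tableaux equals the number of standard fillings of its $o$-quotient. The quotient components together have at most $\ell(\mu)\le d$ nonzero rows, so this count is at most $d^{(n-1)/o}$.
\end{itemize}
This gives $|\chi^\lambda(g)|\le 2d^{(n-1)/o}\le d^{n/2+1}$, and for translations $d^{n/p}\le d^{n/2}$. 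You should state explicitly the abacus fact that the quotient has at most $\ell(\mu)$ nonzero rows in total, since that is exactly what licenses the multinomial bound $d^{r}$.

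What your version buys is an elementary and fully effective argument. The comparison $n^2 d^{n/2+1}<c_d\,d^{n}n^{-(d^2-1)/2}$ gives an explicit threshold for ``sufficiently large $n$''; the unspecified $o(1)$ in Larsen--Shalev, as stated, does not. What the paper's version buys is brevity and independence from the shape of $\lambda$, so it transfers verbatim to $\mathrm{PGL}(2,n-1)$ in Lemma~\ref{lem:3_transitive_subgroup}. Your count adapts there as well: non-identity elements of $\mathrm{PGL}(2,q)$ on the projective line also have all non-trivial cycles of equal length and at most two fixed points, so at most two boxes are stripped before the rim-hook count.
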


\begin{theorem}\label{thrm:main_3_transitive_covariant_code}
Let $\mathcal H_L=V_{\omega_1}$ and $\mathcal H_P=(V_{\omega_1})^{\otimes n}$. Assume that $d=p^r$, where $p$ is prime and $r$ is a positive integer, 
and that $n-1=p^k$, where $k\ge r$ is a positive integer. Equip $\mathcal H_L$ with the fundamental representation of $\mathfrak{su}(d)$ and $\mathcal H_P$ 
with the transversal representation
$$
t_a
\longmapsto
\sum_{i=1}^n t_a^{(i)},
\qquad
a=1,\ldots,d^2-1,
$$
where $t_a^{(i)}$ denotes the fundamental action of $t_a$ on the $i$th physical space $V_{\omega_1}$. For sufficiently large $n$, there exists an $\mathfrak{su}(d)$-covariant 
encoding $\mathcal E$ with respect to these representations whose code space is invariant under the action of $\operatorname{PGL}(2,n-1)$, defined in Eq.~\eqref{eq:PGL_def}.
\end{theorem}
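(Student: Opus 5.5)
The plan is to translate the existence of a $\operatorname{PGL}(2,n-1)$-invariant multiplicity vector into a character computation for the symmetric group. By Schur--Weyl duality, the multiplicity space $M_{\omega_1}$ of $V_{\omega_1}$ in $(V_{\omega_1})^{\otimes n}$ is the Specht module $S^{\lambda}$ of $S_n$. Here $\lambda$ is the Young diagram obtained from $m$ full columns of height $d$ by adding one box to the first row, so $\lambda=(m+1,m,\dots,m)$ with $n=dm+1$. This is exactly where $n-1=p^k$, $d=p^r$, $k\ge r$ is used: it gives $d\mid n-1$, so $n\equiv 1\pmod d$ and $\lambda$ exists. By Eq.~\eqref{eq:group_act_on_code}, the permutation action on $M_{\omega_1}$ is this $S_n$ action restricted to $G:=\operatorname{PGL}(2,q)$, $q=n-1$, acting on $\mathbb P^1(\mathbb F_q)$. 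The invariant subspace therefore has dimension
\[
\dim M_{\omega_1}^{G}=\frac{1}{|G|}\sum_{g\in G}\chi^{\lambda}(g)\;\ge\;\frac{\chi^\lambda(1)}{|G|}-\max_{g\ne e}|\chi^\lambda(g)| ,
\]
and it suffices to show the right-hand side is positive for large $n$. Once a nonzero $\ket{v_2}\in M_{\omega_1}^G$ exists, Eq.~\eqref{eq:schur_transform_enc} with $\ket{m_{\omega_1}}=\ket{v_2}/\|v_2\|$ gives the covariant isometry, and its code space is $G$-invariant by the argument following Eq.~\eqref{eq:g_inv_subspace}.

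The two sides are compared as follows. First, $|G|=(q+1)q(q-1)<n^3$. Second, by the hook length formula, $\chi^\lambda(1)=\dim S^\lambda$ grows like $d^{\,n}$ up to a polynomial factor in $n$, because $\lambda$ is an almost-rectangular $d$-row shape. Hence $\chi^\lambda(1)/|G|$ is exponentially large. Third, we use the cycle structure of nonidentity elements of $\operatorname{PGL}(2,q)$ on the $q+1$ points. Such an element fixes at most $2$ points, and since $\langle g\rangle$ acts semiregularly off its fixed points, all remaining cycles have the same length $\ell=\operatorname{ord}(g)\ge 2$. So each $g\ne e$ has cycle type $(\ell^{(n-f)/\ell},1^{f})$ with $f\le 2$.

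The main obstacle is a uniform character bound for permutations of this type. First I would try the Fomin--Lulov bound for elements that are products of $\ell$-cycles. After removing the at most two fixed points via the branching rule, which sums over at most a few restrictions $\lambda\downarrow$, it gives
\[
|\chi^\mu(\sigma)|\le \frac{(N/\ell)!\,\ell^{N/\ell}}{(N!)^{1/\ell}}\,\chi^\mu(1)^{1/\ell}, \qquad N=n-f .
\]
The prefactor is only polynomial in $n$ by Stirling. So $|\chi^\lambda(g)|\le \mathrm{poly}(n)\,\chi^\lambda(1)^{1/2}$ for every $g\ne e$, and
\[
\dim M_{\omega_1}^G\ge \chi^\lambda(1)\Big(n^{-3}-\mathrm{poly}(n)\,\chi^\lambda(1)^{-1/2}\Big)>0
\]
for all sufficiently large $n$. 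If the Fomin--Lulov route proves awkward for the fixed points, I would fall back on the Larsen--Shalev or Roichman bound $|\chi^\lambda(\sigma)|\le\chi^\lambda(1)^{1-c\,\mathrm{supp}(\sigma)/n}$. That bound applies because every $g\ne e$ moves at least $n-2$ points, and it again yields an exponentially small ratio $|\chi^\lambda(g)|/\chi^\lambda(1)$.
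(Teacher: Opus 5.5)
Your proof is correct, and its skeleton matches the paper's (Lemma~\ref{lem:3_transitive_subgroup}, feeding Theorem~\ref{thrm:3_transitive_covariant_code}). Like you, the paper identifies $M_{\omega_1}$ with the Specht module $[q+1,q^{d-1}]$ (Lemma~\ref{lem:M_module_irreducibility}) and writes $\dim M_{\omega_1}^G$ as a character average. It then combines $|G|<n^3$, the fact that nonidentity elements fix at most two points, and the exponential lower bound of Lemma~\ref{lem:dimension_bound}.

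The one real difference is the character estimate. The paper invokes the Larsen--Shalev bound $|\chi(\sigma)|\le\chi(1)^{1/2+o(1)}$ for permutations with $n^{o(1)}$ fixed points (Lemma~\ref{lem:character_bound}). That bound needs nothing about $G$ beyond the fixed-point count, so the same argument serves verbatim for $\mathrm{AGL}(1,n)$. However, its $o(1)$ exponent leaves ``sufficiently large $n$'' ineffective.

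You instead use the extra structural fact that every nonidentity element of $\operatorname{PGL}(2,n-1)$ has all its non-fixed cycles of equal length. You then reduce to fixed-point-free permutations by the branching rule: $\lambda$ has only two removable corners, so there are at most four removal paths, and each restricted degree is at most $\chi^\lambda(1)$. Finally you apply Fomin--Lulov, whose prefactor $s!\,\ell^s/(\ell s)!^{1/\ell}$ is at most $e\sqrt{n}$ uniformly in $\ell$. This route uses more classical ingredients and is fully explicit: combined with Lemma~\ref{lem:dimension_bound}, it yields a concrete threshold $n_0(d)$ rather than a bare existence statement.

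The price is that you must justify semiregularity, which you assert without proof. The fixed-point count alone does not immediately exclude a fixed-point-free $g$ some of whose nontrivial powers fix a $2$-cycle of $g$. The gap closes either via the split, non-split and unipotent classification of elements of $\operatorname{PGL}(2,q)$, or via the observation that a M\"obius map exchanging two points is an involution. Your fallback via Larsen--Shalev is exactly the paper's argument.
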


These results explain why exact permutation symmetry of this kind lets us treat erasure on at most three sites. 
This restriction is a limitation of the symmetry-based construction used here, not a no-go statement for more general codes. 
Requiring all $k$-site reduced states to coincide is a sufficient condition for approximate error correction, but it is not necessary. 
Section~\ref{subsec:main_multisite_erasure_noise} removes this restriction by giving up exact symmetry. 
A typical, rather than exactly symmetric, code is shown to correct erasure on any fixed number of sites.

\paragraph{Three-site reduced states.}

We now focus on flagged erasure of three physical sites. Since $\operatorname{PGL}(2,n-1)$ is also $2$-transitive, the same construction also applies to two-site erasures. 
If the goal is only to correct two-site erasures, the weaker $\mathrm{AGL}(1,n)$ symmetry already suffices. 
The two-site case is discussed separately in Section~\ref{subsec:main_covar_analog_noise}.

The three-site reduced-state calculation is more involved than the one-site calculation. 
The linear extension of the reduced-state map $\rho^{(ijk)}$ is still an intertwiner, but the relevant operator space now contains several copies of the adjoint representation. 
Covariance therefore no longer fixes the reduced state up to a single scalar coefficient, unlike in the one-site case~\eqref{eq:linear_ext_of_red_state}.

The detailed derivation is given in Section~\ref{subsec:supple_multi_erasure_noise}. Here we summarize the main steps. 
First, Lemmas~C.30 and~C.31 decompose the relevant trivial and adjoint components in a basis of few-body operators. 
Next, Proposition~C.28 determines the compression of these few-body operators to the code space. For distinct physical sites $i,j,k$, one obtains
$$
\begin{gathered}
\mathcal{V}^{\dagger} t_a^{(i)} \mathcal{V}=\frac{1}{n} \bar{t}_a, \quad \mathcal{V}^{\dagger} t_{a_1}^{(i)} t_{a_2}^{(j)} \mathcal{V}=-\frac{1}{2 d n} \delta_{a_1 a_2} I_L, \\
\mathcal{V}^{\dagger} t_{a_1}^{(i)} t_{a_2}^{(j)} t_{a_3}^{(k)} \mathcal{V}=\frac{1}{2 d n(n-2)} d_{a_1 a_2 a_3} I_L-\\-\frac{1}{2 d n(n-2)}\left(\delta_{a_1 a_2} \bar{t}_{a_3}+\delta_{a_1 a_3} \bar{t}_{a_2}+\delta_{a_2 a_3} \bar{t}_{a_1}\right),
\end{gathered}
$$
where $\bar t_a$ denotes the fundamental generator acting on the logical space. 
Finally, Proposition~C.34 combines these compression identities with the relevant Hilbert--Schmidt inner-product identities and gives the three-site reduced state
\begin{equation}\label{eq:3q_reduced_state}
    \rho^{(ijk)}(\rho_L)
    =
    \tau^{(ijk)}
    +
    \Delta^{(ijk)}(\rho_L).
\end{equation}
Here $\tau^{(ijk)}$ is independent of the logical input, while $\Delta^{(ijk)}(\rho_L)$ contains all dependence on $\rho_L$. Explicitly,
\begin{widetext}
\begin{align}
\tau^{(ijk)}
&=
\frac{1}{d^3}\mathbf 1
-\frac{2}{d^2 n}
\sum_{m=1}^{d^2-1}
\left(
t_m^{(i)}t_m^{(j)}
+t_m^{(i)}t_m^{(k)}
+t_m^{(j)}t_m^{(k)}
\right)
\nonumber\\
&\hspace{1cm}
+\frac{4}{dn(n-2)}
\sum_{m,l,p=1}^{d^2-1}
d_{mlp}
t_m^{(i)}t_l^{(j)}t_p^{(k)},
\\
\Delta^{(ijk)}(\rho_L)
&=
\sum_b r_b
\left[
\frac{1}{nd^2}
\left(
t_b^{(i)}
+t_b^{(j)}
+t_b^{(k)}
\right)
\right.
\nonumber\\
&\hspace{1cm}
\left.
-\frac{2}{dn(n-2)}
\sum_{m=1}^{d^2-1}
\left(
t_b^{(i)}t_m^{(j)}t_m^{(k)}
+t_m^{(i)}t_b^{(j)}t_m^{(k)}
+t_m^{(i)}t_m^{(j)}t_b^{(k)}
\right)
\right].
\end{align}
\end{widetext}

\paragraph{Three-site erasure bound.}

After identifying all triples of physical sites using the $3$-transitive symmetry, the complementary channel~\eqref{eq:dual_to_arbitrary_multiqudit} takes the form
$$
\widehat{\mathcal N\circ\mathcal E}(\rho_L)
\cong
\omega_{\mathrm{flag}}
\otimes
\rho^{(i_0j_0k_0)}(\rho_L),
$$
where $i_0,j_0,k_0$ are fixed distinct indices and
$$
\omega_{\mathrm{flag}}
=
\sum_{i<j<k}
p_{ijk}
\ket{ijk}\bra{ijk}_{F_E}.
$$
The appropriate constant channel is obtained by keeping the input-independent part of the three-site reduced state:
$$
\Lambda_0(\rho_L)
=
\operatorname{Tr}(\rho_L)\;
\omega_{\mathrm{flag}}
\otimes
\tau^{(i_0j_0k_0)} .
$$
Unlike in the one-site case, this state is not simply maximally mixed; it contains nontrivial few-body terms.

The entanglement fidelity is computed explicitly in Proposition~\ref{prop:3_transitive_fidelity}:
\begin{equation}\label{eq:fidelity_3}
F\left(\widehat{\mathcal N\circ\mathcal E},\Lambda_0\right)
=
1
-
\frac{3(d^2-1)}{8n^2}
+
O_{d}(n^{-3}) .
\end{equation}
Together with the complementary-channel characterization of AQEC, this gives the following result.

\begin{theorem}
Let $\mathcal E$ be the $\mathfrak{su}(d)$-covariant encoding from Theorem~\ref{thrm:main_3_transitive_covariant_code}. Then, for the flagged three-site erasure channel
$$
\mathcal N(\sigma)
=
\sum_{1\le i<j<k\le n}
p_{ijk}
\ket{ijk}\bra{ijk}_F
\otimes
\ket e\bra e_{ijk}
\otimes
\operatorname{Tr}_{ijk}(\sigma),
$$
there exists a recovery channel $\mathcal R$ such that
$$
d(\mathcal R\mathcal N\mathcal E,\mathrm{id})
\le
\frac{\sqrt{3(d^2-1)}}{2\sqrt 2}
\frac1n
+
O_{d}(n^{-2}) .
$$
In particular, the code is $O(\sqrt{d^2-1}/n)$-correctable against flagged erasure on three sites.
\end{theorem}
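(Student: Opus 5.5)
The argument has the same shape as the proofs of Theorems~\ref{thrm:su2scale} and~\ref{thrm:sudscale}. We bound the best recovery error by the distance between the complementary channel and one well-chosen constant channel, and then invoke Theorem~\ref{thm:beny}.

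The first step uses the $\operatorname{PGL}(2,n-1)$-invariance of the code space from Theorem~\ref{thrm:main_3_transitive_covariant_code}. Because this group is $3$-transitive, for any two triples there is a $g$ with $\rho^{(ijk)}(\rho_L)=P_g\rho^{(i_0j_0k_0)}(\rho_L)P_g^\dagger$. Transversal $SU(d)$ acts identically on every site, so we may identify each erased triple with the fixed reference triple $A_{i_0j_0k_0}$. This identification is an isometry of the environment output conditioned on the flag.

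Substituting into Eq.~\eqref{eq:dual_to_erasure}, the complementary channel becomes $\omega_{\mathrm{flag}}\otimes\rho^{(i_0j_0k_0)}(\rho_L)$, up to a unitary on the environment that is controlled by the flag. Such a unitary changes neither fidelities nor the value of $\min_{\Lambda_0}$. We then use the decomposition~\eqref{eq:3q_reduced_state} and compare against $\Lambda_0(\rho_L)=\operatorname{Tr}(\rho_L)\,\omega_{\mathrm{flag}}\otimes\tau^{(i_0j_0k_0)}$. Since both channels carry the same classical flag state as a tensor factor, the worst-case entanglement fidelity reduces to the single-triple quantity
\[
F=\min_{\rho}\, f\bigl((\rho^{(i_0j_0k_0)}\otimes\mathrm{id})(\psi_\rho),(\tau\operatorname{Tr}\otimes\mathrm{id})(\psi_\rho)\bigr).
\]

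The second step, and the one that carries all the work, is to show that this fidelity equals $1-\frac{3(d^2-1)}{8n^2}+O_d(n^{-3})$, which is Eq.~\eqref{eq:fidelity_3}. We take that identity from Proposition~\ref{prop:3_transitive_fidelity}. Its content is as follows. Because $\tau$ is full rank with eigenvalues $d^{-3}+O(1/n)$, a second-order perturbative expansion of the root fidelity gives
\[
1-F\approx\tfrac18\,\bigl\langle\Delta,\tau^{-1}\Delta\bigr\rangle
\]
in the appropriate Bures form, where $\Delta$ is understood on the purified input. The leading part of $\Delta$ is $\frac{1}{nd^2}\sum_b r_b\,(t_b^{(i)}+t_b^{(j)}+t_b^{(k)})$. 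Each of the three sites contributes $d^3\cdot\frac{1}{n^2d^4}\cdot\operatorname{Tr}(t_b t_c)\,d^2$, which is proportional to $\delta_{bc}/n^2$. The worst case is the maximally entangled input, and by covariance the minimizing $\rho$ can be taken to be $I/d$. Summing over the $d^2-1$ generators then yields the factor $3(d^2-1)/8$. The $1/(n(n-2))$ terms in $\tau$ and $\Delta$, together with the $1/n$ correction to $\tau^{-1}$, only enter at order $O_d(n^{-3})$. The main obstacle is controlling these cross terms between the two-body part of $\tau$ and the one-body part of $\Delta$. I expect them to vanish at order $n^{-2}$ because $\operatorname{Tr}(t_m^{(i)}t_m^{(j)}t_b^{(i)})=0$, the partial trace over site $j$ killing the traceless generator $t_m^{(j)}$.

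Finally, from Eq.~\eqref{eq:distance_def} we get $d=\sqrt{1-F}=\frac{\sqrt{3(d^2-1)}}{2\sqrt2}\frac1n+O_d(n^{-2})$, using $\sqrt{a/n^2+O(n^{-3})}=\sqrt a/n+O(n^{-2})$. Theorem~\ref{thm:beny} then gives $\min_{\mathcal R}d(\mathcal R\mathcal N\mathcal E,\mathrm{id})\le d(\widehat{\mathcal N\circ\mathcal E},\Lambda_0)$, which is the claimed bound.
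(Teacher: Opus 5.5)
Your proposal follows the paper's proof: identify all erased triples via the $3$-transitive $\operatorname{PGL}(2,n-1)$ symmetry, compare against the constant channel built from $\tau^{(i_0j_0k_0)}$, import Eq.~\eqref{eq:fidelity_3} from Proposition~\ref{prop:3_transitive_fidelity}, take $\sqrt{1-F}$, and apply Theorem~\ref{thm:beny}. The cross terms you flag as the main obstacle need no special trace identity. The first-order term of the expansion is exactly $\tfrac12\operatorname{Tr}\Delta=0$, and the $O(1/n)$ deviation of $\tau$ from $I/d^3$ only changes the quadratic term at order $n^{-3}$.
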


\begin{proof}
By Eq.~\eqref{eq:fidelity_3} and the definition of the channel distance in Eq.~\eqref{eq:distance_def},
$$
d\left(\widehat{\mathcal N\circ\mathcal E},\Lambda_0\right)
=
\frac{\sqrt{3(d^2-1)}}{2\sqrt 2}
\frac1n
+
O_{d}(n^{-2}) .
$$
The complementary-channel theorem, Eq.~\eqref{eq:beny}, gives
$$
\min_{\mathcal R}
d(\mathcal R\mathcal N\mathcal E,\mathrm{id})
=
\min_{\Lambda,\mathrm{constant}}
d\left(\widehat{\mathcal N\circ\mathcal E},\Lambda\right)
\le
d\left(\widehat{\mathcal N\circ\mathcal E},\Lambda_0\right),
$$
and the stated bound follows.
\end{proof}

\subsection{Flagged erasure on an arbitrary fixed number of qudits}\label{subsec:main_multisite_erasure_noise}

We now extend the erasure analysis to an arbitrary fixed number $k$ of erased sites, completing the discussion of Section~\ref{subsec:main_muti_erasure_noise}. 
As before, $\mathcal H_L=V_{\omega_1}$ and $\mathcal H_P=(V_{\omega_1})^{\otimes n}$, with the transversal representation
$$
t_a
\longmapsto
\sum_{i=1}^n t_a^{(i)},
\qquad
a=1,\ldots,d^2-1.
$$

\paragraph{Why exact symmetry fails for general $k$.}

For $k\le3$, Section~\ref{subsec:main_muti_erasure_noise} chose the multiplicity vector invariant under a subgroup of $S_n$ 
that is exactly $k$-transitive, so that all $k$-site reduced states coincide exactly. As explained there, this route stops at $k=3$: 
for $k>3$ the classification of finite multiply transitive permutation groups leaves only the alternating group $A_n$, 
which is too large to leave a nonzero invariant multiplicity subspace, and the Mathieu groups, which occur only at $n\in\{11,12,23,24\}$ and so give no family with $n\to\infty$~\cite{GorensteinLyonsSolomon1994}. 
For fixed $k>3$ we therefore abandon exact symmetry and ask that a \emph{typical} member of a suitable family of codes have $k$-site reduced states that are all close to one another, 
rather than exactly equal, and we show that such a typical code can be found and sampled efficiently.

\paragraph{A background-singlet family of codes.}

Let $\{e_1,\ldots,e_d\}$ be the standard basis of $V_{\omega_1}=\mathbb C^d$, and let
$$
\ket{\zeta_d}
:=
\frac1{\sqrt{d!}}
\sum_{\pi\in S_d}
\operatorname{sgn}(\pi)\,
e_{\pi(1)}\otimes\cdots\otimes e_{\pi(d)}
$$
be the normalized totally antisymmetric state of $d$ qudits. This is the unique $SU(d)$-invariant vector in $(V_{\omega_1})^{\otimes d}$, 
and the reduced state of any $s\le d$ of its tensor factors is proportional to a projector onto an antisymmetric subspace (Lemma~\ref{lem:singlet-red-states}). 
In particular, every one-site reduced state is the maximally mixed state $\tau:=I_d/d$.

Throughout this section $n=qd+1$; see Section~\ref{subsec:main_muti_erasure_noise} for the motivation of this choice.
We partition the $n=qd+1$ physical positions into a distinguished position $0$ and $q$ blocks $B_1,\ldots,B_q$ of size $d$, and define the seed isometry
$$
\mathcal{V}_0\ket\psi
:=
\ket\psi_0\otimes\ket{\zeta_d}_{B_1}\otimes\cdots\otimes\ket{\zeta_d}_{B_q}.
$$
This isometry places the logical qudit at site $0$ and fills the rest of the system with $q$ independent $SU(d)$-singlets, which we call \textit{background sites}. It is exactly $SU(d)$-covariant (Proposition~\ref{prop:seed-covariance}). 
Identifying $\mathcal{V}_0$ through the Schur--Weyl inclusion $W_{\omega_1}:V_{\omega_1}\otimes M_{\omega_1}\to(V_{\omega_1})^{\otimes n}$ gives a unit vector $\ket{m_0}\in M_{\omega_1}$ with $\mathcal{V}_0=W_{\omega_1}(I_{V_{\omega_1}}\otimes\ket{m_0})$. 
For a general unit vector $\ket m\in M_{\omega_1}$, define
$$
\mathcal{V}_m
:=
W_{\omega_1}(I_{V_{\omega_1}}\otimes\ket m),
\qquad
\mathcal E_m(X):=\mathcal{V}_mX\mathcal{V}_m^\dagger,
$$
Every $\mathcal{V}_m$ is an exactly $SU(d)$-covariant isometric encoding, and $\mathcal{V}_0=\mathcal{V}_{m_0}$ is one particular member of this family, 
indexed by the multiplicity vector $\ket m$.

Averaging the encoding over a Haar-random $\ket m\in M_{\omega_1}$ reproduces the permutation twirl of the seed encoding:
$$
\mathbb E_{m\sim\mathrm{Haar}(M_{\omega_1})}
\bigl[\mathcal{V}_mX\mathcal{V}_m^\dagger\bigr]
=
\frac1{n!}
\sum_{\pi\in S_n}
P_\pi\,\mathcal{V}_0X\mathcal{V}_0^\dagger P_\pi^\dagger,
$$
proved in Proposition~\ref{prop:orbit-twirl}. This identity is the source of the \textit{full} permutation symmetry 
that an exactly $k$-transitive subgroup enforced directly in the construction for $k\le3$.

\paragraph{The mean $k$-site channel.}

For a subset $S\subseteq\{1,\ldots,n\}$ of size $k$, define the erased-subsystem channel
$$
\Phi_{S,m}(X)
:=
\operatorname{Tr}_{\hat S}(\mathcal{V}_mX\mathcal{V}_m^\dagger).
$$
This is the reduced-state map $\rho^{(S)}$ of Eq.~\eqref{eq:reduced_state}, written here as a linear map and for a general multiplicity 
vector $m$. Its Haar average over $m$ defines the mean $k$-site channel $\overline\Phi_{k,n}:=\mathbb E_m\Phi_{S,m}$, which by the orbit-twirl 
identity above is, after the canonical identification of retained sites, independent of $S$. Proposition~\ref{prop:exact-mean-channel} evaluates it exactly:
$$
\overline\Phi_{k,n}(X)
=
\frac{n-k}n\operatorname{Tr}(X)\,\beta_{k,N}
+
\frac1n\sum_{j=1}^k
X^{(j)}\otimes\beta_{k-1,N}^{(\widehat j)},
$$
where $N:=n-1=qd$ is the number of background sites and $\beta_{\ell,N}$ is the $\ell$-site reduced state of the permutation twirl of the background, 
and Proposition~\ref{prop:beta-partition} evaluates it as a sum over set partitions of the $\ell$ sample labels. 
Two sampled sites fall into the same singlet block with probability $O_{d,k}(1/n)$, so $\beta_{\ell,N}$ is close to $\tau^{\otimes\ell}$ in trace norm for fixed 
$\ell$ (Lemma~\ref{lem:collision}), and $\overline\Phi_{k,n}(X)$ is therefore close to $\operatorname{Tr}(X)\,\tau^{\otimes k}$ up to a logical-dependent correction of order $1/n$, 
exactly as in the one-, two-, and three-site cases above.

Let 
\begin{equation}\label{eq:lambda_k_n}
\lambda_{k,n}:=\overline\Phi_{k,n}(\tau), \quad \Lambda_{k,n}(X):=\operatorname{Tr}(X)\lambda_{k,n}.
\end{equation}
Both $\overline\Phi_{k,n}$ and $\Lambda_{k,n}$ are $SU(d)$-covariant, since every $\beta_{\ell,N}$ is $SU(d)$-invariant, so by Lemma~\ref{fidelity_opt} 
the worst-case entanglement fidelity between them is attained on the maximally mixed logical input, and a direct expansion of the root fidelity around this point 
(Proposition~\ref{prop:k-loc_fidelity}) gives
$$
F(\overline\Phi_{k,n},\Lambda_{k,n})
=
1-\frac{k(d^2-1)}{8n^2}
+O_{d,k}(n^{-3}).
$$
At $k=1$ and $k=3$ this reproduces, term by term, the coefficients found above by the exact symmetry-based calculation, Eq.~\eqref{eq:fidelity_sud} at $r=1$ and Eq.~\eqref{eq:fidelity_3}.

\paragraph{Haar-typical multiplicity vectors.}

The mean channel $\overline\Phi_{k,n}$ is an average over the whole family $\{\mathcal E_m\}$, not the channel produced by any single code. 
Turning this into a statement about one encoding requires a single unit vector $\ket m\in M_{\omega_1}$ for which \emph{every} one of the $\binom nk$ erased-subsystem channels $\Phi_{S,m}$ is 
simultaneously close to $\overline\Phi_{k,n}$ in diamond norm $\|\cdot\|_\diamond$, the completely bounded trace norm on channels. 
Because the multiplicity space $M_{\omega_1}\cong[q+1,q^{d-1}]$ has dimension $D_q$ given by Lemma~\ref{lem:dimension_formula}, 
which grows exponentially in $n$ for fixed $d$ by the lower bound of Lemma~\ref{lem:dimension_bound}, a Haar-random unit vector makes every 
matrix element of $\Phi_{S,m}$ concentrate sharply around its mean, and a union bound over the finitely many matrix elements and erased sets shows that a single random draw 
is simultaneously good for all of them, with probability approaching one exponentially fast in $n$.

\begin{theorem}\label{thrm:main_haar_typical}
Let $r_k:=d^{k+1}$ and $M_{n,k}:=\binom nk d^{2k+2}$. Let $\ket m$ be Haar-random in $M_{\omega_1}$. Then
$$
\mathbb P\left[
\max_{|S|=k}
\|\Phi_{S,m}-\overline\Phi_{k,n}\|_\diamond
>r_k^{3/2}D_q^{-1/4}
\right]
\le
\frac{M_{n,k}D_q^{1/2}}{D_q+1}.
$$
For fixed $d,k$, the right-hand side is $e^{-\Omega_d(n)}$.
\end{theorem}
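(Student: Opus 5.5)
Fix a set $S$ with $|S|=k$ and write the erased-subsystem channel in matrix-element form. The input $X$ ranges over $\operatorname{End}(V_{\omega_1})$, of dimension $d^2$, and the output lies in $\operatorname{End}(A_S)$, of dimension $d^{2k}$. Hence $\Phi_{S,m}$ is determined by $d^{2k+2}$ numbers
$$
c_{S,\alpha}(m):=\langle y|\Phi_{S,m}(\ket{a}\bra{b})|y'\rangle ,
$$
where $a,b$ run over the logical basis and $y,y'$ run over the product basis of $A_S$. Each coefficient is a quadratic form in $\ket m$. Indeed,
$$
c_{S,\alpha}(m)=\operatorname{Tr}\bigl[(\ket{y'}\bra y\otimes I_{\hat S})\,W_{\omega_1}(\ket a\bra b\otimes\ket m\bra m)W_{\omega_1}^\dagger\bigr]=\bra m Q_{S,\alpha}\ket m ,
$$
with $Q_{S,\alpha}:=(\bra b\otimes I)W_{\omega_1}^\dagger(\ket{y'}\bra y\otimes I_{\hat S})W_{\omega_1}(\ket a\otimes I)$, an operator on $M_{\omega_1}$ of operator norm at most one. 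The Haar mean is $\mathbb E_m c_{S,\alpha}=\operatorname{Tr}Q_{S,\alpha}/D_q$. By definition of $\overline\Phi_{k,n}$, together with Proposition~\ref{prop:orbit-twirl} and its $S$-independence, these means are exactly the matrix elements of $\overline\Phi_{k,n}$.

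The second step is concentration of each coefficient, using only a second-moment (Chebyshev) bound, which matches the shape of the tail on the right-hand side. For Haar $\ket m$ on $\mathbb C^{D}$ with $D=D_q$, the standard second-moment formula for quadratic forms gives
$$
\operatorname{Var}\bigl(\bra m Q\ket m\bigr)=\frac{\operatorname{Tr}(QQ^\dagger)-|\operatorname{Tr}Q|^2/D}{D(D+1)}\le\frac{\|Q\|_2^2}{D(D+1)}\le\frac{1}{D+1},
$$
using $\|Q\|_2^2\le D\|Q\|_\infty^2\le D$. Chebyshev's inequality with threshold $t=D^{-1/4}$ then gives
$$
\mathbb P\bigl[|c_{S,\alpha}-\mathbb E c_{S,\alpha}|>D^{-1/4}\bigr]\le\frac{D^{1/2}}{D+1}.
$$
A union bound over the $d^{2k+2}$ indices $\alpha$ and the $\binom nk$ sets $S$ yields the stated probability $M_{n,k}D_q^{1/2}/(D_q+1)$.

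The third step converts entrywise closeness into the diamond norm. On the good event, the difference $\Delta:=\Phi_{S,m}-\overline\Phi_{k,n}$ has every entry in the matrix-unit basis bounded by $D_q^{-1/4}$. For a map from $d\times d$ to $d^k\times d^k$ matrices, we bound
$$
\|\Delta\|_\diamond\le d\,\|J(\Delta)\|_1 ,
$$
where $J(\Delta)$ is the Choi matrix, of size $r_k=d^{k+1}$. Then
$$
\|J(\Delta)\|_1\le\sqrt{r_k}\,\|J(\Delta)\|_2\le\sqrt{r_k}\cdot r_k\max|\text{entry}| .
$$
This gives $\|\Delta\|_\diamond\le r_k^{3/2}D_q^{-1/4}$ up to the factor $d$. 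That factor has to be absorbed, either through the normalization of the Choi matrix or by noting that the Choi entries are already the $c_{S,\alpha}$, so that $\|\Delta\|_\diamond\le\|J(\Delta)\|_1\le r_k^{3/2}\max|c|$. I expect this constant bookkeeping to be the only delicate point. I will choose the unnormalized Choi convention with the bound $\|\Delta\|_\diamond\le\|J\|_1$, which holds for the unnormalized Choi matrix via $\|\Delta\|_\diamond\le\sum_{a,b}\|\Delta(\ket a\bra b)\|_1$ and related estimates.

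Finally, for fixed $d$ and $k$, the quantity $M_{n,k}$ is polynomial in $n$. Meanwhile $D_q^{1/2}/(D_q+1)\le D_q^{-1/2}$, and $D_q$ grows exponentially in $n$ by Lemma~\ref{lem:dimension_bound}. Hence the right-hand side is $e^{-\Omega_d(n)}$.
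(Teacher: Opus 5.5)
Your proposal is correct and is essentially the paper's proof. It uses the same quadratic-form representation $\bra m A_{S;ab;uv}\ket m$ with $\|A\|_\infty\le1$ and mean $\operatorname{Tr}A/D_q$, the Haar second-moment bound $1/(D_q+1)$, Chebyshev at threshold $D_q^{-1/4}$, a union bound over the $M_{n,k}$ entries, and the unnormalized-Choi conversion $\|\Delta\|_\diamond\le\|J(\Delta)\|_1\le r_k^{3/2}D_q^{-1/4}$.

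The one thing to tidy is your justification of $\|\Delta\|_\diamond\le\|J(\Delta)\|_1$. It follows by writing unit vectors as $\ket u=\sum_a\ket a\otimes A\ket a$ and $\ket v=\sum_a\ket a\otimes B\ket a$ with $\|A\|_2=\|B\|_2=1$, so that $(\Delta\otimes\mathrm{id})(\ket u\bra v)=(I\otimes A)J(\Delta)(I\otimes B^\dagger)$ has trace norm at most $\|J(\Delta)\|_1$. The route through $\sum_{a,b}\|\Delta(\ket a\bra b)\|_1$ only gives $d^{3k/2+2}D_q^{-1/4}$, a factor $\sqrt d$ above the stated threshold. Your preliminary $\|\Delta\|_\diamond\le d\|J(\Delta)\|_1$ just applies the normalized-Choi inequality to the unnormalized matrix, so there is no extra factor $d$ to absorb.
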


This is proved as Theorem~\ref{thm:haar-uniform}. The argument writes each matrix element of $\Phi_{S,m}-\overline\Phi_{k,n}$ as a quadratic form
in $\ket m$ (Lemma~\ref{lem:quadratic-form}), bounds its variance using the Haar second-moment formula (Lemma~\ref{lem:haar-second-moment}),
and converts the resulting entrywise bound into a diamond-norm bound through the unnormalized Choi matrix (Lemma~\ref{lem:entry-to-diamond}), 
before taking a union bound over the $M_{n,k}$ matrix entries appearing across all $\binom nk$ erased sets.

A useful way to read this result is that the error-correcting property does not rely on fine-tuning a special multiplicity vector. 
Covariance first restricts the encoding to a fixed isotypic component, and the remaining freedom is the choice of a direction in its multiplicity space. 
Theorem~\ref{thrm:main_haar_typical} shows that, with respect to the natural Haar measure on this space, the set of multiplicity vectors
for which some erased subsystem reveals an atypically large amount of logical information has exponentially small measure. 
Specially aligned and poorly performing vectors certainly exist: the seed vector $\ket{m_0}$ itself is one, since $\mathcal{V}_0$ 
leaves the logical qudit unencoded at site $0$, so erasing that single site reveals the logical state completely. 
Such vectors are non-generic. Approximate decoupling from any fixed-size erased subsystem is therefore a typical property of the covariant sector rather than 
a feature of an isolated construction.

\paragraph{Multi-site erasure bound for a Haar-random code.}
Let $\mathcal N$ be the flagged $k$-site erasure channel~\eqref{eq:erasure_noise} with a distribution $p=(p_S)_{|S|=k}$ 
on the $k$-element subsets of $\{1,\ldots,n\}$. By Eq.~\eqref{eq:dual_to_erasure}, the complementary channel to $\mathcal N\circ\mathcal E_m$ is
$$
\widehat{\mathcal N\circ\mathcal E_m}(\rho_L)
=
\sum_{|S|=k}
p_S
\ket S\bra S_{F_E}
\otimes
\Phi_{S,m}(\rho_L).
$$
As in the three-site case, the environment can access the logical state only through the reduced states on the erased set, here $\Phi_{S,m}(\rho_L)$ for every possible $S$. 
The difference is that these reduced states are no longer exactly equal to one another. 
Theorem~\ref{thrm:main_haar_typical} controls all of them simultaneously by the single mean channel $\overline\Phi_{k,n}$, 
and Proposition~\ref{prop:k-loc_fidelity} controls how far that mean channel is from a constant one. 
Combining the two gives the erasure bound.

\begin{theorem}\label{thrm:main_multisite_erasure_haar}
Let $d\ge2$ and $k\ge1$ be fixed. For every sufficiently large $n=qd+1$, let $\ket m$ be Haar-random in $M_{\omega_1}$ and let $\mathcal E_m$ be as above. With probability at least $1-e^{-\Omega_d(n)}$ over $\ket m$, the following holds: for every distribution $p$ on the $k$-element subsets of $\{1,\ldots,n\}$ there exists a recovery channel $\mathcal R_p$ such that
$$
d\left(
\mathcal R_p\circ\mathcal N\circ\mathcal E_m,
\mathrm{id}
\right)
\le
\frac{\sqrt{k(d^2-1)}}{2\sqrt2}
\frac1n
+
O_{d,k}(n^{-2}).
$$
In particular, with probability at least $1-e^{-\Omega_d(n)}$ over the draw of $\ket m$, the code $\mathcal E_m$ is $O(\sqrt{k(d^2-1)}/n)$-correctable against flagged $k$-site erasure, uniformly over the erasure distribution $p$.
\end{theorem}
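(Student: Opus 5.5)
Via the complementary-channel criterion (Theorem~\ref{thm:beny}), it suffices, on the high-probability event of Theorem~\ref{thrm:main_haar_typical}, to compare $\widehat{\mathcal N\circ\mathcal E_m}$ with a single constant channel that works for every distribution $p$. The natural choice is
$$
\Lambda_0(\rho_L):=\operatorname{Tr}(\rho_L)\sum_{|S|=k}p_S\ket S\bra S_{F_E}\otimes\lambda_{k,n},
$$
with $\lambda_{k,n}$ from Eq.~\eqref{eq:lambda_k_n}. We introduce the intermediate flagged channel $\overline\Psi(\rho_L):=\sum_S p_S\ket S\bra S_{F_E}\otimes\overline\Phi_{k,n}(\rho_L)$. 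Then $d(\widehat{\mathcal N\circ\mathcal E_m},\Lambda_0)$ is bounded by the triangle inequality for the purified distance $d$ (a metric, being the sine distance of the root fidelity). The two pieces are $d(\widehat{\mathcal N\circ\mathcal E_m},\overline\Psi)$ and $d(\overline\Psi,\Lambda_0)$.

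\emph{Second piece.} Because the flag is classical and identical in both channels, the purified state decomposes block-diagonally in $S$. The root fidelity is then $\sum_S p_S f(\cdot,\cdot)$ with every summand equal, so $F(\overline\Psi,\Lambda_0)=F(\overline\Phi_{k,n},\Lambda_{k,n})$ for each input. More carefully, the min over inputs of a $p$-average of identical functions is that function's min, hence uniform in $p$. Proposition~\ref{prop:k-loc_fidelity} then gives
$$
d(\overline\Psi,\Lambda_0)=\sqrt{\tfrac{k(d^2-1)}{8n^2}+O_{d,k}(n^{-3})}=\frac{\sqrt{k(d^2-1)}}{2\sqrt2}\frac1n+O_{d,k}(n^{-2}).
$$

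\emph{First piece.} We convert the diamond-norm bound into a purified-distance bound through the Fuchs--van de Graaf inequality $1-f(\rho,\sigma)\le\frac12\|\rho-\sigma\|_1$. Applied to the stabilized outputs, and again using the block-diagonal flag structure, it gives
$$
1-F(\widehat{\mathcal N\circ\mathcal E_m},\overline\Psi)\le\tfrac12\max_S\|\Phi_{S,m}-\overline\Phi_{k,n}\|_\diamond.
$$
On the event of Theorem~\ref{thrm:main_haar_typical} this is at most $\tfrac12 r_k^{3/2}D_q^{-1/4}$, so the first piece is $O_{d,k}(D_q^{-1/8})$. Here the event is independent of $p$, which is what makes the statement uniform over all erasure distributions.

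\emph{Main obstacle.} We must check that $D_q^{-1/8}$ is genuinely negligible against $n^{-2}$. By Lemma~\ref{lem:dimension_bound}, $D_q$ grows exponentially in $n$ for fixed $d$, so $D_q^{-1/8}=e^{-\Omega_d(n)}=o(n^{-2})$ and it is absorbed into the $O_{d,k}(n^{-2})$ term. Likewise the failure probability $M_{n,k}D_q^{1/2}/(D_q+1)$ is $e^{-\Omega_d(n)}$, because $M_{n,k}$ is only polynomial in $n$ for fixed $k$. The only delicate point is the order of the losses: the square root in $d=\sqrt{1-F}$ costs us only on the exponentially small term, not on the $1/n$ term, since the two pieces are combined through the triangle inequality at the level of $d$ rather than of $1-F$. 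Finally, Theorem~\ref{thm:beny} yields a recovery $\mathcal R_p$ achieving the bound, which completes the plan.
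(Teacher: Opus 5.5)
Your proposal is correct and follows the paper's proof essentially step for step. Both arguments use the same intermediate flagged channel built from $\overline\Phi_{k,n}$ and the same constant channel built from $\lambda_{k,n}$. Both apply Fuchs--van de Graaf blockwise on the event of Theorem~\ref{thrm:main_haar_typical}, invoke Proposition~\ref{prop:k-loc_fidelity} for the second piece, and apply the triangle inequality at the level of $d$, so that the square root only affects the exponentially small term.

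One small correction to your parenthetical: $\sqrt{1-f}$ is the Bures distance $\sqrt{2(1-f)}$ rescaled by $1/\sqrt2$, not the sine distance $\sqrt{1-f^2}$. It is nonetheless a metric, and the channel version is a supremum of such metrics over purified inputs, so the triangle inequality you use holds.
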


\begin{proof}
Condition on the event of Theorem~\ref{thrm:main_haar_typical}, which has probability at least $1-M_{n,k}D_q^{1/2}/(D_q+1)=1-e^{-\Omega_d(n)}$ and on which
$$
\max_{|S|=k}\|\Phi_{S,m}-\overline\Phi_{k,n}\|_\diamond
\le
r_k^{3/2}D_q^{-1/4}.
$$
We now introduce the two auxiliary channels
$$
\overline\Lambda(\rho_L)
:=
\sum_{|S|=k}
p_S
\ket S\bra S_{F_E}
\otimes
\overline\Phi_{k,n}(\rho_L),
$$
$$
\Lambda_0(\rho_L)
:=
\operatorname{Tr}(\rho_L)
\sum_{|S|=k}
p_S
\ket S\bra S_{F_E}
\otimes
\lambda_{k,n},
$$
where $\lambda_{k,n}$ is taken from Eq.~\eqref{eq:lambda_k_n}, so that $\Lambda_0$ is a constant channel 
in the sense of Eq.~\eqref{eq:constant_channel_def}. The three channels $\widehat{\mathcal N\circ\mathcal E_m}$, 
$\overline\Lambda$, and $\Lambda_0$ are block diagonal with respect to the same flag decomposition and carry the same classical weights $p_S$. 
Root fidelity of two such block-diagonal states is the $p$-weighted sum of the block root fidelities, 
so for every reference-assisted input the Fuchs--van de Graaf inequality $1-f(\rho,\sigma)\le\frac12\|\rho-\sigma\|_1$ gives, block by block,
$$
    1-F\left(\widehat{\mathcal N\circ\mathcal E_m},\overline\Lambda\right)
    \le
    \frac12
    \max_{|S|=k}
    \|\Phi_{S,m}-\overline\Phi_{k,n}\|_\diamond
    \le
    \frac12r_k^{3/2}D_q^{-1/4},
$$
while $d(\overline\Lambda,\Lambda_0)=d(\overline\Phi_{k,n},\Lambda_{k,n})$. By Eq.~\eqref{eq:distance_def}, $d(\cdot,\cdot)$ is the supremum over purified logical inputs 
of the Bures-type state distance $\sqrt{1-f}$, and a supremum of metrics is a metric, so the triangle inequality gives
$$
    d\left(\widehat{\mathcal N\circ\mathcal E_m},\Lambda_0\right)
    \le
    \sqrt{\tfrac12r_k^{3/2}D_q^{-1/4}}
    +
    d\left(\overline\Phi_{k,n},\Lambda_{k,n}\right).
$$
The first term is $e^{-\Omega_d(n)}$, because $D_q$ grows exponentially in $n$ by Lemma~\ref{lem:dimension_bound} while $r_k$ is fixed. By Eq.~\eqref{eq:mean-fidelity} and Eq.~\eqref{eq:distance_def}, the second term equals $\sqrt{k(d^2-1)/8}\,n^{-1}+O_{d,k}(n^{-2})$. 
The complementary-channel theorem, Eq.~\eqref{eq:beny}, gives
\begin{align*}
    \min_{\mathcal R_p}
    d\left(\mathcal R_p\circ\mathcal N\circ\mathcal E_m,\mathrm{id}\right)
    &=
    \min_{\Lambda\,\mathrm{constant}}
    d\left(\widehat{\mathcal N\circ\mathcal E_m},\Lambda\right)\\
    &\le
    d\left(\widehat{\mathcal N\circ\mathcal E_m},\Lambda_0\right),
\end{align*}
and the stated bound follows.
\end{proof}

\paragraph{Efficient sampling of multiplicity vectors.}

Theorem~\ref{thrm:main_haar_typical} shows that a typical multiplicity vector works, but sampling a Haar-random unit vector in $M_{\omega_1}$ naively requires about $D_q=e^{\Theta(n)}$ random bits, 
one for every amplitude in the Young--Yamanouchi basis of $M_{\omega_1}$ introduced in Section~\ref{sec:efficient_sampling} (any other convenient basis will do). 
The proof of Theorem~\ref{thrm:main_haar_typical} only ever uses second and fourth moments of the quadratic forms $\bra mA\ket m$ of Lemma~\ref{lem:quadratic-form}, 
so full independence of the amplitudes is unnecessary and four-wise independence of their phases suffices to repeat the same concentration argument. 
This lets an exponentially long random string be replaced by a seed of length $O(n\log d)$, from which every phase can be computed efficiently.

\begin{theorem}\label{thrm:main_efficient_sampling}
Fix $d$ and $k$, and let $n=qd+1$. There is an explicit randomized algorithm using $O(n\log d)$ random bits that outputs a seed $s$ specifying a flat multiplicity vector
$$
\ket{m_s}
=
\frac1{\sqrt{D_q}}
\sum_{T\in\operatorname{SYT}(q+1,q^{d-1})}
z_s(T)\ket T,
$$
where $z_s(T)\in\{1,-1,i,-i\}$ for all $T$, such that each phase $z_s(T)$ is computable in time $\operatorname{poly}(n,\log d)$, and
$$
\mathbb P_s\left[
\max_{|S|=k}
\|\Phi_{S,m_s}-\overline\Phi_{k,n}\|_\diamond
>r_k^{3/2}D_q^{-1/4}
\right]
\le
\frac{M_{n,k}}{\sqrt{D_q}}
=
e^{-\Omega_d(n)}.
$$
\end{theorem}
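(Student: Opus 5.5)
The plan is to rerun the concentration argument behind Theorem~\ref{thrm:main_haar_typical}, with the Haar vector replaced by a flat vector whose phases are four-wise independent. The construction goes as follows. Index the orthonormal Young--Yamanouchi basis $\{\ket T\}$ of $M_{\omega_1}$ by standard Young tableaux of shape $(q+1,q^{d-1})$. Each $T$ is encoded as the sequence of rows in which the entries $1,\dots,n$ are placed. This gives an injective map $\iota:\operatorname{SYT}\to\{0,1\}^{L}$ with $L=O(n\log d)$, computable in $\operatorname{poly}(n,\log d)$ time.

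Take a four-wise independent family $h_s:\{0,1\}^L\to\mathbb Z_4$, for example a random polynomial of degree $\le 3$ over $\mathbb F_{2^L}$ followed by reading off two bits. Its seed has $4L=O(n\log d)$ bits and each evaluation costs $\operatorname{poly}(L)$. Set $z_s(T)=i^{h_s(\iota(T))}$. Every $|z_s(T)|=1$, so $\ket{m_s}$ is automatically a unit vector and no normalization step is needed.

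Next I reuse the structure of the Haar proof. By Lemma~\ref{lem:quadratic-form}, each Choi-matrix entry of $\Phi_{S,m}-\overline\Phi_{k,n}$ has the form $\bra m A\ket m-\operatorname{Tr}(A)/D_q$ for an operator $A=A_{S,\text{entry}}$ with $\|A\|_{\mathrm{op}}\le1$. Indeed, $\overline\Phi_{k,n}$ is the Haar mean, and by Proposition~\ref{prop:orbit-twirl} the Haar mean of $\bra m A\ket m$ is $\operatorname{Tr}A/D_q$.

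For the flat vector,
\[
\mathbb E_s\bra{m_s}A\ket{m_s}=\frac1{D_q}\sum_{T,T'}A_{TT'}\,\mathbb E[\bar z(T)z(T')]=\frac{\operatorname{Tr}A}{D_q},
\]
because pairwise independence of uniform $\mathbb Z_4$ phases kills every off-diagonal term. So the mean is unchanged.

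For the variance,
\[
\mathbb E_s\bigl|\bra{m_s}A\ket{m_s}-\operatorname{Tr}A/D_q\bigr|^2=\frac1{D_q^2}\sum_{T\ne T',\,U\ne U'}A_{TT'}\bar A_{UU'}\,\mathbb E\bigl[\bar z(T)z(T')z(U)\bar z(U')\bigr].
\]
Four-wise independence of uniform fourth-root-of-unity phases makes this expectation vanish unless $\{T,U'\}$ and $\{T',U\}$ pair up, i.e. unless $(U,U')=(T',T)$. The sum then collapses to
\[
\frac1{D_q^2}\sum_{T\neq T'}|A_{TT'}|^2\le\frac{\|A\|_2^2}{D_q^2}\le\frac1{D_q}.
\]
This is the same bound (up to the constant) that Lemma~\ref{lem:haar-second-moment} gives in the Haar case. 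I chose $\mathbb Z_4$ rather than $\pm1$ phases for this step: with $\pm1$ phases, terms like $\mathbb E[z(T)^2z(T')^2]$ would survive and create extra pairings, though only with a constant-factor cost.

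With the variance in hand, Chebyshev gives
\[
\mathbb P\bigl[|\text{entry}|>\epsilon\bigr]\le\frac{1}{D_q\epsilon^2}.
\]
Choose the entrywise threshold $\epsilon=D_q^{-1/4}$. This gives per-entry failure probability $D_q^{-1/2}$. On the complementary event, Lemma~\ref{lem:entry-to-diamond} turns the entrywise bound into
\[
\|\Phi_{S,m_s}-\overline\Phi_{k,n}\|_\diamond\le r_k^{3/2}D_q^{-1/4},
\]
exactly as in Theorem~\ref{thrm:main_haar_typical}. A union bound over the $M_{n,k}$ entries across all erased sets $S$ gives total failure probability at most $M_{n,k}/\sqrt{D_q}$. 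Since $M_{n,k}=\operatorname{poly}(n)$ for fixed $d,k$ and $D_q=e^{\Omega_d(n)}$ by Lemma~\ref{lem:dimension_bound}, this is $e^{-\Omega_d(n)}$.

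The main obstacle I expect is the variance computation. The operators $A$ have to be expressed in the same Young--Yamanouchi basis in which the phases are independent, and I have to check the normalization $\|A\|_2^2\le D_q$ inherited from Lemma~\ref{lem:quadratic-form}. If that lemma only gives a Frobenius-norm or operator-norm bound in a different basis, the unitary basis change preserves $\|A\|_2$, so the argument should still go through.

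A secondary issue is efficiency: it must be possible to compute $\iota(T)$ for the tableaux one actually evaluates. This is immediate from the row-sequence (Yamanouchi word) encoding, since an index $T$ is given as such a word.
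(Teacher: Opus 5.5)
Your proposal is correct and follows the paper's proof essentially step for step: row-word labels viewed as elements of $\mathbb{F}_{2^h}$, a random cubic polynomial with two extracted bits giving four-wise independent uniform $\mathbb{Z}_4$ phases, the second-moment collapse to $D_q^{-2}\sum_{T\neq T'}|A_{TT'}|^2\le D_q^{-1}$, Markov/Chebyshev at threshold $D_q^{-1/4}$, a union bound over the $M_{n,k}$ entries, and Lemma~\ref{lem:entry-to-diamond}. One small index slip: the only surviving pairing is $(U,U')=(T,T')$, not $(T',T)$, since the latter gives $\mathbb{E}[\bar z(T)^2 z(T')^2]=0$ for $\mathbb{Z}_4$ phases; your collapsed sum $\sum_{T\neq T'}|A_{TT'}|^2$ is nonetheless the correct one.
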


This is proved as Theorem~\ref{thrm:randomized_multiplicity_vector}. The construction labels every tableau $T$ by its row word and assigns it a phase using a random cubic polynomial over $\mathbb F_{2^h}$, $h=n\log d$, evaluated at the tableau's label.
The four coefficients of the polynomial form the seed $s$. A Vandermonde-determinant argument shows that the resulting phases are exactly four-wise independent, which is all the concentration bound of Theorem~\ref{thrm:main_haar_typical} requires. 
Only $4h=O(n\log d)$ random bits are used, and each phase is computable in $\operatorname{poly}(n,\log d)$ time.

This strengthens Theorem~\ref{thrm:main_haar_typical} in an operationally relevant way. 
Haar randomness is a useful benchmark because it shows that good multiplicity vectors are abundant, but writing down a generic Haar-random vector means specifying $D_q=e^{\Theta(n)}$ amplitudes, 
so by itself it does not give an efficient code-construction procedure. 
The bounded-independence construction replaces the full continuous randomness by a short classical seed from which every phase is generated efficiently. 
The resulting ensemble retains exactly the moment information the concentration argument uses, namely second and fourth moments of the quadratic forms in Lemma~\ref{lem:quadratic-form}, 
while consuming only $O(n\log d)$ random bits and allowing each phase to be evaluated in $\operatorname{poly}(n,\log d)$ time. 
In this sense the construction identifies how much randomness the error-correcting property actually needs.

\paragraph{Multi-site erasure bound for an efficiently sampled code.}

Let $\ket{m_s}$ be the multiplicity vector produced by Theorem~\ref{thrm:main_efficient_sampling} and let $\mathcal E_s(X):=\mathcal{V}_{m_s}X\mathcal{V}_{m_s}^\dagger$. 
The proof of Theorem~\ref{thrm:main_multisite_erasure_haar} uses the Haar distribution only through the diamond-norm bound of Theorem~\ref{thrm:main_haar_typical}, 
which Theorem~\ref{thrm:main_efficient_sampling} reproduces for $\ket{m_s}$ with the same threshold $r_k^{3/2}D_q^{-1/4}$ and a failure probability that
is again $e^{-\Omega_d(n)}$. The same argument therefore applies to this case with $\ket m$ replaced by $\ket{m_s}$.

\begin{corollary}\label{thrm:main_multisite_erasure}
Let $d\ge2$ and $k\ge1$ be fixed. For every sufficiently large $n=qd+1$, let $\ket{m_s}$ and $\mathcal E_s$ be as above. 
With probability at least $1-e^{-\Omega_d(n)}$ over the seed $s$, the following holds: for every distribution $p$ on the $k$-element subsets of 
$\{1,\ldots,n\}$ there exists a recovery channel $\mathcal R_p$ such that
$$
d\left(
\mathcal R_p\circ\mathcal N\circ\mathcal E_s,
\mathrm{id}
\right)
\le
\frac{\sqrt{k(d^2-1)}}{2\sqrt2}
\frac1n
+
O_{d,k}(n^{-2}).
$$
In particular, with probability at least $1-e^{-\Omega_d(n)}$ over the seed $s$, the code $\mathcal E_s$ is $O(\sqrt{k(d^2-1)}/n)$-correctable against flagged $k$-site erasure, 
uniformly over the erasure distribution $p$.
\end{corollary}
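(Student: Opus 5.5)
The plan is to rerun the proof of Theorem~\ref{thrm:main_multisite_erasure_haar} with the Haar-random vector $\ket m$ replaced by the seeded vector $\ket{m_s}$. The only probabilistic input is the diamond-norm concentration event. Everything else in that proof is deterministic: the mean channel $\overline\Phi_{k,n}$, the constant channel $\Lambda_{k,n}$, and the fidelity expansion of Proposition~\ref{prop:k-loc_fidelity}. None of these depend on the particular multiplicity vector.

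\textbf{Step 1: the good event.} By Theorem~\ref{thrm:main_efficient_sampling}, with probability at least $1-M_{n,k}D_q^{-1/2}$ over the seed $s$,
\[
\max_{|S|=k}\|\Phi_{S,m_s}-\overline\Phi_{k,n}\|_\diamond\le r_k^{3/2}D_q^{-1/4}.
\]
We then check that this failure probability is $e^{-\Omega_d(n)}$. Here $M_{n,k}=\binom nk d^{2k+2}$ grows only polynomially in $n$ for fixed $d,k$. By contrast, $D_q$ grows exponentially in $n$ by Lemma~\ref{lem:dimension_bound}. Hence for $n$ sufficiently large the ratio is exponentially small.

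\textbf{Step 2: a fixed good seed.} Fix a seed $s$ in the good event and an arbitrary erasure distribution $p$. We form the flagged channels $\widehat{\mathcal N\circ\mathcal E_s}$, $\overline\Lambda$, and $\Lambda_0$ exactly as in the previous proof.

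\begin{itemize}
\item \emph{Block decomposition.} All three channels are block diagonal in the flag register with common weights $p_S$.
\item \emph{First distance.} Block by block, the Fuchs--van de Graaf inequality bounds $1-F(\widehat{\mathcal N\circ\mathcal E_s},\overline\Lambda)$ by $\tfrac12 r_k^{3/2}D_q^{-1/4}$.
\item \emph{Second distance.} We have $d(\overline\Lambda,\Lambda_0)=d(\overline\Phi_{k,n},\Lambda_{k,n})$. By Proposition~\ref{prop:k-loc_fidelity}, this equals $\sqrt{k(d^2-1)/8}\,n^{-1}+O_{d,k}(n^{-2})$.
\item \emph{Combining.} The triangle inequality for the purified distance, viewed as a supremum of metrics, adds the two contributions.
\end{itemize}

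The first contribution is $\sqrt{\tfrac12 r_k^{3/2}D_q^{-1/4}}=e^{-\Omega_d(n)}$, so it is absorbed into the $O_{d,k}(n^{-2})$ term. Theorem~\ref{thm:beny} then supplies a recovery channel $\mathcal R_p$ achieving the bound. Since the good event does not depend on $p$, the conclusion holds uniformly over all erasure distributions.

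\textbf{Main obstacle.} The one genuinely new point is confirming that Theorem~\ref{thrm:main_efficient_sampling} delivers precisely the same threshold $r_k^{3/2}D_q^{-1/4}$ used in the previous argument. This is stated explicitly in that theorem, so no extra concentration work is needed. The remaining task is only to track that the implicit constant in $e^{-\Omega_d(n)}$ may change but stays positive.
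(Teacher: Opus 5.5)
Your proposal is correct and takes essentially the same approach as the paper. The paper's proof also conditions on the event of Theorem~\ref{thrm:main_efficient_sampling}, which has probability at least $1-M_{n,k}D_q^{-1/2}=1-e^{-\Omega_d(n)}$ and the same threshold $r_k^{3/2}D_q^{-1/4}$, and then notes that the rest of the proof of Theorem~\ref{thrm:main_multisite_erasure_haar} uses no other property of the multiplicity vector; you simply write out that deterministic remainder (flag-block Fuchs--van de Graaf bound, triangle inequality, Proposition~\ref{prop:k-loc_fidelity}, and Theorem~\ref{thm:beny}) explicitly.
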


\begin{proof}
Condition on the event of Theorem~\ref{thrm:main_efficient_sampling} instead of that of Theorem~\ref{thrm:main_haar_typical}. 
It has probability at least $1-M_{n,k}D_q^{-1/2}=1-e^{-\Omega_d(n)}$ and gives the same bound
$$
\max_{|S|=k}\|\Phi_{S,m_s}-\overline\Phi_{k,n}\|_\diamond
\le
r_k^{3/2}D_q^{-1/4}.
$$
The rest of the proof of Theorem~\ref{thrm:main_multisite_erasure_haar} uses no other property of the multiplicity vector and applies unchanged.
\end{proof}

\subsection{Arbitrary flagged multi-qudit noise}\label{subsec:main_arb_3_noise}

We now extend the erasure results to arbitrary flagged noise acting on a fixed number $k$ of physical sites, the single-site case included. 
The argument does not depend on the value of $k$. The complementary channel~\eqref{eq:dual_to_arbitrary_multiqudit} 
for arbitrary flagged noise on a set $S$ depends on the logical input only through the reduced state $\rho^{(S)}(\rho_L)$ on that set. 
All that the argument uses is that this reduced state splits as
$$
\rho^{(S)}(\rho_L)
=
\tau_S
+
\Delta_S(\rho_L),
$$
with $\tau_S$ independent of the logical input and $\Delta_S=O_{d,k}(1/n)$ uniformly over the sets $S$ on which the noise can act. 
This is Lemma~\ref{lem:arb_noise_from_leakage}, which turns such a uniform bound into an $O_{d,k}(1/\sqrt n)$ correctability statement for arbitrary 
flagged noise supported on those sets.

What makes the reduced state alone sufficient is linearity. By Lemma~\ref{lem:dual_to_arb}, the complementary channel on the flag $S$ is the composition $\widehat{\mathcal N}_S\circ\rho^{(S)}$, 
so the environment sees the logical state only after it has passed through the reduced-state map, whatever the local noise does. 
Both maps are linear, so the splitting survives the composition,
$$
\widehat{\mathcal N}_S\left(\rho^{(S)}(\rho_L)\right)
=
\widehat{\mathcal N}_S(\tau_S)
+
\widehat{\mathcal N}_S\left(\Delta_S(\rho_L)\right),
$$
and the logical-dependent term keeps the $1/n$ prefactor carried by $\Delta_S$. 
Applying $\widehat{\mathcal N}_S$ cannot enlarge that term either, because the trace norm is contractive under channels. 
An unknown local noise channel therefore neither creates nor amplifies the dependence on the logical input, 
therefore it acts on a signal that is already suppressed by $1/n$, and the complementary channel approaches a constant one at the rate at which 
the reduced states do.

Two constructions of this paper supply the required uniform bound. For $k\le3$, the exactly transitive codes of Section~\ref{subsec:main_muti_erasure_noise} 
make the reduced states on all $k$-tuples identical up to the canonical identification of sites, and their logical-state-dependent part is controlled 
by Eq.~\eqref{eq:3q_reduced_state}. For arbitrary fixed $k$, the codes of Section~\ref{subsec:main_multisite_erasure_noise} give the same bound with
 the exact equality replaced by the diamond-norm estimate of Theorem~\ref{thrm:main_haar_typical}.

This allows one to compare the complementary channel of arbitrary channel with a constant channel obtained by replacing 
each reduced state by its input-independent part. The proof uses a trace-norm estimate together with the Fuchs--van de Graaf inequality, 
rather than an explicit fidelity calculation. The resulting bounds are stated in Theorems~\ref{thrm:3_transitive_scale} and~\ref{thrm:k_arb_scale}:

\begin{theorem}
Let $\mathcal H_L=V_{\omega_1}$ and $\mathcal H_P=(V_{\omega_1})^{\otimes n}$. Assume that $d=p^r$, where $p$ is prime and $r$ is a positive integer, and that $n-1=p^s$, 
where $s\ge r$ is a positive integer. Equip $\mathcal H_L$ with the fundamental representation of $\mathfrak{su}(d)$ and $\mathcal H_P$ with the transversal representation
$$
t_a
\longmapsto
\sum_{i=1}^n t_a^{(i)},
\qquad
a=1,\ldots,d^2-1,
$$
where $t_a^{(i)}$ denotes the fundamental action of $t_a$ on the $i$th physical space $V_{\omega_1}$. 
For sufficiently large $n$, there exists an $\mathfrak{su}(d)$-covariant encoding $\mathcal E$ with respect 
to these representations whose code space is invariant under the action of $\operatorname{PGL}(2,n-1)$. 
For any flagged noise channel supported on sets of at most three sites,
$$
\mathcal N(\sigma)
=
\sum_{1\le|S|\le3}
p_S
\ket S\bra S_F
\otimes
\left(
\mathcal N_S\otimes\mathrm{id}_{\hat S}
\right)(\sigma),
$$
with arbitrary channels $\mathcal N_S$ acting on the sites of $S$, there exists a recovery channel $\mathcal R$ such that
$$
d(\mathcal R\mathcal N\mathcal E,\mathrm{id})
=
O_{d}\left(\frac{1}{\sqrt n}\right).
$$
In particular, the code is $O_{d}(1/\sqrt n)$-correctable against arbitrary 
flagged noise on any set of at most three qudits, single-qudit noise included, uniformly over the distribution of noisy sets.
\end{theorem}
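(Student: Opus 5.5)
Existence is already given by Theorem~\ref{thrm:main_3_transitive_covariant_code}, so fix that $\operatorname{PGL}(2,n-1)$-invariant covariant encoding $\mathcal{V}$. The plan is to bound the distance to a constant channel through reduced states rather than through an exact fidelity computation, and then apply Theorem~\ref{thm:beny}.

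\textbf{Step 1: reduced states on sets of size one, two and three.} Since $\operatorname{PGL}(2,n-1)$ is $3$-transitive, it is also $2$- and $1$-transitive. Hence for each size $|S|\in\{1,2,3\}$ all reduced states $\rho^{(S)}(\rho_L)$ coincide up to the canonical identification of sites. For $|S|=3$, Eq.~\eqref{eq:3q_reduced_state} gives $\rho^{(S)}=\tau_S+\Delta_S(\rho_L)$. The explicit $\Delta^{(ijk)}$ consists of terms $\frac{1}{nd^2}r_b t_b^{(i)}$ and terms of order $1/n^2$. Since $\sum_b|r_b|$ is bounded for density matrices, $\|\Delta_S(\rho_L)\|_1\le C_d/n$; trace norms of products of $t$'s on $(\mathbb C^d)^{\otimes 3}$ are $O_d(1)$. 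This bound also holds for traceless logical operators of bounded norm, and for the action on half of a purification. For $|S|\le 2$ I would take partial traces of the three-site decomposition: set $\tau_{S'}=\operatorname{Tr}_{S\setminus S'}\tau_S$ and $\Delta_{S'}=\operatorname{Tr}_{S\setminus S'}\Delta_S$. Partial trace is trace-norm contractive, so the $C_d/n$ bound is inherited, and $\tau_{S'}$ is still input-independent. (For $|S|=1$ this agrees with Eq.~\eqref{eq:reduced_state_sud} at $r=1$.)

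\textbf{Step 2: complementary channel versus a constant channel.} By Lemma~\ref{lem:dual_to_arb}, Eq.~\eqref{eq:dual_to_arbitrary_multiqudit} gives the complementary channel as $\bigoplus_S p_S\,|S\rangle\langle S|\otimes\widehat{\mathcal N}_S(\rho^{(S)}(\rho_L))$. Define the constant channel $\Lambda_0(\rho)=\operatorname{Tr}(\rho)\sum_S p_S|S\rangle\langle S|\otimes\widehat{\mathcal N}_S(\tau_S)$. For any purified input $|\psi\rangle_{LR}$, the difference of the two outputs on each flag block is
\[
(\widehat{\mathcal N}_S\circ\Phi_S\otimes\mathrm{id}_R)(\psi)-(\widehat{\mathcal N}_S(\tau_S)\otimes\psi_R),
\]
which equals $(\widehat{\mathcal N}_S\otimes\mathrm{id})$ applied to the traceless part, i.e.\ $\Delta_S\otimes\mathrm{id}$ acting on $\psi-\tfrac{I}{d}\otimes\psi_R$ suitably interpreted. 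By contractivity of channels its trace norm is at most $\|\Delta_S\otimes\mathrm{id}\|_\diamond$. I would bound this diamond norm by $d\cdot\max\|\Delta_S(E_{ij})\|_1\le C'_d/n$, using a crude Choi-matrix or entrywise estimate in the style of Lemma~\ref{lem:entry-to-diamond}. Alternatively, I would invoke Lemma~\ref{lem:arb_noise_from_leakage} directly, which packages exactly this step.

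\textbf{Step 3: fidelity and conclusion.} The outputs are block diagonal in the flag, so their trace distance is the $p$-weighted sum of the block trace distances and is therefore at most $C'_d/n$, uniformly in $p$. Fuchs--van de Graaf gives $1-F\le\tfrac12 C'_d/n$, hence $d(\widehat{\mathcal N\circ\mathcal E},\Lambda_0)\le\sqrt{C'_d/(2n)}=O_d(n^{-1/2})$. Theorem~\ref{thm:beny} then yields the recovery map $\mathcal R$.

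\textbf{Main obstacle.} I expect the main difficulty to be making the $O_d(1/n)$ bound on $\Delta_S$ hold in diamond norm rather than just on density-matrix inputs. This is needed because of the reference system, and because $\tau_S$ is not maximally mixed, so the $1/n^2$ terms must be shown to be uniformly controlled. I would handle it by writing $\Delta_S$ explicitly as a linear map $t_b\mapsto(\cdots)$ and bounding its Choi operator term by term using $\|t_a\|_\infty\le 1$.
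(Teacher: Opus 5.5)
Your proposal is correct and follows essentially the same route as the paper's proof: fix the $\operatorname{PGL}(2,n-1)$-invariant code, use the three-site splitting $\rho^{(S)}=\tau_S+\Delta_S$ from Proposition~\ref{prop:3site-fundamental}, pass the splitting to one- and two-site sets by partial trace, and feed the resulting uniform $O_d(1/n)$ leakage bound into Lemma~\ref{lem:arb_noise_from_leakage} (Fuchs--van de Graaf followed by Theorem~\ref{thm:beny}). Your explicit diamond-norm (Choi) bound on $\Delta_S\otimes\mathrm{id}_R$ simply spells out the step the paper compresses into ``every term of $\Delta_S$ carries a factor $1/n$''; any fixed $d$-dependent constant in that estimate is enough.
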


\begin{theorem}\label{thrm:main_k_arb_scale}
Let $d\ge2$ and $k\ge1$ be fixed, let $n=qd+1$, and let $\mathcal E_s$ be the efficiently sampled $SU(d)$-covariant encoding of Theorem~\ref{thrm:main_efficient_sampling}. 
With probability at least $1-e^{-\Omega_d(n)}$ over the seed $s$, the following holds: for every flagged $k$-qudit noise channel
$$
\mathcal N(\sigma)
=
\sum_{|S|=k}
p_S
\ket S\bra S_F
\otimes
\left(
\mathcal N_S\otimes\mathrm{id}_{\hat S}
\right)(\sigma),
$$
with arbitrary channels $\mathcal N_S$ acting on the erased sites, there exists a recovery channel $\mathcal R$ such that
$$
d(\mathcal R\mathcal N\mathcal E_s,\mathrm{id})
\le
\sqrt{\frac kn}
+
e^{-\Omega_d(n)} .
$$
In particular, with probability at least $1-e^{-\Omega_d(n)}$ over the seed $s$, 
the code $\mathcal E_s$ is $O(\sqrt{k/n})$-correctable against arbitrary flagged $k$-site noise, uniformly over the distribution of noisy sets.
\end{theorem}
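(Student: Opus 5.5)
The strategy is to reduce to the complementary-channel criterion (Theorem~\ref{thm:beny}) and compare $\widehat{\mathcal N\circ\mathcal E_s}$ with a constant channel that keeps only the input-independent part of each reduced state. By Lemma~\ref{lem:dual_to_arb}, on the flag $S$ the environment receives $\widehat{\mathcal N}_S(\Phi_{S,m_s}(\rho_L))$. We condition on the event of Theorem~\ref{thrm:main_efficient_sampling}; it has probability at least $1-e^{-\Omega_d(n)}$, and on it $\max_{|S|=k}\|\Phi_{S,m_s}-\overline\Phi_{k,n}\|_\diamond\le r_k^{3/2}D_q^{-1/4}$. Everything below holds on this event, for every choice of $p$ and every family $\{\mathcal N_S\}$ simultaneously, since the event does not depend on them. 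As the candidate constant channel we take
\[
\Lambda_0(\rho_L)=\operatorname{Tr}(\rho_L)\sum_{|S|=k}p_S\ket S\bra S_{F_E}\otimes\widehat{\mathcal N}_S(\lambda_{k,n}),
\]
with $\lambda_{k,n}=\overline\Phi_{k,n}(\tau)$ from Eq.~\eqref{eq:lambda_k_n}.

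The first step is a uniform trace-norm bound on the logical dependence of the mean channel. By Proposition~\ref{prop:exact-mean-channel},
\[
\overline\Phi_{k,n}(X)-\operatorname{Tr}(X)\lambda_{k,n}=\frac1n\sum_{j=1}^k\bigl(X-\operatorname{Tr}(X)\tau\bigr)^{(j)}\otimes\beta_{k-1,N}^{(\widehat j)}.
\]
The $\beta_{k,N}$ terms cancel, because $\operatorname{Tr}(\tau)=1$ makes them identical in both expressions. The map $X\mapsto X-\operatorname{Tr}(X)\tau$ is the difference of the identity channel and a replacement channel, so it has diamond norm at most $2$. Each summand is this map tensored with a state, hence also has diamond norm at most $2$. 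This gives $\|\overline\Phi_{k,n}-\Lambda_{k,n}\|_\diamond\le 2k/n$. We then want the sharper statement that, on any purified input, the trace distance between the two outputs is at most $2k/n$ (or $k/n$ after halving). Lifting to the complementary channel, we apply $\mathrm{id}_R\otimes\widehat{\mathcal N}_S$ blockwise on each flag. Contractivity of the trace norm under channels and block-diagonality in the flag give, for every reference-assisted input,
\[
\tfrac12\bigl\|(\widehat{\mathcal N\circ\mathcal E_s}\otimes\mathrm{id})(\psi)-(\Lambda_0\otimes\mathrm{id})(\psi)\bigr\|_1\le \frac kn+\tfrac12 r_k^{3/2}D_q^{-1/4}.
\]

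The second step converts this to the distance $d$. The Fuchs--van de Graaf inequality $1-f\le\frac12\|\rho-\sigma\|_1$ gives
\[
d\bigl(\widehat{\mathcal N\circ\mathcal E_s},\Lambda_0\bigr)\le\sqrt{\frac kn+\tfrac12r_k^{3/2}D_q^{-1/4}}\le\sqrt{\frac kn}+\sqrt{\tfrac12 r_k^{3/2}D_q^{-1/4}}.
\]
The last term is $e^{-\Omega_d(n)}$ because $D_q$ grows exponentially by Lemma~\ref{lem:dimension_bound} while $r_k$ is fixed. Theorem~\ref{thm:beny} then produces $\mathcal R$. This is essentially Lemma~\ref{lem:arb_noise_from_leakage}, instantiated with $\tau_S=\lambda_{k,n}$ and $\Delta_S=\Phi_{S,m_s}-\Lambda_{k,n}$; its uniform bound is supplied by the triangle inequality with $\overline\Phi_{k,n}$.

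The main obstacle is getting the constant exactly $\sqrt{k/n}$, with no factor of $\sqrt2$. This requires bounding the trace distance produced by each $\frac1n(X-\operatorname{Tr}X\,\tau)^{(j)}$ summand by $\frac1n$ on purified inputs, not by $\frac2n$. The first thing to try is to use the purification directly. For $\psi_{LR}$, the difference $\psi_{LR}-\tau\otimes\psi_R$ has trace norm at most $2(1-1/d^2)$ when $\psi$ is maximally entangled; the worst case in general is the maximally entangled input, and the resulting bound $\frac1n\cdot2(1-d^{-2})$ per site would still give a constant of about $\sqrt2$. The fallback is to absorb this constant into the $e^{-\Omega_d(n)}$ slack, which is not possible, or to compare instead with $\overline\Lambda$ built from $\overline\Phi_{k,n}$ and use covariance (Lemma~\ref{fidelity_opt}) to reduce to the maximally mixed input. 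At that input the leading deviation $\frac1n\sum_j(\dots)$ has the explicit trace norm $\frac{k}{n}\cdot\frac{2(d^2-1)}{d^2}\cdot\frac12\le \frac kn$, which is where the claimed constant would come from.
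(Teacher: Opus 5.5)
Your argument in the first three paragraphs is correct and is essentially the paper's proof: Proposition~\ref{prop:k-site-uniform-leakage} obtains the uniform leakage bound $\delta\le\epsilon+2k/n$ by the same triangle inequality through $\overline\Phi_{k,n}$, and Lemma~\ref{lem:arb_noise_from_leakage} converts it into $d\le\sqrt{\delta/2}\le\sqrt{k/n}+\sqrt{\epsilon/2}$ via contractivity under $\widehat{\mathcal N}_S$, flag block-diagonality, Fuchs--van de Graaf, and Theorem~\ref{thm:beny}. The ``obstacle'' in your last paragraph does not exist. A per-summand trace norm of $2/n$ (or $2(1-d^{-2})/n$) is a trace distance of at most $1/n$, and Fuchs--van de Graaf already carries the factor $\tfrac12$, so your own displayed chain gives $1-f\le k/n+\epsilon/2$ and hence $\sqrt{k/n}$ with no extra $\sqrt2$. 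You should drop the covariance fallback, which is in any case not available in general, because an arbitrary $\widehat{\mathcal N}_S$ destroys the $SU(d)$-covariance that Lemma~\ref{fidelity_opt} requires.
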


The loss from the $1/n$ erasure scaling to the $1/\sqrt n$ scaling comes from the use of the trace-norm bound and the Fuchs--van de Graaf inequality 
in place of an explicit fidelity calculation. The argument applies to arbitrary local noise channels, but it is not expected to give sharp constants or optimal scaling in general. 
A direct fidelity calculation for more general noise models is left for future work.

\section{Covariant encoding for Analog Quantum Simulation}\label{sec:main_covar_analog}

We now discuss how the covariant AQECCs constructed in Sections~\ref{sec:main_su2} and~\ref{sec:main_sud} can be used for analog quantum simulation. 
The aim is to run the simulated evolution directly on the encoded state: 
each generator of the target dynamics has to be implemented on the code space by an operation the physical system already performs, 
here a sum of single-site generators, so that the encoding intertwines the simulated dynamics with the physical one. 
This is the covariance condition of Section~\ref{subsec:main_covar_enc}, and the Eastin--Knill theorem is what forces it to hold only approximately. 
Applied to every Hamiltonian on the simulated Hilbert space $\mathcal H$, it would require covariance with respect to all of $\mathfrak{su}(\mathcal H)$.

In analog simulation, however, one usually does not need to implement all unitaries on the full Hilbert space. Instead, the available evolutions are generated by a restricted set of Hamiltonians, and the Lie algebra generated by these Hamiltonians is the dynamical Lie algebra (DLA). In many symmetric physical systems, the DLA is contained in an algebra of the form
$$
\mathfrak{su}(d_1)
\oplus
\cdots
\oplus
\mathfrak{su}(d_k)
\oplus
\mathfrak{u}(1)^{\oplus(r-1)} .
$$
An AQECC covariant with respect to this smaller algebra therefore makes all dynamics generated within it act transversally on the encoded space.

In this section, we construct such codes by building on the $SU(d)$-covariant AQECCs from the previous sections. We call the resulting construction a block encoding. We then analyze its performance under two-qudit erasure noise, including both in-block and inter-block erasures. For a schematic illustration see Fig.~\ref{fig:theme}(d).

\subsection{Dynamical Lie algebra and Symmetries}\label{subsec:main_DLA}

\paragraph{Symmetric analog dynamics.}

Consider a quantum system consisting of $N$ qubits, with Hilbert space
$$
\mathcal H=(\mathbb C^2)^{\otimes N},
\qquad
\dim\mathcal H=d=2^N .
$$
The dynamics are governed by a time-dependent Hamiltonian
\begin{equation}\label{eq:system_ham}
H(t)
=
H_0
+
\sum_{j=1}^m u_j(t)H_j,
\end{equation}
where $H_0$ is the intrinsic drift Hamiltonian, ${H_j}_{j=1}^m$ are control Hamiltonians, and $u_j(t)\in\mathbb R$ are tunable control fields.

In this section, we focus on \textit{global controls}. By this we mean controls that act collectively on the physical sites and preserve the relevant permutation symmetry. 
Collective spin operators give a standard example:
$$
\sum_{i=1}^N \sigma_\alpha^{(i)},
\qquad
\alpha=x,y,z .
$$
Such controls are invariant under permutations of the physical sites. In Section~\ref{sec:main_univ_analog}, we extend the discussion to \textit{local controls}, 
which act on individual sites and generally break the permutation symmetry.

The set of reachable unitary transformations is determined by the \textit{dynamical Lie algebra}~\cite{DAlessandro2021, Albertini2003}
$$
\mathrm{Lie}_{\mathbb R}
\{iH_0,iH_1,\ldots,iH_m\}
\subseteq
\mathfrak{su}(\mathcal H).
$$
Here $\mathfrak{su}(\mathcal H)$ denotes the Lie algebra of traceless skew-Hermitian linear operators on $\mathcal H$. After choosing a basis of $\mathcal H$, 
this is the usual matrix Lie algebra $\mathfrak{su}(d)$.

The DLA depends on the specific Hamiltonians in Eq.~\eqref{eq:system_ham}. In general, determining it can be difficult. 
In many symmetric systems, however, the DLA is contained in an invariant Lie algebra associated with a finite site-permutation symmetry. 
This invariant algebra has a canonical block structure, which is the starting point for the block-encoding construction below.

\paragraph{Invariant Lie algebra.}

Let $G\subseteq S_N$ be a finite group acting on $\mathcal H$ by permuting the qubit sites:
$$
P_g
\ket{i_1}\otimes\cdots\otimes\ket{i_N}
=
\ket{i_{g^{-1}(1)}}\otimes\cdots\otimes\ket{i_{g^{-1}(N)}} .
$$
Here $G$ acts on the logical simulation Hilbert space. This should be distinguished from the finite permutation groups used earlier in Eq.~\eqref{eq:group_act_on_phys_space}, 
which acted on the physical subsystems of a code in order to impose symmetry constraints on the code space.

The induced action on operators is by conjugation, as in Eq.~\eqref{eq:group_act_oper}. The $G$-invariant linear operators are
$$
\operatorname{End}(\mathcal H)^G
:=
{
X\in\operatorname{End}(\mathcal H):
P_gXP_g^\dagger=X
\text{ for all } g\in G
}.
$$
The corresponding \textit{invariant Lie algebra}~\cite{DAlessandroHartwig2021, Albertini2020} is
$$
\mathfrak{su}(\mathcal H)^G
:=
\operatorname{End}(\mathcal H)^G
\cap
\mathfrak{su}(\mathcal H)
\subseteq
\mathfrak{su}(\mathcal H).
$$

In physical examples, $G$ is usually a site-permutation symmetry of the Hamiltonians. For instance, if the drift and control Hamiltonians have the form
$$
\begin{gathered}
H_x
=
\sum_{i=1}^N\sigma_x^{(i)},
\qquad
H_z
=
\sum_{i=1}^N\sigma_z^{(i)},
\\
H_{zz}
=
\sum_{(i,j)\in E(\Gamma)}
\sigma_z^{(i)}\sigma_z^{(j)},
\end{gathered}
$$
where $\Gamma$ is an interaction graph, then the site-permutation symmetry is the automorphism group of $\Gamma$. 
For a complete graph this group is $S_N$, while for a cycle graph it contains the dihedral symmetry $D_N$. 
Table~\ref{tab:physical_symmetry_groups} lists several standard examples of spin systems with finite site-permutation symmetries.

\begin{table*}[t]
\centering
\label{tab:physical_symmetry_groups}
\renewcommand{\arraystretch}{1.25}
\setlength{\tabcolsep}{3pt}

\begin{tabular}{llll}
\hline
\parbox[t]{0.19\textwidth}{Physical system}
&
\parbox[t]{0.38\textwidth}{Controlled Hamiltonian $ H(t) = H_{0} + \sum_{j=1}^k u_j(t) H_j$}
&
\parbox[t]{0.22\textwidth}{Site-permutation symmetry}
&
\parbox[t]{0.13\textwidth}{References}
\\
\hline

\parbox[t]{0.19\textwidth}{Uniform open spin chain}
&
\parbox[t]{0.38\textwidth}{
\[
    \sum_{i=1}^{N-1} \sigma_z^{(i)} \sigma_z^{(i+1)}
    +u_x(t)S_x +u_z(t) S_z
\]
}
&
\parbox[t]{0.22\textwidth}{
Reflection symmetry \(C_2\cong\mathbb Z_2\)
}
&
\parbox[t]{0.13\textwidth}{
\cite{Wang2012,Wang2016,Wiersema2024}
}
\\[1.2em]

\parbox[t]{0.19\textwidth}{Uniform spin ring}
&
\parbox[t]{0.38\textwidth}{
\[
\sum_{i=1}^{N} \sigma_z^{(i)} \sigma_z^{(i+1)}
    +u_x(t)S_x +u_z(t)S_z, \quad \sigma_z^{(N+1)}=\sigma_z^{(1)}
\]
}
&
\parbox[t]{0.22\textwidth}{
Dihedral group \(D_N\); translations give \(C_N\)
}
&
\parbox[t]{0.13\textwidth}{
\cite{Wang2012,Kokcu2024,Wiersema2024}
}
\\[1.2em]

\parbox[t]{0.19\textwidth}{Spin network on a graph \(\Gamma\)}
&
\parbox[t]{0.38\textwidth}{
\[
\sum_{(i,j)\in E(\Gamma)} J_{ij} \;\sigma_z^{(i)} \sigma_z^{(j)}
    +u_x(t)S_x +u_z(t)S_z
\]
}
&
\parbox[t]{0.22\textwidth}{
Weighted graph automorphism group \(\operatorname{Aut}(\Gamma,J)\)
}
&
\parbox[t]{0.13\textwidth}{
\cite{Wang2012,Chen2020,Kokcu2024,Wiersema2024}
}
\\[1.2em]

\parbox[t]{0.19\textwidth}{Fully connected collective-spin model}
&
\parbox[t]{0.38\textwidth}{
\[
S_z^2+u_x(t)S_x+u_z(t)S_z
\]
}
&
\parbox[t]{0.22\textwidth}{
Full permutation group \(S_N\)
}
&
\parbox[t]{0.13\textwidth}{
\cite{Chen2020, Wiersema2024}
}
\\[1.2em]

\parbox[t]{0.19\textwidth}{Tavis--Cummings model}
&
\parbox[t]{0.38\textwidth}{
\[
\omega(t) a^\dagger a+\frac{\hbar\Omega(t)}{2} S_z
+\frac{g(t)\hbar{\sqrt N}}2(a+a^\dagger)S_x
\]
}
&
\parbox[t]{0.22\textwidth}{
Permutation symmetry \(S_N\) of identical atoms
}
&
\parbox[t]{0.13\textwidth}{
\cite{Tavis1968}
}
\\[1.2em]

\parbox[t]{0.19\textwidth}{Rydberg atom array with uniform drive}
&
\parbox[t]{0.38\textwidth}{
\[
\frac{\Omega(t)}{2}S_x
-\Delta(t)\sum_i n_i
+\sum_{i<j}V_{ij}n_in_j
\]
}
&
\parbox[t]{0.22\textwidth}{
Geometric automorphisms of the array preserving \(V_{ij}\)
}
&
\parbox[t]{0.13\textwidth}{
\cite{Browaeys2020,Daley2022}
}
\\

\hline
\end{tabular}
\caption{
Examples of spin Hamiltonian systems with finite site-permutation symmetries. 
In each row, both the drift Hamiltonian and the listed global controls are invariant under the corresponding group \(G\subseteq S_N\). 
Hence, the DLA generated by these Hamiltonians is contained in
\(\mathfrak{su}(\mathcal H)^G\). Here $S_{\alpha}=\sum_i \sigma_{\alpha}^{(i)}, \; \alpha=x, y, z$, and $n_i=\left(\sigma_z^{(i)}+1\right) / 2$.
}
\end{table*}

If all Hamiltonians in Eq.~\eqref{eq:system_ham} are invariant under the action of $G$, then
$$
iH_0,iH_1,\ldots,iH_m
\in
\mathfrak{su}(\mathcal H)^G .
$$
Since $\mathfrak{su}(\mathcal H)^G$ is closed under commutators, the DLA is contained in the invariant Lie algebra:
$$
\mathrm{Lie}_{\mathbb R}
\{iH_0,iH_1,\ldots,iH_m\}
\subseteq
\mathfrak{su}(\mathcal H)^G .
$$
Thus $\mathfrak{su}(\mathcal H)^G$ provides a symmetry-determined ambient algebra for the reachable dynamics. 
We do not attempt to characterize when the DLA is equal to $\mathfrak{su}(\mathcal H)^G$. 
This controllability question depends on the detailed structure of the Hamiltonians and is known only in special cases; see Refs.~\cite{Albertini2020, Kokcu2024, Wiersema2024}.

\paragraph{Block structure from representation theory.}

As shown in Lemma~\ref{lem:reductive_decomposition}, the invariant Lie algebra has the reductive decomposition
\begin{equation}\label{eq:inv_alg_decomp}
\mathfrak{su}(\mathcal H)^G
\cong
\mathfrak{su}(d_1)
\oplus
\cdots
\oplus
\mathfrak{su}(d_k)
\oplus
\mathfrak{u}(1)^{\oplus(r-1)} .
\end{equation}
This decomposition follows from the decomposition of $\mathcal H$ into irreducible representations of $G$:
\begin{equation}\label{eq:H_G_irrep_decomp}
\mathcal H
\cong
\bigoplus_{\alpha\in\mathcal I_{\mathcal H}}
V_\alpha\otimes M_\alpha .
\end{equation}
Here $V_\alpha$ is an irreducible representation of $G$, $M_\alpha$ is the corresponding multiplicity space, and $\mathcal I_{\mathcal H}$ is the set of irreducible representations of $G$ that appear in $\mathcal H$.

By definition, $\mathfrak{su}(\mathcal H)^G$ consists of traceless skew-Hermitian operators on $\mathcal H$ that commute with the action of $G$. 
Schur's lemma implies that every such operator acts as the identity on each irrep factor $V_\alpha$ and acts nontrivially only on the corresponding multiplicity space $M_\alpha$. 
Equivalently,
$$
\operatorname{End}(\mathcal H)^G
\cong
\bigoplus_{\alpha\in\mathcal I_{\mathcal H}}
I_{V_\alpha}
\otimes
\operatorname{End}(M_\alpha).
$$
After imposing skew-Hermiticity and the global tracelessness condition, this gives
$$
\mathfrak{su}(\mathcal H)^G
\cong
\bigoplus_{\alpha\in\mathcal I_{\mathcal H}}
\mathfrak{su}(M_\alpha)
\oplus
\mathfrak{u}(1)^{\oplus(|\mathcal I_{\mathcal H}|-1)} .
$$
Thus, if $m_\alpha:=\dim M_\alpha$, then
$$
\mathfrak{su}(\mathcal H)^G
\cong
\mathfrak{su}(m_1)
\oplus
\cdots
\oplus
\mathfrak{su}(m_{|\mathcal I_{\mathcal H}|})
\oplus
\mathfrak{u}(1)^{\oplus(|\mathcal I_{\mathcal H}|-1)} .
$$
This identifies the parameters in Eq.~\eqref{eq:inv_alg_decomp}: the numbers $d_i$ are the dimensions of the multiplicity spaces associated with 
the irreducible representations of $G$ appearing in $\mathcal H$, and $r=|\mathcal I_{\mathcal H}|$. In what follows, we will also write $d_i=m_i$.

\subsection{\texorpdfstring{$\left(\oplus_i \mathfrak{s}\mathfrak{u}\left(d_i\right)\right)
\oplus\left(\oplus_j \mathfrak{u}(1)\right)$}{(sum i su(d i)) plus (sum j u(1))}-covariant block encoding}
\label{subsec:main_covar_block_enc}

A fully $\mathfrak{su}(\mathcal H)$-covariant encoding would make every Hamiltonian on $\mathcal H$ act transversally on the code space. 
For analog simulation with symmetries, this is usually more covariance than is needed. If $\dim\mathcal H=d$, then a nontrivial transversal implementation of $SU(d)$ 
requires each physical subsystem to carry a nontrivial representation of $SU(d)$. Since the smallest nontrivial irreducible representation of $SU(d)$ is the fundamental 
representation, the local physical dimension is at least $d$. Thus, a fully $\mathfrak{su}(\mathcal H)$-covariant encoding inherits the full Hilbert-space dimension of 
the simulated system.

In the symmetry-adapted setting of Section~\ref{subsec:main_DLA}, the reachable dynamics are contained in the smaller algebra
$$
\mathfrak{su}(d_1)
\oplus
\cdots
\oplus
\mathfrak{su}(d_k)
\oplus
\mathfrak{u}(1)^{\oplus(r-1)} .
$$
It is therefore sufficient to construct a code covariant with respect to this algebra, rather than with respect to all of $\mathfrak{su}(\mathcal H)$. 
We do this by encoding each non-abelian block separately, using the $SU(d)$-covariant codes constructed above. 
This gives a block encoding adapted to the symmetry decomposition of the invariant algebra containing the DLA.

\paragraph{Definition of the block encoding.}

For the $a$th non-abelian block, let $V_{d_a,\lambda_a}$ be an irreducible representation of $\mathfrak{su}(d_a)$, chosen as in Section~\ref{sec:main_sud}, and define
$$
A_a
:=
(V_{d_a,\lambda_a})^{\otimes n_a},
\qquad
a=1,\ldots,k .
$$
For the abelian factors, choose physical spaces
$$
B_s
:=
(W_s)^{\otimes n_{k+s}},
\qquad
s=1,\ldots,r-1 .
$$
The total physical Hilbert space of the block encoding is
$$
\mathcal H_P^{\mathrm{block}}
:=
A_1
\otimes
\cdots
\otimes
A_k
\otimes
B_1
\otimes
\cdots
\otimes
B_{r-1}.
$$

The physical representation of
$$
\mathfrak{su}(d_1)
\oplus
\cdots
\oplus
\mathfrak{su}(d_k)
\oplus
\mathfrak{u}(1)^{\oplus(r-1)}
$$
is defined blockwise. For
$$
(t_1,\ldots,t_k,u_1,\ldots,u_{r-1})
\in
\mathfrak{su}(d_1)
\oplus
\cdots
\oplus
\mathfrak{su}(d_k)
\oplus
\mathfrak{u}(1)^{\oplus(r-1)},
$$
we set
\begin{equation}\label{eq:block_enc_def}
\begin{aligned}
&(t_1,\ldots,t_k,u_1,\ldots,u_{r-1})
\longmapsto
\\
&\qquad
\pi_{d_1,\lambda_1}(t_1)
+
\cdots
+
\pi_{d_k,\lambda_k}(t_k)\\
&\qquad+
R_1(u_1)
+
\cdots
+
R_{r-1}(u_{r-1}) .
\end{aligned}
\end{equation}
Here each term acts nontrivially only on its corresponding block and as the identity on all other blocks. The non-abelian block actions are transversal:
\begin{equation}
\pi_{d_j,\lambda_j}(t_j)
=
\sum_{i=1}^{n_j}
\pi_{d_j,\lambda_j}^{(i+\sum_{s=1}^{j-1}n_s)}(t_j).
\end{equation}
The operator $R_s(u_s)$ denotes the chosen representation of the generator $u_s\in\mathfrak{u}(1)$ on the abelian block $B_s=(W_s)^{\otimes n_{k+s}}$.

We do not analyze $U(1)$-covariant codes in this work. The abelian blocks should therefore be viewed as placeholders 
for any suitable $U(1)$-covariant construction. Such constructions can be obtained, for example, from Ref.~\cite{Lin2025}, 
and can be used to specify the choices of $W_s$ and $R_s$.

The resulting code space is a product code:
\begin{equation}\label{eq:product_code}
C_1
\otimes
\cdots
\otimes
C_{k+r-1}.
\end{equation}
For $i\le k$, the factor $C_i$ is the code space of an $\mathfrak{su}(d_i)$-covariant encoding. For $k<i\le k+r-1$, the factor $C_i$ is the code space of a 
$\mathfrak{u}(1)$-covariant encoding. This product structure is natural from representation theory, because irreducible representations of a direct sum of Lie algebras 
are tensor products of irreducible representations of the summands.

As a simple example, consider an $\mathfrak{su}(2)\oplus\mathfrak{su}(3)$-covariant encoding with
$$
\mathcal H_P^{\mathrm{block}}
=
(V_j)^{\otimes n}
\otimes
(V_{3,\omega_1})^{\otimes n},
$$
where $V_j$ is the spin-$j$ representation of $\mathfrak{su}(2)$ and $V_{3,\omega_1}$ is the fundamental representation of $\mathfrak{su}(3)$. If $\lambda_1$ denotes the first 
standard Gell-Mann matrix, then the generator $(\sigma_x,\lambda_1)$ is represented as
$$
\sum_{i=1}^n J_x^{(i)}
+
\sum_{i=1}^n \lambda_1^{(i+n)} .
$$

\paragraph{Encoding symmetry-adapted Hamiltonians.}

We now relate this block encoding to analog simulation. The Hamiltonians ${H_0,H_1,\ldots,H_m}$ need not be block diagonal in the computational basis. 
However, if $G$ is a site-permutation symmetry of the Hamiltonians, then they become block diagonal in a symmetry-adapted basis associated with 
the decomposition~\eqref{eq:H_G_irrep_decomp}. A systematic construction of such a basis using Young symmetrizers was given in Ref.~\cite{DAlessandroHartwig2021}.

Let $U_{\mathrm{sym}}$ denote the corresponding symmetry-adapted transform. Then every $G$-invariant Hamiltonian has the block form
$$
iH
\longmapsto
U_{\mathrm{sym}}^\dagger iH U_{\mathrm{sym}}
=
\underbrace{i\widetilde H_{\alpha}}_{\in\mathfrak{su}(d_1)}
\oplus
\underbrace{i\widetilde H_{\beta}}_{\in\mathfrak{su}(d_2)}
\oplus
\cdots .
$$
Once this decomposition is known, each block Hamiltonian is encoded using the corresponding block representation in Eq.~\eqref{eq:block_enc_def}. 
In Example~\ref{ex:Sn_block_diagonalization}, we work this out for $G=S_3$. In that case the relevant invariant Lie algebra is
$$
\mathfrak{su}(4)
\oplus
\mathfrak{su}(2)
\oplus
\mathfrak{u}(1)\subseteq \mathfrak{su}(2^3),
$$
and we give the explicit block diagonal form of
\begin{equation}\label{eq:s3_ex_hams}
\begin{gathered}
H_x
=
\sigma_x^{(1)}
+
\sigma_x^{(2)}
+
\sigma_x^{(3)},
\qquad
H_y
=
\sigma_y^{(1)}
+
\sigma_y^{(2)}
+
\sigma_y^{(3)},
\\
H_{zz}
=
\sigma_z^{(1)}\sigma_z^{(2)}
+
\sigma_z^{(1)}\sigma_z^{(3)}
+
\sigma_z^{(2)}\sigma_z^{(3)} .
\end{gathered}
\end{equation}

\paragraph{Resource scaling.}

The possible advantage of the block encoding is clearest for large symmetry groups. For $G=S_N$, Lemma~\ref{lem:Sn_full_dimension} gives
$$
\dim\left(\mathfrak{su}(\mathcal H)^{S_N}\right)
=
\binom{N+3}{3}
-
1 .
$$
A related result was obtained in Ref.~\cite{Albertini2018}; we include the derivation in the Appendix for completeness. 
Therefore, the invariant Lie algebra has dimension polynomial in $N$, whereas
$$
\dim\mathfrak{su}(\mathcal H)
=
4^N-1
$$
for $\mathcal H=(\mathbb C^2)^{\otimes N}$.

For a general subgroup $G\subseteq S_N$, Lemma~\ref{lem:subgroup_su_inclusion} shows that
$$
\mathfrak{su}(\mathcal H)^{S_N}
\subseteq
\mathfrak{su}(\mathcal H)^G .
$$
Consequently,
\begin{equation}\label{eq:inv_alg_dim}
\dim\mathfrak{su}(\mathcal H)^G
\ge
\dim\mathfrak{su}(\mathcal H)^{S_N}
=
\binom{N+3}{3}
-
1 .
\end{equation}
Larger symmetry groups impose more constraints and can therefore lead to smaller invariant Lie algebras. 
Compared with a fully $\mathfrak{su}(\mathcal H)$-covariant encoding, the symmetry-adapted block encoding reduces the dimensions of the non-abelian blocks 
that need to be encoded covariantly. The amount of reduction depends on the symmetry group and on the multiplicities in the decomposition~\eqref{eq:H_G_irrep_decomp}.

There are two main reasons to consider block encodings covariant with respect to
$$
\mathfrak{su}(d_1)
\oplus
\cdots
\oplus
\mathfrak{su}(d_k)
\oplus
\mathfrak{u}(1)^{\oplus(r-1)}
$$
rather than fully $\mathfrak{su}(\mathcal H)$-covariant encodings. First, in the presence of symmetries, passing from the full algebra $\mathfrak{su}(\mathcal H)$ to the invariant algebra 
$\mathfrak{su}(\mathcal H)^G$ can reduce the size of the non-abelian sectors that must be protected covariantly. 
Second, from a practical perspective, stacking several $\mathfrak{su}(d_i)$-covariant code blocks may scale more favorably than increasing 
the local physical dimension to match the full dimension of $\mathcal H$, as would be required for a fully $\mathfrak{su}(\mathcal H)$-covariant encoding.

In the block-encoding setting, the numbers $n_i$ of physical qudits assigned to the different $\mathfrak{su}(d_i)$ blocks can be chosen independently. 
This gives additional flexibility in distributing physical resources among the different symmetry sectors. 
By contrast, in the fully $\mathfrak{su}(\mathcal H)$-covariant setting, one is tied to a single global code block with a fixed global block size $n$. 
Therefore, the block encoding can reduce the required local physical dimension while retaining transversal implementation of the symmetry-preserving dynamics.

These advantages are useful only if the resulting block encoding still has good approximate error-correction properties. We analyze this question next.

\subsection{Correlated flagged two-qudit erasure}\label{subsec:main_covar_analog_noise}

\paragraph{From product noise to correlated erasures.}

For product noise across the blocks, the block code inherits the AQEC properties of its individual components. Indeed, consider a noise channel of the form
$$
\mathcal N
=
\bigotimes_{i=1}^{2|\mathcal I_{\mathcal H}|-1}
\mathcal N_i
$$
for the
$$
\mathfrak{su}(m_1)
\oplus
\cdots
\oplus
\mathfrak{su}(m_{|\mathcal I_{\mathcal H}|})
\oplus
\mathfrak{u}(1)^{\oplus(|\mathcal I_{\mathcal H}|-1)}
$$
-covariant block code constructed in Section~\ref{subsec:main_covar_block_enc}. Since the block code is a product code~\eqref{eq:product_code}, 
and each $\mathcal N_i$ acts on a single block, the correctability analysis reduces to the corresponding analysis for the individual covariant codes. 
In particular, when the $\mathcal N_i$ are erasure-type channels of the form considered in Eqs.~\eqref{eq:erasure_noise} and~\eqref{eq:arbitrary_multiqudit_noise}, 
the block code is corrected blockwise.

This product noise model does not capture correlated erasures involving qudits from different blocks. 
We therefore consider a two-qudit erasure model that includes both in-block and inter-block erasures.

\paragraph{Two-block code and reduced states.}

We focus on an
$\mathfrak{su}(d_1)\oplus\mathfrak{su}(d_2)$-covariant block encoding. The two blocks have sizes $n_1$ and $n_2$, and the physical Hilbert space is
$$
\mathcal H_P
\cong
(\mathbb C^{d_1})^{\otimes n_1}
\otimes
(\mathbb C^{d_2})^{\otimes n_2}.
$$
Each physical qudit carries the fundamental representation of the corresponding Lie algebra. Therefore, the transversal physical action is
$$
t_a
\longmapsto
\sum_{i=1}^{n_1} t_a^{(i)},
\qquad
s_b
\longmapsto
\sum_{j=1}^{n_2} s_b^{(j)},
$$
where $t_a^{(i)}$ acts on the $i$th qudit of the first block and $s_b^{(j)}$ acts on the $j$th qudit of the second block. For each block, 
we choose the code space as in Theorem~\ref{thrm:main_2_transitive_covariant_code}, so that two-qudit reduced states inside each block are identical 
up to the canonical identification of sites. The precise number-theoretic conditions on $n_1$ and $n_2$ are those of Theorem~\ref{thrm:main_2_transitive_covariant_code}; 
below we only use that both $n_1$ and $n_2$ can be taken arbitrarily large.

It is convenient to write a logical state in a form adapted to the two-block structure:
$$
\begin{aligned}
\rho_L
&=
\frac{I}{d_1d_2}
+
\frac{1}{d_2}
\sum_a r_{1,a},\bar t_a\otimes I
+
\frac{1}{d_1}
\sum_b r_{2,b},I\otimes\bar s_b
\\
&\qquad
+
\sum_{a,b}
C_{ab},\bar t_a\otimes\bar s_b .
\end{aligned}
$$
Here $r_{1,a}$ and $r_{2,b}$ are the Bloch-vector components of the one-block reduced logical states, and $C_{ab}$ is the inter-block correlation matrix. 
The barred generators act on the logical spaces.

The relevant two-qudit reduced states fall into three classes: two erased qudits in the first block, two erased qudits in the second block, and one erased qudit in each block. 
The detailed derivation is given in Proposition~\ref{prop:two_block_fidelity}. For two sites $i,j$ in the first block, the reduced state is
$$
\begin{aligned}
\rho_{1,\mathrm{in}}^{(ij)}(\rho_L)
&=
\frac{1}{d_1^2}\mathbf 1
-
\frac{2}{d_1n_1}
\sum_{m=1}^{d_1^2-1}
t_m^{(i)}t_m^{(j)}
\\
&\qquad
+
\frac{1}{n_1d_1}
\sum_{b=1}^{d_1^2-1}
r_{1,b}
\left(
t_b^{(i)}
+
t_b^{(j)}
\right).
\end{aligned}
$$
For two sites $i,j$ in the second block,
$$
\begin{aligned}
\rho_{2,\mathrm{in}}^{(ij)}(\rho_L)
&=
\frac{1}{d_2^2}\mathbf 1
-
\frac{2}{d_2n_2}
\sum_{m=1}^{d_2^2-1}
s_m^{(i)}s_m^{(j)}
\\
&\qquad
+
\frac{1}{n_2d_2}
\sum_{b=1}^{d_2^2-1}
r_{2,b}
\left(
s_b^{(i)}
+
s_b^{(j)}
\right).
\end{aligned}
$$
These in-block expressions can be obtained by tracing out one site from the three-qudit reduced state in Eq.~\eqref{eq:3q_reduced_state}.
 For an inter-block pair, with $i$ in the first block and $j$ in the second block, the reduced state is
$$
\begin{aligned}
\rho_{\mathrm{inter}}^{(ij)}(\rho_L)
&=
\frac{I}{d_1d_2}
+
\frac{1}{n_1d_2}
\sum_a r_{1,a},t_a^{(i)}\otimes I
\\
&\qquad
+
\frac{1}{n_2d_1}
\sum_b r_{2,b},I\otimes s_b^{(j)} \\
&\qquad
+
\frac{1}{n_1n_2}
\sum_{a,b}
C_{ab},t_a^{(i)}\otimes s_b^{(j)} .
\end{aligned}
$$
This follows from the product form of the block encoding and the one-site reduced-state formula~\eqref{eq:reduced_state_sud}.

\paragraph{Flagged two-qudit erasure.}

 erasure channel. Let $n=n_1+n_2$, where the first block consists of sites $1,\ldots,n_1$ and the second block consists of sites $n_1+1,\ldots,n$. 
 The flagged two-qudit erasure channel is
$$
\mathcal N(\sigma)
=
\sum_{1\le i<j\le n}
q_{ij}
\ket{ij}\bra{ij}_F
\otimes
\ket e\bra e_{ij}
\otimes
\operatorname{Tr}_{ij}(\sigma),
$$
where $q_{ij}\ge 0$, $\sum_{i<j}q_{ij}=1$, and the flag register records the erased pair.

There are three types of erased pairs: two sites in the first block, two sites in the second block, and one site in each block.
 We denote the corresponding total probabilities by
$$
\begin{gathered}
p_1
:=
\sum_{1\le i<j\le n_1}
q_{ij},
\qquad
p_2
:=
\sum_{n_1<i<j\le n}
q_{ij},
\\
p_{12}
:=
\sum_{\substack{1\le i\le n_1\ n_1<j\le n}}
q_{ij}.
\end{gathered}
$$
Thus $p_{12}=1-p_1-p_2$. The remaining information about which pair was erased is stored in orthogonal flag states, denoted by $\omega_{1,\mathrm{in}}$, $\omega_{2,\mathrm{in}}$, and $\omega_{\mathrm{inter}}$, 
supported on the three corresponding flag subspaces.

Using the permutation symmetries imposed separately on the two blocks, the reduced states within each of the three sectors can be identified up to relabeling of sites. 
Therefore, the complementary channel has the form
$$
\begin{aligned}
\widehat{\mathcal N\circ\mathcal E}(\rho_L)
&=
p_1,
\omega_{1,\mathrm{in}}
\otimes
\rho_{1,\mathrm{in}}(\rho_L)
\\
&\qquad
+
p_2,
\omega_{2,\mathrm{in}}
\otimes
\rho_{2,\mathrm{in}}(\rho_L)
\\
&\qquad
+
p_{12},
\omega_{\mathrm{inter}}
\otimes
\rho_{\mathrm{inter}}(\rho_L).
\end{aligned}
$$
Here the site labels have been suppressed because the reduced states are identified within each sector.

We compare this complementary channel with the constant channel
$$
\Lambda_0(\rho)
=
\operatorname{Tr}(\rho)
\left(
p_1\omega_{1,\mathrm{in}}\otimes\tau_1
+
p_2\omega_{2,\mathrm{in}}\otimes\tau_2
+
p_{12}\omega_{\mathrm{inter}}\otimes\tau_{12}
\right),
$$
where
$$
\begin{gathered}
\tau_1
:=
\frac{I}{d_1^2}
-
\frac{2}{d_1n_1}
\sum_m t_m^{(i)}t_m^{(j)},
\\
\tau_2
:=
\frac{I}{d_2^2}
-
\frac{2}{d_2n_2}
\sum_m s_m^{(i)}s_m^{(j)},
\\
\tau_{12}
:=
\frac{I}{d_1d_2}.
\end{gathered}
$$
These are the logical-input-independent parts of the first in-block, second in-block, and inter-block reduced states, respectively. 

The explicit fidelity calculation in Proposition~\ref{prop:two_block_fidelity} gives
\begin{equation}\label{eq:fidelity_2}
\begin{aligned}
F\left(\widehat{\mathcal N\circ\mathcal E},\Lambda_0\right)
&=
1
-
\frac{p_1(d_1^2-1)}{4n_1^2}
-
\frac{p_2(d_2^2-1)}{4n_2^2}
\\
&\qquad
-
p_{12}
\left(
\frac{d_1^2-1}{8n_1^2}
+
\frac{d_2^2-1}{8n_2^2}
\right)
\\
&\qquad
+
O_{d_1,d_2}\left(n_1^{-2}n_2^{-2}+n_1^{-3}+n_2^{-3}\right).
\end{aligned}
\end{equation}
Together with the complementary-channel characterization of AQEC, this gives the following result.

\begin{theorem}
Let
$$
\mathcal H_L
=
V_{d_1,\omega_1}
\otimes
V_{d_2,\omega_1},
\qquad
\mathcal H_P
=
(V_{d_1,\omega_1})^{\otimes n_1}
\otimes
(V_{d_2,\omega_1})^{\otimes n_2}.
$$
Assume that each $n_\ell$ is a prime power satisfying $n_\ell\equiv 1\pmod{d_\ell}$, for $\ell=1,2$. 
Equip $\mathcal H_L$ with the product fundamental representation of $\mathfrak{su}(d_1)\oplus\mathfrak{su}(d_2)$ and $\mathcal H_P$ with the transversal representation
$$
t_a
\longmapsto
\sum_{i=1}^{n_1} t_a^{(i)},
\qquad
s_b
\longmapsto
\sum_{j=1}^{n_2} s_b^{(j)} .
$$
Let $\mathcal E$ be an $\mathfrak{su}(d_1)\oplus\mathfrak{su}(d_2)$-covariant encoding with respect to these representations, 
such that each block is invariant under the action of $\mathrm{AGL}(1,n_\ell)$~\eqref{eq:AGL_def}. Then, for the flagged two-qudit erasure channel
$$
\mathcal N(\sigma)
=
\sum_{1\le i<j\le n_1+n_2}
q_{ij}
\ket{ij}\bra{ij}_F
\otimes
\ket e\bra e_{ij}
\otimes
\operatorname{Tr}_{ij}(\sigma),
$$
there exists a recovery channel $\mathcal R$ such that, when $n_1\approx n_2\approx n$,
$$
d(\mathcal R\mathcal N\mathcal E,\mathrm{id})
\le
\frac{
\sqrt{
C_1(d_1^2-1)
+
C_2(d_2^2-1)
}
}{n}
+
O_{d_1,d_2}(n^{-2}),
$$
where $C_1$ and $C_2$ are constants independent of the encoding parameters. In particular, the code is
$$
O\left(
\frac{
\sqrt{
C_1(d_1^2-1)
+
C_2(d_2^2-1)
}
}{n}
\right)
$$
-correctable against flagged erasure on two sites.
\end{theorem}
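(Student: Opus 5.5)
The argument follows the template of the earlier erasure theorems: we bound the complementary channel against the constant channel $\Lambda_0$ built from $\tau_1,\tau_2,\tau_{12}$, and then invoke Theorem~\ref{thm:beny}. By Eq.~\eqref{eq:dual_to_erasure} the complementary channel is flag-block-diagonal. The $\mathrm{AGL}(1,n_\ell)$ invariance of each block (Theorem~\ref{thrm:main_2_transitive_covariant_code}) identifies all in-block pairs within each block. The product structure of the code space, Eq.~\eqref{eq:product_code}, together with cyclic invariance (implied by $\mathrm{AGL}(1,n_\ell)\supseteq$ translations) identifies all inter-block pairs. So only three reduced states matter, and the complementary channel takes the displayed three-sector form.

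\textbf{Step 1: reduced states.} First I justify the in-block formula for $\rho_{1,\mathrm{in}}^{(ij)}$. The logical state $\rho_L$ on $V_{d_1}\otimes V_{d_2}$ is encoded by $\mathcal V_1\otimes\mathcal V_2$, and tracing out all of block~2 gives $\mathcal V_1\operatorname{Tr}_2(\rho_L)\mathcal V_1^\dagger$. Hence only the marginal Bloch vector $r_1$ enters. The two-site marginal of the single-block code then follows from the compression identities $\mathcal V^\dagger t_a^{(i)}\mathcal V=\bar t_a/n$ and $\mathcal V^\dagger t_a^{(i)}t_b^{(j)}\mathcal V=-\delta_{ab}I/(2dn)$ quoted before Eq.~\eqref{eq:3q_reduced_state}. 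These are inverted against an orthogonal operator basis $\{I,t^{(i)},t^{(j)},t^{(i)}t^{(j)}\}$ of $\operatorname{End}(A_{ij})$, normalised by $\operatorname{Tr}t_at_b=\delta_{ab}/2$; equivalently, I trace one site out of Eq.~\eqref{eq:3q_reduced_state}. The inter-block state factorises through $\mathcal V_1\otimes\mathcal V_2$. For any $X\otimes Y$, the one-site marginals are given by Eq.~\eqref{eq:reduced_state_sud} with $\beta=1/n_\ell$ (for $r=1$, $2\kappa=1$), and bilinear extension produces the $C_{ab}/(n_1n_2)$ term.

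\textbf{Step 2: fidelity.} Both channels are $SU(d_1)\times SU(d_2)$-covariant. The reduced-state maps are intertwiners, and $\tau_1,\tau_2,\tau_{12}$ are invariant because $\sum_m t_m^{(i)}t_m^{(j)}$ is a Casimir-type invariant. So, by the lemma used for Proposition~\ref{prop:k-loc_fidelity}, the worst-case entanglement fidelity is attained on the maximally entangled input with the reference. By flag block-diagonality the root fidelity is the $p$-weighted sum $p_1f_1+p_2f_2+p_{12}f_{12}$ of sector fidelities. In each sector the state with reference is $\sigma_0+\delta$, where $\sigma_0=\tau\otimes I_R/D_L$ and $\delta$ is linear in the reference operators $\bar t_a^{T}$ with an $O(1/n)$ coefficient. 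I expand $f(\sigma_0+\delta,\sigma_0)=1-\tfrac18\operatorname{Tr}(\delta\,\mathcal L_{\sigma_0}^{-1}(\delta))+O(\|\delta\|^3)$, i.e.\ a quadratic Bures metric term. Since $\tau_\ell=I/d_\ell^2+O(1/n_\ell)$, the leading coefficient can be evaluated with $\sigma_0$ replaced by the maximally mixed state, and the error is pushed into the $O(n^{-3})$ remainder.

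For the in-block sector this gives $\operatorname{Tr}\delta^2\cdot d_1^2D_L$ summed over two sites, which reproduces the $k=2$ case of the Haar-mean formula, $2(d_\ell^2-1)/(8n_\ell^2)=(d_\ell^2-1)/(4n_\ell^2)$. For the inter-block sector there are three contributions. The two single-site terms give $(d_\ell^2-1)/(8n_\ell^2)$ each. The $C_{ab}$ term carries an extra $1/n_1n_2$ factor and contributes only $O(n_1^{-2}n_2^{-2})$, because it enters the Bures expansion quadratically. Summing yields Eq.~\eqref{eq:fidelity_2}.

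\textbf{Step 3: conclusion.} With $n_1\approx n_2\approx n$ and $p_1+p_2+p_{12}=1$, the deficit is at most $\bigl[\tfrac14(d_1^2-1)+\tfrac14(d_2^2-1)\bigr]/n^2$ plus lower-order terms. Taking the square root in Eq.~\eqref{eq:distance_def} gives the stated bound with, for example, $C_1=C_2=1/4$. Theorem~\ref{thm:beny} then supplies $\mathcal R$.

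The main obstacle is Step~2 in the inter-block sector. There the maximally entangled input couples $\bar t_a\otimes\bar s_b$ reference operators, and one must check that no cross terms between the $r_1$, $r_2$ and $C$ pieces survive at order $n^{-2}$. I expect these cross terms to vanish by Hilbert--Schmidt orthogonality of $t^{(i)}\otimes I$, $I\otimes s^{(j)}$ and $t^{(i)}\otimes s^{(j)}$ against a maximally mixed $\sigma_0$. I would first verify this orthogonality at leading order, and then control the $O(1/n)$ non-maximal-mixedness of $\tau_\ell$ by a resolvent bound.
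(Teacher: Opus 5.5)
Your proposal is correct and follows the paper's proof (Proposition~\ref{prop:two_block_fidelity} plus Theorem~\ref{thm:beny}) essentially step for step. The ingredients match: the three-sector complementary channel, the maximally mixed input via Lemma~\ref{fidelity_opt}, the flag-weighted sum of sector fidelities, and the second-order expansion of Lemma~\ref{lem:bures-expansion}. The inter-block cross terms vanish exactly, because every $\sum_{r,s}F_{rs}\otimes E_{rs}$ factor is traceless and $\tau_{12}$ is exactly maximally mixed.

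Of your two justifications of the in-block state, keep the direct inversion of the one- and two-body compressions of Proposition~\ref{prop:1_and_2_body_action}, which holds under precisely this theorem's $\mathrm{AGL}(1,n_\ell)$ hypotheses. The alternative, tracing a site out of Eq.~\eqref{eq:3q_reduced_state} (the paper's route), formally relies on the $\mathrm{PGL}(2,n-1)$ construction, which assumes different conditions on $n$ and $d$.
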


\begin{proof}
By Eq.~\eqref{eq:fidelity_2} and the definition of the channel distance in Eq.~\eqref{eq:distance_def}, 
the distance between the complementary channel and $\Lambda_0$ is bounded by
$$
d\left(\widehat{\mathcal N\circ\mathcal E},\Lambda_0\right)
\le
\frac{
\sqrt{
C_1(d_1^2-1)
+
C_2(d_2^2-1)
}
}{n}
+
O_{d_1,d_2}(n^{-2})
$$
for $n_1\approx n_2\approx n$, with constants $C_1$ and $C_2$ independent of the encoding parameters. The complementary-channel theorem, Eq.~\eqref{eq:beny}, gives
$$
\min_{\mathcal R}
d(\mathcal R\mathcal N\mathcal E,\mathrm{id})
=
\min_{\Lambda,\mathrm{constant}}
d\left(\widehat{\mathcal N\circ\mathcal E},\Lambda\right)
\le
d\left(\widehat{\mathcal N\circ\mathcal E},\Lambda_0\right),
$$
and the stated bound follows.
\end{proof}

As a special case, the same result gives the two-qudit erasure analysis for the usual in-block $SU(d)$-covariant encoding, 
as discussed in Section~\ref{subsec:main_muti_erasure_noise}.

\section{Covariant Encoding for Universal Analog Computations}\label{sec:main_univ_analog}

We now discuss universal analog computation in the presence of symmetries. 
In analog quantum computation, universality means that the available Hamiltonians generate all traceless Hamiltonians on the Hilbert space, 
or equivalently that the corresponding dynamical Lie algebra is $\mathfrak{su}(\mathcal H)$. When the available Hamiltonians preserve a symmetry, 
their DLA is contained in the invariant Lie algebra $\mathfrak{su}(\mathcal H)^G$, and therefore cannot 
be universal by itself unless this invariant algebra already coincides with $\mathfrak{su}(\mathcal H)$.

A natural way to restore universality is to add symmetry-breaking Hamiltonians. 
In this section, we give a sufficient condition under which such Hamiltonians, together with the invariant Lie algebra, 
generate the full algebra $\mathfrak{su}(\mathcal H)$. This structure also motivates a framework for robust universal analog computation. 
The block encoding from Section~\ref{subsec:main_covar_block_enc} makes the invariant algebra act transversally on the encoded space, 
while the symmetry-breaking Hamiltonians are treated as resource Hamiltonians because they are not transversal for the block encoding.

\subsection{Universal Analog Computations}\label{subsec:main_univ_analog}

\paragraph{Lie-algebraic universality.}

We first recall the standard notion of universal controllability in the analog setting. The propagator $U(t)$ satisfies
$$
\dot U(t)
=
-iH(t)U(t),
\qquad
U(0)=I,
$$
where $H(t)$ is defined in Eq.~\eqref{eq:system_ham}. The system is unitarily controllable if, for every target unitary $U_{\mathrm{tar}}\in SU(d)$, 
there exist a finite time $T$ and control functions $u_k(t)$ such that
$$
U(T)=U_{\mathrm{tar}}
$$
up to a global phase.

The standard Lie-algebraic controllability criterion states that the system is universally controllable if and only if 
the dynamical Lie algebra generated by the available Hamiltonians is $\mathfrak{su}(d)$~\cite{Albertini2003}:
$$
\operatorname{Lie}_{\mathbb R}
\{iH_0,iH_1,\ldots,iH_m\}
\cong
\mathfrak{su}(d).
$$
When this condition holds, we call the set ${iH_0,iH_1,\ldots,iH_m}$ universal. Controllability can also be formulated in terms of $U(d)$ rather than $SU(d)$, 
in which case the relevant Lie algebra is $\mathfrak{u}(d)$. Throughout this work, however, we consider controllability up to global phase. 
This is sufficient for closed-system quantum dynamics, since scalar Hamiltonian terms generate only global phases. 
We therefore work with traceless Hamiltonians and the Lie algebra $\mathfrak{su}(d)$.

\paragraph{Restoring universality by symmetry breaking.}

As discussed in Section~\ref{subsec:main_DLA}, if the drift Hamiltonian $H_0$ and the available global control Hamiltonians ${H_j}_{j=1}^m$ are invariant under a finite symmetry group $G$, 
then their DLA is contained in the invariant Lie algebra
$$
\mathfrak{su}(\mathcal H)^G
\cong
\mathfrak{su}(d_1)
\oplus
\cdots
\oplus
\mathfrak{su}(d_k)
\oplus
\mathfrak{u}(1)^{\oplus(r-1)}.
$$
This algebra is generally a proper subalgebra of $\mathfrak{su}(\mathcal H)$. 
Its precise block structure depends on the symmetry group $G$ and on the representation of $G$ on $\mathcal H$.

The question is whether universality can be restored by adding a small set of local symmetry-breaking Hamiltonians to the symmetric Hamiltonians. 
In principle, this is always possible by adding a full generating set of $\mathfrak{su}(\mathcal H)$. Such a solution, 
however, does not use the structure of the symmetric system and is not useful as a design principle. 
The relevant question is instead how to choose a small and physically reasonable set of symmetry-breaking Hamiltonians that, together with the symmetric dynamics, 
generates the full algebra.

Related universality criteria for symmetry-breaking controls have been studied in Refs.~\cite{Hu2025, AlbertiniDAlessandro2025}. 
Here we use a sufficient condition tailored to the representation-theoretic decomposition induced by $G$. Let $\mathcal G^{\mathrm{br}}$ denote the set of added symmetry-breaking Hamiltonians, 
or equivalently the corresponding skew-Hermitian generators when inserted into the real Lie closure. 
To state the condition, we introduce two objects in Appendix~\ref{sec:supple_Universal_comp}: 
the coupling graph $\Gamma_{\mathrm{co}}$ and the Full First-Factor Span condition. The coupling graph records which symmetry sectors are connected by the breaking Hamiltonians, 
while the Full First-Factor Span condition ensures that these couplings are large enough to generate the required off-diagonal directions between sectors.

The resulting sufficient condition is the following.

\begin{theorem}\label{thm:breakers_suff_cond}
If the coupling graph $\Gamma_{\mathrm{co}}$ is connected and the Full First-Factor Span condition holds for every edge in $\Gamma_{\mathrm{co}}$, then
$$
\operatorname{Lie}_{\mathbb R}
\{\mathfrak{su}(\mathcal H)^G\cup\mathcal G^{\mathrm{br}}\}
=
\mathfrak{su}(\mathcal H).
$$
\end{theorem}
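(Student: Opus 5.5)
Since the coupling graph and the Full First-Factor Span (FFFS) condition are defined only in the Appendix, I will work with the following reading of them. Decompose $\mathcal H=\bigoplus_\alpha V_\alpha\otimes M_\alpha$ as in Eq.~\eqref{eq:H_G_irrep_decomp}. The vertices of $\Gamma_{\mathrm{co}}$ are the sectors $\mathcal H_\alpha:=V_\alpha\otimes M_\alpha$. An edge $\alpha\beta$ is present when some breaker $X\in\mathcal G^{\mathrm{br}}$ has a nonzero off-diagonal block $X_{\alpha\beta}:\mathcal H_\beta\to\mathcal H_\alpha$. FFFS on that edge then means: after commuting the breakers' off-diagonal blocks with the block algebra $\bigoplus\mathfrak{su}(M_\gamma)$ (which acts as $I_{V}\otimes\cdot$), the resulting operators, written as $\sum_\ell A_\ell\otimes B_\ell$ with $A_\ell\in\Hom(V_\beta,V_\alpha)$, $B_\ell\in\Hom(M_\beta,M_\alpha)$, have first factors $A_\ell$ spanning all of $\Hom(V_\beta,V_\alpha)$. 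Let $\mathfrak L:=\operatorname{Lie}_{\mathbb R}\{\mathfrak{su}(\mathcal H)^G\cup\mathcal G^{\mathrm{br}}\}$. I will show that $\mathfrak L$ contains every off-diagonal block $\Hom(\mathcal H_\beta,\mathcal H_\alpha)$ (skew-Hermitian part), and then that these blocks generate $\mathfrak{su}(\mathcal H)$.

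\emph{Step 1 (isolating an edge block).} The algebra $\mathfrak L$ contains the $\mathfrak u(1)$ central elements $i(c_\alpha P_\alpha)$, where $P_\alpha$ is the projector onto $\mathcal H_\alpha$ and $\sum_\alpha c_\alpha\dim\mathcal H_\alpha=0$. The adjoint action of $iP_\alpha-iP_\beta$ (suitably traceless-corrected) on an operator $Y$ multiplies the block $Y_{\gamma\delta}$ by an integer-linear combination of the $c$'s. Choosing generic $c$ and taking polynomials in $\operatorname{ad}$, I will project any breaker onto the skew-Hermitian part of a single pair of blocks $X_{\alpha\beta}\oplus(-X_{\alpha\beta}^\dagger)$. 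Next, for $K=I_{V}\otimes k$ with $k\in\mathfrak{su}(M_\alpha)\oplus\mathfrak{su}(M_\beta)$, bracketing gives $I\otimes(k_\alpha B-Bk_\beta)$ applied to the second factors. The $\mathfrak{su}(M_\alpha)\oplus\mathfrak{su}(M_\beta)$-module $\Hom(M_\beta,M_\alpha)\cong M_\alpha\otimes M_\beta^*$ is irreducible; when $M_\alpha$ or $M_\beta$ is one-dimensional, the central $\mathfrak u(1)$ elements supply the missing phase. Hence the second factors can be moved to any element of $\Hom(M_\beta,M_\alpha)$, and combined with FFFS for the first factors, the span contains $\Hom(V_\beta,V_\alpha)\otimes\Hom(M_\beta,M_\alpha)=\Hom(\mathcal H_\beta,\mathcal H_\alpha)$. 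The step I expect to be the main obstacle is precisely here: the first and second factors are entangled, so I must show that the module generated is the full tensor product rather than a proper submodule. I would use that $\Hom(\mathcal H_\beta,\mathcal H_\alpha)$, viewed as a module over $\mathfrak{su}(M_\alpha)\oplus\mathfrak{su}(M_\beta)$ plus the centre, is isotypic of the form $\Hom(V_\beta,V_\alpha)\otimes(\text{irrep})$. Its submodules are therefore $W\otimes\Hom(M_\beta,M_\alpha)$ for subspaces $W\subseteq\Hom(V_\beta,V_\alpha)$, and FFFS forces $W$ to be all of $\Hom(V_\beta,V_\alpha)$.

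\emph{Step 2 (connectivity).} For adjacent edges $\alpha\beta$ and $\beta\gamma$, the commutator $[E_{\alpha\beta},E_{\beta\gamma}]$ of the full off-diagonal blocks produces all of $\Hom(\mathcal H_\gamma,\mathcal H_\alpha)$. This is the standard matrix-unit computation $[e_{ij},e_{jk}]=e_{ik}$ for $i\ne k$, whose skew-Hermitian combinations generate the full block. Inducting along paths in the connected graph $\Gamma_{\mathrm{co}}$ then yields every off-diagonal block.

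\emph{Step 3 (conclusion).} Once $\mathfrak L$ contains all off-diagonal blocks, brackets of the form $[E_{\alpha\beta},E_{\beta\alpha}]$ produce every traceless block-diagonal element, including those acting nontrivially on the $V_\alpha$ factors. Together with the off-diagonal blocks these span $\mathfrak{su}(\mathcal H)$; equivalently, the matrix-unit bracket relations between distinct sectors generate $\mathfrak{su}$ of the whole space. Since $\mathfrak L\subseteq\mathfrak{su}(\mathcal H)$ trivially, this proves $\mathfrak L=\mathfrak{su}(\mathcal H)$, as claimed in Theorem~\ref{thm:breakers_suff_cond}.
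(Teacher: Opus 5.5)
Your proof is correct and shares the paper's overall architecture: generate the edge blocks, propagate along paths of $\Gamma_{\mathrm{co}}$ via $[\mathcal W_{\beta\gamma},\mathcal W_{\gamma\alpha}]=\mathcal W_{\beta\alpha}$, then obtain the traceless diagonal part from $[\mathcal W_{\alpha\beta},\mathcal W_{\beta\alpha}]$. The edge step is handled differently, in two ways.

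The first difference is technique. The paper complexifies, $\mathcal L_\C=\mathcal L\otimes_{\R}\C$, and keeps only the derived algebra $\mathfrak l'_\C$, arguing that the centre of $\mathfrak l_\C$ acts trivially. On $\mathcal W_{\beta\alpha}\cong\mathbf T_{\beta\alpha}\otimes\mathbf M_{\beta\alpha}$ it then uses the density theorem plus the contraction $(\mathrm{id}\otimes w\psi)b=((\mathrm{id}\otimes\psi)b)\otimes w$. You stay over $\R$ and classify submodules of the isotypic module as $W\otimes\Hom(M_\beta,M_\alpha)$. This gives the same conclusion, that $W$ is the first-factor support. FFFS must be read via the paper's intrinsic support, the smallest $S$ with $\mathcal B_{\beta\alpha}\subseteq S\otimes\mathbf M_{\beta\alpha}$, because the first factors of an arbitrary decomposition $\sum_\ell A_\ell\otimes B_\ell$ are not well defined.

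The more substantive difference is your Step 1. The paper's block-generation lemma concludes $\Lie_{\R}\{\mathfrak l'_\C\cup\mathcal B_{\beta\alpha}\}\subseteq\mathcal L_\C$. That presupposes the breaker projections $P_\beta H P_\alpha$ already lie in $\mathcal L_\C$, which is never justified. The derived algebra alone cannot supply it in general: if all multiplicity spaces are one-dimensional, then $\mathfrak l'_\C=0$. Moreover, the centre does not act trivially. The element $i\sum_\gamma c_\gamma P_\gamma$ multiplies $\mathcal W_{\beta\alpha}$ by $i(c_\beta-c_\alpha)$. For generic real $c$ with $\sum_\gamma c_\gamma\dim\mathcal H_\gamma=0$, polynomials in its adjoint isolate each pair of blocks, exactly as you propose. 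So your route closes a step the paper skips, at the price of real-form bookkeeping that complexification avoids.

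On that bookkeeping, one correction. Over $\R$ the centre is needed to supply the phase $i$ in general, not only when $M_\alpha$ or $M_\beta$ is one-dimensional. Take $\dim M_\alpha=\dim M_\beta=2$ and identify both with $\C^2$ via orthonormal bases. Then $\operatorname{span}_{\R}\{I,i\sigma_x,i\sigma_y,i\sigma_z\}$ is a proper real submodule of $\Hom(M_\beta,M_\alpha)$ under $B\mapsto k_\alpha B-Bk_\beta$. Your final formulation, which includes the centre in the acting algebra, already covers this case.
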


The proof and the precise definitions of $\Gamma_{\mathrm{co}}$ and the Full First-Factor Span condition are given in Appendix~\ref{sec:supple_Universal_comp}. 
In Appendix~E.3, we also illustrate Theorem~\ref{thm:breakers_suff_cond} on a three-qubit system with $S_3$ permutation symmetry. 
In that example, adding the single symmetry-breaking Hamiltonian
$$
Z_1+X_2
$$
is sufficient to generate $\mathfrak{su}(\mathcal H)$ together with the invariant algebra.

The theorem is stated using $\mathfrak{su}(\mathcal H)^G$ rather than the actual DLA of the symmetric Hamiltonians. 
This distinction is important: the DLA generated by a particular set of symmetric Hamiltonians may be a proper subalgebra of $\mathfrak{su}(\mathcal H)^G$, 
as discussed in Section~\ref{subsec:main_DLA}. The invariant algebra is nevertheless the natural object for the criterion, because it is determined directly 
by the representation theory of the $G$-action and is therefore easier to characterize. In the rest of this section, 
we focus on cases where the symmetric Hamiltonians generate the full invariant algebra $\mathfrak{su}(\mathcal H)^G$. 
This assumption isolates the additional role of the symmetry-breaking Hamiltonians and is sufficient for the robust universal simulation framework considered below.

\subsection{Breaking Hamiltonians as a resource}\label{subsec:main_H_br_resource}

\paragraph{Transversal and non-transversal parts.}

Suppose that, for a given family of symmetric Hamiltonians
${iH_0,iH_1,\ldots,iH_m}$, we have chosen a set of symmetry-breaking Hamiltonians $\mathcal G^{\mathrm{br}}$ satisfying the universality condition of Theorem~\ref{thm:breakers_suff_cond}. 
We now describe how the covariant codes constructed above can be used as building blocks for this universal analog-control setting. 
The idea is to use a code space compatible with the continuous symmetry, i.e.,
the code space is preserved by the symmetry action, so the corresponding
encoded operations can be implemented transversally whenever they arise
from this symmetry. This provides a natural way to encode information
while keeping as much of the relevant dynamics as possible in transversal
form.

One option is to use a fully $\mathfrak{su}(d)$-covariant code, where $d=\dim\mathcal H$, described in previous sections. 
Such a code has the AQEC performance established in the previous sections and makes all generators of $\mathfrak{su}(d)$ act transversally. 
However, as discussed above, this approach can be costly, because the local physical dimension must scale with the dimension of the logical Hilbert space. 
This makes the fully covariant construction difficult to scale for large analog systems.

The block-encoding approach gives a more symmetry-adapted alternative. 
Instead of imposing covariance with respect to all of $\mathfrak{su}(\mathcal H)$, we use a code covariant with respect to
$$
\mathfrak{su}(d_1)
\oplus
\cdots
\oplus
\mathfrak{su}(d_k)
\oplus
\mathfrak{u}(1)^{\oplus(r-1)} .
$$
For this code, the Hamiltonians in the invariant algebra $\mathfrak{su}(\mathcal H)^G$ are implemented transversally. 
The symmetry-breaking Hamiltonians in $\mathcal G^{\mathrm{br}}$ are different: 
they do not preserve the symmetry-sector decomposition and therefore are not transversal for the block encoding. We therefore treat them as resource Hamiltonians. 
In general, these resource Hamiltonians may be nonlocal after encoding and may be harder to implement than the transversal invariant dynamics.

\paragraph{Three-qubit example.}

Consider again the example of three qubits with $S_3$ permutation symmetry. The corresponding invariant Lie algebra is
$$
\mathfrak{su}(4)
\oplus
\mathfrak{su}(2)
\oplus
\mathfrak{u}(1).
$$
As shown in Example~\ref{ex:Sn_block_diagonalization}, the global control Hamiltonians $H_x$ and $H_y$, together with the intrinsic Hamiltonian $H_{zz}$ defined 
in Eq.~\eqref{eq:s3_ex_hams}, are block diagonal in the symmetry-adapted basis. By contrast, the symmetry-breaking Hamiltonian
$$
Z_1+X_2
$$
does not preserve this block decomposition and is not block diagonal in the same basis; see Example~\ref{ex:s3_breaking_not_block_diagonal}. 
As shown in Ref.~\cite{Albertini2018}, the DLA generated by ${iH_x,iH_y,iH_{zz}}$ coincides with $\mathfrak{su}(2^3)^{S_3}$. Therefore the enlarged set
$$
\{iH_x,iH_y,iH_{zz},i(Z_1+X_2)\}
$$
is universal.

This example also clarifies the role of the assumption that the symmetric DLA coincides with $\mathfrak{su}(\mathcal H)^G$. For the purpose of universal logical computation, 
the available symmetric Hamiltonians do not have to arise from a fixed microscopic physical model. What is needed is a set of Hamiltonians 
that generates the desired invariant algebra. One may therefore choose a convenient finite symmetry group $G$ and a corresponding generating set for $\mathfrak{su}(\mathcal H)^G$, 
and then add symmetry-breaking Hamiltonians that restore universality. For example, in the case $G=S_N$, Ref.~\cite{Albertini2018} gives explicit generators of 
$\mathfrak{su}(\mathcal H)^{S_N}$:
$$
\begin{gathered}
iH_x
=
i\sum_{j=1}^N \sigma_x^{(j)},
\qquad
iH_y
=
i\sum_{j=1}^N \sigma_y^{(j)},
\\
iH_{zz}
=
i\sum_{1\le i<j\le N}
\sigma_z^{(i)}\sigma_z^{(j)} .
\end{gathered}
$$
In this example, the generators are also naturally motivated by physical collective-spin interactions.

\paragraph{Design trade-off.}

The design problem is therefore to choose the symmetry group $G$, equivalently the decomposition
$$
\mathfrak{su}(\mathcal H)^G
\cong
\mathfrak{su}(d_1)
\oplus
\cdots
\oplus
\mathfrak{su}(d_k)
\oplus
\mathfrak{u}(1)^{\oplus(r-1)},
$$
together with a symmetry-breaking resource set $\mathcal G^{\mathrm{br}}$. These choices involve a trade-off. 
A larger symmetry group gives a smaller invariant algebra and therefore smaller blocks to encode covariantly. 
For instance, choosing $G=S_N$ makes $\mathfrak{su}(\mathcal H)^G$ polynomial-dimensional in $N$, as in Eq.~\eqref{eq:inv_alg_dim}. 
However, imposing such a large symmetry can make the required symmetry-breaking resource set more complicated. 
Conversely, choosing a smaller symmetry group, such as $\mathbb Z_2$, gives less reduction in the invariant algebra, 
but universality may be restored with only one symmetry-breaking Hamiltonian; see Ref.~\cite{Hu2025}.

Theorem~\ref{thm:breakers_suff_cond} gives only a sufficient condition for universality. It does not determine the minimal number of symmetry-breaking Hamiltonians, 
nor does it characterize the simplest possible structure of such Hamiltonians. Understanding this optimization problem, 
including the trade-off between block-encoding efficiency and the complexity of the resource Hamiltonians, is left for future work.

\section{Discussion and Outlook}\label{sec:main_disc}

To summarize, in the first part of the paper, we constructed several families of \(SU(d)\)-covariant AQECC. The recovery error of these codes scales
as \(1/n\) for one-qudit, two-qudit, and three-qudit flagged erasure models, and for single-qudit erasure this scaling is near-optimal. 
We then constructed a decoder for the single-qudit flagged erasure model using the Petz recovery map. 
This decoder is near-optimal and achieves \(1/n\) scaling, up to constants depending on the physical and logical dimensions. We also analyze the same codes
under arbitrary flagged noise on at most three qudits, obtaining the more general, but weaker, scaling \(1/\sqrt n\). 
Because exactly $k$-transitive permutation groups do not exist for $k>3$, this symmetry-based construction cannot be extended directly beyond three erased sites. 
We removed this restriction by replacing exact permutation symmetry with a Haar-random multiplicity vector: a typical draw gives a code whose $k$-site reduced states are 
all close to one another, rather than exactly equal, for any fixed $k$, and we gave an efficient classical procedure for sampling such a typical code from a seed of $O(n\log d)$ 
random bits. This yields a family of \(SU(d)\)-covariant codes correctable against flagged erasure on any fixed number \(k\) of sites, with recovery error scaling as 
\(O(\sqrt{k(d^2-1)}/n)\). The trace-norm argument used for non-erasure noise applies to these codes as well, giving \(O(\sqrt{k/n})\) correctability against arbitrary 
flagged noise on any fixed number \(k\) of sites. We read this less as the construction of one more code than as a statement about the covariant sector itself. 
Once covariance has fixed the isotypic component, the only remaining freedom is a direction in the multiplicity space, and the directions along which some erased subsystem 
learns an atypically large amount about the logical state have exponentially small Haar measure. Badly aligned directions exist, the unencoded seed vector among them, 
but they are non-generic, so approximate decoupling from any fixed-size erased subsystem is typical rather than fine-tuned. 
The efficient sampler adds a second reading of the same kind. Haar randomness establishes that good directions are abundant, 
but a generic one is not something that can be written down, and the sampling procedure shows that the abundance survives when the continuous ensemble is replaced 
by four-wise independent phases from a seed of $O(n\log d)$ bits. What the error-correcting property depends on is therefore not Haar randomness but the low-order moments 
the concentration argument consumes, and those are already available inside a structured describable family.

In the second part of the paper, we introduced block encodings that are covariant with respect to algebras of the form
$
    \mathfrak{su}(d_1)\oplus\cdots\oplus\mathfrak{su}(d_k)
    \oplus
    \mathfrak{u}(1)^{\oplus(r-1)}.
$
These encodings are motivated by quantum simulation, where the relevant dynamics are determined by the DLA of the available
Hamiltonians. We analyzed the resulting block codes under two-qudit erasure noise, including both in-block and inter-block erasures, and obtained \(1/n\) scaling 
of the recovery error.

In the third part of the paper, we proposed a framework for robust universal analog computation. We first gave a sufficient condition under which adding a set of symmetry-breaking 
Hamiltonians to the invariant Lie subalgebra generates a universal DLA. When combined with the block encoding developed
earlier, the invariant subalgebra acts transversally on the encoded space, while the symmetry-breaking Hamiltonians become non-transversal resource Hamiltonians needed to complete universality. 
This suggests a possible route toward fault-tolerant analog computation in which a large symmetry-preserving sector is protected by covariance, while a controlled
set of non-transversal resource Hamiltonians supplies universality.

There is an interesting resemblance between the \(SU(d)\)-covariant AQECC codes constructed here and stabilizer codes. 
First, for stabilizer codes, correcting flagged \(k\)-qubit erasures is equivalent to correcting arbitrary \(k\)-qubit errors at known locations. A similar mechanism
appears in our setting. Indeed, the complementary channel for arbitrary flagged local noise factors through the reduced encoded state: it is obtained by first taking the local reduced state, 
which is precisely the information seen by the environment in the erasure model, and then applying the complementary channel of
the local noise. Since this second step is linear, the coefficient suppressing the logical-state-dependent part of the reduced state is preserved. 
Thus, the same reduced-state structure that makes the code good against erasures also implies approximate correctability against arbitrary flagged local noise, although the general trace-norm argument gives a weaker \(1/\sqrt n\) bound unless the
fidelity can be evaluated more sharply.

Second, there is an analogy with the behavior of tensor products of stabilizer codes. For stabilizer codes, taking a tensor product of two codes with 
the same distance preserves the distance scale. In our setting, the block encoding behaves similarly in terms of asymptotic behavior. Namely, the
\(\mathfrak{su}(d_1)\oplus\mathfrak{su}(d_2)\)-covariant block code is obtained from the corresponding \(SU(d_1)\)- and \(SU(d_2)\)-covariant approximate codes, 
and its performance under two-qudit erasure noise preserves the same \(1/n\)
scaling. This remains true for both in-block and inter-block erasures.

Although continuous covariance imposes an $\Omega(1/n)$ lower bound on
the worst-case recovery distance, this scaling should be viewed as a
resource--accuracy tradeoff rather than as an accuracy floor.  A target
error $\varepsilon$ can still be reached with
$n=O(\varepsilon^{-1})$ physical subsystems.  Such inverse-polynomial
overhead can be compatible with quantum-simulation applications,
particularly when the objective is to reproduce local or few-body
observables over finite evolution times rather than the entire
many-body state with extremely high global fidelity. Indeed, broad
classes of finite-time local-observable simulation tasks are known to
be stable against local implementation errors
~\cite{Trivedi2024AnalogStability}.  The inverse-linear scaling also has
a natural connection to quantum metrology, as metrological bounds on
quantum Fisher information provide one route to approximate
Eastin--Knill inequalities~\cite{Kubica2021,Zhou2021}, and the
$d=\Theta(1/n)$ scaling in our fidelity-based distance corresponds to an
infidelity $1-F=\Theta(1/n^2)$, reminiscent of Heisenberg scaling.
This connection does not identify recovery error with metrological
estimation error, but it emphasizes that the symmetry constraint
imposes a fundamental scaling law whose cost can be systematically
reduced by increasing the available physical resources.

\section{Future work}\label{sec:main_future_work}

Several important directions remain open.
As discussed above, it is not necessary for all \(k\)-qudit reduced states to be identical, even up to the identification of tuples of sites, 
in order to distribute logical information sufficiently uniformly across the physical system: more complicated distributions of reduced states 
can still lead to good AQEC performance. In Section~\ref{subsec:main_multisite_erasure_noise} we used this observation to construct a covariant 
AQECC with good scaling under general \(k\)-qudit erasure noise, using the reduced-state method developed in this work: replacing the exact permutation symmetry, 
which does not exist for $k>3$, by a Haar-random, efficiently sampled multiplicity vector gives a code whose \(k\)-site reduced states are all close to a common mean, 
with recovery error scaling as
$
    \frac{\sqrt{k(d^2-1)}}{n}
$
for fixed \(k\), confirming the \(\sqrt k/n\) scaling expected from the reduced-state method. Two related questions remain open: 
whether the same scaling can be reached by a code with no probability of failure, and whether the constant obtained here is optimal.

Another natural direction is to analyze recovery maps more explicitly. The Petz, or transpose, recovery map can be generalized directly to the \(k\)-qudit erasure setting. 
However, understanding its exact performance remains an important open problem. In this work, we establish near-optimality at the level of scaling in \(n\), 
but we do not determine the sharp constants depending on the physical representation, the logical dimension. 
A more detailed analysis of the Petz recovery map may therefore clarify whether it is optimal beyond the level of asymptotic scaling.

Our analysis of arbitrary local noise also generalizes formally to \(k\)-qudit noise, provided the corresponding \(k\)-qudit reduced states have the appropriate form. 
In particular, if one constructs an encoding for which the \(k\)-qudit reduced states approach a fixed input-independent state as \(n\to\infty\), then an \(O_{d,k}(n^{-1/2})\) 
correctability bound for arbitrary flagged \(k\)-qudit noise follows essentially automatically from the trace-norm argument. However, this bound is likely not sharp. 
In contrast to erasure channels, arbitrary local noise channels are not generally covariant, which makes a direct fidelity calculation substantially harder, 
see~\cite{Alexander2025, Zhou2021}. Improving the \(O_{d,k}(n^{-1/2})\) estimate, and understanding whether \(O_{d,k}(n^{-1})\) scaling can be recovered for broader classes of local noise, 
remains an important question.

It would also be interesting to consider more general choices of the individual physical representation. 
In this work, we mostly focus on irreducible physical subsystems, such as \(V_{\omega_r}\). 
However, there are natural covariant codes whose individual physical systems are reducible. 
For example, the \(W\)-state-type code~\cite{Faist2020, Alexander2025} uses a physical subsystem carrying a direct sum of the fundamental representation 
and a trivial representation,
$
    V_{\omega_1}\oplus V_0.
$
It would be useful to understand the broader family of codes to which this example belongs and to analyze such codes under general noise models.

There are also several open questions motivated by quantum simulation. For a fixed physical system, one would like to find an encoding adapted to its actual DLA. 
In general, the DLA need not have the block form
$
    \mathfrak{su}(d_1)\oplus\cdots\oplus\mathfrak{su}(d_k)
    \oplus
    \mathfrak{u}(1)^{\oplus(r-1)}.
$
Thus, if the goal is to protect a given analog simulation, it may be necessary to construct covariant AQECC for other Lie algebras, 
such as \(\mathfrak{so}(d)\), \(\mathfrak{sp}(d)\), or other Lie algebras that can arise as DLA of physical systems, see~\cite{Kokcu2024, Wiersema2024}.

Finally, many questions remain open on the side of universal analog computation. Even without encoding, it is important to understand the necessary conditions under which a set of 
symmetry-breaking Hamiltonians, together with the invariant Lie subalgebra, generates the full algebra required for universality. 
Ideally, one would like a complete criterion giving necessary and sufficient conditions. From the encoding perspective, the goal is to find both an efficient encoding 
and a suitable set of breaking Hamiltonians: the encoding should minimize the overhead, while the breaking Hamiltonians should be as physically reasonable as possible, 
for example not too nonlocal and not too numerous. Understanding this tradeoff is a necessary step toward the longer-term goal of establishing a threshold theorem for such codes 
and developing a framework for fault-tolerant analog quantum computation.
 
\begin{acknowledgments}
We thank Nikita Romanov and Fangjun Hu for carefully reading the manuscript and for helpful suggestions that improved its presentation. We are also grateful to Anna Knörr and Liyuan Chen for useful discussions about the Petz recovery map and for pointing us to references that helped us construct the explicit decoder. We would like to acknowledge the NSF for funding through the CUA PFC (PHY-2317134), the Q-SEnSE QLCI (OMA-2016244), and the DOE through the QUACQ (DE-SC0025572).
\end{acknowledgments}
 
\nocite{KretschmannSchlingemannWerner2008}
\nocite{Macdonald1995}
\nocite{Robbins1955}
\nocite{LarsenShalev2008}

\bibliography{main.bib}
\let\addcontentsline\oldaddcontentsline

\clearpage
\onecolumngrid

\addtocontents{toc}{\protect\setcounter{tocdepth}{3}}

\begin{center}
{\LARGE\bfseries APPENDIX}
\end{center}
\vspace{1em}

\tableofcontents
\clearpage

\begin{appendix}

\section*{Nomenclature}\label{sec:notation}
\subsection*{Hilbert spaces, channels, and encoding maps}

\begin{tabular}{@{}ll@{}}

\parbox[t]{0.22\textwidth}{$\mathcal{H}$} & \parbox[t]{0.72\textwidth}{A finite-dimensional Hilbert space.} \\[0.6em]

\parbox[t]{0.22\textwidth}{$\mathcal{H}_L$} & \parbox[t]{0.72\textwidth}{A finite-dimensional logical Hilbert space.} \\[0.6em]

\parbox[t]{0.22\textwidth}{$\mathcal{H}_P$} & \parbox[t]{0.72\textwidth}{A finite-dimensional physical Hilbert space.} \\[0.6em]

\parbox[t]{0.22\textwidth}{$\mathcal{N}, \mathcal{M}$} & \parbox[t]{0.72\textwidth}{Quantum channels, i.e. completely positive trace-preserving linear maps.} \\[0.6em]

\parbox[t]{0.22\textwidth}{$\mathcal{E}$} & \parbox[t]{0.72\textwidth}{An encoding map from a logical Hilbert space to a physical Hilbert space.} \\[0.6em]

\parbox[t]{0.22\textwidth}{$\mathcal{V}$} & \parbox[t]{0.72\textwidth}{An encoding isometry, corresponding to encoding map $\mathcal{E}$, i. e. $\mathcal{E}(\rho_L)=\mathcal{V}\rho_L\mathcal{V}^{\dagger}$.} \\[0.6em]

\parbox[t]{0.22\textwidth}{$\rho^{(i)}(\rho_L)$} & \parbox[t]{0.72\textwidth}{The reduced state on the $i$-th physical qudit after encoding, i. e. $\rho^{(i)}(\rho_L)=\mathrm{Tr}_{\hat{i}}(\mathcal{V}\rho_L\mathcal{V}^{\dagger})$.} \\[0.6em]

\parbox[t]{0.22\textwidth}{$A_i$, $A_S$} & \parbox[t]{0.72\textwidth}{The Hilbert space of the $i$-th physical qudit, and $A_S=\bigotimes_{i\in S}A_i$ for a subset $S\subseteq\{1,\ldots,n\}$.} \\[0.6em]

\parbox[t]{0.22\textwidth}{$\hat S$} & \parbox[t]{0.72\textwidth}{The complement of $S$ in $\{1,\ldots,n\}$. The hat marks complementation generally: $\widehat{ijk}$ for the complement of $\{i,j,k\}$, $\widehat j$ for the retained tensor factors other than $j$, and $\widehat{\mathcal N}$ for the complementary channel.} \\[0.6em]

\parbox[t]{0.22\textwidth}{$\rho^{(S)}(\rho_L)$} & \parbox[t]{0.72\textwidth}{The reduced state on $A_S$ after encoding, i.\,e. $\rho^{(S)}(\rho_L)=\operatorname{Tr}_{\hat S}(\mathcal{V}\rho_L\mathcal{V}^\dagger)$, see Eq.~\eqref{eq:reduced_state}.} \\[0.6em]

\parbox[t]{0.22\textwidth}{$\widehat{\mathcal{N}}$} & \parbox[t]{0.72\textwidth}{The complementary channel to $\mathcal{N}$, see definition \ref{def:complementary_channel}.} \\[0.6em]

\parbox[t]{0.22\textwidth}{$f(\rho, \sigma)$} & \parbox[t]{0.72\textwidth}{The fidelity between states $\rho$ and $\sigma$, see definition \ref{def:root_fidelity}.} \\[0.6em]

\parbox[t]{0.22\textwidth}{$F_{\rho}(\mathcal{N}, \mathcal{M})$} & \parbox[t]{0.72\textwidth}{The entanglement fidelity between channels $\mathcal{N}$ and $\mathcal{M}$, see definition \ref{def:rho_entanglement_fidelity}.} \\[0.6em]

\parbox[t]{0.22\textwidth}{$F(\mathcal{N}, \mathcal{M})$} & \parbox[t]{0.72\textwidth}{The worst-case entanglement fidelity between channels $\mathcal{N}$ and $\mathcal{M}$, see definition \ref{def:worst_case_entanglement_fidelity}.} \\[0.6em]

\parbox[t]{0.22\textwidth}{$\ket{\psi_\rho}$} & \parbox[t]{0.72\textwidth}{A purification of the state $\rho$ on the logical space, used as the reference-assisted input in $F_\rho$.} \\[0.6em]

\parbox[t]{0.22\textwidth}{$d(\mathcal N,\mathcal M)$} & \parbox[t]{0.72\textwidth}{The fidelity-induced channel distance $\sqrt{1-F(\mathcal N,\mathcal M)}$, see Eq.~\eqref{eq:distance_def}.} \\[0.6em]

\parbox[t]{0.22\textwidth}{$\|\cdot\|_\diamond$} & \parbox[t]{0.72\textwidth}{The diamond norm, i.\,e. the completely bounded trace norm on linear maps between operator spaces.} \\[0.6em]

\parbox[t]{0.22\textwidth}{$\Lambda_0$} & \parbox[t]{0.72\textwidth}{A constant channel, i.\,e. $\Lambda_0(\rho)=\operatorname{Tr}(\rho)\rho_0$ for a fixed state $\rho_0$, see Eq.~\eqref{eq:constant_channel_def}.} \\[0.6em]

\parbox[t]{0.22\textwidth}{$\mathcal R_p$} & \parbox[t]{0.72\textwidth}{A recovery channel for flagged $k$-site erasure with distribution $p=(p_S)_{|S|=k}$ over the erased sets, Eq.~\eqref{eq:erasure_noise}.} \\[0.6em]

\parbox[t]{0.22\textwidth}{$\widehat{\mathcal N}_S$} & \parbox[t]{0.72\textwidth}{A complementary channel to the local noise channel $\mathcal N_S$ acting on $A_S$.} \\[0.6em]

\parbox[t]{0.22\textwidth}{$\tau_S$, $\Delta_S$} & \parbox[t]{0.72\textwidth}{The logical-independent and logical-dependent parts of the reduced state on $A_S$, i.\,e. $\rho^{(S)}(\rho_L)=\tau_S+\Delta_S(\rho_L)$, see Eq.~\eqref{eq:leakage-splitting}.} \\[0.6em]

\parbox[t]{0.22\textwidth}{$\delta$} & \parbox[t]{0.72\textwidth}{The uniform leakage of a code over a family of subsets, see Eq.~\eqref{eq:uniform-leakage}.} \\[0.6em]

\parbox[t]{0.22\textwidth}{$\mathcal{V}_0$} & \parbox[t]{0.72\textwidth}{The seed isometry placing the logical qudit at site $0$ and filling the rest with $SU(d)$-singlets, see Eq.~\eqref{eq:seed-isometry}.} \\[0.6em]

\parbox[t]{0.22\textwidth}{$\mathcal{V}_m$, $\mathcal E_m$} & \parbox[t]{0.72\textwidth}{The encoding isometry built from the multiplicity vector $\ket m$ and the corresponding encoding $\mathcal E_m(X)=\mathcal{V}_mX\mathcal{V}_m^\dagger$, see Eq.~\eqref{eq:Vm}.} \\[0.6em]

\parbox[t]{0.22\textwidth}{$\mathcal E_s$} & \parbox[t]{0.72\textwidth}{The encoding $\mathcal E_s(X)=\mathcal{V}_{m_s}X\mathcal{V}_{m_s}^\dagger$ built from the efficiently sampled multiplicity vector $\ket{m_s}$.} \\[0.6em]

\parbox[t]{0.22\textwidth}{$\Phi_{S,m}$} & \parbox[t]{0.72\textwidth}{The erased-subsystem channel $\Phi_{S,m}(X)=\operatorname{Tr}_{\hat S}(\mathcal{V}_mX\mathcal{V}_m^\dagger)$, i.\,e. the reduced-state map $\rho^{(S)}$ for the code $\mathcal E_m$, see Eq.~\eqref{eq:local-channel}.} \\[0.6em]

\parbox[t]{0.22\textwidth}{$\overline\Phi_{k,n}$} & \parbox[t]{0.72\textwidth}{The mean $k$-site channel, i.\,e. the Haar average of $\Phi_{S,m}$ over the multiplicity vector $\ket m$.} \\[0.6em]

\parbox[t]{0.22\textwidth}{$\lambda_{k,n}$, $\Lambda_{k,n}$} & \parbox[t]{0.72\textwidth}{The state $\overline\Phi_{k,n}(\tau)$ and the constant channel $\Lambda_{k,n}(X)=\operatorname{Tr}(X)\lambda_{k,n}$ it defines, see Eq.~\eqref{eq:lambda-kn}.} \\[0.6em]

\end{tabular}

\subsection*{Spin systems and \(SU(2)\) notation}

\begin{tabular}{@{}ll@{}}

\parbox[t]{0.22\textwidth}{$V_j$} & \parbox[t]{0.72\textwidth}{The spin-$j$ irreducible representation of $SU(2)$.} \\[0.6em]

\parbox[t]{0.22\textwidth}{$J_x, J_y, J_z$} & \parbox[t]{0.72\textwidth}{The spin-$j$ generators of $SU(2)$ acting on $V_j$, see Eq.~\eqref{eq:spinj}, Eq.~\eqref{eq:Jz}, and Eq.~\eqref{eq:Jpm}.} \\[0.6em]

\end{tabular}

\subsection*{\(SU(d)\) generators and covariance notation}

\begin{tabular}{@{}ll@{}}

\parbox[t]{0.22\textwidth}{$t_a$} & \parbox[t]{0.72\textwidth}{The generators of $\mathfrak{su}(d)$, traceless Hermitian complex $d\times d$ matrices, see Eq.~\eqref{eq:sudgenerators}.} \\[0.6em]

\parbox[t]{0.22\textwidth}{$\bar{t}_a$} & \parbox[t]{0.72\textwidth}{The generators of $\mathfrak{su}(d)$ in the fundamental representation, used to emphasize action on the logical Hilbert space.} \\[0.6em]

\parbox[t]{0.22\textwidth}{$t_a^{(i)}$} & \parbox[t]{0.72\textwidth}{The generators of $\mathfrak{su}(d)$ in the fundamental representation acting on the $i$-th physical qudit.} \\[0.6em]

\parbox[t]{0.22\textwidth}{$T_a^{(i)}$} & \parbox[t]{0.72\textwidth}{The generators of $\mathfrak{su}(d)$ in the arbitrary representation acting on the $i$-th physical qudit.} \\[0.6em]

\parbox[t]{0.22\textwidth}{$T_a$} & \parbox[t]{0.72\textwidth}{The global generators of $\mathfrak{su}(d)$, acting on the whole physical space, see Eq.~\eqref{eq:global_generators}.} \\[0.6em]

\parbox[t]{0.22\textwidth}{$U_{\lambda_0}^{(i)}(g)$} & \parbox[t]{0.72\textwidth}{Representation of an element $g\in SU(d)$ on physical site $i$ equipped with irreducible representation $V_{\lambda_0}$ of $SU(d)$.} \\[0.6em]

\parbox[t]{0.22\textwidth}{$\pi_{\lambda_0}^{(i)}(X)$} & \parbox[t]{0.72\textwidth}{Representation of an element $X\in \mathfrak{su}(d)$ on physical site $i$ equipped with irreducible representation $V_{\lambda_0}$ of $\mathfrak{su}(d)$.} \\[0.6em]

\parbox[t]{0.22\textwidth}{$U_L(g)$, $U_P(g)$} & \parbox[t]{0.72\textwidth}{The unitary representations of $g\in G$ on the logical and physical Hilbert spaces, so that covariance reads $\mathcal{V}U_L(g)=U_P(g)\mathcal{V}$, see Eq.~\eqref{eq:group_covar_cond}.} \\[0.6em]

\parbox[t]{0.22\textwidth}{$u(g)$} & \parbox[t]{0.72\textwidth}{Fundamental representation of $g\in SU(d)$ on the logical space.} \\[0.6em]

\parbox[t]{0.22\textwidth}{$\mathrm{Ad}$} & \parbox[t]{0.72\textwidth}{The adjoint representation of $SU(d)$.} \\[0.6em]

\parbox[t]{0.22\textwidth}{$\delta_{a b}$} & \parbox[t]{0.72\textwidth}{The Kronecker delta.} \\[0.6em]

\parbox[t]{0.22\textwidth}{$d_{abc}$} & \parbox[t]{0.72\textwidth}{The symmetric structure constants of $\mathfrak{su}(d)$, see Eq.~\eqref{eq:sudgenerators}, Eq.~\eqref{eq:sudnormalisation}, Eq.~\eqref{eq:sudidentities}.} \\[0.6em]

\parbox[t]{0.22\textwidth}{$f_{abc}$} & \parbox[t]{0.72\textwidth}{The antisymmetric structure constants of $\mathfrak{su}(d)$, see Eq.~\eqref{eq:sudgenerators}, Eq.~\eqref{eq:sudidentities}.} \\[0.6em]

\parbox[t]{0.22\textwidth}{$E_{rs}$} & \parbox[t]{0.72\textwidth}{The matrix units, see Eq.~\eqref{eq:matr_units}.} \\[0.6em]

\parbox[t]{0.22\textwidth}{$F_{rs}$} & \parbox[t]{0.72\textwidth}{The traceless matrix units, see Eq.~\eqref{eq:matr_units}.} \\[0.6em]

\end{tabular}

\subsection*{Exterior powers and fundamental \(SU(d)\) representations}

\begin{tabular}{@{}ll@{}}

\parbox[t]{0.22\textwidth}{$\bigwedge^r \mathbb{C}^d$} & \parbox[t]{0.72\textwidth}{The $r$-th exterior power of $\mathbb{C}^d$.} \\[0.6em]

\parbox[t]{0.22\textwidth}{$u \wedge v$} & \parbox[t]{0.72\textwidth}{The wedge product of vectors $u$ and $v$ in $\mathbb{C}^d$.} \\[0.6em]

\parbox[t]{0.22\textwidth}{$\omega_r$} & \parbox[t]{0.72\textwidth}{Fundamental weights of $SU(d)$, see Remark~\ref{rem:fundamental_weights}.} \\[0.6em]

\parbox[t]{0.22\textwidth}{$V_{\omega_r}$} & \parbox[t]{0.72\textwidth}{The irreducible representation of $SU(d)$ with highest weight $\omega_r$, which is isomorphic to $\bigwedge^r \mathbb{C}^d$, see definition \ref{def:rfundSUd} and Remark~\ref{rem:fundamental_weights}.} \\[0.6em]

\parbox[t]{0.22\textwidth}{$\kappa_{\omega_r}$} & \parbox[t]{0.72\textwidth}{Dynkin index of the representation $V_{\omega_r}$, or, equivalently, the normalization constant, see Eq.~\eqref{eq:sudnormalisation_2}.} \\[0.6em]

\parbox[t]{0.22\textwidth}{$\ket{\zeta_d}$} & \parbox[t]{0.72\textwidth}{The normalized totally antisymmetric state of $d$ qudits, the unique $SU(d)$-invariant vector in $(V_{\omega_1})^{\otimes d}$, see Eq.~\eqref{eq:singlet}.} \\[0.6em]

\parbox[t]{0.22\textwidth}{$\tau$} & \parbox[t]{0.72\textwidth}{The maximally mixed state $I_d/d$ on a single physical qudit.} \\[0.6em]

\parbox[t]{0.22\textwidth}{$\tau_s$} & \parbox[t]{0.72\textwidth}{The projector onto $\bigwedge^sV_{\omega_1}$, equal to the reduced state of any $s$ tensor factors of $\ket{\zeta_d}$, see Eq.~\eqref{eq:tau-s}; $\tau_1=\tau$.} \\[0.6em]

\parbox[t]{0.22\textwidth}{$\beta_{\ell,N}$} & \parbox[t]{0.72\textwidth}{The $\ell$-site reduced state of the background $\ket{\zeta_d}\bra{\zeta_d}^{\otimes q}$ on a uniformly random ordered tuple of $\ell$ distinct sites, averaged over the tuple; evaluated in Proposition~\ref{prop:beta-partition}, Eq.~\eqref{eq:beta-partition}.} \\[0.6em]

\end{tabular}

\subsection*{Groups and Lie algebras}

\begin{tabular}{@{}ll@{}}

\parbox[t]{0.22\textwidth}{$GL_d(\mathbb{C})$} & \parbox[t]{0.72\textwidth}{The group of invertible $d \times d$ complex matrices.} \\[0.6em]

\parbox[t]{0.22\textwidth}{$SU(d)$} & \parbox[t]{0.72\textwidth}{The group of $d \times d$ unitary matrices with determinant $1$.} \\[0.6em]

\parbox[t]{0.22\textwidth}{$\mathfrak{g}$} & \parbox[t]{0.72\textwidth}{A Lie algebra.} \\[0.6em]

\parbox[t]{0.22\textwidth}{$\mathfrak{g}_\C$} & \parbox[t]{0.72\textwidth}{The complexification of a Lie algebra $\mathfrak{g}$, see Definition~\ref{def:complexification}.} \\[0.6em]

\parbox[t]{0.22\textwidth}{$\mathfrak{gl}(d)$} & \parbox[t]{0.72\textwidth}{The Lie algebra of $d \times d$ complex matrices.} \\[0.6em]

\parbox[t]{0.22\textwidth}{$\mathfrak{sl}(d)$} & \parbox[t]{0.72\textwidth}{The Lie algebra of $d \times d$ complex matrices with zero trace.} \\[0.6em]

\parbox[t]{0.22\textwidth}{$\mathfrak{u}(d)$} & \parbox[t]{0.72\textwidth}{The Lie algebra of $d \times d$ skew-Hermitian matrices.} \\[0.6em]

\parbox[t]{0.22\textwidth}{$\mathfrak{su}(d)$} & \parbox[t]{0.72\textwidth}{The Lie algebra of $d \times d$ skew-Hermitian matrices with zero trace.} \\[0.6em]

\parbox[t]{0.22\textwidth}{$\mathfrak{su}(\mathcal{H})$} & \parbox[t]{0.72\textwidth}{The Lie algebra of skew-Hermitian traceless operators on \(\mathcal H\); if \(\dim\mathcal H=d\), then \(\mathfrak{su}(\mathcal H)\cong\mathfrak{su}(d)\).} \\[0.6em]

\parbox[t]{0.22\textwidth}{$\mathfrak{u}(1)$} & \parbox[t]{0.72\textwidth}{One-dimensional Lie algebra of purely imaginary numbers.} \\[0.6em]

\parbox[t]{0.22\textwidth}{$\Lie_{\R}\{i H_0, i H_1, \ldots, i H_k\}$} & \parbox[t]{0.72\textwidth}{The real Lie algebra generated by the set of Hamiltonians by nested commutators.} \\[0.6em]

\end{tabular}

\subsection*{Linear maps, intertwiners, and representation spaces}

\begin{tabular}{@{}ll@{}}

\parbox[t]{0.22\textwidth}{$V$, $W$} & \parbox[t]{0.72\textwidth}{Finite-dimensional complex vector spaces, or $G$-representations, when no particular one is meant; in the $SU(d)$ sections $V\cong\mathbb C^d$ carries the defining action, so that $S_\lambda(V)$ and $\Lambda^rV$ are its Schur functors and exterior powers.} \\[0.6em]

\parbox[t]{0.22\textwidth}{$\Hom_{\C}(V, W)$} & \parbox[t]{0.72\textwidth}{The space of complex linear maps from $V$ to $W$.} \\[0.6em]

\parbox[t]{0.22\textwidth}{$\End_{\C}(V)$} & \parbox[t]{0.72\textwidth}{The space of complex linear maps from $V$ to itself.} \\[0.6em]

\parbox[t]{0.22\textwidth}{$\End_{\C}(V)^G$} & \parbox[t]{0.72\textwidth}{The space of $G$-invariant complex linear maps from $V$ to itself.} \\[0.6em]

\parbox[t]{0.22\textwidth}{$\Hom_G(V, W)$} & \parbox[t]{0.72\textwidth}{The space of $G$-covariant linear maps from $V$ to $W$.} \\[0.6em]

\parbox[t]{0.22\textwidth}{$\mathcal{I}$} & \parbox[t]{0.72\textwidth}{Set of all irreducible representations of finite group $G$.} \\[0.6em]

\parbox[t]{0.22\textwidth}{$\mathcal{I}_{\mathcal{H}}$} & \parbox[t]{0.72\textwidth}{Set of all irreducible representations of finite group $G$ appearing in $\mathcal{H}$.} \\[0.6em]

\parbox[t]{0.22\textwidth}{$\Lambda$} & \parbox[t]{0.72\textwidth}{Set of all irreducible representations of $SU(d)$.} \\[0.6em]

\parbox[t]{0.22\textwidth}{$M_{\lambda}$} & \parbox[t]{0.72\textwidth}{Multiplicity space of the irreducible representation $V_{\lambda}$.} \\[0.6em]

\parbox[t]{0.22\textwidth}{$V_{\lambda}$} & \parbox[t]{0.72\textwidth}{Irreducible representation of $SU(d)$ with highest weight $\lambda$.} \\[0.6em]

\parbox[t]{0.22\textwidth}{$V_{d_{a},\lambda_a}$} & \parbox[t]{0.72\textwidth}{Irreducible representation of $SU(d_a)$ with highest weight $\lambda_a$ -- used when one has to distinguish between representations of $SU(d_a)$ for different $d_a$.} \\[0.6em]

\parbox[t]{0.22\textwidth}{$\chi^{\lambda}(g)$} & \parbox[t]{0.72\textwidth}{Character of the irreducible representation $V_{\lambda}$ of $G$, evaluated at $g \in G$.} \\[0.6em]

\parbox[t]{0.22\textwidth}{$q$, $N$} & \parbox[t]{0.72\textwidth}{The number of background singlet blocks and the number of background sites, so that $n=qd+1$ and $N=n-1=qd$.} \\[0.6em]

\parbox[t]{0.22\textwidth}{$W_{\omega_1}$} & \parbox[t]{0.72\textwidth}{The Schur--Weyl inclusion $V_{\omega_1}\otimes M_{\omega_1}\to(V_{\omega_1})^{\otimes n}$, an $SU(d)$- and $S_n$-intertwiner, see Eq.~\eqref{eq:W-inclusion}.} \\[0.6em]

\parbox[t]{0.22\textwidth}{$\ket m$} & \parbox[t]{0.72\textwidth}{A unit multiplicity vector in $M_{\omega_1}$; the remaining freedom in a covariant encoding once the isotypic component is fixed.} \\[0.6em]

\parbox[t]{0.22\textwidth}{$\ket{m_0}$} & \parbox[t]{0.72\textwidth}{The multiplicity vector of the seed isometry $\mathcal{V}_0$, see Eq.~\eqref{eq:seed-multiplicity-vector}.} \\[0.6em]

\parbox[t]{0.22\textwidth}{$\ket{m_s}$} & \parbox[t]{0.72\textwidth}{The flat multiplicity vector generated from the seed $s$, see Eq.~\eqref{eq:family_of_multiplicity_vecs}.} \\[0.6em]

\parbox[t]{0.22\textwidth}{$D_q$} & \parbox[t]{0.72\textwidth}{The dimension of $M_{\omega_1}\cong[q+1,q^{d-1}]$, see Lemma~\ref{lem:dimension_formula} for its value and Lemma~\ref{lem:dimension_bound} for the lower bound $D_q=e^{\Omega_d(n)}$ used throughout}. \\[0.6em]

\parbox[t]{0.22\textwidth}{$P_g$, $P_\pi$} & \parbox[t]{0.72\textwidth}{The permutation operator on the physical space, see Eq.~\eqref{eq:group_act_on_phys_space}; written $P_\pi$ when the permutation is denoted $\pi\in S_n$.} \\[0.6em]

\end{tabular}

\subsection*{Partitions, Schur functors, and symmetric polynomials}

\begin{tabular}{@{}ll@{}}

\parbox[t]{0.22\textwidth}{$\lambda=(\lambda_1,\lambda_2,\ldots,\lambda_d)$} & \parbox[t]{0.72\textwidth}{Partition of an integer, i.e. $\lambda_1\geq \lambda_2 \geq \cdots \geq \lambda_d\geq 0$ and $\sum_{i=1}^d \lambda_i=n$. Enumerates irreducible representations of $GL_d(\mathbb{C})$/$S_n$.} \\[0.6em]

\parbox[t]{0.22\textwidth}{$l(\lambda)$} & \parbox[t]{0.72\textwidth}{Length of the partition $\lambda$, i.e. number of non-zero parts in $\lambda$.} \\[0.6em]

\parbox[t]{0.22\textwidth}{$\lambda \unrhd \mu$} & \parbox[t]{0.72\textwidth}{Dominance order on partitions, i.e. $\lambda \unrhd \mu$ if and only if $\sum_{i=1}^k \lambda_i \geq \sum_{i=1}^k \mu_i$ for all $k$.} \\[0.6em]

\parbox[t]{0.22\textwidth}{$\lambda'$} & \parbox[t]{0.72\textwidth}{Conjugate partition to $\lambda$.} \\[0.6em]

\parbox[t]{0.22\textwidth}{$[\lambda]=[\lambda_1,\lambda_2,\ldots,\lambda_d]$} & \parbox[t]{0.72\textwidth}{Irreducible representation of $S_n$ corresponding to partition $\lambda$, namely the Specht module.} \\[0.6em]

\parbox[t]{0.22\textwidth}{$S_{\lambda}(V)$} & \parbox[t]{0.72\textwidth}{The Schur functor, i.e. irreducible representation of $GL_d(\mathbb{C})$, associated with the partition $\lambda$.} \\[0.6em]

\parbox[t]{0.22\textwidth}{$S_\lambda(V) \downarrow_{S U(d)}$} & \parbox[t]{0.72\textwidth}{Restriction of the $GL_d(\mathbb{C})$ representation $S_\lambda(V)$ to $SU(d)$.} \\[0.6em]

\parbox[t]{0.22\textwidth}{$K_{\lambda, \mu}$} & \parbox[t]{0.72\textwidth}{Kostka number.} \\[0.6em]

\parbox[t]{0.22\textwidth}{$s_{\lambda}$} & \parbox[t]{0.72\textwidth}{Schur polynomial corresponding to the partition $\lambda$.} \\[0.6em]

\parbox[t]{0.22\textwidth}{$h_{k}$} & \parbox[t]{0.72\textwidth}{Complete homogeneous symmetric polynomial of degree $k$.} \\[0.6em]

\parbox[t]{0.22\textwidth}{$h_{\lambda}$} & \parbox[t]{0.72\textwidth}{A shorthand for $h_{\lambda_1} h_{\lambda_2} \cdots h_{\lambda_d}$, where $\lambda=(\lambda_1,\lambda_2,\ldots,\lambda_d)$.} \\[0.6em]

\parbox[t]{0.22\textwidth}{$e_k$} & \parbox[t]{0.72\textwidth}{Elementary symmetric polynomial of degree $k$.} \\[0.6em]

\parbox[t]{0.22\textwidth}{$\lambda_q$} & \parbox[t]{0.72\textwidth}{The partition $(q+1,q^{d-1})$ of $n=qd+1$, whose Specht module is the multiplicity space, $[\lambda_q]\cong M_{\omega_1}$.} \\[0.6em]

\parbox[t]{0.22\textwidth}{$\operatorname{SYT}(\lambda)$} & \parbox[t]{0.72\textwidth}{The set of standard Young tableaux of shape $\lambda$.} \\[0.6em]

\parbox[t]{0.22\textwidth}{$\ket T$} & \parbox[t]{0.72\textwidth}{The Young--Yamanouchi basis vector of $[\lambda]$ indexed by the standard Young tableau $T$.} \\[0.6em]

\parbox[t]{0.22\textwidth}{$\operatorname{Part}([\ell])$} & \parbox[t]{0.72\textwidth}{The set of set partitions of $\{1,\ldots,\ell\}$.} \\[0.6em]

\parbox[t]{0.22\textwidth}{$(x)_r$} & \parbox[t]{0.72\textwidth}{The falling factorial $x(x-1)\cdots(x-r+1)$.} \\[0.6em]

\end{tabular}

\subsection*{Finite fields and finite groups}

\begin{tabular}{@{}ll@{}}

\parbox[t]{0.22\textwidth}{$\mathbb{F}_n$} & \parbox[t]{0.72\textwidth}{Finite field with $n$ elements.} \\[0.6em]

\parbox[t]{0.22\textwidth}{$\mathrm{AGL}(1, n)$} & \parbox[t]{0.72\textwidth}{The affine general linear group of degree 1 over the finite field $\mathbb{F}_n$, see Eq.~\eqref{eq:AGL}.} \\[0.6em]

\parbox[t]{0.22\textwidth}{$\operatorname{PGL}(2, n-1)$} & \parbox[t]{0.72\textwidth}{The projective general linear group of degree 2 over the finite field $\mathbb{F}_{n-1}$, see Eq.~\eqref{eq:3_transitive_subgroup}.} \\[0.6em]

\parbox[t]{0.22\textwidth}{$\mathbb F_{2^h}$, $h$} & \parbox[t]{0.72\textwidth}{The finite field used by the sampler and its degree $h=n\log d$, chosen so that every tableau label fits in one field element.} \\[0.6em]

\parbox[t]{0.22\textwidth}{$w(T)$, $x_T$} & \parbox[t]{0.72\textwidth}{The tableau word of $T$, recording the row index of each entry, and the element of $\mathbb F_{2^h}$ it labels.} \\[0.6em]

\parbox[t]{0.22\textwidth}{$s=(c_0,c_1,c_2,c_3)$} & \parbox[t]{0.72\textwidth}{The random seed of the sampler, four independent uniform elements of $\mathbb F_{2^h}$.} \\[0.6em]

\parbox[t]{0.22\textwidth}{$p_s(x)$} & \parbox[t]{0.72\textwidth}{The random cubic polynomial $c_0+c_1x+c_2x^2+c_3x^3$ over $\mathbb F_{2^h}$ generating the phases.} \\[0.6em]

\parbox[t]{0.22\textwidth}{$L$} & \parbox[t]{0.72\textwidth}{A surjective $\mathbb F_2$-linear map $\mathbb F_{2^h}\to\mathbb F_2^2$, used to extract two bits from $p_s(x_T)$.} \\[0.6em]

\parbox[t]{0.22\textwidth}{$z_s(T)$} & \parbox[t]{0.72\textwidth}{The fourth root of unity assigned to the tableau $T$ by the seed $s$; the family $\{z_s(T)\}_T$ is four-wise independent.} \\[0.6em]

\end{tabular}

\subsection*{Concentration and asymptotic notation}

\begin{tabular}{@{}ll@{}}

\parbox[t]{0.22\textwidth}{$\mathbb E_m$} & \parbox[t]{0.72\textwidth}{Expectation over a Haar-random unit vector $\ket m$ in the multiplicity space $M_{\omega_1}$.} \\[0.6em]

\parbox[t]{0.22\textwidth}{$r_k$} & \parbox[t]{0.72\textwidth}{The output dimension $d^{k+1}$ of the Choi matrix of a $k$-site channel, used in the entrywise-to-diamond bound, see Eq.~\eqref{eq:entry-diamond}.} \\[0.6em]

\parbox[t]{0.22\textwidth}{$M_{n,k}$} & \parbox[t]{0.72\textwidth}{The number of matrix entries controlled by the union bound, $M_{n,k}=\binom nkd^{2k+2}=\binom nkr_k^2$.} \\[0.6em]

\parbox[t]{0.22\textwidth}{$A_{S;ab;uv}$} & \parbox[t]{0.72\textwidth}{The operator on $M_{\omega_1}$ representing one matrix element of $\Phi_{S,m}$ as a quadratic form in $\ket m$, see Eq.~\eqref{eq:test-operator}.} \\[0.6em]

\parbox[t]{0.22\textwidth}{$O_{d,k}(\cdot)$, $\Omega_d(\cdot)$} & \parbox[t]{0.72\textwidth}{Asymptotic bounds in $n$ with implied constants depending only on the subscripted parameters, which are held fixed.} \\[0.6em]

\end{tabular}

\subsection*{Block decompositions and controllability notation}

\begin{tabular}{@{}ll@{}}

\parbox[t]{0.22\textwidth}{$\mathcal{W}_{\beta \alpha}$} & \parbox[t]{0.72\textwidth}{The linear space of homomorphisms between isotypic components $\alpha$ and $\beta$, see Eq.~\eqref{eq:endomorphism_decomposition}.} \\[0.6em]

\parbox[t]{0.22\textwidth}{$\mathbf{T}_{\beta \alpha}$} & \parbox[t]{0.72\textwidth}{The linear space of homomorphisms between irreducible representations $\alpha$ and $\beta$, see Eq.~\eqref{eq:tensor_product_factors}.} \\[0.6em]

\parbox[t]{0.22\textwidth}{$\mathbf{M}_{\beta \alpha}$} & \parbox[t]{0.72\textwidth}{The multiplicity space of homomorphisms between multiplicity spaces of irreducible representations $\alpha$ and $\beta$, see Eq.~\eqref{eq:tensor_product_factors}.} \\[0.6em]

\parbox[t]{0.22\textwidth}{$\mathcal{B}_{\beta \alpha}$} & \parbox[t]{0.72\textwidth}{The breaker generated subspace, see Definition~\ref{def:breaker_generated_subspace}.} \\[0.6em]

\parbox[t]{0.22\textwidth}{$S_{\beta \alpha}$} & \parbox[t]{0.72\textwidth}{The first-factor support space, see Definition~\ref{def:breaker_generated_subspace}.} \\[0.6em]

\parbox[t]{0.22\textwidth}{$\Gamma_{\mathrm{co}}$} & \parbox[t]{0.72\textwidth}{The coupling graph, see Definition~\ref{def:coupling_graph}.} \\[0.6em]

\parbox[t]{0.22\textwidth}{$\mathcal{L}$} & \parbox[t]{0.72\textwidth}{The Lie algebra generated by $G$-invariant Lie subalgebra and breaking Hamiltonians, see Eq.~\eqref{eq:DLA_with_breakers}.} \\[0.6em]

\parbox[t]{0.22\textwidth}{$\mathcal{L}_{\C}$} & \parbox[t]{0.72\textwidth}{The complexification of $\mathcal{L}$, see Eq.~\eqref{eq:complexified_DLA_with_breakers}.} \\[0.6em]

\parbox[t]{0.22\textwidth}{$\mathcal{G}^{\text {br }}$} & \parbox[t]{0.72\textwidth}{The set of all breaking Hamiltonians.} \\[0.6em]

\parbox[t]{0.22\textwidth}{$\Pi_{\beta \alpha}(X)$} & \parbox[t]{0.72\textwidth}{The projection of operator $X$ onto the homomorphism space $\mathcal{W}_{\beta \alpha}$, see Definition~\ref{def:projection_onto_block}.} \\[0.6em]

\end{tabular}

\section{Preliminaries}\label{sec:suppl_prelim}

\begin{smdefinition}\label{def:complementary_channel}
Let \(\mathcal N:\End(\mathcal H_A)\to \End(\mathcal H_B)\) be a quantum channel, and let
$
    W:\mathcal H_A\to \mathcal H_B\otimes \mathcal H_E
$
be a Stinespring isometry for \(\mathcal N\), so that
$
    \mathcal N(\rho)=\Tr_E\left(W\rho W^\dagger\right).
$
\textit{The complementary channel} associated with this Stinespring isometry is the channel
$
    \widehat{\mathcal N}:\End(\mathcal H_A)\to \End(\mathcal H_E)
$
defined by
$$
    \widehat{\mathcal N}(\rho)
    :=
    \Tr_B\left(W\rho W^\dagger\right).
$$
\end{smdefinition}

\begin{smdefinition}\label{def:root_fidelity}
For two quantum states \(\rho,\sigma\in \End(\mathcal H)\), we define their fidelity by
\[
    f(\rho,\sigma)
    :=
    \Tr\sqrt{\sqrt{\rho}\,\sigma\,\sqrt{\rho}}.
\]
\end{smdefinition}

\begin{smdefinition}\label{def:rho_entanglement_fidelity}
Let
$
    \mathcal N,\mathcal M:\End(\mathcal H_A)\to \End(\mathcal H_B)
$
be quantum channels, and let \(\rho\in\End(\mathcal H_A)\) be a quantum state. Let
$
    |\psi_\rho\rangle\in \mathcal H_A\otimes \mathcal H_R
$
be any purification of \(\rho\). The input-state-dependent entanglement fidelity between \(\mathcal N\) and \(\mathcal M\) is defined as
\[
    F_\rho(\mathcal N,\mathcal M)
    :=
    f\left(
    (\mathcal N\otimes \mathrm{id}_R)(|\psi_\rho\rangle\langle\psi_\rho|),
    (\mathcal M\otimes \mathrm{id}_R)(|\psi_\rho\rangle\langle\psi_\rho|)
    \right).
\]
This quantity is independent of the chosen purification of \(\rho\).
\end{smdefinition}

\begin{smdefinition}\label{def:worst_case_entanglement_fidelity}
Let
$
    \mathcal N,\mathcal M:\End(\mathcal H_A)\to \End(\mathcal H_B)
$
be quantum channels. The worst-case entanglement fidelity between \(\mathcal N\) and \(\mathcal M\) is defined by
\[
    F(\mathcal N,\mathcal M)
    :=
    \min_{\rho\in\mathcal D(\mathcal H_A)}
    F_\rho(\mathcal N,\mathcal M),
\]
where \(\mathcal D(\mathcal H_A)\) denotes the set of density operators on \(\mathcal H_A\).
\end{smdefinition}

\section{\texorpdfstring{$SU(2)$}{SU(2)}-covariant codes}\label{sec:supple_su2}

Let \(V_j\) denote the spin-\(j\) irreducible representation of \(SU(2)\), so that
$
    \dim V_j=2j+1.
$
We write \(J_x,J_y,J_z\) for the Hermitian spin-\(j\) generators acting on \(V_j\), normalized by
\begin{equation}\label{eq:spinj}
    [J_x,J_y]=iJ_z,
    \qquad
    J_x^2+J_y^2+J_z^2=j(j+1)I_{2j+1}.
\end{equation}
Equivalently, in the standard angular-momentum basis
\(\{|m\rangle:m=-j,-j+1,\ldots,j\}\), one has
\begin{equation}\label{eq:Jz}
    J_z|m\rangle=m|m\rangle,
\end{equation}
and \(J_\pm:=J_x\pm iJ_y\) act by
\begin{equation}\label{eq:Jpm}
    J_\pm |m\rangle
    =
    \sqrt{j(j+1)-m(m\pm1)}\,|m\pm1\rangle .
\end{equation}

\begin{smproposition}\label{prop:SU2cov}
    Let
    $
        \mathcal H_L=V_{1/2},\;
        \mathcal H_P=(V_j)^{\otimes n}.
    $
    Equip \(\mathcal H_L\) with the fundamental spin-\(1/2\) representation of
    \(\mathfrak{su}(2)\), and equip \(\mathcal H_P\) with the transversal spin-\(j\)
    representation given by
    \begin{equation}
    \begin{gathered}
    \frac{i \sigma_a}{2} \rightarrow i J_a \equiv \sum_{k=1}^n i J_a^{(k)}, \quad a=x, y, z,
    \end{gathered}
    \end{equation} 
    Then there exists a code space $\mathfrak{su}(2)$-covariant encoding $\mathcal{E}$ with respect to defined physical and logical representations,
    such that the state seen by the environment after erasure of any single physical qudit is of the form
    \begin{equation}\label{eq:reducedStateSU2}
        \rho^{(i)}\left(\rho_L\right)=\frac{I_{2 j+1}}{2 j+1}+\frac{3}{2 n j(j+1)(2 j+1)} \sum_a r_a J_a^{(i)}, \quad i=1, \ldots, n.
    \end{equation}
    Namely, this code space is one of the copies of the spin-$\frac{1}{2}$ fundamental representation in $\left(V_j\right)^{\otimes n}$ which is invariant under cyclic permutations of physical qudits.
    
\end{smproposition}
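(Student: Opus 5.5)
The proof has three parts. First we show that a cyclically invariant copy of $V_{1/2}$ exists. Then covariance fixes the shape of the reduced state up to one scalar per site. Finally, cyclic symmetry and a Hilbert--Schmidt computation fix that scalar.

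\emph{Existence.} For $n$ odd and $j\in\frac12+\mathbb Z$ the multiplicity space $M_{V_{1/2}}$ of $(V_j)^{\otimes n}$ is nonzero. One can see this by coupling spins successively: $V_j\otimes V_j\ni V_0$, and then $V_j\otimes V_0\otimes\cdots$ reaches $V_{1/2}$ after adjusting one pair to $V_{1/2}$. The cyclic shift $P_s$ commutes with the transversal $SU(2)$ action. By Eq.~\eqref{eq:group_act_on_code} it therefore acts on $M_{V_{1/2}}$ through a unitary $R(s)$ with $R(s)^n=I$. A unitary on a nonzero finite-dimensional space has an eigenvector $\ket m$, with eigenvalue $e^{i\phi}$. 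Setting $\mathcal V=U_{\mathrm{Schur}}^\dagger(I\otimes\ket m)$ as in Eq.~\eqref{eq:schur_transform_enc} gives a covariant isometry with $P_s\mathcal V=e^{i\phi}\mathcal V$. The code space is then invariant under cyclic permutations.

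\emph{Form of the reduced state.} Extend $\rho^{(i)}$ linearly to $\Phi^{(i)}:\operatorname{End}(V_{1/2})\to\operatorname{End}(V_j)$. Covariance gives Eq.~\eqref{eq:reduced-state_cov_cond}, so $\Phi^{(i)}$ intertwines the adjoint actions. We have $\operatorname{End}(V_{1/2})\cong V_0\oplus V_1$, and $\operatorname{End}(V_j)\cong\bigoplus_{\ell=0}^{2j}V_\ell$ with each $V_\ell$ appearing once. Schur's lemma (Eq.~\eqref{eq:intertwiner_decomp}) then forces $\Phi^{(i)}(I)\propto I_{2j+1}$ and $\Phi^{(i)}(\sigma_a)=c^{(i)}J_a^{(i)}$ for a single scalar $c^{(i)}$. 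The $V_1$ copy in $\operatorname{End}(V_j)$ is spanned by the $J_a$, and the scalar is common to all $a$ because the $J_a$ transform as the same vector triplet. Trace preservation fixes the first coefficient to $1/(2j+1)$, which gives Eq.~\eqref{eq:reduced_state_beta}.

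\emph{Fixing $\beta^{(i)}$.} The compression $X\mapsto\mathcal V^\dagger X\mathcal V$ is also an intertwiner, so $\mathcal V^\dagger J_a^{(i)}\mathcal V=\alpha^{(i)}\sigma_a/2$. Summing over $i$ and using covariance, $\mathcal V^\dagger J_a\mathcal V=\sigma_a/2$, so $\sum_i\alpha^{(i)}=1$. Cyclic invariance gives $\mathcal V^\dagger J_a^{(i+1)}\mathcal V=\mathcal V^\dagger P_sJ_a^{(i)}P_s^\dagger\mathcal V=\mathcal V^\dagger J_a^{(i)}\mathcal V$, since the phases cancel. Hence $\alpha^{(i)}=1/n$.

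To link $\alpha$ and $\beta$, take the duality
\[
\operatorname{Tr}\bigl[J_b^{(i)}\rho^{(i)}(\rho_L)\bigr]=\operatorname{Tr}\bigl[\rho_L\,\mathcal V^\dagger J_b^{(i)}\mathcal V\bigr].
\]
The left side equals $\beta^{(i)}r_b\operatorname{Tr}(J_b^2)=\beta^{(i)}r_b\,j(j+1)(2j+1)/3$. The right side equals $\alpha^{(i)}\operatorname{Tr}(\rho_L\sigma_b)/2=\alpha^{(i)}r_b/2$. Equating them gives Eq.~\eqref{eq:beta_through_alpha}, and with $\alpha^{(i)}=1/n$ this yields Eq.~\eqref{eq:reducedStateSU2}.

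The only step with real content is the nonvanishing of $M_{V_{1/2}}$ together with the existence of a cyclic eigenvector. Everything after that is forced by Schur's lemma. I would check the multiplicity claim by a Clebsch--Gordan induction on $n$.
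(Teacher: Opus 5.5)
Your argument is correct and is essentially the paper's: Schur's lemma fixes $\rho^{(i)}$ up to one scalar $\beta^{(i)}$ and fixes the compression $\mathcal V^\dagger J_a^{(i)}\mathcal V=\alpha^{(i)}\sigma_a/2$ up to one scalar $\alpha^{(i)}$ with $\sum_i\alpha^{(i)}=1$; the Hilbert--Schmidt pairing then gives $\beta^{(i)}=3\alpha^{(i)}/(2j(j+1)(2j+1))$, and choosing the multiplicity vector as an eigenvector of the cyclic shift forces $\alpha^{(i)}=1/n$. The one loose spot is your existence sketch, which the paper simply assumes: two half-integer spins couple only to integer spin, so instead use $V_j\otimes V_j\supset V_{j+1/2}$ and $V_{j+1/2}\otimes V_j\supset V_{1/2}$ on three sites and pair the remaining $n-3$ sites into singlets, which requires $n\ge 3$ unless $j=1/2$ (for $n=1$ and $j\ge 3/2$ the multiplicity is zero).
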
     

\begin{proof}
    We first prove that
    \[
        \mathcal{V}^{\dagger} J_a^{(i)} \mathcal{V}=\alpha^{(i)} \frac{\sigma_a}{2}, \quad a=x, y, z, \quad \sum_{i=1}^n \alpha^{(i)} = 1,
    \]
    where $\mathcal{V}$ is an encoding isometry. We emphasize that this is a general fact that holds for any $SU(2)$-covariant encoding, and does not depend on the choice of code space.

    Let's denote $K_a^{(i)}:=\mathcal{V}^{\dagger} J_a^{(i)} \mathcal{V}$, so we have to prove
    \[
        K_a^{(i)}=\alpha^{(i)} \frac{\sigma_a}{2}.
    \]
    Recalling the covariance condition, we have
    \[
    \left(\bigotimes_{i=1}^n U^{(i)}(g)\right) \mathcal{V}=\mathcal{V} u(g), \quad  \mathcal{V}^{\dagger}\left(\bigotimes_{i=1}^n U^{(i)}(g)\right)= u(g) \mathcal{V}^{\dagger},
    \]
    where $u(g)$ is the fundamental representation of $SU(2)$ on the logical qubit and $U^{(i)}(g)$ is the spin-$j$ representation of $SU(2)$ on the physical qudits.  
    Components $\{J_a^{(i)}\}_{a=x, y, z}$ form a vector operator, meaning that they transform under the adjoint action of $SU(2)$ as follows
    \[
    \left(\bigotimes_{k=1}^n U^{(k)}(g)\right)  J_a^{(i)} \left(\bigotimes_{k=1}^n U^{(k)}(g)\right)^{\dagger} =\sum_{b=x, y, z} R_{a b}(g) J_b^{(i)},
    \]
    where $R_{a b}(g)$ is the spin-$1$ representation corresponding to the group element $g\in SU(2)$. 
    \[
    u(g) K_a^{(i)} u(g)^{\dagger}=u(g) \mathcal{V}^{\dagger} J_a^{(i)} \mathcal{V} u(g)^{\dagger}=\mathcal{V}^{\dagger}\left(\bigotimes_{k=1}^n U^{(k)}(g)\right)  J_a^{(i)}\left(\bigotimes_{k=1}^n U^{(k)}(g)\right)  \mathcal{V}=\sum_b R_{a b}(g) \mathcal{V}^{\dagger} J_b^{(i)} \mathcal{V}
    \]
    So
    \[
    u(g) K_a^{(i)} u(g)^{\dagger}=\sum_b R_{a b}(g) K_b^{(i)},
    \]
    which means that $\{K_a^{(i)}\}_{a=x, y, z}$ also form a vector operator (e. g. forms spin-1 representation) on logical space. 
    There is linear transformation that maps $\{K_a^{(i)}\}_{a=x, y, z}$ to $\{\frac{\sigma_a}{2}\}_{a=x, y, z}$,
    \[
    K_a^{(i)}=\sum_b M_{a b}^{(i)} \frac{\sigma_b}{2},
    \]
    which also transform as vector operators
    $
    u(g) \frac{\sigma_a}{2} u(g)^{\dagger}=\sum_b R_{a b}(g) \frac{\sigma_b}{2}.
    $
    From simple algebraic manipulations, we get 
    \[
    u(g) K_a^{(i)} u(g)^{\dagger}=\sum_b M_{a b}^{(i)} u(g) \frac{\sigma_b}{2} u(g)^{\dagger}=\sum_{b c} M_{a b}^{(i)} R_{b c}(g) \frac{\sigma_c}{2},
    \]
    and on the other hand 
    \[
    u(g) K_a^{(i)} u(g)^{\dagger}=\sum_b R_{a b}(g) K_b^{(i)}=\sum_{b c} R_{a b}(g) M_{b c}^{(i)} \frac{\sigma_c}{2},
    \]
    which leads to the following condition on the matrix $M^{(i)}$:
    \[
    M^{(i)} R(g)=R(g) M^{(i)}, \quad \forall g \in SU(2).
    \]
    From Schur's lemma, since $R(g)$ is irreducible, we have that $M^{(i)}$ is proportional to the identity matrix, meaning that
    \[
    K_a^{(i)}=\alpha^{(i)} \frac{\sigma_a}{2},
    \]
    which is what we wanted to prove. From the fact that 
    \[
    \sum_{i=1}^n \mathcal{V}^{\dagger} J_a^{(i)} \mathcal{V} = \mathcal{V}^{\dagger} \left(\sum_{i=1}^n J_a^{(i)}\right) \mathcal{V} = \mathcal{V}^{\dagger} J_a \mathcal{V} = \frac{\sigma_a}{2},
    \]
    we get
    $
    \sum_{i=1}^n \alpha^{(i)} \frac{\sigma_a}{2} = \frac{\sigma_a}{2},
    $
    which implies $\sum_{i=1}^n \alpha^{(i)} = 1$.
    Now we are to prove that
    \begin{equation}\label{supple_reduced_state}
        \rho^{(i)}\left(\rho_L\right)=\frac{I_{2 j+1}}{2 j+1}+\beta^{(i)} \sum_{a=x, y, z} r_a J_a
    \end{equation}
    where
    $
        \beta^{(i)}=\frac{3 \alpha^{(i)}}{2 j(j+1)(2 j+1)}.
    $
    Consider the Bloch vector representation of the logical state $\rho_L$ 
    \[
    \rho_L=\frac{1}{2}(I+\mathbf{r} \cdot \boldsymbol{\sigma}).
    \]
    We first prove that the map $\rho_L \rightarrow \rho^{(i)}\left(\rho_L\right)$ is a covariant map with respect to $SU(2)$. Indeed,
    defining the encoded state
    $
    \rho_{\mathrm{enc}}\left(\rho_L\right)=\mathcal{V} \rho_L \mathcal{V}^{\dagger}
    $
    we have
    \[
    \rho_{\text{enc}}\left(u(g) \rho_L u(g)^{\dagger}\right)=\left(\bigotimes_{k=1}^n U^{(k)}(g)\right)\rho_{\text{enc}}\left(\rho_L\right) \left(\bigotimes_{k=1}^n U^{(k)}(g)\right)^{\dagger},
    \]
    Now trace out all sites except the site $i$. Since partial trace commutes with conjugation on the kept subsystem,
    $
    \rho^{(i)}\left(u(g) \rho_L u(g)^{\dagger}\right)=U^{(i)}(g) \rho^{(i)}\left(\rho_L\right) (U^{(i)}(g))^{\dagger},
    $
    from which covariance of the map $\rho_L \rightarrow \rho^{(i)}\left(\rho_L\right)$ follows.
    Since the reduced state map is linear
    $
        \rho^{(i)}\left(\rho_L\right)=A_0+\sum_a r_a A_a, \quad \forall \rho_L  \vec{r}
    $
    where $A_0, A_x, A_y, A_z$ are some operators on the $i$-th physical qudit independent on the logical state. Let's find these operators. 
    First, we find $A_0$ by plugging in the maximally mixed state $\rho_L = \frac{I}{2}$, which gives
    $
    u(g) \frac{I}{2} u(g)^{\dagger}=\frac{I}{2},
    $
    and from the covariance condition, one gets
    $
    U^{(i)}(g) A_0 (U^{(i)}(g))^{\dagger}=A_0 \quad \forall g \in SU(2),
    $
    which basically means that $A_0$ is an intertwiner operator of this spin-$j$ representation of $SU(2)$, and from Schur's lemma we get
    $
    A_0=c I_{2 j+1},
    $
    and constant $c$ can easily be found from the normalization condition $\Tr \rho^{(i)}\left(\frac{I}{2}\right)=1$, which gives $c=\frac{1}{2 j+1}$. 
    
    Now we find $A_a$ for $a=x, y, z$. We have
    \[
    u(g) \rho_L u(g)^{\dagger}=\frac{1}{2} I+\frac{1}{2} \sum_a(R(g) \mathbf{r})_a \sigma_a,
    \]
    where $R(g)$ is spin-1 representation corresponding to the group element $g\in SU(2)$. From the covariance condition, we have
    \[
    A_0+\sum_a(R(g) \mathbf{r})_a A_a=U^{(i)}(g)\left(A_0+\sum_a r_a A_a\right) U^{(i)}(g)^{\dagger},
    \]
    and since $A_0$ is proportional to the identity, we get
    \[
    \sum_a(R(g) \mathbf{r})_a A_a=\sum_a r_a U^{(i)}(g) A_a U^{(i)}(g)^{\dagger},
    \]
    and since $\vec{r}$ is arbitrary
    \[
    U^{(i)}(g) A_a (U^{(i)}(g))^{\dagger}=\sum_b R_{a b}(g) A_b.
    \]
    This means that $\{A_a\}_{a=x, y, z}$ also form a vector operator on the physical space.
    By the same argument as before, we have that $A_a$ is proportional to $J_a$
    \[
    \vec{A}=\beta^{(i)} \vec{J},
    \] 
    which proves the desired form of $\rho^{(i)}\left(\rho_L\right)$. We are now to find coefficients $\beta^{(i)}$. 
    Using formula~\eqref{supple_reduced_state} we get
    \[
    \operatorname{Tr}\left(\rho^{(i)}\left(\rho_L\right) J_a\right)=\operatorname{Tr}\left(\frac{I}{2 j+1} J_a\right)+\beta^{(i)} \sum_b r_b \operatorname{Tr}\left(J_b J_a\right).
    \]
    where inner products can be computed from the standard derivations of spin-$j$ representation of $\mathfrak{su}(2)$
    \[
    \operatorname{Tr}\left(\frac{I}{2 j+1} J_a\right)=0, \quad \operatorname{Tr}\left(J_b J_a\right)=\delta_{a b} \frac{1}{3} j(j+1)(2 j+1),
    \]
    so we get
    \[
    \operatorname{Tr}\left(\rho^{(i)}\left(\rho_L\right) J_a\right)=\beta^{(i)} \frac{1}{3} j(j+1)(2 j+1) r_a .
    \]
    On the other hand, using properties of $\Tr$ and earlier derivations, we obtain
    \[
    \operatorname{Tr}\left(\rho^{(i)}\left(\rho_L\right) J_a\right)=\operatorname{Tr}\left(\mathcal{V} \rho_L \mathcal{V}^{\dagger} J_a^{(i)}\right)=\operatorname{Tr}\left(\rho_L \mathcal{V}^{\dagger} J_a^{(i)} \mathcal{V}\right)=\alpha^{(i)} \operatorname{Tr}\left(\rho_L \frac{\sigma_a}{2}\right)=\alpha^{(i)} \frac{r_a}{2},
    \]
    which leads to the following expression for $\beta^{(i)}$:
    \[
    \beta^{(i)}=\frac{3 \alpha^{(i)}}{2 j(j+1)(2 j+1)} .
    \]
    We now move to the second part of the proof, which is to show that we can always find a code space such that $\alpha^{(i)} = \frac{1}{n}$ for all $i$.
    Consider the cyclic permutation operator $T$ on the physical space defined as follows
    \[
        T\left(v_1 \otimes v_2 \otimes \cdots \otimes v_n\right)=v_n \otimes v_1 \otimes \cdots \otimes v_{n-1} \Longrightarrow T J_a^{(i)} T^{\dagger}=J_a^{(i+1)},
    \]
    where $v_j \in V_j$. From the Shur-Weyl duality, we have that $T$ acts as identity on the representation space $V_{1 / 2}$,
     and as some unitary operator $\tau$ on the multiplicity space $M_{1 / 2}$, meaning that
    $
        T_{|_{V_{1 / 2} \otimes M_{1 / 2}}}=I_{V_{1 / 2}} \otimes \tau.
    $
        Since $T^n=I$, we have $\tau^n=I$, which means that $\tau$ is diagonal in some basis of $M_{1 / 2}$, and its eigenvalues are $n$-th roots of the unity. Let's pick up
        some of these eigenvalues and the corresponding eigenvectors, which we denote as $e^{i \phi}$ and $|m\rangle$ respectively, so we have
    $
        \tau|m\rangle=e^{i \phi}|m\rangle .
    $
     We finally define our code space as follows
    $
        \mathcal{C}:=V_{1 / 2} \otimes|m\rangle.
    $
    The reason for this definition is that $T \mathcal{V}=e^{i \phi} \mathcal{V}$, which leads us to
    the following chain of equations
    \[
        \mathcal{V}^{\dagger} J_a^{(i)} \mathcal{V}=\mathcal{V}^{\dagger} (T^{j-i})^{\dagger} J_a^{(j)} T^{j-i} \mathcal{V}=\mathcal{V}^{\dagger} J_a^{(j)} \mathcal{V}
    \]
    for every $i,j$, from which it instantly follows that
    $
        \alpha^{(1)}=\alpha^{(2)}=\cdots=\alpha^{(n)}=: \alpha,
    $
    and using condition $\sum_i\alpha^{(i)}=1$ we get $\alpha=\frac{1}{n}$, which completes the proof.
\end{proof}

\begin{smlemma}\label{fidelity_opt}
Let $G$ be a compact group, let
$
U: G \rightarrow \mathcal{U}\left(\mathcal{H}_{L}\right)
$
be an irreducible unitary representation on the $\mathcal{H}_{L}$ and let
$
\mathcal{N}, \mathcal{M}: \operatorname{End}\left(\mathcal{H}_L\right) \rightarrow \operatorname{End}\left(\mathcal{H}_{P}\right)
$
be quantum channels such that there exists a unitary representation
$
U_P: G \rightarrow \mathcal{U}\left(\mathcal{H}_{P}\right)
$
with
\begin{equation}\label{eq:channel_covar}
    \mathcal{N}\left(U(g) X U(g)^{\dagger}\right)=U_P(g) \mathcal{N}(X) U_P(g)^{\dagger}, \quad \mathcal{M}\left(U(g) X U(g)^{\dagger}\right)=U_P(g) \mathcal{M}(X) U_P(g)^{\dagger}
\end{equation}
for all $g \in G$ and $X \in \operatorname{End}\left(\mathcal{H}_{L}\right)$.
Then the worst-case entanglement fidelity
$
\rho \longmapsto F_\rho(\mathcal{N}, \mathcal{M})
$
is convex and $G$-invariant. Consequently,

$$
F(\mathcal{N}, \mathcal{M}):=\min _\rho F_\rho(\mathcal{N}, \mathcal{M})=F_{I_{L} / \operatorname{dim} \mathcal{H}_{L}}(\mathcal{N}, \mathcal{M}) .
$$
\end{smlemma}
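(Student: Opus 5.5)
The plan is to establish the two stated properties of $\rho\mapsto F_\rho(\mathcal N,\mathcal M)$, convexity and $G$-invariance, and then combine them with a Haar average over $G$. Once both hold, the claim follows from a Jensen-type inequality. Irreducibility of $U$ and Schur's lemma give
\[
\int_G U(g)\rho U(g)^\dagger\,dg=\operatorname{Tr}(\rho)\,\frac{I_L}{\dim\mathcal H_L}.
\]
Hence for every state $\rho$,
\[
F_{I_L/\dim\mathcal H_L}
=F_{\int U(g)\rho U(g)^\dagger dg}
\le \int_G F_{U(g)\rho U(g)^\dagger}\,dg
=F_\rho .
\]
So the minimum over $\rho$ is attained at the maximally mixed state. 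Since the inequality must go in this direction, convexity (not concavity) is exactly the property needed.

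\emph{Convexity.} I will use the canonical purification $\ket{\psi_\rho}=(\sqrt\rho\otimes I_R)\ket{\Omega}$, where $\ket\Omega=\sum_k\ket{k}\ket{k}$ is unnormalized. This is legitimate because Definition~\ref{def:rho_entanglement_fidelity} is independent of the purification.

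Moving $\sqrt\rho$ to the reference as a transpose gives
\[
(\mathcal N\otimes\mathrm{id})(\ket{\psi_\rho}\bra{\psi_\rho})=X J_{\mathcal N}X,
\qquad
X:=I\otimes\sqrt{\rho^{T}},
\]
where $J_{\mathcal N}$ is the Choi operator, and similarly for $\mathcal M$. Next I use the identity $f(R^\dagger R,S^\dagger S)=\|RS^\dagger\|_1$, which follows from $\operatorname{Tr}\sqrt{\sqrt{A}B\sqrt A}=\|\sqrt A\sqrt B\|_1$ and the polar decomposition. Applying it with $R=J_{\mathcal N}^{1/2}X$ and $S=J_{\mathcal M}^{1/2}X$ yields
\[
F_\rho(\mathcal N,\mathcal M)=\bigl\|J_{\mathcal N}^{1/2}\,(I\otimes\rho^{T})\,J_{\mathcal M}^{1/2}\bigr\|_1 .
\]
This is a norm composed with a map that is linear in $\rho$, so it is convex. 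It is also continuous, which justifies the integral Jensen step above.

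\emph{$G$-invariance.} A purification of $U(g)\rho U(g)^\dagger$ is $(U(g)\otimes I_R)\ket{\psi_\rho}$. Covariance~\eqref{eq:channel_covar} extends by linearity to $\mathcal N\otimes\mathrm{id}_R$, since the covariance relation holds for every $X$ and the reference factor is untouched. This gives
\[
(\mathcal N\otimes\mathrm{id})\bigl((U\otimes I)Y(U\otimes I)^\dagger\bigr)=(U_P\otimes I)(\mathcal N\otimes\mathrm{id})(Y)(U_P\otimes I)^\dagger ,
\]
and the same holds for $\mathcal M$ with the same $U_P$. Root fidelity is invariant under a common unitary conjugation, so $F_{U(g)\rho U(g)^\dagger}=F_\rho$.

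The main obstacle is the convexity step, specifically the trace-norm identity for fidelity under the congruence by $X$. When $\rho$ is not full rank, I will handle it by the same formula, since nothing requires invertibility of $X$, or alternatively by continuity in $\rho$.
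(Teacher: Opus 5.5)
Your proof is correct and has the same three-step architecture as the paper's: convexity, then $G$-invariance, then a Haar twirl plus Schur's lemma. The differences are in how the first two steps are proved.

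For convexity, the paper uses Uhlmann's theorem, with Stinespring isometries sharing a common environment, to write $F_\rho=\max_W|\operatorname{Tr}(\rho X_W)|$ where $X_W=W_{\mathcal M}^\dagger(I_P\otimes W)W_{\mathcal N}$. It then concludes because a pointwise maximum of convex functions is convex. You instead derive the closed form $F_\rho=\|J_{\mathcal N}^{1/2}(I\otimes\rho^T)J_{\mathcal M}^{1/2}\|_1$ from the Choi operators, via $f(R^\dagger R,S^\dagger S)=\|RS^\dagger\|_1$. The two are closely related: the variational formula $\|Y\|_1=\max_V|\operatorname{Tr}(VY)|$ turns yours into an Uhlmann-type maximum of moduli of linear functionals of $\rho$. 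Your version, however, avoids dilations and the common-environment bookkeeping. It also makes the integral Jensen step immediate, since the Haar integral can be moved inside the trace norm and the triangle inequality applied.

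For invariance, the paper constructs $\ket{\psi_{U\rho U^\dagger}}=(U\otimes\overline U)\ket{\psi_\rho}$ for the canonical purification and then absorbs the reference unitary. You simply take $(U(g)\otimes I_R)\ket{\psi_\rho}$ and use the purification-independence of $F_\rho$, which is shorter and equally rigorous.
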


\begin{proof}
    We first prove convexity in $\rho$. Let $W_{\mathcal N}:\mathcal H_L\to\mathcal H_P\otimes\mathcal H_E$ and $W_{\mathcal M}:\mathcal H_L\to\mathcal H_P\otimes\mathcal H_E$ be Stinespring isometries of $\mathcal N$ and $\mathcal M$. Then $(W_{\mathcal N}\otimes I_R)\ket{\psi_\rho}$ and $(W_{\mathcal M}\otimes I_R)\ket{\psi_\rho}$ purify $(\mathcal N\otimes\operatorname{id})\psi_\rho$ and $(\mathcal M\otimes\operatorname{id})\psi_\rho$, and every other purification is obtained by acting on $\mathcal H_E$ alone. Uhlmann's theorem~\cite{KretschmannSchlingemannWerner2008} therefore gives
    $
        F_\rho(\mathcal N,\mathcal M)=\max_W\left|\operatorname{Tr}\left(\rho X_W\right)\right|,
        \quad
        X_W:=W_{\mathcal M}^\dagger\left(I_P\otimes W\right)W_{\mathcal N},
    $
    the maximum running over unitaries $W$ on $\mathcal H_E$, where we used $\bra{\psi_\rho}(X\otimes I_R)\ket{\psi_\rho}=\operatorname{Tr}(\rho X)$ for the purification below. 
    For each fixed $W$ the map $\rho\mapsto\left|\operatorname{Tr}(\rho X_W)\right|$ is the absolute value of an affine function of $\rho$, hence convex, and a pointwise maximum of convex functions is convex. 
    
    Now, we prove $G$-invariance. Recall that for an arbitrary state $\rho$ we can write its purification as
    $
        \left|\psi_\rho\right\rangle=\left(I \otimes \sqrt{\rho^T}\right)|\Phi\rangle,
    $
    where $|\Phi\rangle$ is a maximally entangled state. Using this identity, we will show that
    $
        \left|\psi_{U(g) \rho U(g)^{\dagger}}\right\rangle=\left(U(g) \otimes \overline{U(g)}\right)\left|\psi_\rho\right\rangle.
    $
    Indeed, from the uniqueness of the square root operator, one can easily check that
    $
        \sqrt{U(g) \rho (g)^{\dagger}}=U(g) \sqrt{\rho} U(g)^{\dagger},
    $
    so we can write the following chain of equalities
    $
        \sqrt{\left(U(g) \rho U(g)^{\dagger}\right)^T}=\left(U(g) \sqrt{\rho} U(g)^{\dagger}\right)^T=\overline{U(g)} \sqrt{\rho^T} U(g)^T.
    $
    From the defined purification, we have
    $
        \left|\psi_{U(g) \rho U(g)^{\dagger}}\right\rangle=\left(I \otimes \overline{U(g)} \sqrt{\rho^T} U(g)^T\right)|\Phi\rangle.
    $
    For a maximally entangled state, we have the following identity
    $
        \left(I \otimes U(g)^T\right)|\Phi\rangle=\left(U(g) \otimes I\right)|\Phi\rangle,
    $
    using which we can write the following chain of equalities
    \[
        \begin{aligned}
        \left|\psi_{U(g) \rho U(g)^{\dagger}}^{\dagger}\right\rangle & =\left(I \otimes \overline{U(g)} \sqrt{\rho^T}\right)\left(I \otimes U(g)^T\right)|\Phi\rangle \\
        & =\left(I \otimes \overline{U(g)} \sqrt{\rho^T}\right)\left(U(g) \otimes I\right)|\Phi\rangle \\
        & =\left(U(g) \otimes \overline{U(g)} \sqrt{\rho^T}\right)|\Phi\rangle \\
        & =\left(U(g) \otimes \overline{U(g)}\right)\left(I \otimes \sqrt{\rho^T}\right)|\Phi\rangle \\
        & =\left(U(g) \otimes \overline{U(g)}\right)\left|\psi_\rho\right\rangle .
        \end{aligned}
    \]
    which proves the desired identity. Now, using this identity and covariance of channels~\eqref{eq:channel_covar}, we get the following
    \[
        (\mathcal{N} \otimes \mathrm{id})\left(\left|\psi_{U(g) \rho U(g)^{\dagger}}\right\rangle\left\langle\psi_{U(g) \rho U(g)^{\dagger}}\right|\right)=\left(U_P(g) \otimes \overline{U_P(g)}\right)(\mathcal{N} \otimes \mathrm{id})\left(\left|\psi_\rho\right\rangle\left\langle\psi_\rho\right|\right)\left(U_P(g)^{\dagger} \otimes \overline{U_P(g)}^{\dagger}\right)
    \]
    and analogously for $\mathcal{M}$. Recalling that fidelity is invariant under unitary conjugation, we get
    \[
        F_{U(g) \rho U(g)^{\dagger}}(\mathcal{N}, \mathcal{M})=F_\rho(\mathcal{N}, \mathcal{M}),
    \]
    which proves $G$-invariance. 

    We finally prove the optimality of the chaotic state. Let $dg$ be normalized Haar measure on $G$, and define the twirl
$
\bar{\rho}:=\int_G U(g) \rho U(g)^{\dagger} dg
$
By convexity,
$
F_{\bar{\rho}}(\mathcal{N}, \mathcal{M}) \leq \int_G F_{U(g) \rho U(g)^{\dagger}}(\mathcal{N}, \mathcal{M}) d g
$
By $G$-invariance, the integrand is constant and equal to $F_\rho(\mathcal{N}, \mathcal{M})$. Hence
$
F_{\bar{\rho}}(\mathcal{N}, \mathcal{M}) \leq F_\rho(\mathcal{N}, \mathcal{M})
$
for every $\rho$.
Now we use the irreducibility of $U$. By Schur's lemma, the twirled state must be a scalar multiple of the identity:
$
    \bar{\rho}=\frac{I_{L}}{\operatorname{dim} \mathcal{H}_{L}}
$
Therefore
$
F_{I_{L} / \operatorname{dim} \mathcal{H}_{L}}(\mathcal{N}, \mathcal{M}) \leq F_\rho(\mathcal{N}, \mathcal{M}) \quad \forall \rho,
$
which proves the desired optimality of the chaotic state.
\end{proof}

\begin{smproposition}\label{prop:SU2covFidelity}
    For the encoding $\mathcal{E}$ defined in Proposition~\ref{prop:SU2cov} and noise channel $\mathcal{N}(\sigma)=\sum_{i=1}^n p_i\ket{i}\bra{i}_F \otimes \ket{e}\bra{e}_{A_i} \otimes \operatorname{Tr}_i(\sigma)$ 
    the worst-case entanglement fidelity for the dual channel is
    \[
        F(\widehat{\mathcal{N} \circ \mathcal{E}}, \Lambda_0)=1-\frac{9}{32 n^2 j(j+1)}+O_{j}\left(n^{-3}\right),
    \]
    where we take $\Lambda_0(\rho)=\operatorname{Tr}(\rho) \omega_{\text{flag}}\otimes\frac{I_{V_{j}}}{2 j+1}$, where $\omega_{\text{flag}}=\sum p_i|i\rangle\langle i|_{F_E}$.
\end{smproposition}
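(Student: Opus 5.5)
The plan is to reduce the worst-case fidelity to a single explicit input state, using Lemma~\ref{fidelity_opt}, and then evaluate it by a perturbative expansion of a commuting square root.

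\textbf{Reduction to the maximally mixed input.} By Eq.~\eqref{eq:comp_chan_su2} and Proposition~\ref{prop:SU2cov}, the complementary channel is $\widehat{\mathcal N\circ\mathcal E}(\rho_L)=\sum_i p_i\ket i\bra i_{F_E}\otimes\rho^{(i)}(\rho_L)$. By Eq.~\eqref{eq:reduced-state_cov_cond} it is $SU(2)$-covariant, with output representation $\sum_i\ket i\bra i\otimes U_j(g)$ (trivial on the flag). The constant channel $\Lambda_0$ is covariant for the same representation, because $I/(2j+1)$ is invariant. The logical spin-$\tfrac12$ representation is irreducible. Hence Lemma~\ref{fidelity_opt} gives $F(\widehat{\mathcal N\circ\mathcal E},\Lambda_0)=F_{I/2}$, and it suffices to feed in the maximally entangled purification $\ket\Phi$ of $I/2$.

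\textbf{Choi states.} Write $\ket\Phi\bra\Phi=\tfrac14\bigl(I\otimes I+\sum_a\sigma_a\otimes\sigma_a^T\bigr)$. Linearity of Eq.~\eqref{eq:reducedStateSU2} gives $\rho^{(i)}(I)=2I/(2j+1)$ and $\rho^{(i)}(\sigma_a)=2\beta J_a^{(i)}$, with $\beta=3/(2nj(j+1)(2j+1))$. Both outputs are block diagonal in the flag with the same weights $p_i$, and every block is identical. The root fidelity therefore reduces to $f(\tau_0,\sigma)$, where
\[
\tau_0=\frac{I\otimes I}{2(2j+1)},\qquad
\sigma=\tau_0\,(I+\varepsilon K),\qquad
K:=\sum_a J_a\otimes\sigma_a^T,\qquad
\varepsilon:=(2j+1)\beta=\frac{3}{2nj(j+1)}.
\]
Since $\tau_0$ is proportional to the identity, it commutes with $K$. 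Consequently $f=\operatorname{Tr}\bigl[\tau_0\sqrt{I+\varepsilon K}\bigr]$.

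\textbf{Expansion.} For $n$ large enough that $\varepsilon\|K\|<1$ (note $\|K\|=O(j)$), expand $\sqrt{1+x}=1+\tfrac x2-\tfrac{x^2}{8}+O(x^3)$. The linear term vanishes because $\operatorname{Tr}K=0$. For the quadratic term,
\[
\operatorname{Tr}K^2=\sum_{a,b}\operatorname{Tr}(J_aJ_b)\operatorname{Tr}(\sigma_a\sigma_b)=2j(j+1)(2j+1),
\]
using $\operatorname{Tr}(J_aJ_b)=\delta_{ab}j(j+1)(2j+1)/3$. This gives $\operatorname{Tr}(\tau_0K^2)=j(j+1)$, so
\[
f=1-\frac{\varepsilon^2 j(j+1)}{8}+O_j(\varepsilon^3)=1-\frac{9}{32n^2j(j+1)}+O_j(n^{-3}).
\]

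The only delicate point is controlling the remainder. Because the relevant operators commute, this reduces to a scalar Taylor bound on the spectrum of $\varepsilon K$, which is uniformly $O(1/n)$. The cubic and higher terms are therefore $O_j(n^{-3})$.
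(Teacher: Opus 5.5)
Your proof is correct and follows the paper's route. You reduce to the maximally entangled input via Lemma~\ref{fidelity_opt}, use that the constant-channel output is proportional to the identity, and expand $\operatorname{Tr}\sqrt{\cdot}$ to second order. Your $K=\sum_a J_a\otimes\sigma_a^T$ is exactly the paper's $2\times 2$ block operator with entries $\pm J_z$, $J_+$, $J_-$, written in Pauli form, and bounding the remainder through $\|K\|=j+1$ makes the $O_j(n^{-3})$ term slightly more explicit than the paper does.
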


\begin{proof}
The single-qudit erasure noise channel is defined as follows
$
    \widehat{\mathcal{N} \circ \mathcal{E}}(\rho)=\sum_i p_i|i\rangle\langle i|_{F_E} \otimes \rho^{(i)}(\rho) .
$
For our choice of code, all $\rho^{(i)}(\sigma)$ (\ref{eq:reducedStateSU2}) can safely identified to be the same, so
$$
\widehat{\mathcal{N} \circ \mathcal{E}}(\rho)=\omega_{\text{flag}} \otimes \left(\frac{I_{V_j}}{2 j+1}+\frac{3}{2 n j(j+1)(2 j+1)} \sum_a r_a J_a^{(1)}\right),
$$
where
$
\omega_{\text{flag}}=\sum p_i|i\rangle\langle i|_{F_E}.$ From now on, we omit index $1$ for brevity, 
but one should not confuse these local operators with global operators defined in Eq.~\eqref{eq:reducedStateSU2} -- we will not use them in this proposition. 
We define a constant channel $\Lambda$ to be the channel that maps any state to the chaotic state $\frac{I_{V_j}}{2 j+1}$, so we have
\[
    \Lambda_0(\rho)=\operatorname{Tr}(\rho)\omega_{\text{flag}}\otimes \frac{I_{V_j}}{2 j+1} \equiv \operatorname{Tr}(\rho)\cdot\omega_{\text{flag}}\otimes\tau.
\]
It is obvious that the flag-state $\omega_{\text{flag}}$ doesn't influence the computation of the fidelity, so we drop it from now on. 
From Lemma~\ref{fidelity_opt} we know that the optimal state for computing worst-case entanglement fidelity is the chaotic state $\frac{I_{{V_{1/2}}}}{2}$, 
which purification is the maximally entangled state $\ket{\Psi}=\frac{1}{\sqrt{2}}(|00\rangle+|11\rangle)$. 
Now we are to compute the actions of channels $\widehat{\mathcal{N} \circ \mathcal{E}}$ and $\Lambda_0$ on the $\ket{\Psi}\bra{\Psi}$. 
We first express basis operators $\ket{i}\bra{j}$ in terms of Pauli matrices
\[
    \ket{0}\bra{0}=\frac{I}{2} + \frac{1}{2} \sigma_z, \quad \ket{1}\bra{1}=\frac{I}{2} - \frac{1}{2} \sigma_z, \quad \ket{0}\bra{1}=\frac{1}{2} \sigma_x + \frac{i}{2} \sigma_y, \quad \ket{1}\bra{0}=\frac{1}{2} \sigma_x - \frac{i}{2} \sigma_y
\]
And, using notation $\tau:=\frac{I_{V_j}}{2j+1},$ and computation of reduced state $\rho^{(i)}\left(\rho_L\right)$ from Proposition~\ref{prop:SU2cov}

\[
    \widehat{\mathcal{N} \circ \mathcal{E}}(\ket{0}\bra{0})=\tau+\beta J_z, \quad \widehat{\mathcal{N} \circ \mathcal{E}}(\ket{1}\bra{1})=\tau-\beta J_z, \quad \widehat{\mathcal{N} \circ \mathcal{E}}(\ket{0}\bra{1})=\beta J_+, \quad \widehat{\mathcal{N} \circ \mathcal{E}}(\ket{1}\bra{0})=\beta J_-
\]
where we denoted $\beta:=\frac{3}{2 n j(j+1)(2 j+1)}$, gives us the action on the purification of the logical state
\[
    \begin{gathered}
    \sigma_{1/2}:=(\widehat{\mathcal{N} \circ \mathcal{E}} \otimes \mathrm{id})(\ket{\Psi}\bra{\Psi})=\frac{1}{2}\left(\tau+\beta J_z\right) \otimes|0\rangle\langle 0|+\frac{1}{2}\left(\tau-\beta J_z\right) \otimes|1\rangle\langle 1| \\
    +\beta \frac{1}{2}\left(J_{+} \otimes|0\rangle\langle 1|+J_{-} \otimes|1\rangle\langle 0|\right) .
    \end{gathered}
\]
The constant channel $\Lambda_0$ maps any state to the chaotic state $\tau$, so we have
\[
    \eta_{1/2}=(\Lambda_0 \otimes \mathrm{id})(\ket{\Psi}\bra{\Psi})=\tau \otimes \frac{I_2}{2}
\]

Now let's express 
$
    f(\eta_{1/2},\sigma_{1/2}) = \operatorname{Tr} \sqrt{\sqrt{\eta_{1/2}} \sigma_{1/2} \sqrt{\eta_{1/2}}} .
$
Direct computations result in the following
\[ 
\begin{gathered}
    \sqrt{\eta_{1/2}} \sigma_{1/2} \sqrt{\eta_{1/2}}=\frac{1}{4(2 j+1)}\begin{pmatrix}
\left(\tau+\beta J_z\right) & \beta J_+ \\
\beta J_- & \left(\tau-\beta J_z\right)
    \end{pmatrix}.
 \end{gathered}
\]  
which leads us to the following expression for fidelity
\[
\begin{gathered}
    f(\eta_{1/2},\sigma_{1/2})=\operatorname{Tr} \sqrt{ \frac{1}{4(2 j+1)}\begin{pmatrix}
\left(\tau+\beta J_z\right) & \beta  J_+ \\
\beta  J_- & \left(\tau-\beta J_z\right)
    \end{pmatrix}} =\\
    = \frac{1}{2\sqrt{(2 j+1)}}\operatorname{Tr} \sqrt{ \begin{pmatrix}
\tau & 0 \\
0 & \tau
    \end{pmatrix}+\beta\begin{pmatrix}
J_z & J_+ \\
J_- & -J_z
    \end{pmatrix}}=\\
    =\frac{1}{2 (2j+1)} \operatorname{Tr}\left[I_{2(2j+1)}+\frac{\beta (2j+1)}{2} \begin{pmatrix}
J_z & J_+ \\
J_- & -J_z
    \end{pmatrix}-\frac{\beta^2 (2j+1)^2}{8} \begin{pmatrix}
J_z & J_+ \\
J_- & -J_z
    \end{pmatrix}^2+O_{j}\left(\beta^3\right)\right] =\\ =
    1+\frac{\beta}{4} \operatorname{Tr} \begin{pmatrix}
J_z & J_+ \\
J_- & -J_z
    \end{pmatrix}-\frac{\beta^2 (2j+1)}{16} \operatorname{Tr} \begin{pmatrix}
J_z & J_+ \\
J_- & -J_z
    \end{pmatrix}^2+O_{j}\left(\beta^3\right)=\\=1-\frac{(2 j+1)^2 j(j+1)}{8} \beta^2+O_{j}\left(\beta^3\right)
     \end{gathered}
\]
Substituting $\beta=\frac{3}{2 n j(j+1)(2 j+1)}$ we finally get
\[
     F(\widehat{\mathcal{N} \circ \mathcal{E}}, \Lambda_0)=F_{I/2}(\widehat{\mathcal{N} \circ \mathcal{E}}, \Lambda_0)=f(\eta_{1/2},\sigma_{1/2})=1-\frac{9}{32 n^2 j(j+1)}+O_{j}\left(n^{-3}\right),
\]
which completes the proof.
\end{proof}

\section{\texorpdfstring{$SU(d)$}{SU(d)}-covariant codes}\label{sec:supple_sud}

As an auxiliary object for proof of supporting lemmas, we will use the Lie group $GL_d(\mathbb{C})$ and its representations. 
There is a well-known one-to-one correspondence between irreducible representations of $GL_d(\mathbb{C})$ and partitions $\lambda$~\cite{FultonHarris1991}; we will
denote the representation corresponding to partition $\lambda$ as $S_\lambda V$, where $V\cong \mathbb{C}^d$.
Let \(\{e_1,\ldots,e_d\}\) be the standard basis of \(\mathbb C^d\). Consider a vector space
$
    \bigwedge^r \mathbb C^d.
$
defined by the basis vectors
$
    e_I:=e_{i_1}\wedge\cdots\wedge e_{i_r},
    \qquad
    1\le i_1<\cdots<i_r\le d,
$
so $\dim \bigwedge^r \mathbb C^d=\binom dr$. This vector space can be turned into a representation of $GL_d(\mathbb{C})$ in the following way.
The action of \(g\in GL_d(\mathbb{C})\) is induced from the defining action on
\(\mathbb C^d\):
\begin{equation}\label{eq:action_on_extirior_power}
    g\cdot(v_1\wedge\cdots\wedge v_r)
    =
    (gv_1)\wedge\cdots\wedge(gv_r).
\end{equation}
Equivalently, for \(X\in\mathfrak{gl}_d(\mathbb{C})\), the infinitesimal action is
\[
    X\cdot(v_1\wedge\cdots\wedge v_r)
    =
    \sum_{m=1}^r
    v_1\wedge\cdots\wedge Xv_m\wedge\cdots\wedge v_r.
\]
In fact, one can check that $\bigwedge^r \mathbb C^d\cong S_{\left(1^r\right)}\left(\mathbb{C}^d\right)$, 
which means that $\bigwedge^r \mathbb C^d$ is an irreducible representation.
Notice that $\bigwedge^r \mathbb C^d$ can also be considered as a representation of $GL_d(\mathbb{C})$ -- it is irreducible and has dimension $\binom{d}{r}$.
We use the following notation
\[
    \operatorname{det} V:=\Lambda^d V,
\]
to emphasize the fact that $\Lambda^d V$ is the one-dimensional representation of $GL_d(\mathbb{C})$.

\begin{smremark}
    From now on, we will use $\mathcal{V}$ and $\C^d$ interchangeably.
\end{smremark}

\begin{smdefinition}\label{def:rfundSUd}
    We define $r$-fundamental representation of $SU(d)$ as following
    \[
        V_{\omega_r}:=\bigwedge^r \mathbb C^d
    \]
    where action of $SU(d)$ is defined by restriction of the action of $GL_d(\mathbb{C})$ defined in (\ref{eq:action_on_extirior_power}).
\end{smdefinition}

\begin{smremark}\label{rem:fundamental_weights}
    Notation $V_{\omega_r}$ is not coincidental. It is known that irreducible representations of $SU(d)$ are in one-to-one correspondence with \textit{dominant integral weights} $\mu=a_1 \omega_1+\cdots+a_{d-1} \omega_{d-1}, \quad a_i \in \mathbb{Z}_{\geq 0}$,
    where $\omega_i$ are the so-called \textit{fundamental weights} of $SU(d)$~\cite{Hall2015}.
    We will not go into details of this correspondence and the nature of fundamental weights. So, strictly speaking, $V_{\omega_r}\cong \bigwedge^r \mathbb C^d$.  
    But we will use the following fact about the correspondence between $S_\lambda V$ -- irreducible representations of $GL_d(\mathbb{C})$ -- and $V_\lambda$ -- irreducible representations of $SU(d)$:
    if $\left(a_1, \ldots, a_{d-1}\right)=\left(\lambda_1-\lambda_2, \lambda_2-\lambda_3, \ldots, \lambda_{d-1}-\lambda_d\right)$, then $S_\lambda V$ restricts to $V_\mu$ as representation of $SU(d)$, where $\mu=a_1 \omega_1+\cdots+a_{d-1} \omega_{d-1}$.
    This means that, for example, $S_{(1, 0^{d-1})}$ and $S_{(a+1, a^{d-1})}$ will both restrict to $V_{\omega_1}$, and $\omega_r$ generally corresponds to partition $(1^r, 0^{d-r})$.
    Finally, one should not confuse $\lambda$ -- partition, used to label representations of $GL_d(\mathbb{C})$, namely $S_\lambda V$ -- with $\lambda$ -- dominant integral weight, used to label representations of $SU(d)$, namely $V_\lambda$. It will be clear from the context which one we are talking about.
\end{smremark}

For clarification of notation, see the Nomenclature section.

\begin{smremark}
    For more information about $GL_d(\mathbb{C})$ representations see, for example,~\cite{FultonHarris1991}.
    All standard symmetric functions facts and objects -- symmetric functions definitions, Kostka numbers, involution $\omega$, Hall inner product, Pieri's formula, etc. -- can be found in~\cite{Macdonald1995}.
\end{smremark}

\begin{smlemma}\label{lem:fundamental_in_extirior_power}
    $SU(d)$ representation $\left(V_{\omega_r}\right)^{\otimes n}$, contains 
    at least one copy of the fundamental representation, i. e. $V_{\omega_1}$, iff $nr\equiv 1 \pmod{d}$.
\end{smlemma}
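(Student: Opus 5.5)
Our plan is to pass to $GL_d(\mathbb C)$, where the $r$-th fundamental representation is the Schur functor $\bigwedge^r V = S_{(1^r)}V$. Then $(V_{\omega_r})^{\otimes n}=(\bigwedge^r V)^{\otimes n}$ is a polynomial representation of $GL_d(\mathbb C)$ of total degree $nr$. It therefore decomposes as $\bigoplus_{\lambda\vdash nr,\ \ell(\lambda)\le d} S_\lambda V^{\oplus c_\lambda}$. By Remark~\ref{rem:fundamental_weights}, $S_\lambda V$ restricts to $V_{\omega_1}$ exactly when $\lambda_1-\lambda_2=1$ and $\lambda_2=\cdots=\lambda_d=a$, that is, when $\lambda=(a+1,a^{d-1})$ for some $a\ge 0$. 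Restriction to $SU(d)$ preserves irreducibility, so the question reduces to whether some $S_{(a+1,a^{d-1})}V$ occurs in $(\bigwedge^rV)^{\otimes n}$.

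\emph{Necessity.} Such a partition has size $ad+1$. Since every constituent has size $nr$, occurrence forces $nr=ad+1$, hence $nr\equiv1\pmod d$. The same conclusion follows without $GL_d$: the centre $\mathbb Z_d=\{\zeta I\}$ of $SU(d)$ acts on $\bigwedge^r\mathbb C^d$ by $\zeta^r$, hence on the tensor power by $\zeta^{nr}$, and on $V_{\omega_1}$ by $\zeta$. A copy of $V_{\omega_1}$ therefore requires $\zeta^{nr}=\zeta$ for all $d$-th roots of unity $\zeta$. We will present this central-character argument as the cleanest route.

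\emph{Sufficiency.} Assume $nr=ad+1$. We must show that the multiplicity of $S_{(a+1,a^{d-1})}$ in $e_r^n$ (in symmetric-function language) is positive. By iterating the dual Pieri rule, this multiplicity is the number of sequences of shapes $\emptyset=\mu^0\subset\mu^1\subset\cdots\subset\mu^n=(a+1,a^{d-1})$ in which each skew shape $\mu^{k}/\mu^{k-1}$ is a vertical strip of size $r$ and all shapes have at most $d$ rows. Equivalently, it is the Kostka number $K_{\lambda',(r^n)}$ for the conjugate shape. This is positive whenever $\lambda'\unrhd(r^n)$, and that dominance condition is equivalent to $\lambda\unlhd (n^r)'$, i.e.\ $\lambda$ is dominated by the partition with $r$ columns of length $n$ rearranged. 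A more concrete route is to fill the target shape greedily. Write the $nr$ cells of $\lambda=(a+1,a^{d-1})$ in column-reading order: first the $d$ cells of column 1, then column 2, and so on, with the single cell of column $a+1$ last. Then cut this sequence into $n$ consecutive chunks of $r$ cells. Because $r\le d-1<d$, each chunk meets at most two consecutive columns, taking the bottom part of one column and the top part of the next, and it has at most one cell in each row.

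The step we expect to be the main obstacle is checking that every intermediate shape is a genuine partition and that each added chunk is a vertical strip. Adding cells top to bottom in each column, column by column, keeps every intermediate set a Young diagram, since a partial column sits to the right of full columns. A chunk spanning the end of column $c$ (rows $i,\dots,d$) and the start of column $c+1$ (rows $1,\dots,j$) is a vertical strip precisely when $j<i$, so that no row receives two cells. This holds because the chunk has $(d-i+1)+j=r\le d-1$ cells, which gives $j\le i-2<i$. The final chunk ends with the lone cell in row $1$ of column $a+1$, and the same inequality applies there. Hence at least one Littlewood--Richardson/Pieri path exists, the multiplicity is at least one, and $(V_{\omega_r})^{\otimes n}$ contains $V_{\omega_1}$, which proves the lemma.
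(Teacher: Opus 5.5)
Your argument is correct. The reduction matches the paper's. You pass to $GL_d(\mathbb C)$ and use Remark~\ref{rem:fundamental_weights} to see that the constituents restricting to $V_{\omega_1}$ are exactly the $S_{(a+1,a^{d-1})}V$. You then identify the multiplicity of such a constituent in $e_r^n$ with $K_{(d^a,1),(r^n)}$. The two halves of the proof then go differently from the paper.

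\emph{Necessity.} The paper uses only the degree count $|\nu|=nr$. Your central-character argument ($\zeta I$ acts by $\zeta^{nr}$ on the tensor power and by $\zeta$ on $V_{\omega_1}$) is cleaner, because it works entirely inside $SU(d)$. It does not need the classification of which partitions restrict to $V_{\omega_1}$. The degree count implicitly relies on that classification to account for every copy of $V_{\omega_1}$.

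\emph{Sufficiency.} The paper converts $e_r^n$ to $h_r^n$ with the involution $\omega$, then applies the dominance criterion for positivity of Kostka numbers, checking $(d^q,1)\unrhd(r^n)$ by partial sums. You instead write down one explicit chain of vertical $r$-strips, obtained by cutting the column-reading order of $(a+1,a^{d-1})$ into consecutive blocks. Your inequality $j=r-(d-i+1)\le i-2$ is exactly what guarantees that a block straddling two columns uses distinct rows.

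Your route is more elementary and self-contained: it proves the needed instance of the dominance criterion by exhibiting a single semistandard tableau. The paper's route instead names the full multiplicity $K_{(d^q,1),(r^n)}$, rather than only bounding it below by one.

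There is one slip in your side remark. Conjugation reverses dominance, so $\lambda'\unrhd(r^n)$ is equivalent to $\lambda\unlhd(n^r)$, the partition with $r$ rows of length $n$. It is not equivalent to $\lambda\unlhd(n^r)'=(r^n)$, which generally fails, for instance whenever $a+1>r$. Your explicit construction never uses this restatement, so you should simply correct or drop it.
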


\begin{proof}
    We first work in the context of $GL_d(\mathbb{C})$ representation theory and then 
    restrict ourselves to representations of $SU(d)$. The reason is the existence of
    convenient tools for $GL_d(\mathbb{C})$ representations, namely, Schur polynomials, which have an established combinatorial framework. 
    
    We will first consider multiplicity of representation $S_{\left(q+1, q^{d-1}\right)} V$ in $\left(S_{\left(1^r\right)}V\right)^{\otimes n}$, where $q$ is a fixed natural number. 
    The reason we consider $S_{\left(q+1, q^{d-1}\right)} V$ is that this representation will restrict to the fundamental representation of $SU(d)$, namely $V_{\omega_1}$, and $\left(S_{\left(1^r\right)}V\right)^{\otimes n}$ -- to the physical representation of $SU(d)$, namely $(V_{\omega_r})^{\otimes n}$, but we 
    will get to this point later in the proof. We first prove that this multiplicity can be described by the Kostka number, i. e.
    \begin{equation}\label{eq:KostkaMultiplicity}
        \operatorname{dim} \operatorname{Hom}_{GL_d(\mathbb{C})}\left(S_{\left(q+1, q^{d-1}\right)} V,\left(S_{\left(1^r\right)}V\right)^{\otimes n}\right)=K_{\nu', \mu}=K_{\left(d^q, 1\right),\left(r^n\right)},
    \end{equation}
    where $\nu=\left(q+1, q^{d-1}\right)$ is the Young diagram of the representation $S_{\left(q+1, q^{d-1}\right)} V$, and $\mu=\left(r^n\right)$ is the Young diagram of $\left(S_{\left(1^r\right)}V\right)^{\otimes n}$, and prime
    denotes transpose of the Young diagram (one can easily check that $\left(q+1, q^{d-1}\right)' = \left(d^q, 1\right)$), for some natural number $q$.

    Indeed, characters of $GL_d(\mathbb{C})$ representations are given by Schur polynomials. 
    In our case since $S_{\left(1^r\right)}V \leftrightarrow s_{\left(1^r\right)}=e_r$, then
    $
        \left(S_{\left(1^r\right)}V\right)^{\otimes n} \leftrightarrow\left(e_r\right)^n=s_{\left(1^r\right)}^n,
    $
    where $(\cdot)^n$ corresponds to product of functions. It is hard to expand $\left(e_r\right)^n$
    directly in Schur basis, but we can use Pieri's famous formula from which we get that
    $
        \mathrm{h}_\lambda=\sum_\mu K_{\mu \lambda} \mathrm{s}_\mu,
    $
    and the fact that 
    $
        \omega\left(h_r\right)=e_r,
    $
    where $\omega$ is the involution on symmetric functions defined by $\omega\left(s_\lambda\right)=s_{\lambda'}$. Finally, it is easy to check
    $
        h_{\left(r^n\right)}=h_r h_r \cdots h_r=h_r^n,
    $
    which leads to
    \[
        h_{\left(r^n\right)}=h_r^n=\sum_\lambda K_{\lambda,\left(r^n\right)} s_\lambda.
    \]
    Since multiplicity of $S_\nu V$ in $\left(S_{\left(1^r\right)}V\right)^{\otimes n}$ is given by
    $
        \left\langle e_r^n, s_\nu\right\rangle,
    $
    where $\langle\cdot, \cdot\rangle$ is the Hall inner product with respect to which Schur polynomials form an orthonormal basis $\left\langle s_\lambda, s_\mu\right\rangle=\delta_{\lambda \mu}$, we have
    \[
        \left\langle e_r^n, s_\nu\right\rangle=\left\langle\omega\left(h_r^n\right), s_\nu\right\rangle=\left\langle h_r^n, \omega\left(s_\nu\right)\right\rangle=\left\langle h_r^n, s_{\nu^{\prime}}\right\rangle=K_{\nu', (r^n)},
    \]
    where second equality uses the fact that $\omega$ is self-adjoint with respect to the Hall inner product (indeed, 
    $\left\langle\omega\left(s_\lambda\right), \omega\left(s_\mu\right)\right\rangle=\left\langle s_{\lambda^{\prime}}, s_{\mu^{\prime}}\right\rangle=\delta_{\lambda^{\prime}, \mu^{\prime}}=\delta_{\lambda, \mu}=\left\langle s_\lambda, s_\mu\right\rangle$
    and $\langle\omega(f), g\rangle=\left\langle\omega^2(f), \omega(g)\right\rangle=\langle f, \omega(g)\rangle$).
    
    Now if we take $\nu=\left(q+1, q^{d-1}\right)$, then we obtain desired equality (\ref{eq:KostkaMultiplicity}).
    
    Now let's prove that $K_{\left(d^q, 1\right),\left(r^n\right)} > 0$ iff $n r=d q+1$.
    
    First recall standard fact about Kostka numbers: $K_{\lambda, \mu} \neq 0$ iff $\lambda$ dominates $\mu$, i. e. $\lambda \trianglerighteq \mu$.  
    
    Let's prove necessity. If $S_{\left(q+1, q^{d-1}\right)} V$ occurs inside $\left(S_{\left(1^r\right)}V\right)^{\otimes n}$, then with necessity 
    $
        |\nu|=nr,
    $
    and since $\nu=\left(q+1, q^{d-1}\right)$, then $|\nu|=(q+1)+(d-1) q=d q+1$, from which we get
    $
        n r=d q+1,
    $
    so if $K_{\left(d^q, 1\right),\left(r^n\right)} > 0$, then $n r=d q+1$. Necessity is proven.
    
    Let's prove sufficiency. 
    For $\lambda=\left(d^q, 1\right)$ and $\mu=\left(r^n\right)$, the dominance condition is satisfied if $n r=d q+1$. 
    Indeed, for $t \leq q$, we have
    $
    t d \geq t r,
    $
    because $r \leq d-1$; if $q+1 \leq n$, then
    $
    q d+1=r n \geq(q+1) r,
    $
    since $(q+1) r \leq n r$. So, if $n r=d q+1$, the dominance condition is satisfied, meaning that $K_{\left(d^q, 1\right),\left(r^n\right)} > 0$. Sufficiency is proven.

    Thus, we've proven that $S_{\left(q+1, q^{d-1}\right)} V$ is contained in $\left(S_{\left(1^r\right)}\left(\mathcal{V}\right)\right)^{\otimes n}$ with multiplicity $K_{\left(d^q, 1\right),\left(r^n\right)}>0$ iff 
    $n r=d q+1$ -- all as representations of $GL_d(\mathbb{C})$.
    
    Finally, let's restrict ourselves to $SU(d)$ representations. To be precise, 
    \[
        S_{\left(1^r\right)} V\downarrow_{SU(d)}\cong V_{\omega_r}, \quad S_{\left(q+1, q^{d-1}\right)} V\downarrow_{SU(d)} \cong V_{\omega_1},
    \]
    from which we conclude that $V_{\omega_1}$ is contained in $\left(V_{\omega_r}\right)^{\otimes n}$ with non-zero multiplicity iff $nr\equiv 1 \pmod{d}$. 
    This completes our proof.
    
\end{proof}

\begin{figure}[htbp]
    \centering
    \IfFileExists{figs/Young_diagrams.pdf}{%
        \includegraphics[width=0.4\textwidth]{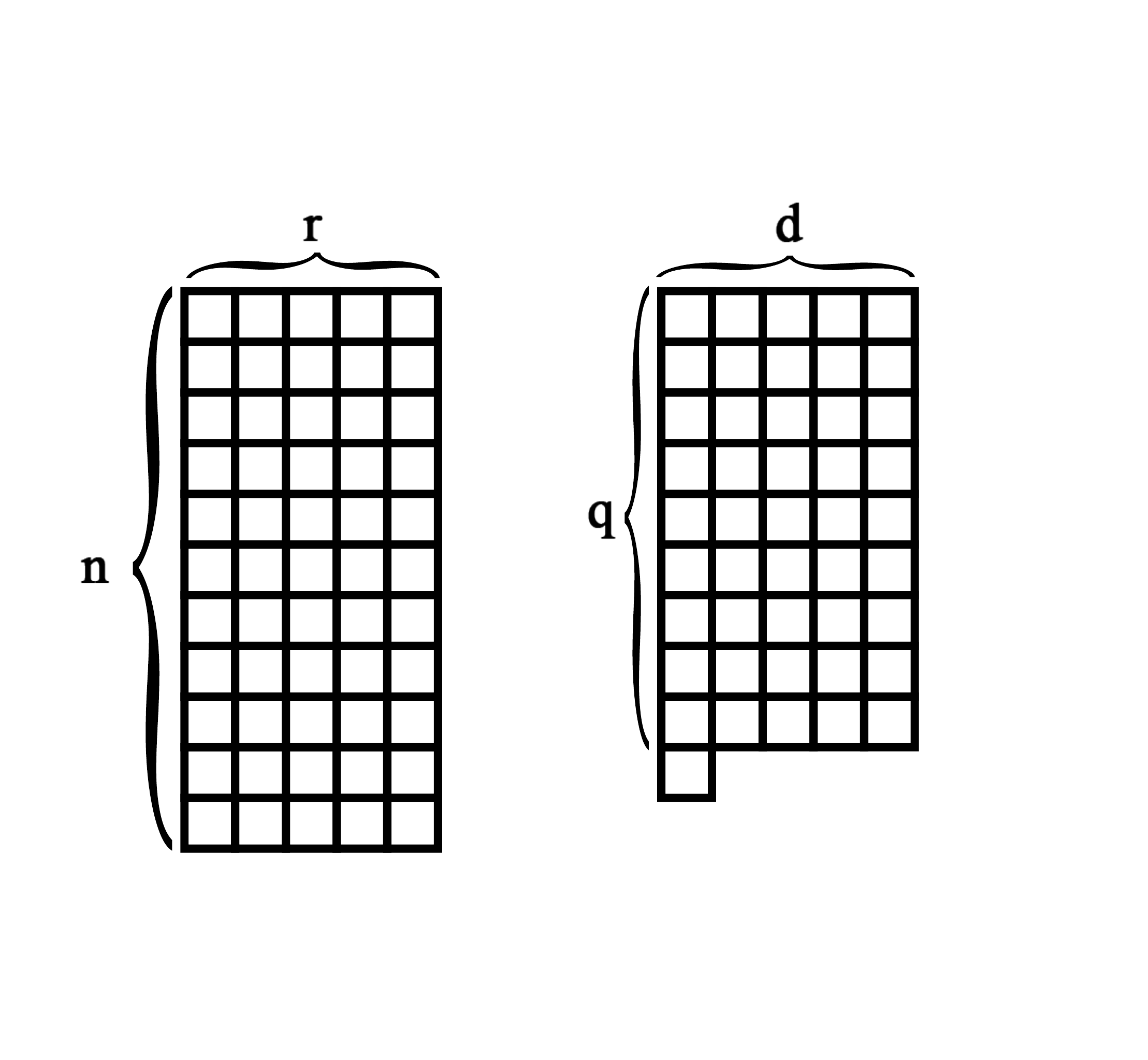}%
    }{%
        \fbox{\rule{0pt}{2in}\rule{0.7\textwidth}{0pt}}
    }
    \caption{Young diagrams of the representation $S_{\left(q+1, q^{d-1}\right)} V$ and the representation $\left(S_{\left(1^r\right)}V\right)^{\otimes n}$, where $n r=d q+1$.}
    \label{fig:Young_diagrams}
\end{figure}

\begin{smlemma}\label{lem:adjoint_in_end}
    $SU(d)$ representation $\operatorname{End}(V_{\omega_r})$, contains 
    exactly one copy of the adjoint and trivial representations.
\end{smlemma}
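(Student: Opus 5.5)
We use $\operatorname{End}(V_{\omega_r})\cong V_{\omega_r}\otimes V_{\omega_r}^*$ and compute the multiplicities of the trivial and adjoint representations. These multiplicities are $\dim\operatorname{Hom}_{SU(d)}(V_0,V_{\omega_r}\otimes V_{\omega_r}^*)$ and $\dim\operatorname{Hom}_{SU(d)}(V_{\omega_1+\omega_{d-1}},V_{\omega_r}\otimes V_{\omega_r}^*)$. The trivial multiplicity is immediate from Schur's lemma: $\operatorname{Hom}_{SU(d)}(V_{\omega_r},V_{\omega_r})=\mathbb C\cdot I$, because $V_{\omega_r}$ is irreducible (Definition~\ref{def:rfundSUd}). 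So all the work is in the adjoint.

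For the adjoint we follow the same $GL_d(\mathbb C)$ and symmetric-function strategy used in Lemma~\ref{lem:fundamental_in_extirior_power}. The dual satisfies $V_{\omega_r}^*\cong\bigwedge^{d-r}\mathbb C^d\otimes(\det V)^{-1}$, so up to a determinant twist, which is invisible after restriction to $SU(d)$, we need to decompose $\bigwedge^r V\otimes\bigwedge^{d-r}V$. Its character is $e_re_{d-r}$. By the dual Pieri rule, $s_{(1^r)}\cdot e_{d-r}$ is the sum of $s_\lambda$ over $\lambda$ obtained from the column $(1^r)$ by adding a vertical strip of $d-r$ boxes, each with multiplicity one. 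With at most $d$ rows, the shapes that occur are $\lambda=(2^{j},1^{d-2j})$ for $j=0,\dots,\min(r,d-r)$, all of multiplicity one.

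Next we identify their $SU(d)$ restrictions using Remark~\ref{rem:fundamental_weights}. The shape $\lambda=(2^j,1^{d-2j})$ has highest weight $\omega_j+\omega_{d-j}$ after the determinant shift (with $\omega_0=\omega_d=0$). Thus $j=0$ gives $(1^d)$, which is the trivial representation; this recovers the Schur count. The case $j=1$ gives $(2,1^{d-2})$, which restricts to $\omega_1+\omega_{d-1}$, the adjoint. This shape appears exactly when $\min(r,d-r)\ge1$, which holds since $1\le r\le d-1$. Distinct $j$ give distinct weights, so no other summand is trivial or adjoint. Hence each occurs exactly once.

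The main obstacle is bookkeeping rather than any conceptual difficulty. We must check that the vertical-strip enumeration with the row bound $\ell(\lambda)\le d$ really yields only two-column shapes, each with multiplicity one. We must also track the determinant twist correctly, so that $(2,1^{d-2})$ and not some other shape is identified with the adjoint. As a sanity check we will verify the dimension count $\sum_j\dim V_{\omega_j+\omega_{d-j}}=\binom dr^2$ for small cases, for example $d=4$, $r=2$: $1+15+20=36$.
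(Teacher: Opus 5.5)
Your proposal is correct and follows essentially the same route as the paper: rewrite $\operatorname{End}(V_{\omega_r})$ as $(\det V)^{-1}\otimes \bigwedge^r V\otimes\bigwedge^{d-r}V$, decompose it with the (dual) Pieri rule into the multiplicity-free shapes $(2^j,1^{d-2j})$, and identify $j=0$ and $j=1$ with the trivial and adjoint representations. Your additions make explicit two things the paper leaves implicit: you get the trivial count directly from Schur's lemma, and you check that the weights $\omega_j+\omega_{d-j}$ are pairwise distinct, so no two summands become isomorphic after restricting to $SU(d)$. The paper instead identifies the adjoint by decomposing $V\otimes V^*$ with Pieri.
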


\begin{proof}
As in previous proof, we first work in the context of $GL_d(\mathbb{C})$ representations and then restrict ourselves to $SU(d)$ representations.
Since $\Lambda^r V \cong V_{\omega_r}$ as $SU(d)$ representations, it is reasonable to consider $\Lambda^r V \cong S_{\left(1^r\right)}(V)$ as $GL_d(\mathbb{C})$ representation.
First, notice that
$$
    \operatorname{End}(S_{\left(1^r\right)}(V))\cong S_{\left(1^r\right)}(V) \otimes (S_{\left(1^r\right)}(V))^* \cong (\det V)^{-1} \otimes S_{\left(1^r\right)} V \otimes S_{\left(1^{d-r}\right)} V.
$$
Now we apply Pieri's formula in the following form
\[
S_{\left(1^r\right)} V \otimes S_{\left(1^{d-r}\right)} V \cong \bigoplus_{i=0}^{\min (r, d-r)} S_{\left(2^i, 1^{d-2 i}\right)} V
\]
Indeed, starting from $(1^r)$ and multiplying by $(1^{d-r})$ we add a vertical strip of size $d-r$ to the first diagram (according to Pieri's formula). 
Since the first diagram is a single column itself, we have two options: add to an existing row, producing a row of length $2$, or
create a new row, producing a row of length $1$.
The resulting partition is 
$
\left(2^i, 1^{(r-i)+((d-r)-i)}\right)=\left(2^i, 1^{d-2 i}\right) .
$
Multiplicity representation, corresponding to each partition, is one -- again, according to Pieri's formula.
Thus,
\begin{equation}\label{eq:decompositionEnd}
    \operatorname{End}\left(S_{\left(1^r\right)}(V)\right) \cong \bigoplus_{i=0}^{\min (r, d-r)}(\operatorname{det} V)^{-1} \otimes S_{\left(2^i, 1^{d-2 i}\right)} V.
\end{equation}

Now let's find the expression of the adjoint representation as a polynomial $GL_d(\mathbb{C})$-module. On the one side, we have
\[
    \operatorname{End}(V)=V \otimes V^* \cong \mathbf{1} \oplus \mathrm{Ad}
\]
On the other side
\[
    V \otimes V^* \cong(\operatorname{det} V)^{-1} \otimes S_{\left(1\right)} V \otimes S_{\left(1^{d-1}\right)} V
\]
and using Pieri's formula again, we get
\[
    V \otimes V^* \cong (\operatorname{det} V)^{-1} \otimes \left(S_{\left(1^d\right)} V \oplus S_{\left(2,1^{d-2}\right)} V\right)=\mathbf{1} \oplus(\operatorname{det} V)^{-1} \otimes S_{\left(2,1^{d-2}\right)} V
\]
from which we derive
\[
    \mathrm{Ad} \cong(\operatorname{det} V)^{-1} \otimes S_{\left(2,1^{d-2}\right)} V
\]
and from decomposition (\ref{eq:decompositionEnd}) we conclude that
\[
    \operatorname{dim} \operatorname{Hom}_{GL_d(\mathbb{C})}\left(\mathrm{Ad}, \operatorname{End}\left(S_{\left(1^r\right)}(V)\right)\right)=1 .
\]
Finally, let's restrict ourselves to $SU(d)$ representations. To be precise, 
    \[
        S_{\left(1^r\right)}(V)\downarrow_{SU(d)}\cong V_{\omega_r}, \quad \mathrm{Ad}\downarrow_{SU(d)} \cong \mathrm{Ad},
    \]
From \ref{eq:decompositionEnd} we infer that there is exactly one copy of the trivial representation -- term $i=0$.
This completes our proof.
\end{proof}

We define generators of $\mathfrak{su}(d)$ (fundamental representation) in a standard way:
choose $t_a$ that are traceless Hermitian complex $d \times d$ matrices, where:
\begin{equation}\label{eq:sudgenerators}
t_a t_b=\frac{1}{2 d} \delta_{a b} I_d+\frac{1}{2} \sum_{c=1}^{d^2-1}\left(i f_{a b c}+d_{a b c}\right) t_c
\end{equation}
where the $f$ are the structure constants and are antisymmetric in all indices, while the $d$-coefficients are symmetric in all indices.
As a consequence,
$$
\left[t_a, t_b\right]=i \sum_{c=1}^{d^2-1} f_{a b c} t_c, \quad \left\{t_a, t_b\right\}=\frac{1}{d} \delta_{a b} I_d+\sum_{c=1}^{d^2-1} d_{a b c} t_c.
$$
With this definition conventional normalizations are
\begin{equation}\label{eq:sudnormalisation}
\operatorname{Tr}\left(t_a t_b\right)=\frac{1}{2} \delta_{a b}, \quad \sum_{c, e=1}^{d^2-1} d_{a c e} d_{b c e}=\frac{d^2-4}{d} \delta_{a b}.
\end{equation}
We will also use the following identities, which can be easily derived from the definition of generators:
\begin{equation}\label{eq:sudidentities}
    \sum_{m, n=1}^{d^2-1}f_{a m n} f_{b m n}=d \delta_{a b}, \quad \sum_{m, n=1}^{d^2-1}f_{a m n} d_{b m n}=0 , \quad \sum_{m=1}^{d^2-1}d_{a m m}=0 .
\end{equation}

We denote by $\{T_a\}$ the set of operators on $V_{\omega_r}$ that are representation of generators $\{t_a\}$ of $\mathfrak{su}(d)$, omitting the index $r$ for brevity.
Notice that $T_a$ are also traceless Hermitian complex matrices.
We will later need the following normalization 
\begin{equation}\label{eq:sudnormalisation_2}
\operatorname{Tr}_{V_{\omega_r}}\left(T_a T_b\right)=\kappa_{\omega_r} \delta_{a b}, \quad \kappa_{\omega_r}=\frac{1}{2}\binom{d-2}{r-1}.
\end{equation}

\begin{smproposition}\label{prop:SUdcov}
    Let
    $
        \mathcal H_L=V_{\omega_1},\;
        \mathcal H_P=(V_{\omega_r})^{\otimes n}
    $
    and $nr\equiv 1\;(\operatorname{mod} d)$.
    Equip \(\mathcal H_L\) with the fundamental representation of
    \(\mathfrak{su}(d)\), and equip \(\mathcal H_P\) with the transversal
    representation given by
    \begin{equation}
    \begin{gathered}
    t_a \rightarrow T_a \equiv \sum_{i=1}^n T_a^{(i)}, \quad a=1,\ldots,d^2-1
    \end{gathered}
    \end{equation}
    where $T_a^{(i)}$ is a representation of $t_a$ on the $i$-th physical space $V_{\omega_r}$, acting as in Eq.~\eqref{eq:action_on_extirior_power}.
    Then there exist a code space $\mathfrak{su}(d)$-covariant encoding $\mathcal{E}$ with respect to defined physical and logical representations, 
    such that the state seen by the environment after erasure of any single physical qudit is of the form   
    \[
        \rho^{(i)}\left(\rho_L\right)=\frac{I}{\operatorname{dim} V_{\omega_r}}+\frac{1}{n} \frac{1}{2\kappa_{\omega_r}} \sum_a r_a T_a^{(i)} .    
    \]
    Namely, this code space is one of the copies of the fundamental representation $V_{\omega_1}$ in $\left(V_{\omega_r}\right)^{\otimes n}$ 
    which is invariant under cyclic permutations of physical qudits.
\end{smproposition}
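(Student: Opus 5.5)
The proof runs parallel to that of Proposition~\ref{prop:SU2cov}, with Schur's lemma for $\mathfrak{su}(2)$ replaced by Lemma~\ref{lem:adjoint_in_end} and existence of the code space supplied by Lemma~\ref{lem:fundamental_in_extirior_power}. Since $nr\equiv 1\pmod d$, the multiplicity space $M_{\omega_1}\subseteq(V_{\omega_r})^{\otimes n}$ is nonzero. The cyclic shift $T$ commutes with the transversal $SU(d)$ action, so on the isotypic component it acts as $I_{V_{\omega_1}}\otimes\tau$, with $\tau$ unitary and $\tau^n=I$. We pick an eigenvector $\ket m$ of $\tau$ and set $\mathcal V=U_{\mathrm{Schur}}^\dagger(I_{V_{\omega_1}}\otimes\ket m)$. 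By Eq.~\eqref{eq:intertwiner_decomp} this is a covariant isometry, and it satisfies $T\mathcal V=e^{i\phi}\mathcal V$.

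The first step is to show $\mathcal V^\dagger T_a^{(i)}\mathcal V=\alpha^{(i)}\bar t_a$ with $\sum_i\alpha^{(i)}=1$.
\begin{itemize}
\item The operators $\{T_a^{(i)}\}_a$ transform under conjugation by $U_P(g)$ through the adjoint matrix $\mathrm{Ad}(g)_{ab}$.
\item Hence $K_a^{(i)}:=\mathcal V^\dagger T_a^{(i)}\mathcal V$ transforms the same way under $u(g)$. Each $K_a^{(i)}$ is traceless, since the trivial component of $\operatorname{End}(V_{\omega_1})$ cannot receive an adjoint-type tuple.
\item Writing $K_a^{(i)}=\sum_b M^{(i)}_{ab}\bar t_b$, the matrix $M^{(i)}$ commutes with $\mathrm{Ad}(g)$ for all $g$.
\item The adjoint representation is irreducible over $\mathbb C$, so $M^{(i)}=\alpha^{(i)}I$. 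The coefficients $\alpha^{(i)}$ are real by Hermiticity.
\item Summing over $i$ and using covariance, $\mathcal V^\dagger T_a\mathcal V=\bar t_a$, gives $\sum_i\alpha^{(i)}=1$.
\end{itemize}
Cyclic invariance then gives $\mathcal V^\dagger T_a^{(i+1)}\mathcal V=\mathcal V^\dagger T T_a^{(i)}T^\dagger\mathcal V=\mathcal V^\dagger T_a^{(i)}\mathcal V$, because the phases $e^{\pm i\phi}$ cancel. Therefore $\alpha^{(i)}=1/n$ for every $i$.

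The second step fixes the form of the reduced state. The linear extension $\Phi^{(i)}:\operatorname{End}(V_{\omega_1})\to\operatorname{End}(V_{\omega_r})$ is an intertwiner for the adjoint actions, because partial trace commutes with conjugation on the retained site. Decompose $\operatorname{End}(V_{\omega_1})\cong\mathbf 1\oplus\mathrm{Ad}$.
\begin{itemize}
\item The image of $I_d/d$ is $SU(d)$-invariant. By Lemma~\ref{lem:adjoint_in_end} (a unique trivial copy) it is a scalar, and trace preservation fixes it to $I/\dim V_{\omega_r}$.
\item The images $A_a:=\Phi^{(i)}(t_a)$ form an adjoint-type tuple. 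By Lemma~\ref{lem:adjoint_in_end} they lie in the unique adjoint copy inside $\operatorname{End}(V_{\omega_r})$, which is spanned by the $T_b^{(i)}$.
\item Repeating the commuting-matrix argument gives $A_a=\beta^{(i)}T_a^{(i)}$.
\end{itemize}
This yields Eq.~\eqref{eq:reduced_state_sud}.

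The third step determines $\beta^{(i)}$ by pairing with $T_b^{(i)}$. On one side, $\operatorname{Tr}(\rho^{(i)}(\rho_L)T_b^{(i)})=\beta^{(i)}r_b\kappa_{\omega_r}$, using $\operatorname{Tr}T_b=0$ and the normalization Eq.~\eqref{eq:sudnormalisation_2}. On the other side,
\[
\operatorname{Tr}\bigl(\rho_L\,\mathcal V^\dagger T_b^{(i)}\mathcal V\bigr)=\alpha^{(i)}\sum_a r_a\operatorname{Tr}(t_at_b)=\alpha^{(i)}r_b/2 .
\]
Equating the two gives $\beta^{(i)}=\alpha^{(i)}/(2\kappa_{\omega_r})=1/(2\kappa_{\omega_r}n)$, which is the claimed formula.

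The main obstacle I expect is the normalization constant $\kappa_{\omega_r}=\tfrac12\binom{d-2}{r-1}$. The Casimir route gives it directly. The Casimir of $\omega_r$ is $C_2=r(d-r)(d+1)/(2d)$, and summing $\operatorname{Tr}(T_aT_a)$ over $a$ gives $(d^2-1)\kappa=\binom dr C_2$. Solving for $\kappa$ produces $\tfrac12\binom{d-2}{r-1}$. As a cross-check, one can instead compute $\operatorname{Tr}(T_3T_3)$ for a diagonal Cartan generator by counting the $r$-subsets that contain a given index.
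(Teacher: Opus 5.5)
Your proposal is correct and follows essentially the same route as the paper: covariance together with Lemma~\ref{lem:adjoint_in_end} fixes $\rho^{(i)}$ up to one scalar $\beta^{(i)}$, and Schur's lemma gives $\mathcal V^\dagger T_a^{(i)}\mathcal V=\alpha^{(i)}\bar t_a$ with $\sum_i\alpha^{(i)}=1$; the Hilbert--Schmidt pairing then yields $\beta^{(i)}=\alpha^{(i)}/(2\kappa_{\omega_r})$, and choosing a cyclic-shift eigenvector in $M_{\omega_1}$ forces $\alpha^{(i)}=1/n$. The only differences are the order of these steps and your Casimir derivation of $\kappa_{\omega_r}=\tfrac12\binom{d-2}{r-1}$, which the paper does not derive but simply states as the normalization \eqref{eq:sudnormalisation_2}.
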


\begin{proof}
We recall that from Schur lemma $\left(V_{\omega_r}\right)^{\otimes n}\cong \bigoplus_{\lambda\in \Lambda} V_\lambda\otimes M_{\lambda}$, 
where $\Lambda$ is a set of irreducible representations of $SU(d)$, 
$V_\lambda$ is the irreducible representation space and $M_\lambda$ is the multiplicity space.

In Lemma~\ref{lem:fundamental_in_extirior_power} we prove that for the choice of individual physical space to be $V_{\omega_r}$ and $n r \equiv 1 (\bmod d)$, it is guaranteed 
that $\operatorname{dim} M_{V_{\omega_1}}>0$.

The linear extension $\Phi^{(i)}:\operatorname{End}(V_{\omega_1})\rightarrow \operatorname{End}(V_{\omega_r})$ of the reduced state map $\rho_L \rightarrow \rho^{(i)}\left(\rho_L\right)$ is an intertwiner of $SU(d)$ representations. 
Logical space as $SU(d)$ representation is
\[
     \text{End}(V_{\omega_1})\cong V_{\omega_1} \otimes (V_{\omega_1})^* \cong \mathbf{1} \oplus \mathrm{Ad}.
\]
For individual system representation $V_{\omega_r}$ we have the following decomposition
\[
    \text{End}(V_{\omega_r})\cong V_{\omega_r} \otimes V_{\omega_r}^* \cong \mathbf{1} \oplus \mathrm{Ad} \oplus \cdots.
\]
From Lemma~\ref{lem:adjoint_in_end} we infer that there is always exactly one copy of the trivial representation and also exactly one copy of the adjoint representation.
As an intertwiner $\Phi^{(i)}$ should have a form
\[
\Phi^{(i)}\cong I_{\mathbf{1}}\otimes A_{M_\mathbf{1}\rightarrow M_\mathbf{1}}\oplus I_{\mathrm{Ad}}\otimes A_{M_\mathrm{Ad}\rightarrow M_\mathrm{Ad}}
\]
and since $\dim M_\mathbf{1}=\dim M_\mathrm{Ad}=1$ we get
\begin{equation}
\begin{gathered}
        \Phi^{(i)}\left(\frac{I_d}{d}\right)=\frac{I_{V_{\omega_r}}}{\operatorname{dim} V_{\omega_r}},\\
        \Phi^{(i)}\left(\sum_a r_a t_a\right)=\beta^{(i)}\sum_a r_a T_a^{(i)}.
\end{gathered}
\end{equation}

Going back to reduced state map $\rho^{(i)}\left(\rho_L\right)$, since logical state has the following form
\[
     \rho_L=\frac{I_d}{d}+\left(\rho_L-\frac{I_d}{d}\right)=\frac{I_d}{d}+\sum_a r_a t_a,
\]
where $\frac{I}{d}$ is in trivial representation and $\rho_L-\frac{I}{d}$ is in adjoint representation, we have that $\rho^{(i)}\left(\rho_L\right)$ has the following form
\[
    \rho^{(i)}\left(\rho_L\right) = \frac{I_{V_{\omega_r}}}{\operatorname{dim} V_{\omega_r}}+\beta^{(i)} \sum_a r_a T_a^{(i)}.
\]
where $\beta^{(i)}$ is some constant that depends on the choice of code space.
Indeed, $\rho^{(i)}\left(\rho_L\right)$ is a covariant map, so it takes a trivial representation to a trivial representation and an adjoint representation.
And since we have chosen $V_{\omega_r}$ such that it contains exactly one copy of the adjoint representation, we have the desired form. Let

\[
    \mathcal{V}: V_{\omega_1} \rightarrow V_{\omega_r}^{\otimes n}
\]
our encoding isometry, where $V_{\omega_1}$ is the fundamental representation of $SU(d)$. 
We want to show

\[
\mathcal{V}^{\dagger} T_a^{(i)} \mathcal{V}=\alpha^{(i)} t_a.
\]
Equivalently, if we define
\begin{equation}\label{eq:Vt_aV}
     K_a^{(i)}:=\mathcal{V}^{\dagger} T_a^{(i)} \mathcal{V} \in \operatorname{End}\left(V_{\omega_1}\right),
\end{equation}
we need to show
\[
    K_a^{(i)}=\alpha^{(i)} t_a.
\]
From the fact that local generators transform in the adjoint representation, i. e.
\[
     \left(\bigotimes_{k=1}^n U^{(k)}(g)\right) T_a^{(i)} \left(\bigotimes_{k=1}^n U^{(k)}(g)\right)^{\dagger}=\sum_b \operatorname{Ad}_{b a}(g) T_a^{(i)}
\]
and the covariance condition of the encoding map we have
\[
     u(g) K_a^{(i)} u(g)^{\dagger}=\sum_b \operatorname{Ad}_{b a}(g) K_b^{(i)}.
\]
$\operatorname{End}\left(V_{\omega_1}\right)$ contains exactly one copy of adjoint representation
\[
     \operatorname{End}\left(V_{\omega_1}\right) \cong V_{\omega_1} \otimes V_{\omega_1}^* \cong \mathbf{1} \oplus \operatorname{Ad}
\]
which means that by Shur's lemma
\[
     \operatorname{dim} \operatorname{Hom}_{S U(d)}\left(\operatorname{Ad}, \operatorname{End}\left(V_{\omega_1}\right)\right)=1
\]
so we have that $K_a^{(i)}$ is proportional to $t_a$. Now we are to obtain $\beta^{(i)}$ in terms of $\alpha^{(i)}$. We have

\[
     \operatorname{Tr}\left(\rho^{(i)}\left(\rho_L\right) T_a^{(i)}\right)=\beta^{(i)} \sum_b r_b \operatorname{Tr}\left(T_b^{(i)} T_a^{(i)}\right) .
\]
From normalization relations~\eqref{eq:sudnormalisation} and~\eqref{eq:sudnormalisation_2} we get
\[
     \operatorname{Tr}\left(\rho^{(i)}\left(\rho_L\right) T_a^{(i)}\right)=\beta^{(i)} \kappa_{\omega_r} r_a .
\]
and on the other hand, using properties of $\Tr$ and covariance condition of encoding map we get
\[
     \operatorname{Tr}\left(\rho^{(i)}\left(\rho_L\right) T_a^{(i)}\right)=\operatorname{Tr}\left(\mathcal{V} \rho_L \mathcal{V}^{\dagger} T_a^{(i)}\right)=
     \operatorname{Tr}\left(\rho_L \mathcal{V}^{\dagger} T_a^{(i)} \mathcal{V}\right)=\alpha^{(i)} \operatorname{Tr}\left(\rho_L t_a\right) .
\]
from $\rho_L=\frac{I}{d}+\sum_a r_a t_a$, we have
$
     \operatorname{Tr}\left(\rho_L t_a\right)=\frac{1}{2} r_a
$
so on the other hand we have
$
     \operatorname{Tr}\left(\rho^{(i)}\left(\rho_L\right) T_a^{(i)}\right)=\frac{\alpha^{(i)}}{2} r_a .
$
from which we finally obtain the following expression for $\beta^{(i)}$:
\[
     \beta^{(i)}=\frac{1}{2\kappa_{\omega_r}}\alpha^{(i)}  .
\]

Finally, the reduced state on the $i$-th physical qudit has the following form
\[
    \rho^{(i)}\left(\rho_L\right)=\frac{I}{\operatorname{dim} V_{\omega_r}}+\alpha^{(i)} \frac{1}{2\kappa_{\omega_r}} \sum_a r_a T_a^{(i)} .
\]
Using the same code construction for symmetric code as we did for $SU(2)$ case (see the proof of the Proposition~\ref{prop:SU2cov}), namely, making code space invariant under cyclic shifts, 
our code space is such that $\alpha^{(i)}=\frac{1}{n}$ for all $i$, which gives the desired form of reduced state.

\end{proof}

\begin{smproposition}\label{prop:SUdcovfidelity}
    For the encoding $\mathcal{E}$ defined in the Proposition~\ref{prop:SUdcov} and noise channel $\mathcal{N}(\sigma)=\sum_{i=1}^n p_i \ket{i}\bra{i}_F \otimes \ket{e}\bra{e}_{A_i} \otimes \operatorname{Tr}_i(\sigma)$ the worst-case entanglement fidelity is
    \[
        F(\widehat{\mathcal{N} \circ \mathcal{E}}, \Lambda_0)=1-\frac{(d-1)^2(d+1)}{8 r(d-r)}\frac{1}{n^2}+O_{d}\left(n^{-3}\right),
    \]
    where we take $\Lambda_0(\rho)=\operatorname{Tr}(\rho)\omega_{\text{flag}}\otimes\frac{I_{V_{\omega_r}}}{\operatorname{dim} V_{\omega_r}}$, where $\omega_{\text {flag}}=\sum p_i|i\rangle\langle i|_{F_E}$.
\end{smproposition}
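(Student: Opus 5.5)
The plan mirrors the proof of Proposition~\ref{prop:SU2covFidelity}. First, by Proposition~\ref{prop:SUdcov} all one-site reduced states coincide, so the complementary channel is $\omega_{\text{flag}}\otimes\rho^{(1)}(\rho_L)$. Both $\widehat{\mathcal N\circ\mathcal E}$ and $\Lambda_0$ are $SU(d)$-covariant: the former by covariance of $\mathcal V$, the latter trivially. The fundamental representation is irreducible, so Lemma~\ref{fidelity_opt} says the worst-case entanglement fidelity is attained at $\rho_L=I_d/d$. Its purification is the maximally entangled state $\ket\Psi=d^{-1/2}\sum_k\ket{kk}$. The flag register factors out of the root fidelity because it is a common tensor factor.

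Next, write $\ket k\bra l = \frac{\delta_{kl}}{d}I+\sum_a 2\operatorname{Tr}(t_a\ket k\bra l)\,t_a$, using $\operatorname{Tr}(t_at_b)=\delta_{ab}/2$. Then
$$\sigma=(\widehat{\mathcal N\circ\mathcal E}\otimes\mathrm{id})(\ket\Psi\bra\Psi)=\tau\otimes\frac{I_d}{d}+\frac{2\beta}{d}\sum_a T_a\otimes t_a^{T},$$
with $\tau=I/\dim V_{\omega_r}$ and $\beta=1/(2\kappa_{\omega_r}n)$. Here I use $\sum_{kl}\operatorname{Tr}(t_a\ket k\bra l)\ket k\bra l=t_a^T$, and the factor $2$ comes from the normalization $r_a=2\operatorname{Tr}(\rho_L t_a)$. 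The reference state is $\eta=\tau\otimes I_d/d$, which is a multiple of the identity, $\eta=cI$ with $c=1/(d\dim V_{\omega_r})$. Therefore
$$f(\eta,\sigma)=\sqrt c\,\operatorname{Tr}\sqrt{\sigma}=\operatorname{Tr}\sqrt{I+\epsilon X}\cdot c,$$
where $X=\sum_a T_a\otimes t_a^T$ and $\epsilon=2\beta\dim V_{\omega_r}$.

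Expand $\sqrt{I+\epsilon X}=I+\tfrac{\epsilon}{2}X-\tfrac{\epsilon^2}{8}X^2+O(\epsilon^3)$. This is valid because $\|X\|$ is bounded independently of $n$. The first-order term vanishes since $\operatorname{Tr}X=0$. For the second-order term, $\operatorname{Tr}X^2=\sum_{a,b}\operatorname{Tr}(T_aT_b)\operatorname{Tr}(t_at_b)=\kappa_{\omega_r}(d^2-1)/2$. This gives
$$f=1-\frac{c\,\epsilon^2}{8}\cdot\frac{\kappa_{\omega_r}(d^2-1)}{2}+O_d(n^{-3})=1-\frac{(d^2-1)\dim V_{\omega_r}}{16\,d\,\kappa_{\omega_r}}\cdot\frac{1}{n^2}+O_d(n^{-3}).$$

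The last step is the arithmetic. With $\dim V_{\omega_r}=\binom dr$ and $\kappa_{\omega_r}=\tfrac12\binom{d-2}{r-1}$, one has $\binom dr/\binom{d-2}{r-1}=d(d-1)/(r(d-r))$. Substituting yields $\frac{(d^2-1)(d-1)}{8r(d-r)}$, which matches the claimed coefficient $(d-1)^2(d+1)/(8r(d-r))$.

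The main place to be careful is the bookkeeping of normalizations: the transpose in the Choi-type expansion, the factor $2$ in $r_a$, and the $c$ in front of the trace. These are what fix the constant, so I would cross-check them against the $d=2$ case of Proposition~\ref{prop:SU2covFidelity}. At $d=2$, $r=1$ (so $j=1/2$), the $SU(2)$ result gives $9/(32\cdot\tfrac34)=3/8$, and the present formula gives $1\cdot3/8=3/8$, so the constants agree. The $O_d(n^{-3})$ remainder is controlled by $\|\epsilon X\|=O_d(1/n)$.
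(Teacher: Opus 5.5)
Your proposal is correct and follows essentially the same route as the paper: reduce to the maximally entangled input via Lemma~\ref{fidelity_opt}, use that $\eta=\tau\otimes I_d/d$ is a multiple of the identity, and expand $\operatorname{Tr}\sqrt{\cdot}$ to second order, arriving at the same constant $(d-1)^2(d+1)/(8r(d-r))$. The only difference is bookkeeping. You evaluate the quadratic term as $\operatorname{Tr}X^2=\kappa_{\omega_r}(d^2-1)/2$ from the normalization $\operatorname{Tr}(T_aT_b)=\kappa_{\omega_r}\delta_{ab}$. The paper instead counts $\operatorname{Tr}B^2=\dim V_{\omega_r}\bigl(r(d-r+1)-r^2/d\bigr)$ directly in the matrix-unit basis on $\bigwedge^r\mathbb C^d$. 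These agree because $B$ is just $2X$ written in that basis.
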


\begin{proof}
In this section, we will calculate the worst-case fidelity of $SU(d)$-covariant codes with flagged erasure error model
in the same way as we did for $SU(2)$-covariant codes. 
For the later calculations we will use notation $\beta:=\frac{1}{n} \frac{1}{2\kappa_{\omega_r}}$, $D=\operatorname{dim} V_{\omega_r}$.
Since reduced states are identical on each site, the complementary channel becomes
\[
        \widehat{\mathcal{N} \circ \mathcal{E}}(\rho)=\omega_{\text {flag }} \otimes \left(\frac{I_D}{D}+ \beta \sum_a r_a T_a^{(1)}\right),
\]
where $\omega_{\text {flag }}=\sum p_i|i\rangle\langle i|_{F_E}$ is the flag state. We chose the fixed channel to be 
$$
\Lambda_0(\rho)=\operatorname{Tr}(\rho)\cdot \omega_{\text {flag }}\otimes \frac{I_{D}}{D} \equiv \operatorname{Tr}(\rho)\cdot \omega_{\text {flag }}\otimes \tau.
$$ 
We drop flag state $\omega_{\text {flag }}$, since fidelity is invariant under 
tensor multiplication on the same state; we also omit the index of a site $(1)$.

As we know from \ref{fidelity_opt} the optimum of $F_{\rho}(\widehat{\mathcal{N} \circ \mathcal{E}}, \Lambda_0)$ 
is attained on the chaotic state $\frac{I_d}{d}$, its purification is maximally entangled state $\ket{\Psi}=\frac{1}{\sqrt{d}}\sum |r\rangle \otimes|r\rangle $. 
We have to compute $(\widehat{\mathcal{N} \circ \mathcal{E}} \otimes \mathrm{id})(\ket{\Psi}\bra{\Psi})$ and $(\Lambda_0 \otimes \mathrm{id})(\ket{\Psi}\bra{\Psi})$.
We define the linear extension of the reduced state channel to all complex matrices as follows
$$
\Phi: \operatorname{End}\left(V_{\omega_1}\right) \rightarrow \operatorname{End}\left(V_{\omega_r}\right), \quad \Phi\left(t_a\right):=T_a, \quad \Phi\left(\frac{I_d}{d}\right):=\tau
$$
and we will use the following notation for our derivations
\begin{equation}\label{eq:matr_units}
     E_{r s}:=|r\rangle\langle s|, \quad F_{r s}:=E_{r s}-\delta_{r s} \frac{I_d}{d}.
\end{equation}
For the constant channel, we have
\[
     \eta_{1/d}:=(\Lambda_0 \otimes \mathrm{Id})\left(\ket{\Psi}\bra{\Psi}\right)=\tau \otimes \frac{I_d}{d}.
\]  
For the complementary channel using 
$
\widehat{\mathcal{N} \circ \mathcal{E}}\left(E_{r s}\right)=\delta_{r s} \tau+\beta \Phi\left(F_{r s}\right) .
$
we define 
\[
\begin{gathered}
     \sigma_{1/d}:=\left(\widehat{\mathcal{N} \circ \mathcal{E}} \otimes \mathrm{Id}\right)\left(\ket{\Psi}\bra{\Psi}\right)=
     \frac{1}{d}\sum_{r, s=1}^d \widehat{\mathcal{N} \circ \mathcal{E}}\left(E_{r s}\right) \otimes|r\rangle\langle s| =\\
     =\frac{1}{d}\sum_{r, s=1}^d \left(\delta_{r s} \tau+\beta \Phi\left(F_{r s}\right)\right) \otimes|r\rangle\langle s| .
\end{gathered}
\]
We obtain the following expression for fidelity
\[
     f\left(\sigma_{1/d}, \eta_{1/d}\right)=\operatorname{Tr}\sqrt{\sqrt{\eta_{1/d}} \sigma_{1/d} \sqrt{\eta_{1/d}}}=\operatorname{Tr} \sqrt{\frac{1}{Dd^2}\left[\delta_{r s} \tau+\beta \Phi\left(F_{r s}\right)\right]_{r, s=1}^d}.
\]
Now, we recall that $\Phi$ actually acts as follows
\[
        \Phi\left(F_{a b}\right)=T_{a b}^{(r)}-\delta_{a b} \frac{r}{d} I_D,
\]
where $T_{a b}^{(r)}$ is the representation of $F_{a b}$ in the adjoint representation of $SU(d)$ and defined as follows
\[
    T_{a b}^{(r)}\left(v_1 \wedge \cdots \wedge v_r\right)=\sum_{m=1}^r v_1 \wedge \cdots \wedge E_{a b} v_m \wedge \cdots \wedge v_r
\]
so we can rewrite the expression for fidelity as follows
\[
\begin{aligned}
    f\left(\sigma_{1/d}, \eta_{1/d}\right)&=\operatorname{Tr} \sqrt{\frac{1}{Dd^2}\left[\delta_{a b} \tau+\beta \left(T_{a b}^{(r)}-
    \delta_{a b} \frac{r}{d} I_D\right)\right]_{a, b=1}^d}\\&\equiv\frac{1}{d \sqrt{D}} \operatorname{Tr} \sqrt{\frac{I_{D d}}{D}+\beta B} .
\end{aligned}
\]
where 
\[
    B=\left(\begin{array}{cccc}
    T_{11}^{(r)}-\frac{r}{d} I_D & T_{12}^{(r)} & \cdots & T_{1 d}^{(r)} \\
    T_{21}^{(r)} & T_{22}^{(r)}-\frac{r}{d} I_D & \cdots & T_{2 d}^{(r)} \\
    \vdots & \vdots & \ddots & \vdots \\
    T_{d 1}^{(r)} & T_{d 2}^{(r)} & \cdots & T_{d d}^{(r)}-\frac{r}{d} I_D
    \end{array}\right).
\]
We are now to expand this expression up to a leading order in $\beta$
\[
\begin{aligned}
        f\left(\sigma_{1/d}, \eta_{1/d}\right)&=1+\frac{\beta}{2 d} \operatorname{Tr} B-\frac{D \beta^2}{8 d} \operatorname{Tr} B^2+O_{d}\left(\beta^3\right)
        \\&=1-\frac{D^2 \beta^2}{8 d}\left(r(d-r+1)-\frac{r^2}{d}\right)+O_{d}\left(\beta^3\right)
\end{aligned}
\]
where we use the fact that $\operatorname{Tr} B=0$ and $\operatorname{Tr} B^2=D\left(r(d-r+1)-\frac{r^2}{d}\right)$. Indeed, 
$$
\begin{aligned}
    \operatorname{Tr} B^2&=\sum_{a, b=1}^d \operatorname{Tr}_{V_{\omega_r}}\left(\Phi\left(F_{a b}\right) \Phi\left(F_{b a}\right)\right)=\sum_{a\neq b}^d\operatorname{Tr}\left(T_{a b}^{(r)} T_{b a}^{(r)}\right)+\sum_{a=1}^d \operatorname{Tr}\left(T_{a a}^{(r)}-\frac{r}{d} I_D\right)^2\\&=d(d-1)\binom{d-2}{r-1}+d\frac{Dr(d-r)}{d^2}=D\left(r(d-r+1)-\frac{r^2}{d}\right)
\end{aligned}
$$
Finally, if we substitute $\beta=\frac{1}{n} \frac{1}{2\kappa_{\omega_r}}$, $D=\operatorname{dim} V_{\omega_r}$ and 
$\kappa_{\omega_r}=\frac{1}{2}\binom{d-2}{r-1}$ into $F(\widehat{\mathcal{N} \circ \mathcal{E}}, \Lambda_0)=f(\sigma_{1/d}, \eta_{1/d
})$ we get a desired expression for fidelity.
\end{proof}

\subsection{Explicit Decoder construction}\label{subsec:supple_decoder}

In this section, we will construct an explicit decoder for $SU(d)$-covariant codes with the erasure error model.
For this, we will use the Petz recovery map, which is defined as follows
\begin{smdefinition}
    For a given channel $\mathcal{N}$ and reference state $\omega$, the Petz map $\mathcal{P}_{\omega, \mathcal{N}}$ is defined as:
    \[
        \mathcal{P}_{\omega, \mathcal{N}}(\sigma)=\omega^{1 / 2} \mathcal{N}^*\left(\mathcal{N}(\omega)^{-1 / 2} \sigma \mathcal{N}(\omega)^{-1 / 2}\right) \omega^{1 / 2},
    \]
    where $\mathcal{N}^*$ is the adjoint of $\mathcal{N}$ with respect to the Hilbert-Schmidt inner product, and $\omega$ is so-called reference state.
\end{smdefinition}

\begin{smtheorem}\label{thm:petz_decoder_explicit}
For the encoding $\mathcal E$ and the flagged single-qudit erasure channel $\mathcal N$ defined in Theorem~\ref{thrm:sudscale}, 
the Petz recovery map with reference state $\omega=I_d/d$ has the form
$$
\mathcal R_{\mathrm{Petz}}(X)
=
\sum_{i:,p_i>0}
\mathcal{V}^\dagger
\left(
S_i^{-1/2}
\langle i,e|X|i,e\rangle_{F,A_i}
S_i^{-1/2}
\otimes I_{A_i}
\right)
\mathcal{V},
$$
where $\mathcal{V}$ is the encoding isometry and
$$
S_i
:=
\operatorname{Tr}_i\left(\mathcal{V}\mathcal{V}^\dagger\right).
$$
Moreover, this recovery is near-optimal in the sense that
$$
    d\left(
    \mathcal R_{\mathrm{Petz}}
    \circ
    \mathcal N
    \circ
    \mathcal E,
    \mathrm{id}
    \right)
    =
    O_{d}(n^{-1}) .
$$
\end{smtheorem}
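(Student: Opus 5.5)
The proof has two parts: first derive the explicit form of the Petz map, then bound its performance by the near-optimality theorem of Ng and Mandayam~\cite{Ng2010} combined with Theorem~\ref{thrm:sudscale}.

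\textbf{Explicit form.} Write the total channel $\mathcal M:=\mathcal N\circ\mathcal E$ with $\omega=I_d/d$. Since $\mathcal E$ is isometric, $\mathcal M(\omega)=\frac1d\mathcal N(\mathcal V\mathcal V^\dagger)$. By Eq.~\eqref{eq:erasure_noise} this equals
$$\mathcal M(\omega)=\frac1d\sum_i p_i\,\ket i\bra i_F\otimes\ket e\bra e_{A_i}\otimes S_i,$$
which is block diagonal in the flag. Its inverse square root on the support is therefore
$$\mathcal M(\omega)^{-1/2}=\sqrt d\sum_{i:\,p_i>0}p_i^{-1/2}\,\ket i\bra i\otimes\ket e\bra e\otimes S_i^{-1/2}.$$
Next compute the adjoints. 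We have $\mathcal E^*(Y)=\mathcal V^\dagger Y\mathcal V$. For the erasure channel, $\mathcal N^*(Z)=\sum_i p_i\,\langle i,e|Z|i,e\rangle\otimes I_{A_i}$, obtained from the duality of $\operatorname{Tr}_i$ with tensoring by $I_{A_i}$.

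Substitute into $\omega^{1/2}\mathcal M^*(\mathcal M(\omega)^{-1/2}X\mathcal M(\omega)^{-1/2})\omega^{1/2}$. The factors $d^{-1/2}$ from $\omega^{1/2}$ cancel against the $\sqrt d$ factors, and $p_i\cdot p_i^{-1}=1$ cancels as well. This gives exactly the stated formula. As a sanity check, trace preservation on the support follows from $\sum_i\mathcal V^\dagger(\Pi_{\operatorname{supp}S_i}\otimes I)\mathcal V=I$ up to the flag weights.

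\textbf{Performance.} By Theorem~\ref{thm:beny} and Theorem~\ref{thrm:sudscale}, the optimal error satisfies
$$\varepsilon_{\rm opt}:=\min_{\mathcal R}d(\mathcal R\mathcal M,\mathrm{id})\le C_{d,r}/n+O_d(n^{-2}).$$
The result of~\cite{Ng2010} says that the transpose channel with reference $\omega=I/d$ (the code projector normalized) has worst-case fidelity deviation at most a constant multiple of the optimal one. In terms of $1-F$ it is at most quadratic loss in the Bures-type distance, namely $d(\mathcal R_{\rm Petz}\mathcal M,\mathrm{id})\le c\,\varepsilon_{\rm opt}$ with universal $c$ ($c=\sqrt2$ for their form, or $1-F_{\rm Petz}\le 2(1-F_{\rm opt})$ up to higher order). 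This gives $O_d(n^{-1})$.

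\textbf{Main obstacle.} The main obstacle is matching the fidelity measure of~\cite{Ng2010} to our worst-case entanglement fidelity $F$. If their bound is stated for entanglement fidelity with the maximally mixed input, we bridge it by covariance. For $\omega=I/d$, $\mathcal R_{\rm Petz}\circ\mathcal M$ is $SU(d)$-covariant: the $S_i$ commute with the transversal action restricted to $\hat i$, and $\mathcal V$ intertwines. Lemma~\ref{fidelity_opt} then shows the worst case is attained at $I/d$. By Schur's lemma the composite channel is depolarizing with parameter $\lambda_{\rm Petz}$ as in Eq.~\eqref{eq:lambda_P}. So both the optimal and the Petz quantities can be compared at the single input $I/d$, and the constant-factor relation carries over.
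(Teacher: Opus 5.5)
Your derivation of the explicit Petz formula is the same as the paper's: the block-diagonal $\mathcal M(\omega)^{-1/2}$, $\mathcal N^*$ acting as tensoring with $I_{A_i}$, and the cancellation of the $d$ and $p_i$ factors. So is your observation that $\mathcal R_{\rm Petz}\circ\mathcal N\circ\mathcal E$ is $SU(d)$-covariant and hence depolarizing.

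You diverge in the near-optimality step, and there the attribution is wrong. The bound of \cite{Ng2010} is for the worst-case \emph{state} infidelity $\eta^{\rm st}=1-\min_\psi f^2\bigl(\psi,(\mathcal R\mathcal N\mathcal E)(\psi)\bigr)$. Its factor depends on the dimension, $\eta^{\rm st}_{\rm Petz}\le\eta^{\rm st}_{\rm op}\,\frac{(d+1)-\eta^{\rm st}_{\rm op}}{1+(d-1)\eta^{\rm st}_{\rm op}}\approx(d+1)\,\eta^{\rm st}_{\rm op}$, so it does not give a universal $c=\sqrt2$. Your conditional bridge (``if their bound is stated for entanglement fidelity with the maximally mixed input'') also does not address that measure. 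The paper converts measures directly from the depolarizing form: $\eta^{\rm ent}_{\rm Petz}=\frac{d^2-1}{d^2}(1-\lambda_{\rm Petz})=\frac{d+1}{d}\,\eta^{\rm st}_{\rm Petz}$, together with $\eta^{\rm st}_{\rm op}\le\eta^{\rm ent}_{\rm op}=O_d(n^{-2})$. That is enough for $O_d(n^{-1})$, so the conclusion survives, but the constant you state is not justified by that citation.

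The fixed-input, universal-constant statement you actually describe is the Barnum--Knill theorem \cite{Barnum2002}. With reference $I_d/d$, the squared entanglement fidelity of the Petz-recovered channel at input $I_d/d$ is at least the square of the best value any recovery achieves at that input. With this citation your covariance bridge goes through:
\begin{itemize}
\item On the Petz side, Lemma~\ref{fidelity_opt} puts the worst case at $I_d/d$.
\item On the side of an optimal recovery $\mathcal R_\star$, you only need the trivial inequality $F_{I_d/d}(\mathcal R_\star\mathcal N\mathcal E,\mathrm{id})\ge F(\mathcal R_\star\mathcal N\mathcal E,\mathrm{id})$. No covariance of $\mathcal R_\star$ is required, and this inequality is the one step you glossed.
\end{itemize}
Together these give $F(\mathcal R_{\rm Petz}\mathcal N\mathcal E,\mathrm{id})\ge F(\mathcal R_\star\mathcal N\mathcal E,\mathrm{id})^2$, hence $d(\mathcal R_{\rm Petz}\mathcal N\mathcal E,\mathrm{id})\le\sqrt2\,\varepsilon_{\rm opt}$. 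Correctly sourced, your route therefore yields a $d$-independent constant, which the paper's Ng--Mandayam route does not.

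Separately, your trace-preservation sanity check is off. Each $\mathcal V^\dagger(\Pi_{\operatorname{supp}S_i}\otimes I)\mathcal V$ already equals $I$. Trace preservation on the support instead follows from $\operatorname{Tr}\mathcal R_{\rm Petz}(X)=\sum_i\operatorname{Tr}\bigl(\Pi_{\operatorname{supp}S_i}\langle i,e|X|i,e\rangle\bigr)$.
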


\begin{proof}
Define the erased channel for site $i$ by
$
    \mathcal{C}_i\left(\rho_L\right):=\operatorname{Tr}_{i}\left(\mathcal{V} \rho_L \mathcal{V}^{\dagger}\right).
$
Our noise channel in this notation can be written as
$
    \mathcal{N}(\sigma)=\sum_{i=1}^n p_i\ket{i}\bra{i}_F \otimes \ket{e}\bra{e}_{A_i}\otimes \mathcal{C}_i.
$
We take a reference state $\omega$ to be the chaotic state $\frac{I}{d}$, which is the only $SU(d)$-invariant state on logical space. We get
\[
    \mathcal{C}_i\left(\omega\right)=\frac{1}{d} \operatorname{Tr}_i\left(\mathcal{V} \mathcal{V}^{\dagger}\right)=\frac{S_i}{d},
\]
where we defined $S_i:=\operatorname{Tr}_i\left(\mathcal{V} \mathcal{V}^{\dagger}\right)$. One can easily check that
\[
    \mathcal{C}_i^{\dagger}(Y)=\mathcal{V}^{\dagger}\left(Y \otimes I_{W_i}\right) \mathcal{V}
\]
where we denoted $W_i$ as the Hilbert space associated with site $i$. Thus, the recovery map is given by
\[
    \mathcal{R}_{\mathrm{Petz}}(X)=\sum_{i: p_i>0} \mathcal{V}^{\dagger}\left(S_i^{-1 / 2} \langle i, e| X|i,e\rangle_{F, A_i} S_i^{-1 / 2} \otimes I_{W_i}\right) \mathcal{V},
\]
where the inverse should always be understood as the inverse on the support of $S_i$. Now, let's explore the performance of this decoder, 
i. e. find $F\left(\mathcal{R}_{\mathrm{Petz}} \circ \mathcal{N}\circ \mathcal{E}, \mathrm{id}\right)$, since we 
are in AQECC setting. We denote
\[
    \Phi:=\mathcal{R}_{\mathrm{Petz}} \circ \mathcal{N}\circ \mathcal{E}
\]
and let's analyze the structure of $\Phi$. First of all, $\Phi: \operatorname{End}(\mathcal{H}_L)\rightarrow \operatorname{End}(\mathcal{H}_L)$ is a 
$SU(d)$-covariant map, i. e.
\[
    \Phi\left(u(g) \rho u(g)^{\dagger}\right)=u(g) \Phi(\rho) u(g)^{\dagger}, \quad \forall g \in SU(d).
\]
Indeed, $\mathcal{C}_i\left(u(g) \rho u(g)^{\dagger}\right)=U^{\hat{i}}(g) \mathcal{C}_i(\rho) \left(U^{\hat{i}}(g)\right)^{\dagger}$ and $S_i^{-1 / 2}$ 
commutes with $U^{\hat{i}}(g)$ on the support of $S_i$. So it has to take the following form
\[
    \Phi(\rho)=\lambda_{\mathrm{Petz}} \rho+\left(1-\lambda_{\mathrm{Petz}}\right) \frac{I_d}{d} \operatorname{Tr} \rho,
\]
that is basically a depolarizing channel with depolarizing parameter $\lambda_{\mathrm{Petz}}$. Let's show that this depolarizing parameter is given by
\begin{equation}\label{eq:lambdaPetz}
    \lambda_{\text{Petz}}=\frac{2}{d^2-1}\sum_{i=1}^n p_i \sum_{a=1}^{d^2-1} \operatorname{Tr}\left[S_i^{-1 / 2} \mathcal{C}_i\left(t_a\right) 
    S_i^{-1 / 2} \mathcal{C}_i\left(t_a\right)\right] .
\end{equation}
Indeed, since $\Phi\left(t_a\right)=\lambda_{\text{Petz}} t_a$, we have
$$
\operatorname{Tr}\left(t_a \Phi\left(t_a\right)\right)=\frac{1}{2} \lambda_{\text{Petz}},
$$
where we used $\operatorname{Tr}\left(t_a t_b\right)=\frac{1}{2} \delta^{a b}$. Summing over $a$,
$$
\lambda_{\text{Petz}}=\frac{2}{\left(d^2-1\right)} \sum_{a=1}^{d^2-1} \operatorname{Tr}\left(t_a \Phi\left(t_a\right)\right)
$$
Now substitute the Petz formula:
$$
\Phi_{\mathrm{Petz}}\left(t_a\right)=\sum_i p_i \mathcal{V}^{\dagger}\left(S_i^{-1 / 2} \mathcal{C}_i\left(t_a\right) S_i^{-1 / 2} \otimes I_{A_i}\right) \mathcal{V}.
$$
Hence
$$
\operatorname{Tr}\left(t_a \Phi_{\text{Petz}}\left(t_a\right)\right)=\sum_i p_i \operatorname{Tr}\left[S_i^{-1 / 2} 
\mathcal{C}_i\left(t_a\right) S_i^{-1 / 2} \mathcal{C}_i\left(t_a\right)\right]
$$
where we used the partial-trace identity
$
\operatorname{Tr}\left(X\left(Y \otimes I_W\right)\right)=\operatorname{Tr}\left(\operatorname{Tr}_W(X) Y\right),
$
with $X=\mathcal{V} t^a \mathcal{V}^{\dagger}$ and $Y=A^{-1 / 2} \mathcal{C}_i\left(t^a\right) A^{-1 / 2}$, gives desired expression for $\lambda_{\mathrm{Petz}}$.

For a pure state \(\psi=|\psi\rangle\langle\psi|\), we have \(\Phi_{\mathrm{Petz}}(\psi)=\lambda_{\mathrm{Petz}}\psi+(1-\lambda_{\mathrm{Petz}})I_d/d\). 
Since the target state is pure, the root fidelity is
\[
f(\Phi_{\mathrm{Petz}}(\psi),\psi)
=
\sqrt{\langle\psi|\Phi_{\mathrm{Petz}}(\psi)|\psi\rangle}
=
\sqrt{\lambda_{\mathrm{Petz}}+\frac{1-\lambda_{\mathrm{Petz}}}{d}} .
\]
This expression is independent of \(\psi\). Hence
\[
\eta_{\mathrm{Petz}}^{\mathrm{st}}
:=
1-\min_{\psi}f^2(\Phi_{\mathrm{Petz}}(\psi),\psi)
=
\frac{d-1}{d}(1-\lambda_{\mathrm{Petz}}).
\]

Our channel fidelity is defined as the minimum over \(\rho\) of \(F_\rho(\Phi_{\mathrm{Petz}},\mathrm{id})=f((\Phi_{\mathrm{Petz}}\otimes\mathrm{id})(|\psi_\rho\rangle\langle\psi_\rho|),|\psi_\rho\rangle\langle\psi_\rho|)\), where \(|\psi_\rho\rangle\) purifies \(\rho\). 
For the depolarizing channel \(\Phi_{\mathrm{Petz}}\), one finds
\[
(\Phi_{\mathrm{Petz}}\otimes \mathrm{id})(|\psi_\rho\rangle\langle\psi_\rho|)
=
\lambda_{\mathrm{Petz}}|\psi_\rho\rangle\langle\psi_\rho|
+(1-\lambda_{\mathrm{Petz}})\frac{I_d}{d}\otimes \rho_R .
\]
Since the target is pure, this gives \(F_\rho(\Phi_{\mathrm{Petz}},\mathrm{id})=\sqrt{\lambda_{\mathrm{Petz}}+\frac{1-\lambda_{\mathrm{Petz}}}{d}\operatorname{Tr}(\rho^2)}\). 
This quantity is minimized at \(\rho=I_d/d\), i.e. when \(\operatorname{Tr}(\rho^2)=1/d\). Therefore
\[
F(\Phi_{\mathrm{Petz}},\mathrm{id})
=
\sqrt{\lambda_{\mathrm{Petz}}+\frac{1-\lambda_{\mathrm{Petz}}}{d^2}} .
\]
Equivalently,
\[
\eta_{\mathrm{Petz}}^{\mathrm{ent}}
:=
1-F^2(\Phi_{\mathrm{Petz}},\mathrm{id})
=
\frac{d^2-1}{d^2}(1-\lambda_{\mathrm{Petz}})
=
\frac{d+1}{d}\eta_{\mathrm{Petz}}^{\mathrm{st}} .
\]

Thus, the exact worst-case channel fidelity of the Petz-recovered channel is determined by \(\lambda_{\mathrm{Petz}}\). 
However, analyzing this expression directly would require obtaining the scaling of \(\lambda_{\mathrm{Petz}}\) in terms of \(n\). 
This, in turn, requires a careful analysis of Eq.~\eqref{eq:lambdaPetz}, namely of the scaling of \(A\) and \(\mathcal C_i(\bar t_a)\) with \(n\). 
This is highly nontrivial, because these operators lie in the adjoint representation inside \((V_{\omega_r})^{\otimes(n-1)}\), 
where the adjoint component appears with large multiplicity, in contrast to the single-qudit case. 
Similar complications will appear below in the analysis of multi-qudit erasure errors.

We leave a direct analysis of \(\lambda_{\mathrm{Petz}}\) for future work. In this work we use the near-optimality theorem of Ng and Mandayam for the transpose channel, 
equivalently the Petz recovery map with the maximally mixed code state as reference state~\cite{Ng2010}. Their result states that
\[
\eta_{\mathrm{Petz}}^{\mathrm{st}}
\le
\eta_{\mathrm{op}}^{\mathrm{st}}
\frac{(d+1)-\eta_{\mathrm{op}}^{\mathrm{st}}}
{1+(d-1)\eta_{\mathrm{op}}^{\mathrm{st}}},
\]
where \(\eta_{\mathrm{op}}^{\mathrm{st}}:=\min_{\mathcal R}[1-\min_{\psi}f^2(\psi,(\mathcal R\circ\mathcal C)(\psi))]\). 
Moreover, the optimal state-fidelity error is bounded by the optimal entanglement-fidelity error, \(\eta_{\mathrm{op}}^{\mathrm{st}}\leq \eta_{\mathrm{op}}^{\mathrm{ent}}\). 
Therefore, since Theorem~$2$ from the main part of the paper gives \(\eta_{\mathrm{op}}^{\mathrm{ent}}=O_{d}(n^{-2})\), we obtain \(\eta_{\mathrm{Petz}}^{\mathrm{st}}=O_{d}(n^{-2})\). 
Finally, using \(\eta_{\mathrm{Petz}}^{\mathrm{ent}}=\frac{d+1}{d}\eta_{\mathrm{Petz}}^{\mathrm{st}}\), we conclude that
\[
\eta_{\mathrm{Petz}}^{\mathrm{ent}}=O_{d}(n^{-2}).
\]
This gives the desired scaling.

\end{proof}

\subsection{Flagged two- and three-qudit erasure}\label{subsec:supple_multi_erasure_noise}

The following Lemmas~\ref{lem:M_module_irreducibility}-\ref{lem:3-body_action} are axiliary results that we will need for our analysis of multi-qudit 
erasure errors and thus can be skipped on first reading.
For notation see Section~\ref{sec:supple_sud} and Nomenclature section.

\begin{smlemma}\label{lem:M_module_irreducibility}
    Let $M_{\omega_1}$ be a multiplicity space of the fundamental $SU(d)$ representation $V_{\omega_1}$ in $\left(V_{\omega_1}\right)^{\otimes n}$, where $n\equiv 1 \pmod{d}$. 
    Then $M_{\omega_1}$ is an irreducible $S_n$-module, namely the Specht module $[q+1, q^{d-1}]$, where $q=\frac{n-1}{d}$.
\end{smlemma}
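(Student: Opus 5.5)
The plan is to use Schur--Weyl duality for the commuting actions of $GL_d(\mathbb C)$ and $S_n$ on $(V_{\omega_1})^{\otimes n}=V^{\otimes n}$, $V\cong\mathbb C^d$. Schur--Weyl duality gives the multiplicity-free decomposition
\[
V^{\otimes n}\cong\bigoplus_{\lambda\vdash n,\ l(\lambda)\le d} S_\lambda V\otimes[\lambda]
\]
as a $GL_d(\mathbb C)\times S_n$-module. Here $S_\lambda V$ are pairwise non-isomorphic irreducible polynomial $GL_d$-modules and $[\lambda]$ are the Specht modules. The multiplicity space of any $GL_d$-irrep is therefore a single irreducible $S_n$-module. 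The only subtlety is that the statement concerns $SU(d)$ rather than $GL_d$: distinct partitions can restrict to the same $SU(d)$-irrep. So I must collect all $\lambda$ whose restriction is $V_{\omega_1}$.

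By Remark~\ref{rem:fundamental_weights}, $S_\lambda V\!\downarrow_{SU(d)}\cong V_{\omega_1}$ exactly when the differences $\lambda_i-\lambda_{i+1}$ equal $(1,0,\dots,0)$. This means $\lambda=(a+1,a^{d-1})$ for some $a\ge 0$. In particular all $d$ rows are nonempty except when $a=0$ and $d>1$, but the length condition $l(\lambda)\le d$ holds in every case. The size constraint $|\lambda|=n$ gives $da+1=n$, hence $a=(n-1)/d=q$, which is an integer by the hypothesis $n\equiv1\pmod d$. So exactly one partition, $\lambda_q=(q+1,q^{d-1})$, contributes. I also need to confirm that two different $GL_d$-summands cannot restrict to isomorphic $SU(d)$-irreps unless their difference vectors agree. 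This follows because restriction of $S_\lambda V$ to $SU(d)$ stays irreducible and is labeled by that difference vector. Therefore the $V_{\omega_1}$-isotypic component of $V^{\otimes n}$ under $SU(d)$ is exactly $S_{\lambda_q}V\otimes[\lambda_q]$.

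Identifying $S_{\lambda_q}V\!\downarrow\cong V_{\omega_1}$ then gives $M_{\omega_1}\cong[\lambda_q]$. Since the $S_n$ action commutes with $SU(d)$ and, by Eq.~\eqref{eq:group_act_on_code}, acts only on the multiplicity factor, this isomorphism is one of $S_n$-modules. Irreducibility of the Specht module completes the argument.

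The main obstacle is the bookkeeping in the second step, namely that no other partition of $n$ with at most $d$ rows restricts to $V_{\omega_1}$. I will make this precise using the uniqueness of $a$ from $da+1=n$. As a sanity check, the dimension $\dim M_{\omega_1}=f^{\lambda_q}$ agrees with the Kostka count $K_{(d^q,1),(1^n)}$ from the proof of Lemma~\ref{lem:fundamental_in_extirior_power} at $r=1$, since that count equals the number of standard tableaux of the conjugate shape.
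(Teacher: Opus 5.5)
Your proposal is correct and takes essentially the same route as the paper. Both use Schur--Weyl duality for $GL_d(\mathbb C)\times S_n$ and restrict to $SU(d)$. Both then observe that $S_\lambda V\!\downarrow_{SU(d)}\cong V_{\omega_1}$ forces $\lambda=(a+1,a^{d-1})$, and that $|\lambda|=n$ pins down $a=q$, so only $[\lambda_q]$ survives in $M_{\omega_1}$. The one step you cite rather than prove is that $S_\lambda V\!\downarrow_{SU(d)}$ stays irreducible. The paper justifies this with the unitarian trick, together with the fact that central phases act on $S_\lambda V$ by scalars.
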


\begin{proof}
    We recall the notation $\mathcal{V}=\C^d$. As always, we first work in the context of $GL_d(\mathbb{C})$ representations and then restrict ourselves to $SU(d)$ representations. 
    From the Schur-Weil duality, we have the following decomposition
    \[
        V^{\otimes n} \cong \bigoplus_{\lambda \vdash n, \ell(\lambda) \leq d} S_\lambda(V) \otimes[\lambda],
    \]
    where $S_\lambda(V)$ is $GL_d(\mathbb{C})$-module and $[\lambda]$ is irreducible $S_n$-module, known as Specht module.
    If we restrict this decomposition to $SU(d)$, then we get
    \[
        V^{\otimes n}\downarrow_{SU(d)}\cong V_{\omega_1}^{\otimes n} \cong \bigoplus_{\lambda \vdash n, \ell(\lambda) \leq d}\left(S_\lambda(V) \downarrow_{S U(d)}\right) \otimes[\lambda],
    \]
    where $S_\lambda(V) \downarrow_{S U(d)}$ is the restriction of $GL_d(\mathbb{C})$-module $S_\lambda(V)$ to $SU(d)$. 
    The decomposition is basically the same as for $GL_d(\mathbb{C})$ case, but the main difference
    is that $S_\lambda(V) \downarrow_{S U(d)}$ may coincide with $S_\mu(V) \downarrow_{S U(d)}$ for $\lambda \neq \mu$, 
    so multiplicity of $V_{\omega_1}$ as a representation of $SU(d)$ requires a careful analysis.
    Multiplicity space $M_{\omega_1}$ is described by the following formula
    \[
        M_{\omega_1}\cong \operatorname{Hom}_{S U(d)}\left(V_{\omega_1}, V_{\omega_1}^{\otimes n}\right) 
        \cong \bigoplus_{\lambda \vdash n, \ell(\lambda) \leq d} \operatorname{Hom}_{S U(d)}\left(V_{\omega_1}, S_\lambda(V) \downarrow_{S U(d)}\right) \otimes[\lambda],
    \]
    so we can write the following
    \[
        M_{\omega_1} \cong \bigoplus_{\lambda \vdash n, \ell(\lambda) \leq d} m_\lambda[\lambda],
    \]
    where since $S_\lambda(V)$ is still irreducible as a representation of $SU(d)$, we have $m_{\lambda}\in \{0,1\}$ for every
    $\lambda$.
    The restriction $S_\lambda(V)\downarrow_{SU(d)}$ is indeed irreducible. By the unitarian trick, $S_\lambda(V)$ is irreducible as a representation 
    of the compact group $U(d)$, because $\mathfrak{gl}_d(\C)$ is the complexification of $\mathfrak u(d)$ and the two therefore have the same invariant subspaces. 
    Every element of $U(d)$ is the product of an element of $SU(d)$ with a central phase $e^{i\theta}I$, and the latter acts on $S_\lambda(V)\subseteq V^{\otimes|\lambda|}$ 
    by the scalar $e^{i|\lambda|\theta}$. A subspace invariant under $SU(d)$ is therefore invariant under all of $U(d)$, hence is $0$ or $S_\lambda(V)$. 
    Irreducibility of the restriction then makes $\operatorname{Hom}_{SU(d)}(V_{\omega_1},S_\lambda(V)\downarrow_{SU(d)})$ one-dimensional when $S_\lambda(V)\downarrow_{SU(d)}\cong V_{\omega_1}$, and zero otherwise.
    We need to show that $m_\lambda=1$ for exactly one $\lambda$. Indeed,
    $S_\lambda(V) \downarrow_{S U(d)} \cong V_{\omega_1}$ happens exactly iff $\lambda=\left(q+1, q^{d-1}\right)$ for some $q\geq 0$.
    Since also $|\lambda|=n$, we have $n=(q+1)+q(d-1)=d q+1$, so $q=\frac{n-1}{d}$ is an integer. 
    Therefore,
    \[
        M_{\omega_1} \cong\left[q+1, q^{d-1}\right],
    \]
    which proves the lemma.
\end{proof}

\begin{smlemma}\label{lem:dimension_formula}
    Let $[q+1, q^{d-1}]$ be the Specht module corresponding to the partition $(q+1, q^{d-1})$. Then
    \[
        \operatorname{dim} [q+1, q^{d-1}] = \frac{(d q+1)!d!\prod_{i=0}^{d-2} i!}{(q+d)!\prod_{i=0}^{d-2}(q+i)!}.
    \]
\end{smlemma}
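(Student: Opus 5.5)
We will derive the formula from the hook length formula, $\dim[\lambda]=n!/\prod_{(i,j)\in\lambda}h(i,j)$, applied to $\lambda=(q+1,q^{d-1})$ with $n=dq+1$. The numerator is already the factor $(dq+1)!$ in the statement. The task therefore reduces to showing
$$
\prod_{(i,j)\in\lambda}h(i,j)=\frac{(q+d)!\prod_{i=0}^{d-2}(q+i)!}{d!\prod_{i=0}^{d-2}i!}.
$$

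We first compute the hooks column by column. The diagram consists of a $d\times q$ rectangle together with one extra box at position $(1,q+1)$. The extra box has hook length $1$. For a box $(i,j)$ with $j\le q$, the leg length is $d-i$, since every column $j\le q$ has height $d$. The arm length is $q-j$ for $i\ge 2$, and $q+1-j$ for $i=1$. Hence
$$
h(i,j)=(q-j)+(d-i)+1 \quad (i\ge2),\qquad h(1,j)=(q-j)+d+1 .
$$

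Next we compute the product row by row. For a row $i\ge 2$, as $j$ runs over $1,\dots,q$, the hook lengths run over the integers from $d-i+1$ to $q+d-i$. Their product is $(q+d-i)!/(d-i)!$. For row $1$, the boxes with $j\le q$ give the integers $d+1,\dots,q+d$, and the extra box gives $1$, so the row-$1$ product is $(q+d)!/d!$. Multiplying, and substituting $k=d-i$ for $k=0,\dots,d-2$, we get
$$
\prod h=\frac{(q+d)!}{d!}\prod_{k=0}^{d-2}\frac{(q+k)!}{k!},
$$
which is exactly the required denominator. This completes the argument once the reindexing is checked.

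The only step needing care is the hook bookkeeping in the first row, where the extra box shifts the arm lengths by one. A boundary check confirms it: at $q=0$ the formula should give $\dim[1]=1$. Indeed, the right-hand side is then $1!\,d!\prod_{i=0}^{d-2}i!\big/\big(d!\prod_{i=0}^{d-2}i!\big)=1$. A second check is $d=2$, where $[q+1,q]$ has dimension equal to a Catalan-type number. The formula gives $(2q+1)!\,2/\big((q+2)!\,q!\big)$, which matches the hook length count $(2q+1)!/\big((q+2)!\,q!/2\big)$. Dimension-counting cross-checks against Lemma~\ref{lem:M_module_irreducibility} are not needed.
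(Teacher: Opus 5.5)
Your proposal is correct and is essentially the paper's proof: apply the hook length formula, get the first-row product $(q+d)!/d!$ from the hooks $q+d+1-j$ (the corner box contributes $1$), get $(q+d-i)!/(d-i)!$ for each row $i\ge 2$ from the hooks $q+d-i-j+1$, and reindex with $k=d-i$. Your sanity checks at $q=0$ and $d=2$ are a nice addition but not needed for the argument.
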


\begin{proof}
    The general formula for the dimension of Specht module $[\lambda]$ corresponding to the partition $\lambda$ is given by the hook-length formula, i. e.
    \[
        \operatorname{dim} [\lambda] = \frac{n!}{\prod_{(i, j) \in \lambda} h_{i j}} .
    \]
    where $h_{ij}$ is the hook length of the box in the $i$-th row and $j$-th column, precisely
    \[
        h_{i j}=\#\{\text { boxes to the right of }(i, j)\}+\#\{\text { boxes below }(i, j)\}+1 .
    \]
    It can also be expressed as
    $
        h_{i j}=\lambda_i-j+\lambda_j^{\prime}-i+1 .
    $
    Our Young diagram has the shape of a rectangle with an extra box in the top right corner, and its transpose 
    has the shape of a rectangle with an extra box in the bottom left corner, i. e.
    $$
        \lambda=(q+1, \underbrace{q, \ldots, q}_{d-1 \text { times }}), \quad \lambda'=(\underbrace{d, d, \ldots, d}_{q \text { times }}, 1).
    $$
    Now let's compute the hook lengths for each box in the Young diagram $\lambda$ -- row by row.
    First row, columns $1 \leq j \leq q$. Here $\lambda_1 = q+1$ and $\lambda_j'=d$, 
    $
        h_{1 j}=(q+1-j)+(d-1)+1=q+d+1-j.
    $
    The extra box in the first row has length 1, i. e. 
    $
        h_{1, q+1}=1.
    $
    Thus, their product is
    \[
        \prod_{j=1}^q h_{1 j}=\prod_{j=1}^q(q+d+1-j)=\frac{(q+d)!}{d!} .
    \]
    Rows $2 \leq i \leq d$, columns $1 \leq j \leq q$. Here $\lambda_i=q$ and $\lambda_j'=d$, so
    $
        h_{i j}=(q-j)+(d-i)+1=q+d-i-j+1 .
    $
    Thus, their product is
    \[
       \prod_{i=2}^d \prod_{j=1}^q h_{i j}=\frac{(q+d-i)!}{(d-i)!}.
    \]
    Combining all together, we have
    \[
        \prod_{(i, j) \in \lambda} h_{i j}=\frac{(q+d)!}{d!} \prod_{i=2}^d \frac{(q+d-i)!}{(d-i)!}
    \]
    and the formula in the formulation of the Lemma follows immediately.

\end{proof}

\begin{smlemma}\label{lem:dimension_bound}

Let $[q+1, q^{d-1}]$ be the Specht module corresponding to the partition $(q+1, q^{d-1})$. 
Then, for $q\geq d$ the following lower bound holds
\[
    \operatorname{dim} [q+1, q^{d-1}] \geq \left(\frac{d!\sqrt{2 \pi d}\prod_{i=0}^{d-2} i!}{e^d 2^{\frac{d^2-d+2}{2}}} \right) d^{d q} q^{-\frac{d^2+1}{2}} := C_d d^{d q} q^{-\frac{d^2+1}{2}}.
\]

\end{smlemma}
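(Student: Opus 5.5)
The plan is to start from the exact formula of Lemma~\ref{lem:dimension_formula},
\[
\dim[q+1,q^{d-1}]=\frac{(dq+1)!\,d!\prod_{i=0}^{d-2}i!}{(q+d)!\prod_{i=0}^{d-2}(q+i)!},
\]
and to bound the numerator from below and the denominator from above by elementary Stirling-type estimates. The constant $d!\prod_{i=0}^{d-2}i!$ appears verbatim in $C_d$, so it can be kept as is. The proof then reduces to showing
\[
\frac{(dq+1)!}{(q+d)!\prod_{i=0}^{d-2}(q+i)!}\ \ge\ \frac{\sqrt{2\pi d}}{e^d\,2^{(d^2-d+2)/2}}\,d^{dq}q^{-(d^2+1)/2}.
\]

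For the numerator I will use $(dq+1)!\ge (dq)!\ge \sqrt{2\pi dq}\,(dq/e)^{dq}=\sqrt{2\pi d}\,q^{1/2}d^{dq}q^{dq}e^{-dq}$, which is the lower Robbins/Stirling bound. The factor $\sqrt{2\pi d}$ and the factor $d^{dq}$ are produced here, which is a sign this is the intended route.

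For the denominator, I will group the $d$ factorials. There are $d-1$ factors $(q+i)!$ with $0\le i\le d-2$, plus one $(q+d)!$. Each one I will write as $q!$ times a product of at most $d$ extra terms. Since $q\ge d$, every such term is at most $q+d\le 2q$. Hence $(q+i)!\le q!\,(2q)^i$ and $(q+d)!\le q!\,(2q)^d$. The total number of extra factors is $d+\sum_{i=0}^{d-2}i=d+(d-1)(d-2)/2=(d^2-d+2)/2$. This gives exactly the power $2^{(d^2-d+2)/2}$ in $C_d$, together with $q^{(d^2-d+2)/2}$. I am confident this is the intended bookkeeping. The remaining $(q!)^d$ will be bounded above by Stirling's upper estimate, $q!\le e\,q^{q+1/2}e^{-q}$ (or $\le\sqrt{2\pi q}(q/e)^q e^{1/12q}$). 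This gives $(q!)^d\le e^{d}q^{dq+d/2}e^{-dq}$, and it accounts for the $e^{d}$ in $C_d$.

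The final step is to divide. The terms $q^{dq}$ and $e^{-dq}$ cancel, and the power of $q$ becomes $\tfrac12-\tfrac d2-\tfrac{d^2-d+2}{2}=-\tfrac{d^2+1}{2}$, which matches the claim. The only delicate point is to choose the Stirling upper bound so that the constant is exactly $e^d$ with no stray $\sqrt{2\pi}$. The bound $q!\le e\,q^{q+1/2}e^{-q}$, valid for all $q\ge1$, achieves this. If the prefactors come out slightly worse than stated, I would absorb them by using $(dq+1)!$ instead of $(dq)!$ to gain a factor $dq+1$. I expect this constant-matching to be the main obstacle; the exponential and polynomial orders are routine.
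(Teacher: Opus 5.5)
Your proposal is correct and follows essentially the paper's argument: use $(dq+1)!\ge(dq)!$, bound the denominator by $(q!)^d(q+d)^{(d^2-d+2)/2}\le(q!)^d(2q)^{(d^2-d+2)/2}$, and then apply the Robbins bounds $N!\ge\sqrt{2\pi N}(N/e)^N$ and $q!\le e\sqrt{q}\,(q/e)^q$, which give $C_d$ and the exponent $-(d^2+1)/2$ exactly. The only difference is cosmetic: you replace $q+d$ by $2q$ at the start rather than at the end, and the extra factor from $(dq+1)!$ is not needed.
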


\begin{proof}
    From the previous lemma, we have
\[
    \operatorname{dim} [q+1, q^{d-1}]=\frac{(d q+1)!d!\prod_{i=0}^{d-2} i!}{(q+d)!\prod_{i=0}^{d-2}(q+i)!}
\]
and we need to estimate this quantity from below. Using obvious bounds
\[
    (q+d)!\leq q!(q+d)^d, \quad(q+i)!\leq q!(q+d)^i \quad(0 \leq i \leq d-2),
\]
we can express the denominator as
\[
    (q+d)!\prod_{i=0}^{d-2}(q+i)!\leq(q!)^d(q+d)^{\frac{d^2-d+2}{2}},
\]
so we have
\begin{equation}\label{eq:specht_dim_lb}
    \operatorname{dim} [q+1, q^{d-1}] \geq d!\prod_{i=0}^{d-2} i! \cdot \frac{(dq)!}{(q!)^d(q+d)^{\frac{d^2-d+2}{2}}}.
\end{equation}
Notice, that we are interested in scaling of $\operatorname{dim} [q+1, q^{d-1}]$ as $q$ grows, because in our setting
$d$ is fixed and $q$ depends linearly on $n$ (recall that $n=d q+1$). Using a simplified version of Robbins' bounds 
(which are derived from famous Stirling's formula, see~\cite{Robbins1955}), we get 
\[
   \sqrt{2 \pi N}(N / e)^N \leq \sqrt{2 \pi n}\left(\frac{n}{e}\right)^n e^{\frac{1}{12 n+1}} 
   \leq N!\leq \sqrt{2 \pi n}\left(\frac{n}{e}\right)^n e^{\frac{1}{12 n}} \leq e \sqrt{N}(N / e)^N.
\]
Then
\[
    \frac{(d q)!}{(q!)^d} \geq \frac{\sqrt{2 \pi d q}(d q / e)^{d q}}{\left(e \sqrt{q}(q / e)^q\right)^d}=\frac{\sqrt{2 \pi d}}{e^d} d^{d q} q^{-(d-1) / 2}
\]
Substituting into lower bound~\eqref{eq:specht_dim_lb} we get
\[
    \operatorname{dim} [q+1, q^{d-1}] \geq d!\prod_{i=0}^{d-2} i! \cdot \frac{\sqrt{2 \pi d}}{e^d} d^{d q} q^{-(d-1) / 2}(q+d)^{-\frac{d^2-d+2}{2}} 
    \geq \left(\frac{d!\sqrt{2 \pi d}\prod_{i=0}^{d-2} i!}{e^d 2^{\frac{d^2-d+2}{2}}} \right) d^{d q} q^{-\frac{d^2+1}{2}},
\]
where we used the condition $q\geq d$ in the second inequality. This completes our proof.

\end{proof}

The following lemma was proven in~\cite{LarsenShalev2008}. We will need it as a tool for bounding characters in Lemmas~\ref{lem:2_transitive_subgroup} and~\ref{lem:3_transitive_subgroup}.

\begin{smlemma}\label{lem:character_bound}
    Let $\sigma \in S_n$ and $\chi \in \text{Irr}(S_n)$. If $\sigma$ is fixed-point-free, or has $n^{o(1)}$ fixed points, then
    \[
        |\chi(\sigma)| \leq \chi(1)^{1 / 2+o(1)}.
    \]
\end{smlemma}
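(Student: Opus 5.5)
This is the Larsen--Shalev character bound, which the paper imports from~\cite{LarsenShalev2008} rather than proving. A self-contained argument would follow their route, which goes through the Murnaghan--Nakayama rule and a comparison with the degree via the hook-length formula. Concretely, I would write $\chi=\chi^\lambda$ with $\lambda\vdash n$ and let $\sigma$ have cycle type $\mu$ with $f$ fixed points, where $f=n^{o(1)}$. The target is $|\chi^\lambda(\sigma)|\le (\chi^\lambda(1))^{1/2+o(1)}$.

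The first step is to remove the fixed points. Restrict $\chi^\lambda$ to $S_f\times S_{n-f}$ and apply the branching rule; the $S_f$ factor acts trivially on $\sigma$. This gives
\[
|\chi^\lambda(\sigma)|\le\sum_{\nu\subset\lambda,\ |\lambda/\nu|=f} f^{\lambda/\nu}\,|\chi^\nu(\sigma')|,
\]
where $\sigma'$ is fixed-point-free and $f^{\lambda/\nu}\le f!$. Since $f!=e^{o(\log n!)}$, and every $\chi^\nu(1)$ is at least $\chi^\lambda(1)\,n^{-O(f)}$, this reduction costs only a factor $\chi(1)^{o(1)}$. We may therefore assume $\sigma$ is fixed-point-free.

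The second step handles the fixed-point-free case. Split $\sigma$ into its cycles of length $\ge 2$ and apply Murnaghan--Nakayama recursively, removing one rim hook per cycle. Then $|\chi^\lambda(\sigma)|$ is at most the number of rim-hook tableaux of shape $\lambda$ and type $\mu$. The key counting estimate bounds this number by roughly $\sqrt{f^\lambda}$ times a subexponential factor. The intuition is that a rim-hook tableau with all hooks of length $\ge2$ behaves like a standard tableau on about $n/2$ cells. One proves this via the $k$-quotient/core decomposition: for cycles of length $k$, the count is governed by the dimension of the $k$-quotient, which by hook-length is about $(f^\lambda)^{1/k}$. For mixed cycle lengths, interpolate by grouping cycles by length and using the multiplicative hook-length estimates.

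I expect the main obstacle to be making the mixed-cycle-type estimate uniform in $\lambda$. In particular, for near-trivial shapes $\chi^\lambda(1)$ is polynomial in $n$, and the $o(1)$ in the exponent must absorb all the polynomial overheads. That is exactly where Larsen and Shalev's careful argument, with exponent $1/2+o(1)$ rather than $1/2$, is needed. In the paper I would simply cite~\cite{LarsenShalev2008} for this step, as the authors do.
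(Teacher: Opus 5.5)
Citing Larsen--Shalev~\cite{LarsenShalev2008} is exactly what the paper does, and the paper adds nothing beyond the citation. The lemma is their bound $|\chi(\sigma)|\le\chi(1)^{1/m+o(1)}$ for permutations with at most $n^{o(1)}$ cycles of length $<m$, taken at $m=2$. As a proof by citation, your proposal therefore agrees with the paper.

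The self-contained outline you offer alongside it does not work as written, and the trouble already starts in your first step.

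\textbf{Step 1 (removing fixed points).} After branching you have $|\chi^\lambda(\sigma)|\le\sum_\nu f^{\lambda/\nu}|\chi^\nu(\sigma')|$. The resulting loss must be compared with $\chi^\lambda(1)$, not with $n!$.
\begin{itemize}
\item The quantities you control, $f!$ and $n^{O(f)}$, are negligible against $n!$ but not against $\chi^\lambda(1)$ uniformly in $\lambda$.
\item Example: for $\lambda=(n-1,1)$ the target allows only a loss of $\chi^\lambda(1)^{o(1)}=n^{o(1)}$. With $f\approx\log n=n^{o(1)}$ fixed points, already $f!=n^{\Theta(\log\log n)}$.
\item So your justification covers only characters with $\log\chi^\lambda(1)\gg f\log n$.
\end{itemize}
A workable version applies H\"older to the branching identity $\sum_\nu f^{\lambda/\nu}\chi^\nu(1)=\chi^\lambda(1)$. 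This gives $|\chi^\lambda(\sigma)|\le W^{1/2-\epsilon}\chi^\lambda(1)^{1/2+\epsilon}$, where $W=\sum_\nu f^{\lambda/\nu}$ counts the ways to peel $f$ boxes off $\lambda$. You then need a separate proof that $W=\chi^\lambda(1)^{o(1)}$, uniformly in $\lambda$. For shapes with $k=n-\lambda_1$ (or $n-\lambda_1'$) small this is a real comparison: $W$ can be of order $f^{k}$, while $\chi^\lambda(1)$ is only of order $n^{k}$.

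\textbf{Step 2 (fixed-point-free case).} The same uniformity issue is the actual content here, not a technicality. You bound by the unsigned number of rim-hook tableaux and group cycles by length. The overheads this introduces are subexponential but not $\chi^\lambda(1)^{o(1)}$ for characters of small degree:
\begin{itemize}
\item crude counts of intermediate shapes, up to $e^{O(\sqrt n)}$;
\item the polynomial prefactor $\frac{(n/m)!\,m^{n/m}}{(n!)^{1/m}}$ in the Fomin--Lulov bound.
\end{itemize}
For example, take $\lambda=(n-1,1)$ and $\sigma$ a fixed-point-free involution. Fomin--Lulov gives only $O(n^{3/4})$, against a target of $n^{1/2+o(1)}$; the true value is $-1$. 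Controlling these losses for every $\lambda$ at once is the substance of the Larsen--Shalev argument, and your sketch names this step without supplying it.

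So the lemma rests on the citation, as in the paper, and the outline should not be presented as a proof.
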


\begin{smlemma} \label{lem:2_transitive_subgroup}
    There exists infinitely many primes $p$ such that $n$ is a prime power, i. e. $n=p^k$ for any $k\geq 1$, and $n\equiv 1 \pmod{d}$, there is a family of $2$-transitive subgroups of $S_n$, namely
    \begin{equation}\label{eq:AGL}
        G\equiv\operatorname{AGL}(1, n)=\left\{x \mapsto a x+b: a \in \mathbb{F}_n^{\times}, b \in \mathbb{F}_n\right\}\leq S_n,
    \end{equation}
    acting on $\mathbb{F}_n$, such that for sufficiently large $q:=\frac{n-1}{d}$ (i. e. for sufficiently large $n$), we have 
    \[\dim [q+1, q^{d-1}]^G > 0.\]
\end{smlemma}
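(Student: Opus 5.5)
The plan is to compute $\dim[\lambda]^G$ with $\lambda=(q+1,q^{d-1})$ by the orthogonality formula
$$
\dim[\lambda]^G=\frac{1}{|G|}\sum_{g\in G}\chi^{\lambda}(g),\qquad |G|=n(n-1),
$$
and to show that the identity term dominates all the others. The identity contributes $\chi^\lambda(1)/(n(n-1))=D_q/(n(n-1))$. By Lemma~\ref{lem:dimension_bound}, $D_q\ge C_d\,d^{dq}q^{-(d^2+1)/2}$ for $q\ge d$, so this term is exponential in $n$ up to a polynomial factor.

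The key structural input is the fixed-point structure of $\mathrm{AGL}(1,n)$ acting on $\mathbb F_n$. A nonidentity element $x\mapsto ax+b$ has exactly one fixed point if $a\neq1$ (namely $x=b/(1-a)$), and no fixed points if $a=1$, $b\neq0$. Hence every nonidentity $g\in G$ has at most one fixed point, which is certainly $n^{o(1)}$. Lemma~\ref{lem:character_bound} then gives $|\chi^\lambda(g)|\le \chi^\lambda(1)^{1/2+o(1)}$ uniformly over $g\neq1$.

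Combining the two estimates,
$$
\dim[\lambda]^G\ \ge\ \frac{1}{n(n-1)}\Bigl(D_q-(n(n-1)-1)\,D_q^{1/2+o(1)}\Bigr),
$$
which is positive as soon as $D_q^{1/2-o(1)}>n^2$. That holds for large $q$, because $D_q\ge e^{c_d n}$ for $d\ge2$ (the factor $d^{dq}$ beats the polynomial in $q$). Since $\dim[\lambda]^G$ is an integer, it is then at least $1$. Together with Lemma~\ref{lem:M_module_irreducibility}, which identifies $M_{\omega_1}\cong[q+1,q^{d-1}]$, this yields a nonzero $G$-invariant multiplicity vector. For the statement that the family is infinite, I would invoke Dirichlet's theorem: there are infinitely many primes $p\equiv1\pmod d$, and $n=p$ (or $n=p^k$ with $p^k\equiv1\pmod d$) then satisfies the hypotheses with $q\to\infty$.

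I expect the main obstacle to be the uniformity of the $o(1)$ in Lemma~\ref{lem:character_bound}. It has to hold uniformly over all irreducible characters of $S_n$ and all permutations with at most one fixed point, with the error exponent tending to $0$ as $n\to\infty$. I would check the precise Larsen--Shalev formulation to confirm this uniformity, and note that any bound of the form $|\chi(g)|\le\chi(1)^{1-\epsilon}$ with fixed $\epsilon>0$ already suffices, since the gap between $D_q$ and $n^2 D_q^{1-\epsilon}$ is exponential. If that formulation were unavailable, the fallback would be the Murnaghan--Nakayama rule for the near-rectangular shape, combined with the explicit cycle types of affine maps ($n/\mathrm{ord}(a)$ cycles of equal length plus at most one fixed point), to bound $|\chi^\lambda(g)|$ directly.
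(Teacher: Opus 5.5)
Your proposal is correct and follows the paper's proof essentially step for step: the averaging formula $\dim[\lambda]^G=\frac{1}{|G|}\sum_{g}\chi^\lambda(g)$, the observation that every nonidentity affine map fixes at most one point, the Larsen--Shalev bound $|\chi^\lambda(g)|\le\chi^\lambda(1)^{1/2+o(1)}$, and the exponential lower bound on $D_q$ from Lemma~\ref{lem:dimension_bound} to beat the factor $|G|<n^2$. The only piece you leave implicit is the one-line check that $\mathrm{AGL}(1,n)$ is $2$-transitive, which the paper does by sending $(x,y)$ to $(x',y')$ via $z\mapsto az+b$ with $a=(y'-x')/(y-x)$ and $b=x'-ax$.
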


\begin{smremark} \label{rem:infinitely_many_primes}
    First of all, $n$ has to be a prime power, since $n$ is the size of the field $\mathbb{F}_{n}$. 
    The fact that an infinite family of such $p$ (i.e., infinitely many $p$ such that $n=p^k$ for any $k\geq 1$, and $n\equiv 1 \pmod{d}$) exists is a direct consequence of one of Dirichlet's 
    Theorems on Arithmetic Progressions.
    Indeed, the theorem directly states that there are infinitely many primes $p$ such that $p \equiv 1 \pmod{d}$.
    To provide proof for arbitrary $k\geq 1$, we write $p=1+d q$ for some $q\geq 1$, and then we have
    \[
        p^k=(1+d q)^k=\sum_{j=0}^k\binom{k}{j}(d q)^j=1+\sum_{j=1}^k\binom{k}{j}(d q)^j := 1+d \cdot t,
    \]
    thereby $p^k \equiv 1 \pmod{d}$ for any $k\geq 1$.
\end{smremark}

\begin{proof}
The fact that given $G$ is $2$-transitive is easy to see. Indeed, given any two pairs of distinct elements $(x, y)$ and $(x', y')$ in $\mathbb{F}_n$, we can do the following mapping
\[
    z \mapsto a z+b, \quad a=\frac{y^{\prime}-x^{\prime}}{y-x}, \quad b=x^{\prime}-a x
\]
that sends $(x, y)$ to $(x', y')$, so it is $2$-transitive.

Now we need to show that $\dim [q+1, q^{d-1}]^G > 0$. For that, we recall the following formula from the character theory of finite groups:
\[
    \operatorname{dim}\left([\lambda]\right)^G=\frac{1}{|G|} \sum_{g \in G} \chi^\lambda(g)=\frac{1}{|G|}\left(\chi^\lambda(1)+\sum_{g \neq 1}\chi^\lambda(g) \right),
\]
so proving that $\dim [q+1, q^{d-1}]^G > 0$ is equivalent to proving that
\[
    \sum_{1 \neq g \in G}\left|\chi^\lambda(g)\right|<\chi^\lambda(1).
\]
Now we are to apply the bound from Lemma \ref{lem:character_bound}. Let's
prove that every non-identity element of $G$ has at most $1$ fixed point (which is definitely $n^{o(1)}$ fixed points).
Consider $g\neq 1 \in G$, the general form would be $x \mapsto a x+b, \quad a \in \mathbb{F}_n^{\times}, b \in \mathbb{F}_n$. 
If $a=1$, then $g$ is a translation, and it has no fixed points. If $a \neq 1$, then the fixed-point equation $x=a x+b$ has exactly one solution $x=\frac{b}{1-a}$, so $g$ has exactly one fixed point.
Therefore, we can apply the bound from Lemma \ref{lem:character_bound}, and we have
\[
    \left|\chi^\lambda(g)\right| \leq\left(\chi^\lambda(1)\right)^{1 / 2+\varepsilon_n} \quad(1 \neq g \in G) .
\]
where $\varepsilon_n \to 0$ as $n \to \infty$. Group $G$ size can be estimated as follows
$
    |G|=n(n-1)<n^2 .
$
Combining these two inequalities we have
\[
    \sum_{1 \neq g \in G}\left|\chi^\lambda(g)\right| \leq(|G|-1)\left(\chi^\lambda(1)\right)^{1 / 2+\varepsilon_n}<n^2\left(\chi^\lambda(1)\right)^{1 / 2+\varepsilon_n} .
\]
We choose $n$ sufficiently large, we can make $\varepsilon_n<\frac{1}{3}$, and recalling exponential scaling with $n$ 
of $\chi^{\lambda}(1)$ from Lemma~\ref{lem:dimension_bound}, i. e. $\dim \left[q+1, q^{d-1}\right]= \chi^\lambda(1) \geq C_d d^{d q} q^{-\frac{d^2+1}{2}}$ 
($d>1$ in our case) we finally get for sufficiently large $n$ that
\[
    n^2\left(\chi^\lambda(1)\right)^{5 / 6}<\chi^\lambda(1),
\]
which completes our proof.
\end{proof}

\begin{smtheorem}\label{thrm:2_transitive_covariant_code}
        Let
    $
        \mathcal H_L=V_{\omega_1},\;
        \mathcal H_P=(V_{\omega_1})^{\otimes n}
    $
    and $n$ is such that $n \equiv 1(\bmod d)$ and $n=p^k$, where $p$ is a prime number and $k$ is a positive integer.
    Equip \(\mathcal H_L\) with the fundamental representation of
    \(\mathfrak{su}(d)\), and equip \(\mathcal H_P\) with the transversal
    representation given by
    \begin{equation}
    \begin{gathered}
    t_a \rightarrow \sum_{i=1}^n t_a^{(i)}, \quad a=1,\ldots,d^2-1,
    \end{gathered}
    \end{equation} 
    where $t_a^{(i)}$ is a fundamental of $t_a$ on the $i$-th physical space $V_{\omega_1}$.
    For sufficiently
    large $n$ there exists an $\mathfrak{su}(d)$-covariant encoding $\mathcal{E}$ with respect to these representations, 
    such that the code space is invariant under the action of $\mathrm{AGL(1, n)}$. 
\end{smtheorem}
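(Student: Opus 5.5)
The plan is to assemble the theorem from three ingredients already in place: the multiplicity-vector description of covariant encodings, Eq.~\eqref{eq:mult_vec_enc}–\eqref{eq:schur_transform_enc}; the identification of $M_{\omega_1}$ as a Specht module (Lemma~\ref{lem:M_module_irreducibility}); and the existence of a nonzero $\mathrm{AGL}(1,n)$-invariant vector (Lemma~\ref{lem:2_transitive_subgroup}).

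\textbf{Step 1: nonvanishing of the fundamental sector.} Since $n\equiv 1\pmod d$, we write $n=qd+1$. Lemma~\ref{lem:fundamental_in_extirior_power} with $r=1$ then shows that $V_{\omega_1}$ occurs in $(V_{\omega_1})^{\otimes n}$. Lemma~\ref{lem:M_module_irreducibility} identifies the multiplicity space $M_{\omega_1}$ with the Specht module $[q+1,q^{d-1}]$ as an $S_n$-module. Here $S_n$ acts through the representation $R_{\omega_1}$ of Eq.~\eqref{eq:group_act_on_code}, because permutations commute with the transversal $SU(d)$ action.

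\textbf{Step 2: choice of the permutation action.} Since $n=p^k$, we label the physical sites by the elements of $\mathbb F_n$. This embeds $G=\mathrm{AGL}(1,n)$ into $S_n$ as in Eq.~\eqref{eq:AGL}. Restricting $R_{\omega_1}$ to $G$ gives $M_{\omega_1}^{G}\cong[q+1,q^{d-1}]^G$. By Lemma~\ref{lem:2_transitive_subgroup}, this space is nonzero once $q$ is large enough. I will note explicitly that the proof of that lemma uses only $n=p^k$, $n\equiv1\pmod d$ and $n$ large. It applies the Larsen--Shalev bound (Lemma~\ref{lem:character_bound}), which is usable because every nontrivial affine map has at most one fixed point, together with the exponential lower bound of Lemma~\ref{lem:dimension_bound}. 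So the hypotheses of the theorem are exactly what the lemma needs.

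\textbf{Step 3: construction of the code.} We pick a unit vector $\ket{v_1}\in M_{\omega_1}^{G}$ and define $\mathcal V=U_{\mathrm{Schur}}^\dagger(I_{V_{\omega_1}}\otimes\ket{v_1})$. This map is an isometry because $U_{\mathrm{Schur}}$ is unitary and $\ket{v_1}$ is normalized. Covariance follows from Eq.~\eqref{eq:schur_transform}:
$U_P(g)\mathcal V=U_{\mathrm{Schur}}^\dagger(u(g)\otimes I)(I\otimes\ket{v_1})=\mathcal V u(g)$. Differentiating gives $\mathfrak{su}(d)$-covariance with respect to $t_a\mapsto\sum_i t_a^{(i)}$. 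For invariance, Eq.~\eqref{eq:group_act_on_code} gives $P_g\mathcal V=U_{\mathrm{Schur}}^\dagger(I\otimes R_{\omega_1}(g)\ket{v_1})=\mathcal V$ for every $g\in G$. Hence the code space $V_{\omega_1}\otimes\ket{v_1}$ is invariant under $\mathrm{AGL}(1,n)$.

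\textbf{Main obstacle.} The only substantive point is Step 2, namely the nonvanishing of the $G$-invariants. That work is already done in Lemma~\ref{lem:2_transitive_subgroup}. The remaining care is to check that the identification of the multiplicity space with the Specht module is $S_n$-equivariant, so that $G$-invariance in $[q+1,q^{d-1}]$ really corresponds to invariance of the code space. Schur--Weyl duality guarantees this, since the isomorphism of Lemma~\ref{lem:M_module_irreducibility} is natural in the $S_n$-action.
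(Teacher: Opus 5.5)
Your proposal is correct and follows the paper's proof: you identify $M_{\omega_1}$ with the Specht module $[q+1,q^{d-1}]$ via Lemma~\ref{lem:M_module_irreducibility}, take a unit vector in the nonzero $\mathrm{AGL}(1,n)$-invariant subspace supplied by Lemma~\ref{lem:2_transitive_subgroup}, and use it as the multiplicity vector. Your explicit checks of isometry, covariance, and exact invariance $P_g\mathcal V=\mathcal V$ through the Schur transform spell out what the paper leaves implicit, and the appeal to Lemma~\ref{lem:fundamental_in_extirior_power} is redundant, since a nonzero Specht module already guarantees that the fundamental sector is present.
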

\begin{proof}
    As we pointed out before, choosing a code space comes to choosing a multiplicity vector in the multiplicity space $M_{\omega_1}$,
    corresponding to the fundamental representation $V_{\omega_1}$ in the decomposition of physical representation $(V_{\omega_1})^{\otimes n}$ -- that is our only "degree of freedom" in code construction. 
    For this choice of individual physical space, we prove in Lemma~\ref{lem:M_module_irreducibility} that the multiplicity space is isomorphic to the Specht module $M_{\omega_1}\cong [q+1, q^{d-1}]$, 
    where $q=\frac{n-1}{d}$. According to Lemma~\ref{lem:2_transitive_subgroup}, there is a subgroup of $S_n$ that acts 2-transitively on $\{1, \dots, n\}$ and has the invariant subspace in the multiplicity space $[q+1, q^{d-1}]$ 
    (for sufficiently large $n$). We don't care about the structure of this subgroup (the detailed description can be found in the proof of Lemma~\ref{lem:2_transitive_subgroup}), 
    and we will denote it as $G_2$. We denote the invariant subspace in the multiplicity space as $[q+1, q^{d-1}]^{G_2}$. 
    So we choose our multiplicity vector to be one of the vectors from $[q+1, q^{d-1}]^{G_2}$, so our code space is invariant under the action of $G_2$. 
    (That is actually the reason why we don't care about the structure of $G_2$ -- we will only leverage $G_2$ being 2-transitive and the chosen code space being invariant under its action).
\end{proof}

\begin{smlemma} \label{lem:3_transitive_subgroup}
    There exist infinitely many primes $p$ such that $n-1$ is a prime power, i. e. $n-1=p^k$ for any $k\geq 1$, and $n\equiv 1 \pmod{d}$, there is a family of $3$-transitive subgroups of $S_n$, namely
    \begin{equation}\label{eq:3_transitive_subgroup}
        G\equiv\operatorname{PGL}(2, n-1)=\left\{A \in M_{2 \times 2}\left(\mathbb{F}_{n-1}\right): \operatorname{det} A \neq 0\right\}/\left\{\lambda I_2: \lambda \in \mathbb{F}_{n-1}^{\times}\right\}\leq S_n,
    \end{equation}
    acting on $\mathbb{P}^1\left(\mathbb{F}_{n-1}\right)=\mathbb{F}_{n-1} \cup\{\infty\}$, such that for sufficiently large $q:=\frac{n-1}{d}$ (i. e. for sufficiently large $n$), we have 
    \[\dim [q+1, q^{d-1}]^G > 0.\]
\end{smlemma}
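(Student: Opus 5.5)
The plan mirrors the proof of Lemma~\ref{lem:2_transitive_subgroup}, with the affine group replaced by $\operatorname{PGL}(2,n-1)$ acting on the projective line. There are four steps: an arithmetic existence remark, $3$-transitivity, control of fixed points, and a character-sum estimate.

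\emph{Existence of suitable $n$.} By Dirichlet's theorem there are infinitely many primes $p\equiv 1\pmod d$. For any such $p$ and $k\ge1$ we have $p^k\equiv1\pmod d$, exactly as in Remark~\ref{rem:infinitely_many_primes}. Note that we need $n\equiv1\pmod d$ for the multiplicity space to be $[q+1,q^{d-1}]$ via Lemma~\ref{lem:M_module_irreducibility}. Since $n-1=p^k$, this requires $d\mid p^k$, which is precisely the hypothesis $d=p^r$, $k\ge r$ used in Theorem~\ref{thrm:main_3_transitive_covariant_code}. I would therefore take $p$ to be the prime with $d=p^r$ and let $k\to\infty$, so infinitely many admissible $n$ exist.

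\emph{$3$-transitivity.} Given an ordered triple of distinct points $(x,y,z)$ of $\mathbb P^1(\mathbb F_{n-1})$, the Möbius map
$w\mapsto \frac{(w-x)(y-z)}{(w-z)(y-x)}$ (with the usual conventions at $\infty$) sends $(x,y,z)$ to $(0,1,\infty)$. Composing one such map with the inverse of another sends any ordered triple to any other. The action of $\operatorname{PGL}(2,n-1)$ on the projective line is faithful, so $G\le S_n$.

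\emph{Fixed points.} A non-identity element has a representative matrix $A$ that is not scalar. A fixed point $[v]$ is an eigenline of $A$, and a non-scalar $2\times2$ matrix has at most two eigenlines. Hence every $g\ne1$ fixes at most $2=n^{o(1)}$ points, and Lemma~\ref{lem:character_bound} gives $|\chi^\lambda(g)|\le\chi^\lambda(1)^{1/2+\varepsilon_n}$ with $\varepsilon_n\to0$.

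\emph{Character sum.} We use $\dim[\lambda]^G=\frac1{|G|}\sum_{g}\chi^\lambda(g)$ and $|G|=(n-1)n(n-2)<n^3$. It therefore suffices to show
$n^3\chi^\lambda(1)^{1/2+\varepsilon_n}<\chi^\lambda(1)$. Take $n$ large enough that $\varepsilon_n<1/3$. Lemma~\ref{lem:dimension_bound} gives $\chi^\lambda(1)\ge C_d d^{dq}q^{-(d^2+1)/2}$, which is exponential in $n$, so $n^3<\chi^\lambda(1)^{1/6}$ for all large $n$. This makes the sum over $g\ne1$ strictly smaller than $\chi^\lambda(1)$, so the invariant subspace is nonzero.

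The main obstacle is the applicability of the Larsen--Shalev bound, which I am taking as a black box. Its $o(1)$ is uniform over all $\chi$ and over all $\sigma$ with $n^{o(1)}$ fixed points. With that in hand, the only new inputs compared with the $2$-transitive case are the at-most-two-fixed-points count and the cubic bound on $|G|$, both of which are absorbed by the exponential growth of $\dim[q+1,q^{d-1}]$.
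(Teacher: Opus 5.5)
Your proof is correct and follows the paper's argument essentially step for step. The steps are the cross-ratio map for $3$-transitivity, the bound of at most two fixed points per non-identity element (your eigenline argument is slightly cleaner than the paper's root count for a quadratic, since it handles $\infty$ automatically), $|G|=(n-1)n(n-2)<n^3$, and the Larsen--Shalev bound absorbed by the exponential lower bound on $\dim[q+1,q^{d-1}]$. Your arithmetic step is also right to override the statement's ``infinitely many primes'' wording: $n-1=p^k$ together with $n\equiv1\pmod d$ forces $d=p^r$, so there is a single admissible prime and the infinitude comes from $k\ge r$. Your opening appeal to Dirichlet's theorem is beside the point, though, since it yields $p^k\equiv1\pmod d$ rather than the needed $d\mid p^k$.
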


\begin{smremark}
    Unfortunately, we have to impose $d$ being a prime power. Indeed, $n-1$ is the size of the field $\mathbb{F}_{n-1}$, so $n-1$ must be a prime power,
    and since $n-1=d q$, then $d$ must be a prime power as well. It is not restrictive, because if $d=p^r$ it means that we have a system of $r$ particles
    of dimension $p$, which we are aiming to encode -- this is quite a common scenario.
    Nevertheless, as it was pointed out before, it doesn't fundamentally restrict the absence of a 3-transitive group, which will serve our purposes
    for arbitrary $d$.
\end{smremark}

\begin{proof}
    The idea for this proof is the same as for Lemma~\ref{lem:2_transitive_subgroup}. 
    First, we need to show that given $G$ is $3$-transitive. It is enough to show that every triple $(x_1, x_2, x_3)$ can be mapped to $(\infty, 0, 1)$. One can easily check that such a mapping is given by the following formula
    \[
        T_x(t)=\frac{\left(t-x_2\right)\left(x_3-x_1\right)}{\left(t-x_1\right)\left(x_3-x_2\right)},
    \]
    so it is indeed $3$-transitive, and, hence, $3$-transitive.

    In Lemma~\ref{lem:2_transitive_subgroup}, we showed that showing $\dim [q+1, q^{d-1}]^G > 0$ is equivalent to showing that $\sum_{1 \neq g \in G}\left|\chi^\lambda(g)\right|<\chi^\lambda(1)$. 
    We can apply the same bound from Lemma~\ref{lem:character_bound} as in Lemma~\ref{lem:2_transitive_subgroup}, because every non-identity element of $G$ has at most $2$ fixed points. 
    Indeed, a non-trivial element of $G$ is described by a Mobius transformation (in projective coordinates) 
    $
        x \mapsto \frac{a x+b}{c x+d},
    $
    for which fixed point equation is given by
    $
        x = \frac{a x+b}{c x+d},
    $
    equivalent to
    $
        c x^2+(d-a) x-b=0,
    $
    that is a quadratic equation, so it has at most $2$ solutions. Therefore, every non-identity element of $G$ has at most $2$ fixed points.

    Group $G$ size can be estimated as follows
    $
        |G|=(n-1) n(n-2)<n^3,
    $
    so, as before, we have
    \[
        \sum_{1 \neq g \in G}\left|\chi^\lambda(g)\right| \leq n^3\left(\chi^\lambda(1)\right)^{1 / 2+\varepsilon_n}
    \]
    and we again choose $n$ sufficiently large, we can make $\varepsilon_n<\frac{1}{3}$, so
    \[
        \sum_{1 \neq g \in G}\left|\chi^\lambda(g)\right| \leq n^3\left(\chi^\lambda(1)\right)^{5 / 6}
    \]
    the rest of the proof is the same as in Lemma~\ref{lem:2_transitive_subgroup}, which completes our proof.
\end{proof}

\begin{smtheorem}\label{thrm:3_transitive_covariant_code}
        Let
    $
        \mathcal H_L=V_{\omega_1},\;
        \mathcal H_P=(V_{\omega_1})^{\otimes n}
    $
    with $d=p^r$ where $p$ is prime and $r$ is a positive integer, 
    and $n$ is such that $n-1=p^k$, where $k \geq r$ is a positive integer. 
    Equip \(\mathcal H_L\) with the fundamental representation of
    \(\mathfrak{su}(d)\), and equip \(\mathcal H_P\) with the transversal
    representation given by
    \begin{equation}
    \begin{gathered}
    t_a \rightarrow \sum_{i=1}^n t_a^{(i)}, \quad a=1,\ldots,d^2-1,
    \end{gathered}
    \end{equation} 
    where $t_a^{(i)}$ is a fundamental of $t_a$ on the $i$-th physical space $V_{\omega_1}$.
    For sufficiently
    large $n$ there exists an $\mathfrak{su}(d)$-covariant encoding $\mathcal{E}$ with respect to these representations, 
    such that the code space is invariant under the action of $\mathrm{PGL}(2, n-1)$. 
\end{smtheorem}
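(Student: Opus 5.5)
The plan follows the proof of Theorem~\ref{thrm:2_transitive_covariant_code}, with $\mathrm{AGL}(1,n)$ replaced by $\operatorname{PGL}(2,n-1)$. As noted in Section~\ref{subsec:main_covar_enc}, every covariant encoding has the form $\mathcal{V}=U_{\mathrm{Schur}}^\dagger(I_{V_{\omega_1}}\otimes\ket{m})$ for a unit vector $\ket m\in M_{\omega_1}$. The code space $V_{\omega_1}\otimes\ket m$ is invariant under a subgroup $G_P\subseteq S_n$ as soon as $\ket m\in M_{\omega_1}^{G_P}$, because $P_g=\bigoplus_\lambda I_{V_\lambda}\otimes R_\lambda(g)$ by Eq.~\eqref{eq:group_act_on_code}. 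The theorem therefore reduces to three facts: the isotypic component $V_{\omega_1}\otimes M_{\omega_1}$ is nonzero, the group $\operatorname{PGL}(2,n-1)$ acts on the $n$ physical sites, and the invariant subspace $M_{\omega_1}^{\operatorname{PGL}(2,n-1)}$ is nonzero.

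First I check the arithmetic. Since $d=p^r$ and $n-1=p^k$ with $k\ge r$, we get $d\mid n-1$, so $n\equiv 1\pmod d$. Lemma~\ref{lem:fundamental_in_extirior_power} with $r=1$ then gives $\dim M_{\omega_1}>0$. Lemma~\ref{lem:M_module_irreducibility} identifies $M_{\omega_1}\cong[q+1,q^{d-1}]$ as $S_n$-modules, with $q=(n-1)/d=p^{k-r}$.

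Next I fix the action. Since $n-1=p^k$ is a prime power, the field $\mathbb F_{n-1}$ exists. I label the $n$ physical sites by the points of $\mathbb P^1(\mathbb F_{n-1})=\mathbb F_{n-1}\cup\{\infty\}$, which has exactly $n$ points. Möbius transformations then embed $\operatorname{PGL}(2,n-1)$ into $S_n$. Permuting tensor factors commutes with the transversal $SU(d)$ action, so the restriction of $R_{\omega_1}$ to this subgroup is well defined.

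Finally, Lemma~\ref{lem:3_transitive_subgroup} states that $\dim[q+1,q^{d-1}]^{\operatorname{PGL}(2,n-1)}>0$ once $q$ is large enough. I choose any unit vector $\ket{v_2}$ in this invariant subspace and set $\mathcal{V}=U_{\mathrm{Schur}}^\dagger(I\otimes\ket{v_2})$. This $\mathcal{V}$ is $\mathfrak{su}(d)$-covariant by construction, and it satisfies $P_g\mathcal{V}=\mathcal{V}$ for every $g\in\operatorname{PGL}(2,n-1)$, which proves the theorem.

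The only substantive step is the nonvanishing of the invariant subspace, and it is carried entirely by Lemma~\ref{lem:3_transitive_subgroup}. That lemma rests on three ingredients: a non-identity Möbius map has at most two fixed points; the Larsen--Shalev bound of Lemma~\ref{lem:character_bound} then controls each such character value; and this is combined with $|G|<n^3$ and the exponential growth of $\dim[q+1,q^{d-1}]$ from Lemma~\ref{lem:dimension_bound}. With that lemma assumed, the remaining work is bookkeeping, and the only detail to watch is that "sufficiently large $n$" is taken along the sequence $n=p^k+1$.
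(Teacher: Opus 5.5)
Your proposal is correct and matches the paper's argument. The paper proves this theorem by repeating the proof of Theorem~\ref{thrm:2_transitive_covariant_code}: it identifies $M_{\omega_1}\cong[q+1,q^{d-1}]$ via Lemma~\ref{lem:M_module_irreducibility} and picks the multiplicity vector in the $\operatorname{PGL}(2,n-1)$-invariant subspace supplied by Lemma~\ref{lem:3_transitive_subgroup}, and your explicit check that $d=p^r$ divides $n-1=p^k$ (so $n\equiv 1\pmod d$) fills in a step the paper leaves implicit.
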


\begin{proof}
    The proof is the same as for Theorem~\ref{thrm:2_transitive_covariant_code}, but instead of a 2-transitive subgroup, 
    we use a 3-transitive subgroup defined in Lemma~\ref{lem:3_transitive_subgroup}.
\end{proof}

\begin{smlemma}\label{lem:generalisation_of_Casimir_identity}
Consider $SU(d)$ representation of the form $(V_{\omega_1})^{\otimes n}$, where $V_{\omega_1}$ is 
a fundamental representation of $SU(d)$. Let $\{t_a^{(i)}\}$ be representation of generators of $\mathfrak{su}(d)$ in the $i$-th copy of $V_{\omega_1}$, where $a=1, \ldots, d^2-1$ and $i=1, \ldots, n$. Then the following identity holds
\[
    \sum_a\left(\sum_{j=1}^n t^{(j)}_a\right)^2=\frac{n\left(d^2-1\right)}{2 d} I+2 \sum_{i<j} \sum_a t^{(i)}_a t^{(j)}_a.
\]
\end{smlemma}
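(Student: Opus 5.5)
The plan is to expand the square and split it into diagonal and off-diagonal contributions. I would write
\[
\sum_a\Bigl(\sum_{j=1}^n t_a^{(j)}\Bigr)^2=\sum_a\sum_{j=1}^n \bigl(t_a^{(j)}\bigr)^2+\sum_a\sum_{i\neq j} t_a^{(i)}t_a^{(j)}.
\]
Operators acting on distinct tensor factors commute, so $t_a^{(i)}t_a^{(j)}=t_a^{(j)}t_a^{(i)}$ for $i\neq j$. The off-diagonal sum therefore collapses to $2\sum_{i<j}\sum_a t_a^{(i)}t_a^{(j)}$, which is exactly the second term of the claimed identity. This step is purely formal.

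The only real content is the single-site quadratic Casimir $\sum_a (t_a)^2$ on $V_{\omega_1}$. I would obtain it directly from the multiplication rule~\eqref{eq:sudgenerators} by setting $b=a$ and summing over $a$:
\[
\sum_a t_a t_a=\frac{d^2-1}{2d}I_d+\frac12\sum_{a,c}\bigl(i f_{aac}+d_{aac}\bigr)t_c .
\]
The term $f_{aac}$ vanishes by total antisymmetry of $f$. The term $\sum_a d_{aac}=\sum_a d_{caa}$ vanishes by the last identity in~\eqref{eq:sudidentities}, using total symmetry of $d$. This leaves $\sum_a t_a^2=\frac{d^2-1}{2d}I_d$. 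As a cross-check, taking the trace gives $d\cdot\frac{d^2-1}{2d}=\frac{d^2-1}{2}$, which matches $\sum_a\operatorname{Tr}(t_at_a)=\frac12(d^2-1)$ from~\eqref{eq:sudnormalisation}.

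To finish, I would tensor this single-site identity with the identity on the other factors. Each of the $n$ diagonal terms then contributes $\frac{d^2-1}{2d}I$, for a total of $\frac{n(d^2-1)}{2d}I$, and adding the off-diagonal part gives the claimed identity.

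I expect no real obstacle. The one point to check is the vanishing of the contracted $d$-symbol, which I would take from~\eqref{eq:sudidentities}. If that identity were not available, I would instead argue via Schur's lemma: $\sum_a t_a^2$ commutes with the irreducible fundamental action, so it is a scalar, and the trace fixes that scalar to $\frac{d^2-1}{2d}$.
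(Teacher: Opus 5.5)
Your proposal is correct and follows the paper's proof: you expand the square, split it into single-site Casimir terms and cross terms, and combine the cross terms into $2\sum_{i<j}\sum_a t_a^{(i)}t_a^{(j)}$. The only differences are that you derive $\sum_a t_a^2=\frac{d^2-1}{2d}I_d$ explicitly from the multiplication rule (using $f_{aac}=0$ and $\sum_a d_{caa}=0$), where the paper simply quotes this Casimir value from standard representation theory, and that you state explicitly the commutation of operators on different sites, which the paper uses without comment.
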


\begin{proof}
    We recall that the global generator operator is
    $
        T_a=\sum_{j=1}^n t^{(j)}_a,
    $
    and the total quadratic Casimir operator
    $
        C_2^{\text {total }}=\sum_a\left(T_a\right)^2.
    $
    Expanding the square of the sum over the sites we get
    \[
        \sum_a\left(T_a\right)^2=\sum_a\left(\sum_{j=1}^n t^{(j)}_a\right)\left(\sum_{k=1}^n t^{(k)}_a\right)=\sum_a \sum_{j=1}^n \sum_{k=1}^n t^{(j)}_a t^{(k)}_a,
    \]
    which can be split into a sum of two terms: "diagonal" term, where $j=k$, and "cross" term, where $j \neq k$, namely
    \[
        \sum_a\left(T_a\right)^2=\sum_{j=1}^n\left(\sum_a\left(t^{(j)}_a\right)^2\right)+\sum_{j \neq k} \sum_a t^{(j)}_a t^{(k)}_a.
    \]
    $\sum_a\left(t^{(j)}_a\right)^2$ is a quadratic Casimir operator for the $j$-th copy of $V_{\omega_1}$, and from the general 
    representation theory of $SU(d)$, we know that it is proportional to the identity operator, with a constant equal to $\frac{d^2-1}{2 d}$, so
    \[
        \sum_{j=1}^n\left(\sum_a\left(t^{(j)}_a\right)^2\right)=\sum_{j=1}^n \frac{d^2-1}{2 d} I=\frac{n\left(d^2-1\right)}{2 d} I.
    \]
    As for the "cross" term, we can rewrite it as follows
    \[
        \sum_{j \neq k} \sum_a t^{(j)}_a t^{(k)}_a=2 \sum_{j<k} \sum_a t^{(j)}_a t^{(k)}_a.
    \]
    Combining the two terms together, we get the desired identity
    \[
        \sum_a\left(T_a\right)^2=\frac{n\left(d^2-1\right)}{2 d} I+2 \sum_{j<k} \sum_a t^{(j)}_a t^{(k)}_a.
    \]
\end{proof}

From now on we will mostly use Einstein summation convention, so repeated indices are summed over.
We use notation $\bar{t}_a$ instead of $t_a$ to emphasize that $\bar{t}_a$ act on logical space.

\begin{smlemma}\label{lem:2-body_action}
Let $\mathcal{V}$ be an $SU(d)$-covariant encoding isometry. For distinct indices $i, j$, the following holds
\[
    \mathcal{V}^{\dagger} t_a^{(i)} t_b^{(j)} \mathcal{V}=\alpha^{(i j)} \delta_{a b} I_L+\beta^{(i j)}_{f} f_{abc} \bar{t}_c+\beta^{(i j)}_{d} d_{abc} \bar{t}_c .
\]
where $\alpha^{(i j)}, \beta^{(i j)}_{f}, \beta^{(i j)}_{d}$ are some constants.
\end{smlemma}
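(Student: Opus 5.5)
The argument has two parts. First, covariance makes the compressed two-body operators transform as an adjoint-squared tensor on the logical space. Second, Schur's lemma forces that tensor into the $\delta$, $f$, $d$ form.

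Write $K_{ab}^{(ij)}:=\mathcal{V}^{\dagger} t_a^{(i)} t_b^{(j)} \mathcal{V}\in\operatorname{End}(V_{\omega_1})$. Because $i\neq j$, the product $t_a^{(i)}t_b^{(j)}$ transforms under conjugation by $U_P(g)$ as
$$\operatorname{Ad}(g)_{a'a}\operatorname{Ad}(g)_{b'b}\,t_{a'}^{(i)}t_{b'}^{(j)},$$
with one adjoint index at each site. Exactly as in the proofs of Propositions~\ref{prop:SU2cov} and~\ref{prop:SUdcov}, covariance $\mathcal{V}u(g)=U_P(g)\mathcal{V}$ then gives
$$u(g)K_{ab}^{(ij)}u(g)^\dagger=\operatorname{Ad}(g)_{a'a}\operatorname{Ad}(g)_{b'b}K_{a'b'}^{(ij)}.$$
Hence the linear map $\kappa:\mathrm{Ad}\otimes\mathrm{Ad}\to\operatorname{End}(V_{\omega_1})$, $e_a\otimes e_b\mapsto K_{ab}^{(ij)}$, is an $SU(d)$-intertwiner.

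Next decompose the target: $\operatorname{End}(V_{\omega_1})\cong\mathbf 1\oplus\mathrm{Ad}$. By Eq.~\eqref{eq:intertwiner_decomp}, $\kappa$ is determined by its restriction to the trivial and adjoint isotypic components of $\mathrm{Ad}\otimes\mathrm{Ad}$ and vanishes on all other components. Dually, $\kappa$ lies in $\operatorname{Hom}_{SU(d)}(\mathrm{Ad}\otimes\mathrm{Ad},\mathbf 1)\oplus\operatorname{Hom}_{SU(d)}(\mathrm{Ad}\otimes\mathrm{Ad},\mathrm{Ad})$.
\begin{itemize}
\item The first space is one-dimensional, because $\mathrm{Ad}$ is irreducible and self-dual. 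It is spanned by $\delta_{ab}$, the Killing form.
\item The second space is the space of invariant tensors in $\mathrm{Ad}^{\otimes 3}$. For $d\ge3$ it is two-dimensional, spanned by $f_{abc}$ and $d_{abc}$. For $d=2$, $d_{abc}\equiv 0$ and only $f_{abc}$ survives, so the statement still holds with $\beta_d$ arbitrary.
\end{itemize}
Expanding in the basis $\{I_L,\bar t_c\}$ of $\operatorname{End}(V_{\omega_1})$, the $I_L$-coefficient of $K_{ab}^{(ij)}$ is an invariant tensor in $a,b$, hence $\alpha^{(ij)}\delta_{ab}$. The $\bar t_c$-coefficient is an invariant three-tensor, hence $\beta_f^{(ij)}f_{abc}+\beta_d^{(ij)}d_{abc}$. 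This is exactly the claimed form.

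The main obstacle is justifying that the space of invariant three-tensors of $\mathrm{Ad}$ is spanned by $f$ and $d$. I would do this in the Pieri/$GL_d$ language of Lemma~\ref{lem:adjoint_in_end}, writing $\mathrm{Ad}\otimes\mathrm{Ad}\cong\mathbf 1\oplus 2\,\mathrm{Ad}\oplus\cdots$ for $d\ge3$. A more direct route is to note that an invariant tensor $c_{abc}$ gives an intertwiner $t_a\otimes t_b\mapsto c_{abc}t_c$. Such a map is a combination of the antisymmetric part $[\cdot,\cdot]$ and the traceless part of the symmetric product $\{\cdot,\cdot\}$ on $\mathfrak{sl}(d)$. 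This follows because $\operatorname{Hom}(\mathrm{Ad}\otimes\mathrm{Ad},\mathrm{Ad})$ has dimension equal to the multiplicity of $\mathrm{Ad}$ in $\mathrm{Ad}\otimes\mathrm{Ad}$, namely $2$, which is the standard count. The identity \eqref{eq:sudgenerators} then identifies these two maps with $f_{abc}$ and $d_{abc}$, respectively.
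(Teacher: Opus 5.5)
Your proposal is correct and follows essentially the same route as the paper: split $\mathcal V^\dagger t_a^{(i)}t_b^{(j)}\mathcal V$ into its $I_L$ and $\bar t_c$ components, use covariance to make the coefficient tensors invariant, and identify $\operatorname{Hom}_{SU(d)}(\mathrm{Ad}\otimes\mathrm{Ad},\mathbf 1)=\mathbb C\,\delta_{ab}$ and $\operatorname{Hom}_{SU(d)}(\mathrm{Ad}\otimes\mathrm{Ad},\mathrm{Ad})=\mathbb C f_{abc}\oplus\mathbb C d_{abc}$. Your version is slightly more careful than the paper's in three places: you write the transformation law with the conjugation by $u(g)$ included, you justify the two-dimensionality by the multiplicity of $\mathrm{Ad}$ in $\mathrm{Ad}\otimes\mathrm{Ad}$, and you treat the case $d=2$, where $d_{abc}\equiv 0$, explicitly.
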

\begin{proof}
    Let's denote $X_{a b}^{(i j)}:=\mathcal{V}^{\dagger} t_a^{(i)} t_b^{(j)} \mathcal{V}$. Since $\operatorname{End}\left(V_{\omega_1}\right) \cong \mathbf{1} \oplus \operatorname{Ad}$,
    the operator can be decomposed as
    \[
        X_{a b}^{(i j)}=S_{a b}^{(i j)} I_L+V_{abc}^{(i j)} \bar{t}_c,
    \]
    where $S_{a b}^{(i j)}$ and $V_{abc}^{(i j)}$ are some constants.
    First of all, let's understand the structure of $S_{a b}^{(i j)}$ and $V_{abc}^{(i j)}$. 
    Under a global $SU(d)$ transformation,
    $$
    \left(\bigotimes_{k=1}^n U^{(k)}(g)\right) t_a^{(i)} \left(\bigotimes_{k=1}^n U^{(k)}(g)\right)^{-1}=\mathrm{Ad}_{ab}(g) t_b^{(i)}, \quad \forall g \in SU(d).
    $$
    Therefore
    $$
    X_{a b}^{(i j)}=\mathrm{Ad}_{a a^{\prime}}(g) \mathrm{Ad}_{b b^{\prime}}(g) X_{a^{\prime} b^{\prime}}^{(i j)} .
    $$
    Then $S_{ab}^{(i j)}$ must be an invariant tensor in $\operatorname{Hom}_{SU(d)}(\operatorname{Ad} \otimes \operatorname{Ad}, \mathbf{1})$, and
    $V_{abc}^{(i j)}$ must be an invariant tensor in $\operatorname{Hom}_{SU(d)}(\operatorname{Ad} \otimes \operatorname{Ad}, \operatorname{Ad})$, where $i, j$ are distinct and fixed and tensor indexes are $a, b, c$.
    Obviously, 
    $$
    \operatorname{Hom}_{S U(d)}(\operatorname{Ad} \otimes \operatorname{Ad}, \mathbf{1})=\mathbb{C} \delta_{a b},
    $$
    since it can be interpreted as $SU(d)$ invariant bilinear form, therefore $S_{a b}^{(i j)}=\alpha^{(i j)}\delta_{ab}$. 
    As for $\operatorname{Hom}_{SU(d)}(\operatorname{Ad} \otimes \operatorname{Ad}$, $\operatorname{Ad})$, one can notice that maps $(X, Y) \mapsto[X, Y]$ -- $\left[t_a, t_b\right]=i f_{abc} t_c$ on basis, -- and $(X, Y) \mapsto\{X, Y\}-\frac{2}{d} \operatorname{tr}(X Y) I$ -- 
    $\left\{t_a, t_b\right\}-\frac{1}{d} \delta_{a b} I=d_{abc} t_c$ on basis, -- are exactly two linear independent $SU(d)$-covariant bilinear maps. And any other map should be a linear combination of these two maps,
    so 
    $$
    \operatorname{Hom}_{SU(d)}(\operatorname{Ad} \otimes \operatorname{Ad}, \operatorname{Ad})=\mathbb{C} f_{abc} \oplus \mathbb{C} d_{abc},
    $$ 
    from which we conclude that $V_{abc}^{(i j)}=\beta^{(i j)}_{f} f_{abc}+\beta^{(i j)}_{d} d_{abc}$, which completes our proof.
\end{proof}

\begin{smlemma}\label{lem:3-body_action}
Let $\mathcal{V}$ be an $SU(d)$-covariant encoding isometry. For distinct indices $i, j, k$, the following holds
\[
    \mathcal{V}^{\dagger} t_a^{(i)} t_b^{(j)} t_c^{(k)} \mathcal{V}=\mu_{f}^{(ijk)} f_{a b c} I_L+\mu_{d}^{(ijk)} d_{a b c} I_L+B_{a b c m}^{(ijk)} \bar{t}_m
\]
where $\mu_{f}^{(ijk)}, \mu_{d}^{(ijk)}$ are some constants and $B_{a b c m}^{(ijk)}$ is some tensor.
\end{smlemma}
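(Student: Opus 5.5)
I will follow the same template as the proof of Lemma~\ref{lem:2-body_action}. Define $X_{abc}^{(ijk)}:=\mathcal{V}^{\dagger} t_a^{(i)} t_b^{(j)} t_c^{(k)} \mathcal{V}\in\operatorname{End}(V_{\omega_1})$.

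Since $\operatorname{End}(V_{\omega_1})\cong\mathbf 1\oplus\mathrm{Ad}$, and $\{I_L,\bar t_m\}$ is a basis, this operator can be decomposed uniquely as
\[
X_{abc}^{(ijk)}=S_{abc}^{(ijk)}I_L+B_{abcm}^{(ijk)}\bar t_m .
\]
The coefficients are given by Hilbert--Schmidt projections:
\[
S_{abc}=\tfrac1d\operatorname{Tr}X_{abc},\qquad B_{abcm}=2\operatorname{Tr}(X_{abc}\bar t_m),
\]
where the second uses the normalization \eqref{eq:sudnormalisation}. For $B$ nothing more is claimed; it is just "some tensor". The content of the lemma is therefore the structure of $S$.

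Next I derive the transformation law. Each $t_a^{(i)}$ transforms in the adjoint under the global action:
\[
U_P(g)\,t_a^{(i)}\,U_P(g)^\dagger=\mathrm{Ad}_{ba}(g)\,t_b^{(i)} .
\]
The sites $i,j,k$ are distinct, so the product transforms by $\mathrm{Ad}\otimes\mathrm{Ad}\otimes\mathrm{Ad}$. Combining this with covariance $\mathcal V^\dagger U_P(g)=u(g)\mathcal V^\dagger$ gives
\[
u(g)X_{abc}u(g)^\dagger=\mathrm{Ad}_{a'a}\mathrm{Ad}_{b'b}\mathrm{Ad}_{c'c}\,X_{a'b'c'} .
\]
Taking the trace of both sides, the left side's trace is invariant. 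Hence $S_{abc}$ is an $SU(d)$-invariant tensor in $(\mathrm{Ad}^{\otimes3})^*$, i.e.\ an element of $\operatorname{Hom}_{SU(d)}(\mathrm{Ad}\otimes\mathrm{Ad},\mathrm{Ad})$. Here I identify $\mathrm{Ad}^*\cong\mathrm{Ad}$ via the Killing form $\delta_{ab}$, as in Lemma~\ref{lem:2-body_action}.

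The proof of Lemma~\ref{lem:2-body_action} already established
\[
\operatorname{Hom}_{SU(d)}(\mathrm{Ad}\otimes\mathrm{Ad},\mathrm{Ad})=\mathbb C f_{abc}\oplus\mathbb C d_{abc}.
\]
It follows that $S_{abc}=\mu_f f_{abc}+\mu_d d_{abc}$. For $d=2$ we have $d_{abc}=0$, so the statement still holds with $\mu_d$ arbitrary.

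The main point needing care is this two-dimensionality claim, which the earlier lemma only asserted. If challenged, I would justify it via $\dim\operatorname{Hom}(\mathrm{Ad}\otimes\mathrm{Ad},\mathrm{Ad})$ equalling the multiplicity of $\mathrm{Ad}$ in $\mathrm{Ad}\otimes\mathrm{Ad}$, which is $2$ for $d\ge3$: the adjoint appears once in the symmetric part and once in the antisymmetric part. For $d=2$ this multiplicity is $1$.

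For $B$ one could further note that it is an invariant in $\mathrm{Ad}^{\otimes4}$, but the lemma does not require this, so I would stop there.
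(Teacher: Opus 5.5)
Your proposal is correct and follows essentially the same route as the paper. Like the paper, you decompose $\mathcal V^\dagger t_a^{(i)}t_b^{(j)}t_c^{(k)}\mathcal V$ in the basis $\{I_L,\bar t_m\}$, use covariance to show that the scalar coefficient is an $\mathrm{Ad}^{\otimes 3}$-invariant tensor, and leave $B_{abcm}^{(ijk)}$ unconstrained; your identification of $\operatorname{Hom}_{SU(d)}(\mathrm{Ad}^{\otimes 3},\mathbf 1)$ with $\operatorname{Hom}_{SU(d)}(\mathrm{Ad}\otimes\mathrm{Ad},\mathrm{Ad})$ by self-duality, together with the multiplicity count and the separate remark for $d=2$, simply supplies the dimension claim that the paper leaves as ``one can check.''
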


\begin{proof}
    We will follow the same logic as in the proof of Lemma~\ref{lem:2-body_action}. Let's denote $X_{a b c}^{(i j k)}:=\mathcal{V}^{\dagger} t_a^{(i)} t_b^{(j)} t_c^{(k)} \mathcal{V}$. Since $\operatorname{End}\left(V_{\omega_1}\right) \cong \mathbf{1} \oplus \operatorname{Ad}$,
    the operator can be decomposed as
    \[
        X_{abc}^{(i j k)}=A_{abc}^{(i j k)} I_L+B_{abcm}^{(i j k)} \bar{t}_m,
    \]
    Under $SU(d)$ transformations, we have
    \[
        X_{abc}^{(i j k)}=\operatorname{Ad}_{a a^{\prime}}(g) \operatorname{Ad}_{b b^{\prime}}(g) \operatorname{Ad}_{c c^{\prime}}(g) X_{a^{\prime} b^{\prime} c^{\prime}}^{(i j k)},
    \]
    which means that $A_{abc}$ must be an invariant tensor in $\operatorname{Hom}_{SU(d)}(\operatorname{Ad}^{\otimes 3}, \mathbf{1})$, and
    $B_{abcm}$ must be an invariant tensor in $\operatorname{Hom}_{SU(d)}(\operatorname{Ad}^{\otimes 3}, \operatorname{Ad})$.
    One can check that 
    $$
    \operatorname{Hom}_{S U(d)}\left(\operatorname{Adj}^{\otimes 3}, \mathbf{1}\right)=\mathbb{C} f_{a b c} \oplus \mathbb{C} d_{a b c},
    $$ 
    so $A_{abc}^{(i j k)}=\mu_{f}^{(ijk)} f_{a b c}+\mu_{d}^{(ijk)} d_{a b c}$. As for $\operatorname{Hom}_{SU(d)}\left(\operatorname{Adj}^{\otimes 3}, \operatorname{Adj}\right)$, it is more complicated, so we will leave $B_{abcm}^{(i j k)}$ as it is, which completes our proof.
\end{proof}

\begin{smproposition}\label{prop:1_and_2_body_action}
    Let $\mathcal{V}$ be encoding isometry of the code defined in the Theorem~\ref{thrm:2_transitive_covariant_code}, then for sufficiently large $n$
    \[
        \mathcal{V}^{\dagger}t^{i}_a \mathcal{V}=\frac{1}{n} \bar{t}_a, \quad \mathcal{V}^{\dagger} t_{a_1}^{\left(i_1\right)} t_{a_2}^{\left(i_2\right)} \mathcal{V}=-\frac{1}{2dn} \delta_{a_1 a_2} I_L,
    \]
    where $i_1, i_2$ are distinct.
\end{smproposition}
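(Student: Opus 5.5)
The plan is to use the structure already established for covariant compressions, and then pin down the unknown scalars with the $2$-transitive symmetry and the total Casimir.

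\emph{One-body term.} The argument of Proposition~\ref{prop:SUdcov} with $r=1$ gives $\mathcal V^\dagger t_a^{(i)}\mathcal V=\alpha^{(i)}\bar t_a$ with $\sum_i\alpha^{(i)}=1$. The code space $V_{\omega_1}\otimes\ket m$ is invariant under $G_2=\mathrm{AGL}(1,n)$, which is in particular transitive. Hence $P_g\mathcal V=\mathcal V$ for all $g\in G_2$, and this gives $\mathcal V^\dagger t_a^{(g(i))}\mathcal V=\mathcal V^\dagger t_a^{(i)}\mathcal V$. So all $\alpha^{(i)}$ are equal, and therefore $\alpha^{(i)}=1/n$. (The phase issue of the cyclic case does not arise, since $\ket m$ is genuinely invariant.)

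\emph{Two-body term.} By Lemma~\ref{lem:2-body_action},
\[
\mathcal V^\dagger t_a^{(i)}t_b^{(j)}\mathcal V=\alpha^{(ij)}\delta_{ab}I_L+\beta_f^{(ij)}f_{abc}\bar t_c+\beta_d^{(ij)}d_{abc}\bar t_c .
\]
Two-transitivity of $G_2$ maps any ordered pair $(i,j)$ to any other ordered pair. The same invariance argument then shows that the three coefficients are independent of the ordered pair. In particular they are symmetric under $i\leftrightarrow j$.

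\emph{Killing the $f$ term.} Since $t_a^{(i)}$ and $t_b^{(j)}$ commute for $i\neq j$, we have $t_a^{(i)}t_b^{(j)}=t_b^{(j)}t_a^{(i)}$. Swapping roles, $X^{(ij)}_{ab}=X^{(ji)}_{ba}$, and by pair-independence this equals $X^{(ij)}_{ba}$. So $X_{ab}$ is symmetric in $a,b$. Because $f_{abc}$ is antisymmetric in $a,b$, this forces $\beta_f=0$.

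\emph{Killing the $d$ term and fixing $\alpha$.} Use Lemma~\ref{lem:generalisation_of_Casimir_identity} compressed to the code. On the left, $\mathcal V^\dagger\sum_a T_a^2\mathcal V=\sum_a\bar t_a^2=\frac{d^2-1}{2d}I_L$, using covariance ($T_a\mathcal V=\mathcal V\bar t_a$). On the right, $\sum_{i<j}\sum_a$ of the compressions yields $\binom n2\bigl(\alpha(d^2-1)I_L+\beta_d\sum_a d_{aac}\bar t_c\bigr)$, and the last term vanishes since $\sum_a d_{aac}=0$. This gives
\[
\frac{d^2-1}{2d}=\frac{n(d^2-1)}{2d}+n(n-1)(d^2-1)\alpha ,
\]
so $\alpha=-\frac{1}{2dn}$.

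For $\beta_d$ we need a second relation, and this step I expect to be the main obstacle. I would compress $T_a t_b^{(i)}$. On one hand, $\mathcal V^\dagger T_a t_b^{(i)}\mathcal V=\bar t_a\mathcal V^\dagger t_b^{(i)}\mathcal V=\frac1n\bar t_a\bar t_b$. On the other hand, expanding $T_a=t_a^{(i)}+\sum_{j\neq i}t_a^{(j)}$ gives
\[
\mathcal V^\dagger t_a^{(i)}t_b^{(i)}\mathcal V+(n-1)\,X_{ab}.
\]
The single-site product $t_at_b$ is reduced by \eqref{eq:sudgenerators}, so its compression equals $\frac{1}{2d}\delta_{ab}I+\frac{1}{2n}(if_{abc}+d_{abc})\bar t_c$. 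The left side $\frac1n\bar t_a\bar t_b$ has the same $\bar t_c$ content, so the $\bar t_c$ pieces cancel exactly. This forces $(n-1)\beta_d=0$, hence $\beta_d=0$ (and consistently $\beta_f=0$). The identity components give $\frac{1}{2dn}=\frac{1}{2d}+(n-1)\alpha$, which reproduces $\alpha=-\frac1{2dn}$ as a check. This single identity in fact handles everything, so I would present it as the main computation and keep the Casimir argument only as confirmation.
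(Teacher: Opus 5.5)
Your proposal is correct. The one-body step, the appeal to Lemma~\ref{lem:2-body_action}, the use of $2$-transitivity to make the coefficients independent of the ordered pair, the commutation argument that kills $\beta_f$, and the Casimir computation giving $\alpha=-\frac{1}{2dn}$ are exactly the paper's steps. Where the paper inserts $P=\mathcal V\mathcal V^\dagger$, you use $T_a\mathcal V=\mathcal V\bar t_a$ directly, which amounts to the same thing.

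The genuine difference is how $\beta_d$ is eliminated. The paper contracts $T_aT_b$ with $d_{abm}$ and splits it into $n$ same-site and $n(n-1)$ cross-site terms. It then uses $d_{abm}d_{abc}=\frac{d^2-4}{d}\delta_{mc}$ to isolate $\beta_d$; this step is vacuous at $d=2$, where $d_{abc}=0$ anyway. You instead compress $T_a t_b^{(i)}$, using covariance on one side only, $\mathcal V^\dagger T_a=\bar t_a\mathcal V^\dagger$. Your bookkeeping is right: it gives $(n-1)X_{ab}=\frac1n\bar t_a\bar t_b-\mathcal V^\dagger t_a^{(i)}t_b^{(i)}\mathcal V=-\frac{n-1}{2dn}\delta_{ab}I_L$.

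This route determines the whole tensor $X_{ab}$ at once and needs no $d$-tensor identities. Lemma~\ref{lem:2-body_action}, the swap argument and the Casimir identity therefore become optional consistency checks rather than necessary steps. It is also the mechanism the paper itself uses one level up, in Proposition~\ref{prop:1_2_and_3_body_action}, where $t_a^{(i)}t_b^{(j)}T_c$ is compressed. So your version makes the two- and three-body computations uniform, and it is essentially the base case of the inductive scheme suggested in the remark after that proposition.

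Finally, before $2$-transitivity is invoked your identity is already a per-site sum rule: $\sum_{j\ne i}\mathcal V^\dagger t_a^{(j)}t_b^{(i)}\mathcal V=-\frac{n-1}{2dn}\delta_{ab}I_L$. It holds for any covariant code whose one-body coefficients equal $1/n$, such as any cyclically invariant one. This makes explicit that $2$-transitivity is needed only to split this sum evenly among the $n-1$ partners.
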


\begin{smremark}
    The existence of an infinite number of $n$ that satisfy the conditions of the proposition is shown in Remark~\ref{rem:infinitely_many_primes}.
\end{smremark}

\begin{proof}
The proof will consist of two steps. First, we will choose a code space, so $t_a^{(i)}$ acts on it independently of $i$, and $t_a^{(i)}t_b^{(j)}$ act the same way on it independent of $(i,j)$ pairs. Second, we will find their absolute values.

In Lemma~\ref{lem:2-body_action}, we prove that the action of two-body operators on the code space has the following form
\[
    \mathcal{V}^{\dagger} t_a^{(i)} t_b^{(j)} \mathcal{V}=\alpha^{(i j)} \delta_{a b} I_L+\beta^{(i j)}_{f} f_{a b c} \bar{t}_c +\beta^{(i j)}_{d} d_{a b c} \bar{t}_c .
\]
where $d_{a b c}$ and $f_{a b c}$ are constants symmetric and antisymmetric on all indexes correspondingly (the action of one-body operators we already know from (\ref{eq:Vt_aV}): $\mathcal{V}^{\dagger} t_a^{(i)} \mathcal{V}=\alpha^{(i)} \bar{t}_a$). 
First of all, because $G_2$ is 2-transitive, it is also 1-transitive, so for any $i$ and $j$ there exists $g \in G_2$ such that
$
    g(i)=j.
$
Since $\mathcal{V}$ is $G_2$-invariant,
\[
    \mathcal{V}^{\dagger} t_a^{(i)} \mathcal{V}=\mathcal{V}^{\dagger} U(g)^{\dagger} t_a^{(i)} U(g) \mathcal{V}=\mathcal{V}^{\dagger} t_a^{(j)} \mathcal{V} .
\]
Therefore
$
\alpha^{(i)}=\alpha^{(j)}.
$
So there exists a constant $\alpha^{\text{I}}$ such that
\[
\mathcal{V}^{\dagger} t_a^{(i)} \mathcal{V}=\alpha^{\text{I}} \bar{t}_a .
\]
Analogously, because $G_2$ is 2-transitive, for any ordered distinct pairs $(i, j)$ and $(k, l)$, there exists $g \in G_2$ with
$
g(i)=k, \quad g(j)=l .
$
Since $\mathcal{V}$ is $G_2$-invariant,
\[
\mathcal{V}^{\dagger} t_a^{(i)} t_b^{(j)} \mathcal{V}=\mathcal{V}^{\dagger} U(g)^{\dagger} t_a^{(i)} t_b^{(j)} U(g) \mathcal{V}=\mathcal{V}^{\dagger} t_a^{(k)} t_b^{(l)} \mathcal{V} .
\]
Therefore
$$
\alpha^{(i j)}=\alpha^{(k l)}=\alpha^{\text{II}}, \quad \beta^{(i j)}_{f}=\beta^{(k l)}_{f}=\beta_{f}, \quad \beta^{(i j)}_{d}=\beta^{(k l)}_{d}=\beta_{d}
$$

Now we are to find the absolute values of $\alpha^{\text{I}}$ and $\alpha^{\text{II}}$. 
From the definition of $\bar{t}_a$ we have
$
\bar{t}_a=\sum_{i=1}^n \mathcal{V}^{\dagger} t_a^{(i)} \mathcal{V}=\sum_{i=1}^n \alpha^{\text{I}} \bar{t}_a=n \alpha^{\text{I}} \bar{t}_a.
$
Since $\bar{t}_a \neq 0$, we have $\alpha^{\text{I}}=\frac{1}{n}$.
Moving to two-body identities
\begin{equation}\label{eq:two_body_identity}
    \mathcal{V}^{\dagger} t_b^{(j)} t_a^{(i)} \mathcal{V}=\alpha^{\text{II}} \delta_{b a} I_L+\beta_{f} f_{b ac} \bar{t}_c +\beta_{d} d_{b ac} \bar{t}_c .
\end{equation}
using tensor index permutation identities
$
    \delta_{b a}=\delta_{a b}, f_{b ac}=-f_{a bc},  d_{b ac}=d_{a bc},
$
we get
\begin{equation}\label{eq:two_body_identity_2}
    \mathcal{V}^{\dagger} t_b^{(j)} t_a^{(i)} \mathcal{V}=\alpha^{\text{II}} \delta_{a b} I_L-\beta_{f} f_{a bc} \bar{t}_c +\beta_{d} d_{a bc} \bar{t}_c ,
\end{equation}
and  from comparing Eq.~\eqref{eq:two_body_identity} and Eq.~\eqref{eq:two_body_identity_2} we have $\beta_{f}=-\beta_{f}$, so $\beta_{f}=0$. Let's find $\alpha^{\text{II}}$ first.
We use the following identity proven in \ref{lem:generalisation_of_Casimir_identity}
\[
    \sum_a\left(\sum_{j=1}^n t^{(j)}_a\right)^2=\frac{n\left(d^2-1\right)}{2 d} I+2 \sum_{i<j} \sum_a t^{(i)}_a t^{(j)}_a.
\]
and act on it with $P=\mathcal{V} \mathcal{V}^{\dagger}$ from both sides. 
Noticing that
\begin{align}
    &P\left(\sum_{j=1}^n t^{(j)}_a\right)^2 P=P\left(\sum_{j=1}^n t^{(j)}_a\right)(\mathbb{1}-P+P)\left(\sum_{j=1}^n t^{(j)}_a\right)P=\\&=P\left(\sum_{j=1}^n t^{(j)}_a\right) 
    (\mathbb{1}-P)\left(\sum_{j=1}^n t^{(j)}_a\right)P+P\left(\sum_{j=1}^n t^{(j)}_a\right) P\left(\sum_{j=1}^n t^{(j)}_a\right)P=\\&=P\left(\sum_{j=1}^n t^{(j)}_a\right) 
    P\left(\sum_{j=1}^n t^{(j)}_a\right)P=\left(P\left(\sum_{j=1}^n t^{(j)}_a\right) P\right)\left(P\left(\sum_{j=1}^n t^{(j)}_a\right)P\right)=\\&=\mathcal{V}\bar{t}_a^2 \mathcal{V}^{\dagger},
\end{align}
left side gives
\[
    P \sum_a\left(\sum_{j=1}^n t^{(j)}_a\right)^2 P=\sum_a \mathcal{V}\vec{t}_a^2 \mathcal{V}^{\dagger} =\frac{d^2-1}{2 d} P,
\]
As for the right side, noticing that
\[
    P \sum_a t_a^{(i)} t_a^{(j)} P=\alpha^{\text{II}} \sum_a\delta_{a a} P+\beta_{d} \sum_a d_{c a a} \mathcal{V}\bar{t}_c \mathcal{V}^{\dagger}=\alpha^{\text{II}}(d^2-1) P,
\]
where we used the fact that $\sum_a d_{c a a}=0$ for all $c$, we have
\begin{align}
    &P \sum_{i<j} \sum_a t^{(i)}_a t^{(j)}_a P=\alpha^{\text{II}}(d^2-1) \sum_{i<j} P=\alpha^{\text{II}} \frac{n(n-1)(d^2-1)}{2} P.
\end{align}
Thus, substituting them back into the identity, we have
\[
    \frac{d^2-1}{2 d} P=n \frac{d^2-1}{2 d} P+n(n-1)\left(d^2-1\right) \alpha^{\text{II}} P,
\]
from which we conclude that $\alpha^{\text{II}}=-\frac{1}{2dn}$.

Now let's find $\beta_{d}$. We first notice that
$
    P d_{a b m} T_a T_b P=d_{a b m} \mathcal{V} \bar{t}_a \bar{t}_b \mathcal{V}^{\dagger},
$
and from Eq.~\eqref{eq:sudgenerators} and tensor index permutation identities, we have
\begin{equation}\label{eq:two_body_identity_log}
    d_{a b m} \bar{t}_a \bar{t}_b=\frac{1}{2} d_{a b m} d_{a b c} \bar{t}_c=\frac{d^2-4}{2 d} \bar{t}_m,
\end{equation}
which gives
\begin{equation}\label{eq:d_tensor_identity}
    P d_{a b m} T_a T_b P=\frac{d^2-4}{2 d} \mathcal{V}\bar{t}_m \mathcal{V}^{\dagger}.
\end{equation}
Now, we decompose $T_a T_b$ into single site operators 
\[
    T_a T_b=\sum_i t_a^{(i)} t_b^{(i)}+\sum_{i \neq j} t_a^{(i)} t_b^{(j)}
\]
First, consider the same site contribution. Using the same derivations as in Eq.~\eqref{eq:two_body_identity_log} we have
\[
    d_{a b m} t_a^{(i)} t_b^{(i)}=\frac{d^2-4}{2 d} t_m^{(i)},
\]
from which we get
\[
    P d_{a b m} t_a^{(i)} t_b^{(i)} P=\frac{d^2-4}{2 d} \cdot \frac{1}{n} \mathcal{V}\bar{t}_m \mathcal{V}^{\dagger}.
\]
As for the different site contributions, we have
\[
    P d_{a b m} t_a^{(i)} t_b^{(j)} P=\alpha^{\text{II}} d_{a b m} \delta_{a b} P+\beta_{d} d_{a b m} d_{a bc} \mathcal{V} \bar{t}_c \mathcal{V}^{\dagger}=\beta_{d} \frac{d^2-4}{d} \mathcal{V} \bar{t}_m \mathcal{V}^{\dagger},
\]
where we used $d_{a b m} \delta_{a b}=0$ and $d_{a b m} d_{a b c}=\frac{\left(d^2-4\right)}{d} \delta_{cm}$.

Thus, the left side of (\ref{eq:d_tensor_identity}) becomes
\[
    P d_{a b m} T_a T_b P=\sum_i P d_{a b n n} t_a^{(i)} t_b^{(i)} P+\sum_{i \neq j} P d_{a b m} t_a^{(i)} t_b^{(j)} P=\frac{d^2-4}{2 d} 
    \mathcal{V}\bar{t}_m \mathcal{V}^{\dagger}+n(n-1) \beta_{d} \frac{d^2-4}{d} \mathcal{V} \bar{t}_m \mathcal{V}^{\dagger},
\]
comparing it to the right side we have
\[
\frac{d^2-4}{2 d} \mathcal{V}\bar{t}_m \mathcal{V}^{\dagger}=\frac{d^2-4}{2 d} \mathcal{V}\bar{t}_m \mathcal{V}^{\dagger}+n(n-1) \beta_{d} \frac{d^2-4}{d} \mathcal{V} \bar{t}_m \mathcal{V}^{\dagger},
\]
which clearly gives $\beta_{d}=0$. Thus, we have shown that
\[
    \mathcal{V}^{\dagger} t_a^{(i)} t_b^{(j)} \mathcal{V}=-\frac{1}{2 d n} \delta_{a b} I_L,
\]
which concludes the proof.
\end{proof}

\begin{smproposition}\label{prop:1_2_and_3_body_action}
    Let $\mathcal{V}$ be encoding isometry of the code defined in the Theorem \ref{thrm:3_transitive_covariant_code}, then for sufficiently large $n$
    \begin{align}
    &\mathcal{V}^{\dagger} t^{(i)}_a \mathcal{V}=\frac{1}{n} \bar{t}_a, \quad \mathcal{V}^{\dagger} t_{a_1}^{(i)} t_{a_2}^{(j)} \mathcal{V}=-\frac{1}{2 d n} \delta_{a_1 a_2} I_L, \\ &\mathcal{V}^{\dagger} t_{a_1}^{(i)} t_{a_2}^{(j)} t_{a_3}^{(k)} \mathcal{V}
    =\frac{1}{2 dn(n-2)} d_{a_1 a_2 a_3} I_L-\frac{1}{2 dn(n-2)}\left(\delta_{a_1 a_2} \bar{t}_{a_3}+\delta_{a_1 a_3} \bar{t}_{a_2}+\delta_{a_2 a_3} \bar{t}_{a_1}\right),
    \end{align}
    where $i, j, k$ are distinct.
\end{smproposition}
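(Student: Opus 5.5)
The one- and two-body identities follow verbatim from the proof of Proposition~\ref{prop:1_and_2_body_action}. $\operatorname{PGL}(2,n-1)$ is 3-transitive, hence 1- and 2-transitive. The code space is invariant under it by Theorem~\ref{thrm:3_transitive_covariant_code}. The same Casimir and $d$-tensor arguments therefore give $\alpha^{\mathrm I}=1/n$, $\alpha^{\mathrm{II}}=-1/(2dn)$, and $\beta_f=\beta_d=0$. The new content is the three-body compression, which I would attack in the same two-step pattern: first use symmetry to reduce the unknowns to a few scalars, then fix those scalars by compressing global identities.

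\emph{Step 1 (structure).} Start from Lemma~\ref{lem:3-body_action}: $X^{(ijk)}_{abc}=\mu_f f_{abc}I_L+\mu_d d_{abc}I_L+B_{abcm}\bar t_m$. By 3-transitivity, $\mu_f,\mu_d,B$ are independent of the ordered triple $(i,j,k)$. Since the $t^{(i)},t^{(j)},t^{(k)}$ act on different sites they commute, so $X$ is totally symmetric in the pairs $(a,i),(b,j),(c,k)$. Combined with triple-independence, this makes $X_{abc}$ totally symmetric in $a,b,c$. Hence $\mu_f=0$, and $B_{abcm}$ is a totally symmetric (in $abc$) invariant tensor in $\operatorname{Hom}(\mathrm{Ad}^{\otimes3},\mathrm{Ad})$.

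Hermiticity of $X_{abc}$ (a product of commuting Hermitian operators) forces real coefficients. I would then claim that the symmetric invariant 4-tensors relevant here are spanned by $\delta_{ab}\delta_{cm}+\delta_{ac}\delta_{bm}+\delta_{bc}\delta_{am}$ and its $d$-type analogue $d_{abe}d_{ecm}+\text{perms}$, and plausibly an extra one for $d\ge4$. Rather than classify them, I would avoid this classification by contracting. Only the contractions $\delta_{ab}B_{abcm}$ and $d_{abe}B_{abcm}$ are needed to pin down the coefficients of the ansatz $B=\gamma(\delta\delta+\text{perms})$.

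\emph{Step 2 (fixing scalars).} Compress $T_aT_bT_c$ with $P=\mathcal V\mathcal V^\dagger$, using the same $P(\cdot)(\mathbb 1-P)(\cdot)P=0$ trick as before, so that $P\,T_aT_bT_c\,P=\mathcal V\bar t_a\bar t_b\bar t_c\mathcal V^\dagger$. Expand $T_aT_bT_c$ into a same-site part, two-site parts and the distinct-triple part. Same-site products reduce via Eq.~\eqref{eq:sudgenerators} to one- and two-body terms, which Step 0 already controls. The distinct-triple part contributes $n(n-1)(n-2)X_{abc}$. Contract first with $d_{abc}$ to extract $\mu_d$, using $d_{abc}d_{abc}=(d^2-4)(d^2-1)/d$. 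Then contract with $\delta_{ab}$ (free indices $c,m$) to extract $\gamma$; this is the Casimir route, since $\sum_a T_aT_aT_c=C_2T_c$.

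Each contraction gives one linear equation with an $n(n-1)(n-2)$ prefactor, from which $\mu_d=\gamma'=1/(2dn(n-2))$ and the $\bar t$ coefficient $-1/(2dn(n-2))$ should fall out.

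\textbf{Main obstacle.} I expect the difficulty to be justifying that $B$ has only the $\delta$-structure. A $dd$-type symmetric term may survive for $d\ge3$, and showing that its coefficient vanishes requires one more independent contraction, e.g.\ with $d_{abe}d_{ecm}$, together with the $SU(d)$ identities for $d_{abe}d_{cde}$ contractions. The claimed final form implicitly asserts this vanishing, so I would verify it by computing $P\,d_{abe}T_aT_b\,T_c P$ in two ways, as was done for $\beta_d$ in Proposition~\ref{prop:1_and_2_body_action}.
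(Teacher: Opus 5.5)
\paragraph{Assessment.} Your Step 1 is the paper's reduction: triple-independence from 3-transitivity, then $\mu_f=0$ and $B_{abcm}$ symmetric in $a,b,c$ because operators on distinct sites commute.

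\paragraph{The gap.} The problem is Step 2 as you plan to execute it. Contracting the global identity with $d_{abc}$ and $\delta_{ab}$ discards most of its content. You then need to know in advance which invariant tensors $B_{abcm}$ can contain. Your ansatz $B=\gamma(\delta_{ab}\delta_{cm}+\delta_{ac}\delta_{bm}+\delta_{bc}\delta_{am})$ is unjustified, and you leave open both the vanishing of a $dd$-type term and the possibility of further structures. As written, the three-body formula is therefore not established. The contraction route can be rescued: for $d\ge4$ the $S_3$-symmetric part of $\operatorname{Hom}_{SU(d)}(\mathrm{Ad}^{\otimes3},\mathrm{Ad})$ is two-dimensional, spanned by your $\delta\delta$ and $dd$ symmetrizations, and it is one-dimensional for $d=2,3$. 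But you would have to prove this and check that your contractions give a nonsingular system.

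\paragraph{The missing observation.} No classification is needed. The identity $\mathcal V^\dagger T_aT_bT_c\mathcal V=\bar t_a\bar t_b\bar t_c$ holds for each fixed $(a,b,c)$, and every term decomposes uniquely in the basis $\{I_L,\bar t_m\}$. So you can read off $\mu_d$ and every component $B_{abcm}$ directly, and the computation is short.
\begin{itemize}
\item Write $h_{abe}:=if_{abe}+d_{abe}$. On one site, $t_at_bt_c=\frac1{4d}h_{abc}I+\beta_{abcg}t_g$ for some fixed tensor $\beta_{abcg}$, and $\bar t_a\bar t_b\bar t_c$ has the same expansion.
\item Since $\sum_i\mathcal V^\dagger t_g^{(i)}\mathcal V=\bar t_g$, the $i=j=k$ part compresses to $\frac{n}{4d}h_{abc}I_L+\beta_{abcg}\bar t_g$. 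The $\beta$-terms therefore cancel, and no products of structure constants ever need evaluating.
\item The three families with exactly two coinciding sites ($i=j$, $j=k$, $i=k$) contribute $(n-1)\bigl[\frac1{2d}\delta_{ab}\bar t_c-\frac1{4d}h_{abc}I_L\bigr]$, $(n-1)\bigl[\frac1{2d}\delta_{bc}\bar t_a-\frac1{4d}h_{bca}I_L\bigr]$ and $(n-1)\bigl[\frac1{2d}\delta_{ac}\bar t_b-\frac1{4d}h_{acb}I_L\bigr]$.
\item Using $h_{abc}+h_{bca}+h_{acb}=if_{abc}+3d_{abc}$, one obtains
\[
n(n-1)(n-2)\,X_{abc}=\frac{n-1}{2d}\Bigl[d_{abc}I_L-\bigl(\delta_{ab}\bar t_c+\delta_{ac}\bar t_b+\delta_{bc}\bar t_a\bigr)\Bigr].
\]
\end{itemize}
This is exactly the claim, and it yields $\mu_f=0$ without a separate symmetry argument.

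\paragraph{Comparison with the paper.} The paper uses the same idea with one fewer layer. It inserts the global generator in a single slot,
\[
\mathcal V^\dagger t_a^{(i)}t_b^{(j)}T_c\mathcal V=\mathcal V^\dagger t_a^{(i)}t_b^{(j)}\mathcal V\,\bar t_c=-\frac1{2dn}\delta_{ab}\bar t_c .
\]
It then expands $T_c$ into the collision terms $t_c^{(i)},t_c^{(j)}$, which reduce to known one- and two-body compressions, plus $(n-2)X_{abc}$, and matches the $I_L$ and $\bar t_m$ coefficients. Your fully global identity is the sum of these identities over $i\neq j$ together with the $i=j$ terms. It carries the same information, once you stop contracting it.
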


\begin{proof}
    Everything about one-body and two-body operators is the same as in the previous proposition, 
    because every 3-transitive group is also 2-transitive, so from Proposition~\ref{prop:1_and_2_body_action} we have
    \begin{equation}\label{eq:VtaV_and_Vta_tbV}
        \mathcal{V}^{\dagger} t^{(i)}_a \mathcal{V}=\frac{1}{n} \bar{t}_a, \quad \mathcal{V}^{\dagger} t_{a_1}^{(i)} t_{a_2}^{(j)} \mathcal{V}=-\frac{1}{2 d n} \delta_{a_1 a_2} I_L.
    \end{equation}

    As we show in Lemma~\ref{lem:3-body_action}, for three-body operators, the general form is given by
    \[
        \mathcal{V}^{\dagger} t_a^{(i)} t_b^{(j)} t_c^{(k)} \mathcal{V}=\mu_f^{(i j k)} f_{a b c} I_L+\mu_d^{(i j k)} d_{a b c} I_L+B_{a b c m}^{(i j k)} \bar{t}_m
    \]
    Using $G_3$ invariance of $\mathcal{V}$ and $3$-transitivity of $G_3$ we get
    \[
        \mathcal{V}^{\dagger} t_a^{(i)} t_b^{(j)} t_c^{(k)}  \mathcal{V}=\mathcal{V}^{\dagger} U(g)^{\dagger} t_a^{(k)} t_b^{(l)} t_c^{(m)} U(g) \mathcal{V}
        =\mathcal{V}^{\dagger} t_a^{(k)} t_b^{(l)} t_c^{(m)} \mathcal{V} .
    \]
    from which we conclude that
    \[
        \mu_f(i, j, k)=\mu_f, \quad \mu_d(i, j, k)=\mu_d, \quad B_{a b c m}(i, j, k)=B_{a b c m},
    \]
    Exploiting the fact that operators on different sites commute, we conclude that $\mu_f=0$ and $B_{a b c m}$ must be symmetric in $a, b, c$,
    so reduced form is
    \[
    \mathcal{V}^{\dagger} t_a^{(i)} t_b^{(j)} t_c^{(k)} \mathcal{V}=\mu_d d_{a b c} I_L+B_{a b c m}\bar{t}_m.
    \]
    Now we are to find $\mu_d$ and $B_{a b c m}$. We will express them in terms of $\alpha^{\text{I}}$ and $\alpha^{\text{II}}$, which we already know. 
    Let
    $
        T_c:=\sum_{i=1}^n t_c^{(i)}
    $
    and recall that $\mathcal{V}^{\dagger} T_c \mathcal{V}=\bar{t}_c$. Hence, using $P=\mathcal{V}\mathcal{V}^{\dagger}$, we get
    \begin{equation}\label{eq:Vta_tbT_cVR}
        \mathcal{V}^{\dagger} t_a^{(i)} t_b^{(j)} T_c \mathcal{V}=\mathcal{V}^{\dagger} t_a^{(i)} t_b^{(j)} \mathcal{V} \bar{t}_c =-\frac{1}{2 d n} \delta_{a b} \bar{t}_c,
    \end{equation}
    On the other hand, expanding
    \[
        T_c=t_c^{(i)}+t_c^{(j)}+\sum_{k \neq i, j} t_c^{(k)},
    \]
    we get
    \begin{equation}\label{eq:Vta_tbT_cVL}
        \mathcal{V}^{\dagger} t_a^{(i)} t_b^{(j)} T_c \mathcal{V}=\mathcal{V}^{\dagger} t_a^{(i)} t_b^{(j)} t_c^{(i)} \mathcal{V}+\mathcal{V}^{\dagger} t_a^{(i)} t_b^{(j)} t_c^{(j)} \mathcal{V}+
        \sum_{k \neq i, j} \mathcal{V}^{\dagger} t_a^{(i)} t_b^{(j)} t_c^{(k)} \mathcal{V} .
    \end{equation}
    For the first and second terms, we write
    \[
        \begin{aligned}
        &\mathcal{V}^{\dagger} t_a^{(i)} t_b^{(j)} t_c^{(i)} \mathcal{V}=\frac{1}{2 d n} \delta_{a c} \bar{t}_b-\frac{1}{4 d n}\left(d_{a c b}+i f_{a c b}\right) I_L\\
        &\mathcal{V}^{\dagger} t_a^{(i)} t_b^{(j)} t_c^{(j)} \mathcal{V}=\frac{1}{2 d n} \delta_{b c} \bar{t}_a-\frac{1}{4 d n}\left(d_{b c a}+i f_{b c a}\right) I_L .
        \end{aligned},
    \]
    where we used (\ref{eq:VtaV_and_Vta_tbV}) and (\ref{eq:sudnormalisation}). As for the third term, we have
    \[
        \sum_{k \neq i, j} \mathcal{V}^{\dagger} t_a^{(i)} t_b^{(j)} t_c^{(k)} \mathcal{V}=(n-2) \mu_d d_{a b c} I_L+(n-2) B_{a b c m} \bar{t}_m,
    \]
    and substituting everything back into (\ref{eq:Vta_tbT_cVL}) and comparing to the right side of (\ref{eq:Vta_tbT_cVR}) we have
    \[
        \begin{aligned}
        -\frac{1}{2 d n} \delta_{a b} \bar{t}_c = & \frac{1}{2 d n} \delta_{a c} \bar{t}_b +\frac{1}{2 d n} \delta_{b c} \bar{t}_a \\
        & +\left((n-2) \mu_d-\frac{1}{2 d n}\right) d_{a b c} +(n-2) B_{a b c m} \bar{t}_m
        \end{aligned}
    \]
    comparison of the corresponding vector and scalar parts gives
    \[
        \begin{gathered}
        \mu_d=\frac{1}{2 d n(n-2)}, \\
        B_{a b c m}=-\frac{1}{2 d n(n-2)}\left(\delta_{a b} \delta_{c m}+\delta_{a c} \delta_{b m}+\delta_{b c} \delta_{a m}\right),
        \end{gathered}
    \]
    which concludes the proof.
\end{proof}

\begin{smremark}
    We conjecture that there exists a proper choice of a code space such that the odd-body coefficients will scale as $O_{d,k}\left(\frac{1}{n^{r+1}}\right)$, 
    where $k=2r+1$ and act on a code space as vector part only, and even-body coefficients will scale as $O_{d,k}\left(\frac{1}{n^r}\right)$, 
    where $k=2r$ and act on a code space as scalar part only. It can be advantageous to use induction to express $k$-body coefficients in terms of lower-body coefficients. 
\end{smremark}

It is obvious, that every operator from $\operatorname{End}(V_{\omega_1}^{\otimes k})$ 
can be expressed as a linear combination of fixed-body operators, i. e. operators of the form
$t_a^{(i)}, t_a^{(i)} t_b^{(j)}, ..., t_{a_1}^{(i_1)} t_{a_2}^{(i_2)} ... t_{a_k}^{(i_k)}$. 
In the lemmas below, we show that if an operator lies entirely in the adjoint isotypic component, then it can be expressed as 
a linear combination of some specific fixed-body operators, which we call covariants.

\begin{smlemma}\label{lem:adjoint-3site}
    Let $V_{\omega_1}$ be a fundamental representation of $SU(d)$. Consider
    $\operatorname{End}(V_{\omega_1}^{\otimes 3})$ as a representation of $SU(d)$. If $X\in \operatorname{End}(V_{\omega_1}^{\otimes 3})$ is an operator
    that lies entirely in the adjoint isotypic component, then it can be expressed as
    \[
        \left.X=\sum_{a=1}^{d^2-1}\left[\sum_{p=1}^3 \alpha_p^a A_a^{(p)}+\sum_{1 \leq p<q \leq 3}\left(\beta_{p q}^a D_a^{(p q)}+\gamma_{p q}^a F_a^{(p q)}\right)+\sum_{r=1}^3\left(\rho_r^a U_a^{(r)}+\sigma_r^a V_a^{(r)}+\tau_r^a W_a^{(r)}\right)\right)\right],
    \]
    where $\alpha_p^a, \beta_{p q}^a, \gamma_{p q}^a, \rho_r^a, \sigma_r^a, \tau_r^a$ are some constants, and we defined the following operators
    \begin{itemize}
    \item  One body covariants define 3 copies of the adjoint representation
        \[
            A_a^{(1)}:=t_a^{(1)}, \quad A_a^{(2)}:=t_a^{(2)}, \quad A_a^{(3)}:=t_a^{(3)}
        \]
    \item  Two body covariants. For each pair of sites $1 \leq p<q \leq 3$ we define
        \[
            D_a^{(p q)}:=d_{a b c} t_b^{(p)} t_c^{(q)}, \quad F_a^{(p q)}:=f_{a b c} t_b^{(p)} t_c^{(q)}.
        \]
        For $d\geq 3$ these are linearly independent, but for $d=2$ we have $D_a^{(p q)}=0$ so only $F_a^{(p q)}$ remains.
        Thus, for $d\geq 3$ we have 6 two-body copies, while for $d=2$ we have only 3 two-body copies of the adjoint representation.
    \item  Three body covariants. We define the following invariant two-body operators
        \[
            \Omega_{12}:=\sum_b t_b^{(1)} t_b^{(2)}, \quad \Omega_{13}:=\sum_b t_b^{(1)} t_b^{(3)}, \quad \Omega_{23}:=\sum_b t_b^{(2)} t_b^{(3)},
        \]
        so covariants that form three-body copies of the adjoint representation are defined as follows
        \[
            U_a^{(1)}:=\Omega_{23} t_a^{(1)}, \quad U_a^{(2)}:=\Omega_{13} t_a^{(2)}, \quad U_a^{(3)}:=\Omega_{12} t_a^{(3)},
        \]
        \[
            \begin{aligned}
            & V_a^{(1)}:=d_{a b e} d_{e c d} t_b^{(1)} t_c^{(2)} t_d^{(3)}, \\
            & V_a^{(2)}:=d_{a c e} d_{e d b} t_b^{(1)} t_c^{(2)} t_d^{(3)}, \\
            & V_a^{(3)}:=d_{a d e} d_{e b c} t_b^{(1)} t_c^{(2)} t_d^{(3)},
            \end{aligned}\quad
            \begin{aligned}
            W_a^{(1)} & :=d_{a b e} f_{e c d} t_b^{(1)} t_c^{(2)} t_d^{(3)}, \\
            W_a^{(2)} & :=d_{a c e} f_{e d b} t_b^{(1)} t_c^{(2)} t_d^{(3)}, \\
            W_a^{(3)} & :=d_{a d e} f_{e b c} t_b^{(1)} t_c^{(2)} t_d^{(3)} .
            \end{aligned}
        \]
        For $d\geq 4$, these are linearly independent, so we have 9 three-body copies of the adjoint representation; for $d=2$, all $d_{a b e}$ vanish, so only $U_a^{(r)}$ remain, 
        and we have 3 three-body copies of the adjoint representation;
        for $d=3$, we have $V_a^{(1)}+V_a^{(2)}+V_a^{(3)}=\frac{1}{3}\left(U_a^{(1)}+U_a^{(2)}+U_a^{(3)}\right)$, so we have 8 three-body copies of the adjoint representation.
    \end{itemize}
\end{smlemma}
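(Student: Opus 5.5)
The approach is to count copies of the adjoint representation in $\operatorname{End}(V_{\omega_1}^{\otimes 3})\cong \operatorname{End}(V_{\omega_1})^{\otimes 3}\cong(\mathbf 1\oplus\mathrm{Ad})^{\otimes 3}$ and to match each copy with one of the listed covariant families. Expanding the tensor product gives
$$
(\mathbf 1\oplus\mathrm{Ad})^{\otimes 3}\cong \mathbf 1\;\oplus\;3\,\mathrm{Ad}\;\oplus\;3\,(\mathrm{Ad}\otimes\mathrm{Ad})\;\oplus\;\mathrm{Ad}^{\otimes 3}.
$$
The three summands $3\,\mathrm{Ad}$ are spanned by the one-body operators $t_a^{(p)}$, which gives $A_a^{(p)}$. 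In the $\mathrm{Ad}\otimes\mathrm{Ad}$ summand on sites $p<q$, the adjoint multiplicity equals $\dim\operatorname{Hom}_{SU(d)}(\mathrm{Ad}\otimes\mathrm{Ad},\mathrm{Ad})$. By the argument of Lemma~\ref{lem:2-body_action}, this space is spanned by $f_{abc}$ and $d_{abc}$, so it has dimension two for $d\ge3$ and one for $d=2$. The corresponding operators are exactly $F_a^{(pq)}$ and $D_a^{(pq)}$. The adjoint projection of any two-body operator $\sum c_{bc}t_b^{(p)}t_c^{(q)}$ is therefore a combination of these. The remaining summand $\mathrm{Ad}^{\otimes3}$ consists of genuine three-body operators $c_{bcd}\,t_b^{(1)}t_c^{(2)}t_d^{(3)}$. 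Its adjoint part is described by invariant tensors in $\operatorname{Hom}_{SU(d)}(\mathrm{Ad}^{\otimes3},\mathrm{Ad})$, that is, invariant rank-4 tensors $B_{abcd}$.

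The key step is to show that the nine tensors $\delta_{cd}\delta_{ab}$ (and their permutations, giving $U^{(r)}$), $d_{abe}d_{ecd}$ and its site permutations (giving $V^{(r)}$), and $d_{abe}f_{ecd}$ and its site permutations (giving $W^{(r)}$) span the space of invariant rank-4 tensors on the adjoint. My first attempt is a dimension count. Decompose
$$
\mathrm{Ad}\otimes\mathrm{Ad}\cong \mathbf 1\oplus\mathrm{Ad}_s\oplus\mathrm{Ad}_a\oplus(\text{other irreps }R),
$$
where for $d\ge4$ the other irreps are four distinct ones appearing once each. Then $\dim\operatorname{Hom}(\mathrm{Ad}^{\otimes2},\mathrm{Ad}^{\otimes2})=\sum_R m_R^2=1+4+4=9$. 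For $d=3$, two irreps ($10,\overline{10}$) coincide in a way that gives $8$, and for $d=2$ one gets $1+1+1=3$. These match the claimed counts 9, 8, 3.

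Linear independence of the nine tensors, for $d\ge4$, could be checked via their Gram matrix under full contraction using the identities \eqref{eq:sudnormalisation} and \eqref{eq:sudidentities}. An alternative is to note that $f f$-type tensors are expressible through $\delta\delta$ and $dd$ via the standard Jacobi-like identity $f_{abe}f_{ecd}=\frac{2}{d}(\delta_{ac}\delta_{bd}-\delta_{ad}\delta_{bc})+d_{ace}d_{bde}-d_{bce}d_{ade}$, and that mixed $df$ terms reduce to the three $W^{(r)}$. For $d=3$ the single relation among the $dd$ tensors, $\sum_r V^{(r)}=\frac13\sum_r U^{(r)}$, accounts for the drop to 8. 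For $d=2$ all $d_{abc}=0$, leaving $U^{(r)}$.

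Finally, the decomposition is invariant as a whole. Any $X$ lying in the adjoint isotypic component splits along the four summands above, each piece being adjoint-valued. Each piece is then a combination, with coefficients carrying the adjoint index $a$, of the listed covariants, which gives the displayed expansion. The main obstacle I expect is the spanning statement for $\operatorname{Hom}(\mathrm{Ad}^{\otimes3},\mathrm{Ad})$: the multiplicity count is standard, but verifying that the specific nine tensors are independent (and finding the exact $d=3$ relation) requires Gram-matrix computations with $d$- and $f$-tensor contraction identities.
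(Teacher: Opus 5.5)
Your proposal follows the same overall logic as the paper: compute the adjoint multiplicity, then show that the listed families realize that many independent copies. The counting step, however, goes a genuinely different way.

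The paper counts globally, $m=\dim\operatorname{End}_{SU(d)}(V^{\otimes4})-\dim\operatorname{End}_{SU(d)}(V^{\otimes3})$. By Schur--Weyl this needs only Specht-module dimensions for partitions of $3$ and $4$, giving $18,17,9$. The paper then asserts independence of all families at once.

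You instead split $\operatorname{End}(V^{\otimes3})\cong(\mathbf 1\oplus\mathrm{Ad})^{\otimes3}$ according to which sites carry a traceless factor, and count sector by sector: $3+3\cdot2+9$, $3+6+8$, $3+3+3$. This requires the decomposition of $\mathrm{Ad}\otimes\mathrm{Ad}$, which the paper's route avoids. In exchange it makes the independence question modular:
\begin{itemize}
\item covariants from different sectors are automatically independent;
\item $D^{(pq)}$ and $F^{(pq)}$ are separated by their $\pm1$ behaviour under exchanging the two sites;
\item the only nontrivial check is confined to $\operatorname{Hom}_{SU(d)}(\mathrm{Ad}^{\otimes3},\mathrm{Ad})$.
\end{itemize}
It also matches how the lemma is used later, where coefficients are read off sector by sector against one-, two- and three-body test operators.

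Two things need fixing.

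First, your explanation of the $d=3$ count is wrong. The representations $10$ and $\overline{10}$ are inequivalent, and if two constituents did coincide, $\sum_R m_R^2$ would go up, not down. The drop from $9$ to $8$ happens because the constituent of $\mathrm{Ad}\otimes\mathrm{Ad}$ of dimension $d^2(d+1)(d-3)/4$ (highest weight $\omega_2+\omega_{d-2}$ for $d\ge4$) disappears at $d=3$. Then $8\otimes8=1\oplus8\oplus8\oplus10\oplus\overline{10}\oplus27$ gives $1+4+1+1+1=8$.

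Second, and more importantly, you leave the decisive step open: that the listed three-body tensors span $\operatorname{Hom}_{SU(d)}(\mathrm{Ad}^{\otimes3},\mathrm{Ad})$. The paper leaves it open too, with its remark that independence is easily checked. Your reduction identities are not an independence test, but they close the gap once combined with the first fundamental theorem for the conjugation action.
\begin{itemize}
\item By that theorem, invariant $4$-tensors on $\mathrm{Ad}$ are spanned by $\operatorname{Tr}(t_at_bt_ct_d)$ over the six cyclic orderings together with $\operatorname{Tr}(t_at_b)\operatorname{Tr}(t_ct_d)$.
\item Each trace expands as $\operatorname{Tr}(t_at_bt_ct_d)=\frac{1}{4d}\delta_{ab}\delta_{cd}+\frac18(d_{abe}+if_{abe})(d_{cde}+if_{cde})$.
\item Your $ff$ identity puts the $ff$ terms in the span of the $U^{(r)}$ and $V^{(r)}$ tensors.
\item The Jacobi-type identity $f_{xye}d_{ezw}+f_{xze}d_{eyw}+f_{xwe}d_{eyz}=0$, which follows from $[t_x,\{t_y,t_z\}]=\{[t_x,t_y],t_z\}+\{t_y,[t_x,t_z]\}$, puts the mixed terms in the span of the $W^{(r)}$ tensors; for example, $f_{abe}d_{ecd}=d_{ade}f_{ebc}-d_{ace}f_{edb}$.
\end{itemize}
Spanning plus your dimension count then gives a basis directly, with no Gram matrix needed. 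For $d=3$, first discard one $V^{(r)}$ using the stated relation.
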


\begin{proof}
    Let's count the number of copies of the adjoint representation in $\operatorname{End}(V_{\omega_1}^{\otimes 3})$. We will 
    find it using character theory. Indeed, multiplicity $m$ of adjoint representation in $\operatorname{End}(V_{\omega_1}^{\otimes 3})$ is given by the following scalar product of characters
    \[
        m=\left\langle\chi_{A d}, \chi_{V^{\otimes 3} \otimes\left(V^*\right)^{\otimes 3}}\right\rangle
    \]
    and using the fact that $\chi_{Ad}=\chi_V \chi_{V^*}-1$(the character of $V_{\omega_1} \otimes V_{\omega_1}^*$ minus the trivial character), and noting that the character of the tensor product is the product of characters, we have:
    \[
        m=\left\langle\chi_V \chi_{V^*}-1,\left(\chi_V \chi_{V^*}\right)^3\right\rangle=\left\langle \chi_V \chi_{V^*},\left(\chi_V \chi_{V^*}\right)^3\right\rangle-\left\langle 1, \left(\chi_V \chi_{V^*}\right)^3\right\rangle.
    \]
    One can easily notice that
    \[
        \left\langle 1, \left(\chi_V \chi_{V^*}\right)^3\right\rangle=\operatorname{dim} \operatorname{End}_{S U(d)}\left(V^{\otimes 3}\right), \quad \left\langle \chi_V \chi_{V^*},\left(\chi_V \chi_{V^*}\right)^3\right\rangle=\operatorname{dim} \operatorname{End}_{S U(d)}\left(V^{\otimes 4}\right),
    \]
    thus, we obtain a convenient formula 
    \[
        m=\operatorname{dim} \operatorname{End}_{S U(d)}\left(V^{\otimes 4}\right)-\operatorname{dim} \operatorname{End}_{S U(d)}\left(V^{\otimes 3}\right).
    \]
    From Shur-Weyl duality
    \[
        \operatorname{dim} \operatorname{End}_{S U(d)}\left(V^{\otimes n}\right)=\sum_{\lambda \vdash n, \ell(\lambda) \leq d}\left(\operatorname{dim}[\lambda]\right)^2.
    \]
    For $n=3$: partitions are $(3),(2,1),(1,1,1)$ and corresponding $\operatorname{dim}[\lambda]$ -- $1,2,1$.
    Hence
    $$
    \operatorname{dim} \operatorname{End}_{S U(d)}\left(V^{\otimes 3}\right)= \begin{cases}6, & d \geq 3 \\ 5, & d=2\end{cases}
    $$
    For $n=4$: partitions are $(4),(3,1),(2,2),(2,1,1),(1,1,1,1)$ and corresponding $\operatorname{dim}[\lambda]$ -- $1,3,2,3,1$.
    Hence
    $$
    \operatorname{dim} \operatorname{End}_{S U(d)}\left(V^{\otimes 4}\right)= \begin{cases}24, & d \geq 4 \\ 23, & d=3 \\ 14, & d=2\end{cases}
    $$
    Therefore the multiplicity of the adjoint in $\operatorname{End}\left(V^{\otimes 3}\right)$ is given by
    \[
        \begin{cases}18, & d \geq 4 \\ 17, & d=3 \\ 9, & d=2\end{cases}.
    \]
    One can easily check that the operators defined in the statement of the lemma are linearly independent, and there are exactly $18$ of them for $d \geq 4$, $17$ for $d=3$ and $9$ for $d=2$, 
    so they form a basis in the subspace of $\operatorname{End}(V_{\omega_1}^{\otimes 3})$ that corresponds to the adjoint isotypic component, which completes our proof.
\end{proof}

\begin{smlemma}\label{lem:trivial-3site}
    Let $V_{\omega_1}$ be a fundamental representation of $SU(d)$. Consider $\operatorname{End}(V_{\omega_1}^{\otimes 3})$ as a representation of $SU(d)$. 
    If $Y\in \operatorname{End}(V_{\omega_1}^{\otimes 3})$ is an operator that lies entirely in the trivial isotypic component, then it can be expressed as
    \[
        Y=\lambda \mathbf{1}+\mu_{12} \Omega_{12}+\mu_{13} \Omega_{13}+\mu_{23} \Omega_{23}+\nu \Delta+\xi \Phi,
    \]
    where $\lambda, \mu_{12}, \mu_{13}, \mu_{23}, \nu, \xi$ are some constants, and we defined the following operators
    \[
        \begin{gathered}
        \Omega_{i j}:=\sum_{m=1}^{d^2-1} t_m^{(i)} t_m^{(j)} \\
        \Delta:=d_{m n p} t_m^{(1)} t_n^{(2)} t_p^{(3)}, \quad \Phi:=f_{m n p} t_m^{(1)} t_n^{(2)} t_p^{(3)}.
        \end{gathered}
    \]
    (Note that for the case $d=2$ we have $\Delta=0$, so only $5$ linearly independent operators remain).
\end{smlemma}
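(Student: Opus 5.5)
The plan mirrors the proof of Lemma~\ref{lem:adjoint-3site}: first count the trivial isotypic multiplicity in $\operatorname{End}(V_{\omega_1}^{\otimes 3})$, then exhibit that many linearly independent $SU(d)$-invariant operators from the proposed list.

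For the count, the trivial isotypic component of $\operatorname{End}(V^{\otimes 3})$ under the adjoint action is exactly the commutant $\operatorname{End}_{SU(d)}(V^{\otimes 3})$. By Schur--Weyl duality its dimension is $\sum_{\lambda\vdash 3,\ell(\lambda)\le d}(\dim[\lambda])^2$. This equals $6$ for $d\ge3$ and $5$ for $d=2$, as already computed in the proof of Lemma~\ref{lem:adjoint-3site}. Hence it suffices to show that the listed operators are invariant and span a space of that dimension.

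Invariance holds because each operator is a full contraction of generators $t^{(p)}$ with invariant tensors. The tensors $\delta_{mn}$, $d_{mnp}$ and $f_{mnp}$ are $\mathrm{Ad}$-invariant, and $U_{P}(g)t_a^{(p)}U_P(g)^\dagger=\mathrm{Ad}_{ba}(g)t_b^{(p)}$. Therefore $\mathbf 1$, $\Omega_{ij}$, $\Delta$ and $\Phi$ all commute with $U_P(g)$.

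The main step is linear independence, and I would prove it by trace orthogonality. Using $\operatorname{Tr}t_a=0$ and $\operatorname{Tr}t_at_b=\tfrac12\delta_{ab}$, the operators $\mathbf 1$, the three $\Omega_{ij}$, and the three-body pair $\{\Delta,\Phi\}$ live in mutually Hilbert--Schmidt orthogonal subspaces, labelled by the set of sites on which they act nontrivially, with each factor traceless. The $\Omega_{ij}$ are nonzero, since $\operatorname{Tr}\Omega_{ij}^2=d\cdot\tfrac14(d^2-1)$. Within the three-body sector, $\operatorname{Tr}(\Delta\Phi^\dagger)\propto d_{mnp}f_{mnp}=0$, which follows from the identity $\sum f_{amn}d_{bmn}=0$. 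Moreover, $\operatorname{Tr}\Delta^\dagger\Delta=\tfrac18\sum d_{mnp}^2=\tfrac18(d^2-1)(d^2-4)/d$ and $\operatorname{Tr}\Phi^\dagger\Phi=\tfrac18 d(d^2-1)$, using the normalizations in Eqs.~\eqref{eq:sudnormalisation} and~\eqref{eq:sudidentities}.

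For $d\ge3$ this gives six nonzero mutually orthogonal operators, hence a basis. For $d=2$ one has $d_{mnp}=0$, so $\Delta=0$ and the remaining five operators form a basis.

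The only point needing care is the commutant dimension and the orthogonality bookkeeping; everything else is routine.
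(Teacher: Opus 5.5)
Your proposal is correct and follows the same route as the paper: it counts the trivial multiplicity as the dimension of the commutant $\operatorname{End}_{SU(d)}(V^{\otimes 3})$ via Schur--Weyl duality ($6$ for $d\ge 3$, $5$ for $d=2$), then checks that the listed operators are $SU(d)$-invariant and linearly independent. Your Hilbert--Schmidt orthogonality argument (orthogonal sectors labelled by site support, $\operatorname{Tr}(\Delta\Phi)\propto\sum d_{mnp}f_{mnp}=0$, and explicitly nonzero norms) supplies the linear-independence check that the paper only asserts with ``one can easily check''.
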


\begin{proof}
    As in the previous lemma, we can find the number of copies of the trivial representation in $\operatorname{End}(V_{\omega_1}^{\otimes 3})$ using character theory. Indeed, in this case
    \[
        \operatorname{dim} \operatorname{End}_{S U(d)}\left(V^{\otimes 3}\right)=\sum_{\lambda \vdash 3, \ell(\lambda) \leq d}(\operatorname{dim}[\lambda])^2.
    \]
    Partitions of 3 are $(3),(2,1),(1,1,1)$ and corresponding $\operatorname{dim}[\lambda]$ -- $1,2,1$. therefore
    \[
        \operatorname{dim} \operatorname{Hom}_{S U(d)}\left(\mathbf{1}, \operatorname{End}\left(V_{\omega_1}^{\otimes 3}\right)\right)= \begin{cases}6, & d \geq 3, \\ 5, & d=2,\end{cases}
    \]
    since partition $(1,1,1)$ is not allowed for $d=2$. 
    One can easily check that the operators defined in the statement of the lemma are linearly independent (for $d\geq 3$ there are 6 of them, for $d=2$ there are 5 of them), 
    and they lie in the trivial isotypic component, so they form a basis.
    This completes our proof.
\end{proof}

\begin{smlemma}\label{lem:covar_3-site_vector_operator}
    If $X_b$ satisfies conditions of Lemma \ref{lem:adjoint-3site}, and the following scalar product identities hold for 
    \[
        \begin{gathered}
            \operatorname{Tr}\left(X_b t_m^{(i)}\right)=\frac{1}{2 n} \delta_{b m},\\
            \operatorname{Tr}\left(X_b t_a^{(i)} t_c^{(j)}\right)=0,\\
            \operatorname{Tr}\left(X_b t_a^{(i)} t_c^{(j)} t_f^{(k)}\right)=-\frac{1}{4 d n (n-2)}\left(\delta_{a c} \delta_{f b} + \delta_{a f} \delta_{c b} + \delta_{c f} \delta_{a b}\right),
        \end{gathered}
    \]
    where $i, j, k$ are distinct site indexes and $a, b, c, f$ are distinct generator indexes,
    then it can be expressed as
    \[
        X_b=\frac{1}{n d^2}\left(t_b^{(1)}+t_b^{(2)}+t_b^{(3)}\right)-\frac{2}{d n(n-2)} \sum_{m=1}^{d^2-1}\left(t_b^{(1)} t_m^{(2)} t_m^{(3)}+t_m^{(1)} t_b^{(2)} t_m^{(3)}+t_m^{(1)} t_m^{(2)} t_b^{(3)}\right).
    \]
\end{smlemma}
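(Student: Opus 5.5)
By Lemma~\ref{lem:adjoint-3site}, $X_b$ lies in the span of the eighteen adjoint covariants (seventeen for $d=3$, nine for $d=2$). The plan is to expand $X_b$ in that basis with unknown coefficients and fix them by the Hilbert--Schmidt data in the hypothesis. The covariants are built from invariant tensors, and $X_b$ transforms as an adjoint vector in the index $b$. So every coefficient $\alpha_p^a,\beta_{pq}^a,\dots$ must itself be an invariant map $\mathrm{Ad}\to\mathrm{Ad}$ in $(a,b)$, which by Schur's lemma is a scalar times $\delta_{ab}$. Hence I will write
\[
X_b=\sum_p\alpha_p t_b^{(p)}+\sum_{p<q}\bigl(\beta_{pq}D_b^{(pq)}+\gamma_{pq}F_b^{(pq)}\bigr)+\sum_r\bigl(\rho_rU_b^{(r)}+\sigma_rV_b^{(r)}+\tau_rW_b^{(r)}\bigr),
\]
with eighteen scalar unknowns.

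Next I will pair $X_b$ against one-, two- and three-body test operators, using the Gram structure of the product basis. The operators $t^{(i)}_m$, $t_a^{(i)}t_c^{(j)}$ and $t_a^{(i)}t_c^{(j)}t_f^{(k)}$ on distinct sites are mutually Hilbert--Schmidt orthogonal across different support sets, because each $t$ is traceless and the trace factorizes over sites. Concretely, $\operatorname{Tr}(t^{(1)}_at^{(1)}_c)=\tfrac d4\cdot... $ reduces to a product of $\tfrac12\delta$ factors and powers of $d$, so the linear system decouples by body number. For one-body pairing, $\operatorname{Tr}(t_b^{(p)}t_m^{(p)})=\tfrac{d^2}{2}\delta_{bm}$ on $V^{\otimes3}$, which gives $\alpha_p=\tfrac{1}{nd^2}$. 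For two-body pairing, $\operatorname{Tr}(D_b^{(pq)}t_a^{(p)}t_c^{(q)})\propto d\,d_{bac}$ and $F$ gives $f_{bac}$. Since $d_{bac}$ and $f_{bac}$ are independent tensors, the vanishing of all two-body traces forces $\beta_{pq}=\gamma_{pq}=0$.

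The main step is the three-body pairing. I expect to compute $\operatorname{Tr}(Y_b\, t_a^{(1)}t_c^{(2)}t_f^{(3)})=\tfrac18\times(\text{tensor})$ for each $Y\in\{U,V,W\}$. The $U^{(r)}$ give $\delta$-products, for example $U^{(1)}_b\mapsto\tfrac18\delta_{cf}\delta_{ab}$. The $V^{(r)}$ give $d_{abe}d_{ecf}$-type tensors and the $W^{(r)}$ give $d\,f$-type tensors. The hypothesis is stated only for distinct generator indices, but by covariance the three-body trace is itself an invariant tensor in $\mathrm{Hom}(\mathrm{Ad}^{\otimes4},\mathbf 1)$, so it suffices to match it as a tensor identity. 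I will argue that the right-hand side is the combination $\delta_{ac}\delta_{fb}+\delta_{af}\delta_{cb}+\delta_{cf}\delta_{ab}$, which lies purely in the span of the $U$-images. For $d\ge4$ the eight... nine tensors $\delta\delta$, $dd$, $df$ are linearly independent, which forces $\sigma_r=\tau_r=0$ and $\rho_r\cdot\tfrac18=-\tfrac{1}{4dn(n-2)}$, that is $\rho_r=-\tfrac{2}{dn(n-2)}$.

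The hardest point is the low-rank cases. For $d=3$ there is the relation $\sum_rV^{(r)}=\tfrac13\sum_rU^{(r)}$, and for $d=2$ the $d$-tensors vanish, so some coefficients are not separately determined. In these cases I will use the stated relation to rewrite any $V$-component back in terms of $U$, which yields the same operator. This shows the displayed expression is the unique operator with the prescribed traces, since an operator with all vanishing test traces is zero because the test operators span $\operatorname{End}(V^{\otimes3})$.
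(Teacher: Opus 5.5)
Your plan is the paper's proof: expand $X_b$ in the covariants of Lemma~\ref{lem:adjoint-3site}, use that the trace pairing separates by body number, and read off $\alpha_p=\frac{1}{nd^2}$, $\beta_{pq}=\gamma_{pq}=0$, $\rho_r=-\frac{2}{dn(n-2)}$ and $\sigma_r=\tau_r=0$. Your closing nondegeneracy remark (the test operators together with $I$ span $\operatorname{End}(V^{\otimes 3})$, and $X_b$ is traceless) in fact makes the Schur reduction unnecessary, along with its tacit assumption that $X_b$ is covariant in $b$, and also the low-rank case analysis, since it then suffices to check that the displayed candidate meets the trace data.

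The one step that does not work is your appeal to covariance to pass from distinct to arbitrary generator indices. The three $\delta\delta$ tensors vanish at pairwise distinct indices; indeed $\frac{1}{nd^2}\sum_p t_b^{(p)}$ satisfies the literal distinct-index hypotheses. So the identities must simply be read as holding for all generator indices, as the paper's proof implicitly does.
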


\begin{proof}
Let's start with one-body coefficients. On the one hand,
\[
    \operatorname{Tr}\left(X_b t_a^{(i)}\right)=\frac{d^2}{2} \alpha_i \delta_{a b}
\]
from which the first scalar product identity gives us $\alpha_i=\frac{1}{n d^2}$ for all $i=1,2,3$. 

As for two-body coefficients, we have
\[
    \operatorname{Tr}\left(D_b^{(i j)} t_a^{(i)} t_c^{(j)}\right)=\frac{d}{4} d_{b a c}, \quad \operatorname{Tr}\left(F_b^{(i j)} t_a^{(i)} t_c^{(j)}\right)=\frac{d}{4} f_{b a c}
\]
for distinct $i, j$ from which we get
\[
    \operatorname{Tr}\left(X_b t_a^{(i)} t_c^{(j)}\right)=\frac{d}{4} \beta_{i j} d_{b a c}+\frac{d}{4} \gamma_{i j} f_{b a c}
\]
and since for all $a, b, c$ our two-body scalar product is $0$ we get $\beta_{12}=\beta_{13}=\beta_{23}=0$, and $\gamma_{12}=\gamma_{13}=\gamma_{23}=0$. 

Finally, for three-body coefficients we have
\[
    \begin{aligned}
    \operatorname{Tr}\left(U_b^{(1)} t_a^{(1)} t_c^{(2)} t_f^{(3)}\right) & =\frac{1}{8} \delta_{a b} \delta_{c f}, \\
    \operatorname{Tr}\left(U_b^{(2)} t_a^{(1)} t_c^{(2)} t_f^{(3)}\right) & =\frac{1}{8} \delta_{a f} \delta_{c b}, \\
    \operatorname{Tr}\left(U_b^{(3)} t_a^{(1)} t_c^{(2)} t_f^{(3)}\right) & =\frac{1}{8} \delta_{a c} \delta_{f b} .
    \end{aligned}
\]
so combining these three equations together we get
\[
    \operatorname{Tr}\left(\sum_{r=1}^3 \rho_r U_b^{(r)} t_a^{(1)} t_c^{(2)} t_f^{(3)}\right)=\frac{1}{8}\left(\rho_1 \delta_{a b} \delta_{c f}+\rho_2 \delta_{a f} \delta_{c b}+\rho_3 \delta_{a c} \delta_{f b}\right)
\]
which in comparison with third scalar product identity gives us $\rho_1=\rho_2=\rho_3=-\frac{2}{d n(n-2)}$. 
As for $V_b^{(r)}$ terms
\[
    \begin{aligned}
    \operatorname{Tr}\left(V_b^{(1)} t_a^{(1)} t_c^{(2)} t_f^{(3)}\right) & =\frac{1}{8} d_{b a e} d_{e c f}, \\
    \operatorname{Tr}\left(V_b^{(2)} t_a^{(1)} t_c^{(2)} t_f^{(3)}\right) & =\frac{1}{8} d_{b c e} d_{e f a}, \\
    \operatorname{Tr}\left(V_b^{(3)} t_a^{(1)} t_c^{(2)} t_f^{(3)}\right) & =\frac{1}{8} d_{b f e} d_{e a c},
    \end{aligned}
\]
and the same is for $W_b^{(r)}$ terms, where we have to substitute $d_{b a e} d_{e c f}$ with $d_{b a e} f_{e c f}$, etc.
Comparing these equations with the third scalar product identity, we get $\sigma_1=\sigma_2=\sigma_3=0$ and $\quad \tau_1=\tau_2=\tau_3=0$.
From which we get the final formula for $X_b$ as stated in the lemma. 
\end{proof}

\begin{smlemma}\label{lem:covar_3-site_scalar_operator}
    If $Y$ satisfies conditions of Lemma \ref{lem:trivial-3site}, and the following scalar product identities hold for 
    \[
        \begin{gathered}
            \operatorname{Tr}\left(Y t_m^{(i)}\right)=0,\\
            \operatorname{Tr}\left(Y t_a^{(i)} t_c^{(j)}\right)=-\frac{1}{2 d n} \delta_{a c},\\
            \operatorname{Tr}\left(Y t_a^{(i)} t_c^{(j)} t_f^{(k)}\right)=\frac{1}{2 d n(n-2)} d_{a c f},
        \end{gathered}
    \]
    where $i, j, k$ are distinct site indexes and $a, c, f$ are distinct generator indexes, and in addition $\operatorname{Tr}(Y)=1$, then it can be expressed as
    \[
        Y=\frac{1}{d^3} \mathbf{1}-\frac{2}{d^2 n} \sum_{m=1}^{d^2-1}\left(t_m^{(1)} t_m^{(2)}+t_m^{(1)} t_m^{(3)}+t_m^{(2)} t_m^{(3)}\right)+\frac{4}{d n(n-2)} \sum_{m, n, p=1}^{d^2-1} d_{m n p} t_m^{(1)} t_n^{(2)} t_p^{(3)} .
    \]
\end{smlemma}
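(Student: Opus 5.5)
The plan mirrors the proof of Lemma~\ref{lem:covar_3-site_vector_operator}. I would start from the basis of the trivial isotypic component supplied by Lemma~\ref{lem:trivial-3site}, namely
\[
Y=\lambda\mathbf 1+\mu_{12}\Omega_{12}+\mu_{13}\Omega_{13}+\mu_{23}\Omega_{23}+\nu\Delta+\xi\Phi,
\]
and fix the six coefficients by pairing $Y$ against one-, two- and three-body test operators. This works because the basis is essentially orthogonal under the Hilbert--Schmidt pairing with the fixed-body operators. Every partial trace over a site carrying a single traceless generator vanishes, so an operator supported on a set of sites $S$ pairs to zero with a test monomial supported on a different set. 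The one-body identity $\operatorname{Tr}(Yt_m^{(i)})=0$ therefore holds automatically, and serves only as a consistency check.

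The coefficients are then read off in order.

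First, since $\operatorname{Tr}\Omega_{ij}=\operatorname{Tr}\Delta=\operatorname{Tr}\Phi=0$ on $V_{\omega_1}^{\otimes 3}$, the normalization $\operatorname{Tr}Y=1$ gives $\lambda d^3=1$, that is, $\lambda=1/d^3$.

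Second, I pair with $t_a^{(i)}t_c^{(j)}$. Only $\mu_{ij}\Omega_{ij}$ contributes, and using $\operatorname{Tr}(t_at_b)=\tfrac12\delta_{ab}$ on sites $i,j$ together with a factor $d$ from the untouched third site,
\[
\operatorname{Tr}(\Omega_{ij}t_a^{(i)}t_c^{(j)})=\frac d4\delta_{ac}.
\]
Equating this to $-\tfrac{1}{2dn}\delta_{ac}$ gives $\mu_{ij}=-\tfrac{2}{d^2n}$ for each of the three pairs.

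Third, I pair with $t_a^{(1)}t_c^{(2)}t_f^{(3)}$. Only $\Delta$ and $\Phi$ contribute, and they give $\tfrac18 d_{acf}$ and $\tfrac18 f_{acf}$ respectively. Matching to $\tfrac{1}{2dn(n-2)}d_{acf}$ requires $\xi=0$ and $\nu=\tfrac{4}{dn(n-2)}$. The step that needs care here is that $d_{acf}$ and $f_{acf}$ are linearly independent tensors, so the coefficient comparison is legitimate. For $d\ge3$ independence follows from their opposite symmetry; for $d=2$ we have $\Delta=0$, the $d$-term is absent on both sides, and we get $\xi=0$ from the $f$-term alone.

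The main obstacle I expect is justifying that these test pairings genuinely separate the coefficients. Concretely, the cross terms such as $\operatorname{Tr}(\Omega_{12}t_a^{(1)}t_c^{(2)}t_f^{(3)})$ and $\operatorname{Tr}(\Delta\, t_a^{(i)}t_c^{(j)})$ must vanish. Both do, because a single traceless generator is left on the spare site. Once this is established, substituting $\lambda,\mu_{ij},\nu,\xi$ back into the basis expansion gives the claimed formula.
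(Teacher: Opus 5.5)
Your proposal is correct and follows the paper's proof: expand $Y$ in the basis of Lemma~\ref{lem:trivial-3site} and fix $\lambda$, $\mu_{ij}$, $\nu$, $\xi$ by pairing with $\mathbf 1$, $t_a^{(i)}t_c^{(j)}$ and $t_a^{(1)}t_c^{(2)}t_f^{(3)}$, which give $\tfrac{d}{4}\delta_{ac}$, $\tfrac18 d_{acf}$ and $\tfrac18 f_{acf}$. The one difference is how $\mu_{12}=\mu_{13}=\mu_{23}$ and $\xi=0$ are obtained: the paper appeals to permutation symmetry of the three sites, while you derive them directly from the pairwise two-body identities and the linear independence of the symmetric and antisymmetric tensors $d_{acf}$ and $f_{acf}$, which is slightly more self-contained because the lemma does not assume $Y$ is site-symmetric.
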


\begin{proof}
    First of all, notice that since the problem is symmetric under permutation of the three sites, we must have $\mu_{12}=\mu_{13}=\mu_{23}=: \mu$ 
    and $\zeta=0$ because $\Phi$ is antisymmetric under permutation of the sites. So the reduced formula for $Y$ is given by
    \[
        Y=\lambda \mathbf{1}+\mu\left(\Omega_{12}+\Omega_{13}+\Omega_{23}\right)+\nu \Delta.
    \]
    From the trace identity, we get
    \[
        \operatorname{Tr}(Y)=\lambda \operatorname{Tr}(\mathbf{1})=\lambda d^3
    \]
    which gives us $\lambda=\frac{1}{d^3}$.
    As for two-body coefficients, we have
    \[
        \operatorname{Tr}\left(Y t_a^{(1)} t_c^{(2)}\right)=\mu \frac{d}{4} \delta_{a c},
    \]
    and matching second scalar product identity gives us $\mu=-\frac{2}{d^2 n}$.
    Finally, for three-body coefficient we have
    \[
        \operatorname{Tr}\left(Y t_a^{(1)} t_c^{(2)} t_f^{(3)}\right)=\nu \frac{1}{8} d_{a c f},
    \]
    and matching third scalar product identity gives us $\nu=\frac{4}{d n(n-2)}$.
    This completes our proof.
\end{proof}

\begin{smproposition}\label{prop:3site-fundamental}
Let
    $
            \mathcal H_L=V_{\omega_1},\;
                    \mathcal H_P=(V_{\omega_1})^{\otimes n}
                        $
                            with $d=p^r$ where $p$ is prime and $r$ is a positive integer, 
                                and $n$ is such that $n-1=p^k$, where $k \geq r$ is a positive integer. 
                                    Equip \(\mathcal H_L\) with the fundamental representation of
                                        \(\mathfrak{su}(d)\), and equip \(\mathcal H_P\) with the transversal
                                            representation given by
                                                \begin{equation}
                                                    \begin{gathered}
                                                        t_a \rightarrow \sum_{i=1}^n t_a^{(i)}, \quad a=1,\ldots,d^2-1,
                                                            \end{gathered}
                                                                \end{equation} 
                                                                    where $t_a^{(i)}$ is a fundamental of $t_a$ on the $i$-th physical space $V_{\omega_1}$.
    For sufficiently
    large $n$ there exists an $\mathfrak{su}(d)$-covariant encoding $\mathcal{E}$ with respect to these representations, such that the three-site reduced state has the following form:
\begin{equation}\label{eq:main-3site-difference}
\begin{aligned}
    &\rho^{(ijk)}(\rho_L)=
    \\&=\left(\frac{1}{d^3} \mathbf{1}-\frac{2}{d^2 n} \sum_{m=1}^{d^2-1}\left(t_m^{(i)} t_m^{(j)}+t_m^{(i)} t_m^{(k)}+t_m^{(j)} t_m^{(k)}\right)+\frac{4}{d n(n-2)} \sum_{m, n, p=1}^{d^2-1} d_{m n p} t_m^{(i)} t_n^{(j)} t_p^{(k)}\right) +
    \\&+\sum_b r_b\left(\frac{1}{n d^2}\left(t_b^{(i)}+t_b^{(j)}+t_b^{(k)}\right)-\frac{2}{d n(n-2)} \sum_{m=1}^{d^2-1}\left(t_b^{(i)} t_m^{(j)} t_m^{(k)}+t_m^{(i)} t_b^{(j)} t_m^{(k)}+t_m^{(i)} t_m^{(j)} t_b^{(k)}\right)\right).
\end{aligned}
\end{equation}
so the leading logical-state-dependent contribution is purely one-body and scales as \(n^{-1}\). Namely, this code space is one of the copies of the $V_{\omega_1}$ which is invariant under the action of $\mathrm{PGL}(2, n-1)$.
\end{smproposition}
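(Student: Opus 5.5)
We take the encoding of Theorem~\ref{thrm:3_transitive_covariant_code}, i.e.\ a multiplicity vector in $M_{\omega_1}^{\operatorname{PGL}(2,n-1)}$, which exists for sufficiently large $n$ by Lemma~\ref{lem:3_transitive_subgroup}. We then reconstruct the three-site reduced state from its Hilbert--Schmidt inner products with few-body operators. By 3-transitivity, $\rho^{(ijk)}$ is the same for every ordered triple up to relabeling, so it suffices to treat $(i,j,k)=(1,2,3)$.

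First we split the reduced state by linearity. Write $\rho_L=I/d+\sum_b r_b t_b$, so that
\[
\rho^{(123)}(\rho_L)=\Phi^{(123)}(I/d)+\sum_b r_b\,\Phi^{(123)}(t_b).
\]
Here $\Phi^{(123)}$ is the linear extension of the reduced-state map, and it is an $SU(d)$-intertwiner. Hence $Y:=\Phi^{(123)}(I/d)$ lies in the trivial isotypic component of $\operatorname{End}(V_{\omega_1}^{\otimes 3})$. Likewise, each $X_b:=\Phi^{(123)}(t_b)$ lies in the adjoint isotypic component, and the family $\{X_b\}$ transforms as a vector operator. Lemmas~\ref{lem:trivial-3site} and~\ref{lem:adjoint-3site} then supply finite bases for $Y$ and for $X_b$.

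Second, we compute the required inner products from the duality
\[
\operatorname{Tr}\bigl(\Phi^{(123)}(Z)\,O\bigr)=\operatorname{Tr}\bigl(Z\,\mathcal V^\dagger O\mathcal V\bigr),
\]
valid for any $O$ supported on sites $1,2,3$. Substituting the compressions of Proposition~\ref{prop:1_2_and_3_body_action} for $O=t^{(i)}_a$, $t^{(i)}_a t^{(j)}_c$ and $t^{(i)}_a t^{(j)}_c t^{(k)}_f$, and using $\operatorname{Tr}(t_at_b)=\tfrac12\delta_{ab}$, $\operatorname{Tr}\bar t_a=0$ and $\operatorname{Tr} I_L=d$, gives the following. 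For $Z=I/d$ we obtain $\operatorname{Tr}(Y)=1$, $0$, $-\tfrac{1}{2dn}\delta_{ac}$ and $\tfrac{1}{2dn(n-2)}d_{acf}$. For $Z=t_b$ we obtain $\tfrac{1}{2n}\delta_{bm}$, then $0$ (the two-body compression is pure scalar and $t_b$ is traceless), and then
\[
-\tfrac{1}{4dn(n-2)}\bigl(\delta_{ac}\delta_{fb}+\delta_{af}\delta_{cb}+\delta_{cf}\delta_{ab}\bigr).
\]
These are exactly the hypotheses of Lemmas~\ref{lem:covar_3-site_scalar_operator} and~\ref{lem:covar_3-site_vector_operator}. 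Those lemmas output the displayed $\tau$ and $\Delta$ parts, and relabeling sites to $(i,j,k)$ gives Eq.~\eqref{eq:main-3site-difference}. The stated scaling is read off directly: the one-body vector part carries a factor $1/n$ and the three-body part carries $1/n^2$.

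The main obstacle is the inversion step inside Lemmas~\ref{lem:covar_3-site_vector_operator} and~\ref{lem:covar_3-site_scalar_operator}. The basis operators listed in Lemma~\ref{lem:adjoint-3site} are not mutually Hilbert--Schmidt orthogonal. For example, $U^{(r)}_b$, $V^{(r)}_b$ and $W^{(r)}_b$ all pair nontrivially with pure three-body probes, and the one-body operators $t^{(i)}_b$ also have nonzero overlaps with the three-body probes. So the coefficients must come from solving the full Gram system, not from matching each operator against a single probe. We would organize this solve by body-order: the full set of one-, two- and three-body probes determines the coefficients uniquely because the basis is complete. We would then verify the candidate answer by substituting it back, checking every probe with the $d_{abc}$ and $f_{abc}$ contraction identities of Eqs.~\eqref{eq:sudnormalisation} and~\eqref{eq:sudidentities}. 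A secondary point is the degenerate cases $d=2,3$. There the trivial and adjoint bases lose elements ($d_{abc}=0$ for $d=2$, and there is a linear relation among the $V^{(r)}_a$ for $d=3$), so we must confirm that the formula remains consistent, in particular that the $d_{mnp}$ term in $\tau$ simply vanishes when $d=2$.
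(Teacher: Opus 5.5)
Your proposal is correct and is essentially the paper's proof: split $\rho^{(ijk)}$ by linearity into $\rho^{(ijk)}(I/d)$ and $\Phi^{(ijk)}(\bar t_b)$, use covariance to place these in the trivial and adjoint isotypic components, get their Hilbert--Schmidt pairings from the compressions of Proposition~\ref{prop:1_2_and_3_body_action}, and invert with Lemmas~\ref{lem:covar_3-site_scalar_operator} and~\ref{lem:covar_3-site_vector_operator}. The obstacle you anticipate is milder than you state, and your claim that one-body operators overlap with three-body probes is false: products of $I$ and traceless $t_a$'s with different site supports are Hilbert--Schmidt orthogonal (e.g.\ $\operatorname{Tr}(t_b^{(1)}t_a^{(1)}t_c^{(2)}t_f^{(3)})=0$), and within the three-body sector the paper's matching against the linearly independent tensors $\delta\delta$, $dd$, $df$ is exactly your Gram solve (more directly, the three-body part of any $X$ is $8\sum_{a,c,f}\operatorname{Tr}(X\,t_a^{(1)}t_c^{(2)}t_f^{(3)})\,t_a^{(1)}t_c^{(2)}t_f^{(3)}$, which also disposes of the $d=2,3$ degeneracies).
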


\begin{smremark}\label{rem:output_3-site_state}
    Notice that the first term in the formula (\ref{eq:main-3site-difference}) is a fixed state that doesn't depend on logical state.
\end{smremark}

\begin{proof}
By linearity of the reduced channel,
$
    \rho^{(ijk)}(\rho_L)
    =
    \rho^{(ijk)}(I_d/d)+\sum_a r_a\, \Phi^{(ijk)}(\bar t_a),
$
where $\Phi^{(ijk)}$ is an extension by linearity of $\rho^{(ijk)}$ to all operators on logical space.
We set
$$
    \tau^{(ijk)}:=\rho^{(ijk)}(I_d/d),\quad \Delta^{(ijk)}_a:=\Phi^{(ijk)}(\bar t_a).
$$
Then
$
    \rho^{(ijk)}(\rho_L)=\tau^{(ijk)}+\sum_a r_a\Delta^{(ijk)}_a.
$
Notice that $\rho^{(ijk)}(\rho_L)$ (and thus $\Phi^{(ijk)}$) is a covariant channel, so $\Delta^{(ijk)}_a$ lies in the adjoint representation and $\tau^{(ijk)}$ 
lies in trivial isotypic components of representation $\operatorname{End}(V_{\omega_1}^{\otimes 3})$.
Using identities from Proposition~\ref{prop:1_2_and_3_body_action}, one can easily derive the following scalar product identities for $\tau^{(ijk)}$ and $\Delta^{(ijk)}_a$. 
\[
\left\{
\begin{aligned}
&\operatorname{Tr}\left(\Delta^{(ijk)}_a t_m^{(i)}\right)
=\frac{1}{2 n} \delta_{a m}, \\
&\operatorname{Tr}\left(\Delta^{(ijk)}_a t_b^{(i)} t_c^{(i)}\right)
=0, \\
&\operatorname{Tr}\left(\Delta^{(ijk)}_a t_b^{(i)} t_c^{(j)} t_f^{(k)}\right)
=-\frac{1}{4 d n(n-2)}
\left(\delta_{a c} \delta_{f b}+\delta_{a f} \delta_{c b}+\delta_{c f} \delta_{a b}\right),
\end{aligned}
\right.
\quad
\left\{
\begin{aligned}
&\operatorname{Tr}\left(\tau^{(ijk)} t_m^{(i)}\right)
=0, \\
&\operatorname{Tr}\left(\tau^{(ijk)} t_a^{(i)} t_c^{(j)}\right)
=-\frac{1}{2 d n} \delta_{a c}, \\
&\operatorname{Tr}\left(\tau^{(ijk)} t_a^{(i)} t_c^{(j)} t_f^{(k)}\right)
=\frac{1}{2 d n(n-2)} d_{a c f},
\end{aligned}
\right.
\]
Thus, applying Lemmas~\ref{lem:covar_3-site_vector_operator} and~\ref{lem:covar_3-site_scalar_operator} to $\Delta^{(ijk)}_a$ and $\tau^{(ijk)}$ correspondingly, 
we get desired form of reduced state. This concludes the proof.
\end{proof}

Our final step in this section is the calculation of the worst-case fidelity of proposed $SU(d)$-covariant codes 
for erasure noise on three sites. We will show that for any fixed $d$ and sufficiently large $n$, worst-case fidelity 
scales as $1-O_{d}\left(\frac{1}{n}\right)$.

\begin{smproposition}\label{prop:3_transitive_fidelity}
Let $\mathcal{V}$ be the encoding isometry of the code defined in Theorem~\ref{thrm:3_transitive_covariant_code}. 
For noise channel $\mathcal{N}(\sigma)=\sum_{1 \leq i<j<k \leq n} p_{i j k} \left|i, j, k\right\rangle\left\langle i, j, k\right|_F \otimes \left| e\right\rangle\left\langle e\right|_{i, j, k} \otimes \operatorname{Tr}_{i j k}(\sigma)$ and sufficiently large $n$
\[
    F\left(\widehat{\mathcal{N} \circ \mathcal{E}}, \Lambda_0\right)=1-\frac{3\left(d^2-1\right)}{8n^2}+O_{d}\left(n^{-3}\right),
\]
where $\Lambda_0$ is defined in (\eqref{eq:constant_channel_3site}).
\end{smproposition}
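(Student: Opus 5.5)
The computation follows Propositions~\ref{prop:SU2covFidelity} and~\ref{prop:SUdcovfidelity}. It uses the explicit three-site reduced state of Proposition~\ref{prop:3site-fundamental}, $\rho^{(ijk)}(\rho_L)=\tau+\sum_b r_b\Delta_b$, with $\tau=\tau^{(ijk)}$ and $\Delta_b$ the vector operators given there.

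\textbf{Step 1: reduction to one block.} By $3$-transitivity all triples are identified, so the complementary channel is $\omega_{\mathrm{flag}}\otimes\rho^{(i_0j_0k_0)}(\cdot)$. The constant channel is $\Lambda_0=\mathrm{Tr}(\cdot)\,\omega_{\mathrm{flag}}\otimes\tau$. The flag factor is common to both outputs and drops out of the fidelity.

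\textbf{Step 2: reduction to the maximally mixed input.} Both channels are $SU(d)$-covariant: the reduced-state map is covariant, and $\tau$ is $SU(d)$-invariant because it lies in the trivial isotypic component. Lemma~\ref{fidelity_opt} therefore lets me evaluate the worst case at $\rho=I_d/d$, purified by $\ket\Psi=d^{-1/2}\sum_r\ket{rr}$. The two outputs are
\[
\eta=\tau\otimes\tfrac{I_d}{d},\qquad \sigma=\eta+\tfrac1d\sum_{r,s}\Phi(F_{rs})\otimes\ket r\bra s,
\]
where $\Phi(F_{rs})=\sum_b \mathrm{Tr}(t_bF_{rs})\,2\,\Delta_b$ in the normalization $\mathrm{Tr}(t_at_b)=\tfrac12\delta_{ab}$. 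Equivalently, $\sigma-\eta=\tfrac{2}{d}\sum_b\Delta_b\otimes t_b^{T}$, which is of order $1/n$.

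\textbf{Step 3: perturbative expansion.} Here $\tau$ is not maximally mixed: $\tau=d^{-3}\mathbf 1+O(1/n)$. I write $f(\eta,\eta+\delta)$ with $\delta=\sigma-\eta$ and use the second-order expansion of root fidelity about a full-rank state:
\[
f(\eta,\eta+\delta)=1-\tfrac18\,\mathrm{Tr}\bigl(\delta\,\mathcal J_\eta^{-1}(\delta)\bigr)+O(\|\delta\|^3).
\]
This uses $\mathrm{Tr}\,\delta=0$. Here $\mathcal J_\eta^{-1}$ is the inverse of the symmetric (Bures/SLD) multiplication superoperator, $\mathcal J_\eta(X)=\tfrac12(\eta X+X\eta)$, so the quadratic form is $\tfrac18$ of the SLD quantum Fisher metric. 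Since $\delta=O(1/n)$ and $\eta=d^{-4}\mathbf 1+O(1/n)$, at leading order I can replace $\mathcal J_\eta^{-1}$ by $d^{4}\,\mathrm{id}$; the corrections are $O(n^{-3})$. This gives
\[
1-f=\tfrac{d^4}{8}\mathrm{Tr}(\delta^2)+O(n^{-3}).
\]

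\textbf{Step 4: Hilbert--Schmidt norm of the logical-dependent part.} Only the one-body part of $\Delta_b$, namely $\tfrac{1}{nd^2}\bigl(t_b^{(1)}+t_b^{(2)}+t_b^{(3)}\bigr)$, contributes at order $n^{-2}$. The three-body part is $O(n^{-2})$ and is Hilbert--Schmidt orthogonal to the one-body terms, since all $t$'s are traceless. Using $\mathrm{Tr}(t_at_b)=\tfrac12\delta_{ab}$ on the three sites and on the reference,
\[
\mathrm{Tr}(\delta^2)=\tfrac{4}{d^2}\cdot\tfrac{1}{n^2d^4}\sum_b 3\cdot\tfrac{d^2}{2}\cdot\tfrac12=\tfrac{3(d^2-1)}{d^4n^2}.
\]
Substituting into Step 3 yields $1-f=\tfrac{3(d^2-1)}{8n^2}+O_d(n^{-3})$. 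As a consistency check, setting $r=1$ in the single-site result of Proposition~\ref{prop:SUdcovfidelity} gives $\tfrac{(d^2-1)}{8n^2}$ per site.

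\textbf{Main obstacle.} The hardest part is justifying Step 3 rigorously: the Taylor expansion of $\mathrm{Tr}\sqrt{\sqrt\eta\,\sigma\sqrt\eta}$ around a non-maximally-mixed $\eta$. I would first write $\sqrt\eta\,\sigma\sqrt\eta=\eta^2+\sqrt\eta\,\delta\sqrt\eta$ and expand the square root via the Daleckii--Krein formula. I would then control the error uniformly in $n$ using the fact that $\eta\ge c_d\mathbf 1$ for large $n$, with $\|\delta\|=O_d(1/n)$.
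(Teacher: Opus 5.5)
Your proposal is correct and follows the paper's proof essentially step for step. You reduce to a single flag block, use Lemma~\ref{fidelity_opt} to evaluate at the maximally mixed input, expand $\operatorname{Tr}\sqrt{\eta^2+\sqrt\eta\,\delta\sqrt\eta}$ to second order about $\eta\approx I/d^4$ to get $1-f=\tfrac{d^4}{8}\operatorname{Tr}(\delta^2)+O_d(n^{-3})$, and obtain $\operatorname{Tr}(\delta^2)=3(d^2-1)/(d^4n^2)$ from the one-body part of the leakage alone. The differences are only in presentation: you write the leakage as $\tfrac{2}{d}\sum_b\Delta_b\otimes t_b^{T}$ and the quadratic term through the SLD superoperator, which is the general form stated in Lemma~\ref{lem:bures-expansion}. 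The paper instead expresses the leakage through the operators $d\ket{\Phi}\bra{\Phi}_{xR}-I/d$ and expands directly around $\eta_0^2=I/d^8$.
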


\begin{proof}
    We will follow the same procedure for calculating fidelity as we did in the case of the single-site erasure noise. 
    Turns out that the main difference lies in the choice of constant channel $\Lambda$, because fixed state, as can be seen from formula (\ref{eq:main-3site-difference}), 
    is no longer the maximally mixed state, but rather some state that approaches the maximally mixed state as $n$ grows. 
    
    For the considered noise channel, since all $3$-qudit reduced states are identical by the choice of the code space, the complementary channel is given by
    \[
        \widehat{\mathcal{N} \circ \mathcal{E}}\left(\rho_L\right)= \sum_{1 \leq i<j<k \leq n} p_{i j k}\,|i j k\rangle\left\langle i j k\right|_{\text{flag}} \otimes \rho^{(ijk)}\left(\rho_L\right)\cong \omega_{\text{flag }} \otimes \rho^{(i_0j_0k_0)}\left(\rho_L\right).
    \]
    where $\omega_{\text {flag }}=\sum_{i<j<k} p_{i j k}|i j k\rangle\langle i j k|_{F_E}$ and $i_0, j_0, k_0$ are fixed indices. We will use notation $(i j k)$ for these fixed indices instead of $(i_0 j_0 k_0)$ in the following calculations.
    We define the constant channel as
    \begin{equation}\label{eq:constant_channel_3site}
        \begin{gathered}
            \Lambda_0(\rho_L)=\operatorname{Tr}(\rho_L)\cdot\omega_{\text{flag }} \otimes \left(\frac{1}{d^3} \mathbf{1}-\frac{2}{d^2 n} \sum_{m=1}^{d^2-1}\left(t_m^{(i)} t_m^{(j)}+t_m^{(i)} t_m^{(k)}+t_m^{(j)} t_m^{(k)}\right)+\frac{4}{d n(n-2)} \sum_{m, n, p=1}^{d^2-1} d_{m n p} t_m^{(i)} t_n^{(j)} t_p^{(k)}\right)\\\equiv \operatorname{Tr}(\rho_L)\cdot \omega_{\text{flag }} \otimes \tau.
        \end{gathered}
    \end{equation}
    (notice that $\tau$ is a part of the reduced state, which is independent of the input state, see Eq.~\eqref{eq:main-3site-difference} and Remark~\ref{rem:output_3-site_state}). We drop the flag state $\omega_{\text{flag }}$ in the following calculations, since it doesn't affect the fidelity.
    As we know from \ref{fidelity_opt} the optimum of $F_{\rho}(\widehat{\mathcal{N} \circ \mathcal{E}}, \Lambda_0)$ is attained on chaotic state $\frac{I_d}{d}$, its purification is maximally entangled state $\ket{\Psi}=\frac{1}{\sqrt{d}}\sum |r\rangle \otimes|r\rangle $. We have to compute $(\widehat{\mathcal{N} \circ \mathcal{E}} \otimes \mathrm{id})(\ket{\Psi}\bra{\Psi})$ and $(\Lambda_0 \otimes \mathrm{id})(\ket{\Psi}\bra{\Psi})$.
    We define the traceless linear map, which is an extension by linearity of the reduced state map, i. e.
    \[
        \Phi:  \operatorname{End}\left(V_{\omega_1}\right) \rightarrow \operatorname{End}\left(\left(V_{\omega_1}\right)^{\otimes 3}\right), \quad \Phi\left(\bar{t}_a\right):=K_a, \quad\Phi\left(\frac{I}{d}\right):=\tau
    \]
    where
    \[
        K_a:=\frac{1}{n d^2}\left(t_a^{(i)}+t_a^{(j)}+t_a^{(k)}\right)-\frac{2}{d n(n-2)} \sum_{m=1}^{d^2-1}\left(t_a^{(i)} t_m^{(j)} t_m^{(k)}+t_m^{(i)} t_a^{(j)} t_m^{(k)}+t_m^{(i)} t_m^{(j)} t_a^{(k)}\right).
    \]
    In this notation 
    \[
        \sigma_{1/d}:=\left(\widehat{\mathcal{N} \circ \mathcal{E}} \otimes \operatorname{id}\right)\left(\ket{\Psi}\bra{\Psi}\right)=\frac{1}{d}\sum_{r,s=1}^d \left(\delta_{r s} \tau+\Phi\left(F_{r s}\right)\right) \otimes|r\rangle\langle s| .
    \]
    (see Eq.~\eqref{eq:matr_units} for the definition of $F_{r s}$). As for the constant channel
    \[
        \eta_{1/d}:=\left(\Lambda_0 \otimes \mathrm{id}\right)\left(\ket{\Psi}\bra{\Psi}\right)=\tau \otimes \frac{I}{d}.
    \]
    Therefore, fidelity is given by
    \[
        F(\widehat{\mathcal{N} \circ \mathcal{E}}, \Lambda_0)=f(\sigma_{1/d}, \eta_{1/d})=\operatorname{Tr} \frac{1}{d}\sqrt{\left[\sqrt{\tau}\left(\delta_{r s} \tau+\Phi\left(F_{r s}\right)\right) \sqrt{\tau} \right]_{r, s=1}^d}\equiv \operatorname{Tr} \sqrt{X},
    \] 
    where we define
    \[
        X=\frac{1}{d^2} \sum_{r, s=1}^d \sqrt{\tau}\left(\delta_{r s} \tau+\Phi\left(F_{r s}\right)\right) \sqrt{\tau} \otimes E_{r s}.
    \]
    (see Eq.~\eqref{eq:matr_units} for the definition of $E_{r s}$). We split it as
    \[
        X=\eta^2+Y, \quad \eta^2=\frac{1}{d^2} \tau^2 \otimes I, \quad Y:=\frac{1}{d^2} \sum_{r, s=1}^d \sqrt{\tau} \Phi\left(F_{r s}\right) \sqrt{\tau} \otimes E_{r s}
    \]
    Now, let's transform these operators into a more convenient form. First, define
    \[
        \Delta=\frac{1}{d} \sum_{r, s=1}^d \Phi\left(F_{r s}\right) \otimes E_{r s}=\frac{1}{n d^3} \sum_{r, s=1}^d\left(F_{r s}^{(i)}+F_{r s}^{(j)}+F_{r s}^{(k)}\right) \otimes E_{r s}+O_{d}\left(n^{-2}\right),
    \]
    and notice that $\operatorname{Tr} \Delta=0$. Since
    $
        \sum_{r, s} E_{r s} \otimes E_{r s}=d\ket{\Phi}\bra{\Phi}, F_{r s}=E_{r s}-\delta_{r s} \frac{I}{d},
    $
    we rewrite this operator in a simplified form
    \[
        \Delta=\frac{1}{n d^3}\left(d\ket{\Phi}\bra{\Phi}_{i R}+d\ket{\Phi}\bra{\Phi}_{j R}+d\ket{\Phi}\bra{\Phi}_{k R}-\frac{3}{d} I\right)+O_{d}\left(n^{-2}\right),
    \]
    where $d\ket{\Phi}\bra{\Phi}_{a R}$ acts at site $a$ with reference system $R$. 
    Since
    \[
        \eta=\tau \otimes \frac{I}{d}=\frac{I_{\mathcal{H}}}{d^4}+O_{d}\left(n^{-1}\right):=\eta_0+O_{d}\left(n^{-1}\right),
    \]
    we can rewrite
    \[
        Y=(\eta_0^{1 / 2}+O_{d}\left(n^{-1}\right))\Delta(\eta_0^{1 / 2}+O_{d}\left(n^{-1}\right))=\frac{1}{d^4} \Delta+O_{d}\left(n^{-2}\right),
    \]
    where we used the fact that $\Delta=\Theta(\frac{1}{n})$.
    Let's denote
    $
        G(X):=\operatorname{Tr} \sqrt{X},
    $
    for which Taylor expansion formula looks as follows
    \[
        G\left(X+H\right)=G\left(X\right)+\left.\frac{d}{d t}\right|_{t=0} G(X+t H)+\frac{1}{2} \left.\frac{d^2}{d t^2}\right|_{t=0} G\left(X+t H\right)+O_{d}\left(\left\|H\right\|^3\right),
    \]
    Now, we are to compute each summand separately substituting $X=\eta^2$ and $H=Y$ into this formula
    \[
        G\left(\eta^2+H\right)=G\left(\eta^2\right)+\left.\frac{d}{d t}\right|_{t=0} G(\eta^2+t Y)+\frac{1}{2} \left.\frac{d^2}{d t^2}\right|_{t=0} G\left(\eta^2+t Y\right)+O_{d}\left(\left\|Y\right\|^3\right)
    \]
    First term is
    \[
        G\left(\eta^2\right) = \operatorname{Tr} \eta= \operatorname{Tr}\left(\tau\right) \operatorname{Tr}\left(\frac{I_R}{d}\right)=1 \cdot 1=1.
    \]
    As for the second term, we have
    \[
        \left.\frac{d}{d t}\right|_{t=0} G(\eta^2+t Y)=\frac{1}{2} \operatorname{Tr}\left(\eta^{-1} \eta^{1 / 2} \Delta \eta^{1 / 2}\right)=\frac{1}{2} \operatorname{Tr}\left(\Delta\right)=0
    \]
    Finally, for the third term, since $\eta^2=\eta_0^2+O_{d}\left(n^{-1}\right)$ and $Y=O_{d}\left(n^{-1}\right)$, we can write
    \[
        \left.\frac{d^2}{d t^2}\right|_{t=0} G\left(\eta^2+t Y\right)=\left.\frac{d^2}{d t^2}\right|_{t=0} G\left(\eta_0^2+t Y\right)+O_{d}\left(n^{-3}\right)
    \]
    and since $\eta_0^2$ is proportional to identity, we can use the following formula for the second derivative of $G$ 
    \[
        \left.\frac{d^2}{d t^2}\right|_{t=0} G\left(\eta_0^2+t Y\right)=-\frac{1}{4}  \operatorname{Tr}\left(\eta_0^{-3} Y^2\right)=-\frac{d^4}{4} \operatorname{Tr}\left(\Delta^2\right)+O_{d}\left(n^{-3}\right).
    \]
    Computing $\operatorname{Tr}(\Delta^2)$, we get
    \[
        \operatorname{Tr}\left(\Delta^2\right)=\frac{1}{n^2 d^6} \operatorname{Tr}\left(d\ket{\Phi}\bra{\Phi}_{i R}+d\ket{\Phi}\bra{\Phi}_{j R}+d\ket{\Phi}\bra{\Phi}_{k R}-\frac{3}{d} I\right)^2+O_{d}\left(n^{-3}\right)=\frac{3\left(d^2-1\right)}{d^4n^2}+O_{d}\left(n^{-3}\right).
    \]
    Putting everything together, we get
    \[
                F(\widehat{\mathcal{N} \circ \mathcal{E}}, \Lambda_0)=1-\frac{3\left(d^2-1\right)}{8n^2}+O_{d}\left(n^{-3}\right),
    \]
    which completes our proof.
\end{proof}

\subsection{Flagged erasure on an arbitrary fixed number of qudits}\label{subsec:supple_multiqudit_erasure}

Let $\{e_1,\ldots,e_d\}$ be the standard basis of $\C^d$. Define the normalized totally antisymmetric $d$-qudit state
\begin{equation}
\label{eq:singlet}
  \ket{\zeta_d}
  :=\frac1{\sqrt{d!}}
  \sum_{\pi\in S_d}\operatorname{sgn}(\pi)
  e_{\pi(1)}\otimes\cdots\otimes e_{\pi(d)}.
\end{equation}

\begin{smlemma}
\label{lem:singlet-red-states}
For every $g\in SU(d)$,
$
  g^{\otimes d}\ket{\zeta_d}=\ket{\zeta_d}.
$
For $0\le s\le d$, the reduced density operator of any $s$ tensor factors of $\ket{\zeta_d}\bra{\zeta_d}$ is
\begin{equation}
\label{eq:tau-s}
  \tau_s
  :=\frac{\Pi_{\wedge^sV_{\omega_1}}}{\binom ds},
\end{equation}
where $\Pi_{\wedge^sV_{\omega_1}}$ is the orthogonal projector onto $\wedge^sV_{\omega_1}$; by convention $\tau_0=1$. In particular,
$
  \tau_1=\tau:=\frac{I_d}{d}.
$
\end{smlemma}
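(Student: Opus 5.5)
The plan has two parts. First we show $SU(d)$-invariance of $\ket{\zeta_d}$ through the determinant. Then we compute the reduced states by recognizing $\ket{\zeta_d}$, up to normalization, as a wedge product and using the antisymmetric structure.

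\emph{Invariance.} For any $g\in GL_d(\mathbb C)$ we expand $g^{\otimes d}\ket{\zeta_d}$ in the standard basis. The coefficient of $e_{j_1}\otimes\cdots\otimes e_{j_d}$ is $\frac1{\sqrt{d!}}\sum_\pi \operatorname{sgn}(\pi)\prod_m g_{j_m\pi(m)}$. This is the determinant of the matrix whose $m$th row is row $j_m$ of $g$. Hence it equals $\det(g)\,\epsilon_{j_1\cdots j_d}$, which gives $g^{\otimes d}\ket{\zeta_d}=\det(g)\ket{\zeta_d}$. For $g\in SU(d)$ the factor is $1$.

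\emph{Reduced states.} By antisymmetry of $\ket{\zeta_d}$, a permutation of tensor factors only changes its sign, so $\ket{\zeta_d}\bra{\zeta_d}$ is permutation invariant. The reduced state on any $s$ factors is therefore the same as on the first $s$. We write $\ket{\zeta_d}$ as a sum over $s$-subsets $I\subseteq\{1,\dots,d\}$, using the Laplace expansion
\[
\ket{\zeta_d}=\binom ds^{-1/2}\sum_{|I|=s}\epsilon(I,I^c)\,\ket{\wedge_I}\otimes\ket{\wedge_{I^c}}.
\]
Here $\ket{\wedge_I}$ is the normalized antisymmetrized vector $\frac1{\sqrt{s!}}\sum_{\sigma}\operatorname{sgn}\sigma\, e_{i_{\sigma(1)}}\otimes\cdots$, and $\epsilon(I,I^c)=\pm1$ is the sign of the shuffle. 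The normalization is checked by counting: the $d!$ terms are split into $\binom ds$ groups of $s!(d-s)!$ terms each. The vectors $\{\ket{\wedge_{I^c}}\}$ are orthonormal for distinct $I$, so tracing out the last $d-s$ factors gives
\[
\binom ds^{-1}\sum_{|I|=s}\ket{\wedge_I}\bra{\wedge_I}.
\]
The $\ket{\wedge_I}$ form an orthonormal basis of $\wedge^sV_{\omega_1}\subseteq V_{\omega_1}^{\otimes s}$, so this sum is exactly $\Pi_{\wedge^s V_{\omega_1}}/\binom ds=\tau_s$. For $s=1$ the antisymmetric subspace is all of $\mathbb C^d$, so $\tau_1=I_d/d$. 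For $s=0$ the reduced state is the trace, which is $1$.

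The only point requiring care is the Laplace-expansion identity with correct signs and normalization. I would verify it by matching coefficients on a basis vector $e_{\pi(1)}\otimes\cdots\otimes e_{\pi(d)}$: the set $I=\pi(\{1,\dots,s\})$ is uniquely determined, and the sign factors as $\operatorname{sgn}\pi=\epsilon(I,I^c)\operatorname{sgn}\sigma\operatorname{sgn}\sigma'$, where $\sigma,\sigma'$ are the induced orderings within $I$ and $I^c$. The rest is routine.
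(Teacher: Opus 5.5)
Your proof is correct, but it computes the marginals by a different route than the paper. The invariance part is essentially the same. The paper observes that $\ket{\zeta_d}$ spans $\wedge^d V_{\omega_1}$, on which $g^{\otimes d}$ acts by $\det g$, and your Leibniz-formula computation is the explicit check of exactly that fact.

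For the reduced states, the paper argues abstractly in three steps:
\begin{itemize}
\item the $s$-site marginal inherits $SU(d)$-invariance from the global state;
\item it is supported on $\wedge^s V_{\omega_1}$ because $\ket{\zeta_d}$ is odd under transpositions of retained sites;
\item since $\wedge^s V_{\omega_1}$ is irreducible, Schur's lemma forces it to be proportional to $\Pi_{\wedge^s V_{\omega_1}}$, with the constant fixed by the trace.
\end{itemize}
You instead write down the Schmidt decomposition of $\ket{\zeta_d}$ across the $s\,|\,d-s$ cut via the Laplace expansion and read the marginal off directly.

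Your route is self-contained and elementary: it needs neither the irreducibility of $\wedge^s\mathbb C^d$ nor Schur's lemma. It also gives more information, namely explicit Schmidt vectors and a flat Schmidt spectrum of rank $\binom ds$. The price is sign and normalization bookkeeping, which you handle correctly through the factorization $\operatorname{sgn}\pi=\epsilon(I,I^c)\operatorname{sgn}\sigma\operatorname{sgn}\sigma'$ and the count of $s!(d-s)!$ terms per subset. The paper's route is shorter and more conceptual. It covers every choice of $s$ retained sites at once, with no separate permutation-invariance reduction, and it uses the same symmetry-plus-Schur mechanism that drives the rest of the paper.
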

\begin{proof}
The vector in~\eqref{eq:singlet} spans $\wedge^dV_{\omega_1}$.  
The action of $g^{\otimes d}$ on this one-dimensional space is multiplication by $\det g=1$, proving invariance. 
An $s$-site reduced state is $SU(d)$-invariant and is supported on $\wedge^sV_{\omega_1}$, because the global vector is antisymmetric under interchange of any two retained sites.  
The representation $\wedge^sV_{\omega_1}$ is irreducible for $0\le s\le d$.  
Schur's lemma therefore makes the reduced state a scalar multiple of $\Pi_{\wedge^sV_{\omega_1}}$, and unit trace fixes the scalar to $1/\binom ds$.
\end{proof}

We partition the $n=qd+1$ physical positions into a distinguished position $0$ and $q$ ordered blocks $B_1,\ldots,B_q$, each of size $d$.  
Define the seed isometry
\begin{equation}
\label{eq:seed-isometry}
  \mathcal{V}_0\ket\psi
  :=\ket\psi_0\otimes\ket{\zeta_d}_{B_1}\otimes\cdots\otimes\ket{\zeta_d}_{B_q}.
\end{equation}

\begin{smproposition}
\label{prop:seed-covariance}
The map $\mathcal{V}_0:V_{\omega_1}\to V_{\omega_1}^{\otimes n}$ is an isometry and $SU(d)$-covariant, i. e.
\[
  \mathcal{V}_0U(g)=U(g)^{\otimes n}\mathcal{V}_0
  \qquad(g\in SU(d)).
\]
\end{smproposition}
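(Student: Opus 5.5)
The plan is to check the two properties separately, both reducing to elementary facts about tensor products and to Lemma~\ref{lem:singlet-red-states}.

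\emph{Isometry.} For $\ket\psi,\ket\phi\in V_{\omega_1}$ we compute the inner product factor by factor:
\[
\bra\phi\mathcal V_0^\dagger\mathcal V_0\ket\psi=\langle\phi|\psi\rangle\prod_{b=1}^q\langle\zeta_d|\zeta_d\rangle .
\]
The state $\ket{\zeta_d}$ is normalized: the $d!$ product vectors $e_{\pi(1)}\otimes\cdots\otimes e_{\pi(d)}$ are orthonormal, and each has coefficient of modulus $1/\sqrt{d!}$. Hence every factor $\langle\zeta_d|\zeta_d\rangle$ equals $1$, and $\mathcal V_0^\dagger\mathcal V_0=I_{V_{\omega_1}}$. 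This also shows that $\mathcal V_0$ is linear, being $\ket\psi\mapsto\ket\psi\otimes\ket{\Xi}$ for the fixed vector $\ket\Xi:=\bigotimes_b\ket{\zeta_d}_{B_b}$.

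\emph{Covariance.} The tensor factors of $V_{\omega_1}^{\otimes n}$ are ordered according to the partition $\{0\}\cup B_1\cup\cdots\cup B_q$. With this ordering,
\[
U(g)^{\otimes n}=U(g)_0\otimes\bigotimes_{b=1}^q U(g)^{\otimes d}_{B_b}.
\]
Since $U(g)=g$ on the fundamental representation, Lemma~\ref{lem:singlet-red-states} gives $g^{\otimes d}\ket{\zeta_d}=\ket{\zeta_d}$, because $\det g=1$. Applying $U(g)^{\otimes n}$ to $\mathcal V_0\ket\psi$ therefore produces $U(g)\ket\psi\otimes\ket\Xi=\mathcal V_0U(g)\ket\psi$. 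As this holds for every $\ket\psi$, the intertwining identity follows.

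The only point needing care is the bookkeeping of positions. If the sites are labelled $1,\dots,n$ in a different order, $\mathcal V_0$ differs by a fixed permutation operator $P_\sigma$, which commutes with $U(g)^{\otimes n}$. So the statement holds for any identification of the positions, and this step is notational rather than substantive.
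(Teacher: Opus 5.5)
Your proof is correct and follows the paper's argument. The isometry comes from the normalization of the singlets, and covariance comes from applying $g^{\otimes d}\ket{\zeta_d}=\ket{\zeta_d}$ (Lemma~\ref{lem:singlet-red-states}) to each background block. Your version simply spells out the details the paper leaves implicit, including the harmless site-ordering bookkeeping via a permutation operator commuting with $U(g)^{\otimes n}$.
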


\begin{proof}
Isometry is immediate from normalization of the singlets.  
Covariance follows by applying Lemma~\ref{lem:singlet-red-states} to each of the $q$ background blocks.
\end{proof}
We fix an isometric Schur--Weyl inclusion
\begin{equation}
\label{eq:W-inclusion}
  W_{\omega_1}:V_{\omega_1}\otimes M_{\omega_1}\longrightarrow V_{\omega_1}^{\otimes n},
\end{equation}
which is obviously an $SU(d)$- and $S_n$-intertwiner.
Such an inclusion exists and is unique up to a phase. Schur--Weyl duality decomposes $V_{\omega_1}^{\otimes n}$ into the summands $\left(S_\lambda(V)\downarrow_{SU(d)}\right)\otimes[\lambda]$, 
and the Specht modules $[\lambda]$ are pairwise non-isomorphic, so these summands are pairwise non-isomorphic as $SU(d)\times S_n$-modules. 
By Lemma~\ref{lem:M_module_irreducibility} exactly one of them has first factor $V_{\omega_1}$, namely $\lambda_q=(q+1,q^{d-1})$ with $M_{\omega_1}\cong[\lambda_q]$. 
Hence $V_{\omega_1}\otimes M_{\omega_1}$ is irreducible as an $SU(d)\times S_n$-module, being an external tensor product of irreducibles, and occurs in $V_{\omega_1}^{\otimes n}$ with multiplicity one. 
Schur's lemma then makes any two isometric intertwiners of the form~\eqref{eq:W-inclusion} differ by a unimodular scalar. 
Consequently $\mathcal{V}_m$ below is determined by $\ket m$ up to a global phase, which leaves $\mathcal E_m$ unchanged, and the amplitudes of $\ket m$ in a fixed Young--Yamanouchi basis of $M_{\omega_1}$ are well defined.
Since $\mathcal{V}_0$ is an $SU(d)$-intertwiner into the fundamental isotypic component, Schur's lemma gives a unit vector $\ket{m_0}\in M_{\omega_1}$ such that, after fixing the identification of the irrep factor,
\begin{equation}
\label{eq:seed-multiplicity-vector}
  \mathcal{V}_0=W_{\omega_1}(I_{V_{\omega_1}}\otimes\ket{m_0}).
\end{equation}

For every unit vector $\ket m\in M_{\omega_1}$, define
\begin{equation}
\label{eq:Vm}
  \mathcal{V}_m:=W_{\omega_1}(I_{V_{\omega_1}}\otimes\ket m):V_{\omega_1}\longrightarrow V_{\omega_1}^{\otimes n},
  \qquad
  \mathcal{E}_m(X):=\mathcal{V}_mX\mathcal{V}_m^\dagger.
\end{equation}
Every $\mathcal{V}_m$ is an isometry and is exactly $SU(d)$-covariant.

\begin{smproposition}
\label{prop:orbit-twirl}
For every $X\in\End(V_{\omega_1})$,
\begin{equation}
\label{eq:orbit-twirl}
  \mathbb{E}_{m\sim\mathrm{Haar}(M_{\omega_1})}\bigl[\mathcal{V}_mX\mathcal{V}_m^\dagger\bigr]
  =\frac1{n!}\sum_{\pi\in S_n}
  P_\pi\,\mathcal{V}_0X\mathcal{V}_0^\dagger P_\pi^\dagger,
\end{equation}
where $P_\pi$ is the permutation operator on $V_{\omega_1}^{\otimes n}$ corresponding to $\pi\in S_n$; see~\eqref{eq:group_act_on_phys_space}.
\end{smproposition}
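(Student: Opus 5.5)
Both sides of \eqref{eq:orbit-twirl} will be evaluated through the Schur--Weyl inclusion $W_{\omega_1}$ and reduced to statements on the multiplicity space $M_{\omega_1}$. For the left-hand side we write $\mathcal{V}_mX\mathcal{V}_m^\dagger=W_{\omega_1}\bigl(X\otimes\ket m\bra m\bigr)W_{\omega_1}^\dagger$. Linearity in $\ket m\bra m$ and the Haar first-moment formula $\mathbb E_m\ket m\bra m=I_{M_{\omega_1}}/D_q$ then give
\[
\mathbb E_m\bigl[\mathcal{V}_mX\mathcal{V}_m^\dagger\bigr]=\frac{1}{D_q}W_{\omega_1}\bigl(X\otimes I_{M_{\omega_1}}\bigr)W_{\omega_1}^\dagger ,
\]
with $D_q=\dim M_{\omega_1}$ as in Lemma~\ref{lem:dimension_formula}.

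For the right-hand side we use \eqref{eq:seed-multiplicity-vector} to write $\mathcal{V}_0X\mathcal{V}_0^\dagger=W_{\omega_1}(X\otimes\ket{m_0}\bra{m_0})W_{\omega_1}^\dagger$. Since $W_{\omega_1}$ is an $S_n$-intertwiner, $P_\pi W_{\omega_1}=W_{\omega_1}(I\otimes R_{\omega_1}(\pi))$, with $R_{\omega_1}$ the $S_n$ action on $M_{\omega_1}$ as in \eqref{eq:group_act_on_code}. The $S_n$-average therefore becomes
\[
W_{\omega_1}\Bigl(X\otimes\frac1{n!}\sum_{\pi}R_{\omega_1}(\pi)\ket{m_0}\bra{m_0}R_{\omega_1}(\pi)^\dagger\Bigr)W_{\omega_1}^\dagger .
\]
The inner operator commutes with every $R_{\omega_1}(\sigma)$. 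By Lemma~\ref{lem:M_module_irreducibility}, $M_{\omega_1}\cong[q+1,q^{d-1}]$ is an irreducible $S_n$-module, so Schur's lemma forces this operator to equal $c\,I_{M_{\omega_1}}$. Taking traces and using $\langle m_0|m_0\rangle=1$ gives $c=1/D_q$. The two sides then coincide.

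The main point needing care is to confirm that the $S_n$-action on $W_{\omega_1}(V_{\omega_1}\otimes M_{\omega_1})$ really is $I\otimes R_{\omega_1}(\pi)$ with $R_{\omega_1}$ unitary and irreducible. This follows from the discussion after \eqref{eq:W-inclusion}: the isotypic component is an irreducible $SU(d)\times S_n$-module $V_{\omega_1}\boxtimes[\lambda_q]$ occurring with multiplicity one. A second point is that \eqref{eq:seed-multiplicity-vector} holds only up to the fixed identification of the irrep factor. Any residual unitary on $V_{\omega_1}$ commuting with $SU(d)$ is a scalar phase by Schur's lemma, and this phase cancels in $\mathcal{V}_0X\mathcal{V}_0^\dagger$. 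The Haar average is invariant under such phases for the same reason.
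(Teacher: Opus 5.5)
Your proposal is correct and follows essentially the same argument as the paper: the Haar first moment $\mathbb E_m\ket m\bra m=I_{M_{\omega_1}}/\dim M_{\omega_1}$ on the left, the $S_n$-intertwining property of $W_{\omega_1}$ to move the twirl onto $M_{\omega_1}$, and Schur's lemma plus unit trace using the irreducibility of $M_{\omega_1}\cong[\lambda_q]$. Your remarks on the phase ambiguity in \eqref{eq:seed-multiplicity-vector} are a harmless refinement that the paper leaves implicit.
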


\begin{proof}
The Haar first moment is
\[
  \mathbb{E}_m\ket m\bra m=\frac{I_{M_{\omega_1}}}{\dim M_{\omega_1}},
\]
hence the left-hand side is $W_{\omega_1}(X\otimes I_{M_{\omega_1}}/\dim M_{\omega_1})W_{\omega_1}^\dagger$.  
On the other hand, by~\eqref{eq:group_act_on_code} and~\eqref{eq:seed-multiplicity-vector}, the right-hand side is
\[
  W_{\omega_1}\!\left(
    X\otimes\frac1{n!}\sum_{\pi\in S_n}
    R_{\omega_1}(\pi)\ket{m_0}\bra{m_0}R_{\omega_1}(\pi)^\dagger
  \right)W_{\omega_1}^\dagger.
\]
The operator in parentheses commutes with the irreducible representation $R_{\omega_1}$.  
Schur's lemma and unit trace make it $I_{M_{\omega_1}}/\dim M_{\omega_1}$.
\end{proof}

Set
\[
  N:=qd=n-1.
\]
Let $\beta_{\ell,N}$ denote the state of $\ell$ labeled positions obtained by choosing an ordered $\ell$-tuple of distinct sites uniformly from the $N$ sites 
of the background state $\ket{\zeta_d}\bra{\zeta_d}^{\otimes q}$ and retaining the chosen sites in their sampled order.  
Equivalently, it is the $\ell$-site reduced state of the permutation twirl of the background. We set $\beta_{0,N}=1$.

For $x\in\C$ and $r\in\mathbb N$, write
\[
  (x)_r=x(x-1)\cdots(x-r+1)
\]
for the falling factorial. If $\pi$ is a set partition of $[\ell]$ and $B\in\pi$ is a block, let $\tau_{|B|}^{(B)}$ denote the state~\eqref{eq:tau-s} acting on the tensor positions indexed by $B$.

\begin{smproposition}
\label{prop:beta-partition}
For every $0\le\ell\le N$,
\begin{equation}
\label{eq:beta-partition}
  \beta_{\ell,N}
  =
  \sum_{\substack{\pi\in\operatorname{Part}([\ell])\\ |B|\le d\ \forall B\in\pi}}
  \frac{(q)_{|\pi|}\prod_{B\in\pi}(d)_{|B|}}
       {(qd)_\ell}
  \bigotimes_{B\in\pi}\tau_{|B|}^{(B)}.
\end{equation}
\end{smproposition}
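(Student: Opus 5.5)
The plan is to compute $\beta_{\ell,N}$ directly from its definition as an average over ordered $\ell$-tuples of distinct background sites. We group the tuples according to how their entries are distributed among the $q$ singlet blocks. An ordered tuple $(x_1,\ldots,x_\ell)$ of distinct sites in $\{1,\ldots,N\}$ determines a set partition $\pi$ of the label set $[\ell]$: labels $a,b$ lie in the same block of $\pi$ iff $x_a$ and $x_b$ lie in the same singlet block $B_r$. Since each singlet block has $d$ sites, every block of $\pi$ has size at most $d$. Conversely, fix $\pi$ with all $|B|\le d$. The tuples inducing $\pi$ are obtained in two steps. First, choose an injective assignment of the blocks of $\pi$ to distinct singlet blocks, which gives $(q)_{|\pi|}$ choices. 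Second, for each block $B$ of $\pi$, choose an injective ordered placement of its labels into the $d$ sites of the assigned singlet, which gives $(d)_{|B|}$ choices. The total number of ordered tuples is $(qd)_\ell$. Hence the probability of inducing $\pi$ is exactly the coefficient appearing in~\eqref{eq:beta-partition}.

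Next, we show that for a tuple inducing $\pi$, the reduced state on the retained sites, taken in their sampled order, equals $\bigotimes_{B\in\pi}\tau_{|B|}^{(B)}$. This holds for every such tuple, not just on average. The background $\ket{\zeta_d}\bra{\zeta_d}^{\otimes q}$ is a product across singlet blocks, so the partial trace factorizes over blocks. Singlet blocks containing no sampled site contribute trace $1$. A singlet block containing the sites indexed by $B$ contributes the $|B|$-site marginal of $\ket{\zeta_d}$, which by Lemma~\ref{lem:singlet-red-states} is $\tau_{|B|}$.

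The main subtlety is ordering. The sampled sites within a singlet may appear in any order, and Lemma~\ref{lem:singlet-red-states} speaks of "any $s$ tensor factors" without an explicit ordering. We therefore check that $\tau_s$ is invariant under permuting its tensor factors. This holds because $\Pi_{\wedge^s V_{\omega_1}}$ commutes with $P_\sigma$: on the antisymmetric subspace, $P_\sigma$ acts as $\operatorname{sgn}(\sigma)$, and conjugating the projector by $P_\sigma$ leaves it unchanged. Consequently, relabeling the retained positions according to the sampled order gives exactly $\tau_{|B|}$ placed on the positions $B$.

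Summing over $\pi$ with these weights then yields~\eqref{eq:beta-partition}. As a consistency check, the weights sum to $1$, since they partition all $(qd)_\ell$ tuples. The edge cases $\ell=0$, where $\pi$ is empty and the formula gives $1$, and $\ell\le N$, which ensures $(qd)_\ell\neq 0$, are immediate.
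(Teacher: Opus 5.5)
Your proposal is correct and follows the paper's proof essentially step for step. You classify ordered samples by their occupancy partition $\pi$, count them as $(q)_{|\pi|}\prod_{B\in\pi}(d)_{|B|}$ out of $(qd)_\ell$, and use factorization across singlets together with Lemma~\ref{lem:singlet-red-states} to identify the conditional state as $\bigotimes_{B\in\pi}\tau_{|B|}^{(B)}$. You also check that $\tau_s$ is invariant under permuting its tensor factors; this makes explicit, correctly, an ordering point the paper leaves implicit.
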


\begin{proof}
An ordered sample of $\ell$ distinct background sites determines a set partition $\pi$ of $[\ell]$: two sample labels lie in the same block precisely when the corresponding 
sites came from the same singlet block. If $\pi$ has $r$ blocks, there are $(q)_r$ injective assignments of its blocks to the $q$ singlets. 
Once a block $B$ has been assigned to a singlet, there are $(d)_{|B|}$ ordered choices of distinct sites within that singlet.  
The total number of ordered samples is $(qd)_\ell$. Conditional on the occupancy partition, the retained state factorizes across distinct singlets, and the factor belonging to $B$ is $\tau_{|B|}^{(B)}$ by 
Lemma~\ref{lem:singlet-red-states}. Summing the conditional states with their exact probabilities gives~\eqref{eq:beta-partition}.
\end{proof}

\begin{smlemma}
\label{lem:collision}
For fixed $\ell$,
\begin{equation}
\label{eq:collision-bound}
  \left\|\beta_{\ell,N}-\tau^{\otimes\ell}\right\|_1
  \le
  2\binom\ell2\frac{d-1}{N-1}.
\end{equation}
Consequently $\beta_{\ell,N}=\tau^{\otimes\ell}+O_{d,\ell}(n^{-1})$ in trace norm.
\end{smlemma}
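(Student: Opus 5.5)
The plan is a coupling argument: compare $\beta_{\ell,N}$ with $\tau^{\otimes\ell}$ by splitting on whether the sampled sites collide in a singlet block. In the partition expansion of Proposition~\ref{prop:beta-partition}, the all-singletons partition $\pi_0=\{\{1\},\ldots,\{\ell\}\}$ contributes $p_0\,\tau^{\otimes\ell}$, since $\tau_1=\tau$ by Lemma~\ref{lem:singlet-red-states}. Here
\[
p_0=\frac{(q)_\ell\, d^\ell}{(qd)_\ell}
\]
is the probability that the $\ell$ sampled sites lie in pairwise distinct singlets. Every other partition contributes a density operator (a tensor product of the states $\tau_{|B|}$) with weight summing to $1-p_0$. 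Writing $\beta_{\ell,N}=p_0\tau^{\otimes\ell}+(1-p_0)\sigma$ for some state $\sigma$, we get
\[
\beta_{\ell,N}-\tau^{\otimes\ell}=(1-p_0)(\sigma-\tau^{\otimes\ell}),
\]
and therefore $\|\beta_{\ell,N}-\tau^{\otimes\ell}\|_1\le 2(1-p_0)$.

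It remains to bound the collision probability $1-p_0$ by a union bound over the $\binom\ell2$ pairs of sample labels. For a fixed pair $(a,b)$, the ordered pair of sites is a uniformly random pair of distinct background sites. Given the first site, the second is uniform among the other $N-1$ sites, and exactly $d-1$ of those share its singlet. Hence the collision probability for that pair is exactly $(d-1)/(N-1)$, and summing over pairs gives
\[
1-p_0\le\binom\ell2\frac{d-1}{N-1}.
\]
This yields~\eqref{eq:collision-bound}. The consequence follows because $N=n-1$, so for fixed $\ell,d$ the bound is $O_{d,\ell}(n^{-1})$.

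The only care needed is the bookkeeping that the non-$\pi_0$ terms really form a convex combination of states. This holds because the coefficients in~\eqref{eq:beta-partition} are exact occupancy probabilities that sum to one, and each $\bigotimes_B\tau^{(B)}_{|B|}$ is positive with unit trace. I do not expect a genuine obstacle beyond that.
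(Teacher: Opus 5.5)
Your proposal is correct and follows the paper's proof essentially verbatim. Both use a union bound over the $\binom{\ell}{2}$ label pairs with per-pair collision probability $(d-1)/(N-1)$, then the convex decomposition $\beta_{\ell,N}=(1-p_{\mathrm{coll}})\tau^{\otimes\ell}+p_{\mathrm{coll}}\sigma_{\mathrm{coll}}$ with $\|\sigma_{\mathrm{coll}}-\tau^{\otimes\ell}\|_1\le 2$; the only cosmetic difference is that you read off the no-collision weight from the partition expansion of Proposition~\ref{prop:beta-partition}, whereas the paper conditions directly on the occupancy event.
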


\begin{proof}
For any fixed pair of sample labels, after the first site has been chosen, exactly $d-1$ of the remaining $N-1$ sites belong to the same singlet. 
Hence the probability that this pair collides is $(d-1)/(N-1)$. By the union bound, the probability $p_{\mathrm{coll}}$ that at least one pair collides obeys
\[
  p_{\mathrm{coll}}
  \le\binom\ell2\frac{d-1}{N-1}.
\]
Conditioned on no collision, the $\ell$ sites come from distinct singlets. Their state is exactly $\tau^{\otimes\ell}$ because different singlet blocks are independent and every one-site singlet reduced state is $\tau$.
Thus
\[
  \beta_{\ell,N}
  =(1-p_{\mathrm{coll}})\tau^{\otimes\ell}
    +p_{\mathrm{coll}}\sigma_{\mathrm{coll}}
\]
for a density operator $\sigma_{\mathrm{coll}}$, and~\eqref{eq:collision-bound} follows from $\|\sigma_{\mathrm{coll}}-\tau^{\otimes\ell}\|_1\le2$.
\end{proof}

For a subset $S\subseteq\{1,\ldots,n\}$ of size $k$, use increasing order to identify $V_{\omega_1}^{\otimes S}$ with $V_{\omega_1}^{\otimes k}$ and define \textit{the erased-subsystem channel}
\begin{equation}
\label{eq:local-channel}
  \Phi_{S,m}(X)
  :=\Tr_{\hat S}(\mathcal{V}_mX\mathcal{V}_m^\dagger).
\end{equation}
and define the \textit{mean $k$-site channel}
\[
  \overline\Phi_{k,n}
  :=\mathbb{E}_m\Phi_{S,m},
\]
where $S$ is any fixed $k$-subset. By permutation invariance of the average, this is independent of $S$ after the canonical identification of the output sites.

\begin{smproposition}
\label{prop:exact-mean-channel}
Let
    $
        \mathcal H_L=V_{\omega_1},\;
        \mathcal H_P=(V_{\omega_1})^{\otimes n}
    $
    with $n=qd+1$.
    Equip \(\mathcal H_L\) with the fundamental representation of
    \(\mathfrak{su}(d)\), and equip \(\mathcal H_P\) with the transversal
    representation given by
    \begin{equation}
    \begin{gathered}
    t_a \rightarrow \sum_{i=1}^n t_a^{(i)}, \quad a=1,\ldots,d^2-1,
    \end{gathered}
    \end{equation}
    where $t_a^{(i)}$ is a fundamental of $t_a$ on the $i$-th physical space $V_{\omega_1}$. Then, for every $X\in\End(V_{\omega_1})$,
\begin{equation}
\label{eq:exact-mean-channel}
    \overline{\Phi}_{k,n}(X)
    =
    \frac{n-k}{n}\Tr(X)\,\beta_{k,N}
    +
    \frac{1}{n}\sum_{j=1}^k
    X^{(j)}\otimes
    \beta_{k-1,N}^{(\widehat j)}
    .
\end{equation}
Here $X^{(j)}$ acts on the $j$-th retained tensor factor, while
$\beta_{k-1,N}^{(\widehat j)}$ acts on all retained tensor factors
except $j$.
\end{smproposition}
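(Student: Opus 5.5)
By Proposition~\ref{prop:orbit-twirl}, the Haar average of $\mathcal{V}_mX\mathcal{V}_m^\dagger$ equals the $S_n$-twirl of the seed encoding $\mathcal{V}_0X\mathcal{V}_0^\dagger$. Partial trace is linear, so
\[
\overline\Phi_{k,n}(X)=\frac1{n!}\sum_{\pi\in S_n}\Tr_{\hat S}\bigl(P_\pi\mathcal{V}_0X\mathcal{V}_0^\dagger P_\pi^\dagger\bigr).
\]
Tracing out $\hat S$ after a permutation $\pi$ is the same as retaining the sites $\pi^{-1}(S)$ of the unpermuted seed state, in the order induced by $\pi$. So I will read the right-hand side probabilistically: choose an ordered $k$-tuple $(s_1,\ldots,s_k)$ of distinct sites uniformly at random among the $n$ sites of the seed, and average the reduced state of $\ket\psi_0\otimes\ket{\zeta_d}^{\otimes q}$ (with $X$ in place of $\ket\psi\bra\psi$) on those positions, placed in the sampled order. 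A uniform $\pi$ induces exactly this uniform ordered sample, so the two averages agree.

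I then condition on whether the distinguished site $0$ is sampled. It is not sampled with probability $\binom{n-1}{k}/\binom nk=(n-k)/n$. On that event, tracing out site $0$ gives $\Tr(X)$ times the background state. The retained tuple is a uniform ordered $k$-tuple of distinct background sites, so the conditional average is $\beta_{k,N}$ by its definition, which produces the first term.

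Otherwise site $0$ is sampled, and it sits in each retained slot $j$ with probability $1/n$, since each slot is uniformly distributed over the $n$ sites. In that case the seed state factorizes as $X$ on site $0$ tensored with the background. The other $k-1$ slots then form a uniform ordered $(k-1)$-tuple of distinct background sites, and averaging over them gives $X^{(j)}\otimes\beta_{k-1,N}^{(\widehat j)}$. Summing over $j$ gives the second term.

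The only delicate point is the bookkeeping of the identification. I need to check that $P_\pi$, with the convention of Eq.~\eqref{eq:group_act_on_phys_space}, sends the seed's site ordering to the retained increasing order of $S$ exactly as the sampled order. I also need the conditional distribution of the remaining slots to be uniform on ordered distinct background tuples. Both follow from symmetry of the uniform measure on $S_n$, and the result is visibly independent of $S$. No Haar-moment computation beyond Proposition~\ref{prop:orbit-twirl} is needed.
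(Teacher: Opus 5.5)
Your proposal is correct and matches the paper's proof: you reduce via the orbit-twirl identity to a uniformly random ordered sample of the seed state, then condition on whether the distinguished site $0$ is retained (probability $(n-k)/n$ that it is not) and, if it is, in which slot $j$ it lands (probability $1/n$ each). Your explicit check of the $P_\pi$ ordering convention and of $\binom{n-1}{k}/\binom{n}{k}=(n-k)/n$ spells out steps the paper states in a single line.
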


\begin{proof}
Use the permutation-twirl description in Proposition~\ref{prop:orbit-twirl}. Under a uniformly random permutation, the distinguished logical seed position is uniformly 
distributed among the $n$ physical positions.
If it is outside the retained $k$-set, which occurs with probability $(n-k)/n$, tracing it out produces the scalar $\Tr X$, 
and the retained sites form a uniform ordered $k$-sample from the $N$ background sites. This gives the first term of~\eqref{eq:exact-mean-channel}.
If the distinguished position is the $j$th retained position, which occurs with probability $1/n$, that factor contains $X$, while the remaining $k-1$ factors form a uniform ordered sample from the background. 
Summing over $j$ gives the second term.  
\end{proof}

\begin{smlemma}
\label{lem:bures-expansion}
Let $r$ be fixed.  Let $\eta>0$ be an $r\times r$ density operator with eigenvalues $\lambda_1,\ldots,\lambda_r$, and let $\Delta=\Delta^\dagger$ satisfy $\Tr\Delta=0$ and $\eta+\Delta\ge0$. 
If $\lambda_{\min}(\eta)\ge c>0$ and $\|\Delta\|_2$ is sufficiently small, then, in an eigenbasis of $\eta$,
\begin{equation}
\label{eq:general-bures-expansion}
  f(\eta+\Delta,\eta)
  =1-\frac14\sum_{a,b=1}^r
      \frac{|\Delta_{ab}|^2}{\lambda_a+\lambda_b}
      +O_{r,c}(\|\Delta\|_2^3).
\end{equation}
If in addition
\[
  \eta=\frac{I_r}{r}+O(\epsilon),
  \qquad
  \Delta=O(\epsilon),
\]
then
\begin{equation}
\label{eq:mixed-bures-expansion}
  f(\eta+\Delta,\eta)
  =1-\frac r8\Tr(\Delta^2)+O_r(\epsilon^3).
\end{equation}
\end{smlemma}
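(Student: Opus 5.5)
We will prove the general expansion \eqref{eq:general-bures-expansion} first and then specialize it. Since $\eta$ is a density operator, $f(\eta+\Delta,\eta)=\Tr\sqrt{\sqrt\eta(\eta+\Delta)\sqrt\eta}=\Tr\sqrt{\eta^2+\sqrt\eta\,\Delta\sqrt\eta}$. Set $G(A):=\Tr\sqrt A$ and $H:=\sqrt\eta\,\Delta\sqrt\eta$. Because $\lambda_{\min}(\eta^2)\ge c^2>0$, $G$ is real-analytic on a neighbourhood of $\eta^2$ whose radius depends only on $c$ and $r$. We then expand $G(\eta^2+tH)$ to second order in $t$, with a third-order remainder bounded by $O_{r,c}(\|H\|^3)=O_{r,c}(\|\Delta\|_2^3)$. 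To make the remainder bound uniform, we will use the holomorphic functional calculus (a Cauchy integral for $\sqrt{\cdot}$ over a contour enclosing $[c^2,1]$ and avoiding $0$), which requires $\|\Delta\|_2$ small compared with $c^2$.

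\emph{Zeroth and first orders.} The zeroth-order term is $\Tr\eta=1$. For the first order, $\frac{d}{dt}\Tr\sqrt{A+tH}\big|_0=\frac12\Tr(A^{-1/2}H)$, which gives $\frac12\Tr(\eta^{-1}\sqrt\eta\,\Delta\sqrt\eta)=\frac12\Tr\Delta=0$.

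\emph{Second order.} This is the main computation, carried out in the eigenbasis of $\eta$, where $A=\eta^2=\operatorname{diag}(\lambda_a^2)$. The Daleckii--Krein formula gives $\frac{d^2}{dt^2}\Tr g(A+tH)\big|_0=\sum_{a,b} g^{[1]}(\mu_a,\mu_b)\,|H_{ab}|^2$, where $g^{[1]}$ is the first divided difference of $g'$ and $\mu_a=\lambda_a^2$. For $g=\sqrt{\cdot}$ we have $g'(x)=\frac1{2\sqrt x}$, so
\[
g^{[1]}(\mu_a,\mu_b)=\frac{1/(2\lambda_a)-1/(2\lambda_b)}{\lambda_a^2-\lambda_b^2}=-\frac{1}{2\lambda_a\lambda_b(\lambda_a+\lambda_b)},
\]
and this formula extends continuously to the diagonal. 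Since $|H_{ab}|^2=\lambda_a\lambda_b|\Delta_{ab}|^2$, half the second derivative equals $-\frac14\sum_{a,b}|\Delta_{ab}|^2/(\lambda_a+\lambda_b)$, which is \eqref{eq:general-bures-expansion}. The delicate points are the validity of the Daleckii--Krein second-order formula for the trace (it follows from $\Tr$ of the resolvent expansion) and the uniformity of the constants in $c$.

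\emph{Specialization.} For \eqref{eq:mixed-bures-expansion}, suppose $\eta=I_r/r+O(\epsilon)$. Then $\lambda_a=1/r+O(\epsilon)$, so $\lambda_{\min}\ge 1/(2r)$ for small $\epsilon$, and $1/(\lambda_a+\lambda_b)=r/2+O(\epsilon)$. Substituting into the sum gives
\[
-\frac14\Bigl(\frac r2+O(\epsilon)\Bigr)\sum_{a,b}|\Delta_{ab}|^2=-\frac r8\Tr\Delta^2+O(\epsilon)\,O(\epsilon^2).
\]
The cubic remainder is $O_r(\epsilon^3)$ because $\|\Delta\|_2=O(\epsilon)$ and $c=1/(2r)$ depends only on $r$.
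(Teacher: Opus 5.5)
Your proposal is correct and follows essentially the same route as the paper. Both rewrite $f$ as $\Tr\sqrt{\eta^2+\sqrt\eta\,\Delta\sqrt\eta}$, kill the first-order term with $\Tr\Delta=0$, evaluate the second-order term with the Daleckii--Krein divided-difference formula $-1/(2\lambda_a\lambda_b(\lambda_a+\lambda_b))$, and then specialize via $1/(\lambda_a+\lambda_b)=r/2+O(\epsilon)$. Your Cauchy-integral argument for the uniform cubic remainder justifies explicitly the bounded-third-derivative claim that the paper only asserts.
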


\begin{proof}
Diagonalize $\eta=\operatorname{diag}(\lambda_1,\ldots,\lambda_r)$ and set
\[
  B:=\sqrt\eta\,\Delta\sqrt\eta.
\]
Then
\[
  f(\eta+\Delta,\eta)
  =\Tr\sqrt{\eta^2+B}.
\]
For $f(x)=\sqrt x$, the second derivative of the trace functional along a Hermitian perturbation is
\[
  \left.\frac{\dd^2}{\dd t^2}\Tr f(A+tB)\right|_{t=0}
  =\sum_{a,b}
  \frac{f'(\mu_a)-f'(\mu_b)}{\mu_a-\mu_b}|B_{ab}|^2,
\]
with the diagonal terms interpreted by continuity, where $\mu_a$ are the eigenvalues of $A$. 
Here $A=\eta^2$, so $\mu_a=\lambda_a^2$, $B_{ab}=\sqrt{\lambda_a\lambda_b}\,\Delta_{ab}$, and
\[
  \frac{f'(\lambda_a^2)-f'(\lambda_b^2)}
       {\lambda_a^2-\lambda_b^2}
  =-\frac1{2\lambda_a\lambda_b(\lambda_a+\lambda_b)}.
\]
The first derivative is
\[
  \sum_a\frac1{2\lambda_a}B_{aa}
  =\frac12\Tr\Delta=0.
\]
Multiplying the second derivative by the Taylor factor $1/2$ gives the quadratic term in~\eqref{eq:general-bures-expansion}. 
Because the spectrum remains in a compact subset of $(0,\infty)$, the trace square-root functional has a uniformly bounded third Fr\'echet derivative in a neighborhood of $\eta^2$, 
yielding the stated remainder.

If $\lambda_a=1/r+O(\epsilon)$, then
\[
  \frac1{\lambda_a+\lambda_b}=\frac r2+O_r(\epsilon).
\]
Since $\sum_{a,b}|\Delta_{ab}|^2=\Tr(\Delta^2)=O(\epsilon^2)$, substituting this into~\eqref{eq:general-bures-expansion} gives~\eqref{eq:mixed-bures-expansion}.
\end{proof}

\begin{smproposition}\label{prop:k-loc_fidelity}
For fixed $k, d$
\begin{equation}
\label{eq:mean-fidelity}
  F(\overline\Phi_{k,n},\Lambda_{k,n})
  =1-\frac{k(d^2-1)}{8n^2}+O_{d,k}(n^{-3}).
\end{equation}
where 
\begin{equation}
\label{eq:lambda-kn}
  \lambda_{k,n}:=\overline\Phi_{k,n}(\tau),
  \qquad
  \Lambda_{k,n}(X):=\Tr(X)\,\lambda_{k,n}.
\end{equation}
\end{smproposition}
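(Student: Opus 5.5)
First, reduce to the maximally mixed input. Both $\overline\Phi_{k,n}$ and $\Lambda_{k,n}$ are $SU(d)$-covariant. For $\overline\Phi_{k,n}$ this follows from Proposition~\ref{prop:exact-mean-channel}, since every $\beta_{\ell,N}$ is $SU(d)$-invariant, being built from the invariant states $\tau_s$ of Lemma~\ref{lem:singlet-red-states}. The constant channel $\Lambda_{k,n}$ is covariant because $\lambda_{k,n}$ is invariant. Lemma~\ref{fidelity_opt} then gives $F=F_{I/d}$. With $\ket\Psi=d^{-1/2}\sum_r\ket{rr}$ we have $\eta:=(\Lambda_{k,n}\otimes\mathrm{id})(\Psi)=\lambda_{k,n}\otimes I_d/d$. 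Decomposing $E_{rs}=\delta_{rs}I/d+F_{rs}$ and using linearity gives $\sigma:=(\overline\Phi_{k,n}\otimes\mathrm{id})(\Psi)=\eta+\Delta$, where
$$\Delta=\frac1d\sum_{r,s}\overline\Phi_{k,n}(F_{rs})\otimes E_{rs}.$$
The term $\frac{n-k}{n}\operatorname{Tr}(X)\beta_{k,N}$ vanishes on the traceless $F_{rs}$. Proposition~\ref{prop:exact-mean-channel} therefore gives exactly
$$\Delta=\frac1{nd}\sum_{j=1}^k\sum_{r,s}F_{rs}^{(j)}\otimes\beta_{k-1,N}^{(\widehat j)}\otimes E_{rs}.$$
This $\Delta$ is Hermitian and traceless, and $\|\Delta\|_2=O_{d,k}(1/n)$.

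Next, apply the expansion. By Lemma~\ref{lem:collision}, $\lambda_{k,n}=\frac{n-k}{n}\beta_{k,N}+\frac kn\,(\text{states})=\tau^{\otimes k}+O_{d,k}(1/n)$, so $\eta=I_r/r+O(1/n)$ with $r=d^{k+1}$. In particular $\lambda_{\min}(\eta)$ is bounded below for large $n$, and $\eta+\Delta=\sigma\ge0$. The second part of Lemma~\ref{lem:bures-expansion}, with $\epsilon=1/n$, then gives
$$f(\sigma,\eta)=1-\frac{d^{k+1}}8\operatorname{Tr}(\Delta^2)+O_{d,k}(n^{-3}).$$

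The remaining step is to evaluate $\operatorname{Tr}\Delta^2$ to order $n^{-2}$. Replace each $\beta_{k-1,N}$ by $\tau^{\otimes(k-1)}$; Lemma~\ref{lem:collision} says this changes $\Delta$ by $O(n^{-2})$ in trace norm, hence $\operatorname{Tr}\Delta^2$ by $O(n^{-3})$. Using $\sum_{r,s}F_{rs}\otimes E_{rs}=d\ket\Phi\bra\Phi-I/d$, this yields
$$\Delta_0=\frac1{nd^k}\sum_{j=1}^k\Bigl(d\ket\Phi\bra\Phi_{jR}-\tfrac1d I\Bigr),$$
which mirrors the three-site computation in Proposition~\ref{prop:3_transitive_fidelity}. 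The diagonal terms give $\operatorname{Tr}(d\Phi_{jR}-I/d)^2=d^{k-1}(d^2-1)$. The cross terms $j\ne j'$ vanish: $\operatorname{Tr}(d\Phi_{jR}\,d\Phi_{j'R})=d^{k-1}$, which cancels against the identity contributions, because the partial trace of each factor over $R$ is proportional to the identity. Hence
$$\operatorname{Tr}\Delta_0^2=\frac{k(d^2-1)}{n^2d^{k+1}},$$
and substituting into the expansion gives $1-k(d^2-1)/(8n^2)$. Carrying out this cross-term cancellation carefully, together with the uniform control of the $O(n^{-3})$ remainder, is the step I expect to be the main bookkeeping obstacle. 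A useful consistency check is that $k=1$ and $k=3$ should reproduce Eq.~\eqref{eq:fidelity_sud} at $r=1$ and Eq.~\eqref{eq:fidelity_3}, respectively.
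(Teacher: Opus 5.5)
Your proposal is correct and follows essentially the same route as the paper: reduction to the maximally mixed input via covariance and Lemma~\ref{fidelity_opt}, the exact Choi difference $\Delta=\frac1n\sum_j\Gamma_{jR}\otimes\beta_{k-1,N}^{(\widehat j)}$ from Proposition~\ref{prop:exact-mean-channel}, replacement of $\beta_{k-1,N}$ by $\tau^{\otimes(k-1)}$ at $O(n^{-3})$ cost in $\operatorname{Tr}\Delta^2$, vanishing cross terms between the $k$ site contributions, and Lemma~\ref{lem:bures-expansion} with $r=d^{k+1}$, $\epsilon=1/n$. Your operator $\frac1d\bigl(d\Phi_{jR}-\tfrac1d I\bigr)$ is exactly the paper's $\Gamma_{jR}$, and the only extra the paper supplies is an explicit threshold $n_0(d,k)$ for the lower bound on $\lambda_{\min}(\eta_{k,n})$, which your ``for large $n$'' already covers for this asymptotic statement.
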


\begin{proof}
    From Proposition~\ref{prop:exact-mean-channel} and Lemma~\ref{lem:collision} it follows that
    \begin{equation}
    \label{eq:exact-leakage}
    (\overline\Phi_{k,n}-\Lambda_{k,n})(X)
    =\frac1n\sum_{j=1}^k
    \left(X-\Tr(X)\tau\right)^{(j)}
    \otimes\beta_{k-1,N}^{(\widehat j)}=\frac1n\sum_{j=1}^k
  \left(X-\Tr(X)\tau\right)^{(j)}
  \otimes\tau^{\otimes(k-1)}_{\widehat j}
  +O_{d,k}(n^{-2})\,\|X\|_1.
    \end{equation}
    Both $\overline\Phi_{k,n}$ and $\Lambda_{k,n}$ are $SU(d)$-covariant under the fundamental input action and the collective output action.  
    Indeed, every $\beta_{\ell,N}$ is $SU(d)$-invariant.  Therefore Lemma~\ref{fidelity_opt} applies,
    meaning that optimum is attained on maximally mixed state $I_d/d$, i. e.
    \[
        F(\overline\Phi_{k,n},\Lambda_{k,n})=f(\sigma_{k,n},\eta_{k,n})=f(\eta_{k,n}+\Delta_{k,n},\eta_{k,n}),
    \]
    where we set
    \begin{align}
    \sigma_{k,n}
    &:=(\overline\Phi_{k,n}\otimes\id_R)(\Psi_{LR}),\\
    \eta_{k,n}
    &:=(\Lambda_{k,n}\otimes\id_R)(\Psi_{LR})
        =\lambda_{k,n}\otimes\tau_R,\\
    \Delta_{k,n}&:=\sigma_{k,n}-\eta_{k,n}.
    \end{align}
    Let $R\cong V_{\omega_1}$ be a reference and let
    \[
    \ket{\Psi_d}_{LR}=\frac1{\sqrt d}\sum_{a=1}^d\ket a_L\ket a_R,
    \qquad
    \Psi_{LR}:=\ket{\Psi_d}\bra{\Psi_d}
    \]
    be purification of maximally mixed state.
    For a retained output site $j$, write $\Psi_{jR}$ for the same maximally entangled state carried by site $j$ and the reference.
    Define the traceless two-body operator
    \begin{equation}
    \label{eq:Gamma}
    \Gamma_{jR}:=\Psi_{jR}-\tau_j\otimes\tau_R.
    \end{equation}
    Notice, that
    \begin{equation}
    \label{eq:exact-choi-difference}
    \Delta_{k,n}
    =\frac1n\sum_{j=1}^k
    \Gamma_{jR}\otimes\beta_{k-1,N}^{(\widehat j)}.
    \end{equation}
    Indeed, expand
\[
  \Psi_{LR}=\frac1d\sum_{a,b=1}^d E_{ab}\otimes E_{ab}
\]
and apply the exact leakage formula~\eqref{eq:exact-leakage} to each $E_{ab}$, to get
\[
  \frac1d\sum_{a,b}
  (E_{ab}-\delta_{ab}\tau)\otimes E_{ab}
  =\Psi_{LR}-\tau_L\otimes\tau_R.
\]
Defining 
\begin{equation}
\label{eq:Aj}
  A_j:=\Gamma_{jR}\otimes\tau_{\widehat j}^{\otimes(k-1)},
\end{equation}
we get
 \begin{equation}
\label{eq:Delta-leading}
  \Delta_{k,n}=\frac1n\sum_{j=1}^kA_j+O_{d,k}(n^{-2}), \quad \eta_{k,n}=\frac{I_{d^{k+1}}}{d^{k+1}}+O_{d,k}(n^{-1})
\end{equation}
from Eq.~\eqref{eq:exact-leakage}.
For $i\ne j$,
\begin{equation}
\label{eq:Aj-norm}
  \Tr(A_j^2)=\frac{d^2-1}{d^{k+1}}, \quad \Tr(A_iA_j)=0.
\end{equation}
Indeed, The partial trace of $\Gamma_{jR}$ over either of its factors vanishes. 
For $i\ne j$, the product $A_iA_j$ contains a factor $\tau_i$ multiplying $\Gamma_{iR}$ while the remaining operator is independent of subsystem $i$; 
tracing subsystem $i$ therefore gives zero. This proves orthogonality.
Next,
\[
  \Tr(\Gamma_{jR}^2)
  =\Tr(\Psi_{jR}^2)
   -2\Tr(\Psi_{jR}\tau_j\otimes\tau_R)
   +\Tr((\tau_j\otimes\tau_R)^2)
  =1-\frac1{d^2}.
\]
Since $\Tr(\tau^2)=1/d$, the second part of Eq.~\eqref{eq:Aj-norm} follows. Therefore
\begin{equation}
\label{eq:Delta-HS}
  \Tr(\Delta_{k,n}^2)
  =\frac{k(d^2-1)}{d^{k+1}n^2}+O_{d,k}(n^{-3}),
\end{equation}
which follows from Eq.~\eqref{eq:Delta-leading} and Eq.~\eqref{eq:Aj-norm}.

The spectral hypothesis of Lemma~\ref{lem:bures-expansion} holds from an explicit threshold on. Since $\tau=I_d/d$, 
Proposition~\ref{prop:exact-mean-channel} writes $\lambda_{k,n}-I/d^k$ as the convex combination
\[
  \lambda_{k,n}-\frac{I}{d^k}
  =\frac{n-k}{n}\left(\beta_{k,N}-\tau^{\otimes k}\right)
   +\frac1n\sum_{j=1}^k\left(\tau^{(j)}\otimes\beta_{k-1,N}^{(\widehat j)}-\tau^{\otimes k}\right),
\]
so Lemma~\ref{lem:collision} and $\binom{k-1}2\le\binom k2$ give
\[
  \left\|\eta_{k,n}-\frac{I}{d^{k+1}}\right\|_1
  =\left\|\lambda_{k,n}-\frac{I}{d^k}\right\|_1
  \le 2\binom k2\frac{d-1}{n-2},
\]
using $\eta_{k,n}=\lambda_{k,n}\otimes\tau_R$ and $\|\tau_R\|_1=1$. As $\|\cdot\|_\infty\le\|\cdot\|_1$,
\[
  \lambda_{\min}(\eta_{k,n})
  \ \ge\ \frac{1}{d^{k+1}}-2\binom k2\frac{d-1}{n-2}
  \ \ge\ \frac{1}{2d^{k+1}}
  \qquad\text{whenever}\qquad
  n\ \ge\ n_0(d,k):=2+2k(k-1)(d-1)d^{k+1}.
\]
In particular $\eta_{k,n}$ is then full rank, and Lemma~\ref{lem:bures-expansion} applies with $c=\tfrac12 d^{-(k+1)}$ for every $n\ge n_0(d,k)$. For $k=1$ no two sampled sites can collide, $\beta_{1,N}=\tau$ exactly, and $n_0(d,1)=2$.

Apply Lemma~\ref{lem:bures-expansion} with $r=d^{k+1}$, $\epsilon=1/n$, and then use Eq.~\eqref{eq:Delta-HS}:
\begin{align*}
  f(\sigma_{k,n},\eta_{k,n})
  &=1-\frac{d^{k+1}}8\Tr(\Delta_{k,n}^2)
       +O_{d,k}(n^{-3})\\
  &=1-\frac{k(d^2-1)}{8n^2}+O_{d,k}(n^{-3}).
\end{align*}
\end{proof}

\subsubsection{Haar-typical multiplicity vectors}\label{subsubsec:supple_haar_typical}

Let
\[
  r_k:=d^{k+1},
  \qquad
  M_{n,k}:=\binom nk d^{2k+2}=\binom nk r_k^2, \qquad D_q:=\operatorname{dim}\left[q+1, q^{d-1}\right].
\]
Fix orthonormal computational bases of $V_{\omega_1}$ and $V_{\omega_1}^{\otimes k}$. 
For $a,b\in[d]$ and output basis vectors $u,v\in[d]^k$, define
\begin{equation}
\label{eq:test-operator}
  A_{S;ab;uv}
  :=(\bra b\otimes I_{M_{\omega_1}})
  W_{\omega_1}^\dagger\bigl(\ket v\bra u_S\otimes I_{\hat S}\bigr)W_{\omega_1}
  (\ket a\otimes I_{M_{\omega_1}})
  \in\End(M_{\omega_1}).
\end{equation}

\begin{smlemma}
\label{lem:quadratic-form}
For every unit $m\in M_{\omega_1}$,
\begin{equation}
\label{eq:entry-quadratic}
  \bra u\Phi_{S,m}(E_{ab})\ket v
  =\bra m A_{S;ab;uv}\ket m,
\end{equation}
with $\|A_{S;ab;uv}\|_\infty\le1$. Moreover,
\begin{equation}
    \frac{\operatorname{Tr} A_{S ; a b ; u v}}{D_q}=\langle u| \overline{\Phi}_{k,n}\left(E_{a b}\right)|v\rangle .
\end{equation}
\end{smlemma}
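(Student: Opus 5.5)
The plan is to verify the three claims of Lemma~\ref{lem:quadratic-form} directly from the definitions \eqref{eq:Vm}, \eqref{eq:local-channel} and \eqref{eq:test-operator}. No representation theory is needed beyond the fact that $W_{\omega_1}$ is an isometry, plus Proposition~\ref{prop:orbit-twirl} for the last identity.

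\emph{Quadratic form.} First I write the matrix element via the partial-trace identity:
\[
\bra u\Tr_{\hat S}(Y)\ket v=\Tr\bigl(Y(\ket v\bra u_S\otimes I_{\hat S})\bigr),
\]
applied with $Y=\mathcal V_mE_{ab}\mathcal V_m^\dagger$. Substituting $E_{ab}=\ket a\bra b$ and $\mathcal V_m=W_{\omega_1}(I\otimes\ket m)$, cyclicity of the trace gives
\[
\bra b\otimes\bra m\; W_{\omega_1}^\dagger(\ket v\bra u_S\otimes I_{\hat S})W_{\omega_1}\;\ket a\otimes\ket m.
\]
This equals $\bra m A_{S;ab;uv}\ket m$, since $(\ket a\otimes I_M)\ket m=\ket a\otimes\ket m$. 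The only thing to check is the bra/ket ordering convention, which fixes whether $u$ or $v$ appears on the left.

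\emph{Norm bound.} Here I use submultiplicativity of the operator norm. The maps $\ket a\otimes I_M$ and $W_{\omega_1}$ are isometries, so each has norm $1$, and $\ket v\bra u_S\otimes I_{\hat S}$ is a partial isometry of norm $1$. Hence $\|A_{S;ab;uv}\|_\infty\le1$.

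\emph{Trace identity.} By linearity, $\Tr A=D_q\,\mathbb E_m\bra mA\ket m$, because the Haar first moment is $\mathbb E_m\ket m\bra m=I_M/D_q$. Combining this with the quadratic-form identity gives $\Tr A/D_q=\mathbb E_m\bra u\Phi_{S,m}(E_{ab})\ket v$. Expectation commutes with the partial trace and with taking matrix elements, so this equals $\bra u\overline\Phi_{k,n}(E_{ab})\ket v$ by the definition of the mean channel. Proposition~\ref{prop:orbit-twirl} guarantees that this quantity is the same for every $S$, which makes $\overline\Phi_{k,n}$ well defined independently of $S$.

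I expect no real obstacle here. The only point needing care is the index bookkeeping, namely keeping track of which output label sits on the bra side of the test operator relative to the matrix element.
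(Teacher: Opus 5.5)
Your proposal is correct and follows essentially the same route as the paper: the partial-trace identity together with cyclicity turns the matrix element into $\bra m A_{S;ab;uv}\ket m$, the norm bound comes from $W_{\omega_1}$ (and the maps $\ket a\otimes I$, $\ket v\bra u_S\otimes I_{\hat S}$) having norm at most one, and the trace identity follows from the Haar first moment $\mathbb{E}_m\ket m\bra m=I/D_q$ together with the definition of $\overline\Phi_{k,n}$ as the Haar average of $\Phi_{S,m}$. Your index bookkeeping is right, with $\bra b$ on the left and $\ket a$ on the right, which matches Eq.~\eqref{eq:test-operator}.
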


\begin{proof}
We have
\begin{align*}
\langle u| \Phi_{S, m}\left(E_{a b}\right)|v\rangle&=\langle u|\operatorname{Tr}_{\hat S}\left[\mathcal{V}_m E_{a b} \mathcal{V}_m^{\dagger}\right]|v\rangle\\
&=\operatorname{Tr}\left[\left(|v\rangle\left\langle\left. u\right|_S \otimes I_{\hat S}\right) W_{\omega_1}\left(E_{a b} \otimes|m\rangle\langle m|\right) W_{\omega_1}^{\dagger}\right]\right.\\
&=\operatorname{Tr}\left[W_{\omega_1}^{\dagger}\left(|v\rangle\left\langle\left. u\right|_S \otimes I_{\hat S}\right) W_{\omega_1}\left(\ket{a}\bra{b} \otimes|m\rangle\langle m|\right)\right]\right.\\
&=(\langle b| \otimes I) W_{\omega_1}^{\dagger}\left(|v\rangle\left\langle\left. u\right|_S \otimes I_{\hat S}\right) W_{\omega_1}(|a\rangle \otimes I)\right.\\
&=A_{S ; a b ; u v},
\end{align*}
which proves Eq.~\eqref{eq:entry-quadratic}. Norm bound follows from $W_{\omega_1}$ being an isometry. The last statement follows from
\[
    \begin{aligned}
\mathbb{E}_m\langle m| A|m\rangle & =\operatorname{Tr}\left[A \mathbb{E}_m|m\rangle\langle m|\right] \\
& =\operatorname{Tr}\left[A \frac{I}{D_q}\right] \\
& =\frac{\operatorname{Tr} A}{D_q},
\end{aligned}
\]
Eq.~\eqref{eq:entry-quadratic} and 
\[
\mathbb{E}_m\langle u| \Phi_{S, m}\left(E_{a b}\right)|v\rangle=\langle u| \overline{\Phi}_{k,n}\left(E_{a b}\right)|v\rangle.
\]
\end{proof}

\begin{smlemma}
\label{lem:haar-second-moment}
Let $m$ be Haar-random on the unit sphere of a $D$-dimensional complex Hilbert space. For any $A\in\End(\C^D)$,
\begin{equation}
\label{eq:haar-variance}
  \mathbb{E}_m\left|
  \bra mA\ket m-\frac{\Tr A}{D}
  \right|^2
  =\frac{\Tr(AA^\dagger)-|\Tr A|^2/D}{D(D+1)}.
\end{equation}
In particular, if $\|A\|_\infty\le1$, the variance is at most $1/(D+1)$.
\end{smlemma}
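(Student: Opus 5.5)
The plan is to compute the second moment of the quadratic form $\bra mA\ket m$ directly from the Haar second-moment operator. Writing $\bra mA\ket m=\Tr\bigl(A\ket m\bra m\bigr)$, we have
\[
  \bigl|\bra mA\ket m\bigr|^2=\bra mA\ket m\,\bra mA^\dagger\ket m
  =\Tr\Bigl[(A\otimes A^\dagger)\,\ket m\bra m^{\otimes 2}\Bigr].
\]
So everything reduces to the standard identity
\[
  \mathbb E_m\,\ket m\bra m^{\otimes2}=\frac{I+\mathrm{SWAP}}{D(D+1)},
\]
which I would justify by Schur's lemma. The left-hand side commutes with $U^{\otimes2}$ for all $U\in U(D)$ and is supported on the symmetric subspace. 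It is therefore proportional to the symmetric projector $(I+\mathrm{SWAP})/2$. Since it has unit trace and the symmetric subspace has dimension $D(D+1)/2$, the constant is fixed.

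Substituting, and using $\Tr(A\otimes A^\dagger)=\Tr A\,\overline{\Tr A}$ together with $\Tr\bigl[(A\otimes A^\dagger)\mathrm{SWAP}\bigr]=\Tr(AA^\dagger)$, gives
\[
  \mathbb E_m\bigl|\bra mA\ket m\bigr|^2=\frac{|\Tr A|^2+\Tr(AA^\dagger)}{D(D+1)}.
\]
The first moment is $\mathbb E_m\bra mA\ket m=\Tr A/D$, by the first-moment identity $\mathbb E_m\ket m\bra m=I/D$ already used in Proposition~\ref{prop:orbit-twirl}. Expanding the variance as $\mathbb E|X|^2-|\mathbb EX|^2$ then gives
\[
  \frac{|\Tr A|^2+\Tr(AA^\dagger)}{D(D+1)}-\frac{|\Tr A|^2}{D^2}
  =\frac{\Tr(AA^\dagger)-|\Tr A|^2/D}{D(D+1)},
\]
which is exactly \eqref{eq:haar-variance}. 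The main step is just checking this arithmetic:
\[
  \frac{1}{D(D+1)}-\frac{1}{D^2}=-\frac{1}{D^2(D+1)}.
\]

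For the corollary, the numerator is at most $\Tr(AA^\dagger)\le D\|A\|_\infty^2\le D$, since the subtracted term $|\Tr A|^2/D$ is nonnegative. Hence the variance is at most $1/(D+1)$.

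No real obstacle is expected. The only point needing care is the symmetric-subspace identity, and it follows immediately from invariance plus irreducibility of the symmetric subspace under $U^{\otimes2}$.
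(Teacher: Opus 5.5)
Your proposal is correct and follows essentially the same route as the paper's proof. In both, you write $\bigl|\bra mA\ket m\bigr|^2=\Tr\bigl[(A\otimes A^\dagger)\ket m\bra m^{\otimes 2}\bigr]$, insert the Haar second-moment identity $(I+\mathrm{SWAP})/(D(D+1))$, subtract $|\Tr A|^2/D^2$, and bound the numerator by $\Tr(AA^\dagger)\le D\|A\|_\infty^2$. Your Schur's-lemma justification of the symmetric-subspace identity is a small addition; the paper simply quotes that identity.
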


\begin{proof}
Define 
\[
Z=\langle m| A|m\rangle-\frac{\operatorname{Tr} A}{D},
\]
and from the proof of the Lemma~\ref{lem:quadratic-form} we have $\mathbb{E}_m Z=0$. Then
\[
\left.\mathbb{E}|Z|^2=\mathbb{E}|\langle m| A| m\right\rangle\left.\right|^ 2-\frac{|\operatorname{Tr} A|^2}{D^2}.
\]
As for the first term, we have
\[
\mathbb{E}|\langle m| A| m\rangle\left.\right|^ 2=\operatorname{Tr}\left[\left(A \otimes A^{\dagger}\right)\mathbb{E}\left(|m\rangle\langle m| \otimes |m\rangle\langle m|\right)\right],
\]
where we use Haar second moment identity
\[
\mathbb{E}_m\left[(|m\rangle\langle m|)^{\otimes 2}\right]=\frac{I+\mathrm{SWAP}}{D(D+1)},
\]
from which it follows that
\[
   \mathbb{E}|\langle m| A| m\rangle\left.\right|^ 2= \frac{|\operatorname{Tr} A|^2+\operatorname{Tr}\left(A A^{\dagger}\right)}{D(D+1)},
\]
which concludes the proof of Eq.~\eqref{eq:haar-variance}. The last statement follows from $\operatorname{Tr}\left(A A^{\dagger}\right) \leq D\|A\|_{\infty}^{2} \leq D$.
\end{proof}

\begin{smlemma}
\label{lem:entry-to-diamond}
Let $T:\End(\C^d)\to\End((\C^d)^{\otimes k})$ be linear.  If every matrix element of every $T(E_{ab})$ has absolute value at most $t$, then
\begin{equation}
\label{eq:entry-diamond}
  \|T\|_\diamond\le r_k^{3/2}t,
  \qquad r_k=d^{k+1}.
\end{equation}
\end{smlemma}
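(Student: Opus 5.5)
The plan is to bound the diamond norm through the unnormalized Choi matrix
\[
  J(T):=\sum_{a,b=1}^d T(E_{ab})\otimes E_{ab}\in\End\bigl((\C^d)^{\otimes k}\otimes\C^d\bigr),
\]
which is an $r_k\times r_k$ matrix with $r_k=d^{k+1}$. Its entries are precisely the numbers $\bra u T(E_{ab})\ket v$ for $u,v\in[d]^k$ and $a,b\in[d]$, so the hypothesis says every entry of $J(T)$ has modulus at most $t$.

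The argument has three steps.

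\textbf{Step 1: diamond norm from the Choi matrix.} I will use the standard bound $\|T\|_\diamond\le d\,\|J(T)\|_\infty$, and more precisely the sharper
\[
  \|T\|_\diamond\le \max_{\rho,\sigma}\bigl\|(I\otimes\sqrt{\rho})\,J(T)\,(I\otimes\sqrt{\sigma})\bigr\|_1,
\]
where $\rho,\sigma$ range over density operators on the input space. To obtain this, write any input on $\C^d\otimes\C^{d'}$ as a combination of rank-one operators $\ket x\bra y$. By the polar decomposition and convexity, it suffices to consider $\ket x\bra y$ with unit vectors $x,y$. For such an input, $(T\otimes\id)(\ket x\bra y)$ equals, up to local isometries on the reference, $(I\otimes X^{T})J(T)(I\otimes \bar Y)$, where $X,Y$ are the $d\times d'$ coefficient matrices of $x,y$ and satisfy $\|X\|_2=\|Y\|_2=1$. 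The cruder route is simpler: since $\|X\|_\infty,\|Y\|_\infty\le1$, we get $\|(I\otimes X^T)J(I\otimes\bar Y)\|_1\le\|J\|_1$. Hence
\[
  \|T\|_\diamond\le\|J(T)\|_1.
\]

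\textbf{Step 2: trace norm from entries.} For any $r\times r$ matrix we have $\|J\|_1\le\sqrt r\,\|J\|_2$, and if every entry has modulus at most $t$ then $\|J\|_2\le r\,t$. Combining these gives $\|J\|_1\le r^{3/2}t$.

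\textbf{Step 3: conclusion.} Putting the two steps together with $r=r_k$ yields
\[
  \|T\|_\diamond\le r_k^{3/2}\,t,
\]
which is exactly \eqref{eq:entry-diamond}. No finer estimate is needed, since $r_k$ is fixed while $t$ will be exponentially small.

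The only step that needs care is Step 1, namely the inequality $\|T\|_\diamond\le\|J(T)\|_1$ for a general linear map $T$, which need not be Hermiticity-preserving. I expect this to be the main obstacle, though a mild one. I would handle it as follows.

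First, note that the diamond norm is attained on rank-one inputs $\ket x\bra y$ with a reference of dimension $d$. Next, write
\[
  \ket x=(I\otimes X^T)\sum_a\ket a\ket a,
\]
and similarly for $y$. This gives
\[
  (T\otimes\id)(\ket x\bra y)=(I\otimes X^T)\,J(T)\,(I\otimes \bar Y).
\]
Finally, apply Hölder's inequality together with $\|X\|_\infty\le\|X\|_2=1$ and $\|Y\|_\infty\le\|Y\|_2=1$.
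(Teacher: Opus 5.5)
Your proposal is correct and is the paper's proof. You bound the entries of the unnormalized Choi matrix, use $\|J(T)\|_2\le r_k t$ and $\|J(T)\|_1\le\sqrt{r_k}\,\|J(T)\|_2$, and conclude with $\|T\|_\diamond\le\|J(T)\|_1$. You also derive that last bound from $(T\otimes\mathrm{id})(\ket x\bra y)=(I\otimes X^T)J(T)(I\otimes\bar Y)$ and H\"older, whereas the paper only cites it as standard.

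One small correction concerns the aside $\|T\|_\diamond\le d\,\|J(T)\|_\infty$. You never use it, but as written, with $d$ the input dimension, it is false for $k\ge2$: for example, $T(X)=\operatorname{Tr}(X)\,I/d^k$ has $\|T\|_\diamond=1$ but $d\,\|J(T)\|_\infty=d^{1-k}$. The valid factor is the output dimension $d^k$, so drop that sentence.
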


\begin{proof}
Let $J(T)=\sum_{a,b}T(E_{ab})\otimes E_{ab}$ be the unnormalized Choi matrix.  It is an $r_k\times r_k$ matrix and all of its entries have magnitude at most $t$.  Therefore
\[
  \|J(T)\|_2\le r_k t,
  \qquad
  \|J(T)\|_1\le\sqrt{r_k}\,\|J(T)\|_2\le r_k^{3/2}t.
\]
The standard Choi bound $\|T\|_\diamond\le\|J(T)\|_1$ completes the proof.
\end{proof}

\begin{smtheorem}
\label{thm:haar-uniform}
Let $m$ be Haar-random in $M_{\omega_1}$.  Then
\begin{equation}
\label{eq:haar-probability}
  \mathbb P\!\left[
    \max_{|S|=k}
    \|\Phi_{S,m}-\overline\Phi_{k,n}\|_\diamond
    >r_k^{3/2}D_q^{-1/4}
  \right]
  \le
  \frac{M_{n,k}D_q^{1/2}}{D_q+1}.
\end{equation}
For fixed $d,k$, the right-hand side is $e^{-\Omega_d(n)}$.
\end{smtheorem}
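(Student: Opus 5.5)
The plan is to reduce the diamond-norm event to a union of entrywise deviation events and control each one with Chebyshev's inequality. The variance bound comes from Lemma~\ref{lem:haar-second-moment}. Fix a $k$-subset $S$. By Lemma~\ref{lem:quadratic-form}, every matrix element of the difference satisfies
\[
\bra u(\Phi_{S,m}-\overline\Phi_{k,n})(E_{ab})\ket v=\bra m A_{S;ab;uv}\ket m-\frac{\operatorname{Tr}A_{S;ab;uv}}{D_q}=:Z_{S;ab;uv},
\]
with $\|A_{S;ab;uv}\|_\infty\le1$. Lemma~\ref{lem:entry-to-diamond}, applied to $T=\Phi_{S,m}-\overline\Phi_{k,n}$, gives the following implication: if $|Z_{S;ab;uv}|\le D_q^{-1/4}$ for all $a,b\in[d]$ and $u,v\in[d]^k$, then $\|\Phi_{S,m}-\overline\Phi_{k,n}\|_\diamond\le r_k^{3/2}D_q^{-1/4}$. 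Contrapositively, the bad event for $S$ is contained in the union over the $d^2\cdot d^{2k}=r_k^2$ entries of the events $\{|Z_{S;ab;uv}|>D_q^{-1/4}\}$.

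For a single entry, Chebyshev's inequality together with Lemma~\ref{lem:haar-second-moment} (with $D=D_q=\dim M_{\omega_1}$, as identified by Lemma~\ref{lem:M_module_irreducibility}) gives
\[
\mathbb P\bigl[|Z|>D_q^{-1/4}\bigr]\le D_q^{1/2}\,\mathbb E|Z|^2\le \frac{D_q^{1/2}}{D_q+1}.
\]
A union bound over the $\binom nk$ subsets and the $r_k^2$ entries per subset then totals $M_{n,k}=\binom nk r_k^2$ events, which gives exactly the stated bound $M_{n,k}D_q^{1/2}/(D_q+1)$.

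It remains to check the asymptotic claim, and this is the only step with real content. For fixed $d,k$ the right-hand side is at most $\binom nk d^{2k+2}D_q^{-1/2}\le n^k d^{2k+2}D_q^{-1/2}$. Lemma~\ref{lem:dimension_bound} (valid for $q\ge d$, with $q=(n-1)/d$) gives $D_q\ge C_d\,d^{dq}q^{-(d^2+1)/2}$, so $D_q^{-1/2}\le C_d^{-1/2}d^{-(n-1)/2}q^{(d^2+1)/4}$. The polynomial factors $n^k q^{(d^2+1)/4}$ are absorbed by the exponential $d^{-(n-1)/2}$ once $d\ge2$, which yields $e^{-\Omega_d(n)}$; the implied constant may depend on $k$ through the threshold at which the decay takes over. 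The probabilistic part is entirely second-moment, so no concentration beyond Chebyshev is needed. The only care needed is that the entrywise threshold $D_q^{-1/4}$ balances the $D_q^{1/2}$ loss against the $1/(D_q+1)$ variance.
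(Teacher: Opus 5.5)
Your proof is correct and follows the paper's argument: the Haar variance bound with Markov/Chebyshev at threshold $D_q^{-1/4}$, a union bound over the $M_{n,k}=\binom nk r_k^2$ matrix entries, and Lemma~\ref{lem:entry-to-diamond} to pass to the diamond norm. Your explicit check of the $e^{-\Omega_d(n)}$ tail via Lemma~\ref{lem:dimension_bound} fills in a step the paper leaves implicit.
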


\begin{proof}
    Define
    \[
        Z:=\langle u|\left(\Phi_{S, m}-\overline{\Phi}_{k,n}\right)\left(E_{a b}\right)|v\rangle=\langle m| A_{S ; a b ; u v}|m\rangle-\frac{\operatorname{Tr} A_{S ; a b ; u v}}{D_q},
    \]
    where the second equality follows from Lemma~\ref{lem:quadratic-form}. By Lemma~\ref{lem:haar-second-moment}, we have
    \[
        \mathbb{E}|Z|^2 \leq \frac{1}{D_q+1}.
    \]
    Applying Markov inequality to $|Z|^2$
    \[
        \mathbb{P}(|Z|>t)=\mathbb{P}\left(|Z|^2>t^2\right) \leq \frac{\mathbb{E}|Z|^2}{t^2},
    \]
    we get
    \[
        \mathbb{P}(|Z|>D_q^{-1/4}) \leq \frac{D_q^{1/2}}{D_q+1},
    \]
    where we took $t=D_q^{-1/4}$. Random variable $Z$ represents the difference between a matrix element of $\Phi_{S, m}\left(E_{a b}\right)$ and the corresponding matrix element of $\overline{\Phi}_{k,n}\left(E_{a b}\right)$. 
    The total number of such matrix elements is $M_{n,k}$.  By the union bound,
    \[
    \mathbb{P}(\text { some entry is bad }) \leq M_{n, k} \frac{D_q^{1 / 2}}{D_q+1}.
    \]
    For "good" events, we have
    \[
    \left.\left|\langle u|\left(\Phi_{S, m}-\overline{\Phi}_{k,n}\right)\left(E_{a b}\right)\right| v\right\rangle \mid \leq D_q^{-1 / 4},
    \]
    therefore by Lemma~\ref{lem:entry-to-diamond}, we obtain
    \[
    \max _{S \mid=k}\left\|\Phi_{S, m}-\overline{\Phi}_{k,n}\right\|_{\diamond} \leq r_k^{3 / 2} D_q^{-1 / 4},
    \]
    which concludes the proof.
\end{proof}

\subsubsection{Efficient sampling of multiplicity vectors}\label{sec:efficient_sampling}

We introduce Young--Yamanouchi basis for the Specht module $[\lambda]$ associated with a partition $\lambda\vdash n$. 
The Young diagram of $\lambda=(\lambda_1,\lambda_2,\ldots)$ consists of $\lambda_i$ boxes in its $i$th row. 
A standard Young tableau of shape $\lambda$ is a filling of these boxes with the integers $1,\ldots,n$, each appearing exactly once, such that the entries increase from left to right along every row and from top to bottom along every column. 
We denote the set of standard Young tableaux of shape $\lambda$ by $\operatorname{SYT}(\lambda)$. 
The Young--Yamanouchi basis of $[\lambda]$ is the orthonormal basis
\[
    \left\{\,\ket{T}:T\in\operatorname{SYT}(\lambda)\,\right\},
\]
whose basis vectors are indexed by the standard Young tableaux of shape $\lambda$. 
For a detailed discussion of Young tableaux, Specht modules, and the Young--Yamanouchi basis, see Ref.~\cite{Sagan2001}.

We will construct vectors
\begin{equation}\label{eq:family_of_multiplicity_vecs}
    \left|m_s\right\rangle=\frac{1}{\sqrt{D_q}} \sum_{T \in \operatorname{SYT}\left(\lambda_q\right)} z_s(T)|T\rangle
\end{equation}
where we defined
\[
    \lambda_q:=(q+1,q^{d-1}), \quad [\lambda_q]\cong M_{\omega_1},
\]
such that the seed $s$ has only $O(n \log d)$ bits and every phase
\[
z_s(T) \in\{1,-1, i,-i\}
\]
can be computed in polynomial time from $s$ and $T$.

\begin{smtheorem}\label{thrm:randomized_multiplicity_vector}
    Fix $d$ and $k$, and let $n=q d+1$. There exists an explicit randomized algorithm using $O(n \log d)$ random bits that outputs a seed $s$ specifying a flat multiplicity vector
    $$
    \left|m_s\right\rangle=\frac{1}{\sqrt{D_q}} \sum_{T \in \operatorname{SYT}\left(\lambda_q\right)} z_s(T)|T\rangle, \quad z_s(T) \in\{1,-1, i,-i\},
    $$
    such that each $z_s(T)$ is computable in time $\operatorname{poly}(n, \log d)$, and
    \begin{equation}\label{eq:randomized_multiplicity_vector_bound}
    \mathbb{P}_s\left[\max _{|S|=k}\left\|\Phi_{S, m_s}-\overline{\Phi}_{k,n}\right\|_{\diamond}>r_k^{3 / 2} D_q^{-1 / 4}\right] \leq \frac{M_{n, k}}{\sqrt{D_q}}=e^{-\Omega_d(n)}.
    \end{equation}
\end{smtheorem}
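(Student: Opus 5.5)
The plan is to reproduce the proof of Theorem~\ref{thm:haar-uniform} with the Haar vector replaced by the flat vector $\ket{m_s}$. Each entry error is still the quadratic form
\[
Z_s:=\bra{m_s}A\ket{m_s}-\frac{\Tr A}{D_q},\qquad A=A_{S;ab;uv},
\]
from Lemma~\ref{lem:quadratic-form}. It suffices to show that $\mathbb E_s|Z_s|^2\le 1/D_q$ holds for four-wise independent uniform phases. Once that is in hand, Markov's inequality at threshold $t=D_q^{-1/4}$, a union bound over the $M_{n,k}$ entries, and Lemma~\ref{lem:entry-to-diamond} give \eqref{eq:randomized_multiplicity_vector_bound} exactly as before. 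The right-hand side is $e^{-\Omega_d(n)}$ because $M_{n,k}$ is polynomial in $n$, while $D_q$ is exponential by Lemma~\ref{lem:dimension_bound}.

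\textbf{Variance step.} Write $\bra{m_s}A\ket{m_s}=D_q^{-1}\sum_{T,T'}\overline{z(T)}z(T')A_{TT'}$. The diagonal terms give exactly $\Tr A/D_q$, since $|z|=1$. Hence
\[
Z_s=D_q^{-1}\sum_{T\ne T'}\overline{z(T)}z(T')A_{TT'}.
\]
Expanding $|Z_s|^2$ produces products $\overline{z(T_1)}z(T_1')z(T_2)\overline{z(T_2')}$, which involve at most four distinct tableaux. Under four-wise independence with uniform fourth roots of unity, such a product has zero mean unless the phases cancel pairwise. With $T_1\ne T_1'$ and $T_2\ne T_2'$, cancellation forces $(T_2,T_2')=(T_1,T_1')$. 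One should check that a pairing like $T_2=T_1'$, $T_2'=T_1$ does not survive: it gives $\overline{z(T_1)}^2z(T_1')^2$, and $\mathbb E z^2=0$ for uniform $\{\pm1,\pm i\}$. Using fourth roots rather than signs is what kills this term. Therefore
\[
\mathbb E_s|Z_s|^2=D_q^{-2}\sum_{T\ne T'}|A_{TT'}|^2\le D_q^{-2}\Tr(AA^\dagger)\le D_q^{-1},
\]
using $\|A\|_\infty\le1$. This matches the Haar bound up to replacing $D_q+1$ by $D_q$, which gives the stated $M_{n,k}D_q^{1/2}/D_q$.

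\textbf{Construction.} Encode each $T\in\operatorname{SYT}(\lambda_q)$ by its row word $w(T)\in[d]^n$, which records the row of each entry $1,\dots,n$. This encoding is injective and is computable in $\operatorname{poly}(n,\log d)$ time. Embed it as an element $x_T\in\mathbb F_{2^h}$ with $h=\lceil n\log_2 d\rceil$. Draw a seed $s=(c_0,\dots,c_3)$ uniformly from $\mathbb F_{2^h}^4$, which uses $4h=O(n\log d)$ bits, and set $p_s(x)=\sum_j c_jx^j$. Then set $z_s(T)=i^{\,\iota(L(p_s(x_T)))}$, where $L:\mathbb F_{2^h}\to\mathbb F_2^2$ is a surjective $\mathbb F_2$-linear map (for example, two coordinates) and $\iota$ is a bijection $\mathbb F_2^2\to\{0,1,2,3\}$.

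For any four distinct points, the Vandermonde matrix is invertible. So $(p_s(x_{T_1}),\dots,p_s(x_{T_4}))$ is uniform on $\mathbb F_{2^h}^4$, and applying $L$ gives independent uniform values in $\mathbb F_2^2$. The phases are therefore exactly four-wise independent and uniform on fourth roots of unity. Evaluating the phase costs field arithmetic in $\mathbb F_{2^h}$, which is $\operatorname{poly}(h)$ once an irreducible polynomial of degree $h$ is fixed deterministically.

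\textbf{Expected obstacle.} I expect the delicate point to be the pairing analysis in the variance step. I have to confirm that four-wise, and not full, independence suffices for every term: each term involves at most four distinct tableaux, so this is fine. I also have to confirm that the choice of fourth roots makes $\mathbb E z^2=0$, so that the transposed pairing vanishes. A minor secondary check is that finding an irreducible polynomial of degree $h$ over $\mathbb F_2$ can be done deterministically in $\operatorname{poly}(h)$ time, so the algorithm remains explicit.
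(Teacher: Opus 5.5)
Your proposal is correct and follows the paper's proof essentially step for step. It uses the same row-word labels in $\mathbb F_{2^h}$ and the same random cubic seed, with the Vandermonde argument giving exactly four-wise independent fourth-root phases. The pairing analysis then yields $\mathbb E_s|Z_s|^2\le D_q^{-2}\Tr(AA^\dagger)\le D_q^{-1}$, followed by Markov, the union bound over $M_{n,k}$ entries, and Lemma~\ref{lem:entry-to-diamond}.

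Your extra checks fill in details the paper leaves implicit without changing the argument: taking $h=\lceil n\log_2 d\rceil$, making explicit that $\mathbb E z^2=0$ is what kills the transposed pairing, and fixing an irreducible polynomial deterministically so the field arithmetic is explicit.
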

\begin{proof}
The idea is that we want to assign to every tableau $T$ a phase
$
z_s(T) \in\{1,-1, i,-i\},
$
but we do not want to store one independent random phase for every tableau, because there are
$
D_q=e^{\Theta(n)}
$
tableaux. So we generate all these phases from a very short random seed. The key idea is that we don't need independence of all phases, 
instead it is enough to have $4$-wise independence of the phases, which can be achieved with a seed of length $O(n \log d)$.

To efficiently label every tableau $T$ we define a \textit{tableau word} $w(T)$ of length $n$ as follows
\[
    w(T)=\left(r_1, \ldots, r_n\right),
\]
where $r_t\in\{1, \ldots, d\}$ is the row index of the box containing $t$ in $T$. This encoding requires $h:=n \log d$ bits.
We identify bitstring $x_T$ with corresponding element of $\mathbb{F}_{2^h}$.

Choose four independent uniformly random elements
$
c_0, c_1, c_2, c_3 \in \mathbb{F}_{2^h} .
$
These four field elements constitute the seed
$
s=\left(c_0, c_1, c_2, c_3\right) .
$
Thus the random seed contains exactly
$
4 h=O(n \log d)
$
bits.
Define
$$
p_s(x):=c_0+c_1 x+c_2 x^2+c_3 x^3 .
$$
Fix a surjective $\mathbb{F}_2$-linear map
$
L: \mathbb{F}_{2^h} \longrightarrow \mathbb{F}_2^2 .
$
For example, after choosing an $\mathbb{F}_2$-basis of the field, $L$ can simply return the first two coordinate bits.
Write
$$
L\left(p_s\left(x_T\right)\right)=\left(b_0(T), b_1(T)\right) .
$$
Now define
$$
z_s(T):=i^{b_0(T)+2 b_1(T)} .
$$
This produces one of the four phases. 

Take any four distinct tableaux
$
T_1, T_2, T_3, T_4 .
$
Their labels
$
x_{T_1}, \ldots, x_{T_4}
$
are distinct field elements.
The map
$$
\left(c_0, c_1, c_2, c_3\right) \longmapsto\left(p_s\left(x_{T_1}\right), p_s\left(x_{T_2}\right), p_s\left(x_{T_3}\right), p_s\left(x_{T_4}\right)\right)
$$
is a linear map over $\mathbb{F}_{2^h}$. Its matrix is the Vandermonde matrix
$$
\left(\begin{array}{cccc}
1 & x_{T_1} & x_{T_1}^2 & x_{T_1}^3 \\
1 & x_{T_2} & x_{T_2}^2 & x_{T_2}^3 \\
1 & x_{T_3} & x_{T_3}^2 & x_{T_3}^3 \\
1 & x_{T_4} & x_{T_4}^2 & x_{T_4}^3
\end{array}\right) .
$$
Its determinant is
$
\prod_{a<b}\left(x_{T_b}-x_{T_a}\right),
$
which is nonzero because the $x_{T_j}$ are distinct.
Consequently,
$
p_s\left(x_{T_1}\right), \ldots, p_s\left(x_{T_4}\right)
$
are independent uniform field elements.
Applying $L$ separately to each one preserves independence, and each output is uniform in $\mathbb{F}_2^2$. Therefore
$
z_s\left(T_1\right), \ldots, z_s\left(T_4\right)
$
are independent uniform fourth roots of unity.
The same is automatically true for any set of fewer than four tableaux.
Thus the family
$
\left\{z_s(T)\right\}_T
$
is exactly four-wise independent.

The second part of the proof follows the ideas introduced in Theorem~\ref{thm:haar-uniform}. We first define
\[
Z(s):=\langle u|\left(\Phi_{S, m_s}-\overline{\Phi}_{k,n}\right)\left(E_{a b}\right)|v\rangle=\langle m_s| A_{S ; a b ; u v}|m_s\rangle-\frac{\operatorname{Tr} A_{S ; a b ; u v}}{D_q},
\]
where the second equality follows from Lemma~\ref{lem:quadratic-form} and $m_s$ is defined in Eq.~\eqref{eq:family_of_multiplicity_vecs} with seed $s$ constructed just above. Write
$$
\left|Y_A(s)\right|^2=\frac{1}{D_q^2} \sum_{\substack{T \neq T^{\prime} \\ U \neq U^{\prime}}} \overline{z_s(T)} z_s\left(T^{\prime}\right) z_s(U) \overline{z_s\left(U^{\prime}\right)} A_{T T^{\prime}} \overline{A_{U U^{\prime}}}
$$
Every phase expectation here involves at most four different tableaux. Thus four-wise independence gives exactly the same calculation as full independence.
Because
$$
\mathbb{E} z_s(T)=0, \quad \mathbb{E} z_s(T)^2=0, \quad\left|z_s(T)\right|=1,
$$
the only surviving terms are
$
T=U, \quad T^{\prime}=U^{\prime} .
$
Hence
$$
\mathbb{E}_s\left|Y_A(s)\right|^2=\frac{1}{D_q^2} \sum_{T \neq T^{\prime}}\left|A_{T T^{\prime}}\right|^2\leq \frac{\operatorname{Tr}\left(A A^{\dagger}\right)}{D_q^2}\leq \frac{1}{D_q}.
$$
Finally, applying the Markov inequality, the union bound argument and Lemma~\ref{lem:entry-to-diamond} as in the proof of Theorem~\ref{thm:haar-uniform} gives Eq.~\ref{eq:randomized_multiplicity_vector_bound}.
The algorithm samples only
$
c_0, c_1, c_2, c_3 \in \mathbb{F}_{2^h} .
$
Therefore random-bit complexity $=4 h=O(n \log d)$.
Evaluating $z_s(T)$ requires:
encoding the row word of $T$;
three field multiplications;
several field additions;
extracting two linear bits.
This takes
$
\operatorname{poly}(n, \log d)
$
bit operations.
So the sampler is an unconditional randomized polynomial-time algorithm in $n$ for fixed $d$.
It is a Monte Carlo sampler:
it always runs in polynomial time;
the output is good with probability $1-e^{-\Omega_d(n)}$;
it does not need to compute or inspect the exponentially large test matrices;
it does not efficiently certify afterward that the particular seed was good.
\end{proof}

\subsection{Arbitrary flagged multi-qudit noise}\label{subsec:supple_arb_3_noise}

We first record the form of the complementary channel for flagged noise on a family of sets, Lemma~\ref{lem:dual_to_arb}, and the single-site estimate, Theorem~\ref{thrm:sudscale_general}, 
which needs only cyclic invariance of the code space and holds for any physical representation $V_{\omega_r}$. 
The remaining arguments never use that the noisy set has a particular size. 
They use only that the reduced state on the noisy set splits into a logical-independent part and a logical-dependent part that is $O_{d,k}(1/n)$ 
uniformly over the sets on which the noise can act. We therefore state that estimate once, for an arbitrary family $\mathcal S$ of subsets, 
and then apply it twice: to the exactly transitive codes of Section~\ref{subsec:supple_multi_erasure_noise}, which gives the statement for noise on at most three qudits, and to the codes of 
Section~\ref{subsec:supple_multiqudit_erasure}, which extends it to an arbitrary fixed number of noisy sites.

\begin{smlemma}\label{lem:dual_to_arb}
    Let $\mathcal{E}$ be the encoding map defined by $\mathcal{E}(\rho_L):=\mathcal{V}\rho_L \mathcal{V}^{\dagger}$ and
    $
        \mathcal N(\sigma)
        =
        \sum_{S\in\mathcal S}
        p_S
        |S\rangle\langle S|_F
        \otimes
        \left(\mathcal N_S\otimes \mathrm{id}_{\hat S}\right)(\sigma)
    $
    be a noise model, where $\mathcal{N}_S$ is an arbitrary noise channel acting on the subsystem $S$. 
    Then a complementary channel to $\mathcal{N}\circ \mathcal{E}$ is given as follows:
    \[
        \widehat{\mathcal{N} \circ \mathcal{E}}\left(\rho_L\right)
        =
        \sum_{S\in\mathcal S}
        p_S
        |S\rangle\langle S|_{F_E}
        \otimes
        \widehat{\mathcal{N}_S}\left(\rho^{(S)}\left(\rho_L\right)\right).
    \]
\end{smlemma}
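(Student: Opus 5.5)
The plan is to write down an explicit Stinespring isometry for $\mathcal N\circ\mathcal E$ and trace out the system output. Two dilations are needed.

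For each $S\in\mathcal S$, fix a Stinespring isometry $W_S:A_S\to A_S\otimes E_S$ of $\mathcal N_S$, so that $\mathcal N_S(\sigma)=\operatorname{Tr}_{E_S}(W_S\sigma W_S^\dagger)$ and $\widehat{\mathcal N_S}(\sigma)=\operatorname{Tr}_{A_S}(W_S\sigma W_S^\dagger)$. The flagged mixture $\mathcal N$ is then dilated by
\[
W:=\sum_{S\in\mathcal S}\sqrt{p_S}\,\ket S_F\otimes\ket S_{F_E}\otimes\left(W_S\otimes I_{\hat S}\right),
\]
which maps $\mathcal H_P$ into $F\otimes F_E\otimes\mathcal H_P\otimes\bigl(\bigoplus_S E_S\bigr)$. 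Here the environment is $F_E\otimes\bigoplus_S E_S$, and $F_E$ is a copy of the flag register. We embed each $E_S$ into a common space, either as a direct sum or by padding to a common dimension.

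The first step is to check that $W$ is an isometry. The flags $\ket S_F$ are orthonormal, so the cross terms vanish in $W^\dagger W$, and we get $W^\dagger W=\sum_S p_S (W_S^\dagger W_S\otimes I)=I$.

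The second step is to check that $W$ dilates $\mathcal N$. Tracing out $F_E\otimes E$ kills the $S\neq S'$ cross terms because of the orthonormal $\ket S_{F_E}$. This leaves $\sum_S p_S\ket S\bra S_F\otimes(\mathcal N_S\otimes\mathrm{id}_{\hat S})(\sigma)$, which is exactly $\mathcal N$.

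Composing with $\mathcal V$ gives a Stinespring isometry $W\mathcal V$ of $\mathcal N\circ\mathcal E$. The corresponding complementary channel is obtained by tracing out the system output $F\otimes\mathcal H_P$. Tracing $F$ again removes the off-diagonal terms in $S$, this time via the orthonormal $\ket S_F$. Each diagonal term becomes $p_S\ket S\bra S_{F_E}\otimes\operatorname{Tr}_{\mathcal H_P}\bigl[(W_S\otimes I_{\hat S})\mathcal V\rho_L\mathcal V^\dagger(W_S^\dagger\otimes I_{\hat S})\bigr]$.

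The only step requiring care is evaluating this trace. We split it as $\operatorname{Tr}_{A_S}\operatorname{Tr}_{\hat S}$. Because $W_S\otimes I_{\hat S}$ acts trivially on $\hat S$, the partial trace over $\hat S$ commutes with conjugation by $W_S$, and the expression becomes $\operatorname{Tr}_{A_S}\bigl[W_S\,\rho^{(S)}(\rho_L)\,W_S^\dagger\bigr]=\widehat{\mathcal N_S}(\rho^{(S)}(\rho_L))$.

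Finally, complementary channels are unique up to an isometry on the environment. This is the sense in which the statement says ``a'' complementary channel, so the identity above proves the lemma for this choice of dilation.
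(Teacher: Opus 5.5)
Your proposal is correct and follows the paper's proof essentially step for step: it uses the same flagged dilation $\sum_{S}\sqrt{p_S}\,\ket{S}_F\otimes\ket{S}_{F_E}\otimes(W_S\otimes I_{\hat S})$, uses orthogonality of the flag states to eliminate the $S\neq S'$ cross terms, and commutes $\operatorname{Tr}_{\hat S}$ past the conjugation by $W_S\otimes I_{\hat S}$. You explicitly verify that $W$ is an isometry dilating $\mathcal N$ and handle $S$-dependent environment dimensions, which the paper leaves implicit, while the paper allows a general output space $B_S$ for $\mathcal N_S$ rather than $A_S$; the argument is unchanged in either case.
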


\begin{proof}
    Consider a Stinespring dilation of the noise channel $\mathcal{N}$ of the form
    \[
        \mathcal{W}
        =
        \sum_{S\in\mathcal S}
        \sqrt{p_S}
        |S\rangle_F
        \otimes
        |S\rangle_{F_E}
        \otimes
        \left(W_S \otimes I_{\hat S}\right),
    \]
    where $W_S$ is a Stinespring dilation of $\mathcal{N}_S$. Notice that
\[
    \begin{aligned}
    \widehat{\mathcal{N}\circ\mathcal{E}}(\rho_L)
    &=
    \operatorname{Tr}_{\mathrm{rec}}
    \left[
    \mathcal W \mathcal{V}\rho_L\mathcal{V}^\dagger \mathcal W^\dagger
    \right]
    \\
    &=
    \sum_{S,T\in\mathcal S}
    \sqrt{p_Sp_T}\,
    \operatorname{Tr}_F(|S\rangle\langle T|_F)\,
    |S\rangle\langle T|_{F_E}
    \otimes
    \operatorname{Tr}_{B_S\hat S}
    \left[
    (W_S\otimes I_{\hat S})
    \mathcal{V}\rho_L\mathcal{V}^\dagger
    (W_T^\dagger\otimes I_{\hat T})
    \right]
    \\
    &=
    \sum_{S\in\mathcal S}
    p_S |S\rangle\langle S|_{F_E}
    \otimes
    \operatorname{Tr}_{B_S\hat S}
    \left[
    (W_S\otimes I_{\hat S})
    \mathcal{V}\rho_L\mathcal{V}^\dagger
    (W_S^\dagger\otimes I_{\hat S})
    \right]
    \\
    &=
    \sum_{S\in\mathcal S}
    p_S |S\rangle\langle S|_{F_E}
    \otimes
    \operatorname{Tr}_{B_S}
    \left[
    W_S
    \operatorname{Tr}_{\hat S}(\mathcal{V}\rho_L\mathcal{V}^\dagger)
    W_S^\dagger
    \right]
    \\
    &=
    \sum_{S\in\mathcal S}
    p_S |S\rangle\langle S|_{F_E}
    \otimes
    \widehat{\mathcal N_S}\bigl(\rho^{(S)}(\rho_L)\bigr),
    \end{aligned}
\]
where $B_S$ is the output Hilbert space of the local noise channel acting on the subsystem $S$.
\end{proof}
\begin{smtheorem}\label{thrm:sudscale_general}
            Let
$
    \mathcal H_L=V_{\omega_1},\;
    \mathcal H_P=(V_{\omega_r})^{\otimes n}
$
and $nr\equiv 1\;(\operatorname{mod} d)$.
Equip \(\mathcal H_L\) with the fundamental representation of
\(\mathfrak{su}(d)\), and equip \(\mathcal H_P\) with the transversal
representation given by
\begin{equation}\label{eq:global_generators}
\begin{gathered}
t_a \rightarrow T_a \equiv \sum_{i=1}^n T_a^{(i)}, \quad a=1,\ldots,d^2-1,
\end{gathered}
\end{equation} 
where $T_a^{(i)}$ is a representation of $t_a$ on the $i$-th physical space $V_{\omega_r}$, acting as in Eq.~\eqref{eq:action_on_extirior_power}. 
If encoding $\mathcal{E}$, which is $\mathfrak{su}(d)$-covariant with respect to defined physical and logical representations, 
defines a code space invariant under cyclic permutations of physical qudits, then $\mathcal{E}$ is $O_{d}\left(\frac{1}{\sqrt{n}}\right)$-approximate against arbitrary single-qudit noise 
channel $\mathcal{N}(\sigma)=\sum_{i=1}^n p_i|i\rangle\left\langle\left. i\right|_F \otimes\left(\mathcal{N}_i \otimes \mathrm{id}_{\hat{i}}\right)(\sigma),\right.$.
\end{smtheorem}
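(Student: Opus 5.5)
The approach is to combine the complementary-channel criterion, Theorem~\ref{thm:beny}, with the factorized form of the complementary channel from Lemma~\ref{lem:dual_to_arb}, and to control the logical dependence of $\widehat{\mathcal N}_i(\rho^{(i)}(\rho_L))$ through the one-site reduced state of Proposition~\ref{prop:SUdcov}. By cyclic invariance that proposition gives, for every site $i$,
\[
\rho^{(i)}(\rho_L)=\tau_i+\Delta_i(\rho_L),\qquad \tau_i=\frac{I_{V_{\omega_r}}}{\dim V_{\omega_r}},\qquad \Delta_i(\rho_L)=\frac{1}{2\kappa_{\omega_r}n}\sum_a r_aT_a^{(i)} .
\]
The map $\Delta_i$ is the restriction of the linear map $X\mapsto \rho^{(i)}(X)-\operatorname{Tr}(X)\tau_i$. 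The candidate constant channel is then
\[
\Lambda_0(\rho_L)=\operatorname{Tr}(\rho_L)\sum_i p_i\ket i\bra i_{F_E}\otimes\widehat{\mathcal N}_i(\tau_i).
\]
By Lemma~\ref{lem:dual_to_arb} and linearity, the difference $\widehat{\mathcal N\circ\mathcal E}-\Lambda_0$ is block diagonal in the flag. On flag $i$ it equals $p_i\,\widehat{\mathcal N}_i\circ\Delta_i$, extended linearly to all logical operators.

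The next step is a diamond-norm estimate. For a reference-assisted input $\psi_{LR}$ we have $(\Delta_i\otimes\mathrm{id}_R)(\psi_{LR})=\frac{1}{2\kappa_{\omega_r}n}\sum_a T_a^{(i)}\otimes Y_a$, where $Y_a$ is obtained by expanding $\psi_{LR}$ in the basis $\{I,t_a\}$. Equivalently, one can bound $\|\Delta_i\|_\diamond\le d\,\|J(\Delta_i)\|_1$ via the Choi matrix, since the input dimension is fixed at $d$. Each $T_a^{(i)}$ has operator norm bounded in terms of $d$ and $r$, and the coefficients are bounded by the trace of the input. This gives $\|\Delta_i\|_\diamond\le C_{d,r}/n$ uniformly in $i$, with no dependence on $n$ beyond the explicit prefactor. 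Since $\widehat{\mathcal N}_i$ is a channel, and therefore contractive in diamond norm, summing over the flag blocks with weights $p_i$ yields
\[
\|\widehat{\mathcal N\circ\mathcal E}-\Lambda_0\|_\diamond\le C_{d,r}/n .
\]

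Finally, convert the trace-norm bound into the fidelity distance. For every purified input, the Fuchs--van de Graaf inequality $1-f(\rho,\sigma)\le\tfrac12\|\rho-\sigma\|_1$ gives $1-F(\widehat{\mathcal N\circ\mathcal E},\Lambda_0)\le C_{d,r}/(2n)$. Hence $d(\widehat{\mathcal N\circ\mathcal E},\Lambda_0)\le\sqrt{C_{d,r}/(2n)}=O_d(n^{-1/2})$, and Theorem~\ref{thm:beny} yields a recovery $\mathcal R$ with the same bound. The square root is where $1/n$ degrades to $1/\sqrt n$, consistent with the remark in the main text.

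The main obstacle is making the diamond-norm bound on $\Delta_i$ clean and uniform. The coefficients $r_a$ must be controlled for non-positive or entangled inputs, which is why I would pass through the Choi matrix, exactly as in Lemma~\ref{lem:entry-to-diamond}. The operator norm of $T_a^{(i)}$ on $V_{\omega_r}$ must also be bounded by a constant depending only on $d$. I would bound it by $\|T_a^{(i)}\|_\infty\le r\|t_a\|_\infty\le r$, using that $\pi_{\omega_r}(t_a)$ acts as a sum of $r$ copies of $t_a$ restricted to the antisymmetric subspace.
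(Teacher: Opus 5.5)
Your proposal is correct and follows the paper's proof essentially step by step. Both use the splitting $\rho^{(i)}=\tau_i+\Delta_i$ from Proposition~\ref{prop:SUdcov}, the same constant channel built from $\widehat{\mathcal N}_i(\tau_i)$, contractivity under $\widehat{\mathcal N}_i$, Fuchs--van de Graaf, and Theorem~\ref{thm:beny}. The only difference is cosmetic: you make the $O_{d}(1/n)$ bound on the leakage map explicit (via the Choi matrix and $\|T_a^{(i)}\|_\infty\le r$), whereas the paper absorbs it into the constant $C_{\omega_r}=\max_i\|\mathcal A_i\otimes\mathrm{id}_R\|_{1\to 1}$ together with $\|\psi_\rho-\tfrac{I_d}{d}\otimes\rho_R\|_1\le 2$.
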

\begin{proof}
In Proposition~\ref{prop:SUdcov} we obtained the following form of the reduced state on the \(i\)-th physical qudit:
\[
\rho^{(i)}(\rho_L)
=
\frac{I_{V_{\omega_r}}}{\dim V_{\omega_r}}
+
\frac{1}{n}\mathcal A_i\left(\rho_L-\frac{I_d}{d}\right),
\]
where the linear map \(\mathcal A_i\) is defined on the traceless part by
\[
\mathcal A_i\left(\sum_a r_a t_a\right)
=
\frac{1}{2\kappa_{\omega_r}}\sum_a r_a T_a^{(i)} .
\]
Therefore, using Lemma~\ref{lem:dual_to_arb}, we can rewrite the complementary channel as
\[
\widehat{\mathcal{N} \circ \mathcal{E}}(\rho_L)
=
\sum_i p_i |i\rangle\langle i|_{F_E}\otimes
\left[
\widehat{\mathcal N}_i\left(\frac{I_{V_{\omega_r}}}{\dim V_{\omega_r}}\right)
+
\frac{1}{n}
\widehat{\mathcal N}_i\mathcal A_i\left(\rho_L-\frac{I_d}{d}\right)
\right].
\]
It is then natural to take
\[
\Lambda_0(\rho_L)
=
\operatorname{Tr}(\rho_L)
\sum_i p_i |i\rangle\langle i|_{F_E}\otimes
\widehat{\mathcal N}_i\left(\frac{I_{V_{\omega_r}}}{\dim V_{\omega_r}}\right),
\]
which is a constant channel, i. e. it is independent of \(\rho_L\). We now estimate \(F(\widehat{\mathcal{N} \circ \mathcal{E}},\Lambda_0)\).

We compute \(F_\rho(\widehat{\mathcal{N} \circ \mathcal{E}},\Lambda_0)=f(\sigma_\rho,\eta_\rho)\), where
\[
\sigma_\rho
:=
(\widehat{\mathcal{N}\circ\mathcal{E}}\otimes \mathrm{id})
(|\psi_\rho\rangle\langle\psi_\rho|),
\qquad
\eta_\rho
:=
(\Lambda_0\otimes \mathrm{id})
(|\psi_\rho\rangle\langle\psi_\rho|)
\]
and \(|\psi_\rho\rangle\) is a purification of \(\rho\). By the Fuchs--van de Graaf inequality, we have
\[
\begin{aligned}
1-f(\sigma_\rho,\eta_\rho)
&\leq
\frac{1}{2}\left\|\sigma_\rho-\eta_\rho\right\|_1 \\
&=
\frac{1}{2n}
\sum_i p_i
\left\|
\left((\widehat{\mathcal N}_i\circ \mathcal A_i)\otimes \mathrm{id}_R\right)
\left[
|\psi_\rho\rangle\langle\psi_\rho|
-
\frac{I_d}{d}\otimes \rho_R
\right]
\right\|_1 \\
&\leq
\frac{1}{2n}
\sum_i p_i
\left\|
(\mathcal A_i\otimes \mathrm{id}_R)
\left[
|\psi_\rho\rangle\langle\psi_\rho|
-
\frac{I_d}{d}\otimes \rho_R
\right]
\right\|_1 \\
&\leq
\frac{C_{\omega_r}}{2n}
\left\|
|\psi_\rho\rangle\langle\psi_\rho|
-
\frac{I_d}{d}\otimes \rho_R
\right\|_1 \\
&\leq
\frac{C_{\omega_r}}{n}.
\end{aligned}
\]
Here, in the third line, we used contractivity of the trace norm under the CPTP map \(\widehat{\mathcal N}_i\), 
and \(C_{\omega_r}:=\max_i|\mathcal A_i\otimes \mathrm{id}_R|_{1\to 1}\). 
This constant depends only on the fixed physical representation \(V_{\omega_r}\), and not on \(n\).

Therefore,
\[
F_\rho(\widehat{\mathcal{N} \circ \mathcal{E}},\Lambda_0)
\geq
1-\frac{C_{\omega_r}}{n}
\qquad
\forall \rho ,
\]
and hence
\[
F(\widehat{\mathcal{N} \circ \mathcal{E}},\Lambda_0)
=
\min_\rho F_\rho(\widehat{\mathcal{N} \circ \mathcal{E}},\Lambda_0)
\geq
1-\frac{C_{\omega_r}}{n}.
\]
By the same arguments as in the \(SU(2)\) case and from Proposition~\ref{prop:SUdcovfidelity}, there exists a recovery map \(\mathcal R\) such that
\[
d(\mathcal R\mathcal N\mathcal E,\mathrm{id})
\leq
\sqrt{\frac{2C_{\omega_r}}{n}}.
\]
Thus our encoding \(\mathcal E\) is \(\varepsilon\)-approximate with \(\varepsilon=O_{d}(n^{-1/2})\).

\end{proof}

\begin{smlemma}\label{lem:arb_noise_from_leakage}
    Let $\mathcal E(\rho_L)=\mathcal{V}\rho_L\mathcal{V}^\dagger$ be an isometric encoding of $\mathcal H_L$ into $\mathcal H_P=A_1\otimes\cdots\otimes A_n$, 
    and let $\mathcal S$ be a family of subsets of $\{1,\ldots,n\}$. 
    Suppose that for every $S\in\mathcal S$ the reduced state~\eqref{eq:reduced_state} splits as
    \begin{equation}
    \label{eq:leakage-splitting}
        \rho^{(S)}(\rho_L)=\tau_S+\Delta_S(\rho_L)
        \qquad\text{for every state }\rho_L\text{ on }\mathcal H_L,
    \end{equation}
    where $\tau_S$ is a state on $A_S$ that does not depend on $\rho_L$, so that the linear map $\Delta_S(X):=\rho^{(S)}(X)-\Tr(X)\,\tau_S$ carries all of the logical dependence. 
    Let $R\cong\mathcal H_L$ be a reference and put
    \begin{equation}
    \label{eq:uniform-leakage}
        \delta
        :=
        \sup_{S\in\mathcal S}\;
        \sup_{\rho}\;
        \left\|
            (\Delta_S\otimes\mathrm{id}_R)
            \left(\ket{\psi_\rho}\bra{\psi_\rho}\right)
        \right\|_1,
    \end{equation}
    the inner supremum running over states $\rho$ on $\mathcal H_L$ and their purifications $\ket{\psi_\rho}$. 
    Then, for every flagged noise channel of the form~\eqref{eq:arbitrary_multiqudit_noise} supported on $\mathcal S$,
    \[
        \mathcal N(\sigma)
        =
        \sum_{S\in\mathcal S}
        p_S
        \ket S\bra S_F
        \otimes
        \left(\mathcal N_S\otimes\mathrm{id}_{\hat S}\right)(\sigma),
    \]
    there exists a recovery channel $\mathcal R$ with
    \begin{equation}
    \label{eq:arb-noise-bound}
        d(\mathcal R\circ\mathcal N\circ\mathcal E,\mathrm{id})
        \le
        \sqrt{\frac\delta2}.
    \end{equation}
\end{smlemma}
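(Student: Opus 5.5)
We will use the complementary-channel criterion, Theorem~\ref{thm:beny}. The constant channel will be obtained by replacing each reduced state with its logical-independent part, and the comparison will use the trace norm together with Fuchs--van de Graaf, exactly as in the proof of Theorem~\ref{thrm:sudscale_general}.

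By Lemma~\ref{lem:dual_to_arb}, the complementary channel is
\[
\widehat{\mathcal N\circ\mathcal E}(\rho_L)=\sum_{S\in\mathcal S}p_S\ket S\bra S_{F_E}\otimes\widehat{\mathcal N}_S\bigl(\rho^{(S)}(\rho_L)\bigr).
\]
Inserting the splitting~\eqref{eq:leakage-splitting} and using linearity of $\widehat{\mathcal N}_S$, this becomes a $\rho_L$-independent piece $\sum_S p_S\ket S\bra S\otimes\widehat{\mathcal N}_S(\tau_S)$ plus $\sum_S p_S\ket S\bra S\otimes\widehat{\mathcal N}_S(\Delta_S(\rho_L))$. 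We take
\[
\Lambda_0(\rho_L):=\operatorname{Tr}(\rho_L)\sum_S p_S\ket S\bra S_{F_E}\otimes\widehat{\mathcal N}_S(\tau_S),
\]
which is a constant channel because each $\tau_S$ is a state and each $\widehat{\mathcal N}_S$ is CPTP. With this choice, $\widehat{\mathcal N\circ\mathcal E}-\Lambda_0=\sum_S p_S\ket S\bra S\otimes(\widehat{\mathcal N}_S\circ\Delta_S)$.

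Next, fix a state $\rho$ with purification $\ket{\psi_\rho}$, and set $\sigma_\rho:=(\widehat{\mathcal N\circ\mathcal E}\otimes\mathrm{id}_R)(\psi_\rho)$ and $\eta_\rho:=(\Lambda_0\otimes\mathrm{id}_R)(\psi_\rho)$. The difference $\sigma_\rho-\eta_\rho$ is block diagonal in the orthogonal flags. Its trace norm is therefore $\sum_S p_S\|(\widehat{\mathcal N}_S\otimes\mathrm{id}_R)(\Delta_S\otimes\mathrm{id}_R)(\psi_\rho)\|_1$. Contractivity of the trace norm under the CPTP map $\widehat{\mathcal N}_S\otimes\mathrm{id}_R$ bounds each term by $\|(\Delta_S\otimes\mathrm{id}_R)(\psi_\rho)\|_1\le\delta$, so $\|\sigma_\rho-\eta_\rho\|_1\le\delta$.

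The one point that needs care is contractivity. It holds for Hermitian traceless inputs and in fact for all operators, because the trace norm is contractive under any positive trace-preserving map tensored with the identity. Here $\Delta_S\otimes\mathrm{id}$ applied to $\psi_\rho$ is Hermitian, since $\Delta_S$ is Hermiticity-preserving as the difference of two Hermiticity-preserving maps. Nothing subtle is expected there.

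Fuchs--van de Graaf then gives $1-f(\sigma_\rho,\eta_\rho)\le\tfrac12\|\sigma_\rho-\eta_\rho\|_1\le\delta/2$ for every $\rho$, so $F(\widehat{\mathcal N\circ\mathcal E},\Lambda_0)\ge 1-\delta/2$ and $d(\widehat{\mathcal N\circ\mathcal E},\Lambda_0)\le\sqrt{\delta/2}$. Finally, Theorem~\ref{thm:beny} gives $\min_{\mathcal R}d(\mathcal R\circ\mathcal N\circ\mathcal E,\mathrm{id})\le d(\widehat{\mathcal N\circ\mathcal E},\Lambda_0)$, which yields~\eqref{eq:arb-noise-bound}. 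The main bookkeeping step is making sure the block-diagonal flag structure lets the trace norm split as a $p_S$-weighted sum, so that the supremum over $S$ in~\eqref{eq:uniform-leakage} bounds it uniformly in the distribution $p$.
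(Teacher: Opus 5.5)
Your proof is correct and matches the paper's argument step for step: Lemma~\ref{lem:dual_to_arb} plus the splitting, the constant channel built from $\widehat{\mathcal N}_S(\tau_S)$, the $p_S$-weighted splitting of the trace norm over orthogonal flags, contractivity, Fuchs--van de Graaf, and Theorem~\ref{thm:beny}. One small correction to your side remark: trace-norm contractivity can fail for a merely positive trace-preserving map tensored with the identity (the transpose is a counterexample), but this does not affect the proof, because $\widehat{\mathcal N}_S\otimes\mathrm{id}_R$ is itself CPTP, and that is all the step uses.
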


\begin{proof}
    By Lemma~\ref{lem:dual_to_arb} and the splitting~\eqref{eq:leakage-splitting}, the complementary channel is
    \[
        \widehat{\mathcal N\circ\mathcal E}(\rho_L)
        =
        \sum_{S\in\mathcal S}
        p_S
        \ket S\bra S_{F_E}
        \otimes
        \left[
            \widehat{\mathcal N}_S(\tau_S)
            +
            \widehat{\mathcal N}_S(\Delta_S(\rho_L))
        \right].
    \]
    Only the second summand depends on the logical input, so we take
    \[
        \Lambda_0(\rho_L)
        =
        \operatorname{Tr}(\rho_L)
        \sum_{S\in\mathcal S}
        p_S
        \ket S\bra S_{F_E}
        \otimes
        \widehat{\mathcal N}_S(\tau_S),
    \]
    which is a constant channel in the sense of Eq.~\eqref{eq:constant_channel_def}, because each $\widehat{\mathcal N}_S(\tau_S)$ is a state independent of $\rho_L$. 
    Now we estimate $F(\widehat{\mathcal N\circ\mathcal E},\Lambda_0)$.  We compute
    \[
        F_\rho(\widehat{\mathcal N\circ\mathcal E},\Lambda_0)
        =
        f(\sigma_\rho,\eta_\rho),
        \quad
        \left\{
        \begin{aligned}
        \sigma_\rho
        &:=
        (\widehat{\mathcal N\circ\mathcal E}\otimes \mathrm{id}_R)
        \left(\ket{\psi_\rho}\bra{\psi_\rho}\right),\\
        \eta_\rho
        &:=
        (\Lambda_0\otimes \mathrm{id}_R)
        \left(\ket{\psi_\rho}\bra{\psi_\rho}\right).
        \end{aligned}
        \right.
    \]
    By the Fuchs--van de Graaf inequality,
    \[
    \begin{aligned}
    1-f(\sigma_\rho,\eta_\rho)
    &\leq
    \frac{1}{2}
    \left\|
        \sigma_\rho-\eta_\rho
    \right\|_1
    \\
    &=
    \frac{1}{2}
    \sum_{S\in\mathcal S}
    p_S
    \left\|
    (\widehat{\mathcal N}_S\otimes \mathrm{id}_R)
    \left[
        (\Delta_S\otimes \mathrm{id}_R)
        \left(\ket{\psi_\rho}\bra{\psi_\rho}\right)
    \right]
    \right\|_1
    \\
    &\leq
    \frac{1}{2}
    \sum_{S\in\mathcal S}
    p_S
    \left\|
        (\Delta_S\otimes \mathrm{id}_R)
        \left(\ket{\psi_\rho}\bra{\psi_\rho}\right)
    \right\|_1
    \\
    &\leq
    \frac{\delta}{2}.
    \end{aligned}
    \]
    The second line uses that the flag states $\ket S_{F_E}$ are orthogonal, so the trace norm splits as the $p$-weighted sum over $S$; 
    the third uses contractivity of the trace norm under CPTP maps; the fourth uses the definition~\eqref{eq:uniform-leakage} together with $\sum_Sp_S=1$.

    Therefore $F_\rho(\widehat{\mathcal N\circ\mathcal E},\Lambda_0)\ge1-\delta/2$ for every $\rho$, and taking the minimum over $\rho$,
    \[
        F(\widehat{\mathcal N\circ\mathcal E},\Lambda_0)
        =
        \min_\rho
        F_\rho(\widehat{\mathcal N\circ\mathcal E},\Lambda_0)
        \geq
        1-\frac{\delta}{2}.
    \]
    By the complementary-channel theorem, Eq.~\eqref{eq:beny}, and the definition of the channel distance in Eq.~\eqref{eq:distance_def}, 
    there exists a recovery map $\mathcal R$ with
    \[
        d(\mathcal R\circ\mathcal N\circ\mathcal E,\mathrm{id})
        \le
        d\left(\widehat{\mathcal N\circ\mathcal E},\Lambda_0\right)
        =
        \sqrt{1-F\left(\widehat{\mathcal N\circ\mathcal E},\Lambda_0\right)}
        \le
        \sqrt{\frac\delta2},
    \]
    which is~\eqref{eq:arb-noise-bound}.
\end{proof}

\begin{smtheorem}\label{thrm:3_transitive_scale}
    Let
    $
        \mathcal H_L=V_{\omega_1},\;
        \mathcal H_P=(V_{\omega_1})^{\otimes n}
    $
    with $d=p^r$ where $p$ is prime and $r$ is a positive integer, 
    and $n$ is such that $n-1=p^s$, where $s \geq r$ is a positive integer. 
    Equip \(\mathcal H_L\) with the fundamental representation of
    \(\mathfrak{su}(d)\), and equip \(\mathcal H_P\) with the transversal
    representation given by
    \begin{equation}
    \begin{gathered}
    t_a \rightarrow \sum_{i=1}^n t_a^{(i)}, \quad a=1,\ldots,d^2-1,
    \end{gathered}
    \end{equation} 
    where $t_a^{(i)}$ is a fundamental of $t_a$ on $i$-th physical space $V_{\omega_1}$. For sufficiently
    large $n$ there exists an $\mathfrak{su}(d)$-covariant encoding $\mathcal{E}$ with respect to these representations, 
    such that the code space is invariant under action of $\mathrm{PGL}(2, n-1)$ and such encoding $\mathcal{E}$ is $O_{d}\left(\frac{1}{\sqrt{n}}\right)$-approximate 
    against arbitrary flagged noise supported on sets of at most three sites, $\mathcal{N}(\sigma)=\sum_{1\leq|S|\leq3} p_S|S\rangle\langle S|_F \otimes\left(\mathcal{N}_S \otimes \mathrm{id}_{\hat S}\right)(\sigma)$, 
    with arbitrary channels $\mathcal N_S$ on $A_S$.
\end{smtheorem}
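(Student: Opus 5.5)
The plan is to reduce the statement to Lemma~\ref{lem:arb_noise_from_leakage} with $\mathcal S=\{S:1\le|S|\le3\}$. It therefore suffices to show that every reduced state $\rho^{(S)}$ splits as $\tau_S+\Delta_S$ with uniform leakage $\delta=O_d(1/n)$; the lemma then gives $d(\mathcal R\mathcal N\mathcal E,\mathrm{id})\le\sqrt{\delta/2}=O_d(n^{-1/2})$.

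For the code, I take the encoding of Theorem~\ref{thrm:3_transitive_covariant_code}, whose code space is invariant under $\mathrm{PGL}(2,n-1)$. I fix $\tau_S:=\rho^{(S)}(I_d/d)$, so that $\Delta_S(X)=\rho^{(S)}(X-\operatorname{Tr}(X)I_d/d)$ depends only on the traceless part of $X$. I then treat the three set sizes as follows.

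For $|S|=3$, Proposition~\ref{prop:3site-fundamental} gives $\Delta_S(\rho_L)=\sum_b r_b K_b$, where $K_b$ is the displayed sum of one-body terms with prefactor $1/(nd^2)$ and three-body terms with prefactor $2/(dn(n-2))$.

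For $|S|=2$ and $|S|=1$, I take partial traces of the three-site formula; $\mathrm{PGL}(2,n-1)$ is $3$-transitive, so every such $S$ extends to a triple. Since $\operatorname{Tr}t_m=0$, the three-body terms drop out. This leaves $\Delta_S(\rho_L)=\frac{1}{nd}\sum_b r_b(t_b^{(i)}+t_b^{(j)})$ for pairs and $\frac1n\sum_b r_b t_b^{(i)}$ for single sites. The single-site case also agrees with Proposition~\ref{prop:SUdcov} at $r=1$, where $\kappa_{\omega_1}=1/2$.

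In every case $\Delta_S=\frac1n\mathcal A_S$, where $\mathcal A_S$ is a linear map from traceless operators on $V_{\omega_1}$ to operators on at most three qudits. Its coefficients are bounded uniformly in $n$ for $n\ge4$ (for instance $n/(n-2)\le2$). By transitivity $\mathcal A_S$ is the same for all $S$ of a given size up to relabelling sites, so only three maps need to be controlled.

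To bound $\delta$, I write $(\Delta_S\otimes\mathrm{id}_R)(\psi_\rho)=\frac1n(\mathcal A_S\otimes\mathrm{id}_R)\bigl[\psi_\rho-\tfrac{I_d}{d}\otimes\rho_R\bigr]$, exactly as in Theorem~\ref{thrm:sudscale_general}. The bracket has trace norm at most $2$. The map $\mathcal A_S\otimes\mathrm{id}_R$ acts between finite-dimensional spaces whose dimensions depend only on $d$, because $R\cong\mathbb C^d$. Its $1\to1$ norm is therefore finite and bounded by a constant $C_d$ that does not depend on $n$: for example, bound it by a constant multiple of the sum of the coefficient magnitudes in $\mathcal A_S$, times $\max_a\|t_a\|_\infty^3$, using the dimension-dependent equivalence of norms. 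Hence $\delta\le 2C_d/n$.

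The one point needing care is this uniformity in $n$ of the $1\to1$ norm, which is why I write it through the fixed maps $\mathcal A_S$ whose coefficients are explicitly $O_d(1)$. The other point to confirm is that the noise model's sum over sets of sizes $1$ to $3$ is covered by a single family $\mathcal S$, which Lemma~\ref{lem:arb_noise_from_leakage} permits because it places no restriction on the sizes of the sets. Taking the maximum of the three constants gives $\delta=O_d(1/n)$, and the theorem follows.
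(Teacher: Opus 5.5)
Your proposal is correct and follows the paper's proof. You reduce to Lemma~\ref{lem:arb_noise_from_leakage} with $\mathcal S$ the sets of size one to three, use the $\mathrm{PGL}(2,n-1)$-invariant code with the three-site splitting of Proposition~\ref{prop:3site-fundamental}, and handle one- and two-site sets by partial trace from a containing triple. The only differences are minor: the paper invokes contractivity of the partial trace instead of computing the smaller-set leakage explicitly, and your $n$-independent bound on the $1\to1$ norm of $\mathcal A_S\otimes\mathrm{id}_R$ spells out the reference-extended uniform bound, which the paper justifies only by pointing to the explicit $1/n$ prefactors.
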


\begin{proof}
    Take $\mathcal E$ to be the $\mathrm{PGL}(2,n-1)$-invariant encoding of Theorem~\ref{thrm:3_transitive_covariant_code} and let $\mathcal S$ 
    be the family of all nonempty subsets of $\{1,\ldots,n\}$ of size at most three. We first treat the three-element sets. 
    In Proposition~\ref{prop:3site-fundamental} we obtained the splitting~\eqref{eq:leakage-splitting} on any three physical qudits \(S=\{i,j,k\}\):
    \[
        \rho^{(S)}(\rho_L)
        =
        \tau_S+\Delta_S(\rho_L),
    \]
    where the input-independent part is
    \[
    \begin{aligned}
        \tau_S
        &:=
        \frac{1}{d^3}\mathbf 1
        -\frac{2}{d^2 n}
        \sum_{m=1}^{d^2-1}
        \left(
            t_m^{(i)}t_m^{(j)}
            +
            t_m^{(i)}t_m^{(k)}
            +
            t_m^{(j)}t_m^{(k)}
        \right)
        \\
        &\quad
        +
        \frac{4}{d n(n-2)}
        \sum_{m,l,p=1}^{d^2-1}
        d_{mlp}
        t_m^{(i)}t_l^{(j)}t_p^{(k)},
    \end{aligned}
    \]
    and the logical-state-dependent part is
    \[
    \begin{aligned}
        \Delta_S(\rho_L)
        &:=
        \sum_b r_b
        \Bigg[
        \frac{1}{n d^2}
        \left(
            t_b^{(i)}+t_b^{(j)}+t_b^{(k)}
        \right)
        \\
        &\qquad\qquad
        -
        \frac{2}{d n(n-2)}
        \sum_{m=1}^{d^2-1}
        \left(
            t_b^{(i)}t_m^{(j)}t_m^{(k)}
            +
            t_m^{(i)}t_b^{(j)}t_m^{(k)}
            +
            t_m^{(i)}t_m^{(j)}t_b^{(k)}
        \right)
        \Bigg].
    \end{aligned}
    \]
    Since $\Delta_S$ depends on $\rho_L$ only through its Bloch components $r_b$, which vanish at $\rho_L=I_d/d$, we have $\tau_S=\rho^{(S)}(I_d/d)$, a state as required by Lemma~\ref{lem:arb_noise_from_leakage}. 
    Every term of $\Delta_S$ carries an explicit factor $1/n$, and the $r_b$ are bounded, so \(\Delta_S=O_{d}(n^{-1})\) uniformly in \(S\). 
    Hence there is a constant \(C_d>0\), independent of \(n\), \(S\), and \(\rho\), with $\delta\le C_d/n$ for the quantity~\eqref{eq:uniform-leakage} restricted to three-element sets.

    Sets of one or two sites obey the same bound.  If $S'\in\mathcal S$ has $|S'|<3$, choose a three-element set $S\supseteq S'$, which is possible for $n\ge3$. 
    Then $\rho^{(S')}=\Tr_{S\setminus S'}\rho^{(S)}$, so the splitting~\eqref{eq:leakage-splitting} is inherited with $\tau_{S'}=\Tr_{S\setminus S'}\tau_S$, again a state independent of $\rho_L$, and $\Delta_{S'}=\Tr_{S\setminus S'}\circ\,\Delta_S$.
    The partial trace does not increase the trace norm, so each of these sets contributes at most what $S$ contributes to~\eqref{eq:uniform-leakage}, and $\delta\le C_d/n$ over all of $\mathcal S$.

    Lemma~\ref{lem:arb_noise_from_leakage} then gives a recovery map \(\mathcal R\) with
    \[
        d(\mathcal R\circ\mathcal N\circ\mathcal E,\mathrm{id})
        \leq
        \sqrt{\frac{C_d}{2n}},
    \]
    so our encoding $\mathcal{E}$ is $\varepsilon$-approximate with $\varepsilon=O_{d}\left(\frac{1}{\sqrt{n}}\right)$.
\end{proof}

The same lemma applies to an arbitrary fixed number of noisy sites, once the uniform leakage bound~\eqref{eq:uniform-leakage} is supplied by the codes of Section~\ref{subsec:supple_multiqudit_erasure} 
instead of by exact transitivity.

\begin{smproposition}\label{prop:k-site-uniform-leakage}
    Fix $d\ge2$ and $k\ge1$, let $n=qd+1$, and let $\ket m\in M_{\omega_1}$ be a unit vector such that
    \begin{equation}
    \label{eq:diamond-hypothesis}
        \max_{|S|=k}
        \left\|\Phi_{S,m}-\overline\Phi_{k,n}\right\|_\diamond
        \le\epsilon .
    \end{equation}
    Let $\mathcal E_m(X)=\mathcal{V}_mX\mathcal{V}_m^\dagger$ and, for every $k$-element subset $S$, set
    \[
        \tau_S:=\lambda_{k,n},
        \qquad
        \Delta_S(\rho_L):=\Phi_{S,m}(\rho_L)-\lambda_{k,n},
    \]
    with $\lambda_{k,n}$ as in Eq.~\eqref{eq:lambda-kn}. Then the splitting~\eqref{eq:leakage-splitting} holds, and the uniform leakage~\eqref{eq:uniform-leakage} obeys
    \begin{equation}
    \label{eq:k-leakage-bound}
        \delta\le\epsilon+\frac{2k}n .
    \end{equation}
\end{smproposition}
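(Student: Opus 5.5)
The splitting~\eqref{eq:leakage-splitting} holds by definition. $\tau_S=\lambda_{k,n}=\overline\Phi_{k,n}(\tau)$ is a state independent of $\rho_L$, and $\Delta_S(X):=\Phi_{S,m}(X)-\Tr(X)\lambda_{k,n}$ is linear and agrees with the stated formula on states. The substance is the leakage bound. For a fixed $k$-set $S$ and a purification $\ket{\psi_\rho}$, we split
\[
\Delta_S=\bigl(\Phi_{S,m}-\overline\Phi_{k,n}\bigr)+\bigl(\overline\Phi_{k,n}-\Lambda_{k,n}\bigr)
\]
and apply the triangle inequality for the trace norm after tensoring with $\mathrm{id}_R$ and acting on $\ket{\psi_\rho}\bra{\psi_\rho}$. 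The first term contributes at most $\|\Phi_{S,m}-\overline\Phi_{k,n}\|_\diamond\le\epsilon$ by the definition of the diamond norm and hypothesis~\eqref{eq:diamond-hypothesis}. This bound is uniform in $S$ and $\rho$.

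The second term is handled with the exact mean-channel formula, Proposition~\ref{prop:exact-mean-channel}. The two $\Tr(X)$-terms combine, and we get
\[
(\overline\Phi_{k,n}-\Lambda_{k,n})(X)=\frac1n\sum_{j=1}^k\bigl(X-\Tr(X)\tau\bigr)^{(j)}\otimes\beta_{k-1,N}^{(\widehat j)}.
\]
This identity is exact, with no $O(n^{-2})$ remainder. The reason is that $\lambda_{k,n}=\overline\Phi_{k,n}(\tau)$ reproduces the same $\frac{n-k}{n}\beta_{k,N}$ term and gives $\tau^{(j)}\otimes\beta_{k-1,N}^{(\widehat j)}$ in the second sum. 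This is the first half of Eq.~\eqref{eq:exact-leakage}.

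Applying this identity with $\mathrm{id}_R$ to $\psi_\rho:=\ket{\psi_\rho}\bra{\psi_\rho}$, the $j$-th summand becomes a reordering of
\[
\bigl(\psi_\rho-\tau\otimes\rho_R\bigr)_{jR}\otimes\beta_{k-1,N}^{(\widehat j)},
\]
where we use that $\Tr_L\psi_\rho=\rho_R$. Its trace norm factorizes as $\|\psi_\rho-\tau\otimes\rho_R\|_1\cdot\|\beta_{k-1,N}\|_1$. The second factor equals $1$ because $\beta_{k-1,N}$ is a state. The first factor is at most $2$, since it is a difference of two states. Summing over $j$ gives at most $2k/n$.

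Taking the supremum over $S$ and $\rho$ then gives $\delta\le\epsilon+2k/n$. The only point needing care is verifying that the mean-channel difference is exactly the displayed expression, i.e. that the $\beta_{k,N}$ contributions cancel identically rather than only to leading order. Once this is checked by direct substitution of $X$ and of $\Tr(X)\tau$ into~\eqref{eq:exact-mean-channel}, everything else is triangle inequality and multiplicativity of the trace norm under tensor products.
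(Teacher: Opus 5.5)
Your proposal is correct and matches the paper's proof step for step. It uses the same split $\Delta_S=(\Phi_{S,m}-\overline\Phi_{k,n})+(\overline\Phi_{k,n}-\Lambda_{k,n})$, bounds the first piece by the diamond-norm hypothesis, and bounds the second via the exact leakage formula from Proposition~\ref{prop:exact-mean-channel} as $k$ terms of the form (difference of two states)$\otimes$(state); your explicit check that the $\beta_{k,N}$ terms cancel exactly, and your definition $\Delta_S(X)=\Phi_{S,m}(X)-\Tr(X)\lambda_{k,n}$ (so the split is an identity of linear maps, not only on trace-one inputs), make precise two points the paper states only briefly.
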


\begin{proof}
    By Eq.~\eqref{eq:local-channel} and Eq.~\eqref{eq:reduced_state}, $\rho^{(S)}(\rho_L)=\Phi_{S,m}(\rho_L)$, so the splitting holds by construction, and $\lambda_{k,n}=\overline\Phi_{k,n}(\tau)$ is a state independent of $S$ and of $\rho_L$.  Since $\Tr\rho_L=1$ we have $\Lambda_{k,n}(\rho_L)=\lambda_{k,n}$, hence the decomposition
    \[
        \Delta_S
        =
        \left(\Phi_{S,m}-\overline\Phi_{k,n}\right)
        +
        \left(\overline\Phi_{k,n}-\Lambda_{k,n}\right)
        \qquad\text{on trace-one inputs.}
    \]
    For the first term, the diamond norm is by definition the completely bounded trace norm, so~\eqref{eq:diamond-hypothesis} gives
    \[
        \left\|
        \left(\left(\Phi_{S,m}-\overline\Phi_{k,n}\right)\otimes\mathrm{id}_R\right)
        \left(\ket{\psi_\rho}\bra{\psi_\rho}\right)
        \right\|_1
        \le\epsilon .
    \]
    For the second term, the exact leakage formula~\eqref{eq:exact-leakage} gives, after tensoring with the reference,
    \[
        \left(\left(\overline\Phi_{k,n}-\Lambda_{k,n}\right)\otimes\mathrm{id}_R\right)
        \left(\ket{\psi_\rho}\bra{\psi_\rho}\right)
        =
        \frac1n\sum_{j=1}^k
        \left(
            \psi_{jR}-\tau_j\otimes\rho_R
        \right)
        \otimes
        \beta_{k-1,N}^{(\widehat j)},
    \]
    where $\psi_{jR}$ denotes $\ket{\psi_\rho}\bra{\psi_\rho}$ with the logical factor carried by the $j$th retained site and $\rho_R:=\Tr_L\ket{\psi_\rho}\bra{\psi_\rho}$. 
    Each summand is a difference of two states tensored with the state $\beta_{k-1,N}^{(\widehat j)}$, so its trace norm is at most $2$, and the sum has trace norm at most $2k/n$. 
    The triangle inequality gives~\eqref{eq:k-leakage-bound}.
\end{proof}

\begin{smtheorem}\label{thrm:k_arb_scale}
    Fix $d\ge2$ and $k\ge1$, and let $n=qd+1$. 
    Let $\ket{m_s}$ be the multiplicity vector sampled by Theorem~\ref{thrm:randomized_multiplicity_vector} and let $\mathcal E_s(X)=\mathcal{V}_{m_s}X\mathcal{V}_{m_s}^\dagger$, 
    an exactly $SU(d)$-covariant encoding of $\mathcal H_L=V_{\omega_1}$ into $\mathcal H_P=(V_{\omega_1})^{\otimes n}$. 
    Then, with probability at least $1-e^{-\Omega_d(n)}$ over the seed $s$, the following holds: for every flagged noise channel acting on $k$ sites,
    \[
        \mathcal N(\sigma)
        =
        \sum_{|S|=k}
        p_S
        \ket S\bra S_F
        \otimes
        \left(\mathcal N_S\otimes\mathrm{id}_{\hat S}\right)(\sigma)
    \]
    with arbitrary channels $\mathcal N_S$ on $A_S$, there exists a recovery channel $\mathcal R$ such that
    \begin{equation}
    \label{eq:k-arb-bound}
        d(\mathcal R\circ\mathcal N\circ\mathcal E_s,\mathrm{id})
        \le
        \sqrt{\frac kn}
        +
        e^{-\Omega_d(n)}
        =
        O\!\left(\sqrt{\frac kn}\right).
    \end{equation}
    The same bound holds for a Haar-random $\ket m$, with probability at least $1-e^{-\Omega_d(n)}$ by Theorem~\ref{thm:haar-uniform}.
\end{smtheorem}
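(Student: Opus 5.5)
The plan is to combine three earlier results: Theorem~\ref{thrm:randomized_multiplicity_vector}, which supplies the good event; Proposition~\ref{prop:k-site-uniform-leakage}, which turns that event into a uniform leakage bound; and Lemma~\ref{lem:arb_noise_from_leakage}, which turns the leakage bound into a recovery bound. The argument runs on the event
\[
  \mathcal G:=\Bigl\{\max_{|S|=k}\|\Phi_{S,m_s}-\overline\Phi_{k,n}\|_\diamond\le \epsilon_n\Bigr\},\qquad \epsilon_n:=r_k^{3/2}D_q^{-1/4}.
\]
By Eq.~\eqref{eq:randomized_multiplicity_vector_bound}, $\mathcal G$ fails with probability at most $M_{n,k}D_q^{-1/2}$. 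This quantity is $e^{-\Omega_d(n)}$ because $M_{n,k}=\binom nk d^{2k+2}$ is polynomial in $n$ for fixed $d,k$, while Lemma~\ref{lem:dimension_bound} gives $D_q\ge C_d d^{dq}q^{-(d^2+1)/2}$ with $q=(n-1)/d$. The rest of the argument is deterministic on $\mathcal G$, so the bound holds simultaneously for every distribution $p$ and every family of local channels $\mathcal N_S$. This gives the required uniformity, because nothing in the argument depends on $p$ or $\mathcal N_S$.

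On $\mathcal G$ I apply Proposition~\ref{prop:k-site-uniform-leakage} with $\ket m=\ket{m_s}$ and $\epsilon=\epsilon_n$. It gives the splitting $\rho^{(S)}=\lambda_{k,n}+\Delta_S$ for every $k$-set $S$, with uniform leakage $\delta\le\epsilon_n+2k/n$. Here $\mathcal S$ is the family of all $k$-element subsets. Lemma~\ref{lem:arb_noise_from_leakage} then yields a recovery $\mathcal R$ with
\[
  d(\mathcal R\circ\mathcal N\circ\mathcal E_s,\mathrm{id})\le\sqrt{\tfrac{\delta}{2}}\le\sqrt{\tfrac kn+\tfrac{\epsilon_n}{2}}\le\sqrt{\tfrac kn}+\sqrt{\tfrac{\epsilon_n}{2}},
\]
where the last step uses subadditivity of the square root, $\sqrt{a+b}\le\sqrt a+\sqrt b$. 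Since $r_k$ is fixed and $D_q$ grows exponentially, $\sqrt{\epsilon_n/2}=\tfrac{1}{\sqrt2}r_k^{3/4}D_q^{-1/8}=e^{-\Omega_d(n)}$. This is exactly Eq.~\eqref{eq:k-arb-bound}.

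For the Haar-random statement I repeat the same chain, replacing the good event with that of Theorem~\ref{thm:haar-uniform}. It has the same threshold $\epsilon_n$, and its failure probability $M_{n,k}D_q^{1/2}/(D_q+1)$ is again $e^{-\Omega_d(n)}$. Proposition~\ref{prop:k-site-uniform-leakage} and Lemma~\ref{lem:arb_noise_from_leakage} use no property of $\ket m$ beyond the diamond-norm hypothesis, so the conclusion carries over unchanged.

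The only delicate step is the bookkeeping that makes the exponentially small term explicit. One must check that the polynomial prefactors $M_{n,k}$ and $r_k^{3/4}$ are absorbed by the exponential growth of $D_q$ uniformly for large $n=qd+1$. A second check is that the constant $2k/n$ from Proposition~\ref{prop:k-site-uniform-leakage} produces exactly $\sqrt{k/n}$ after the factor $1/2$ in Lemma~\ref{lem:arb_noise_from_leakage}; it does, since $\delta/2\le k/n+\epsilon_n/2$.
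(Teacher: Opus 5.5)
Your proposal is correct and is essentially the paper's own proof. Like the paper, you condition on the good event of Theorem~\ref{thrm:randomized_multiplicity_vector}, use Proposition~\ref{prop:k-site-uniform-leakage} to get $\delta\le\epsilon+2k/n$, apply Lemma~\ref{lem:arb_noise_from_leakage} with $\mathcal S$ the family of $k$-subsets, split with $\sqrt{a+b}\le\sqrt a+\sqrt b$, and swap in the event of Theorem~\ref{thm:haar-uniform} for the Haar case; your extra check that $M_{n,k}D_q^{-1/2}$ and $r_k^{3/4}D_q^{-1/8}$ are $e^{-\Omega_d(n)}$ via Lemma~\ref{lem:dimension_bound} just spells out what the paper states in one line.
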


\begin{proof}
    Condition on the event of Theorem~\ref{thrm:randomized_multiplicity_vector}, which has probability at least $1-M_{n,k}D_q^{-1/2}=1-e^{-\Omega_d(n)}$ and on which the hypothesis~\eqref{eq:diamond-hypothesis} holds with $\epsilon=r_k^{3/2}D_q^{-1/4}$. 
    Proposition~\ref{prop:k-site-uniform-leakage} then bounds the uniform leakage by $\delta\le\epsilon+2k/n$, and Lemma~\ref{lem:arb_noise_from_leakage}, applied with $\mathcal S$ the family of all $k$-element subsets of $\{1,\ldots,n\}$, gives a recovery channel $\mathcal R$ with
    \[
        d(\mathcal R\circ\mathcal N\circ\mathcal E_s,\mathrm{id})
        \le
        \sqrt{\frac\delta2}
        \le
        \sqrt{\frac kn}
        +
        \sqrt{\frac\epsilon2},
    \]
    using $\sqrt{a+b}\le\sqrt a+\sqrt b$.  The last term is $e^{-\Omega_d(n)}$, because $r_k$ is fixed while $D_q$ grows exponentially in $n$ by Lemma~\ref{lem:dimension_bound}. 
    For the final claim, repeat the argument conditioning on the event of Theorem~\ref{thm:haar-uniform} instead, which supplies the same $\epsilon$.
\end{proof}

\section{Covariant Analog Simulations}\label{sec:supple_analog_sim}

\subsection{Dynamical Lie Algebra and Symmetries}\label{subsec:supple_finite_group_symmetry_DLA}

\begin{smlemma}\label{lem:reductive_decomposition}
    Let $G$ be a finite group, and let $\mathcal{H}$ be a unitary $G$-representation. If $\mathcal{H} \cong \bigoplus_{\alpha\in \mathcal{I}_{\mathcal{H}}}\left(V_\alpha \otimes M_{\alpha}\right)$, where $\mathcal{I}_{\mathcal{H}}$ 
    is the set of irreducible representations of $G$ present in $\mathcal{H}$, then
    $$
    \mathrm{u}(\mathcal{H})^G \cong \bigoplus_{\alpha\in \mathcal{I}_{\mathcal{H}}} \mathrm{u}\left(M_{\alpha}\right),\quad \mathfrak{s} \mathfrak{u}(\mathcal{H})^G \cong\left(\bigoplus_{\alpha\in \mathcal{I}_{\mathcal{H}}} \mathfrak{s} \mathfrak{u}\left(M_{\alpha}\right)\right) \oplus \mathfrak{u}(1)^{\oplus(|\mathcal{I}_{\mathcal{H}}|-1)} .
    $$
\end{smlemma}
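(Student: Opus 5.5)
The plan is to apply Schur's lemma to the commutant of the $G$-action and then impose skew-Hermiticity and tracelessness. First I fix a unitary isomorphism $\mathcal H\cong\bigoplus_{\alpha\in\mathcal I_{\mathcal H}}V_\alpha\otimes M_\alpha$ under which $P_g$ acts as $\bigoplus_\alpha \rho_\alpha(g)\otimes I_{M_\alpha}$. Since $G$ is finite, the representation is completely reducible. I also choose the isomorphism unitary, which is possible because the isotypic components are mutually orthogonal and the $V_\alpha$ can be given $G$-invariant inner products.

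Next I decompose an arbitrary $X\in\operatorname{End}(\mathcal H)$ into blocks $X_{\beta\alpha}:V_\alpha\otimes M_\alpha\to V_\beta\otimes M_\beta$. Write each block as $\sum_i A_i\otimes B_i$, with $\{B_i\}$ a basis of $\operatorname{Hom}(M_\alpha,M_\beta)$. The condition $P_gXP_g^\dagger=X$ then forces every $A_i$ to lie in $\operatorname{Hom}_G(V_\alpha,V_\beta)$. By Schur's lemma this space is zero for $\alpha\neq\beta$ and $\mathbb C I_{V_\alpha}$ for $\alpha=\beta$. Hence $\operatorname{End}(\mathcal H)^G=\bigoplus_\alpha I_{V_\alpha}\otimes\operatorname{End}(M_\alpha)$, and this is an isomorphism of associative algebras, and therefore of Lie algebras. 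The map $X\mapsto X^\dagger$ preserves this form, because the decomposition is unitary. So the skew-Hermitian elements are exactly the tuples $(I\otimes Y_\alpha)$ with $Y_\alpha\in\mathfrak u(M_\alpha)$. This gives the first isomorphism $\mathfrak u(\mathcal H)^G\cong\bigoplus_\alpha\mathfrak u(M_\alpha)$.

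For the second claim, write $\mathfrak u(M_\alpha)=\mathfrak{su}(M_\alpha)\oplus i\mathbb R I_{M_\alpha}$. This is a direct sum of Lie algebras, since the scalars are central and commutators are traceless. It follows that $\mathfrak u(\mathcal H)^G\cong\bigoplus_\alpha\mathfrak{su}(M_\alpha)\oplus\mathfrak u(1)^{\oplus|\mathcal I_{\mathcal H}|}$. The trace on $\mathcal H$ of $(I\otimes Y_\alpha)_\alpha$ is $\sum_\alpha \dim V_\alpha\operatorname{Tr}Y_\alpha$, which vanishes on the $\mathfrak{su}$ parts. Tracelessness is therefore one nontrivial real linear condition $\sum_\alpha\dim V_\alpha\dim M_\alpha\, c_\alpha=0$ on the central coefficients $ic_\alpha I_{M_\alpha}$. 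The solutions form an abelian ideal of real dimension $|\mathcal I_{\mathcal H}|-1$, which commutes with everything and intersects the semisimple part trivially. This yields $\mathfrak{su}(\mathcal H)^G\cong\bigoplus_\alpha\mathfrak{su}(M_\alpha)\oplus\mathfrak u(1)^{\oplus(|\mathcal I_{\mathcal H}|-1)}$.

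I expect the only delicate point to be checking that the constrained centre is a direct summand of Lie algebras rather than merely a vector-space summand. That holds because it is central and the $\mathfrak{su}$ factors are already traceless, so they lie inside $\mathfrak{su}(\mathcal H)$. A minor subtlety is the case $\dim M_\alpha=1$, where $\mathfrak{su}(M_\alpha)=0$; such blocks are then absorbed into the stated count without change.
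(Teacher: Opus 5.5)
Your proposal is correct and follows essentially the same route as the paper. Both arguments identify the commutant $\operatorname{End}(\mathcal H)^G\cong\bigoplus_\alpha I_{V_\alpha}\otimes\operatorname{End}(M_\alpha)$, restrict to skew-Hermitian elements, and impose the single trace condition $\sum_\alpha \dim V_\alpha\,\operatorname{Tr}(A_\alpha)=0$; you derive the commutant blockwise from Schur's lemma where the paper cites the double-centralizer theorem, and you additionally verify that the constrained centre is a Lie-algebra direct summand rather than only a vector-space complement, a step the paper leaves implicit.
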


\begin{proof}
    From double centralizer theorem (see, for example,~\cite{GoodmanWallach2009}) we know, that every element from commutant of $G$, meaning $\operatorname{End}_{\mathbb{C}}(\mathcal{H})^G$, acts non-trivially only on multiplicity spaces $M_{\alpha}$, so we have
    \[
        \operatorname{End}_{\mathbb{C}}(\mathcal{H})^G \cong \bigoplus_{\alpha \in \mathcal{I}_{\mathcal{H}}} \operatorname{End}\left(M_{\alpha}\right).
    \]
    For arbitrary $X \in \operatorname{End}_{\mathbb{C}}(\mathcal{H})^G$ we can write it as 
    $
    X=\bigoplus_{\alpha \in \mathcal{I}_{\mathcal{H}}} X_\alpha,
    $
    where $X_\alpha \in \operatorname{End}\left(M_{\alpha}\right)$ and
    $
    X^{\dagger}=\bigoplus_{\alpha \in \mathcal{I}_{\mathcal{H}}} X_\alpha^{\dagger},
    $
    so if $X\in \mathfrak{u}(\mathcal{H})^G=\mathfrak{u}(\mathcal{H}) \cap \operatorname{End}_{\mathbb{C}}(\mathcal{H})^G$, then $X=-X^{\dagger}$, meaning that $X_\alpha=-X_\alpha^{\dagger}$, so $X_\alpha \in \mathfrak{u}\left(M_{\alpha}\right)$, and we have
    \[
        \mathfrak{u}(\mathcal{H})^G \cong \bigoplus_{\alpha \in \mathcal{I}_{\mathcal{H}}} \mathfrak{u}\left(M_{\alpha}\right).
    \]
    Now, since $\mathfrak{s} \mathfrak{u}(\mathcal{H})^G=\mathfrak{s} \mathfrak{u}(\mathcal{H}) \cap \mathfrak{u}(\mathcal{H})^G$, for arbitrary $X \in \mathfrak{s} \mathfrak{u}(\mathcal{H})^G$ we have
    \[
        X=\bigoplus_{\alpha\in \mathcal{I}_{\mathcal{H}}}\left(I_{V_\alpha} \otimes A_\alpha\right), \quad A_\alpha \in \mathfrak{u}\left(M_{\alpha}\right)
    \]
    and 
    $
        \operatorname{tr}_{\mathcal{H}}(X)=\sum_{\alpha\in \mathcal{I}_{\mathcal{H}}} \operatorname{tr}\left(I_{V_\alpha}\right) \operatorname{tr}\left(A_\alpha\right)=\sum_{\alpha\in \mathcal{I}_{\mathcal{H}}} n_\alpha \operatorname{tr}\left(A_\alpha\right),
    $
    from which we get the desired isomorphism
    \[
        \mathfrak{s} \mathfrak{u}(\mathcal{H})^G \cong
        \left\{\left(A_1, \ldots, A_{|\mathcal{I}_{\mathcal{H}}|}\right) \in \bigoplus_{\alpha\in \mathcal{I}_{\mathcal{H}}} \mathfrak{u}\left(M_{\alpha}\right): \sum_{\alpha\in \mathcal{I}_{\mathcal{H}}} n_\alpha \operatorname{tr}\left(A_\alpha\right)=0\right\}\cong\left(\bigoplus_{\alpha\in \mathcal{I}_{\mathcal{H}}} \mathfrak{s} \mathfrak{u}\left(M_{\alpha}\right)\right) \oplus \mathfrak{u}(1)^{\oplus(|\mathcal{I}_{\mathcal{H}}|-1)},
    \]
    which completes our proof.
\end{proof}

\begin{smlemma}\label{lem:number_of_ireps}
Let \(S_N\) act on
\(
    \mathcal H=(\mathbb C^2)^{\otimes N}
\)
by permuting tensor factors. Then the distinct irreducible representations of
\(S_N\) appearing in \(\mathcal H\) are precisely those labeled by two-row
partitions
\[
    (N-k,k),
    \qquad
    k=0,1,\ldots,\left\lfloor\frac N2\right\rfloor .
\]
In particular, the number of distinct irreducible representations appearing in
\(\mathcal H\) is
\[
    |\mathcal I_{\mathcal H}|
    =
    \left\lfloor\frac N2\right\rfloor+1 .
\]
\end{smlemma}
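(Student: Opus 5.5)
The plan is to use Schur--Weyl duality for $GL_2(\mathbb C)\times S_N$ acting on $(\mathbb C^2)^{\otimes N}$, exactly as in the proof of Lemma~\ref{lem:M_module_irreducibility} with $d=2$. The decomposition
\[
    (\mathbb C^2)^{\otimes N}\cong\bigoplus_{\lambda\vdash N,\ \ell(\lambda)\le 2} S_\lambda(\mathbb C^2)\otimes[\lambda]
\]
exhibits $\mathcal H$ as a direct sum over partitions of $N$ with at most two rows. Each such $\lambda$ contributes the Specht module $[\lambda]$ with multiplicity $\dim S_\lambda(\mathbb C^2)$. I will then check that this multiplicity is nonzero and that distinct partitions give non-isomorphic $S_N$-irreps.

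For the first point, a partition $\lambda=(N-k,k)$ has $S_\lambda(\mathbb C^2)$ of dimension equal to the number of semistandard tableaux of shape $\lambda$ with entries in $\{1,2\}$. This number is $N-2k+1\ge1$, since the tableau filling the first row with $1$'s and the second with $2$'s always exists. Thus every two-row partition $(N-k,k)$, $0\le k\le\lfloor N/2\rfloor$, occurs. Conversely, no partition with three or more rows occurs, because $S_\lambda(\mathbb C^2)=0$ when $\ell(\lambda)>2$; equivalently, such $\lambda$ are excluded from the Schur--Weyl sum. The Specht modules $[\lambda]$ are pairwise non-isomorphic for distinct $\lambda$, so the set $\mathcal I_{\mathcal H}$ of distinct irreps is exactly $\{(N-k,k)\}$.

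Counting finishes the proof. The condition $N-k\ge k$ is equivalent to $k\le\lfloor N/2\rfloor$, which gives $\lfloor N/2\rfloor+1$ partitions.

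There is no real obstacle. The only care needed is to justify positivity of $\dim S_\lambda(\mathbb C^2)$ for every two-row shape, either via the tableau count above or via the Kostka-number positivity used in Lemma~\ref{lem:fundamental_in_extirior_power}. An alternative, more elementary check decomposes $\mathcal H$ by Hamming weight $w$ into permutation modules $\operatorname{Ind}_{S_{N-w}\times S_w}^{S_N}\mathbf 1$. By Young's rule, such a module contains $[(N-j,j)]$ for $j\le\min(w,N-w)$, which gives the same list.
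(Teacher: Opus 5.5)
Your proposal is correct and follows the paper's route: Schur--Weyl duality restricts the decomposition to partitions with at most two rows, which are exactly $(N-k,k)$ with $0\le k\le\lfloor N/2\rfloor$. You additionally spell out two points the paper leaves implicit, namely that each multiplicity $\dim S_{(N-k,k)}(\mathbb C^2)=N-2k+1$ is positive and that distinct Specht modules are non-isomorphic. The Young's-rule cross-check via Hamming-weight sectors is also valid.
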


\begin{proof}
This follows directly from Schur--Weyl duality for
\((\mathbb C^2)^{\otimes N}\). Since the local dimension is \(2\), only Young
diagrams with at most two rows appear. These are exactly the partitions
\((N-k,k)\) with \(0\leq k\leq \lfloor N/2\rfloor\).
\end{proof}

\begin{smlemma}\label{lem:multipl_dim}
The multiplicity of the \(S_N\)-irrep labelled by \((N-k,k)\) in
\((\mathbb C^2)^{\otimes N}\) is
\[
    m_k=N-2k+1,
    \qquad
    k=0,1,\ldots,\left\lfloor\frac N2\right\rfloor .
\]
Equivalently, the corresponding multiplicity space has dimension \(N-2k+1\).
\end{smlemma}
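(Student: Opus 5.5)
The plan is to read off the multiplicity from the $SU(2)$ side of Schur--Weyl duality for $(\mathbb C^2)^{\otimes N}$, in the form already used in Lemma~\ref{lem:number_of_ireps}:
\[
(\mathbb C^2)^{\otimes N}\cong\bigoplus_{k=0}^{\lfloor N/2\rfloor} S_{(N-k,k)}(\mathbb C^2)\otimes[N-k,k].
\]
Here the $S_N$-irrep $[N-k,k]$ is paired with the $GL_2(\mathbb C)$-irrep $S_{(N-k,k)}(\mathbb C^2)$. The multiplicity space of $[N-k,k]$, in the sense of the decomposition~\eqref{eq:H_G_irrep_decomp} with $G=S_N$, is exactly this $GL_2$ factor. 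It therefore suffices to show that
\[
\dim S_{(N-k,k)}(\mathbb C^2)=N-2k+1 .
\]

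To compute this dimension I would use the factorization $S_{(N-k,k)}(\mathbb C^2)\cong(\det)^{k}\otimes\operatorname{Sym}^{N-2k}(\mathbb C^2)$, where the determinant is one-dimensional. This gives dimension $N-2k+1$ at once. The factorization can be justified in two ways. One is Remark~\ref{rem:fundamental_weights}: the restriction to $SU(2)$ has highest weight $(N-2k)\omega_1$, which is the spin-$(N/2-k)$ irrep of dimension $N-2k+1$. The other is combinatorial: count semistandard Young tableaux of shape $(N-k,k)$ with entries in $\{1,2\}$. The first row is forced to contain $k$ ones under the second row of twos. The only freedom is how many of the remaining $N-2k$ first-row boxes are $2$s, which gives $N-2k+1$ choices.

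Either route is routine; the only point needing care is that the multiplicity space in the paper's convention is the commutant factor. That is the $GL_2$/$SU(2)$ side, as opposed to the $S_N$ irrep itself, whose dimension would instead be $\binom Nk-\binom N{k-1}$.

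As a consistency check I would compare total dimensions:
\[
\sum_{k}(N-2k+1)\Bigl(\tbinom Nk-\tbinom N{k-1}\Bigr)=2^N,
\]
which is the standard angular-momentum decomposition of $N$ spin-$\tfrac12$ particles. I do not expect any real obstacle.
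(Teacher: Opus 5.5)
Your proposal is correct and follows the paper's proof. Both read the multiplicity of $[N-k,k]$ off the Schur--Weyl decomposition $(\mathbb C^2)^{\otimes N}\cong\bigoplus_k S_{(N-k,k)}(\mathbb C^2)\otimes[N-k,k]$ as $\dim S_{(N-k,k)}(\mathbb C^2)$. The paper simply asserts that this dimension is $N-2k+1$, while you justify it, either via $(\det)^{k}\otimes\operatorname{Sym}^{N-2k}(\mathbb C^2)$ or by counting semistandard tableaux, which supplies the one step the paper leaves implicit.
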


\begin{proof}
By Schur--Weyl duality,
\[
    (\mathbb C^2)^{\otimes N}
    \cong
    \bigoplus_{k=0}^{\lfloor N/2\rfloor}
    S_{(N-k,k)}(\C^2)
    \otimes
    [N-k, k].
\]
Therefore the multiplicity of \([N-k, k]\) is
\[
    \dim S_{(N-k,k)}(\C^2)=N-2k+1 .
\]
\end{proof}

The proof of the following lemma was obtained in~\cite{Albertini2018}, we reproduce it for the sake of consistency:
\begin{smlemma}\label{lem:Sn_full_dimension}
For the \(S_N\)-representation on \(\mathcal H=(\mathbb C^2)^{\otimes N}\), the real dimension of the Lie algebra is the following
\[
    \dim\left(\mathfrak{su}(m_1)\oplus\cdots\oplus
    \mathfrak{su}(m_{|\mathcal I_{\mathcal H}|})
    \oplus
    \mathfrak{u}(1)^{\oplus(|\mathcal I_{\mathcal H}|-1)}\right)=\binom{N+3}{3}-1.
\]
\end{smlemma}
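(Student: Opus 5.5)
The plan is to combine Lemma~\ref{lem:reductive_decomposition} with Lemmas~\ref{lem:number_of_ireps} and~\ref{lem:multipl_dim}, then evaluate a finite sum. By Lemma~\ref{lem:reductive_decomposition}, the invariant algebra is $\bigoplus_\alpha \mathfrak{su}(m_\alpha)\oplus\mathfrak u(1)^{\oplus(|\mathcal I_{\mathcal H}|-1)}$. Its real dimension is therefore
\[
\sum_{\alpha}(m_\alpha^2-1)+|\mathcal I_{\mathcal H}|-1=\sum_\alpha m_\alpha^2-1 .
\]
This is just $\dim\mathfrak u(\mathcal H)^G-1$, i.e.\ the dimension of the commutant $\operatorname{End}(\mathcal H)^{S_N}$ minus one. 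So everything reduces to computing $\sum_k m_k^2$.

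Lemma~\ref{lem:number_of_ireps} says the irreps appearing are labelled by $k=0,\ldots,\lfloor N/2\rfloor$, and Lemma~\ref{lem:multipl_dim} gives $m_k=N-2k+1$. Hence
\[
\sum_{k=0}^{\lfloor N/2\rfloor}(N-2k+1)^2=\sum_{j} j^2 ,
\]
where $j$ runs over $N+1, N-1, \ldots$ down to $1$ or $2$, i.e.\ over the integers in $[1,N+1]$ with the parity of $N+1$.

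We would evaluate this with the standard formula for sums of squares of numbers of one parity, splitting into the cases $N$ even and $N$ odd. For $N+1=2p+1$ the sum of odd squares is $p(2p+1)(2p+2)\cdot\tfrac{1}{3}\cdot\tfrac12$; the even case is analogous. In each case we then check that the result equals $\binom{N+3}{3}$.

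As an independent cross-check, which we could also use as the main argument, note that $\operatorname{End}(\mathcal H)^{S_N}\cong\operatorname{Sym}^N(\operatorname{End}\mathbb C^2)$. This is the symmetric power of a $4$-dimensional space, so its dimension is $\binom{N+3}{3}$ directly. Subtracting $1$ gives the claim.

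The only mildly delicate step is the parity bookkeeping in the sum of squares. We would sanity-check it at $N=1$ and $N=2$: for $N=1$, $\sum m_k^2=2^2=4=\binom{4}{3}$; for $N=2$, $3^2+1^2=10=\binom{5}{3}$.
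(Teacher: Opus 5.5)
Your main line is the paper's proof. Lemma~\ref{lem:reductive_decomposition} turns the dimension into $\sum_k m_k^2-1$, Lemmas~\ref{lem:number_of_ireps} and~\ref{lem:multipl_dim} give $m_k=N-2k+1$, and what remains is $\sum_{k=0}^{\lfloor N/2\rfloor}(N-2k+1)^2$, which the paper asserts equals $\frac{(N+1)(N+2)(N+3)}{6}$ by direct computation. The closed form you quote for the even-$N$ case is wrong, though. For $N=2p$ one has $1^2+3^2+\cdots+(2p+1)^2=\frac{(2p+1)(2p+2)(2p+3)}{6}$, not $\frac{p(2p+1)(2p+2)}{6}$. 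Your expression gives $2$ at $p=1$, which contradicts the value $10$ from your own check at $N=2$, so the step ``check that the result equals $\binom{N+3}{3}$'' would fail as written. With the correct formula the even-$N$ case is immediate. For $N=2p-1$ the even squares give $4\sum_{i=1}^p i^2=\frac{2p(2p+1)(2p+2)}{6}=\frac{(N+1)(N+2)(N+3)}{6}$, so both parities close.

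Your cross-check is a genuinely different and cleaner proof, and you could make it the main argument. Conjugation by $P_\sigma$ on $\operatorname{End}(\mathcal H)\cong(\operatorname{End}\mathbb C^2)^{\otimes N}$ just permutes tensor factors, so the commutant is $\operatorname{Sym}^N(\mathbb C^4)$, of dimension $\binom{N+3}{3}$. Combined with $\operatorname{End}(\mathcal H)^{S_N}\cong\bigoplus_k\operatorname{End}(M_k)$, which is what identifies $\sum_k m_k^2$ with this dimension, this proves the lemma. It needs no parity bookkeeping and never uses the explicit multiplicities, and the identity $\sum_k(N-2k+1)^2=\binom{N+3}{3}$ falls out as a corollary. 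It also generalizes verbatim to local dimension $D$, giving $\binom{N+D^2-1}{N}-1$. The paper's route instead goes through the Schur--Weyl multiplicities, which it needs anyway because they are the block sizes $d_i=m_i$ used in the block encoding.
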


\begin{proof}
Using the previous Lemma~\ref{lem:multipl_dim}, the multiplicities are
\[
    m_k=N-2k+1,
    \qquad
    k=0,1,\ldots,\left\lfloor\frac N2\right\rfloor .
\]
Therefore, using Lemma~\ref{lem:number_of_ireps} we get
\[
\begin{aligned}
    \dim_{\mathbb R}
    \left(
    \bigoplus_k \mathfrak{su}(m_k)
    \oplus
    \mathfrak{u}(1)^{\oplus(|\mathcal I_{\mathcal H}|-1)}
    \right)
    &=
    \sum_k (m_k^2-1)
    +
    \left(|\mathcal I_{\mathcal H}|-1\right)
    =
    \sum_k m_k^2-1 .
\end{aligned}
\]
Direct computations of the following sum
\[
    \sum_{k=0}^{\lfloor N/2\rfloor}(N-2k+1)^2
    =
    \frac{(N+1)(N+2)(N+3)}{6}
    =
    \binom{N+3}{3}.
\]
give desired dimension.
\end{proof}

\begin{smlemma}\label{lem:subgroup_su_inclusion}
Let \(G\subseteq S_N\) act on \(\mathcal H=(\mathbb C^2)^{\otimes N}\) by permuting tensor factors. Then
\[
    \mathfrak{su}(\mathcal H)^{S_N}
    \subseteq
    \mathfrak{su}(\mathcal H)^G .
\]
\end{smlemma}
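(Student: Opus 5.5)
The inclusion is immediate from the definition of the invariant Lie algebra, so the plan is simply to unwind that definition and use that $G$ imposes fewer constraints than $S_N$. By definition,
$$
\mathfrak{su}(\mathcal H)^{S_N}=\operatorname{End}(\mathcal H)^{S_N}\cap\mathfrak{su}(\mathcal H),
\qquad
\mathfrak{su}(\mathcal H)^{G}=\operatorname{End}(\mathcal H)^{G}\cap\mathfrak{su}(\mathcal H),
$$
so it suffices to show $\operatorname{End}(\mathcal H)^{S_N}\subseteq\operatorname{End}(\mathcal H)^{G}$ and then intersect both sides with $\mathfrak{su}(\mathcal H)$.

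For the inclusion of commutants, take $X\in\operatorname{End}(\mathcal H)^{S_N}$, so that $P_\pi X P_\pi^\dagger=X$ for every $\pi\in S_N$. The operators $P_g$ for $g\in G$ are exactly the permutation operators $P_\pi$ with $\pi=g$. This is because the $G$-action on $\mathcal H$ is the restriction of the $S_N$-action that permutes tensor factors, with the same formula as in Eq.~\eqref{eq:group_act_on_phys_space}. Hence $P_gXP_g^\dagger=X$ for all $g\in G$, which means $X\in\operatorname{End}(\mathcal H)^G$.

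Intersecting with the traceless skew-Hermitian operators then gives the claim. Both sides are real Lie subalgebras of $\mathfrak{su}(\mathcal H)$, and since both are defined by the same bracket, the inclusion is an inclusion of Lie algebras and not merely of vector spaces.

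No step is a genuine obstacle. The only point worth stating explicitly is that the $G$-representation is the restriction of the $S_N$-representation, rather than some other permutation action, so that $P_g$ is literally the same operator in both settings.
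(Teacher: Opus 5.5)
Your proposal is correct and follows the paper's argument: any $X\in\mathfrak{su}(\mathcal H)$ that is fixed by conjugation under every $P_\sigma$ with $\sigma\in S_N$ is, in particular, fixed by every $P_g$ with $g\in G\subseteq S_N$, because the $G$-action is the restriction of the $S_N$-action. Routing the argument through $\operatorname{End}(\mathcal H)^{S_N}\subseteq\operatorname{End}(\mathcal H)^{G}$ and noting that the inclusion respects the Lie bracket are harmless additions to the same one-line argument.
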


\begin{proof}
By definition,
\[
    \mathfrak{su}(\mathcal H)^{S_N}
    =
    \{X\in\mathfrak{su}(\mathcal H): P_\sigma X P_\sigma^\dagger=X
    \ \text{for all } \sigma\in S_N\}.
\]
Since \(G\subseteq S_N\), any \(X\) invariant under every \(\sigma\in S_N\) is, in particular, invariant under every \(g\in G\). Therefore
\(
    X\in\mathfrak{su}(\mathcal H)^G,
\)
which proves the inclusion.
\end{proof}

\subsection{Covariant encodings for \texorpdfstring{%
$\mathfrak{su}(d_1)\oplus\cdots\oplus\mathfrak{su}(d_k)\oplus \mathfrak{u}(1)^{\oplus(r-1)}$%
}{su(d1) plus ... plus su(dk) plus u(1)(r-1)}}
\label{subsec:supple_cov_block_enc}

Importantly, we have to understand how to obtain this block-diagonal form in practice, because to perform our encoding, we need to know the block-diagonal form of each Hamiltonian we are interested in.
Below we provide an example of how to do it for $N=3$ qubits, where we use the method of Young symmetrizers proposed in~\cite{DAlessandroHartwig2021}.

\begin{smexample}\label{ex:Sn_block_diagonalization}
Basis, that respects the decomposition on isotypic components of the representation of $S_3$ 
\[
    \left(\mathbb{C}^2\right)^{\otimes 3} \cong V_{(3)} \otimes \mathbb{C}^4 \oplus V_{{(2,1)}} \otimes \mathbb{C}^2
\]
is defined as follows
\[
    \begin{gathered}
    \left|\frac{3}{2}, \frac{3}{2}\right\rangle=|000\rangle, \\
    \left|\frac{3}{2}, \frac{1}{2}\right\rangle=\frac{|100\rangle+|010\rangle+|001\rangle}{\sqrt{3}}, \\
    \left|\frac{3}{2}, -\frac{1}{2}\right\rangle=\frac{|011\rangle+|101\rangle+|110\rangle}{\sqrt{3}}, \\
    \left|\frac{3}{2}, -\frac{3}{2}\right\rangle=|111\rangle, \\
    \end{gathered}\quad
    \begin{gathered}
    \left|\frac{1}{2}, \frac{1}{2}\right\rangle_a=\frac{|100\rangle+|010\rangle-2|001\rangle}{\sqrt{6}}, \\
    \left|\frac{1}{2}, -\frac{1}{2}\right\rangle_a=\frac{2|110\rangle-|101\rangle-|011\rangle}{\sqrt{6}}, \\
    \left|\frac{1}{2}, \frac{1}{2}\right\rangle_b=\frac{|100\rangle-|010\rangle}{\sqrt{2}}, \\
    \left|\frac{1}{2}, -\frac{1}{2}\right\rangle_b=\frac{|101\rangle-|011\rangle}{\sqrt{2}} .
    \end{gathered}
\]
from which we can construct the unitary $U$ that transforms the standard basis to this basis
\begin{equation}\label{eq:basis_transform_example}
    U=\left(\begin{array}{cccccccc}
    1 & 0 & 0 & 0 & 0 & 0 & 0 & 0 \\
    0 & \frac{1}{\sqrt{3}} & 0 & 0 & -\frac{2}{\sqrt{6}} & 0 & 0 & 0 \\
    0 & \frac{1}{\sqrt{3}} & 0 & 0 & \frac{1}{\sqrt{6}} & 0 & -\frac{1}{\sqrt{2}} & 0 \\
    0 & 0 & \frac{1}{\sqrt{3}} & 0 & 0 & -\frac{1}{\sqrt{6}} & 0 & -\frac{1}{\sqrt{2}} \\
    0 & \frac{1}{\sqrt{3}} & 0 & 0 & \frac{1}{\sqrt{6}} & 0 & \frac{1}{\sqrt{2}} & 0 \\
    0 & 0 & \frac{1}{\sqrt{3}} & 0 & 0 & -\frac{1}{\sqrt{6}} & 0 & \frac{1}{\sqrt{2}} \\
    0 & 0 & \frac{1}{\sqrt{3}} & 0 & 0 & \frac{2}{\sqrt{6}} & 0 & 0 \\
    0 & 0 & 0 & 1 & 0 & 0 & 0 & 0
    \end{array}\right)
\end{equation}
which transforms the Hamiltonians of interest to the following block-diagonal form
\[
    U^{\dagger} H_\alpha U=H_\alpha^{(4)} \oplus H_\alpha^{(2)} \oplus H_\alpha^{(2)}, \quad \alpha \in\{x, y, z z\}
\]
precisely, we have
\[
\begin{gathered}
-i H_x=\left(\begin{array}{cccc}
0 & -i \sqrt{3} & 0 & 0 \\
-i \sqrt{3} & 0 & -2 i & 0 \\
0 & -2 i & 0 & -i \sqrt{3} \\
0 & 0 & -i \sqrt{3} & 0
\end{array}\right) \oplus\left(\begin{array}{cc}
0 & -i \\
-i & 0
\end{array}\right) \oplus\left(\begin{array}{cc}
0 & -i \\
-i & 0
\end{array}\right) . \\
-i H_y=\left(\begin{array}{cccc}
0 & \sqrt{3} & 0 & 0 \\
-\sqrt{3} & 0 & 2 & 0 \\
0 & -2 & 0 & \sqrt{3} \\
0 & 0 & -\sqrt{3} & 0
\end{array}\right) \oplus\left(\begin{array}{cc}
0 & 1 \\
-1 & 0
\end{array}\right) \oplus\left(\begin{array}{cc}
0 & 1 \\
-1 & 0
\end{array}\right) . \\
-i H_{z z}=\operatorname{diag}(-3 i, i, i,-3 i) \oplus i I_2 \oplus i I_2 .
\end{gathered}
\]

Notice, that so far it is a decomposition of the form
\[
    u(4) \oplus u(2),\text { where } \operatorname{diag}(A, B, B) \text { with } \operatorname{tr}(A)+2 \operatorname{tr}(B)=0.
\]
To get the desired decomposition of the form $\mathfrak{su}(4)\oplus \mathfrak{su}(2)\oplus \mathfrak{u}(1)$  we need to perform
\[
   \operatorname{diag}(A, B, B)=\operatorname{diag}(A+\mu i I_4, B-\mu i I_2, B-\mu i I_2) + \mu\operatorname{diag}\left(-i I_4, i I_2, i I_2\right),
\]
where $\mu=\frac{1}{2 i} \operatorname{tr}(B)$ and gives the following transformation $\mathfrak{L} \rightarrow \mathfrak{su}(4)\oplus \mathfrak{su}(2)\oplus \mathfrak{u}(1)$ is 
\[
\begin{gathered}
    -i H_x \longrightarrow \left(-2 i J_x^{(3 / 2)}, -i \sigma_x, 0 \right)\equiv \left(A_x, B_x, C_x \right) \\
    -i H_y \longrightarrow \left(-2 i J_y^{(3 / 2)}, -i \sigma_y, 0 \right)\equiv \left(A_y, B_y, C_y \right) \\
    -i H_{z z} \longrightarrow \left(-2 i J_z^2+\frac{5 i}{2} I_4, 0, 1\right)\equiv \left(A_{zz}, B_{zz}, C_{zz} \right)
\end{gathered}
\]
(we emphasize that although we use spin operators to represent first component operators, these operators have no relation to spin-representation -- they lie in $\mathfrak{su}(4)$ Lie algebra,
in fact $\Lie_{\R} \{-2 i J_x^{(3 / 2)}, -2 i J_y^{(3 / 2)}, -2 i J_z^2+\frac{5 i}{2} I_4\}=\mathfrak{su}(4)$, but we will elaborate on it later).
Our covariant encoding in this case is given by
\[
    -iH_k \longrightarrow \sum_{i=1}^{n_1} A_k^{(i)} \oplus \sum_{i=1}^{n_2} B_k^{(i+n_1)} \oplus C_k, \quad k\in\{x,y,zz\}
\]
which is explicitly
\[
\begin{gathered}
    -i H_x \longrightarrow \sum_{k=1}^{n_1} \left(-2 i J_x^{(3 / 2)}\right)^{(k)} \oplus \sum_{k=1}^{n_2} \left( -i \sigma_x\right)^{(k+n_1)}\oplus 0,\\
    -i H_y \longrightarrow \sum_{k=1}^{n_1} \left(-2 i J_y^{(3 / 2)}\right)^{(k)} \oplus \sum_{k=1}^{n_2} \left( -i \sigma_y\right)^{(k+n_1)}\oplus 0,\\
    -i H_{z z} \longrightarrow \sum_{k=1}^{n_1} \left(-2 i J_z^2+\frac{5 i}{2} I_4\right)^{(k)} \oplus 0 \oplus C_{zz},\\
\end{gathered}
\]
where we do not elaborate on the nature of $\mathfrak{u}(1)$ representation $C_{k}$, for the reasons explained in the main text of the paper.
\end{smexample}

\subsection{Correlated flagged two-qudit erasure}\label{subsec:supple_block_enc_noise}

\begin{smproposition}\label{prop:two_block_fidelity}
    Let
    $
        \mathcal H_L=V_{d_1, \omega_1}\otimes V_{d_2,\omega_1},\;
        \mathcal H_P=(V_{d_1,\omega_1})^{\otimes n_1}\otimes (V_{d_2,\omega_1})^{\otimes n_2}
    $
    with parameters $n_i=p_i^{k_i}$, where $p_i$ is a prime number and $k_i$ is a positive integer, s. t. $n_i\equiv 1 \pmod{d_i}$ for $i\in\{1,2\}$.
    Equip \(\mathcal H_L\) with the fundamental representation of
    \(\mathfrak{su}(d)\), and equip \(\mathcal H_P\) with the transversal
    representation given by
    \begin{equation}
    \begin{gathered}
    t_a \rightarrow \sum_{i=1}^{n_1} t_a^{(i)}, \quad s_a \rightarrow \sum_{i=1}^{n_2} s_a^{(i)}
    \end{gathered}
    \end{equation} 
    where $t_a^{(i)}$ and $s_a^{(i)}$ are fundamental representations on the $i$-th qudit of the generators $t_a$ and $s_a$ of $\mathfrak{su}(d_1)$ and $\mathfrak{su}(d_2)$, correspondingly. 
    If encoding $\mathcal{E}$, which is $\mathfrak{su}(d_1)\oplus\mathfrak{su}(d_2)$-covariant with respect to defined physical and logical representations, defines a code space where each block is invariant under action of $\mathrm{AGL}(1, n_i)$ (for the definition see Eq.~\eqref{eq:AGL}),
then under erasure noise on two sites $\mathcal{N}(\sigma)=\sum_{1 \leq i<j \leq n} q_{i j} \left|i, j\right\rangle\left\langle i, j\right|_F \otimes \left| e\right\rangle\left\langle e\right|_{i, j} \otimes \operatorname{Tr}_{i j}(\sigma)$ the entanglement fidelity has the following form
\[
    F\left(\widehat{\mathcal{N} \circ \mathcal{E}}, \Lambda_0\right)=1-\frac{p_1(d_1^2-1)}{4n_1^2}-\frac{p_2(d_2^2-1)}{4n_2^2}-p_{12}\left(\frac{d_1^2-1}{8n_1^2}+\frac{d_2^2-1}{8n_2^2}\right)+O_{d_1,d_2}\!\left(n_1^{-2}n_2^{-2}+n_1^{-3}+n_2^{-3}\right).
\]
where $p_1, p_2$ are combinations of $\{q_{ij}\}$ and $p_{12}:=1-p_1-p_2$ and we take $\Lambda_0$ as follows
\begin{equation}\label{eq:Lambda0_block}
    \Lambda_0(\rho)=\operatorname{tr}(\rho) \left(p_1\omega_{1, \text{in}} \otimes \tau_1+p_2\omega_{2, \text{in}} \otimes \tau_2+p_{12}\omega_{\text{inter}} \otimes \tau_{12}\right),
\end{equation}
where $\omega_{1, \text{in}}, \omega_{2, \text{in}}$ and $\omega_{\text{inter}}$ are some orthogonal flag states and
\begin{equation}\label{eq:omega_block}
    \tau_1:=\frac{I}{d_1^2}-\frac{2}{d_1 n_1} \sum_m t_m^{(i)} t_m^{(j)}, \quad \tau_2:=\frac{I}{d_2^2}-\frac{2}{d_2 n_2} \sum_m s_m^{(i)} s_m^{(j)}, \quad \tau_{12}:=\frac{I}{d_1 d_2}.
\end{equation}
\end{smproposition}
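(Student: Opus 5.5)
We follow the template of Proposition~\ref{prop:3_transitive_fidelity}, with the flag register now splitting into three orthogonal sectors. First we establish the three classes of reduced states. For an in-block pair $\{i,j\}$ in the first block, the encoding isometry factorizes as $\mathcal V=\mathcal V_1\otimes\mathcal V_2$, because the $\mathfrak{su}(d_1)\oplus\mathfrak{su}(d_2)$-covariant code space is a tensor product of an $SU(d_1)$-covariant and an $SU(d_2)$-covariant copy of the fundamental. Hence $\rho^{(ij)}(\rho_L)=\rho_{\mathcal V_1}^{(ij)}(\operatorname{Tr}_2\rho_L)$. We then apply the two-site analogue of Lemmas~\ref{lem:covar_3-site_vector_operator}--\ref{lem:covar_3-site_scalar_operator}, using the compressions $\mathcal V^\dagger t_a^{(i)}\mathcal V=\bar t_a/n_1$ and $\mathcal V^\dagger t_a^{(i)}t_b^{(j)}\mathcal V=-\delta_{ab}I/(2d_1n_1)$ from Proposition~\ref{prop:1_and_2_body_action}, which requires the $\mathrm{AGL}(1,n_1)$-invariance. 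This yields $\rho_{1,\mathrm{in}}=\tau_1+\frac{1}{n_1d_1}\sum_b r_{1,b}(t_b^{(i)}+t_b^{(j)})$. Equivalently, it follows by tracing one site out of Eq.~\eqref{eq:main-3site-difference}: the trace kills the $d_{mlp}$ and $t_bt_mt_m$ terms up to one-body pieces, and these recombine. The second block is handled identically. For an inter-block pair, the product structure gives $\rho_{\mathrm{inter}}=(\Phi_1\otimes\Phi_2)(\rho_L)$, where $\Phi_\ell$ are the one-site maps of Proposition~\ref{prop:SUdcov} at $r=1$ with $\alpha=1/n_\ell$ and $\kappa=1/2$. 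Expanding $\rho_L$ in the block Bloch form then produces the four stated terms.

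Next we reduce the worst-case fidelity to one input. Both $\widehat{\mathcal N\circ\mathcal E}$ and $\Lambda_0$ are $SU(d_1)\times SU(d_2)$-covariant, since the $\tau$'s are invariant. The logical representation $V_{d_1,\omega_1}\otimes V_{d_2,\omega_1}$ is irreducible for this group, so Lemma~\ref{fidelity_opt} places the minimum at the maximally entangled purification $\Psi$ of $I/(d_1d_2)$. Because the flag sectors are orthogonal, the root fidelity equals $p_1f_1+p_2f_2+p_{12}f_{12}$, where each $f$ compares the corresponding sector's Choi states. So
\[
1-F=p_1(1-f_1)+p_2(1-f_2)+p_{12}(1-f_{12}),
\]
and we treat the three sectors separately.

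In each sector we apply Lemma~\ref{lem:bures-expansion} with $\eta=\tau\otimes I_R/(d_1d_2)$, which is $I/r+O(1/n)$, and with $\Delta$ the Choi image of the logical-dependent part, which is traceless. For the first in-block sector, only $r_1$ enters, and $\Delta=\frac{1}{n_1d_1}\sum_{j'\in\{i,j\}}\Gamma_{j'R_1}\otimes(\text{identity factors})$ with $\Gamma$ as in Eq.~\eqref{eq:Gamma}. The cross terms between the two sites vanish by the partial-trace argument of Eq.~\eqref{eq:Aj-norm}. Computing $\frac r8\operatorname{Tr}\Delta^2$ with $r=d_1^2\cdot d_1d_2$ should give $2\cdot\frac{d_1^2-1}{8n_1^2}=\frac{d_1^2-1}{4n_1^2}$, consistent with the $k=2$ case of Proposition~\ref{prop:k-loc_fidelity}. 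For the inter-block sector, $\Delta$ decomposes into three mutually Hilbert--Schmidt-orthogonal pieces: a $\Gamma_{iR_1}$ piece of order $1/n_1$, a $\Gamma_{jR_2}$ piece of order $1/n_2$, and a $\Gamma_{iR_1}\otimes\Gamma_{jR_2}$ piece of order $1/(n_1n_2)$. The first two give $\frac{d_1^2-1}{8n_1^2}+\frac{d_2^2-1}{8n_2^2}$. The correlation piece contributes only at order $n_1^{-2}n_2^{-2}$, which is the source of that error term.

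The main obstacle is bookkeeping rather than concept. We must check that the normalizations (the factor $1/(n_\ell d_\ell)$ against $\kappa=1/2$, and the reference dimension $d_1d_2$ in $\tau_R$) really produce exactly the coefficients $1/4$ and $1/8$. We must also check that the $O(1/n_\ell)$ shifts of $\tau_\ell$ away from maximally mixed affect only the cubic remainder. The second point follows from Eq.~\eqref{eq:mixed-bures-expansion}, since the correction to $1/(\lambda_a+\lambda_b)$ is $O(1/n)$ multiplied by $\operatorname{Tr}\Delta^2=O(n^{-2})$. If the direct bookkeeping becomes messy, we would fall back on verifying the in-block coefficient by specializing Proposition~\ref{prop:3_transitive_fidelity}'s computation to two sites.
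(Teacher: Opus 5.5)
Your proposal is correct and follows the paper's proof essentially step for step. It uses the same three reduced states, the same reduction via Lemma~\ref{fidelity_opt} to the maximally entangled input with the root fidelity splitting as $p_1f_1+p_2f_2+p_{12}f_{12}$ over orthogonal flag sectors, and the same second-order expansion in each sector, with $\operatorname{Tr}\Delta_1^2=2(d_1^2-1)/(d_1^3d_2n_1^2)$ and an inter-block $\Delta_{12}$ splitting into Hilbert--Schmidt-orthogonal pieces of orders $n_1^{-1}$, $n_2^{-1}$ and $(n_1n_2)^{-1}$.

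The differences are minor. Deriving the in-block state directly from the two-body compressions of Proposition~\ref{prop:1_and_2_body_action} matches the $\mathrm{AGL}(1,n_\ell)$ hypothesis more cleanly than the paper's tracing of the $\mathrm{PGL}$ three-site formula; in that trace, $\operatorname{Tr}_k$ kills the three-body terms outright rather than leaving one-body pieces. Also, the factorization $\mathcal V=\mathcal V_1\otimes\mathcal V_2$ comes from the product-code definition of the block encoding, Eq.~\eqref{eq:product_code}, not from covariance alone.
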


\begin{proof}
Instead of the generalized Bloch parameterization of logical states, we will use the following representation, that respects block structure of our problem
\[
    \rho_L=\frac{I}{d_1 d_2}+\frac{1}{d_2} \sum_a r_{1, a} \bar{t}_a \otimes I+\frac{1}{d_1} \sum_b r_{2, b} I \otimes \bar{s}_b+\sum_{a, b} C_{a b} \bar{t}_a \otimes \bar{s}_b,
\]
where $\bar{t}_a$ and $\bar{s}_b$ are generators of $\mathfrak{su}(d_1)$ and $\mathfrak{su}(d_2)$ correspondingly, $r_{1, a}$ and $r_{2, b}$ are components of the Bloch vectors of reduced states on each block, and $C_{a b}$ is a correlation matrix.

There are basically three types of reduced states that the environment "can see": in-block reduced two-qudit states (for each block different)
\[
\begin{gathered}
    \rho_{1,\text {in }}^{(ij)}(\rho_L)=\frac{1}{d_1^2} \mathbf{1}-\frac{2}{d_1 n_1} \sum_{m=1}^{d_1^2-1} t_m^{(i)} t_m^{(j)}+\frac{1}{n_1 d_1} \sum_{b=1}^{d_1^2-1} r_{1, b}\left(t_b^{(i)}+t_b^{(j)}\right),\\
    \rho_{2,\text {in }}^{(ij)}(\rho_L)=\frac{1}{d_2^2} \mathbf{1}-\frac{2}{d_2 n_2} \sum_{m=1}^{d_2^2-1} t_m^{(i)} t_m^{(j)}+\frac{1}{n_2 d_2} \sum_{b=1}^{d_2^2-1} r_{2, b}\left(t_b^{(i)}+t_b^{(j)}\right),
\end{gathered}
\]
which we obtained by tracing out one qudit from the three-qudit reduced state $\rho^{(ijk)}(\rho_L)$ obtained in Proposition~\ref{prop:3site-fundamental}. As for the inter-block reduced two-qudit state, we have
\[
    \rho_{\text {inter }}^{(ij)}\left(\rho_L\right)=\left(\rho_1^{(i)} \otimes \rho_2^{(j)}\right)\left(\rho_L\right),
\]
where $\rho_1^{(i)}$ and $\rho_2^{(j)}$ are single qudit reduced states channels of the $i$-th qudit of the first block and the $j$-th qudit of the second block correspondingly. Using formulas for single-qudit reduced states obtained in Proposition~\ref{prop:SUdcov}, namely 
\[
    \rho_k^{(i)}\left(\frac{I_{d_k}}{d_k}\right)=\frac{I_{d_k}}{d_k}, \quad \rho_k^{(i)}\left(\bar{t}_a\right)=\frac{1}{n_k} t_a^{(i)}, \quad k\in\{1,2\},
\]
we can obtain the following formula for the inter-block reduced state
\[
    \rho_{\text {inter }}^{(ij)}(\rho_L)=\frac{I}{d_1 d_2}+\frac{1}{n_1 d_2} \sum_a r_{1, a} t_a^{(i)} \otimes I+\frac{1}{n_2 d_1} \sum_b r_{2, b} I \otimes s_b^{(j)}+\frac{1}{n_1 n_2} \sum_{a, b} C_{a b} t_a^{(i)} \otimes s_b^{(j)}.
\]
(We further omit indices $i$ and $j$, since our one-block code spaces are symmetric). In this notation, the dual channel becomes
\[
    \widehat{\mathcal{N} \circ \mathcal{E}}(\rho)=p_1\;\omega_{1, \text{in}}\otimes \rho_{1,\text{in}} + p_2\;\omega_{2, \text {in}}\otimes \rho_{2,\text{in}} + (1-p_1-p_2)\;\omega_{\text {inter }} \otimes \rho_{\text{inter}},
\]
where $\omega_{1, \text{in}}, \omega_{2, \text{in}}$ and $\omega_{\text{inter}}$ are some orthogonal flag states. We define $\Lambda_0$ as in Eq.~\eqref{eq:Lambda0_block}.

From Lemma~\ref{fidelity_opt} we know that the optimum of fidelity is achieved on a symmetric maximally mixed logical input
$
    \frac{I_{d_1 d_2}}{d_1 d_2},
$
so our further computations will be conducted for this input. 
It is given by
$
    \left|\Phi\right\rangle=\frac{1}{\sqrt{D}} \sum_{x=1}^{d_1} \sum_{y=1}^{d_2}|x, y\rangle_L \otimes|x, y\rangle_R=\left|\Phi\right\rangle_{L_1 R_1} \otimes\left|\Phi\right\rangle_{L_2 R_2}.
$
If we define
\[
    \begin{gathered}
    \sigma_{1/D}:=\left(\widehat{\mathcal{N} \circ \mathcal{E}} \otimes \operatorname{id}_R\right)\left(\left|\Phi\right\rangle\left\langle\Phi\right|\right), \\
    \eta_{1/D}:=\left(\Lambda_0 \otimes \operatorname{id}_R\right)\left(\left|\Phi\right\rangle\left\langle\Phi\right|\right) .
    \end{gathered}
\]
then, because of the flagged erasure noise model, we have
\[
    \sigma_{1/D}=\bigoplus_{\alpha \in\{1,2,12\}} p_\alpha \sigma_\alpha, \quad \eta_{1/D}=\bigoplus_{\alpha \in\{1,2,12\}} p_\alpha \eta_\alpha,
\]
where 
\[
    \begin{aligned}
    & \sigma_1:=\left(\rho_{1, \text {in}} \otimes \mathrm{id}\right)\left(\left|\Phi\right\rangle\left\langle\Phi\right|\right), \\
    & \sigma_2:=\left(\rho_{2, \text {in}} \otimes \mathrm{id}\right)\left(\left|\Phi\right\rangle\left\langle\Phi\right|\right), \\
    & \sigma_{12}:=\left(\rho_{\text {inter }}\otimes \mathrm{id}\right)\left(\left|\Phi\right\rangle\left\langle\Phi\right|\right) .
    \end{aligned}
\]
and
\[
    \eta_1:=\tau_1 \otimes \frac{I_R}{D}, \quad \eta_2:=\tau_2 \otimes \frac{I_R}{D}, \quad \eta_{12}:=\tau_{12} \otimes \frac{I_R}{D}.
\]
(for the definition of $\tau_i$ see Eq.~\eqref{eq:omega_block}). Thus, our fidelity splits into three parts in a really convenient way
\[
    f\left(\sigma_{1/D}, \eta_{1/D}\right)=p_1 f\left(\sigma_1, \eta_1\right)+p_2 f\left(\sigma_2, \eta_2\right)+p_{12} f\left(\sigma_{12}, \eta_{12}\right).
\]
Now we are to compute each term separately. We use the following notation
$
    E_{r s}^{(i)}:=|r\rangle\langle s|, \quad F_{r s}^{(i)}:=E_{r s}^{(i)}-\delta_{r s} \frac{I_{d_i}}{d_i},
$ 
where $i$ is the block index, so the maximally mixed state on each block can be written as
$$
    |\Phi\rangle\langle\Phi|=\frac{1}{D} \sum_{r, s=1}^{d_1} \sum_{\mu, \nu=1}^{d_2} E_{r s}^{(1)} \otimes E_{\mu \nu}^{(2)} \otimes E_{r s}^{\left(R_1\right)} \otimes E_{\mu \nu}^{\left(R_2\right)}.
$$
Let's compute $f\left(\sigma_1, \eta_1\right)$ first. We have
$$
\rho_{1, \mathrm{in}}\left(E_{r s}^{(1)} \otimes E_{\mu \nu}^{(2)}\right)=\delta_{\mu \nu}\left[\delta_{r s} \tau_1+\frac{1}{n_1 d_1}\left(F_{r s}^{(i)}+F_{r s}^{(j)}\right)\right]
$$
from which we can obtain
$$
\sigma_1=\left(\tau_1 \otimes \frac{I_{R_1}}{d_1} \otimes \frac{I_{R_2}}{d_2}\right)+\left(\frac{1}{d_1 d_2} \sum_{r, s=1}^{d_1} \frac{1}{n_1 d_1}\left(F_{r s}^{(i)}+F_{r s}^{(j)}\right) \otimes E_{r s}^{\left(R_1\right)} \otimes I_{R_2}\right)\equiv \eta_1+\Delta_1.
$$
Notice, that $\operatorname{Tr} \Delta_1=0$ and $\eta_1=\frac{I}{d_1^3d_2}$. Therefore, we can use the same Taylor expansion arguments as in the proof of Proposition~\ref{prop:3site-fundamental} to obtain
$$
f\left(\sigma_1, \eta_1\right)=1-\frac{d_1^3 d_2}{8} \operatorname{Tr}\left(\Delta_1^2\right)+O_{d_1,d_2}\left(n_1^{-3}\right).
$$
One can easily check that $\operatorname{Tr}\left(\Delta_1^2\right)=\frac{2\left(d_1^2-1\right)}{d_1^3 d_2 n_1^2}$, so we have
$$
f\left(\sigma_1, \eta_1\right)=1-\frac{d_1^2-1}{4 n_1^2}+O_{d_1,d_2}\left(n_1^{-3}\right).
$$
By the same arguments, we can obtain
$$
f\left(\sigma_2, \eta_2\right)=1-\frac{d_2^2-1}{4 n_2^2}+O_{d_1,d_2}\left(n_2^{-3}\right).
$$
We compute the inter-block branch in the same fashion. We have
$$
\rho_{\text {inter }}\left(E_{r s}^{(1)} \otimes E_{\mu \nu}^{(2)}\right)=\left(\delta_{r s} \frac{I_{d_1}}{d_1}+\frac{1}{n_1} F_{r s}^{(i)}\right) \otimes\left(\delta_{\mu \nu} \frac{I_{d_2}}{d_2}+\frac{1}{n_2} F_{\mu \nu}^{(j)}\right)
$$
from which we can obtain
\[
\begin{aligned}
\sigma_{12}
=&
\frac{I}{d_1^2 d_2^2}
+
\frac{1}{n_1 d_1 d_2^2}
\left(
\sum_{r,s=1}^{d_1}
F_{rs}^{(i)}\otimes E_{rs}^{(R_1)}
\right)
\otimes I_{2R_2}
\\
&+
\frac{1}{n_2 d_1^2 d_2}
I_{1R_1}\otimes
\left(
\sum_{\mu,\nu=1}^{d_2}
F_{\mu\nu}^{(j)}\otimes E_{\mu\nu}^{(R_2)}
\right)
\\
&+
\frac{1}{n_1n_2d_1d_2}
\left(
\sum_{r,s=1}^{d_1}
F_{rs}^{(i)}\otimes E_{rs}^{(R_1)}
\right)
\otimes
\left(
\sum_{\mu,\nu=1}^{d_2}
F_{\mu\nu}^{(j)}\otimes E_{\mu\nu}^{(R_2)}
\right).
\end{aligned}
\]
Noticing, that $\operatorname{Tr}\left(\Delta_{12}\right)=0$ we get
\[
    f\left(\sigma_{12}, \eta_{12}\right)=1-\frac{d_1^2 d_2^2}{8} \operatorname{Tr}\left(\Delta_{12}^2\right)+O_{d_1,d_2}\left(\left\|\Delta_{12}\right\|^3\right).
\]
One can check that $\operatorname{Tr}\left(\Delta_{12}^2\right)=\frac{d_1^2-1}{d_1^2 d_2^2 n_1^2}+\frac{d_2^2-1}{d_1^2 d_2^2 n_2^2}+O_{d_1,d_2}\left(n_1^{-2} n_2^{-2}\right)$, so we have
\[
    f\left(\sigma_{12}, \eta_{12}\right)=1-\frac{d_1^2-1}{8 n_1^2}-\frac{d_2^2-1}{8 n_2^2}+O_{d_1,d_2}\left(n_1^{-2} n_2^{-2}+n_1^{-3}+n_2^{-3}\right).
\]
Combining all the branches together, we get the desired result.
\end{proof}

\section{Universal Analog Computations}\label{sec:supple_Universal_comp}

\subsection{Symmetry Breaking Hamiltonians as a resource in Fault-Tolerant Analog Computations}

\begin{smexample}\label{ex:s3_breaking_not_block_diagonal}
We now recall example~\ref{ex:Sn_block_diagonalization}, where we constructed $\mathfrak{su}(4)\oplus\mathfrak{su}(2)\oplus\mathfrak{u}(1)$-covariant encoding for $S_3$ symmetric system.
As it was pointed out in~\cite{Albertini2018}, in this case DLA spanned by $iH_x, iH_y, iH_{zz}$ actually coincides with invariant algebra
$\mathfrak{su}(4)\oplus\mathfrak{su}(2)\oplus \mathfrak{u}(1)$. Thus, if we find 
$\mathcal{G}^{\text{br}}$ such that the Lie algebra generated by $\mathfrak{su}(\HH)^G$ and $\mathcal{G}^{\text{br}}$ is universal, then 
we will know the universal generating set of Hamiltonians for this system. We will show in~\ref{sec:applications} that we can achieve universality by adding only one breaking Hamiltonian, namely $Z_1+X_2$.
So we conclude that for a system of three qubits with $S_3$ symmetry
\[
    \mathrm{Lie}_{\mathbb{R}}\{iH_x, iH_y, iH_{zz}, i(Z_1+X_2)\}\cong \mathfrak{su}(2^3).
\]
In the basis that respects $S_3$ isotopic decomposition, our breaking Hamiltonian has the following form
\[
    U^{\dagger}\left(Z_1+X_2\right) U=\left(\begin{array}{cccccccc}
    1 & \frac{1}{\sqrt{3}} & 0 & 0 & \frac{1}{\sqrt{6}} & 0 & -\frac{1}{\sqrt{2}} & 0 \\
    \frac{1}{\sqrt{3}} & \frac{1}{3} & \frac{2}{3} & 0 & -\frac{\sqrt{2}}{3} & \frac{1}{3 \sqrt{2}} & -\frac{\sqrt{6}}{3} & -\frac{1}{\sqrt{6}} \\
    0 & \frac{2}{3} & -\frac{1}{3} & \frac{1}{\sqrt{3}} & -\frac{1}{3 \sqrt{2}} & -\frac{\sqrt{2}}{3} & \frac{1}{\sqrt{6}} & -\frac{\sqrt{6}}{3} \\
    0 & 0 & \frac{1}{\sqrt{3}} & -1 & 0 & -\frac{1}{\sqrt{6}} & 0 & \frac{1}{\sqrt{2}} \\
    \frac{1}{\sqrt{6}} & -\frac{\sqrt{2}}{3} & -\frac{1}{3 \sqrt{2}} & 0 & \frac{2}{3} & \frac{2}{3} & -\frac{1}{\sqrt{3}} & \frac{1}{\sqrt{3}} \\
    0 & \frac{1}{3 \sqrt{2}} & -\frac{\sqrt{2}}{3} & -\frac{1}{\sqrt{6}} & \frac{2}{3} & -\frac{2}{3} & \frac{1}{\sqrt{3}} & \frac{1}{\sqrt{3}} \\
    -\frac{1}{\sqrt{2}} & -\frac{\sqrt{6}}{3} & \frac{1}{\sqrt{6}} & 0 & -\frac{1}{\sqrt{3}} & \frac{1}{\sqrt{3}} & 0 & 0 \\
    0 & -\frac{1}{\sqrt{6}} & -\frac{\sqrt{6}}{3} & \frac{1}{\sqrt{2}} & \frac{1}{\sqrt{3}} & \frac{1}{\sqrt{3}} & 0 & 0
    \end{array}\right),
\]
where $U$ is defined in Eq.~\eqref{eq:basis_transform_example}. Breaking Hamiltonian is obviously not block-diagonal, and thus it is not implemented transversally for the code space of the corresponding $\mathfrak{su}(4)\oplus\mathfrak{su}(2)\oplus \mathfrak{u}(1)$-covariant code.
The exact implementation of this Hamiltonian will depend on the physical system, because our encoding map is defined by Schur transform, which is a global unitary transformation and depends on
number of physical qubits, namely
\[
    H^{\text{br}}_P=\mathcal{V} U^{\dagger}\left(Z_1+X_2\right) U \mathcal{V}^{\dagger},
\] 
where $\mathcal{V}$ is the encoding map of the $\mathfrak{su}(4)\oplus\mathfrak{su}(2)\oplus \mathfrak{u}(1)$-code.
One should not confuse two different Schur transforms here: the one that defines the encoding map and the one that defines the basis in which we write symmetry-respected 
decomposition of logical Hilbert space.
\end{smexample}

\subsection{Sufficient condition for universality}

In this section we establish mathematical formalism for studying universality sufficient conditions,
which heavily relies on representation theory of symmetry group $G$. We remind that
definition of action and motivation for the symmetry group $G$ can be found in Section~V of the main text of the paper.

Let $\mathcal{I}$ be the set of irreducible representations (irreps) of $G$ appearing in $\HH$. We can decompose the Hilbert space as:
\begin{equation*}
    \HH=\bigoplus_{\alpha \in \mathcal{I}} \HH_\alpha, \quad \HH_\alpha \simeq V_\alpha \otimes M_\alpha,
\end{equation*}
where $V_\alpha$ is the vector space of the irrep $\alpha$ (dimension $d_\alpha$), and for computational purposes we will fix identification of the multiplicity space $M_\alpha \cong \mathbb{C}^{m_\alpha}$.

\begin{smlemma}
    Let $\HH=\bigoplus_{\alpha\in \mathcal{I}} \mathcal{H}_\alpha$ be the decomposition of the Hilbert space into isotypic components. As vector spaces, the endomorphism algebra decomposes as:
    \begin{equation}\label{eq:endomorphism_decomposition}
        \End_{\mathbb{C}}(\mathcal{H}) \cong \bigoplus_{\alpha, \beta \in \mathcal{I}} \mathcal{W}_{\beta\alpha},
    \end{equation}
    where the blocks are defined as $\mathcal{W}_{\beta\alpha} := \Hom_{\mathbb{C}}(\mathcal{H}_\alpha, \mathcal{H}_\beta)$.
\end{smlemma}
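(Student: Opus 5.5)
The decomposition is a vector-space statement about $\End_{\mathbb C}(\mathcal H)$, so the plan is to build it from the orthogonal isotypic projectors and then check that the resulting map is a linear isomorphism. For each $\alpha\in\mathcal I$, let $\Pi_\alpha$ be the orthogonal projector of $\mathcal H$ onto $\mathcal H_\alpha$, and let $\iota_\alpha:\mathcal H_\alpha\hookrightarrow\mathcal H$ be the corresponding inclusion. Because the isotypic components of a unitary representation are mutually orthogonal and span $\mathcal H$, we have
\[
\Pi_\alpha\Pi_\beta=\delta_{\alpha\beta}\Pi_\alpha,\qquad \sum_{\alpha\in\mathcal I}\Pi_\alpha=I_{\mathcal H}.
\]
Orthogonality can be taken from the explicit form $\Pi_\alpha=\frac{d_\alpha}{|G|}\sum_g\overline{\chi^\alpha(g)}P_g$, or simply from the fact that the direct sum is orthogonal. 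This identification is the only place where the structure of $G$ enters.

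I would then define the map
\[
\Psi:\End_{\mathbb C}(\mathcal H)\to\bigoplus_{\alpha,\beta}\mathcal W_{\beta\alpha},\qquad X\mapsto\bigl(\Pi_\beta X\iota_\alpha\bigr)_{\alpha,\beta},
\]
where each component $\Pi_\beta X\iota_\alpha$ is viewed as an element of $\Hom_{\mathbb C}(\mathcal H_\alpha,\mathcal H_\beta)$. Alongside it I would define the candidate inverse
\[
\Xi:(Y_{\beta\alpha})_{\alpha,\beta}\mapsto\sum_{\alpha,\beta}\iota_\beta Y_{\beta\alpha}\Pi_\alpha .
\]
Both maps are plainly $\mathbb C$-linear.

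The two compositions can then be checked directly. For $\Xi\circ\Psi$, the resolution of identity gives
\[
\Xi\Psi(X)=\sum_{\alpha,\beta}\Pi_\beta X\Pi_\alpha=\Bigl(\sum_\beta\Pi_\beta\Bigr)X\Bigl(\sum_\alpha\Pi_\alpha\Bigr)=X .
\]
For $\Psi\circ\Xi$, the idempotence and mutual orthogonality of the projectors kill every cross term and leave exactly $Y_{\beta\alpha}$ in each slot. Hence $\Psi$ is a linear bijection, which proves \eqref{eq:endomorphism_decomposition}.

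As an optional sanity check, the dimensions agree: $\sum_{\alpha,\beta}\dim\mathcal H_\alpha\dim\mathcal H_\beta=(\dim\mathcal H)^2$. I do not expect any step to be a real obstacle. The only point needing care is to state explicitly that the isotypic projectors are orthogonal and sum to the identity, since everything else is routine block-matrix bookkeeping. It may also be worth noting for later use that $\Psi$ intertwines the conjugation action of $G$, because each $\Pi_\alpha$ commutes with every $P_g$.
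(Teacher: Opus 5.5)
Your proof is correct, but it takes a more concrete route than the paper. The paper argues abstractly. It writes $\End_{\mathbb C}(\mathcal H)\cong\mathcal H\otimes\mathcal H^*$, substitutes $\mathcal H=\bigoplus_\alpha\mathcal H_\alpha$, distributes the tensor product over the direct sums, and identifies each summand $\mathcal H_\beta\otimes\mathcal H_\alpha^*$ with $\Hom_{\mathbb C}(\mathcal H_\alpha,\mathcal H_\beta)$.

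You instead write the isomorphism down explicitly as $X\mapsto(P_\beta X P_\alpha)_{\alpha,\beta}$. (The paper calls your $\Pi_\alpha$ by $P_\alpha$ and reserves $\Pi_{\beta\alpha}$ for the block projection.) You then check the inverse using idempotence, mutual annihilation of the projectors, and the resolution of the identity.

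The paper's version is shorter and basis-free. It is also phrased in the same tensor-product language used in the next lemma to factor $\mathcal W_{\beta\alpha}\cong\mathbf T_{\beta\alpha}\otimes\mathbf M_{\beta\alpha}$.

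Your version shows that the abstract isomorphism is exactly the block-projection map $\Pi_{\beta\alpha}(X)=P_\beta XP_\alpha$ of Definition~\ref{def:projection_onto_block}, which is what the coupling-graph argument actually uses. The paper leaves that identification implicit. Your remark that the map is $G$-equivariant is also useful there.

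One small correction: orthogonality of the projectors is not needed for this vector-space statement. Any internal direct-sum decomposition comes with idempotents, projecting along the other summands, that automatically satisfy $P_\alpha P_\beta=\delta_{\alpha\beta}P_\alpha$ and $\sum_\alpha P_\alpha=I$. Unitarity only adds that they are self-adjoint.
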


\begin{proof}
    We use the property that linear maps from a direct sum are determined by their components. Specifically, $\End(\mathcal{H}) \cong \mathcal{H} \otimes \mathcal{H}^*$. Substituting the decomposition of $\mathcal{H}$:
    \begin{align*}
        \End_{\mathbb{C}}(\mathcal{H}) &\cong \left( \bigoplus_{\alpha} \mathcal{H}_\alpha \right) \otimes \left( \bigoplus_{\beta} \mathcal{H}_\beta^* \right) \cong \bigoplus_{\alpha, \beta} \left( \mathcal{H}_\beta \otimes \mathcal{H}_\alpha^* \right) \cong \bigoplus_{\alpha, \beta} \Hom_{\mathbb{C}}(\mathcal{H}_\alpha, \mathcal{H}_\beta).
    \end{align*}
\end{proof}

\begin{smlemma}
    If $\HH\cong V_\alpha \otimes M_\alpha$, for every $\alpha \in \mathcal{I}$, then each block $\mathcal{W}_{\beta\alpha}:=\Hom_{\mathbb{C}}(\mathcal{H}_\alpha, \mathcal{H}_\beta)$ has the following tensor product structure: 
    \begin{equation*}
        \mathcal{W}_{\beta\alpha} \cong \Hom_{\mathbb{C}}(V_\alpha, V_\beta) \otimes \Hom_{\mathbb{C}}(M_\alpha, M_\beta).
    \end{equation*}
\end{smlemma}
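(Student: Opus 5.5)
The plan is to reduce the claim to the canonical isomorphism $\Hom_{\mathbb C}(A\otimes B, C\otimes D)\cong \Hom_{\mathbb C}(A,C)\otimes\Hom_{\mathbb C}(B,D)$ for finite-dimensional spaces. (We read the hypothesis as $\mathcal H_\alpha\cong V_\alpha\otimes M_\alpha$ for every $\alpha\in\mathcal I$, which is the isotypic decomposition fixed just above.) With this identification in hand, we have
\[
\mathcal W_{\beta\alpha}=\Hom_{\mathbb C}(\mathcal H_\alpha,\mathcal H_\beta)\cong \Hom_{\mathbb C}(V_\alpha\otimes M_\alpha,\,V_\beta\otimes M_\beta).
\]
So it suffices to prove the general identity for $A=V_\alpha$, $B=M_\alpha$, $C=V_\beta$, $D=M_\beta$.

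For the general identity, we follow the same style as the proof of the previous lemma. We write $\Hom_{\mathbb C}(X,Y)\cong Y\otimes X^*$ and use $(A\otimes B)^*\cong A^*\otimes B^*$, which holds in finite dimension. We then reorder the tensor factors, using the commutativity and associativity isomorphisms of $\otimes$:
\[
(C\otimes D)\otimes(A\otimes B)^*\cong (C\otimes A^*)\otimes(D\otimes B^*)\cong \Hom_{\mathbb C}(A,C)\otimes\Hom_{\mathbb C}(B,D).
\]

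To make the map explicit, and so that it can be used later to factor block operators, we also describe the isomorphism directly. Define $\Psi(f\otimes g):=f\otimes g$, meaning the operator $a\otimes b\mapsto f(a)\otimes g(b)$, and extend linearly. This map is linear. It is injective: choose bases $\{a_i\}$, $\{b_j\}$, $\{c_k\}$, $\{d_l\}$, and note that the matrix units $E^{(1)}_{ki}\otimes E^{(2)}_{lj}$ are sent to the matrix units of $\Hom(A\otimes B,C\otimes D)$ in the product bases, so linearly independent elements go to linearly independent elements. Finally, both sides have dimension $\dim A\dim B\dim C\dim D$, so $\Psi$ is an isomorphism.

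No step is a real obstacle. The only point needing care is to keep track of the fixed identifications $\mathcal H_\alpha\cong V_\alpha\otimes M_\alpha$, so that $\mathcal W_{\beta\alpha}$ is understood as transported through these fixed unitaries. Because those identifications are fixed, the isomorphism is canonical once they are chosen.
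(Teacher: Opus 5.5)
Your proposal is correct and follows the paper's proof, which is exactly the chain $\mathcal{H}_\beta\otimes\mathcal{H}_\alpha^*\cong(V_\beta\otimes V_\alpha^*)\otimes(M_\beta\otimes M_\alpha^*)$, obtained by dualizing and then reordering tensor factors. Two of your additions are useful. You read the hypothesis as $\mathcal{H}_\alpha\cong V_\alpha\otimes M_\alpha$ (the statement's ``$\mathcal{H}\cong$'' is a typo), and you give the explicit Kronecker-product map $\Psi$ with its basis and dimension check; that map is the concrete identification the paper later relies on to read off the first factors of breaker projections.
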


\begin{proof}
    Consider the following series of basic isomorphisms:
    \begin{align*}
        \mathcal{H}_\beta \otimes \mathcal{H}_\alpha^* &\cong (V_\beta \otimes M_{\beta}) \otimes (V_\alpha \otimes M_{\alpha})^* \\
        &\cong (V_\beta \otimes M_{\beta}) \otimes (V_\alpha^* \otimes (M_{\alpha})^*) \\
        &\cong (V_\beta \otimes V_\alpha^*) \otimes (M_{\beta} \otimes (M_{\alpha})^*) \\
        &\cong \Hom_{\mathbb{C}}(V_\alpha, V_\beta) \otimes \Hom_{\mathbb{C}}(M_\alpha, M_\beta).
    \end{align*}
    This confirms the tensor product structure of the blocks.
\end{proof}

We define the two factors of this tensor product as:
\begin{equation}\label{eq:tensor_product_factors}
    \mathbf{T}_{\beta\alpha} := \Hom_\C(V_\alpha, V_\beta), \quad \mathbf{M}_{\beta\alpha} := \Hom_\C(M_\alpha, M_\beta).
\end{equation}
Thus, $\mathcal{W}_{\beta\alpha} \cong \mathbf{T}_{\beta\alpha} \otimes \mathbf{M}_{\beta\alpha}$. 
Notice that $\operatorname{End}_{\mathbb{C}}(\mathcal{H})^G$ is actually commutant of the representation of $G$ on $\HH$, so by the double-centralizer theorem (see, for example,~\cite{GoodmanWallach2009}) we have the following decomposition
\[
    \operatorname{End}_{\mathbb{C}}(\mathcal{H})^G \cong \bigoplus_{\alpha \in \mathcal{I}} \operatorname{id}_{V_\alpha} \otimes \operatorname{End}_{\mathbb{C}}\left(M_\alpha\right),
\]
which reflects the fact that the commutant acts only on multiplicity spaces and does not touch the irreps of $G$.

Even though the underlying objects that we usually work with in physics are real (e. g. Lie algebras $\mathfrak{su}(\HH), \mathfrak{u}(\HH)$) in the upcoming sections it will be more convenient to work with complexified objects -- we will 
treat them as representations of $G$ and representations over complex field are much easier to handle. In the end, we will show how to descend back to the real form and get the final result about the universality of dynamics.

\begin{smdefinition}\label{def:complexification}
    Let $\mathfrak{g}$ be a real Lie algebra. Its complexification $\mathfrak{g}_\mathbb{C} := \mathfrak{g} \otimes_{\mathbb{R}} \mathbb{C}$ is a complex Lie algebra obtained by allowing complex coefficients.
\end{smdefinition}

\begin{smlemma}\label{lem:complexification}
    The following isomorphism holds for the complexification of the unitary Lie algebra and the special unitary Lie algebra:
    \begin{equation*}
        \uu(\mathcal{H}) \otimes_{\mathbb{R}} \mathbb{C} \cong \gl(\mathcal{H}), \quad \su(\mathcal{H}) \otimes_{\mathbb{R}} \mathbb{C} \cong \slc(\mathcal{H}).
    \end{equation*}
\end{smlemma}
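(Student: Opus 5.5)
The plan is to check both isomorphisms directly at the level of underlying complex vector spaces. Then we verify that the natural identification respects brackets.

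For the first isomorphism, consider the complex-linear map
\[
\Psi:\uu(\mathcal H)\otimes_{\mathbb R}\mathbb C\to\gl(\mathcal H),\qquad X\otimes z\mapsto zX,
\]
extended additively. It is well defined because $(X,z)\mapsto zX$ is $\mathbb R$-bilinear. It is a Lie algebra homomorphism, since the bracket on the complexification is $[X\otimes z,Y\otimes w]=[X,Y]\otimes zw$, and $[zX,wY]=zw[X,Y]$.

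Surjectivity comes from the Cartesian decomposition of an arbitrary $A\in\gl(\mathcal H)$:
\[
A=\tfrac12(A-A^\dagger)+\tfrac12(A+A^\dagger),\qquad \tfrac12(A+A^\dagger)=i\cdot\bigl(-\tfrac i2(A+A^\dagger)\bigr).
\]
Both $\tfrac12(A-A^\dagger)$ and $-\tfrac i2(A+A^\dagger)$ are skew-Hermitian. Hence $A=\Psi\bigl(X\otimes1+Y\otimes i\bigr)$ with $X,Y\in\uu(\mathcal H)$.

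Injectivity follows by counting dimensions. We have $\dim_{\mathbb C}(\uu(\mathcal H)\otimes_{\mathbb R}\mathbb C)=\dim_{\mathbb R}\uu(\mathcal H)=N^2$ for $N=\dim\mathcal H$, which equals $\dim_{\mathbb C}\gl(\mathcal H)$. A surjection between complex spaces of equal finite dimension is a bijection. Alternatively, one can argue directly: if $X+iY=0$ with $X,Y$ skew-Hermitian, then taking adjoints gives $-X+iY=0$, so $X=Y=0$.

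For the second isomorphism, restrict $\Psi$ to $\su(\mathcal H)\otimes_{\mathbb R}\mathbb C$, which is a complex subspace of the domain. Trace is complex linear, so the image lies in $\slc(\mathcal H)$. Conversely, if $\operatorname{Tr}A=0$, then the two skew-Hermitian parts above have traces $\tfrac12(\operatorname{Tr}A-\overline{\operatorname{Tr}A})=0$ and $-\tfrac i2(\operatorname{Tr}A+\overline{\operatorname{Tr}A})=0$. So both lie in $\su(\mathcal H)$, and the restricted map is onto $\slc(\mathcal H)$. Injectivity is inherited from $\Psi$.

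No step is a real obstacle. The only point needing care is to distinguish the abstract complexification from the concrete sum $\uu(\mathcal H)+i\,\uu(\mathcal H)$ inside $\gl(\mathcal H)$. This is handled by the injectivity argument ($\uu\cap i\uu=0$), which makes that sum direct.
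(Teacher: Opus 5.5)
Your proof is correct and follows essentially the same route as the paper: the Cartesian decomposition $A=\tfrac12(A-A^\dagger)+i\bigl(-\tfrac i2(A+A^\dagger)\bigr)$ gives surjectivity, the count $\dim_{\mathbb R}\mathfrak{u}(\mathcal H)=\dim_{\mathbb C}\mathfrak{gl}(\mathcal H)$ gives injectivity, and the special unitary case follows by restricting to traceless operators. You go further than the paper in three places it leaves implicit: you write down the comparison map explicitly, you check that it respects brackets, and you verify that a traceless $A$ has traceless skew-Hermitian parts.
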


\begin{proof}
    Any operator $A \in \gl(\mathcal{H})$ can be uniquely decomposed into its skew-Hermitian parts (Cartesian decomposition):
    \[ A = X + iY, \quad \text{where } X = \frac{A - A^\dagger}{2}, Y = \frac{A + A^\dagger}{2i}. \]
    Since $X, Y \in \uu(\mathcal{H})$, the complex span of $\uu(\mathcal{H})$ covers all of $\gl(\mathcal{H})$. The dimensions match: $\dim_{\mathbb{R}}\uu(\mathcal{H}) = d^2 = \dim_{\mathbb{C}}\gl(\mathcal{H})$.
    Proof for $\su(\mathcal{H})$ instantly follows after restricting to traceless operators.
\end{proof}

We denote the complexification of the invariant Lie subalgebra $\mathfrak{su}(\HH)^G$ as
\[
    \mathfrak{l}_\C := \mathfrak{su}(\HH)^G \otimes_{\R} \C,
\]
and one can easily check from the definition of $\operatorname{End}_{\C}(\HH)^G$ and from the fact that $\mathfrak{su}(\HH) \otimes_{\R} \C \cong \mathfrak{sl}(\HH)$ from Lemma~\ref{lem:complexification} that
\begin{equation*}
    \mathfrak{l}_\C = \left(\operatorname{End}_{\C}(\HH)^G \cap \mathfrak{su}(\HH) \right) = \operatorname{End}_{\C}(\HH)^G\cap \mathfrak{sl}(\HH)\cong\left(\bigoplus_{\alpha \in \mathcal{I}} \mathrm{id}_{V_\alpha} \otimes \operatorname{End}_{ \mathbb{C}}\left(M_\alpha\right)\right) \cap \mathfrak{sl}(\HH),
\end{equation*}
so convenient, operator description would be
\begin{equation}\label{eq:compl_inv_alg}
    \mathfrak{l}_\C \cong \left\{\bigoplus_{\alpha \in \mathcal{I}} \mathrm{id}_{V_\alpha} \otimes A_\alpha: A_\alpha \in \operatorname{End}_\C(M_{\alpha}), \quad \sum_\alpha d_\alpha \operatorname{Tr}\left(A_\alpha\right)=0\right\}.
\end{equation}

The block decomposition
$
\text{End}_{\mathrm{C}}(\HH) \cong \bigoplus_{\alpha, \beta \in \mathcal{I}} \mathcal{W}_{\beta \alpha}
$
is useful because the $G$-invariant complexified Hamiltonians (denoted $\mathfrak{l}_\C$) are precisely those operators with no off-diagonal isotypic blocks and whose diagonal blocks are of the form
$$
\mathrm{id}_{V_\alpha} \otimes A_\alpha, \quad A_\alpha \in \operatorname{End}_{\mathrm{C}}\left(M_\alpha\right), \quad \sum_\alpha d_\alpha \operatorname{Tr}\left(A_\alpha\right)=0 .
$$
By contrast, the symmetry-breaking Hamiltonians may have nonzero components in off-diagonal blocks $\mathcal{W}_{\beta \alpha}$ with $\beta \neq \alpha$ -- in fact, since we are able to choose these Hamiltonians we would want them to have such off-diagonal components. 

The relevant action here is the adjoint action by commutators. 
Indeed, the dynamical Lie algebra is generated from the invariant controls $\mathfrak{su}(\HH)^G$ and the breakers
$\mathcal{G}^{\text{br}}$ by repeated commutators, so to understand what additional generators can be produced 
from a breaker component lying in a given block $\mathcal{W}_{\beta \alpha^{\prime}}$, one must understand how the invariant algebra 
acts on that block under $\operatorname{ad}_L(X)=[L, X]$. Since every element of $\mathfrak{l}_{\C}$ (complexification of $\mathfrak{su}(\HH)^G$) is block-diagonal and has the form $\mathrm{1}_{\gamma} \otimes A_\gamma$, 
its commutator action preserves each block $\mathcal{W}_{\beta \alpha} \cong \mathbf{T}_{\beta \alpha} \otimes \mathbf{M}_{\beta \alpha}$, 
acts trivially on the irrep-intertwining factor $\mathbf{T}_{\beta \alpha}$, and acts non-trivially on the multiplicity factor $\mathbf{M}_{\beta \alpha}$. Thus, although the set of breakers is finite and may contribute only a few components to a given off-diagonal block, 
the invariant algebra still supplies a large family of commutators inside that block. This makes it plausible that, once the breakers provide sufficiently many first-factor directions in $\mathbf{T}_{\beta \alpha}$, the adjoint action of $\mathfrak{l}_{\C}$ 
can generate the entire off-diagonal block $\mathcal{W}_{\beta \alpha}$ (see below for precise definitions and details).

The strategy of the proof is first to use the breakers to access suitable off-diagonal blocks, then to use the invariant algebra $\mathfrak{l}_\C$ to generate the full multiplicity factor $\mathbf{M}_{\beta \alpha}$ 
inside those blocks while preserving the available directions in the irrep-intertwining factor $\mathbf{T}_{\beta \alpha}$, and finally to propagate between blocks $\mathcal{W}_{\beta \alpha}$ by taking commutators, 
thereby recovering the full complex operator algebra $\mathfrak{sl}(\HH)$ (and then descending to the real form).

Now let's further formalize the above intuition and give precise definitions of the relevant objects.
\begin{smdefinition}\label{def:projection_onto_block}
    For $\alpha, \beta \in \mathcal{I}$, let $P_\alpha$ be the projector onto the isotypic component $\HH_\alpha$. The projection of an operator $X$ onto the block $\mathcal{W}_{\beta\alpha}$ is defined as:
    \begin{equation*}
        \Pi_{\beta\alpha}(X) := P_\beta X P_\alpha.
    \end{equation*}
\end{smdefinition}
\begin{smremark}
    On practice projection $P_{\alpha}$ can be computed by using the character table of the group $G$ and the formula
    \begin{equation*}\label{eq:projector_formula}
        P_\alpha=\frac{d_\alpha}{|G|} \sum_{g \in G} \chi_\alpha\left(g^{-1}\right) U(g),
    \end{equation*}
    where $\chi_\alpha$ is the character of the irrep $\alpha$. 
\end{smremark}

\begin{smdefinition}\label{def:coupling_graph}
    Let $\mathcal{G}^{\text{br}} \subset \mathfrak{su}(\HH)$ be the set of breakers. \textit{The Coupling graph} $\Gamma_{\mathrm{co}}$ has vertex set $\mathcal{I}$, and an (undirected) edge $\{\alpha, \beta\}$ is present if there exists $H \in \mathcal{G}^{\text{br}}$ such that
$
\Pi_{\beta \alpha}(H) \neq 0 .
$    
\end{smdefinition}

Basically, the coupling graph records which isotypic sectors are coupled by the symmetry-breaking Hamiltonians.

\begin{smremark}
    If $H$ is skew-Hermitian, then
    $
    \Pi_{\alpha \beta}(H)=P_\alpha H P_\beta=-\left(P_\beta H P_\alpha\right)^{\dagger}.
    $
    So
    $
    \Pi_{\beta \alpha}(H) \neq 0 \Longleftrightarrow \Pi_{\alpha \beta}(H) \neq 0 .
    $
    Hence the edge relation is symmetric, and the graph is naturally undirected.
\end{smremark}

\begin{smdefinition}\label{def:breaker_generated_subspace}
    Fix $\alpha, \beta \in \mathcal{I}$, and identification
    $
    \mathcal{W}_{\beta \alpha} \cong \mathbf{T}_{\beta \alpha} \otimes \mathbf{M}_{\beta \alpha}.
    $
    We call the following subspace of $\mathcal{W}_{\beta \alpha}$ 
    $$
    \mathcal{B}_{\beta \alpha}:=\operatorname{span}_{\mathrm{C}}\left\{\Pi_{\beta \alpha}(H): H \in \mathcal{G}_{\text {break }}\right\} \subseteq \mathcal{W}_{\beta \alpha} .
    $$
    a \textit{breaker-generated subspace}. Define \textit{the first-factor support} of the breakers on the block $(\alpha, \beta)$, which we denote by $S_{\beta \alpha}$, as the smallest subspace $S \subseteq \mathbf{T}_{\beta \alpha}$ such that
    $$
    \mathcal{B}_{\beta \alpha} \subseteq S \otimes \mathbf{M}_{\beta \alpha}
    $$
    Such a smallest subspace exists, and is given explicitly by the contractions of the breakers against the second factor,
    $$
    S_{\beta\alpha}=\operatorname{span}_\C\left\{\left(\operatorname{id}\otimes\psi\right)b\;:\;b\in\mathcal B_{\beta\alpha},\ \psi\in\mathbf M_{\beta\alpha}^*\right\},
    $$
    where $\operatorname{id}\otimes\psi:\mathbf T_{\beta\alpha}\otimes\mathbf M_{\beta\alpha}\to\mathbf T_{\beta\alpha}$ sends $t\otimes m\mapsto\psi(m)\,t$. Indeed, if $\mathcal B_{\beta\alpha}\subseteq S\otimes\mathbf M_{\beta\alpha}$ then every such contraction lies in $S$; conversely, writing $b=\sum_i t_i\otimes m_i$ with the $m_i$ linearly independent and choosing $\psi_i$ dual to them recovers each $t_i$, so $b$ lies in the span above tensored with $\mathbf M_{\beta\alpha}$.
\end{smdefinition}

\begin{smdefinition}\label{def:first_factor}
    We say that \textit{the full first-factor span condition} holds on the edge ${\alpha, \beta}$ if for first-factor support it holds
    \[
        S_{\beta \alpha}=\mathbf{T}_{\beta \alpha}=\operatorname{Hom}_{\mathrm{C}}\left(V_\alpha, V_\beta\right)
    \]
\end{smdefinition}

The point of the first-factor support is to measure which directions in the irrep-intertwining space $\mathbf{T}_{\beta\alpha}=\operatorname{Hom}_{\mathrm{C}}\left(V_\alpha, V_\beta\right)$ are actually supplied by the breakers, 
independently of how a given block element is written as a sum of simple tensors. Indeed, an element of $\mathcal{W}_{\beta \alpha} \cong \mathbf{T}_{\beta \alpha} \otimes \mathbf{M}_{\beta \alpha}$ 
may admit many different tensor decompositions, so the "first factors appearing in a decomposition" are not intrinsically defined. The subspace $S_{\beta \alpha}$ avoids this ambiguity by collecting 
all first-factor directions that can be extracted from breaker projections by contracting the multiplicity factor. 
This is exactly the quantity needed in the proof, because the invariant algebra acts only on the multiplicity factor $\mathbf{M}_{\beta \alpha}$, so once the breakers provide a subspace $S_{\beta \alpha} \subseteq\mathbf{T}_{\beta \alpha}$, 
the adjoint action of the invariant algebra propagates it to $S_{\beta \alpha} \otimes \mathbf{M}_{\beta \alpha}$ (one non-trivial vector from $\mathbf{M}_{\beta \alpha}$ turns out to be enough as we will see). Thus the condition $S_{\beta \alpha}=\mathbf{T}_{\beta \alpha}$ 
is precisely what ensures that the entire block $\mathcal{W}_{\beta \alpha}$ can be generated.

Now we are ready to formulate the sufficient conditions. For our proof, we denote DLA as follows
\begin{equation}\label{eq:DLA_with_breakers}
    \mathcal{L}:=\operatorname{Lie}_{\R}\{\mathfrak{su}(\HH)^G \cup \mathcal{G}^{\text{br}}\} \subseteq \mathfrak{su}(\HH).
\end{equation}
\begin{smtheorem} \label{thm:main_real}
    If the graph $\Gamma_{\mathrm{co}}$ is connected and the Full First-Factor Span condition holds for every edge in $\Gamma_{\mathrm{co}}$, then:
    \begin{equation*}
        \mathcal{L} = \mathfrak{su}(\HH).
    \end{equation*}
\end{smtheorem}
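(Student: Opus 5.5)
Our plan is to work over $\mathbb C$ and descend at the end. Set $\mathcal L_\C:=\mathcal L\otimes_\R\C\subseteq\slc(\HH)$; it is the complex Lie algebra generated by $\mathfrak l_\C$ from \eqref{eq:compl_inv_alg} together with $\mathcal G^{\mathrm{br}}$. By Lemma~\ref{lem:complexification}, showing $\mathcal L_\C=\slc(\HH)$ is enough. Indeed, $\mathcal L$ is a real form of $\mathcal L_\C$ lying inside $\su(\HH)$, so the real dimensions satisfy $\dim_\R\mathcal L=\dim_\C\mathcal L_\C=\dim_\C\slc(\HH)=\dim_\R\su(\HH)$, and this forces $\mathcal L=\su(\HH)$. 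We will build $\slc(\HH)$ out of the block decomposition \eqref{eq:endomorphism_decomposition}. Concretely, we want every off-diagonal block $\mathcal W_{\beta\alpha}$ with $\alpha\neq\beta$ to lie in $\mathcal L_\C$, and then every traceless combination of diagonal blocks.

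\textbf{Step 1: a single edge $\{\alpha,\beta\}$ of $\Gamma_{\mathrm{co}}$.} First we show that $\mathcal B_{\beta\alpha}\subseteq\mathcal L_\C$. We need $\Pi_{\beta\alpha}(H)$ to be realizable inside $\mathcal L_\C$ for each breaker $H$. The plan is to use the commuting elements $Z_\gamma:=P_\gamma-c_\gamma I\in\mathfrak l_\C$, with $c_\gamma$ chosen so that $Z_\gamma$ is traceless. Their adjoint action on $\mathcal W_{\beta\alpha}$ is multiplication by the scalar $(\delta_{\beta\gamma}-\delta_{\alpha\gamma})$. Taking products of $\mathrm{ad}$-polynomials in the $Z_\gamma$ (a Vandermonde / Lagrange-interpolation argument on joint eigenvalues) isolates the $\mathcal W_{\beta\alpha}$ component $\Pi_{\beta\alpha}(H)$ from $H$. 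The difficulty is that the pairs $(\alpha,\beta)$ and $(\beta',\alpha')$ can share eigenvalue patterns. A single $\mathrm{ad}_{Z_\gamma}$ cannot separate them, so we will use the joint weights $(\delta_{\beta\gamma}-\delta_{\alpha\gamma})_\gamma$, which determine $(\alpha,\beta)$ whenever $\alpha\neq\beta$.

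Next we use $\mathfrak l_\C$ to fill the multiplicity factor. Elements $\mathrm{id}_{V_\alpha}\otimes A\oplus\mathrm{id}_{V_\beta}\otimes B$ act on $\mathbf T_{\beta\alpha}\otimes\mathbf M_{\beta\alpha}$ by $t\otimes m\mapsto t\otimes(Bm-mA)$. Because $\mathrm{End}(M_\beta)$ acting on the left and $\mathrm{End}(M_\alpha)$ acting on the right generate all of $\mathrm{End}(\mathbf M_{\beta\alpha})$ as an associative algebra, $\mathbf M_{\beta\alpha}$ is an irreducible module for the associative algebra generated by these operators. The trace constraint in \eqref{eq:compl_inv_alg} can be handled by adding a multiple of $Z_\gamma$. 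Consequently the $\mathfrak l_\C$-submodule generated by $\mathcal B_{\beta\alpha}$ is $S'\otimes\mathbf M_{\beta\alpha}$, where $S'$ is the first-factor support. This is exactly the contraction characterization in Definition~\ref{def:breaker_generated_subspace}: an invariant subspace of $\mathbf T\otimes\mathbf M$ under $1\otimes\mathrm{End}(\mathbf M)$ has the form $S\otimes\mathbf M$. The full first-factor span condition then gives $\mathcal W_{\beta\alpha}\subseteq\mathcal L_\C$. The same argument, or taking adjoints, gives $\mathcal W_{\alpha\beta}$ as well.

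\textbf{Step 2: propagate along the graph.} The composition $\mathcal W_{\gamma\beta}\mathcal W_{\beta\alpha}=\mathcal W_{\gamma\alpha}$ holds because composition of full Hom spaces is surjective. For distinct $\alpha,\beta,\gamma$ we have $[\mathcal W_{\gamma\beta},\mathcal W_{\beta\alpha}]=\mathcal W_{\gamma\beta}\mathcal W_{\beta\alpha}$, since the other product vanishes. Since $\Gamma_{\mathrm{co}}$ is connected, induction on path length shows that every $\mathcal W_{\beta\alpha}$ with $\alpha\neq\beta$ lies in $\mathcal L_\C$. After that, $[\mathcal W_{\alpha\beta},\mathcal W_{\beta\alpha}]$ produces $XY\oplus(-YX)$, and these span all pairs of the form $(\mathrm{End}(\HH_\alpha),\mathrm{End}(\HH_\beta))$ with total trace zero. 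Together with $\mathfrak l_\C$ and the connectedness of the graph, this fills the whole diagonal part of $\slc(\HH)$. The degenerate case $|\mathcal I|=1$ is handled separately: there $\Gamma_{\mathrm{co}}$ has no edges and the statement would require $\mathfrak l_\C=\slc(\HH)$, which holds only when $G$ acts trivially. We will treat that case by noting that connectivity with a single vertex is vacuous, so we add this as a remark or assume $|\mathcal I|\ge2$.

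We expect the main obstacle to be the first part of Step 1, namely showing that the projected pieces $\Pi_{\beta\alpha}(H)$, rather than just $H$, are available in $\mathcal L_\C$. If the interpolation argument with the $Z_\gamma$ turns out to be delicate, the fallback is to work with the $\mathfrak l_\C$-module generated by $H$. We would show that it decomposes into its block components because $\mathrm{ad}(\mathfrak l_\C)$ acts semisimply with distinct central characters on distinct ordered pairs $(\beta,\alpha)$.
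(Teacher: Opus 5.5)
Your proposal is correct and follows essentially the same route as the paper: complexify, let the adjoint action of the invariant algebra spread the breaker projections on each edge to $S_{\beta\alpha}\otimes\mathbf M_{\beta\alpha}=\mathcal W_{\beta\alpha}$ (that action is trivial on $\mathbf T_{\beta\alpha}$ and irreducible on $\mathbf M_{\beta\alpha}$), propagate along paths via $[\mathcal W_{\gamma\beta},\mathcal W_{\beta\alpha}]=\mathcal W_{\gamma\alpha}$, recover the traceless diagonal from $[\mathcal W_{\alpha\beta},\mathcal W_{\beta\alpha}]$, and descend by a dimension count. Your isolation of $\Pi_{\beta\alpha}(H)$ through $\mathrm{ad}$-polynomials in $Z_\gamma=P_\gamma-c_\gamma I$ is a worthwhile addition, because the paper's Step~4 uses $\mathcal B_{\beta\alpha}\subseteq\mathcal L_\C$ without justifying it; the paper also does not mention the case $|\mathcal I|=1$, which is harmless for site-permutation actions since the trivial irrep always occurs, so $|\mathcal I|=1$ forces $G$ to act trivially.
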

\begin{smremark}
    For abelian groups all irreducible representations are 1-dimensional, so $\mathbf{T}_{\beta\alpha} \cong \C$, so this condition simply means the projection $\Pi_{\beta \alpha}$ of $\mathcal{G}^{\text{br}}$ is non-zero
    for every pair $\{\alpha, \beta\}$.
\end{smremark}

\begin{smremark}
    The practical way to check the full first-factor span condition in the theorem is the following:
    \begin{enumerate}
    \item Use the projector formula~\eqref{eq:projector_formula} to get the isotypic blocks:
        $
        \Pi_{\beta \alpha}(H)=P_\beta H P_\alpha ;
        $
    \item Choose basis associated with the Schur-transform
        $
        \HH_\alpha \cong V_\alpha \otimes M_\alpha, \quad \HH_\beta \cong V_\beta \otimes M_\beta ;
        $
    \item Rewrite
        $
        \Pi_{\beta \alpha}(H) \in \operatorname{Hom}\left(\HH_\alpha, \HH_\beta\right) \cong \operatorname{Hom}\left(V_\alpha, V_\beta\right) \otimes \operatorname{Hom}\left(M_\alpha, M_\beta\right) ;
        $
    \end{enumerate}
    This gives us an explicit decomposition of $\Pi_{\beta \alpha}(H)$ into simple tensors, so we can read off the first factor and check if they span the entire $\mathbf{T}_{\beta \alpha}=\operatorname{Hom}_\C(V_\alpha, V_\beta)$.
    We will demonstrate this procedure on the example of $S_3$-symmetric system in section \ref{sec:H_br_applications}.
\end{smremark}

Now we proceed with the detailed proof. Schematic illustration of objects, used in the proof, can be seen in pic~\ref{fig:proof_schematic}.

\begin{figure}[htbp]
    \centering
    \IfFileExists{figs/H_br.pdf}{%
        \includegraphics[width=0.7\textwidth]{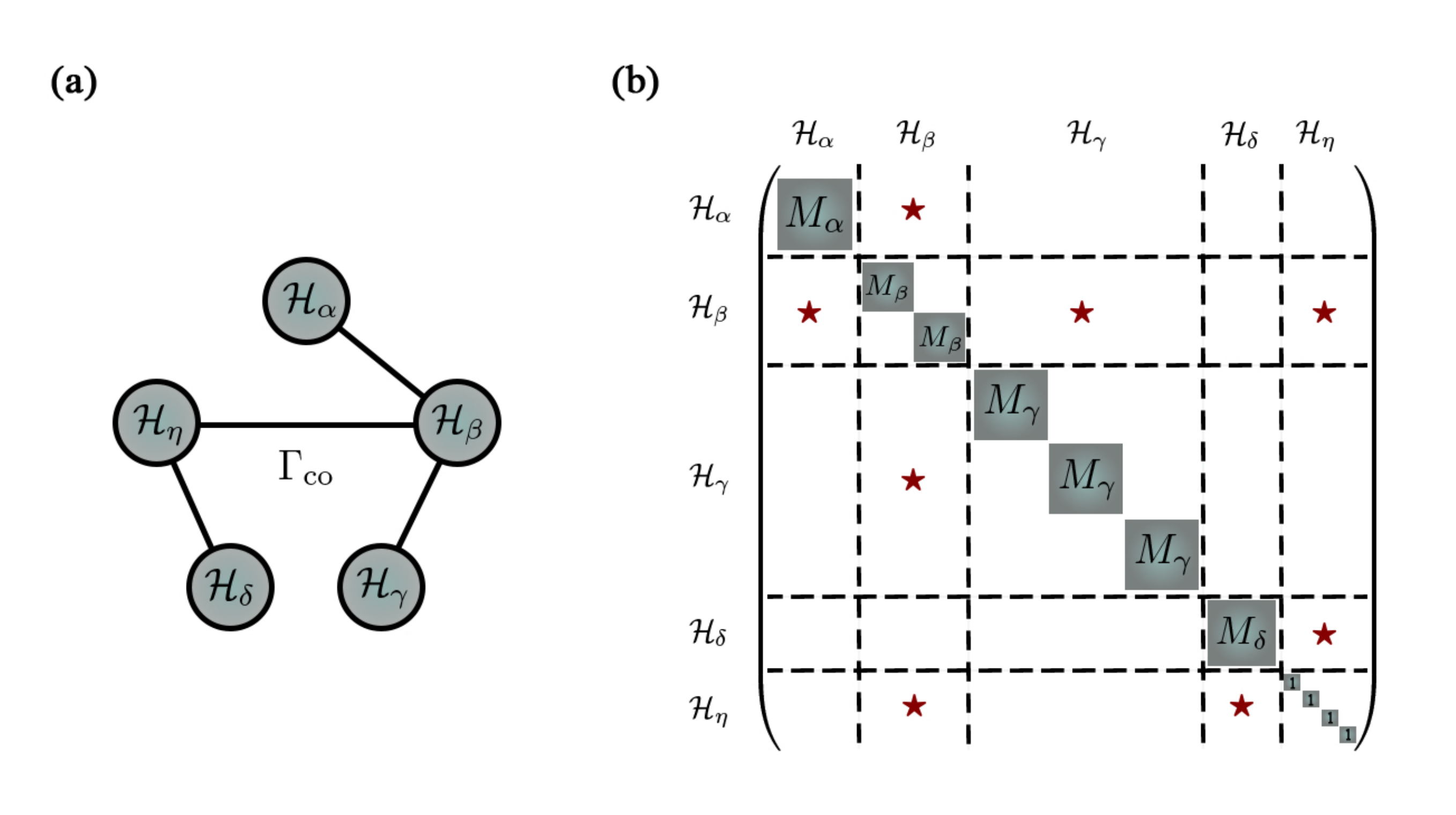}%
    }{%
        \fbox{\rule{0pt}{2in}\rule{0.7\textwidth}{0pt}}
    }
    \caption{Schematic illustration of objects, used in the proof of Theorem~\ref{thm:main_real}. (a) The coupling graph $\Gamma_{\mathrm{co}}$ from Definition~\ref{def:coupling_graph}. 
    (b) Complexified Invariant algebra $\mathfrak{su}(\mathcal{H})^G$, meaning $\mathfrak{l}_{\C}$, is shown in blue blocks on diagonal (see Eq.~\eqref{eq:compl_inv_alg}); 
    red stars represent blocks $\operatorname{Hom}(\mathcal{H}_{\alpha}, \mathcal{H}_{\beta})$, where $\mathcal{G}^{\text{br}}$ has non-zero support, 
    corresponding to the coupling graph $\Gamma_{\mathrm{co}}$ on the left.}
    \label{fig:proof_schematic}
\end{figure}

\begin{proof}

\subsubsection*{Step 1: Complexification}

We define the complexification of the DLA in the following way:
\begin{equation}\label{eq:complexified_DLA_with_breakers}
    \mathcal{L}_{\mathbb{C}}=\mathcal{L} \otimes_{\mathbb{R}} \mathbb{C}.
\end{equation}
From now on, we will work over a complex field -- all representation theory and Lie algebraic arguments will be done in the complexified setting.

\subsubsection*{Step 2: Invariant Lie Algebra Action}

Since commutators with scalar operators are trivial, only the non-central part of the invariant algebra $\mathfrak{l}_\C$ contributes to the adjoint action on the blocks. 
This effective part is precisely the so-called derived algebra 
$\mathfrak{l}'_\C$, i. e.
\[
    \mathfrak{l}'_\C = [\mathfrak{l}_\C, \mathfrak{l}_\C] \cong \bigoplus_{\alpha \in \mathcal{I}} \mathrm{id}_{V_\alpha} \otimes \mathfrak{sl}\left(M_\alpha\right).
\]
Consider the action of $\mathfrak{l}'_\C$ on a specific block $\mathcal{W}_{\beta\alpha} \cong \mathbf{T}_{\beta\alpha} \otimes \mathbf{M}_{\beta\alpha}$.
Let $L \in \mathfrak{l}'_\C$. We can write $L = \sum_\gamma \id_{V_\gamma} \otimes X_\gamma$, where $X_\gamma \in \mathfrak{sl}(M_\gamma)$.
For an element $T \otimes M \in \mathcal{W}_{\beta\alpha}$ (where $T: V_\alpha \to V_\beta$ and $M: M_\alpha \rightarrow M_\beta$), the commutator is:
\begin{align*}
    [L, T \otimes M] &= (\id_{V_\beta} \otimes X_\beta)(T \otimes M) - (T \otimes M)(\id_{V_\alpha} \otimes X_\alpha) \nonumber \\
    &= T \otimes (X_\beta M - M X_\alpha).
\end{align*}
Thus, the invariant algebra acts trivially on the first factor $\mathbf{T}_{\beta\alpha}$ and acts via the representation $(X_\beta, X_\alpha) \cdot M := X_\beta M - M X_\alpha$
on the second factor $\mathbf{M}_{\beta\alpha}$, where
$(X_\beta, X_\alpha) \in \mathfrak{sl}\left(M_\beta\right) \oplus \mathfrak{sl}\left(M_\alpha\right)$.
Notice, that $\left[L, \mathcal{W}_{\beta \alpha}\right] \subseteq \mathcal{W}_{\beta \alpha}$.

\subsubsection*{Step 3: Irreducibility of Blocks.}

\begin{smlemma}
    Let $\mathfrak{g}_1, \mathfrak{g}_2$ be Lie algebras acting irreducibly on finite-dimensional vector spaces $U, \mathcal{V}$ respectively. 
    Then the external tensor product module $U \otimes \mathcal{V}$ is irreducible under the action of $\mathfrak{g}_1 \oplus \mathfrak{g}_2$.
\end{smlemma}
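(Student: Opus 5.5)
The plan is to pass from the Lie-algebra action to the associative algebra it generates, and then use Burnside's theorem over $\C$. Everything in this section is complex and finite-dimensional, so Burnside applies. Write $\rho_1:\mathfrak g_1\to\End(U)$ and $\rho_2:\mathfrak g_2\to\End(\mathcal V)$ for the two actions. The action of $(X,Y)\in\mathfrak g_1\oplus\mathfrak g_2$ on $U\otimes\mathcal V$ is $\rho_1(X)\otimes I_{\mathcal V}+I_U\otimes\rho_2(Y)$. Taking $Y=0$ or $X=0$ shows that every operator $\rho_1(X)\otimes I$ and every operator $I\otimes\rho_2(Y)$ occurs in the action.

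First I would observe that a subspace $W\subseteq U\otimes\mathcal V$ invariant under a set of operators is also invariant under all products and linear combinations of them, and under the identity. So $W$ is invariant under the unital associative algebra $\mathcal A\subseteq\End(U\otimes\mathcal V)$ generated by $\{\rho_1(X)\otimes I\}$ and $\{I\otimes\rho_2(Y)\}$.

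Next, let $\mathcal A_1\subseteq\End(U)$ be the unital associative algebra generated by $\rho_1(\mathfrak g_1)$. By the same remark, a subspace of $U$ is $\mathcal A_1$-invariant exactly when it is $\mathfrak g_1$-invariant. Hence $U$ is an irreducible $\mathcal A_1$-module, and Burnside's theorem ($\C$ algebraically closed, $U$ finite-dimensional) gives $\mathcal A_1=\End(U)$. Likewise $\mathcal A_2=\End(\mathcal V)$. The map $A\mapsto A\otimes I$ is an algebra homomorphism, so $\mathcal A\supseteq\End(U)\otimes I$, and similarly $\mathcal A\supseteq I\otimes\End(\mathcal V)$. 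Their products $(A\otimes I)(I\otimes B)=A\otimes B$ span $\End(U)\otimes\End(\mathcal V)=\End(U\otimes\mathcal V)$. Therefore $\mathcal A=\End(U\otimes\mathcal V)$, and any nonzero invariant $W$ is all of $U\otimes\mathcal V$.

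The main point needing care is the appeal to Burnside's theorem: it genuinely requires an algebraically closed ground field, which is why the argument runs in the complexified setting of Step~1. One should also include the identity in the generated algebra; this is harmless because every subspace is invariant under $I$. The degenerate case where one factor is zero-dimensional is excluded because irreducible modules are nonzero by definition.

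As a fallback, if one prefers to avoid Burnside, I would argue directly. Take a nonzero invariant $W$ and a nonzero $w=\sum_i u_i\otimes v_i\in W$ with the $u_i$ linearly independent. Apply $\mathcal A_1\otimes I$, using the density theorem to produce an operator sending $u_1\mapsto u$ and $u_i\mapsto 0$ for $i>1$; this yields $u\otimes v_1\in W$ for every $u\in U$. Then act with $I\otimes\mathcal A_2$ to obtain all of $U\otimes\mathcal V$. This route still relies on density, which is Burnside in another form, so I would present the first argument.
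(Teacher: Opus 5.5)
Your proposal is correct and rests on the same key fact as the paper's proof: by the density (Burnside) theorem, the associative algebras generated by $\mathfrak g_1$ and $\mathfrak g_2$ act as all of $\End(U)$ and all of $\End(\mathcal V)$, and your fallback is essentially the paper's minimal-rank argument. Your main route instead shows that the generated associative algebra is $\End(U)\otimes\End(\mathcal V)=\End(U\otimes\mathcal V)$, which packages the same argument more cleanly without the rank bookkeeping; you are also right that algebraic closure is essential, since the statement fails over $\R$ (for example, $\mathfrak{so}(2)$ acting on $\R^2$).
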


\begin{proof}
    Let $S \subseteq U \otimes \mathcal{V}$ be a non-zero submodule. We must show $S = U \otimes \mathcal{V}$.
    Choose a non-zero element $x \in S$. We can write $x = \sum_{i=1}^k u_i \otimes v_i$, where $\{v_i\}$ are linearly independent in $\mathcal{V}$. We choose $x$ such that the rank $k$ is minimal.
    
    By the Density Theorem, the action of the enveloping algebra $\mathcal{U}(\mathfrak{g}_1)$ generates all linear operators $\End(U)$. 
    Thus, there exists an operator $L_1 \in \mathcal{U}(\mathfrak{g}_1)$ such that $L_1 u_1 = u'$ (any arbitrary vector) and $L_1 u_i = 0$ for $i > 1$.
    Applying $(L_1, 0)$ to $x$:
    \[ (L_1 \otimes \I) x = \sum (L_1 u_i) \otimes v_i = u' \otimes v_1 \in S. \]
    Since $u'$ is arbitrary, $U \otimes \{v_1\} \subseteq S$.
    
    Similarly, using the density of $\mathcal{U}(\mathfrak{g}_2)$ on $\mathcal{V}$, we can apply $(\I \otimes L_2)$ to elements of $U \otimes \{v_1\}$ to generate $U \otimes \mathcal{V}$. Thus $S = U \otimes \mathcal{V}$.
\end{proof}

In our case, the derived invariant algebra $\mathfrak{l}'_\C$ contains the summand $\mathfrak{sl}(M_{\beta}) \oplus \mathfrak{sl}(M_{\alpha})$. 
The space $\mathbf{M}_{\beta\alpha} = \Hom(M_{\alpha}, M_{\beta}) \cong M_{\beta} \otimes (M_{\alpha})^*$.
Since the fundamental representation $M_{\alpha}$ and its dual $(M_{\alpha})^*$ are irreducible representations of $\mathfrak{sl}(M_{\alpha})$, 
the lemma implies $\mathbf{M}_{\beta\alpha}$ is irreducible.
Thus, we have the following

\begin{smlemma}\label{lem:irreducibility}
Let $\alpha,\beta\in\mathcal I$ with $\alpha\neq \beta$. Then
$
\mathbf M_{\beta\alpha}=\Hom_\C(M_\alpha,M_\beta)
$
is an irreducible module under the action of
$
\mathfrak{sl}\left(M_\beta\right) \oplus \mathfrak{sl}\left(M_\alpha\right)
$
defined by
$
(X_\beta,X_\alpha)\cdot M := X_\beta M - M X_\alpha .
$
\end{smlemma}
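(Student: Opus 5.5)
The statement is essentially the preceding tensor-product irreducibility lemma applied with $\mathfrak g_1=\mathfrak{sl}(M_\beta)$ acting on $U=M_\beta$, and $\mathfrak g_2=\mathfrak{sl}(M_\alpha)$ acting on $\mathcal V=M_\alpha^*$. The plan is first to fix the identification $\Hom_\C(M_\alpha,M_\beta)\cong M_\beta\otimes M_\alpha^*$, sending $v\otimes\phi$ to the rank-one map $m\mapsto\phi(m)v$. We then check that under this identification the given action $(X_\beta,X_\alpha)\cdot M=X_\beta M-MX_\alpha$ becomes the external tensor product action. On a rank-one map, $X_\beta(v\phi)-(v\phi)X_\alpha=(X_\beta v)\phi+v(-\phi\circ X_\alpha)$. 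So $X_\beta$ acts by the defining representation on the first factor, and $X_\alpha$ acts by the contragredient $\phi\mapsto-\phi\circ X_\alpha=-X_\alpha^{T}\phi$ on the second. The formula is linear, so it extends from rank-one maps to all of $\Hom_\C(M_\alpha,M_\beta)$.

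Next we verify the irreducibility hypotheses of the preceding lemma. The defining representation of $\mathfrak{sl}(M)$ on $M$ is irreducible whenever $\dim M\ge2$. This holds because the associative algebra generated by $\mathfrak{sl}(M)$ together with the identity is all of $\End(M)$, since the matrix units $E_{ij}$ with $i\neq j$ already generate it. The dual representation is irreducible because an invariant subspace of $M^*$ has an annihilator that is an invariant subspace of $M$. With both hypotheses in place, the preceding lemma gives irreducibility of $M_\beta\otimes M_\alpha^*$ under $\mathfrak{sl}(M_\beta)\oplus\mathfrak{sl}(M_\alpha)$. One point needs care here. The density-theorem step in that lemma uses the enveloping algebra, which contains $1$, and acting by $L_1\otimes\I$ with $L_1$ in the enveloping algebra stays inside any submodule. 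This is legitimate because a Lie submodule is also a $\mathcal U(\mathfrak g_1\oplus\mathfrak g_2)$-submodule.

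The main obstacle is the degenerate case where a multiplicity space is one-dimensional. Then $\mathfrak{sl}(M)=0$, and its action on $M\cong\C$ is zero. This is still irreducible, since $\C$ has no proper nonzero subspaces, but the density theorem would have to be read with $\mathcal U(0)=\C\cdot1$, which equals $\End(\C)$. We would treat this case separately. If $\dim M_\alpha=1$, then $\Hom(M_\alpha,M_\beta)\cong M_\beta$, and irreducibility reduces to irreducibility of the defining representation, or of $\C$ itself if both dimensions are $1$. The symmetric argument covers $\dim M_\beta=1$. The hypothesis $\alpha\neq\beta$ is used only to make the two summands $\mathfrak{sl}(M_\beta)$ and $\mathfrak{sl}(M_\alpha)$ independent inside $\mathfrak l'_\C$, so that they act by an external rather than a diagonal tensor product. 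Without it, with $\alpha=\beta$, the diagonal action on $\End(M_\alpha)$ would split off the identity and the block would not be irreducible.
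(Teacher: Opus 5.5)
Your proposal is correct and follows the paper's route: identify $\mathbf M_{\beta\alpha}\cong M_\beta\otimes M_\alpha^*$, observe that $(X_\beta,X_\alpha)\cdot M=X_\beta M-MX_\alpha$ becomes the external tensor product of the defining and contragredient representations, and invoke the preceding tensor-product irreducibility lemma. Your added checks fill in details the paper leaves implicit: the explicit form of the action on rank-one maps, irreducibility of the defining and dual representations, and the one-dimensional multiplicity case. That last case is in fact already covered by the lemma, because the unital enveloping algebra $\mathcal U(0)=\mathbb{C}$ acts as all of $\mathrm{End}(\mathbb{C})$.
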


\subsubsection*{Step 4: Generation of off-diagonal blocks.}

\begin{smlemma}\label{prop:block_generation}
Let $\alpha,\beta\in\mathcal I$ with $\alpha\neq\beta$. Consider the
breaker-generated subspace $\mathcal B_{\beta\alpha}$ and first-factor support $S_{\beta\alpha}$, defined in Def.~\ref{def:first_factor}. Then the Lie algebra generated by
$
\mathfrak l_\C' \cup \mathcal B_{\beta\alpha}
$
contains
$
S_{\beta\alpha}\otimes \mathbf M_{\beta\alpha}.
$
In particular, if the full first-factor span condition holds on the edge $\{\alpha,\beta\}$, i.e.
$
S_{\beta\alpha}=\mathbf T_{\beta\alpha},
$
then
\[
\mathcal W_{\beta\alpha}
=
\mathbf T_{\beta\alpha}\otimes \mathbf M_{\beta\alpha}
\subseteq
\Lie_{\R}\{\mathfrak l_\C' \cup \mathcal B_{\beta\alpha}\}
\subseteq \mathcal L_\C.
\]
\end{smlemma}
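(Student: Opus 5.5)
The plan is to use the irreducibility of Lemma~\ref{lem:irreducibility} together with the fact (Step~2) that $\mathrm{ad}$ of $\mathfrak l'_\C$ acts on $\mathcal W_{\beta\alpha}\cong\mathbf T_{\beta\alpha}\otimes\mathbf M_{\beta\alpha}$ as $\mathrm{id}_{\mathbf T}\otimes\rho$, where $\rho$ is the representation $(X_\beta,X_\alpha)\cdot M=X_\beta M-MX_\alpha$. Let $\mathfrak a$ denote the Lie algebra generated by $\mathfrak l'_\C\cup\mathcal B_{\beta\alpha}$. Let $\mathcal U$ be the associative subalgebra of $\End(\mathbf M_{\beta\alpha})$ generated by $\rho(\mathfrak l'_\C)$. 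Since $\mathfrak a$ is closed under $\mathrm{ad}_L$ for every $L\in\mathfrak l'_\C$, it is closed under iterated brackets, and hence under linear combinations of products of $\mathrm{ad}_L$'s. So the subspace $K:=\mathfrak a\cap\mathcal W_{\beta\alpha}$ is invariant under $\mathrm{id}_{\mathbf T}\otimes\mathcal U$ and contains $\mathcal B_{\beta\alpha}$.

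The key step is to apply the Jacobson density theorem. Because $\mathbf M_{\beta\alpha}$ is irreducible and finite-dimensional over $\C$ under the Lie action (Lemma~\ref{lem:irreducibility}), it is also irreducible under $\mathcal U$, so density gives $\mathcal U=\End(\mathbf M_{\beta\alpha})$. Now take any $b\in\mathcal B_{\beta\alpha}$ and write $b=\sum_i t_i\otimes m_i$ with the $m_i$ linearly independent. Fix an index $j$, let $\psi_j$ be the dual functional with $\psi_j(m_i)=\delta_{ij}$, and pick an arbitrary $m'\in\mathbf M_{\beta\alpha}$. The rank-one operator $E=m'\psi_j$ lies in $\mathcal U$, and
\[
(\mathrm{id}\otimes E)\,b=t_j\otimes m'\in K .
\]
More generally, for any $\psi\in\mathbf M_{\beta\alpha}^*$ the operator $m'\psi$ sends $b$ to $((\mathrm{id}\otimes\psi)b)\otimes m'$. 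Taking spans and using the explicit description of $S_{\beta\alpha}$ in Definition~\ref{def:breaker_generated_subspace}, we get $S_{\beta\alpha}\otimes\mathbf M_{\beta\alpha}\subseteq K\subseteq\mathfrak a$.

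For the second claim, the full first-factor span condition gives $S_{\beta\alpha}=\mathbf T_{\beta\alpha}$, so $\mathcal W_{\beta\alpha}\subseteq\mathfrak a$. It remains to check $\mathfrak a\subseteq\mathcal L_\C$. First, $\mathfrak l'_\C\subseteq\mathfrak l_\C\subseteq\mathcal L_\C$, since $\mathfrak{su}(\HH)^G\subseteq\mathcal L$. Second, $\mathcal B_{\beta\alpha}$ requires care: its elements are projections $P_\beta HP_\alpha$, not the breakers $H$ themselves. I would realise them inside $\mathcal L_\C$ as follows. The central elements of $\mathfrak l_\C$, of the form $\sum_\gamma c_\gamma\,\mathrm{id}_{\HH_\gamma}$ with $\sum_\gamma c_\gamma\dim\HH_\gamma=0$, act by $\mathrm{ad}$ on $\mathcal W_{\beta'\alpha'}$ as multiplication by $c_{\beta'}-c_{\alpha'}$. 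Polynomials in such ad-maps act diagonally on the blocks, and by a Vandermonde/interpolation argument one can choose the $c_\gamma$ to separate the block $(\beta,\alpha)$ from the other off-diagonal blocks. When $|\mathcal I|=2$, the $(\alpha,\beta)$ and $(\beta,\alpha)$ components have eigenvalues of opposite sign and can still be separated. This extracts the off-diagonal projection $\Pi_{\beta\alpha}(H)\in\mathcal L_\C$.

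I expect this extraction step to be the main obstacle. If the eigenvalue separation only isolates the part of $H$ lying off the diagonal blocks, then the diagonal blocks $\Pi_{\gamma\gamma}(H)$ are not isolated directly. They are, however, annihilated by $\mathrm{ad}$ of every central element, so applying the interpolation polynomial, which has zero constant term, removes them entirely. Hence each $\Pi_{\beta\alpha}(H)$ lies in $\mathcal L_\C$, which gives $\mathfrak a\subseteq\mathcal L_\C$ and the stated chain of inclusions.
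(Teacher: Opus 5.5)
Your proof of the main containment follows the paper's argument. It uses density of the enveloping algebra of $\mathfrak l'_\C$ in $\End_\C(\mathbf M_{\beta\alpha})$, iterated brackets acting as $\mathrm{id}\otimes a$ on $\mathbf T_{\beta\alpha}\otimes\mathbf M_{\beta\alpha}$, and rank-one operators $w\psi$ turning $b$ into $((\mathrm{id}\otimes\psi)b)\otimes w$, whose span is $S_{\beta\alpha}\otimes\mathbf M_{\beta\alpha}$.

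One small point: take your $\mathcal U$ to be unital, as the paper implicitly does by using $\mathcal U(\mathfrak l'_\C)$. This is harmless, because $K$ is a subspace containing $\mathcal B_{\beta\alpha}$. Otherwise, when $m_\alpha=m_\beta=1$, one has $\rho(\mathfrak l'_\C)=0$ and ``$\mathcal U=\End(\mathbf M_{\beta\alpha})$'' fails. The claim is then trivial anyway, since $S_{\beta\alpha}\otimes\mathbf M_{\beta\alpha}=\mathcal B_{\beta\alpha}$.

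You go beyond the paper on the final inclusion into $\mathcal L_\C$. The paper's proof stops after the containment in the generated algebra and never explains why $\mathcal B_{\beta\alpha}\subseteq\mathcal L_\C$, even though its elements are block projections $P_\beta HP_\alpha$ rather than the breakers themselves. Your extraction via $\mathrm{ad}$ of central elements of $\mathfrak l_\C$ and an interpolating polynomial with zero constant term is correct. The trace constraint on the $c_\gamma$ costs nothing: adding a multiple of the identity does not change $\mathrm{ad}_C$, so generic $c_\gamma$ already isolate the eigenvalue $c_\beta-c_\alpha$.

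You can also avoid interpolation entirely. The map $\mathrm{ad}_{P_\gamma}$ acts on $\mathcal W_{\mu\nu}$ as multiplication by $\delta_{\gamma\mu}-\delta_{\gamma\nu}$, and the traceless parts of $P_\alpha,P_\beta$ lie in $\mathfrak l_\C$. Hence
\[
P_\beta HP_\alpha=-\tfrac12\Bigl([P_\alpha,[P_\beta,H]]+[P_\beta,[P_\alpha,[P_\beta,H]]]\Bigr)\in\mathcal L_\C .
\]
With this step your proposal is complete, and on this point it is more careful than the paper's own proof.
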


\begin{proof}
Since $\alpha\neq\beta$, Lemma~\ref{lem:irreducibility} implies that $\mathbf M_{\beta\alpha}$ is an irreducible
$\mathfrak{sl}(M_\beta)\oplus \mathfrak{sl}(M_\alpha)$-module.
By the Density Theorem, applied as in the proof above, the enveloping algebra $\mathcal U(\mathfrak l_\C')$ therefore acts on $\mathbf M_{\beta\alpha}$ 
as the full algebra $\End_\C(\mathbf M_{\beta\alpha})$. In particular every rank-one operator
$
w\psi:m\mapsto\psi(m)\,w,
$
with $w\in\mathbf M_{\beta\alpha}$ and $\psi\in\mathbf M_{\beta\alpha}^*$, is realised by some element of $\mathcal U(\mathfrak l_\C')$.
The Lie algebra generated by $\mathfrak l_\C'\cup\mathcal B_{\beta\alpha}$ is closed under iterated brackets with elements of $\mathfrak l_\C'$, 
and by Step 2 the bracket with $L=\sum_\gamma\id_{V_\gamma}\otimes X_\gamma$ acts on $\mathcal W_{\beta\alpha}\cong\mathbf T_{\beta\alpha}\otimes\mathbf M_{\beta\alpha}$ as $T\otimes M\mapsto T\otimes(X_\beta M-MX_\alpha)$, 
that is, trivially on the first factor. Iterating, it therefore contains $(\operatorname{id}\otimes a)b$ for every $a\in\End_\C(\mathbf M_{\beta\alpha})$ 
and every $b\in\mathcal B_{\beta\alpha}$. Taking $a=w\psi$ and writing $b=\sum_i t_i\otimes m_i$,
\[
\left(\operatorname{id}\otimes w\psi\right)b
=\sum_i t_i\otimes\psi(m_i)\,w
=\left(\left(\operatorname{id}\otimes\psi\right)b\right)\otimes w .
\]
Since the elements $(\operatorname{id}\otimes\psi)b$ span $S_{\beta\alpha}$ by Definition~\ref{def:breaker_generated_subspace} and $w\in\mathbf M_{\beta\alpha}$ is arbitrary, the Lie algebra generated by
$
\mathfrak l_\C' \cup \mathcal B_{\beta\alpha}
$
contains
$
S_{\beta\alpha}\otimes \mathbf M_{\beta\alpha}.
$
If the full first-factor span condition holds, then
$
S_{\beta\alpha}=\mathbf T_{\beta\alpha},
$
and therefore
\[
\mathcal W_{\beta\alpha}
=
\mathbf T_{\beta\alpha}\otimes \mathbf M_{\beta\alpha}
\subseteq
\Lie_{\R}\{\mathfrak l_\C' \cup \mathcal B_{\beta\alpha}\}.
\]
This proves the claim.
\end{proof}

\subsubsection*{Step 5: Propagation along the coupling graph $\Gamma_{\mathrm{co}}$.}

\begin{smlemma}
    $[\mathcal{W}_{\beta\gamma}, \mathcal{W}_{\gamma\alpha}] = \mathcal{W}_{\beta\alpha}$ for pairwise distinct indices.
\end{smlemma}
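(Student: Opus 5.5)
The plan is to compute the bracket directly in the block picture and reduce the claim to a statement about compositions of linear maps. For $X\in\mathcal W_{\beta\gamma}=\Hom_\C(\mathcal H_\gamma,\mathcal H_\beta)$ and $Y\in\mathcal W_{\gamma\alpha}=\Hom_\C(\mathcal H_\alpha,\mathcal H_\gamma)$, viewed as operators on $\HH$ supported on the indicated blocks, we have $XY=P_\beta X P_\gamma P_\gamma Y P_\alpha$, which lies in $\mathcal W_{\beta\alpha}$. For the other product, $YX=P_\gamma Y P_\alpha P_\beta X P_\gamma=0$ because $\alpha\neq\beta$ gives $P_\alpha P_\beta=0$. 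Hence $[X,Y]=XY$, and in particular $[\mathcal W_{\beta\gamma},\mathcal W_{\gamma\alpha}]\subseteq\mathcal W_{\beta\alpha}$. Here pairwise distinctness is used only through $\alpha\neq\beta$; the condition $\gamma\neq\alpha,\beta$ just matches the intended use, where the two input blocks are off-diagonal.

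For the reverse inclusion it suffices to show that compositions $XY$ span $\Hom_\C(\mathcal H_\alpha,\mathcal H_\beta)$, since the bracket is bilinear and we take the span. Choose orthonormal bases $\{e^\alpha_i\}$, $\{e^\gamma_k\}$, $\{e^\beta_j\}$ of $\mathcal H_\alpha,\mathcal H_\gamma,\mathcal H_\beta$. Since $\gamma\in\mathcal I$ actually appears in $\HH$, we have $\mathcal H_\gamma\neq 0$, so we may fix one unit vector $e^\gamma_1$. Then the matrix unit $\ket{e^\beta_j}\bra{e^\alpha_i}$ factors as
\[
\ket{e^\beta_j}\bra{e^\alpha_i}=\bigl(\ket{e^\beta_j}\bra{e^\gamma_1}\bigr)\bigl(\ket{e^\gamma_1}\bra{e^\alpha_i}\bigr)=\bigl[\ket{e^\beta_j}\bra{e^\gamma_1},\ket{e^\gamma_1}\bra{e^\alpha_i}\bigr].
\]
The first factor lies in $\mathcal W_{\beta\gamma}$ and the second in $\mathcal W_{\gamma\alpha}$. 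The second equality holds by the vanishing of the reverse product, since $\bra{e^\alpha_i}e^\beta_j\rangle=0$. The matrix units span $\mathcal W_{\beta\alpha}$, which proves equality.

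There is no real obstacle here; the only point needing care is the convention. $\mathcal W_{\beta\alpha}$ should be read as the subspace $P_\beta\End_\C(\HH)P_\alpha$ of $\End_\C(\HH)=\mathfrak{gl}(\HH)$, consistent with Eq.~\eqref{eq:endomorphism_decomposition} and Definition~\ref{def:projection_onto_block}. The bracket is understood as the span of all commutators, as is standard for such identities. I would also note that the full $\mathcal W$ blocks are not traceless-constrained, since off-diagonal blocks are automatically traceless. So the conclusion holds inside $\mathfrak{sl}(\HH)$, and hence inside $\mathcal L_\C$ once the two input blocks have been generated in Step~4.
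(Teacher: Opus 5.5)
Your proposal is correct and follows essentially the same route as the paper: both show the reverse product vanishes so that $[A,B]=AB$, then obtain every rank-one map in $\mathcal W_{\beta\alpha}$ (your matrix units) as a commutator by factoring it through a fixed nonzero vector of $\mathcal H_\gamma$. Your remark that the vanishing of $YX$ uses only $\alpha\neq\beta$ is accurate, and it is a slight sharpening of the paper's appeal to pairwise distinctness.
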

\begin{proof}
Let
\[
A\in \mathcal W_{\beta\gamma}=\Hom_\C(\HH_\gamma,\HH_\beta),
\qquad
B\in \mathcal W_{\gamma\alpha}=\Hom_\C(\HH_\alpha,\HH_\gamma).
\]
Then
$
AB\in \Hom_\C(\HH_\alpha,\HH_\beta)=\mathcal W_{\beta\alpha}.
$
Since \(\alpha,\beta,\gamma\) are pairwise distinct, the reverse product vanishes:
$
BA=0,
$
because \(A\) is zero outside \(\HH_\gamma\) and takes values in \(\HH_\beta\), while \(B\) is zero outside \(\HH_\alpha\), 
and \(\HH_\alpha\cap \HH_\beta=\{0\}\).
Hence
$
[A,B]=AB\in \mathcal W_{\beta\alpha},
$
so
$
[\mathcal W_{\beta\gamma},\mathcal W_{\gamma\alpha}]
\subseteq
\mathcal W_{\beta\alpha}.
$

For the reverse inclusion, it suffices to show that every rank-one map in
\(\mathcal W_{\beta\alpha}\) is such a commutator. Let
$
R\in \mathcal W_{\beta\alpha}=\Hom_\C(\HH_\alpha,\HH_\beta)
$
be rank one. Then there exist \(v\in \HH_\beta\) and \(u\in \HH_\alpha^*\) such that
$
R=v\otimes u.
$
Choose a nonzero vector \(w\in \HH_\gamma\), and let \(\eta\in \HH_\gamma^*\) satisfy
\(\eta(w)=1\). Define
\[
A:=v\otimes \eta \in \Hom_\C(\HH_\gamma,\HH_\beta)=\mathcal W_{\beta\gamma},
\qquad
B:=w\otimes u \in \Hom_\C(\HH_\alpha,\HH_\gamma)=\mathcal W_{\gamma\alpha}.
\]
Then
$
AB=(v\otimes \eta)(w\otimes u)=\eta(w)\,v\otimes u=R.
$
As above, \(BA=0\), and therefore
$
[A,B]=AB=R.
$
Since rank-one maps span \(\mathcal W_{\beta\alpha}\), we conclude that
$
\mathcal W_{\beta\alpha}\subseteq
[\mathcal W_{\beta\gamma},\mathcal W_{\gamma\alpha}].
$
Thus
$
[\mathcal W_{\beta\gamma},\mathcal W_{\gamma\alpha}]=\mathcal W_{\beta\alpha}.
$
\end{proof}

Since the graph $\Gamma_{\mathrm{co}}$ is connected, and we can generate the blocks corresponding to edges (by Step 4),
we can generate the block $\mathcal{W}_{\beta\alpha}$ for any pair $(\alpha, \beta)$ by taking nested commutators along a path connecting them. 
Thus, all off-diagonal blocks are in $\mathcal{L}_\C$.

\subsubsection*{Step 6: Generation of diagonal traceless blocks.}

\begin{smlemma}\label{lem:diag_blocks}
For any distinct \(\alpha,\beta\in\mathcal I\),
\[
[\mathcal W_{\alpha\beta},\mathcal W_{\beta\alpha}]
\subseteq
\mathcal W_{\alpha\alpha}\oplus \mathcal W_{\beta\beta}.
\]
Moreover, if \(\mathcal W_{\alpha\beta},\mathcal W_{\beta\alpha}\subseteq \mathcal L_\C\), then
their commutators generate the traceless part of
\(\mathcal W_{\alpha\alpha}\oplus \mathcal W_{\beta\beta}\).
\end{smlemma}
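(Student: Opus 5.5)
The plan is to compute the commutator directly on block operators, mirroring the proof of the previous lemma on $[\mathcal W_{\beta\gamma},\mathcal W_{\gamma\alpha}]$.

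\textbf{First claim.} Take $A\in\mathcal W_{\alpha\beta}=\Hom_\C(\HH_\beta,\HH_\alpha)$ and $B\in\mathcal W_{\beta\alpha}=\Hom_\C(\HH_\alpha,\HH_\beta)$, both extended by zero to all of $\HH$. Then $AB\in\Hom_\C(\HH_\alpha,\HH_\alpha)=\mathcal W_{\alpha\alpha}$ and $BA\in\mathcal W_{\beta\beta}$. Hence
\[
[A,B]=AB-BA\in\mathcal W_{\alpha\alpha}\oplus\mathcal W_{\beta\beta},
\]
and, by cyclicity, $\operatorname{Tr}[A,B]=\operatorname{Tr}(AB)-\operatorname{Tr}(BA)=0$. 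So every commutator lies in the traceless part
\[
\mathfrak s_{\alpha\beta}:=\{X\in\mathcal W_{\alpha\alpha}\oplus\mathcal W_{\beta\beta}:\operatorname{Tr}X=0\}\cong\mathfrak{sl}(\HH_\alpha\oplus\HH_\beta)\cap(\text{block diagonal}).
\]
This settles the inclusion.

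\textbf{Second claim.} Assume $\mathcal W_{\alpha\beta},\mathcal W_{\beta\alpha}\subseteq\mathcal L_\C$. Since $\mathcal L_\C$ is a Lie algebra, it suffices to show that the linear span of all such commutators is all of $\mathfrak s_{\alpha\beta}$. We do this with rank-one operators. Take $u,u'\in\HH_\alpha$, $v\in\HH_\beta$, and dual functionals $\phi\in\HH_\alpha^*$, $\psi\in\HH_\beta^*$, and set $A=u\otimes\psi$, $B=v\otimes\phi$. Then
\[
[A,B]=\psi(v)\,u\otimes\phi-\phi(u)\,v\otimes\psi .
\]
The spanning argument proceeds in three steps.

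\begin{enumerate}
\item If $\psi(v)=1$ and $\phi(u)=0$, the commutator is $u\otimes\phi$, an arbitrary rank-one map on $\HH_\alpha$ with $\phi(u)=0$. These traceless rank-one maps span $\mathfrak{sl}(\HH_\alpha)$: they contain all off-diagonal matrix units and the differences $E_{ii}-E_{jj}$ (the latter via rank-one nilpotents and their sums, or directly as commutators of the former). The same construction with the roles of $\alpha$ and $\beta$ exchanged gives $\mathfrak{sl}(\HH_\beta)$.
\item Taking $\psi(v)=1$ and $\phi(u)=1$ with normalized unit vectors gives the element $E^{\alpha}_{uu}-E^{\beta}_{vv}$. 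Modulo $\mathfrak{sl}(\HH_\alpha)\oplus\mathfrak{sl}(\HH_\beta)$, this is a nonzero traceless combination
\[
\frac{I_\alpha}{\dim\HH_\alpha}-\frac{I_\beta}{\dim\HH_\beta}.
\]
\item A dimension count then closes the argument: $\mathfrak{sl}(\HH_\alpha)\oplus\mathfrak{sl}(\HH_\beta)$ plus this one extra direction is exactly $\mathfrak s_{\alpha\beta}$.
\end{enumerate}

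\textbf{Main obstacle.} The only delicate point is step 1, showing that the traceless rank-one maps span the full $\mathfrak{sl}$, including the diagonal directions; a single commutator is never of the form $E_{ii}-E_{jj}$. I would handle this by using linear combinations: $E_{ij}$ for $i\neq j$ are rank-one traceless, and $(e_i-e_j)\otimes(e_i^*+e_j^*)=E_{ii}-E_{jj}+E_{ij}-E_{ji}$ is also rank-one traceless. Subtracting the off-diagonal units then yields $E_{ii}-E_{jj}$. These all lie in the span of the commutators, which is all that is needed because $\mathcal L_\C$ is closed under linear combinations.
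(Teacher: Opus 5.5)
Your proof is correct and takes essentially the same route as the paper: block multiplication gives the inclusion, and explicit commutators of rank-one block maps then span the traceless part. The paper uses the matrix units $E^{\alpha\beta}_{ip},E^{\beta\alpha}_{pj}$ to obtain the off-diagonal units and the cross-block differences $E^{\alpha\alpha}_{ii}-E^{\beta\beta}_{pp}$ directly, whereas your split into $\mathfrak{sl}(\HH_\alpha)\oplus\mathfrak{sl}(\HH_\beta)$ plus one extra direction, with a dimension count and an explicit tracelessness check via $\operatorname{Tr}(AB)=\operatorname{Tr}(BA)$, is only a cosmetic repackaging of the same argument.
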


\begin{proof}
The inclusion is immediate from block multiplication:
\[
\mathcal W_{\alpha\beta}\mathcal W_{\beta\alpha}\subseteq \mathcal W_{\alpha\alpha},
\qquad
\mathcal W_{\beta\alpha}\mathcal W_{\alpha\beta}\subseteq \mathcal W_{\beta\beta}.
\]
Choose bases of \(\HH_\alpha\) and \(\HH_\beta\), and let
$
E^{\alpha\beta}_{ip}\in \mathcal W_{\alpha\beta},
\qquad
E^{\beta\alpha}_{qj}\in \mathcal W_{\beta\alpha}
$
be the corresponding matrix units. Then
$
[E^{\alpha\beta}_{ip},E^{\beta\alpha}_{pj}]
=
E^{\alpha\alpha}_{ij}
\qquad (i\neq j),
$
and
$
[E^{\alpha\beta}_{ip},E^{\beta\alpha}_{pi}]
=
E^{\alpha\alpha}_{ii}-E^{\beta\beta}_{pp}.
$
Thus the commutators generate all off-diagonal matrix units inside
\(\mathcal W_{\alpha\alpha}\) and \(\mathcal W_{\beta\beta}\), together with all diagonal differences. Hence they generate the traceless part of
\(\mathcal W_{\alpha\alpha}\oplus \mathcal W_{\beta\beta}\).
\end{proof}

\subsubsection*{Step 7: Descend to the Real field}

Combining the previous steps, \(\mathcal L_\C\) contains all off-diagonal blocks and the full diagonal traceless part. Hence
$
\mathcal L_\C=\mathfrak{sl}(\HH).
$ 
Since $\mathcal L_\C = \mathbb{C}\otimes_\R \mathcal L$, we have
$
\dim_\R \mathcal L=\dim_\C \mathfrak{sl}(\HH)=d^2-1.
$
But \(\mathcal L\subseteq \su(\HH)\), because it is generated by traceless skew-Hermitian operators, and
$
\dim_\R \su(\HH)=d^2-1.
$
Hence
$
\mathcal L=\su(\HH).
$
This concludes the proof of the theorem.
\end{proof}

\subsection{Applications}\label{sec:H_br_applications}
\label{sec:applications}

\subsubsection*{Example 1: $S_3$ symmetry.}

Consider the symmetric group $S_3$ acting on three qubit system $(\mathbb{C}^2)^{\otimes 3}$ by permuting them, i. e.
\[
U(\sigma)|\psi_1\rangle|\psi_2\rangle|\psi_3\rangle=|\psi_{\sigma^{-1}(1)}\rangle|\psi_{\sigma^{-1}(2)}\rangle|\psi_{\sigma^{-1}(3)}\rangle, \quad \sigma \in S_3.
\]
The Hilbert space decomposes into irreps of $S_3$ as:
\[
    \HH \cong V_{(3)} \otimes M_{(3)} \oplus V_{(2,1)} \otimes M_{(2,1)}
\]
Corresponding dimensions are
\[
    \begin{aligned}
    & \operatorname{dim} V_{(3)}=1, \quad \operatorname{dim} V_{(2,1)}=2, \\
    & \operatorname{dim} M_{(3)}=4, \quad \operatorname{dim} M_{(2,1)}=2,
    \end{aligned}
\]
The coupling graph $\Gamma_{\mathrm{co}}$ has two vertices corresponding to $(2,1)$ and $(3)$ irreps.
The relevant first-factor space is
\[
    \mathbf{T}_{(2,1),(3)}=\operatorname{Hom}\left(V_{(3)}, V_{(2,1)}\right) \cong V_{(2,1)}
\]
so $\operatorname{dim} \mathbf{T}_{(2,1),(3)}=2$. Which means that we only have to check the theorem conditions 
for one edge $(2,1)\to(3)$. Thus, a single breaker $iH_{\mathrm{br}}$ can work if projection
\[
    \Pi_{(2,1),(3)}\left(iH_{\mathrm{br}}\right) \in \mathbf{T}_{(2,1),(3)} \otimes \mathbf{M}_{(2,1),(3)}
\]
has first-factor support equal to all of $\mathbf{T}_{(2,1),(3)}$.
Let's prove that $iH^{\mathrm{br}}=iZ_1+iX_2$ is enough to achieve universality. From now on, we will work with Hamiltonians instead of skew-Hermitian generators, 
so we will omit the factor of $i$ for brevity. 

We first compute the block-projection explicitly
\[
    \Pi_{(2,1),(3)}(Z_1+X_2)=P_{(2,1)} (Z_1+X_2) P_{(3)}.
\]
In this example, it is much easier, because we have to compute only $P_{(3)}$, for example, and then
$P_{(2,1)}=\mathrm{1}-P_{(3)}$. We can compute $P_{(3)}$ by using Eq.~\eqref{eq:projector_formula}
\[
    P_{(3)}=\frac{1}{6}(I+U(12)+U(13)+U(23)+U(123)+U(132))
\]
Basis in $\HH_{(3)}$ isotypic component is given by the following vectors:
\[
    \begin{aligned}
    & s_0=|000\rangle, \quad s_1=\frac{|100\rangle+|010\rangle+|001\rangle}{\sqrt{3}}, \\
    & s_2=\frac{|110\rangle+|101\rangle+|011\rangle}{\sqrt{3}}, \quad s_3=|111\rangle .
    \end{aligned}
\]
As for $\HH_{(2,1)}$, we choose the following basis:
\[
    \begin{aligned}
    &u_1=\frac{2|100\rangle-|010\rangle-|001\rangle}{\sqrt{6}}, \quad u_2=\frac{|010\rangle-|001\rangle}{\sqrt{2}},\\
    &v_1=\frac{2|011\rangle-|101\rangle-|110\rangle}{\sqrt{6}}, \quad v_2=\frac{|101\rangle-|110\rangle}{\sqrt{2}},
    \end{aligned}
\]
Knowing the action of $Z_1$, $X_2$ and $P_{(3)}$ on the computational basis, we can compute the matrix representation of $\Pi_{(2,1),(3)}(Z_1+X_2)$ in the bases defined above. We get
\[
    \left[\Pi_{(2,1),(3)}(Z_1+X_2)\right]_{\left(u_1, v_1, u_2, v_2\right) \leftarrow\left(s_0, s_1, s_2, s_3\right)}=\left(\begin{array}{cccc}
    -\frac{\sqrt{6}}{6} & -\frac{2 \sqrt{2}}{3} & \frac{\sqrt{2}}{6} & 0 \\
    0 & \frac{\sqrt{2}}{6} & \frac{2 \sqrt{2}}{3} & -\frac{\sqrt{6}}{6} \\
    \frac{\sqrt{2}}{2} & 0 & -\frac{\sqrt{6}}{6} & 0 \\
    0 & -\frac{\sqrt{6}}{6} & 0 & \frac{\sqrt{2}}{2}
    \end{array}\right).
\]
We choose a basis $\left\{e_1, e_2\right\}$ of $V_{(2,1)}$ and a basis $\left\{m_1, m_2\right\}$ of the multiplicity space $M_{(2,1)}$, and identify 
\[
    u_1=e_1 \otimes m_1, \quad u_2=e_2 \otimes m_1, \quad v_1=e_1 \otimes m_2, \quad v_2=e_2 \otimes m_2,
\]
from which it becomes apparent that
\[
    \Pi_{(2,1),(3)}(Z_1+X_2)=e_1 \otimes F_1+e_2 \otimes F_2,
\]
where $F_1, F_2 \in \mathbf{M}_{(2,1),(3)}=\operatorname{Hom}\left(M_{(3)}, M_{(2,1)}\right)$ such that
\[
    \begin{gathered}
    F_1=\left(\begin{array}{cccc}
    -\frac{\sqrt{6}}{6} & -\frac{2 \sqrt{2}}{3} & \frac{\sqrt{2}}{6} & 0 \\
    0 & \frac{\sqrt{2}}{6} & \frac{2 \sqrt{2}}{3} & -\frac{\sqrt{6}}{6}
    \end{array}\right), \\
    F_2=\left(\begin{array}{cccc}
    \frac{\sqrt{2}}{2} & 0 & -\frac{\sqrt{6}}{6} & 0 \\
    0 & -\frac{\sqrt{6}}{6} & 0 & \frac{\sqrt{2}}{2}
    \end{array}\right) .
    \end{gathered}
\]  
Notice that $e_1$ and $e_2$ are linearly independent by construction. $F_1$ and $F_2$ are linearly independent, so breaker satisfies the full first-factor span condition, i. e.
\[
    S_{(2,1),(3)}=\operatorname{span}\left\{e_1, e_2\right\}=\mathbf{T}_{(2,1),(3)}.
\]
We conclude that
\[
    \Lie_{\R}\left\{\mathfrak{su}\left((\mathbb{C}^2)^{\otimes 3}\right)^{S_3} \cup\{iZ_1+iX_2\}\right\}=\mathfrak{su}\left((\mathbb{C}^2)^{\otimes 3}\right).
\]

\begin{smremark}
    The proof would be simpler if we split the $Z_1+X_2$ Hamiltonian into two breakers, i. e. consider the breaking set $\mathcal{G}^{\text {br }}=\left\{i Z_1, i X_2\right\}$. 
    Indeed, with two breakers, one only needs independence of the first-factor directions; however, with one combined breaker, one also needs independence on the multiplicity side to check the full first-factor span condition.
    Without linear independence of $F_1$ and $F_2$, the breaker would be $\Pi_{(2,1),(3)}(Z_1+X_2)=(e_1+ \lambda e_2) \otimes F_1$,
    so the first-factor support would be only one-dimensional, despite $e_1$ and $e_2$ being linearly independent.
\end{smremark}

\end{appendix}

\end{document}